\newcommand{\alex}[1]{{\color{red}\textsf{Alex: {#1}}}}
\definecolor{blueviolet}{rgb}{0.2, 0.2, 0.6}
\definecolor{webgreen}{rgb}{0,.5,0}
\definecolor{webbrown}{rgb}{.6,0,0}
\numberwithin{equation}{section}
\newtheorem{theorem}{Theorem}
\newtheorem{corollary}{Corollary}
\newtheorem{definition}{Definition}
\newtheorem{lemma}{Lemma}
\newtheorem{proposition}{Proposition}
\newtheorem{fact}{Fact}
\theoremstyle{definition}
\newcommand{\rom}[1]{\mathtt{\uppercase\expandafter{\romannumeral #1\relax}}}
\DeclareMathOperator*{\E}{{\mathbb{E}}}
\DeclareMathOperator{\poly}{poly}
\providecommand{\bij}{\mathsf{bij}}
\providecommand{\spfo}{\mathsf{pfO}}
\providecommand{\pf}{\mathsf{pf}}
\newcommand{\eps}{\varepsilon}
\newcommand{\noEPR}{\Pi^{\text{nE}}}
\newcommand{\oneEPR}{\Pi^{>1\text{E}}}
\newcommand{\noP}{P^{\text{nE}}}
\newcommand{\noI}{I^{\text{nE}}}
\DeclareMathOperator{\Wg}{Wg}
\newcommandx{\lz}[2][1=]{\todo[linecolor=red,backgroundcolor=red!10,bordercolor=red,#1]{LZ: #2}}
\newtheorem*{theorem*}{Theorem}
\newtheorem{conjecture}{Conjecture}
\newtheorem*{task*}{Task}
\newtheorem*{proposition*}{Proposition}
\providecommand{\scC}{\mathsf{cC}}
\providecommand{\scD}{\mathsf{cD}}
\newcommand{\frakD}{\mathfrak{D}}
\newcommand{\init}{\mathsf{init}}
\newcommand{\ee}{\end{equation}}
\newcommand{\calO}{\mathcal{O}}
\providecommand{\Id}{\mathbbm{1}}
\providecommand{\num}{\mathsf{num}}
\providecommand{\sSym}{\mathsf{Sym}}
\providecommand{\ssym}{\mathsf{sym}}
\providecommand{\lrfo}{\mathsf{lrfO}}
\providecommand{\Compress}{\mathsf{Compress}}
\providecommand{\sSym}{\mathsf{Sym}}
\providecommand{\ssym}{\mathsf{sym}}
\providecommand{\TD}{\mathsf{TD}}
\providecommand{\sA}{\mathsf{A}}
\providecommand{\sB}{\mathsf{B}}
\providecommand{\sF}{\mathsf{F}}
\providecommand{\sH}{\mathsf{H}}
\providecommand{\sL}{\mathsf{L}}
\providecommand{\sP}{\mathsf{P}}
\providecommand{\sQ}{\mathsf{Q}}
\providecommand{\sR}{\mathsf{R}}
\providecommand{\sW}{\mathsf{W}}
\providecommand{\sX}{\mathsf{X}}
\providecommand{\sY}{\mathsf{Y}}
\DeclareMathOperator{\Dom}{Dom}
\providecommand{\gsA}{{\textcolor{gray}{\mathsf{A}}}}
\providecommand{\gsA}{{\textcolor{gray}{\mathsf{A}_{\mathsf{abc}}}}}
\providecommand{\gsB}{{\textcolor{gray}{\mathsf{B}}}}
\providecommand{\gsC}{{\textcolor{gray}{\mathsf{C}}}}
\providecommand{\gsD}{{\textcolor{gray}{\mathsf{D}}}}
\providecommand{\gsF}{{\textcolor{gray}{\mathsf{F}}}}
\providecommand{\gsH}{{\textcolor{gray}{\mathsf{H}}}}
\providecommand{\gsL}{{\textcolor{gray}{\mathsf{L}}}}
\providecommand{\gsLabc}{{\textcolor{gray}{\mathsf{L}_{\mathsf{abc}}}}}
\providecommand{\gsLab}{{\textcolor{gray}{\mathsf{L}_{\mathsf{ab}}}}}
\providecommand{\gsLbc}{{\textcolor{gray}{\mathsf{L}_{\mathsf{bc}}}}}
\providecommand{\gsRabc}{{\textcolor{gray}{\mathsf{R}_{\mathsf{abc}}}}}
\providecommand{\gsRab}{{\textcolor{gray}{\mathsf{R}_{\mathsf{ab}}}}}
\providecommand{\gsRbc}{{\textcolor{gray}{\mathsf{R}_{\mathsf{bc}}}}}
\providecommand{\gsP}{{\textcolor{gray}{\mathsf{P}}}}
\providecommand{\gsR}{{\textcolor{gray}{\mathsf{R}}}}
\providecommand{\gsX}{{\textcolor{gray}{\mathsf{X}}}}
\providecommand{\gsY}{{\textcolor{gray}{\mathsf{Y}}}}
\providecommand{\calA}{\mathcal{A}}
\providecommand{\calD}{\mathcal{D}}
\providecommand{\calH}{\mathcal{H}}
\providecommand{\calI}{\mathcal{I}}
\providecommand{\calO}{\mathcal{O}}
\providecommand{\calR}{\mathcal{R}}
\providecommand{\calU}{\mathcal{U}}
\DeclareMathOperator{\dist}{{dist}}
\DeclareMathOperator{\lcdist}{{lcdist}}
\def\:={\,\raisebox{0.85pt}{.}\hspace{-2.78pt}\raisebox{2.85pt}{.}\!\!=\,}
\def\=:{\,=\!\!\raisebox{0.85pt}{.}\hspace{-2.78pt}\raisebox{2.85pt}{.}\,}
\newlength{\fighskip} \fighskip=2pt
\newlength{\figvskip} \figvskip=3pt
\newcommand*{\figbox}[2]{{
  \def\figscale{#1}
  \def\arraystretch{0.8}
  \arraycolsep=0pt
  \begin{array}{c}
    \vbox{\vskip\figscale\figvskip
      \hbox{\hskip\figscale\fighskip
        \includegraphics[scale=\figscale]{#2}}}
  \end{array}}}
\begin{document}

\title{Strong random unitaries and fast scrambling}

\author[1,2]{Thomas Schuster}
\author[3]{Fermi Ma}
\author[4]{Alex Lombardi}
\author[5,1]{\\Fernando Brand{\~a}o}
\author[1,2]{Hsin-Yuan Huang}

\affil[1]{California Institute of Technology}
\affil[2]{Google Quantum AI}
\affil[3]{UC Berkeley, New York University}
\affil[4]{Princeton University}
\affil[5]{AWS Center for Quantum Computing}

\date{\today}

\maketitle

\begin{abstract}\normalsize
Understanding how fast physical systems can resemble Haar-random unitaries is a fundamental question in physics. Many experiments of interest in quantum gravity and many-body physics, including the butterfly effect in quantum information scrambling and the Hayden-Preskill thought experiment, involve queries to a random unitary~$U$ alongside its inverse~$U^\dagger$, conjugate~$U^*$, and transpose~$U^T$. However, conventional notions of approximate unitary designs and pseudorandom unitaries (PRUs) fail to capture these experiments. In this work, we introduce and construct \emph{strong unitary designs} and \emph{strong PRUs} that remain robust under all such queries. Our constructions achieve the optimal circuit depth of $\mathcal{O}(\log n)$ for systems of $n$ qubits. 
We further show that strong unitary designs can form in circuit depth $\mathcal{O}(\log^2 n)$ in circuits composed of independent two-qubit Haar-random gates, and that strong PRUs can form in circuit depth $\poly(\log n)$ in circuits with no ancilla qubits.
Our results provide an operational proof of the fast scrambling conjecture from black hole physics: every observable feature of the fastest scrambling quantum systems reproduces Haar-random behavior at logarithmic times.
\end{abstract}

\pagenumbering{arabic} 
\setcounter{page}{1}

\addtocontents{toc}{\protect\setcounter{tocdepth}{0}}


\section{Introduction}

Understanding how fast a quantum system can scramble information is a fundamental question with implications throughout quantum science. 
In quantum computing, quantum information scrambling by random circuits enables efficient device benchmarking~\cite{emerson2005scalable,ambainis2007quantum, knill2008randomized,elben2023randomized}, quantum state tomography~\cite{guta2020fast, huang2020predicting,zhao2021fermionic}, and quantum advantage demonstrations~\cite{arute2019quantum, morvan2023phase, abanin2025constructive}.
In quantum cryptography~\cite{ji2018pseudorandom,ananth2022cryptography,kretschmer2023quantum}, scrambling is characterized by the computational indistinguishability of quantum circuits from Haar-random unitaries, and enables new cryptographically secure protocols.
In fundamental physics, scrambling and the emergence of Haar-random unitary behaviors provide powerful theoretical tools for modeling complex phenomena across diverse areas, from quantum many-body dynamics~\cite{fisher2023random,nahum2017entgrowth,cotler2022fluctuations} to quantum chaos and thermalization~\cite{deutsch1991quantum,srednicki1994chaos,rigol2008thermalization} to quantum gravity and black hole physics~\cite{sekino2008fast,hayden2007black,brown2023quantum,nezami2023quantum,schuster2022many}.

Across all of these contexts, a central question concerns the minimum time required to scramble quantum information. This question is crucial for quantum technologies: the shorter the time required, the more experimentally applicable random unitary protocols become. In physics, this question is captured by the \emph{fast scrambling conjecture}, proposed by Sekino and Susskind~\cite{sekino2008fast}, which states\footnote{The fast scrambling conjecture also posits that optimal scrambling is achieved in black hole systems. Verifying this claim remains experimentally inaccessible with current technology and is beyond the scope of this work.}:

\begin{conjecture}[Fast scrambling conjecture, Sekino and Susskind~\cite{sekino2008fast}]
\label{conjecture:fast-scrambling}
The minimum time to \emph{scramble} information in quantum systems of $n$ qubits under all-to-all connectivity is $\Theta(\log n)$.
\end{conjecture}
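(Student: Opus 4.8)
The plan is to establish \Cref{conjecture:fast-scrambling} in its operational form by proving matching $\Omega(\log n)$ lower and $O(\log n)$ upper bounds on the scrambling time, where ``scrambling'' is made precise as indistinguishability from a Haar-random unitary under \emph{all} physically realizable experiments — in particular, experiments that may query the unknown unitary $U$ together with $U^\dagger$, $U^*$, and $U^T$. Adopting this definition is itself a key conceptual step: it subsumes every concrete scrambling diagnostic used in the physics literature (out-of-time-order correlators and the butterfly effect, the Hayden--Preskill decoding experiment, spectral form factors, and so on), since each such diagnostic is an efficient procedure making polynomially many queries of these four types. It therefore suffices to (i) construct an ensemble on $n$ qubits with all-to-all connectivity and circuit depth $O(\log n)$ that is indistinguishable from Haar under such queries — a \emph{strong} PRU, or in the information-theoretic setting a strong unitary design — and (ii) show that no circuit of depth $o(\log n)$ can scramble at all.

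For the lower bound I would use a light-cone / Lieb--Robinson argument. In any circuit built from two-qubit gates with depth $d$, the reduced channel on a single output qubit depends on at most $2^d$ of the input qubits, and dually a perturbation on one input qubit can influence at most $2^d$ output qubits. If $d \le (1-\Omega(1))\log_2 n$, then there are input and output qubits that are causally disconnected, so an OTOC between the corresponding local operators is exactly its trivial value, the Hayden--Preskill recovery fidelity is no better than chance, and the output is trivially distinguishable from a Haar-random unitary (whose reduced channels have full support). Hence $d = \Omega(\log n)$, and this holds even for all-to-all connectivity and even information-theoretically.

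The upper bound is the technical heart. I would build the strong PRU in two layers. The \emph{inner} layer is a pseudorandom object — a random circuit of depth $O(\log n)$ under all-to-all connectivity, or a pseudorandom permutation composed with random-phase and random-Clifford layers — already close to Haar for queries to $U$ alone. The \emph{outer} step is to argue robustness against the additional $U^\dagger, U^*, U^T$ queries. The natural tool is a purified ``path-recording'' analysis: replace the Haar-random unitary by a simulator that records the (input, output) pairs queried so far, extend it to inverse queries by recording pairs from both sides, and show that any distinguisher making $t = \poly(n)$ two-sided queries sees the same statistics against the construction and against Haar up to error $\poly(t)/2^{\Omega(n)}$. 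Pushing this to depth $O(\log n)$ requires that the inner circuit $\poly(n)$-scrambles in $O(\log n)$ all-to-all depth and that the path-recording isometry only ``feels'' the ensemble through such low-degree statistics. Ancilla qubits are used freely so the construction is exactly implementable in $O(\log n)$ depth; the ancilla-free $\poly(\log n)$-depth version then follows by emulating the ancillas inside a slightly larger shallow circuit.

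The main obstacle I anticipate is precisely this two-sided (forward-and-inverse) security analysis at logarithmic depth. Standard PRU proofs control only one-sided queries; the known techniques for $U$-and-$U^\dagger$ security are information-theoretic and do not obviously port to a shallow pseudorandom construction, while shallow random circuits are not known to be secure under inverse queries by any black-box argument. Bridging this gap — designing a construction whose internal randomness is simultaneously sampleable and applicable in $O(\log n)$ depth \emph{and} provably robust when the adversary may run the circuit backwards, conjugate it, and transpose it — is where the real work lies, and I expect it to require a new purification/path-recording formalism tailored to the $\{U,U^\dagger,U^*,U^T\}$ query model.
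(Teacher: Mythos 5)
Your skeleton matches the paper's resolution of \Cref{conjecture:fast-scrambling}: make ``scrambling'' operational as indistinguishability from Haar under queries to $U,U^\dagger,U^*,U^T$ (strong designs/PRUs), prove an $\Omega(\log n)$ light-cone lower bound (your causal-disconnection argument is essentially Proposition~\ref{prop: lower bound design} and the $U^\dagger X_1 U\ket{0^n}$ experiment in the paper), and get the $O(\log n)$ upper bound from a shallow ensemble analyzed by a path-recording simulator extended to the four query types, compressed by a gluing step, with ancilla-free variants handled separately. So the route is the paper's route; the issue is that the step you yourself identify as ``where the real work lies'' is left open, and the concrete candidates you float for it do not work as stated.

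Specifically, the inner layer ``a pseudorandom permutation composed with random-phase and random-Clifford layers'' is the PFC/DPFC ensemble, and it cannot reach depth $O(\log n)$: known quantum-secure pseudorandom permutations require $\mathrm{poly}(n)$ depth, which is precisely why the paper discards the permutation and introduces the LRFC ensemble, replacing $P$ by two Luby--Rackoff shuffle gates $S_L,S_R$ built from random (or $2k$-wise independent, or pseudorandom) \emph{functions}, and then proving the nontrivial claim that these shuffles mimic a random permutation inside the sandwich of 2-designs (Theorems~\ref{thm:LRFC-design} and~\ref{thm:LRFC-PRU}, via the locally-distinct-subspace projection and the conjugated path-recording operators $\overline V,\overline W$). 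Your other candidate, an all-to-all random circuit of depth $O(\log n)$, is also not known to suffice: the paper only establishes strong designs from local random circuits at depth $O(\log^2 n)$ (blocked fast scrambling plus an operator-spreading argument, Lemma~\ref{lemma: strong 2-designs}), and shows that relative-error strong designs from local random circuits need depth $\Omega(n)$ regardless of geometry. Finally, the depth reduction to $O(\log n)$ is not automatic from ``low-degree statistics'': it rests on the strong gluing lemma (Lemma~\ref{lemma: strong gluing}), i.e.\ that small strong designs on overlapping patches compose when sandwiched between $n$-qubit strong 2-designs, and its proof is a substantial extension of the path-recording framework to $\{U,U^\dagger,U^*,U^T\}$ queries. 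Without an explicit construction filling these two holes (a permutation-free low-depth primitive and the gluing analysis), the proposal is a correct plan but not a proof.
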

\noindent In their original formulation, scrambling was characterized through entanglement growth: for every $n$-qubit pure state $\ket{\psi}$, every sufficiently small subsystem of the time-evolved state $U\ketbra{\psi}U^\dagger$ should achieve near-maximal entanglement entropy in $\mathcal{O}(\log n)$ time. However, the modern physical understanding of quantum information scrambling encompasses numerous signatures beyond entanglement growth, including, in physical studies, the decay of out-of-time-order correlators~\cite{xu2024scrambling}, information recovery protocols like the Hayden-Preskill thought experiment~\cite{hayden2007black}, and the saturation of operator size distributions and operator entanglement entropies to their Haar-random values~\cite{roberts2018operator}. 

In recent years, \emph{indistinguishability from Haar-random} has emerged as a powerful conceptual framework for characterizing scrambling: a system scrambles if its dynamics $U$ become operationally indistinguishable from Haar-random evolution in any physical experiment. Operational indistinguishability is commonly defined through \emph{approximate unitary $k$-designs}~\cite{emerson2003pseudo,emerson2004random,gross2007evenly,dankert2005efficient,dankert2009exact,brandao2016local, haah2024efficient, chen2024incompressibility,laracuente2024approximate,schuster2024random,west2025no,grevink2025will,cui2025unitary} and \emph{pseudorandom unitaries} (PRUs)~\cite{ji2018pseudorandom,metger2024simple, chen2024efficient, ma2024construct}. The former guarantees that $U$ is indistinguishable from a Haar-random unitary within any quantum experiment that queries $U$ up to $k$ times, while the latter concerns any polynomial-time quantum experiment. These general operational frameworks offer a crucial advantage of also capturing any future scrambling diagnostics yet to be discovered. This naturally motivates the following operational formulation of the fast scrambling conjecture:

\begin{conjecture}[Operational fast scrambling conjecture] \label{conjecture:fast-scrambling-PRU}
The minimum circuit depth to form $n$-qubit unitary $k$-designs (for constant $k$) and PRUs under all-to-all connectivity is $\Theta(\log n)$.
\end{conjecture}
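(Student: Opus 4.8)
We treat \Cref{conjecture:fast-scrambling-PRU} as the theorem it turns out to be, and prove it by establishing a matching $\Omega(\log n)$ lower bound and $\mathcal{O}(\log n)$ upper bound on the circuit depth. The lower bound is information-theoretic and already rules out sub-logarithmic approximate $2$-designs; the upper bound we get for free from the stronger statements announced above — that \emph{strong} $k$-designs (for every constant $k$) and \emph{strong} PRUs, which remain indistinguishable from Haar-random under arbitrarily interleaved queries to $U$, $U^\dagger$, $U^*$, $U^T$, form in depth $\mathcal{O}(\log n)$ over all-to-all connectivity with ancillas — since a strong design/PRU is in particular an ordinary design/PRU.

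\emph{Lower bound.} This is a standard lightcone argument. In a depth-$d$ circuit of two-qubit gates with \emph{arbitrary}, all-to-all wiring (and ancillas), the backward lightcone of any output qubit contains at most $2^d$ input wires, so if $d < \log_2(n/2)$ there is an input qubit $j$ causally disconnected from a constant fraction of the output qubits. A Haar-random $U$ instead scrambles wire $j$ across the whole system, so that (by Hayden--Preskill recovery) the EPR partner of a reference qubit entangled with wire $j$ can be decoded from any majority of the outputs; this single-query test succeeds against Haar-random but fails against every sub-logarithmic-depth circuit, ruling the latter out as an approximate $2$-design or a PRU (see, e.g., \cite{brandao2016local,schuster2024random}).

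\emph{Upper bound.} We build on the path-recording framework of \cite{ma2024construct} and the $\PFC$ template $U = C\,F\,P$ — $C$ a (pseudo)random Clifford, $F$ a diagonal operator with (pseudo)random phases, $P$ a (pseudo)random permutation of computational-basis states — together with the structural tweaks forced by strong security. Two ingredients are needed. First, a reduction: the operations $\{U, U^\dagger, U^*, U^T\}$ are generated by forward and inverse queries to $U$ together with forward and inverse queries to $U^*$ on the conjugate Hilbert space, so strong security is exactly ordinary indistinguishability from Haar-random in a ``four-way'' query model. Second, the technical core, a \emph{two-sided path-recording (compressed) oracle}: replace the ensemble by an isometry that lazily samples the underlying permutation/phase only on inputs the adversary touches, while recording in parallel the partial maps generated by forward queries, by inverse queries, and by their conjugates, and then show this joint recorded state lies within $t\cdot 2^{-\Omega(n)}$ in trace distance of the analogous record for a genuinely Haar-random $U$, for any $t$ queries. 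Finally, the depth count: the Clifford layer can be realized in depth $\mathcal{O}(\log n)$ with ancillas in the all-to-all model; for PRUs the permutation and phase function are instantiated by $\mathsf{NC}^1$-computable — hence depth-$\mathcal{O}(\log n)$ — pseudorandom permutations and functions under standard assumptions, and for $k$-designs by depth-$\mathcal{O}(\log n)$ exact or $\eps$-approximate $k$-wise independent permutations and phases (constant $k$), with applying $F$ costing only the depth of evaluating the phase function into an ancilla register. The composite circuit then has depth $\mathcal{O}(\log n)$.

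\emph{Main obstacle.} The crux is making the two-sided, conjugation-robust analysis close. In the one-sided path-recording argument the recorded partial permutation only ever grows ``forward''; once inverse queries are allowed it also grows ``backward,'' and one must bound the operator norm of the ``bad'' events in which a fresh query is inconsistent with the record — the adversary locating a preimage that breaks injectivity of the lazily defined map, or forward- and inverse-records silently disagreeing — showing it stays $2^{-\Omega(n)}$ per query even when the adversary steers toward such collisions. Queries to $U^*$ and $U^T$ add the constraint that the construction carry no exploitable ``reality structure'': $F$ must use phases for which $F^* = F^\dagger$ is itself an independent-looking (pseudo)random diagonal, lest conjugate queries reveal correlations a forward/inverse analysis would miss. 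Closing this bookkeeping uniformly over all four query types and all interleavings is where essentially all the work lies.
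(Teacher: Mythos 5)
This statement is a \emph{conjecture} that the paper never proves --- in fact the paper explicitly argues (Limitation~1 of the Introduction) that it \emph{fails} as stated: standard, forward-query-only unitary designs with relative error and standard PRUs can be built under all-to-all connectivity in depth $\Theta(\log\log n)$ (respectively $\poly(\log\log n)$)~\cite{schuster2024polynomial,cui2025unitary}, so the claimed $\Omega(\log n)$ lower bound does not hold for the standard notions, and this is precisely what motivates the strong versions (Conjecture~\ref{conjecture:strong-fast-scrambling}) that the paper actually resolves. Your upper bound is fine but vacuous for the $\Theta$ claim; the genuine gap is in your lower bound.

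Concretely, your lightcone/Hayden--Preskill test is not implementable in the forward-only query model that defines standard designs and PRUs. With a single forward query, the relevant comparison is between the \emph{ensemble-averaged} output states, and for any exact unitary $1$-design $\mathcal{E}$ (even depth-one random Paulis) one has
\begin{equation}
\E_{U\sim\mathcal{E}}\big[(U\otimes \mathbbm{1}_R)\,\rho_{SR}\,(U^\dagger\otimes \mathbbm{1}_R)\big]
=\frac{\mathbbm{1}_S}{2^{n}}\otimes \rho_R
=\E_{U\sim H}\big[(U\otimes \mathbbm{1}_R)\,\rho_{SR}\,(U^\dagger\otimes \mathbbm{1}_R)\big],
\end{equation}
so your "single-query decoding test" has exactly zero distinguishing advantage regardless of circuit depth. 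The Hayden--Preskill decoder that would detect the missing entanglement depends on $U$ itself and is realized via queries to $U^*$ (Yoshida--Kitaev) or $U^\dagger$ --- i.e., it lives in the \emph{strong} query model. This is exactly why the paper's $\Omega(\log n)$ lower bounds (Proposition~\ref{prop: lower bound design}, and the $U^\dagger X_1 U\ket{0^n}$ argument sketched after Theorem~\ref{thm:strong-PRU-depth}) are stated and proved only for strong designs and strong PRUs, and why low-depth, small-lightcone circuits can nonetheless be relative-error designs and PRUs for forward queries. If you want a correct statement with a matching $\Theta(\log n)$ bound, you must either move to the strong setting (queries to $U,U^\dagger,U^*,U^T$), where the paper's Theorems~\ref{thm:strong-design-depth} and~\ref{thm:strong-PRU-depth} plus Proposition~\ref{prop: lower bound design} give exactly this, or abandon the $\Omega(\log n)$ claim for the standard notions.
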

\noindent The past decade has seen remarkable progress in understanding the depths needed to form unitary designs and PRUs~\cite{brandao2016local, haah2024efficient, chen2024incompressibility,laracuente2024approximate,schuster2024random,west2025no,grevink2025will,cui2025unitary,ji2018pseudorandom,metger2024simple, chen2024efficient, ma2024construct,foxman2025random}. Unfortunately, standard notions of designs and PRUs possess two critical limitations, which undermine the utility of the fast scrambling conjecture formulated above.

\vspace{0.45em}
\textbf{Limitation 1: Forward-only access.} Standard unitary designs and PRUs only guarantee indistinguishability under forward queries to $U$. However, many scrambling diagnostics require access to the inverse $U^\dagger$, conjugate $U^*$, or transpose $U^T$ operations. For example, out-of-time-order correlators require time-reversal operations using $U^\dagger$ to be efficiently measured~\cite{cotler2023information}, while efficient information recovery in the Hayden-Preskill protocol requires complex conjugation $U^*$~\cite{yoshida2017efficient}. Similarly, recent work in quantum cryptography~\cite{zhandry2024model} suggests that the strongest notion of PRUs should allow access to all of $U$, $U^\dagger$, $U^*$, $U^T$ (as well as their controlled versions), reflecting the fact that a user with knowledge of the gates composing $U$ should be able to implement all these transformations. 

Standard designs and PRUs fail to capture these essential features. In fact, they do not even satisfy an $\Omega(\log n)$ depth lower bound: recent work~\cite{schuster2024polynomial, cui2025unitary} shows that standard unitary designs and PRUs can be constructed in $\Theta(\log \log n)$ depth, exponentially faster than the conjectured minimum scrambling time of $\Theta(\log n)$. This demonstrates that forward-only indistinguishability is insufficient to capture the complete range of scrambling behaviors often desired in physics and cryptography.

\vspace{0.45em}
\textbf{Limitation 2: Unphysical use of ancillary systems.} All existing PRU constructions~\cite{metger2024simple, ma2024construct, schuster2024random} over $n$ qubits require $m = \mathrm{poly}(n)$ ancilla qubits initialized to $\ket{0^m}$ and returned to $\ket{0^m}$ to achieve pseudorandomness on the original $n$ qubits. While this use of ancilla qubits is acceptable for cryptographic applications, it creates a fundamental mismatch when modeling physical quantum dynamics. Physical scrambling processes, whether in black holes or many-body quantum systems, operate on fixed Hilbert spaces and do not involve auxiliary degrees of freedom with fine-tuned initialization and finalization conditions. As a result, standard cryptographic PRUs do not necessarily provide appropriate evidence for physical scrambling processes. 

\vspace{0.45em}
In this work, we address both of these limitations. First, we introduce \emph{strong unitary $k$-designs} and \emph{strong pseudorandom unitaries} (PRUs), which are indistinguishable from Haar-random in any experiment that queries the unitary $U$ or its inverse $U^\dagger$, conjugate $U^*$, or transpose $U^T$.
Second, we initiate the study of \emph{ancilla-free PRUs}, which are PRUs with efficient ancilla-free circuit implementations. 
These definitions motivate a strengthened version of the fast scrambling conjecture:

\begin{conjecture}[Strong fast scrambling conjecture] \label{conjecture:strong-fast-scrambling}
The minimum depth to form $n$-qubit \textbf{strong} unitary designs and PRUs under all-to-all connectivity is $\Theta(\log n)$, achievable \textbf{without ancilla}.
\end{conjecture}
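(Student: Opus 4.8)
The plan is to establish \Cref{conjecture:strong-fast-scrambling} in two halves: an $\Omega(\log n)$ depth lower bound that already applies to strong $O(1)$-designs (hence to strong PRUs), and matching constructions of depth $O(\log n)$ for strong designs and, under the existence of one-way functions, for strong PRUs.

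\emph{Lower bound.} The key observation is that a strong design/PRU must reproduce the statistics of a \emph{butterfly-effect} (out-of-time-order) experiment, which queries both $U$ and $U^\dagger$. Any bounded-arity circuit $U$ of depth $d$ under all-to-all connectivity has, for each output qubit $j$, a backward lightcone of at most $c^d$ input qubits; when $d = o(\log n)$ a counting argument produces a pair $(i,j)$ for which $i$ lies outside the lightcone of $j$ with probability $\ge 1/2$ over the ensemble, so the Heisenberg operator $U^\dagger Z_j U$ commutes with $X_i$ and the OTOC estimator $\tfrac{1}{2^n}\mathrm{Tr}\!\big(U^\dagger Z_j U\, X_i\, U^\dagger Z_j U\, X_i\big)$ --- a quantity measurable with $O(1)$ queries to $U$ and $U^\dagger$ via an interferometric/SWAP circuit --- equals $1$. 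For a Haar-random unitary the same quantity concentrates near $0$. Hence the output distribution of this $O(1)$-query experiment differs by $\Omega(1)$ in total variation between the shallow ensemble and Haar, contradicting the strong-design property; since the experiment is efficient it also rules out strong PRUs. Therefore $d = \Omega(\log n)$.

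\emph{Upper bound.} I would analyze a $\PFC$-type ensemble $U = P\, D\, C$, where $C$ is drawn from a depth-$O(\log n)$ ensemble that forms an approximate (resp.\ pseudorandom) Clifford design, $D$ is a diagonal unitary implementing a $k$-wise-independent (resp.\ pseudorandom) phase $x \mapsto (-1)^{f(x)}$, and $P$ is a uniformly random Pauli; all three layers have depth-$O(\log n)$ all-to-all implementations, and the Clifford and phase layers can moreover be implemented without ancilla while the pseudorandom phase layer costs $\poly(\log n)$ depth ancilla-free (by conjugating a single-qubit phase gate through a pseudorandom permutation). The crucial structural point is that this family is \emph{manifestly closed} under $\dagger, *, T$: the inverse, conjugate, and transpose of a random Clifford are random Cliffords, the same holds for random Paulis, and the phase layer is real (so $D^* = D$) with $D^\dagger = D^{-1}$ realizing the complemented phase function $1-f$, which is again $k$-wise independent (resp.\ pseudorandom). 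I would then prove the strong-design/PRU guarantee by a \emph{two-sided path-recording} argument: introduce a purified oracle that records the query transcript of both forward ($U$) and backward ($U^\dagger$) calls, show the concrete ensemble is statistically (resp.\ computationally) indistinguishable from this oracle up to the allowed number of queries, and finally note that the oracle is indistinguishable from Haar --- the last step being essentially automatic because the Haar measure is itself invariant under $\dagger, *, T$, so once the path-recording analysis is phrased symmetrically it handles all four access patterns at once. The $O(\log^2 n)$ bound for genuinely two-qubit-gate circuits then follows by feeding the structured $O(\log n)$-depth ensemble into a gluing/depth-amplification lemma at the cost of one extra $O(\log n)$ factor.

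\emph{Main obstacle.} I expect the crux to be the two-sided path-recording analysis. Permitting adaptively interleaved $U$ and $U^\dagger$ queries lets the distinguisher partially \emph{undo} previously recorded paths, so the combinatorics that control ``collision'' events --- two distinct query paths landing on the same recorded input/output pair --- must be redone for transcripts that both grow and shrink, and one must show that the amplitude of problematic ``return''/cancellation events stays negligible. This in turn requires that the Clifford-plus-phase layers randomize sufficiently in \emph{both} time directions simultaneously, which is precisely the property that forward-only designs and PRUs lack --- and the reason they collapse to $\Theta(\log\log n)$ depth. A secondary difficulty is keeping the ancilla-free simulation of the phase layer within $\poly(\log n)$ depth while preserving the exact symmetry under $\dagger, *, T$.
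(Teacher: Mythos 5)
There are genuine gaps here, and the most important one is structural. First, note that the statement you are proving is a \emph{conjecture} that the paper itself does not fully resolve: the paper obtains strong designs and strong PRUs in depth $\mathcal{O}(\log n)$ \emph{with} $\tilde{\mathcal{O}}(n)$ ancilla qubits, and ancilla-free constructions only in depth $\poly(\log n)$ (PRUs) or $\mathcal{O}(\log^2 n)$ (designs from random circuits); the ancilla-free $\Theta(\log n)$ claim remains open. Your upper-bound sketch does not close this either --- you yourself concede the ancilla-free phase layer costs $\poly(\log n)$ --- so even if every step went through you would not have established the conjecture as stated. Your lower bound is fine and is essentially the paper's own lightcone argument (measuring $U^\dagger Z_j U$ or an OTOC with $\mathcal{O}(1)$ queries to $U,U^\dagger$ distinguishes any $o(\log n)$-depth circuit from Haar).

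The deeper problem is the construction $U = P\,D\,C$ with $P$ a uniformly random \emph{Pauli}. The known security proofs (PFC of Metger et al., DPFC of Ma--Huang, and the LRFC ensemble in this paper) all require a layer that acts like a random \emph{permutation} of the computational basis; the paper's entire point is that a genuine pseudorandom permutation costs $\poly(n)$ depth, and it therefore replaces it by two Luby--Rackoff shuffle gates $S_L,S_R$ built from random functions, then proves (via a new two-sided path-recording analysis with a conjugated oracle $\overline{V}$ and a ``locally distinct'' subspace projection) that these shuffles mimic a permutation inside the circuit even under interleaved queries to $U,U^\dagger,U^*,U^T$. A random Pauli plus a random phase plus a Clifford has no permutation-like ingredient, and there is no known argument --- nor do you supply one --- that it is secure even against adaptive forward queries, let alone inverse/conjugate/transpose ones. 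Relatedly, closure of the ensemble under $\dagger,*,T$ does not transfer forward security to strong security (the Clifford group is closed under all three yet is far from a strong $k$-design for $k>3$); the work is precisely the symmetric path-recording analysis with cancellation/``return'' events that you flag as the main obstacle and leave undone. In the paper this is carried out concretely: the gluing lemma (Lemma~\ref{lemma: strong gluing}) plus Theorems~\ref{thm:LRFC-design},~\ref{thm:LRFC-PRU},~\ref{thm:two-layer-design},~\ref{thm:two-layer-PRU} give the $\mathcal{O}(\log n)$ upper bounds, and a separate catalytic-computation argument (ancilla-independent $\mathsf{TC}^1$ PRFs) gives the ancilla-free $\poly(\log n)$ result --- quite different from conjugating a phase gate through a pseudorandom permutation, which would reintroduce exactly the $\poly(n)$-depth permutation bottleneck the paper is designed to avoid.
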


\noindent This conjecture captures the strongest possible operational meaning of fast scrambling: quantum dynamics that remain indistinguishable from Haar-random under any efficient quantum experiment involving any combination of operations, realized using only the physical degrees of freedom. 

Our main results provide compelling evidence for, and an almost full resolution to, the strong fast scrambling conjecture:

\begin{enumerate}
\item \textit{\textbf{Strong unitary designs.}} We provide the first proof of existence for strong unitary designs and establish that the minimum depth to form them is precisely $\Theta(\log n)$. This proves that every property measurable in finitely many queries scrambles in logarithmic time.

\item \textit{\textbf{Strong PRUs.}} We provide the first proof of existence for strong PRUs secure against all operations $U$, $U^\dagger$, $U^*$, $U^T$ and establish that the minimum depth to form them is precisely $\Theta(\log n)$ under a well-established cryptographic assumption: subexponential hardness of learning with errors (LWE)~\cite{regev2009lattices}. This represents a significant advance over prior work~\cite{ma2024construct}, which requires $\mathrm{poly}(n)$ depth and only achieved security against $U$ and $U^\dagger$. Our result proves that every property measurable by polynomial-time experiments scrambles in logarithmic time.

\item \textit{\textbf{Ancilla-free constructions:}} We show that both strong unitary designs and strong PRUs can be implemented without using any ancilla qubits in $\mathrm{poly}(\log n)$ depth. While slightly larger than the conjectured $\Theta(\log n)$, this provides the first ancilla-free PRU construction for any security notion, including standard forward-only PRUs. These constructions are secure under the subexponential hardness\footnote{In order to construct polynomial-depth ancilla-free PRUs, it suffices to assume the polynomial hardness of LWE.} of learning with errors \cite{regev2009lattices}. 

\item \textit{\textbf{Generic emergence.}} We prove that all-to-all random circuits consisting of independent Haar-random two-qubit gates form strong unitary designs in $\mathcal{O}(\log^2 n)$ depth, providing evidence that fast scrambling is a generic phenomenon rather than requiring careful engineering.
\end{enumerate}

\noindent All of our results immediately extend to quantum experiments with access to controlled versions of $U, U^\dagger, U^*, U^T$, following the general reduction laid out in~\cite{sheridan2009approximating,tang2025controlled}.

Our constructions of strong unitary designs and strong PRUs introduce several new random unitary ensembles and techniques for working with strong random unitaries. Our main technical contributions are threefold. \textit{First}, we introduce the Luby-Rackoff-Function-Clifford (LRFC) ensemble and prove that it forms both strong unitary designs and strong PRUs. The LRFC ensemble adapts the Permutation-Function-Clifford (PFC) ensemble of~\cite{metger2024simple}, replacing the random permutation with a Luby-Rackoff construction to achieve exponentially lower circuit depths while maintaining security against all queries to $U$, $U^\dagger$, $U^*$, and $U^T$. \textit{Second}, inspired by~\cite{schuster2024random}, we prove a gluing theorem for strong unitary designs and strong PRUs. This powerful technique allows us to reduce the circuit depth of our strong constructions to the optimal value of $\mathcal{O}(\log n)$, which we prove is the minimum possible for any strong unitary design or strong PRU, and to establish strong unitary designs in depth $\mathcal{O}(\log^2 n)$ using all-to-all random circuits with Haar-random two-qubit gates. \textit{Third}, to construct ancilla-free PRUs, we combine a new classical-to-quantum circuit compilation technique with our gluing theorems to construct standard PRUs in $\mathrm{poly}(\log n)$ depth over 1D geometries and strong PRUs in $\mathrm{poly}(\log n)$ depth over all-to-all geometries, both without using any ancilla qubits.

\paragraph{Organization of this paper.}

\begin{figure}[t]
    \centering
    \includegraphics[width=1.0\textwidth]{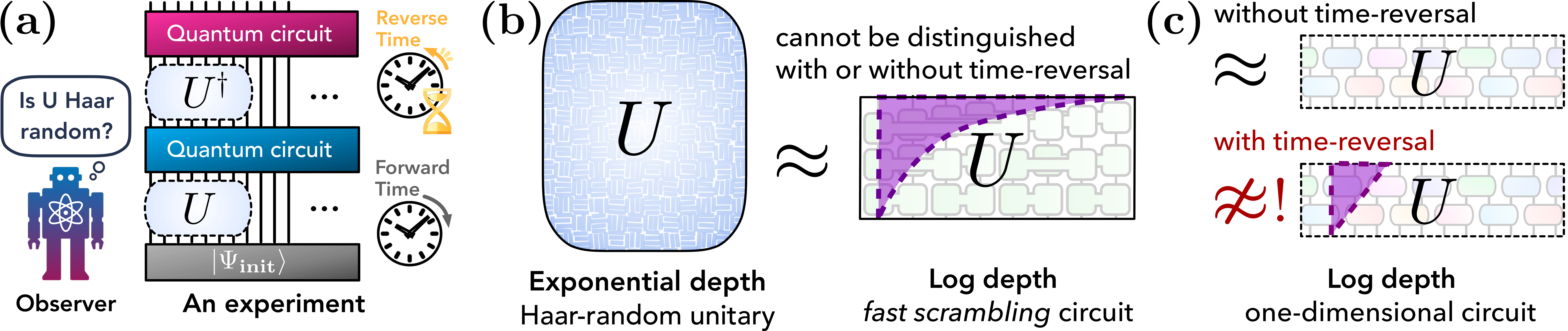}
    \caption{Illustration of our main results. \textbf{(a)} A strong approximate unitary $k$-design is a random unitary ensemble that is indistinguishable from Haar in any quantum experiment that queries $U$ or its inverse (i.e.~time-reversal), conjugate, or transpose $k$ times. A strong pseudorandom unitary (PRU) is similarly indistinguishable in any polynomial-time experiment. \textbf{(b)} We construct strong unitary designs and  PRUs on $n$ qubits in depth $\mathcal{O}(\log n)$. Our constructions use long-range two-qubit gates to scramble quantum information over all $n$ qubits as fast as possible. \textbf{(c)} In comparison, low-depth one-dimensional quantum circuits can only scramble information over local regions. This allows them to form conventional designs~\cite{schuster2024random,laracuente2024approximate} and PRUs~\cite{schuster2024random}, but not strong designs or strong PRUs.}
    \label{fig:1}
\end{figure}

Our manuscript is organized as follows. In Section~\ref{sec: design}, we define strong unitary designs and summarize our main results on their circuit depth. We also illustrate the failure of standard definitions of approximate unitary designs to capture experiments involving the time-reverse $U^\dagger$ and conjugate $U^*$. In Section~\ref{sec: PRU}, we define strong PRUs and summarize our main results on their circuit depth. In Section~\ref{sec: constructions}, we provide detailed descriptions of our constructions and proofs of strong unitary designs and PRUs. These involve three key ingredients as described above: the LRFC random unitary ensemble, a gluing construction for optimizing the circuit depth of strong random unitaries, and an adapted gluing construction for local random circuits. In Section~\ref{sec: scrambling}, we discuss the relation between strong random unitaries and fast quantum information scrambling. We conclude in Section~\ref{sec: discussions} with  discussions and open questions.

\section{Strong approximate unitary designs} \label{sec: design}

Approximate unitary designs seek to mimic Haar-random unitaries in applications that involve a random unitary only a finite number of times. Let us first review their standard definitions and then provide our strong definition and summarize our main results on strong unitary designs.

\paragraph{Background.} An exact unitary $k$-design on $n$ qubits is a random unitary ensemble $\mathcal{E}$ whose $k$-th moment, $\Phi_\mathcal{E}(\cdot) \equiv \E_{U \sim \mathcal{E}}[ U^{\otimes k}(\cdot) U^{\dagger, \otimes k}]$, equals the $k$-th moment of the Haar ensemble on $U(2^n)$: $\Phi_\mathcal{E} = \Phi_H$. While exact designs provide the strongest possible guarantees, efficient realizations beyond $k \leq 3$ are exceedingly rare.

To address this limitation, several notions of \emph{approximate unitary $k$-designs} have been introduced. We begin with the strongest such notion, the so-called \emph{relative error}~\cite{brandao2016local}. A unitary ensemble $\mathcal{E}$ is an approximate unitary $k$-design with relative error $\varepsilon$ if its moment obeys $(1-\varepsilon) \Phi_H \preceq \Phi_\mathcal{E} \preceq (1+\varepsilon)\Phi_H$, where $\mathcal{A} \preceq \mathcal{B}$ indicates that $\mathcal{B}-\mathcal{A}$ is a completely positive map. Physically, the relative error guarantees that a random unitary $U \sim \mathcal{E}$ cannot be distinguished from Haar-random in any quantum experiment that queries it up to $k$ times~\cite{schuster2024random}. It also provides even stronger guarantees on properties that cannot be efficiently measured in any quantum experiment~\cite{cui2025unitary}.

Remarkably, unitary $k$-designs with relative error $\varepsilon$ over $n$ qubits can form in extremely low circuit depths of $\tilde{\mathcal{O}}(k \log n/\varepsilon)$~\cite{laracuente2024approximate,schuster2024random}, growing only logarithmically in the number of qubits $n$. This holds even in one-dimensional systems with small light-cones. The dependence on $n$, $k$, and $\varepsilon$ was further improved exponentially to $\mathcal{O}(k \log k \cdot \log \log (n/\varepsilon))$, to achieve relative error, and $\mathcal{O}(\log k \cdot \log \log (n/\varepsilon))$, to achieve a more physical notion of measurable error, for systems with long-range two-qubit gates~\cite{cui2025unitary}.

The existence of such low-depth unitary designs is counter-intuitive, as they appear to capture many features of Haar-random unitaries~\cite{schuster2024random,laracuente2024approximate} without developing other characteristic features such as large light-cones, high entanglement, decay of out-of-time-order correlations, and good quantum encoding properties. Notably, these latter features are precisely the standard diagnostics of \emph{quantum information scrambling} in many-body quantum physics and quantum gravity~\cite{sekino2008fast,shenker2014black,roberts2015localized,hosur2016chaos,roberts2018operator,swingle2016measuring,nahum2017entgrowth,nahum2018operator,schuster2023operator,xu2024scrambling}.

A resolution to this apparent paradox was provided in~\cite{schuster2024random}: these scrambling-related features cannot be detected efficiently in any quantum experiment that queries only the forward evolution $U$. Consequently, they do not form barriers to realizing low-depth unitary designs. However, many scrambling diagnostics \emph{can} be efficiently detected in quantum experiments that involve the inverse $U^\dagger$, conjugate $U^*$, or transpose $U^T$ of the unitary $U$. These are precisely the experiments traditionally studied in quantum information scrambling~\cite{swingle2016measuring,garttner2017measuring,landsman2019verified,blok2020quantum,sanchez2021emergent,mi2021information,abanin2025constructive}. For example, estimating out-of-time-order correlators to study butterfly effects requires time-reversal operations $U^\dagger$~\cite{swingle2016measuring,garttner2017measuring, garttner2017measuring}, while the decoding protocol for the Hayden-Preskill thought experiment involves complex conjugation $U^*$~\cite{yoshida2017efficient}. This motivates a stronger notion of approximate unitary designs that captures experiments involving not just $U$, but also $U^\dagger$, $U^*$, and $U^T$.

\paragraph{Strong unitary designs.}
We define a \emph{strong $\varepsilon$-approximate unitary $k$-design} as any random unitary ensemble $\mathcal{E}$ that cannot be distinguished from Haar-random in any quantum experiment that makes any $k$ queries to the unitary $U$ or its inverse $U^\dagger$, conjugate $U^*$, or transpose $U^T$. To be precise, if we denote the output of a general quantum experiment as $\ket*{\psi_W^U} = W_{k+1} U^{\circ_k} W_k U^{\circ_{k-1}} \cdots U^{\circ_1} W_1 \ket{0}$, where each $\circ_j \in \{ \cdot, \dagger, T, * \}$ represents forward evolution, inverse, transpose, or conjugate respectively, and $W_j$ are arbitrary quantum operations applied between successive queries, then we demand
\begin{equation}
    \left\lVert \E_{U \sim \mathcal{E}} \Big[ \dyad*{\psi^U_W} \Big] - \E_{U \sim H} \Big[ \dyad*{\psi^U_W} \Big] \right\rVert_1 \leq \varepsilon
\end{equation}
for all choices of $W_j$ and $\circ_j$. This generalizes the notion of adaptive security for pseudorandom unitaries~\cite{ji2018pseudorandom,ma2024construct} and measurable error for unitary designs~\cite{cui2025unitary} to incorporate all variants of the unitary. We refer to Appendix~\ref{app: preliminaries} for further discussion, including strong versions of other design approximation metrics.

With this definition, a fundamental question arises: what circuit depths are required for strong unitary designs to form? A basic light-cone argument, which we formalize later, shows that strong unitary designs cannot form until information can propagate between any pair of qubits in the system. This requires the light-cone of the evolution to encompass all $n$ qubits, demanding depth $\Omega(\log n)$ in general quantum circuits and dynamics. Can this extremely fast speed of scrambling actually be achieved? The \emph{fast scrambling conjecture} from black hole physics posits that all-to-all connected quantum systems can achieve logarithmic scrambling times~\cite{sekino2008fast}. However, existing progress toward proving this conjecture has focused on specific scrambling diagnostics, such as the decay of out-of-time-order correlators or the encoding properties of random unitaries~\cite{maldacena2016remarks,kitaev2015simple,roberts2018operator,brown2012scrambling,brown2013short,brown2015decoupling,lashkari2013towards,cleve2015near,belyansky2020minimal,bentsen2019fast,vikram2024exact}. A fully general operational proof of this conjecture has remained an open question.

Our main result establishes that strong unitary designs can indeed form in optimal circuit depth $\mathcal{O}(\log n)$ in all-to-all-connected architectures. This proves that every property of a random unitary measurable in a constant number of queries scrambles in logarithmic time.

\begin{theorem}[Fast formation of strong unitary designs] \label{thm:strong-design-depth}
    Strong $\varepsilon$-approximate unitary $k$-designs can be realized in the following circuit depths:
    \begin{enumerate}
        \item $d = \mathcal{O}\big(\log n + \log k \cdot \log \log (nk/\varepsilon) \big)$ using all-to-all structured circuits with $\tilde{\mathcal{O}}(nk)$ ancilla qubits.
        \item $d = \mathcal{O}\big( \log n + k \cdot \log \log (nk/\varepsilon) \big)$ using all-to-all structured circuits with $\tilde{\mathcal{O}}(n)$ ancilla qubits.
    \end{enumerate}
    For all-to-all random circuits consisting of independent Haar-random two-qubit gates without ancilla qubits, $d = \mathcal{O}(k \cdot \poly \log k \cdot \log (n/\varepsilon) + \log n \cdot \log (n/\varepsilon))$.
\end{theorem}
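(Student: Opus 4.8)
The plan is to prove the three depth bounds by combining the LRFC ensemble with the gluing theorem, and then specializing to unstructured random circuits. First I would recall the base construction: the LRFC ensemble $U = C \cdot F \cdot L$, where $C$ is a random $n$-qubit Clifford, $F$ is a random phase function oracle, and $L$ is a Luby–Rackoff permutation built from a few rounds of random functions on roughly $n/2$ bits. The key claim (established in the construction section the excerpt refers to) is that $O(1)$ Luby–Rackoff rounds suffice to make this a strong $\varepsilon$-design, because the adversary only ever sees the permutation through queries to $U, U^\dagger, U^*, U^T$, which in the "path-recording" analysis reduces to distinguishing the Luby–Rackoff permutation (and its inverse) from a truly random permutation against a $k$-query adversary — a setting where constant-round Luby–Rackoff is known to be secure up to $k \ll 2^{n/4}$. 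Implementing $C$, $F$, and $L$ naively costs depth $\poly(n)$, but each of these pieces is itself a low-depth-after-gluing object: the Clifford can be realized as a design-depth circuit, $F$ as a diagonal phase computed by a low-depth reversible circuit, and $L$ as a composition of function oracles.

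Second, I would invoke the gluing theorem for strong designs. The idea (following Schuster–Ma et al.'s gluing philosophy, adapted here to be robust under $U^\dagger, U^*, U^T$) is that one can build a strong design on $n$ qubits by stitching together strong designs on overlapping blocks of size $\Theta(\log n)$ arranged on an expander-like pattern, so that the global light-cone saturates in $O(\log n)$ depth while each block only needs the $\poly\log$-overhead LRFC construction internally. Concretely: partition into blocks, apply block-local strong designs, permute qubits according to an expander, repeat $O(\log n/\log\log n)$ times; the depth is $O(\log n)$ for the routing plus the per-block cost, which is where the additive $\log k \cdot \log\log(nk/\varepsilon)$ (or $k\cdot\log\log(nk/\varepsilon)$, for the fewer-ancilla variant that uses a less ancilla-hungry internal design) term comes from. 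The two bullet points differ only in which internal approximate-design primitive is glued — one optimized for depth at the cost of $\tilde O(nk)$ ancillas, the other for ancilla count at the cost of a $k$ (rather than $\log k$) factor — so I would prove them in parallel, flagging the one place the choice matters.

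Third, for the unstructured-random-circuit statement, I would adapt the gluing argument to the setting where the "glue" and the block unitaries are both supplied by a Brownian/brickwork all-to-all circuit of Haar-random two-qubit gates, rather than engineered. Here one uses that a random all-to-all circuit of depth $O(\log n)$ already routes information between all pairs (so the light-cone obstruction is cleared), and that on each $O(\log n)$-qubit block a random circuit of depth $O(k\,\poly\log k \cdot \log(1/\varepsilon) + \text{block-size terms})$ is a relative-error $k$-design by the known local-random-circuit design bounds; multiplying the block-design depth by the $O(\log(n/\varepsilon))$ gluing layers needed for the error to not blow up under composition (a triangle-inequality / telescoping argument over the $\poly(n)$ many blocks touched) gives the stated $O(k\,\poly\log k\cdot\log(n/\varepsilon) + \log n\cdot\log(n/\varepsilon))$. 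The strong (rather than forward-only) guarantee must be threaded through: I would check that the per-block design property is relative-error (hence automatically strong, since relative error is basis-independent and closed under transpose/conjugate) and that the gluing lemma's error accounting respects the strong norm.

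The main obstacle I expect is the gluing step in the \emph{strong} setting: ordinary gluing arguments track forward queries through a light-cone, but a strong adversary can interleave $U$ and $U^\dagger$ (and $U^*$, $U^T$) queries, so one must argue that the block-wise replacement of a strong design by Haar is still valid when the surrounding circuit context includes inverse and conjugate calls. This requires a hybrid argument in which each hybrid swap is itself a strong-design indistinguishability step, and care that the "rest of the circuit" seen by one block — including the parts that will be replaced later — can be simulated as a valid strong-design distinguisher for that block. Making this composition lossless (no $\poly(n)$ blow-up in $k$, only in $\varepsilon$) is the delicate point; the ancilla-free random-circuit version additionally has to handle the fact that the blocks are not independent fresh Haar unitaries but correlated pieces of one big circuit, which I would handle via the standard "replace one gate layer at a time" coupling together with the relative-error multiplicativity.
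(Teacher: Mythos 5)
Your overall pipeline (LRFC as the base primitive, a gluing step to reach depth $\mathcal{O}(\log n)$, and a random-circuit adaptation for the ancilla-free statement) matches the paper's strategy, but two of your key steps have genuine gaps. First, your justification for the random-circuit part — that each block is a relative-error design ``hence automatically strong, since relative error is basis-independent and closed under transpose/conjugate'' — is false, and it is exactly the failure mode this paper is built around. Relative error is defined through complete positivity of $\Phi_\mathcal{E}-(1-\varepsilon)\Phi_H$ for the \emph{standard} twirl $U^{\otimes k}(\cdot)U^{\dagger\otimes k}$; passing to the mixed twirl $U^{\otimes p}\otimes U^{*\otimes q}$ is a partial-transpose conjugation, which does not preserve the CP ordering. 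Indeed the paper exhibits low-depth circuits that are relative-error standard designs yet are distinguished from Haar by a single $U,U^\dagger$ (or $U,U^*$) experiment, and it proves an $\Omega(n)$ lower bound for \emph{strong} relative-error designs from local random circuits. The correct route from the known spectral-gap bounds (the essential norm $\lVert M_\mathcal{E}-M_H\rVert_\infty$) to per-block strong designs is the paper's new additive-to-relative translation for the mixed twirl (built from the approximate mixed Haar twirl, with the $4^{n(p+q)}/(p!q!)$ factor); the saving grace is that the moment operator contains $k$ copies of both $U$ and $U^*$ regardless of how queries are split, so the essential-norm bound does apply to mixed moments — but you must invoke that machinery rather than basis-independence of relative error.

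Second, your gluing architecture — block-local strong designs interleaved with expander permutations, repeated $\mathcal{O}(\log n/\log\log n)$ times — omits the ingredient that makes the paper's gluing lemma work: the two-layer brickwork of small strong $k$-designs must be sandwiched between \emph{global $n$-qubit strong unitary 2-designs}, whose $\mathcal{O}(\log n)$ depth (via Cleve et al.) is where the additive $\log n$ in the theorem comes from. The paper states explicitly that without the enclosing 2-designs the gluing fails; the 2-designs scramble the inputs so that the overlap with the bad (counterexample) subspaces is exponentially small, and the per-brick hybrid swaps under interleaved $U,U^\dagger,U^*,U^T$ queries are then controlled via the path-recording framework. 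You correctly identify the interleaved-query composition as the delicate point, but you do not supply this resolving mechanism, and your iterated-routing variant is not covered by the gluing lemma as proved. Two smaller corrections: the LRFC ensemble is $D\cdot S_R\cdot F\cdot S_L\cdot C$ with 2-designs on \emph{both} sides (the second one is essential for inverse/conjugate security; its security proof does not reduce to classical Luby--Rackoff permutation security), and in the random-circuit bound the $k$-design blocks sit in only two layers — the $\log n\cdot\log(n/\varepsilon)$ term comes from $\log_2(n/\xi)$ layers of cheap small 2-designs of depth $\mathcal{O}(\xi+\log(n/\varepsilon))$ each (the blocked fast scrambling circuit, shown to be a strong 2-design via an operator-spreading argument), not from multiplying the block $k$-design depth by the number of layers, which would overshoot the stated bound.
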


\noindent The structured circuits achieve the optimal $\mathcal{O}(\log n)$ scaling in system size when $k$ and $\varepsilon$ are held constant, while the random circuits achieve an $\mathcal{O}(\log^2 n)$ scaling. The upper bounds for structured circuits are nearly optimal across all parameters, as confirmed by our lower bounds:

\begin{proposition} \label{prop: lower bound design}
    \emph{(Depth lower bounds for strong unitary designs)}
    For any $\varepsilon < 1/4$, any circuit ensemble over $n$ qubits that forms a strong $\varepsilon$-approximate unitary $k$-design requires circuit depth $d$:
    \begin{enumerate}
    \item $d = \Omega\big(\log n + \log k \big)$ for any all-to-all circuits with any number of ancilla qubits.
    \item $d = \Omega\big(\log n + k/\log(nk) \big)$ for any all-to-all circuits with at most $\mathcal{O}(n)$ ancilla qubits.
    \end{enumerate}
    In contrast, for any 1D circuits with any number of ancilla qubits, $d = \Omega\big(n + k/\log(nk) \big)$.
\end{proposition}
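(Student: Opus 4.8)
The plan is to prove each lower bound by constructing an efficient quantum experiment and showing that its expected outcome on any too‑shallow circuit ensemble deviates from the Haar value by more than $\varepsilon$. The $\Omega(\log n)$ and $\Omega(n)$ terms come from a butterfly‑effect (out‑of‑time‑order correlator) experiment that uses only a constant number of queries but crucially exploits access to \emph{both} $U$ and $U^\dagger$; the $k$‑dependent terms then follow by combining the same light‑cone reasoning with the fact that a strong $\varepsilon$‑approximate $k$‑design is in particular a standard $\varepsilon$‑approximate $k$‑design, for which gate‑count and depth lower bounds are already available.

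\textbf{Butterfly‑effect argument.} Fix the circuit architecture and write $C$ for the depth‑$d$ circuit on $n+a$ qubits, so the induced $n$‑qubit unitary is $\tilde U = \bra{0^a} C \ket{0^a}$. By causality the backward light cone of any output qubit $j$ contains at most $2^d$ qubits for all‑to‑all architectures, and only qubits within distance $d$ for 1D architectures. Suppose this light cone omits some real input qubit $i_0$. Then $\hat P := \tilde U^\dagger Z_j \tilde U$ acts as the identity on qubit $i_0$, hence commutes with $X_{i_0}$, so the OTOC $\tfrac{1}{2^n}\tr\!\big(\hat P\, X_{i_0}\, \hat P\, X_{i_0}\big) = \tfrac{1}{2^n}\tr(I) = 1$. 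But this OTOC is the outcome of a four‑query experiment (two forward and two inverse queries, interleaved with the single‑qubit operations $Z_j$ and $X_{i_0}$), while for a Haar‑random $V$ the same quantity is $O(2^{-n})$. Hence the experiment's output states differ by $1 - O(2^{-n}) > 1/4 > \varepsilon$ in the $\|\cdot\|_1$‑norm of the definition, contradicting the strong‑design property. Avoiding this obstruction for \emph{every} choice of $j$ and $i_0$ forces every output's backward light cone to contain all $n$ real inputs: in the all‑to‑all case this needs $2^d \geq n$, i.e.\ $d = \Omega(\log n)$; in 1D it forces all $n$ real qubits to lie within a length‑$d$ window, i.e.\ $d = \Omega(n)$, for any number of ancilla and any placement. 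If $C$ only approximately restores the ancilla to $\ket{0^a}$, one first bounds the leakage by $O(\varepsilon)$ — using the strong‑design hypothesis against Haar, which leaves the ancilla exactly invariant — which perturbs the measured OTOC by $O(\varepsilon)$ and does not affect the conclusion.

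\textbf{The $k$‑dependent terms.} Restricting to forward‑only experiments shows that $\mathcal E$ is a standard $\varepsilon$‑approximate $k$‑design, so the known depth lower bound $\Omega(\log k)$ for standard $k$‑designs with arbitrarily many ancilla applies, which with the previous paragraph gives the first bound $d = \Omega(\log n + \log k)$. For the second bound, recall that any $\varepsilon$‑approximate $k$‑design with $\varepsilon < 1/4$ requires $\tilde\Omega(nk)$ two‑qubit gates (see e.g.~\cite{brandao2016local,haah2024efficient,chen2024incompressibility}); with at most $\mathcal{O}(n)$ ancilla a depth‑$d$ all‑to‑all circuit has only $\mathcal{O}(nd)$ gates, so $d = \Omega(k/\log(nk))$, and combining with $\Omega(\log n)$ proves it. For the 1D statement, the light‑cone confinement above shows that all $n$ real qubits — and hence the entire part of $C$ that can influence $\tilde U$ — lies in a strip of $\mathcal{O}(d)$ qubits carrying $\mathcal{O}(d^2)$ gates; the gate‑count bound applied inside this strip then gives $d = \Omega(k/\log(nk))$ even with unbounded ancilla (alternatively one may invoke a sharper linear‑in‑$k$ lower bound specific to 1D designs), which with the $\Omega(n)$ bound yields $d = \Omega(n + k/\log(nk))$.

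\textbf{Main obstacle.} The delicate part is making the butterfly‑effect distinguisher rigorous: (i) fixing a concrete efficient measurement of the OTOC within the strong‑design experiment model and verifying the $O(2^{-n})$ Haar estimate for the chosen local operators; (ii) controlling imperfect ancilla restoration through the leakage bound; and (iii) handling ensembles whose circuit architecture is itself randomized, where one must show the light‑cone obstruction holds with overwhelming probability for a suitable (possibly random) choice of $i_0, j$, or average the OTOC over random $i_0, j$. The remaining steps are essentially bookkeeping on top of cited design lower bounds, the only subtlety being the 1D case with unbounded ancilla, where it is precisely the width‑$\mathcal{O}(d)$ confinement of the real qubits that makes any finite gate‑count argument possible.
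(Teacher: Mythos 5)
Your light-cone distinguisher is the same idea as the paper's, and for the $n$-dependent terms the paper's instantiation is strictly simpler: it prepares $\ket{0^n}$, applies $U$, a single-qubit Pauli, then $U^\dagger$ (two queries), measures in the computational basis, and uses the Hamming-weight fraction as the observable. For Haar this is $\geq 1/2$, while for any depth-$d$ circuit it is at most $L/n$ with $L \leq 2^d$ (all-to-all) or $L \leq 2d$ (1D), a bound that holds circuit-by-circuit and needs no choice of a pair $(i_0,j)$ — so your obstacles (i)--(iii) simply do not arise, including for ensembles with randomized architectures and for imperfect ancilla restoration (the bound is on the induced channel, whatever it is). By contrast, your OTOC experiment as described uses four queries (two forward, two inverse), so it only rules out shallow strong $k$-designs for $k \geq 4$; to cover $k=2,3$ you would need either the mixed-state two-query OTOC protocol or an observable like the paper's.

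The genuine gap is in the 1D, unbounded-ancilla claim $d = \Omega\big(k/\log(nk)\big)$. Your argument confines the relevant gates to a strip of width $\mathcal{O}(d)$ containing $\mathcal{O}(d^2)$ two-qubit gates and then applies a $\tilde{\Omega}(nk)$ gate-count lower bound; but that only yields $d^2 \gtrsim nk/\log(nk)$, i.e.\ $d = \Omega\big(\sqrt{nk/\log(nk)}\big)$, which is strictly weaker than $k/\log(nk)$ whenever $k \gg n\log^2(nk)$. So the stated 1D bound does not follow from the strip-plus-gate-count reasoning, and the parenthetical appeal to "a sharper linear-in-$k$ lower bound specific to 1D designs" is doing all the work in that regime — you would need to state and justify such a bound (e.g.\ a cut-based/entanglement argument in which only $\mathcal{O}(d)$ gates cross any fixed cut) rather than cite it vaguely. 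Note also that the paper's own proof of this proposition establishes only the $\Omega(\log n)$ and $\Omega(n)$ terms; your item-2 argument (at most $\mathcal{O}(nd)$ gates with $\mathcal{O}(n)$ ancilla versus $\tilde{\Omega}(nk)$ gates required, plus the $\Omega(\log k)$ depth bound for standard designs) is a sensible way to supply the $k$-dependent all-to-all terms, provided you cite the gate-count and $\Omega(\log k)$ results in the precise error model (measurable/additive error with $\varepsilon < 1/4$) being used.
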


\noindent The two items confirm near-optimality of our all-to-all constructions, while we also show an exponential separation between all-to-all connectivity and finite-dimensional geometries. We provide detailed constructions and proof techniques in Section~\ref{sec: constructions} and complete proofs in Appendix~\ref{app: designs}. In Appendix~\ref{sec: lower bound}, we also establish a surprising result showing that local random circuits require $\Omega(n)$ depth to realize strong unitary designs with \emph{relative error}, regardless of connectivity.

\begin{figure}[t]
    \centering
    \includegraphics[width=1.0\textwidth]{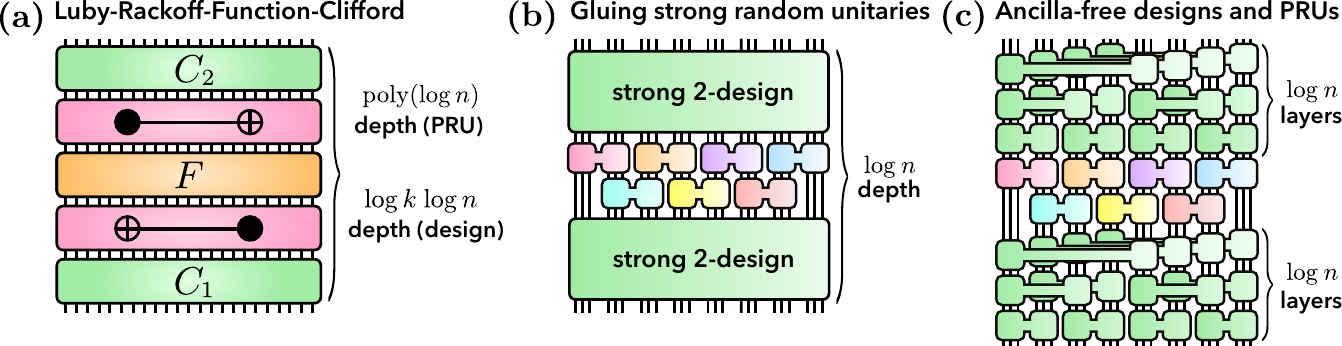}
    \caption{Our constructions of strong unitary $k$-designs and strong PRUs.
    \textbf{(a)} The Luby-Rackoff-Function-Clifford (LRFC) ensemble sandwiches classical shuffle and phase gates (pink and orange) between random Clifford unitaries (green). It forms a strong unitary design and a strong PRU in the stated circuit depths. 
    \textbf{(b)} To further reduce these depths, we consider a glued construction, with two layers of small $2\xi$-qubit random unitaries (various colors)  sandwiched between strong $n$-qubit unitary 2-designs (green). This forms a strong unitary $k$-design when $\xi = \Omega(\log (nk/\varepsilon))$ and a strong PRU when $\xi = \omega(\log n)$. We instantiate each small unitary with the LRFC ensemble.
    \textbf{(c)} To obtain ancilla-free constructions, we replace each $n$-qubit 2-design with a fast scrambling circuit of depth $\log n$ composed of $2\xi$-qubit 2-designs. For ancilla-free strong unitary designs consisting of Haar-random two-qubit gates, each small unitary is drawn from a random circuit on $2\xi$ qubits. For ancilla-free strong PRUs, each small unitary is implemented by reusing neighboring qubits as ancillae.}
    \label{fig:constructions}
\end{figure}

\begin{figure}[t]
    \centering
    \includegraphics[width=1.0\textwidth]{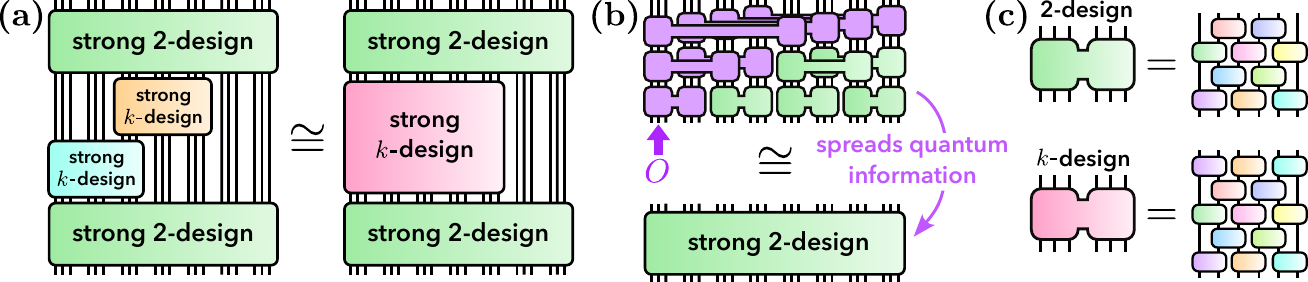}
    \caption{Illustration of key ideas from the proof of Theorem~\ref{thm:strong-design-depth} and Theorem~\ref{thm:strong-PRU-depth}.
    \textbf{(a)} To analyze our glued construction, we prove that two strong unitary $k$-designs ``glue'' together whenever they are sandwiched by larger unitary 2-designs.
    This does not hold in the absence of the larger 2-designs.
    \textbf{(b)} To show that the blocked fast scrambling circuit forms a strong 2-design, we prove that any unitary ensemble that uniformly spreads quantum information (purple) is a strong 2-design. The circuit spreads information over $n$ qubits in $\log n$ layers.
    \textbf{(c)} To replace each $2\xi$-qubit  unitary with a local random circuit, we prove a technical lemma that translates spectral gaps~\cite{brandao2016local,haferkamp2022random,chen2024incompressibility} to strong unitary designs.
    }
    \label{fig:design-proof}
\end{figure}

\section{Strong pseudorandom unitaries} \label{sec: PRU}

Pseudorandom unitaries (PRUs) seek to mimic Haar-random unitaries in efficient quantum experiments. Let us first review their standard definition and then provide our strong definition and summarize our main results on strong PRUs.

\paragraph{Background.}

A random unitary ensemble $\mathcal{E}$ is a PRU if no efficient quantum algorithm can distinguish a random unitary $U \sim \mathcal{E}$ from a Haar-random unitary under polynomially many queries to $U$~\cite{ji2018pseudorandom}. More precisely, $\mathcal{E}$ is a PRU with security against any $t(n)$-time quantum adversary if it cannot be distinguished from Haar-random in any $t(n)$-time quantum experiment, where $t(n)$ is some function of the number of qubits $n$.

The premier example is the Permutation-Function-Clifford (PFC) ensemble~\cite{metger2024simple}, $U = PFC$, formed by multiplying a pseudorandom permutation $P$, a pseudorandom function $F$, and a random Clifford unitary $C$. Under standard cryptographic assumptions, the PFC ensemble achieves security against subexponential-time quantum adversaries~\cite{ma2024construct}. However, it requires circuit depth $\mathrm{poly}(n)$ to implement, even on all-to-all-connected geometries, due to the circuit depth of the pseudorandom permutation~\cite{zhandry2016note, schuster2024random}.

Recent work has achieved exponential improvements in the circuit depths of standard PRUs. Ref.~\cite{schuster2024random} reduces the depth to $\mathrm{poly}(\log n)$ while maintaining security against polynomial-time quantum adversaries. The same work also discusses PRU realizations in even smaller circuit depths $\mathrm{poly}(\log \log n)$ using the LRFC ensemble introduced in our work.\footnote{The LRFC construction was developed by several of the authors of this work at the time of publication of Ref.~\cite{schuster2024random} but the security proof remained unpublished until now.} We will show that the LRFC ensemble forms a strong PRU in circuit depth $\mathrm{poly}(\log n)$, which yields circuit depth $\mathrm{poly}(\log \log n)$ for the two-layer LRFC ensemble considered in Ref.~\cite{schuster2024random}.

Similar to standard unitary designs, existing PRUs exhibit a fundamental limitation: they only guarantee security against experiments that query the forward evolution $U$. However, as discussed for strong unitary designs, many important quantum phenomena require experiments involving the inverse $U^\dagger$, conjugate $U^*$, or transpose $U^T$ to be detected. Prior work partially addressed this by extending the PFC ensemble to achieve security against experiments querying both $U$ and $U^\dagger$, using the construction $U = DPFC$ with an additional random Clifford $D$~\cite{ma2024construct}. However, this approach still omits conjugate and transpose operations, and requires $\mathrm{poly}(n)$ circuit depth to implement. In our results that follow, we establish security under all queries to $U, U^\dagger, U^*, U^T$ and exponentially reduce the circuit depth from $\mathrm{poly}(n)$ to $\mathcal{O}(\log n)$.

\paragraph{Strong pseudorandom unitaries.}
We define a \emph{strong PRU} as any random unitary ensemble $\mathcal{E}$ that cannot be distinguished from Haar-random in any efficient quantum experiment involving any combination of queries to the unitary $U$ or its inverse $U^\dagger$, conjugate $U^*$, or transpose $U^T$. Formally, $\mathcal{E}$ is a strong PRU with $t(n)$-time security if it remains indistinguishable from Haar-random in any $t(n)$-time quantum experiment, regardless of which operations are queried.

Our main result establishes that strong PRUs can form in optimal circuit depth $\mathcal{O}(\log n)$ in all-to-all-connected architectures. This proves that every efficiently observable property of quantum dynamics can achieve pseudorandomness in logarithmic time.

\begin{theorem}[Fast formation of strong PRUs] \label{thm:strong-PRU-depth}
    Under standard cryptographic assumptions, strong PRUs with polynomial-time security can be realized in the following circuit depths:
    \begin{enumerate}
        \item $d = \mathcal{O}(\log n)$ using all-to-all structured circuits with $\tilde{\mathcal{O}}(n)$ ancilla qubits.
        \item $d = \mathrm{poly}(\log n)$ using all-to-all structured circuits with no ancilla qubits.
    \end{enumerate}
\end{theorem}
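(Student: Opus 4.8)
The plan is to obtain both items from a single glued construction --- the ensemble of Figure~\ref{fig:constructions}(b) --- instantiated at two different ancilla budgets. I take three facts as given, all established elsewhere in the paper: (i) the LRFC ensemble on $m$ qubits is a strong PRU with polynomial-time security, under the stated hardness of LWE, in circuit depth $\poly(\log m)$ and $\poly(m)$ ancilla; (ii) a gluing theorem for strong PRUs (Figure~\ref{fig:design-proof}(a)): if block-local strong PRUs are placed on two interleaved layers of $2\xi$-qubit windows, the windows of the second layer offset by $\xi$ from those of the first, and this is sandwiched between $n$-qubit \emph{strong} $2$-designs, then the result is a strong PRU on all $n$ qubits whenever $\xi = \omega(\log n)$; and (iii) the strong $2$-design constructions of Theorem~\ref{thm:strong-design-depth}, together with the fact (Figure~\ref{fig:design-proof}(b)) that any blocked circuit uniformly spreading quantum information is itself a strong $2$-design. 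Throughout, fix $\xi = \log^2 n$ (any $\omega(\log n) \le \xi \le \poly(n)$ works) and instantiate each window with LRFC on $2\xi$ qubits.

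\emph{Item 1.} Instantiate each flanking $n$-qubit strong $2$-design using the first construction of Theorem~\ref{thm:strong-design-depth} at $k = 2$ and negligible additive error $\varepsilon' = 2^{-\log^2 n}$; this has depth $\mathcal{O}(\log n + \log\log(n/\varepsilon')) = \mathcal{O}(\log n)$ and $\tilde{\mathcal{O}}(n)$ ancilla. By (i), each LRFC window has depth $\poly(\log\xi) = \poly(\log\log n) = o(\log n)$ and $\poly(\xi) = \poly(\log n)$ ancilla, and there are $\mathcal{O}(n/\xi)$ windows, so the windows together use $\tilde{\mathcal{O}}(n)$ ancilla. The whole circuit is a constant number of such layers in sequence, so it has depth $\mathcal{O}(\log n)$ and $\tilde{\mathcal{O}}(n)$ ancilla. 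Since $\xi = \omega(\log n)$, the gluing theorem (ii) applies and certifies the ensemble is a strong PRU with polynomial-time security; this is Theorem~\ref{thm:strong-PRU-depth}, item 1. Here (ii) is invoked with the sandwich being a \emph{strong} $2$-design precisely so the hypotheses still hold after a $U^\dagger$, $U^*$, or $U^T$ query reverses and conjugates the sandwich.

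\emph{Item 2.} Here I remove both sources of ancilla while keeping the same window structure. First, I replace each flanking $n$-qubit strong $2$-design by the ancilla-free blocked fast-scrambling circuit of Figure~\ref{fig:constructions}(c): $\mathcal{O}(\log n)$ layers of $2\xi$-qubit random-Clifford $2$-designs wired in a butterfly pattern, so that after $\mathcal{O}(\log n)$ layers every qubit's information is spread uniformly over all $n$ qubits; by the lemma of Figure~\ref{fig:design-proof}(b) this ensemble is a strong $2$-design, uses no ancilla, and since each $2\xi$-qubit Clifford $2$-design runs in $\poly(\log n)$ depth its total depth is $\poly(\log n)$. Second, I make each LRFC window ancilla-free with the classical-to-quantum compilation technique: the non-Clifford parts of LRFC --- the Luby-Rackoff/Feistel rounds and the phase-function layer --- are reversible classical (resp.\ diagonal) computations of $\poly(\xi)$ size, and instead of allocating fresh ancilla I $2$-color the windows inside each layer and schedule them so that, while a window executes, the $\poly(\xi)$ qubits it needs for scratch are borrowed from currently-idle windows of the other color, each classical gadget being compiled by reversible pebbling on these \emph{borrowed (dirty)} qubits and restoring them before they are handed back. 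Each such compilation has $\poly(\xi) = \poly(\log n)$ depth and no net ancilla, so the entire circuit runs on the $n$ data qubits alone in depth $\poly(\log n)$; applying (ii) exactly as in Item 1 shows it is a strong PRU, which is Theorem~\ref{thm:strong-PRU-depth}, item 2. All results extend to controlled queries by the reduction of~\cite{sheridan2009approximating,tang2025controlled}.

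The main obstacle is the gluing theorem for strong PRUs, (ii). Its forward-only version replaces the window unitaries with block-local Haar unitaries and argues, via a purified (path-recording) oracle, that the $2$-design sandwich makes the channel indistinguishable from a single global Haar unitary; extending this to an adversary who may freely interleave queries to $U$, $U^\dagger$, $U^*$, and $U^T$ forces one to maintain a path-recording register that stays consistent under all four query types at once, and to show that the sandwich decorrelates the recorded paths no matter which combination is queried --- the transpose and conjugate queries break the left-to-right ordering the forward argument exploits, and are exactly why the sandwich must be a \emph{strong} $2$-design rather than an ordinary one. A secondary obstacle is the ancilla-free compilation of Item 2: one must verify that the $2$-colored borrowed-qubit schedule never leaves a scratch qubit corrupted, that routing scratch between windows costs only $\poly(\log n)$ depth under all-to-all connectivity, and that the resulting circuit still meets the hypotheses of the gluing theorem.
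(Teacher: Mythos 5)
Your Item 1 is essentially the paper's own route: small LRFC strong PRUs on $2\xi$ qubits with $\xi=\omega(\log n)$, glued by the scrambled two-layer ensemble between $\mathcal{O}(\log n)$-depth strong unitary 2-designs (Theorem~\ref{thm:LRFC-PRU} plus Theorem~\ref{thm:two-layer-PRU}); your choice of flanking design is slightly heavier than the exact Clifford 2-designs the paper uses, but that is immaterial.

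For Item 2 there is a genuine gap, and it sits exactly where the paper's new technical work lies. You assert that the Feistel/phase gadgets inside each LRFC window, being $\poly(\xi)$-size reversible classical (resp.\ diagonal) computations, can be ``compiled by reversible pebbling on borrowed (dirty) qubits and restored before they are handed back.'' Borrowed qubits from neighboring windows are in arbitrary, generally entangled states, and generic reversible pebbling does \emph{not} work with only dirty ancillae: Bennett-style compute--copy--uncompute needs scratch initialized to a known state, and no general method is known for computing an arbitrary $\poly$-size classical circuit using only catalytic (arbitrary-state, must-be-restored) workspace --- if such a black-box compilation existed, catalytic computation would be trivial. The paper's proof of the ancilla-free case therefore does not rely on generic pebbling; it (i) restricts the PRFs to logspace-uniform $\mathsf{TC}^1$ (available under LWE~\cite{banerjee2012pseudorandom,zhandry2021PRF}), (ii) invokes the catalytic-computation result of~\cite{TQC:BFMSSST25-catalytic} to get an ancilla-\emph{preserving} implementation using $\poly(n)$ dirty qubits plus $\mathcal{O}(\log n)$ clean qubits (Lemma~\ref{lemma:TC1-implementation}), and (iii) removes the remaining clean qubits via the ancilla-preserving $\to$ ancilla-controlled $\to$ ancilla-independent conversions (Lemmas~\ref{lemma:preserving-to-controlled} and~\ref{lemma:controlled-to-independent}, with a $2^{\mathcal{O}(\ell)}=\poly(n)$ overhead), yielding Theorem~\ref{thm:ancilla-free-implementation} and then ancilla-independent strong PRUs (Theorem~\ref{thm:ancilla-independent-PRU}). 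Only with ancilla-\emph{independence} in hand is the ``reuse other blocks' registers as ancilla'' step sound; the paper then needs a two-stage bootstrap (first a $\poly(n)$-depth ancilla-free PRU from ancilla-independent blocks, then re-gluing ancilla-free blocks of size $\poly(\log n)$ inside the ancilla-free blocked fast-scrambling 2-designs of Lemma~\ref{lemma: strong 2-designs}) to reach $\poly(\log n)$ depth. Your single-stage scheme would go through only if you supplied this dirty-ancilla compilation, which your proposal assumes rather than proves.

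A secondary omission: when the windows live on $\poly(\log n)$ qubits, the underlying PRFs have security parameter $\poly(\log n)$ but must resist $\poly(n)$-time adversaries, so sub-exponential hardness (of LWE) is required; the paper states this explicitly for the ancilla-free construction, while your Item 2 carries over ``polynomial-time security'' from premise (i) without adjusting for the shrunken security parameter.
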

\noindent We describe our constructions and proof methods in Section~\ref{sec: constructions} and provide a complete proof in Appendix~\ref{app: PRU} and Appendix~\ref{app: no-ancilla}. The first result utilizes our LRFC ensemble [Fig.~\ref{fig:constructions}(a)] combined with our gluing theorem for strong random unitaries [Fig.~\ref{fig:constructions}(b)]. A circuit depth lower bound of $\Omega(\log n)$ can be easily proven by noting that if the depth of $U$ is sublinear in $\log n$, then an experiment that measures $U^\dagger X_1 U \ket{0^n}$ in the all $Z$ basis will result in a bitstring with almost all zeros, whereas a Haar-random unitary $U$ will result in a bitstring with almost equal number of zeros and ones. Hence, the achieved $d = \mathcal{O}(\log n)$ scaling is optimal. The second result combines the LRFC ensemble, our gluing theorem, and a new strategy for compiling random classical functions in quantum circuits and gluing them together by reusing system qubits on other local patches as ancilla qubits. Both results rely only on the subexponential hardness of standard learning with errors (LWE)~\cite{regev2009lattices}. All of our results immediately extend to quantum experiments involving queries to the controlled versions of $U, U^\dagger, U^*, U^T$, following the general reduction in~\cite{sheridan2009approximating,tang2025controlled}.

\section{Our constructions} \label{sec: constructions}

Having summarized our main results, we now introduce our random unitary ensembles.
We proceed in three parts.
In Section~\ref{sec: LRFC}, we introduce the Luby-Rackoff-Function-Clifford (LRFC) ensemble.
We prove that the LRFC ensemble forms a strong unitary design in $\mathcal{O}(\log k \cdot \log n)$ depth and a strong PRU in $\poly(\log n)$ depth.
In Section~\ref{sec: gluing}, we introduce a gluing construction for strong random unitaries, which allows us to further optimize each circuit depth to $\mathcal{O}(\log n + \log k \log\log k)$ and $\mathcal{O}(\log n)$.
In Section~\ref{sec: local random circuit}, we introduce a modified gluing construction inspired by toy-model fast scrambling circuits in black hole physics~\cite{hayden2007black,sekino2008fast}.
We use this to prove that all-to-all-connected local random circuits can form strong unitary $k$-designs in depth $\mathcal{O}(k \cdot \poly \log k \cdot \log n/\varepsilon +  \log n \cdot \log n/\varepsilon)$.

\subsection{The Luby-Rackoff-Function-Clifford (LRFC) ensemble}  \label{sec: LRFC}

The Luby-Rackoff-Function-Clifford (LRFC) ensemble is inspired by the Permutation-Function-Clifford (PFC) ensemble introduced in~\cite{metger2024simple}.
The key difference is that it replaces the random permutation in the PFC ensemble with a pair of random shuffle gates, which use only random functions.
This replacement leads to an exponential improvement in circuit depth, since known constructions of quantum-secure pseudorandom functions~\cite{zhandry2021PRF} are much more efficient than those of pseudorandom permutations~\cite{zhandry2016note}. 
The random shuffle gates are inspired by the Luby-Rackoff block cipher in classical cryptography~\cite{luby1988construct}.
Crucially, we show that the random shuffle gates mimic the action of a random permutation \emph{within the LRFC circuit}. 

We define the LRFC ensemble as follows. We consider the random unitary,
\begin{equation}
    U = D \cdot S_R \cdot F \cdot S_L \cdot C.
\end{equation}
The unitaries $C$ and $D$ are drawn from any strong unitary 2-design on $n$ qubits.
For example, they can each be a random Clifford unitary, which are exact unitary 2-designs.
The unitary $F$ is a random ternary phase gate, $F = \sum_{x \in \{0,1\}^n} \omega^{f(x)} \dyad{x}$, where $\omega \equiv e^{i \frac{2\pi}{3}}$ and $f: \{0,1\}^n \rightarrow \{0,1,2\}$ is a random function.
Finally, the unitaries $S_L$ and $S_R$ are random shuffle gates, which shuffle the bitstring of the left (or right) $n/2$ qubits conditional on the value of the right (or left) $n/2$ qubits.
That is, $S_L \ket{x_L,x_R} = \ket{x_L + h_1(x_R), x_R}$ where $x_L, x_R \in \{0,1\}^{n/2}$ are the left and right $n/2$ bits and $h_1: \{ 0,1\}^{n/2} \rightarrow \{ 0,1\}^{n/2}$ is a random function.
Similarly, $S_R \ket{x_L,x_R} = \ket{x_L, x_R + h_2(x_L)}$.

%
To implement the LRFC ensemble efficiently, we will replace each random function in the phase and shuffle gates with a less-random efficient approximation.
To construct strong unitary $k$-designs, we will replace each random function with an exact $2k$-wise independent function~\cite{wegman1981new}.
This replicates the first $2k$ moments of a random function, which guarantees that any quantum experiment making $k$ queries will proceed identically to if the function was random.
The factor of two accounts for the bra and ket of the quantum state.
Exact $2k$-wise independent functions can be implemented in quantum circuit depth $\mathcal{O}(k \cdot \log n)$ using $\tilde{\mathcal{O}}(n)$ ancilla qubits, or depth $\mathcal{O}(\log k \cdot \log n)$ using $\tilde{\mathcal{O}}(nk)$ ancilla qubits~\cite{cui2025unitary}.
To construct strong PRUs, we will replace each random function with a quantum-secure pseudorandom function (PRF)~\cite{zhandry2021PRF}.
This is indistinguishable from a random function in any bounded time quantum experiment.
Strong PRFs with security against any subexponential-time quantum adversary can be implemented in quantum circuit depth $\poly(\log n)$~\cite{zhandry2021PRF,schuster2024random}.

Our main result is that the LRFC circuit forms a strong unitary $k$-design (when each random function is replaced with a $2k$-wise independent function) and a strong PRU (when each random function is replaced with a quantum-secure pseudorandom function).
\begin{theorem}[The LRFC ensemble is a strong unitary design] \label{thm:LRFC-design}
    Let $f, h_1, h_2$ be $2k$-wise independent functions. 
    Then the LRFC ensemble is a strong $\varepsilon$-approximate unitary $k$-design with  $\varepsilon = \mathcal{O}(k^2 / 2^{n/6})$.
\end{theorem}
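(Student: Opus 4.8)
The plan is to prove the stronger statement that the LRFC ensemble matches the Haar ensemble on \emph{all} $k$-query experiments with queries in $\{\cdot,\dagger,T,*\}$ by reducing to the analogous statement for the PFC ensemble of~\cite{metger2024simple} (extended to the strong/mixed-query setting). The key observation is that $U = D \cdot S_R \cdot F \cdot S_L \cdot C$ differs from the PFC ensemble $U_{\PFC} = D' P F C$ only in that the random permutation $P$ is replaced by the two-round ``Luby-Rackoff'' shuffle $S_R S_L$ (together with an extra $2$-design layer absorbed into $D$). Since $C$ and $D$ are exact strong $2$-designs (random Cliffords), the standard analysis of PFC-type ensembles shows that the relevant moment operator only ``sees'' the function $F$ and the permutation-like map through a small number of evaluations of $f$ and of $S_L, S_R$ on at most $k$ distinct inputs per copy (the factor $2$ from bra/ket is why we need $2k$-wise independence). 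So it suffices to show: (i) the ternary phase function $f$, being $2k$-wise independent, is indistinguishable from a truly random function in any $k$-query experiment, which is immediate since at most $2k$ points of $f$ are ever queried; and (ii) the pair $(S_L,S_R)$ built from $2k$-wise independent $h_1,h_2$ acts, \emph{inside the LRFC circuit} (i.e.\ sandwiched by the $2$-designs and with an $F$ in the middle), indistinguishably from a uniformly random permutation on $\{0,1\}^n$, up to error $\mathcal{O}(k^2/2^{n/6})$.

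The heart of the argument is step (ii): a Luby-Rackoff security statement. Here I would follow the classical Luby-Rackoff paradigm adapted to the quantum/superposition-query setting. The map $\ket{x_L,x_R}\mapsto \ket{x_L+h_1(x_R),\,x_R}$ then $\ket{y_L,y_R}\mapsto\ket{y_L,\,y_R+h_2(y_L)}$ is a standard two-round Feistel network; two rounds is famously \emph{not} a secure pseudorandom permutation against chosen-plaintext-and-ciphertext attacks in general, but the point is that it need not be, because the flanking Clifford $2$-designs and the phase gate $F$ restrict the adversary's effective access. Concretely, one shows that with the $F$ layer in the middle, the only way the experiment can detect non-permutation-ness of $S_R S_L$ is through a ``collision'': two distinct inputs $x\neq x'$ to the Feistel network with $h_1(x_R)=h_1(x_R')$ on the same right-half, or analogous collisions in $h_2$ forced by the right-half values after the first round. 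A union bound over the at most $\binom{2k}{2}=\mathcal{O}(k^2)$ pairs of queried points, together with the fact that on distinct inputs the $2k$-wise independent $h_i$ behave like independent uniform values on $\{0,1\}^{n/2}$, gives collision probability $\mathcal{O}(k^2/2^{n/2})$. The slightly worse exponent $2^{n/6}$ in the theorem statement presumably comes from needing the right-halves themselves to be spread out (so that $h_1$ is queried on many distinct points) — here one uses that $C$ randomizes the input register, so $x_R$ is close to uniform, and a more careful accounting of the various bad events (collisions in $x_R$ before $h_1$, in $y_L$ before $h_2$, and in the $F$-input afterwards) yields the stated $n/6$ rather than $n/2$. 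I would set up the bad-event decomposition carefully, bound each by a birthday-type estimate using $2k$-wise independence, and conclude via the triangle inequality on trace distance, picking up one factor of $\varepsilon$ per query as in~\cite{schuster2024random,cui2025unitary}.

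The place I expect the real work is in making the ``$F$ in the middle kills the CCA attack'' intuition rigorous \emph{in the presence of transpose and conjugate queries}, not just $U$ and $U^\dagger$. Under $\circ_j = T$ or $*$, the adversary effectively gets to query the Feistel network with the roles of input and output registers partially swapped and complex-conjugated, so one must check that no combination of $\{\cdot,\dagger,T,*\}$ queries lets the adversary chain two Feistel evaluations together in a way that evades the $F$-phase barrier. I would handle this by passing to the standard ``path-recording'' or purified-oracle formalism for $\{U,U^\dagger,U^*,U^T\}$ (as in the strong-PRU analyses this paper builds on), in which each query to any of the four variants appends a new input-output pair to a transcript, and then argue that the joint distribution of transcripts under LRFC and under Haar agree unless a collision event (as above) occurs; transpose/conjugate queries only relabel which half of the pair is ``input'' vs ``output'' and conjugate the phases, neither of which affects the collision analysis because $\{0,1\}^{n/2}$-valued collisions are basis-independent and $\omega^{f(x)}$ vs $\overline{\omega^{f(x)}}$ is still uniform over cube roots of unity. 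Once the transcript distributions are shown to be $\mathcal{O}(k^2/2^{n/6})$-close in total variation, the bound on the trace distance of experiment outputs follows. I would also double-check the claimed $\varepsilon=\mathcal{O}(k^2/2^{n/6})$ is not secretly $\mathcal{O}(k^2/2^{n/6})$ only after optimizing how the $n$ qubits are split between the ``Feistel halves'' and the phase register — if the construction uses a fixed $n/2$--$n/2$ split then the $n/6$ likely reflects three separate birthday bounds at scale $2^{n/2}$, $2^{n/2}$, $2^{n}$ being combined, and I would present the accounting so this is transparent.
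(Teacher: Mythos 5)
Your high-level intuition matches the mechanism the paper actually uses: the only obstruction to the two-round shuffle behaving like a fresh random map is a ``collision'' in the right half of an input or the left half of an output, and the flanking $2$-designs $C,D$ are what keep the experiment (approximately) in the collision-free regime; the factor of two in the $2k$-wise independence is indeed for bra and ket. However, as written your proposal has a genuine gap at the crucial quantitative step. You reduce the problem to showing that the ``transcript distributions'' under LRFC and Haar are $\mathcal{O}(k^2/2^{n/6})$-close in total variation and then conclude a trace-distance bound on the experiment's output. For superposition queries to $U,U^\dagger,U^*,U^T$ this inference is not valid: the transcript is a quantum register, and closeness must be established at the level of the global purified states. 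The paper does exactly this with an extension of the path-recording framework to conjugate/transpose queries (conjugated oracles $\overline{V},\overline{W}$), orthonormal $\mathsf{lrf}$-relation states on the \emph{locally distinct} subspace, a $\Compress$ partial isometry identifying the projected $\lrfo'$ with the projected path-recording oracle $W'$ (Lemma~\ref{claim:relate-W-and-lrfo}), operator-norm bounds on inserting the locally-bijective projectors under the $\scC$-twirl (Lemma~\ref{lem:closeness-AWD-and-AWpD}), and a final sequential gentle-measurement step to remove the projectors from the true, unprojected oracle (Lemma~\ref{lem:closeness-LRFpD-and-LRFD}); only then is Theorem~\ref{theorem:haar-cho-strong} invoked to compare with Haar. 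Your modular route---first prove ``$S_RS_L$ is indistinguishable from a random permutation inside the circuit'' and then invoke a strong-query PFC theorem---is also not something you can quote: no such in-context (indifferentiability-style) statement for the two-round Feistel exists off the shelf, and proving it against $\{U,U^\dagger,U^*,U^T\}$ superposition queries would amount to redoing the direct comparison the paper performs.

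Your error accounting is also off in a way that matters for the stated constant. The birthday-type events you enumerate do contribute, but only at rate $\mathcal{O}(t^2/2^{n/2})$ inside square roots; the dominant losses in the paper come from comparing the twirled $W$ to the path-recording oracle ($\mathcal{O}(t^2/N^{1/8})$ terms in Lemma~\ref{lem:closeness-AWD-and-AWpD}) and from the gentle-measurement square root in Lemma~\ref{lem:closeness-LRFpD-and-LRFD}, which is why the appendix bound (Theorem~\ref{thm:LRFC}) reads $\mathcal{O}(t^2/N^{1/16})$ rather than anything obtained by ``combining three birthday bounds at scales $2^{n/2},2^{n/2},2^n$'' as you speculate. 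So the exponent in the theorem is an artifact of the path-recording/hybrid analysis, not of how the $n$ qubits are split between Feistel halves, and your sketch as it stands would not produce the claimed dependence. If you want to pursue your plan, the fix is to replace the TV-distance-of-transcripts step with the purified-state argument outlined above and to track the fractional-power losses explicitly.
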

\begin{theorem}[The LRFC ensemble is a strong PRU] \label{thm:LRFC-PRU}
    Let $f, h_1, h_2$ be subexponentially\footnote{A cryptographic primitive is defined to be sub-exponentially secure if, for a security parameter $n$, it is secure against attacks running in time $2^{O(n^\epsilon)}$, for some constant $\epsilon > 0$.} quantum-secure pseudorandom functions.
    Then the LRFC ensemble is a strong PRU with subexponential security.
\end{theorem}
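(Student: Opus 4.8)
The plan is to derive Theorem~\ref{thm:LRFC-PRU} from Theorem~\ref{thm:LRFC-design} by a four-hybrid argument that trades each pseudorandom function for a truly random one. Fix a $t(n)$-time quantum distinguisher $\mathcal{A}$ with $t(n) = 2^{O(n^\epsilon)}$ that adaptively issues up to $t(n)$ queries, each of the form $U$, $U^\dagger$, $U^*$, or $U^T$, and consider the hybrids $\mathcal{H}_0, \mathcal{H}_1, \mathcal{H}_2, \mathcal{H}_3$: $\mathcal{H}_0$ is the honest LRFC ensemble with $f, h_1, h_2$ all subexponentially quantum-secure PRFs; $\mathcal{H}_1$ replaces $f$ by a uniformly random function $\{0,1\}^n \to \{0,1,2\}$; $\mathcal{H}_2$ additionally replaces $h_1$ by a uniformly random function; $\mathcal{H}_3$ additionally replaces $h_2$. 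Finally compare $\mathcal{H}_3$ to the Haar ensemble. The claim follows from the triangle inequality once each of the four gaps is shown to be negligible in $n$.

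The gap between $\mathcal{H}_3$ and Haar is controlled information-theoretically: a uniformly random function is $2k$-wise independent for every $k$, so Theorem~\ref{thm:LRFC-design} says $\mathcal{H}_3$ is a strong $\varepsilon$-approximate unitary $k$-design with $\varepsilon = \mathcal{O}(k^2/2^{n/6})$ for all $k$. Instantiating this at $k = t(n)$ --- which, by the definition of a strong design, covers every adaptive sequence of at most $t(n)$ queries with arbitrary $\circ_j \in \{\cdot, \dagger, T, *\}$, after padding with identity maps if $\mathcal{A}$ makes fewer queries --- bounds $\mathcal{A}$'s advantage here by $\mathcal{O}(t(n)^2/2^{n/6}) = 2^{O(n^\epsilon) - n/6}$, which is negligible since $\epsilon < 1$.

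For the three PRF-swapping gaps I would give the standard reduction: a distinguisher with advantage $\delta$ between consecutive hybrids yields a quantum oracle algorithm that distinguishes the swapped function from a random function with advantage $\delta$, contradicting subexponential PRF security. Given coherent oracle access to the function $g$ in question, the reduction samples the other ensemble components itself --- the $2$-designs $C, D$ (say random Cliffords) and the remaining functions --- and answers each of $\mathcal{A}$'s queries by implementing the requested variant of $U = D S_R F S_L C$. This is where the ``strong'' requirement must be checked: $F^* = F^\dagger = \sum_x \omega^{-f(x)}\dyad{x}$ is computable from $f$ since $-f(x) \equiv 2 f(x) \pmod{3}$; each shuffle gate is a real permutation matrix, so $S_L^* = S_L$ and $S_L^T = S_L^{-1} = \sum_x \dyad{x_L - h_1(x_R), x_R}{x_L, x_R}$, both computable from $h_1$ (and symmetrically for $S_R, h_2$); and transposes of Cliffords are Cliffords. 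Hence all of $U, U^\dagger, U^*, U^T$ are simulatable from coherent classical access to $f, h_1, h_2$. One subtlety: once $f$ (and later $h_1$) has been randomized, the reduction must simulate a truly random function, which it cannot do by honest sampling within subexponential time; this is handled by the standard fact (e.g.~\cite{zhandry2021PRF}) that a uniformly random function is perfectly indistinguishable from a $2q$-wise independent function against any $q$-query quantum algorithm, so the reduction substitutes an efficiently sampleable $2t(n)$-wise independent function and still runs in time $2^{O(n^\epsilon)}$, as required to invoke subexponential PRF security.

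The main obstacle is precisely this last paragraph: ensuring the reduction is simultaneously faithful --- correctly realizing the inverse, transpose, and conjugate oracles, the point at which the strong-security definition genuinely differs from the forward-only one --- and subexponential-time, which forces the $2q$-wise-independent substitution for the already-randomized functions and a verification that every primitive ($C, D$, the phase gate, the shuffle gates, and their variants) has the claimed complexity. The rest is bookkeeping: summing $3\,\mathrm{negl}(n) + \mathcal{O}(t(n)^2/2^{n/6})$ gives a negligible total advantage, establishing strong subexponential security.
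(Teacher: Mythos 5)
Your proposal is correct and follows essentially the same route as the paper: swap the pseudorandom functions for truly random ones by computational indistinguishability, then invoke the information-theoretic closeness of the truly-random LRFC ensemble to Haar (the paper cites Theorem~\ref{thm:LRFC} directly with an exact 2-design for $C,D$, which is exactly what your instantiation of Theorem~\ref{thm:LRFC-design} at $k=t(n)$ amounts to). Your explicit hybrid/reduction details — simulating $U,U^\dagger,U^*,U^T$ from quantum oracle access to the functions and substituting $2t(n)$-wise independent functions for the already-randomized ones — are a careful spelling-out of steps the paper compresses into one sentence, not a different argument.
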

\noindent The LRFC ensemble can be compiled in $\mathcal{O}(k \cdot \log n)$ circuit depth using $\tilde{\mathcal{O}}(n)$ ancilla qubits, or $\mathcal{O}(\log k \cdot \log n)$ circuit depth using $\tilde{\mathcal{O}}(nk)$ ancilla qubits, when each function is $2k$-wise independent~\cite{cui2025unitary}.
It can be compiled in $\poly(\log n)$ circuit depth when each function is pseudorandom~\cite{schuster2024random}.

\subsection{Gluing strong random unitaries} \label{sec: gluing}

We can further improve upon the circuit depths of the LRFC ensemble by establishing a fundamental property of strong random unitaries.
Namely, we prove that two strong random unitaries on overlapping subsystems ``glue'' together, whenever they are surrounded by larger unitary 2-designs [Fig.~\ref{fig:design-proof}(a)]. 
Intuitively, the larger 2-designs scramble the input to the strong random unitaries, which guarantees that the overlap of the input state with counter-examples to the strong gluing construction is exceedingly small.
This allows us to reduce the circuit depth of strong unitary $k$-designs and strong PRUs to match the circuit depth of unitary 2-designs, i.e.~$\mathcal{O}(\log n)$~\cite{cleve2015near}.

We consider the random unitary ensemble depicted in Fig.~\ref{fig:constructions}(b).
Inspired by~\cite{schuster2024random}, we partition the $n$ qubits into $n/\xi$ patches of $\xi$ qubits each, arranged in a 1D line.
We then form a two-layer  circuit composed of small strong random unitaries, where the small unitaries act on two neighboring patches each and are arranged in a brickwork fashion between the two layers.
Finally, we ``scramble'' the two-layer circuit by appending it with an $n$-qubit strong unitary 2-design on either side.

Our main result is that the scrambled two-layer  ensemble forms a strong unitary $k$-design (when each small random unitary is drawn from a strong unitary $k$-design) and a strong PRU (when each small random unitary is drawn from a strong PRU).
\begin{theorem}[The scrambled two-layer ensemble is a strong unitary design] \label{thm:two-layer-design}
    Let each small random unitary be a strong $\frac{\varepsilon}{n}$-approximate unitary $k$-design on $2\xi$ qubits.
    Then the scrambled two-layer ensemble is a strong $\varepsilon$-approximate unitary $k$-design when $\xi \geq \frac{16}{3}\log_2(nk^2/\varepsilon) + \mathcal{O}(1)$. 
\end{theorem}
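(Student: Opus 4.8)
The plan is to prove the ``strong gluing'' statement by reducing to the case where the outer $n$-qubit unitaries are exactly Haar-random, then exploiting the fact that a Haar-random unitary maps any input into a generic ``scrambled'' state on which the two-layer brickwork circuit behaves like a single strong $k$-design. First I would fix an arbitrary strong experiment: an interleaving $W_{k+1}\, U^{\circ_k} W_k \cdots U^{\circ_1} W_1 \ket{0}$, where each $\circ_j\in\{\cdot,\dagger,T,*\}$, with $U = V_{\mathrm{out}}^{R}\,(\text{two-layer brickwork})\,V_{\mathrm{out}}^{L}$ and $V_{\mathrm{out}}^{L},V_{\mathrm{out}}^{R}$ drawn from a strong $n$-qubit 2-design. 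The first move is to replace each $V_{\mathrm{out}}^{L},V_{\mathrm{out}}^{R}$ by genuinely Haar-random unitaries on $n$ qubits. Since the experiment touches each outer unitary at most $k$ times but the relevant moments involve only the second moment operator — this is the subtle point, and it's why the hypothesis is a 2-design rather than a $k$-design — I'd argue (as in the telescoping/hybrid arguments of \cite{ma2024construct,metger2024simple}) that the outer $V$'s can be analyzed via their second moments alone, because the inner block together with the $W_j$'s can be collected into a fixed channel whose Haar-average over the outer $V$'s only ever sees $\Phi_H^{(2)}$. This incurs no error for an exact 2-design (e.g.\ random Clifford) and error $\varepsilon_{\mathrm{2-design}}$ otherwise.

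Next, conditioned on the outer unitaries being Haar-random, I would analyze the two-layer brickwork of small strong $\frac{\varepsilon}{n}$-approximate $k$-designs on $2\xi$ qubits. The key geometric lemma is: for the 1D brickwork on $n/\xi$ patches, any two qubits are connected within the two layers (layer one pairs patches $(1,2),(3,4),\dots$; layer two pairs $(2,3),(4,5),\dots$), so the two-layer circuit has full light cone. I'd then show that when the input to the brickwork is a Haar-random state (the image of $\ket{0^n}$-type states under the outer Haar unitary, and more generally the states produced by the adaptive $W_j$'s after the outer scrambling), the brickwork's action is $\frac{\varepsilon}{n}\cdot(\text{number of small unitaries})=\mathcal{O}(\varepsilon)$-close in the appropriate strong-design metric to a single strong $k$-design on all $n$ qubits. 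Concretely I would: (i) use the strong $k$-design property of each small unitary to replace them one at a time by Haar-random unitaries on their $2\xi$ qubits (a hybrid over the $\mathcal{O}(n/\xi)$ small unitaries, each step costing $\frac{\varepsilon}{n}$, total $\mathcal{O}(\varepsilon)$); (ii) observe that a two-layer brickwork of Haar-random long-enough-overlap unitaries, sandwiched by Haar unitaries on all $n$ qubits, is itself Haar-distributed on $U(2^n)$ up to error exponential in the overlap — this is where $\xi \geq \frac{16}{3}\log_2(nk^2/\varepsilon)$ enters, controlling the probability that the relevant partial isometries fail to be ``generic'' (collision / non-injectivity events on the $2^{\xi}$-dimensional patch registers, union-bounded over the $\le k$ queries and the $\le 2k$ bra/ket copies, giving a $k^2/2^{\Theta(\xi)}$ bound).

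The main obstacle I expect is step (ii) — quantifying precisely how close the Haar-sandwiched two-layer brickwork is to global Haar, uniformly over all $4^k$ choices of $\circ_j$ and all adaptive intermediate operations $W_j$. The transpose and conjugate queries are the delicate part: unlike $U$ and $U^\dagger$, a query to $U^T$ or $U^*$ transposes/conjugates the outer Haar unitaries as well, and one must verify that $V_{\mathrm{out}}^T, V_{\mathrm{out}}^*$ are still Haar (true, by invariance of Haar measure under transpose and conjugation) and that the brickwork-with-transposed-blocks still enjoys the same genericity estimates. I would handle this by working with the purified / Choi picture used for strong designs (tracking the $2\xi$-qubit ``EPR'' registers on both the bra and ket side), where $\circ_j$ merely permutes which register is acted on, so the collision bounds are symmetric under the four operations. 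Once the Haar-sandwiched brickwork is shown to be $\mathcal{O}(k^2/2^{n/6})$-close to global Haar on $n$ qubits (the $2^{n/6}$ coming from $2^{\xi \cdot 3/16 \cdot \dots}$ with the stated threshold on $\xi$), the triangle inequality over the three hybrids (outer 2-design $\to$ Haar; small $k$-designs $\to$ Haar; brickwork $\to$ global Haar) yields the claimed strong $\varepsilon$-approximate $k$-design bound, completing the proof.
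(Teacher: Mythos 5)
There is a genuine gap, and it sits at the very first step of your plan. You propose to replace the outer strong 2-designs $V^{L}_{\mathrm{out}},V^{R}_{\mathrm{out}}$ by truly Haar-random $n$-qubit unitaries on the grounds that ``the Haar-average over the outer $V$'s only ever sees $\Phi_H^{(2)}$.'' This is not justified and is false as stated: in a strong experiment with $k$ queries to $U,U^\dagger,U^*,U^T$, the output state is a degree-$(k,k)$ polynomial in the entries of each outer unitary, so the adversary effectively probes their $k$-th (mixed) moments, and a 2-design is not interchangeable with Haar under $k>2$ queries. The fact that second moments ultimately suffice is precisely the nontrivial content of the paper's gluing lemma (Lemma~\ref{lemma: strong gluing}), whose path-recording proof invokes the 2-design twirl only to bound the spectral norm of specific ``collision'' projectors (inputs of a fresh query matching, on a sub-block, entries already stored in the relation registers), which is where the $\mathcal{O}(k^2/2^{(3/16)\xi})$ and $\mathcal{O}(k^{5/8}\varepsilon_2^{1/8})$ error terms come from; Theorem~\ref{thm:two-layer-design} then follows by iterating that lemma $n/\xi$ times, one brick at a time. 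Your proposal contains no such mechanism, and your step (ii) — showing the Haar-sandwiched brickwork is $k^2/2^{\Theta(\xi)}$-close to global Haar — is exactly the hard part restated, not proved. There is also an internal inconsistency that exposes the problem: if your first hybrid were valid, you would be done immediately with zero additional error, since by left/right invariance of the Haar measure $V^{R}_{\mathrm{out}}\,(\text{brickwork})\,V^{L}_{\mathrm{out}}$ is \emph{exactly} Haar whenever $V^{L}_{\mathrm{out}}$ is Haar and independent of the rest; the fact that the theorem carries an error $\Theta(k^2 2^{-(3/16)\xi})$ tells you the replacement cannot be made, and that the $\xi$-dependence must be extracted from the 2-design hypothesis itself.

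Two further corrections. First, your geometric claim that ``any two qubits are connected within the two layers'' of the 1D brickwork is wrong: two brickwork layers only couple adjacent patches, and distant patches remain causally disconnected inside the brickwork; the global connectivity in this construction is supplied entirely by the outer $n$-qubit 2-designs, which is another reason they cannot be treated as an afterthought. Second, your step (i) — replacing each small strong $\frac{\varepsilon}{n}$-approximate $k$-design by a small Haar unitary at cost $\varepsilon/n$ per block — is sound in spirit and corresponds to the $\varepsilon_{\mathsf{ab}}+\varepsilon_{\mathsf{bc}}$ terms accumulated across the $n/\xi$ gluing steps in the paper, but on its own it does not bridge the gap: after that replacement you still need to show that a brickwork of Haar patches sandwiched between (only) 2-designs emulates a global Haar unitary under adaptive $U,U^\dagger,U^*,U^T$ queries, and that is the content of the gluing lemma you would need to prove.
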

\begin{theorem}[The scrambled two-layer ensemble is a strong PRU] \label{thm:two-layer-PRU}
    Let each small random unitary be a strong PRU with $\poly n$-time security  on $2\xi$ qubits.
    Then the scrambled two-layer ensemble is a strong PRU with $\poly n$-time security when $\xi = \omega(\log n)$. 
\end{theorem}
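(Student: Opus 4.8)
The plan is to reduce the PRU statement to the $k$-design statement (Theorem~\ref{thm:two-layer-design}) via the standard hybrid argument, using the fact that a $\poly n$-time adversary makes at most $\poly n$ queries. Fix a $t(n)$-time adversary $\mathcal{A}$ with $t(n) = \poly n$, making at most $q = \poly n$ queries to its oracle, each query being one of $U, U^\dagger, U^*, U^T$. I would pass through the following chain of hybrids, bounding the distinguishing advantage at each step.

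\emph{Step 1 (Replace small PRUs by small Haar/designs).} Each of the $2n/\xi$ small unitaries in the two layers is a strong PRU with $\poly n$-time security on $2\xi$ qubits. Since $\xi = \omega(\log n)$, we have $2\xi = \omega(\log n)$, so a strong $\poly n$-time adversary on the full system — in particular, any distinguisher built from $\mathcal{A}$ — runs in time $\poly n = \poly(2^{2\xi})^{o(1)}$... more carefully: a PRU with $\poly n$-time security on $m := 2\xi$ qubits is secure against time-$\poly n$ adversaries, and $\poly n = \poly(2^{m})$ need not hold, but the definition of "strong PRU with $\poly n$-time security on $2\xi$ qubits" as used here should be read as: security against adversaries whose running time is a fixed polynomial in the ambient parameter $n$. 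With that reading, I replace each small strong PRU, one at a time, by a truly Haar-random unitary on its $2\xi$ qubits; each swap costs at most a negligible (in $n$) amount by the strong-PRU security of that block, and there are only $2n/\xi = \poly n$ blocks, so the total cost is $\negl(n)$. Call the resulting ensemble $\mathcal{E}_{\mathrm{Haar}}$: the scrambled two-layer ensemble with each small unitary Haar-random on $2\xi$ qubits, still sandwiched by the two $n$-qubit strong unitary $2$-designs.

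\emph{Step 2 (Truncate the moment via a strong $q$-design comparison).} A Haar-random unitary on $2\xi$ qubits is in particular a strong $\frac{\varepsilon}{n}$-approximate unitary $q$-design on $2\xi$ qubits for any $q$, with $\varepsilon$ we may take as small as we like (it is exact). Hence $\mathcal{E}_{\mathrm{Haar}}$ satisfies the hypotheses of Theorem~\ref{thm:two-layer-design} with $k = q = \poly n$, provided $\xi \geq \frac{16}{3}\log_2(nq^2/\varepsilon) + \mathcal{O}(1)$; since $q = \poly n$ and $\xi = \omega(\log n)$, this inequality holds for all sufficiently large $n$ (we only need $\xi$ to exceed a $\mathcal{O}(\log n)$ threshold, which $\omega(\log n)$ does). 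Therefore Theorem~\ref{thm:two-layer-design} gives that $\mathcal{E}_{\mathrm{Haar}}$ is a strong $\varepsilon$-approximate unitary $q$-design on $n$ qubits, i.e.~it is $\varepsilon$-indistinguishable from the $n$-qubit Haar ensemble by \emph{any} (unbounded) experiment making $q$ queries to $U,U^\dagger,U^*,U^T$ — in particular by $\mathcal{A}$. Taking $\varepsilon = \negl(n)$ (e.g.~$\varepsilon = 2^{-n}$, which only forces $\xi \geq \Omega(n) / \ldots$)... here I must be careful: demanding $\varepsilon$ exponentially small would force $\xi = \Omega(n)$, contradicting the intended regime. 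The correct choice is $\varepsilon = 1/\poly n$ tuned so that $\frac{16}{3}\log_2(nq^2/\varepsilon)$ stays $\mathcal{O}(\log n)$, which is still $o(\xi)$; this yields indistinguishability advantage $1/\poly n$ at this step. To get a truly negligible final bound, I instead note that $\xi = \omega(\log n)$ lets me take $\varepsilon = 2^{-\xi/c}$ for the constant $c$ in Theorem~\ref{thm:two-layer-design}, which is sub-polynomially small yet consistent with the depth constraint; more simply, since $\mathcal{A}$ is a \emph{fixed} polynomial-time adversary, $1/\poly n$ advantage against it already refutes it, and a standard amplification/union over the (countably many) polynomial adversaries is not needed because security is quantified adversary-by-adversary.

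\emph{Step 3 (Conclude).} Chaining Steps 1 and 2: for every $\poly n$-time strong adversary $\mathcal{A}$,
\begin{equation}
  \Big| \Pr[\mathcal{A}^{\mathcal{O}[\mathcal{E}]} = 1] - \Pr[\mathcal{A}^{\mathcal{O}[\Haar_n]} = 1] \Big|
  \;\leq\; \underbrace{\negl(n)}_{\text{Step 1}} \;+\; \underbrace{\varepsilon}_{\text{Step 2}} \;=\; \negl(n),
\end{equation}
where $\mathcal{O}[\cdot]$ denotes the oracle giving black-box access to $U,U^\dagger,U^*,U^T$ for $U$ drawn from the indicated ensemble. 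Hence the scrambled two-layer ensemble is a strong PRU with $\poly n$-time security, as claimed.

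\emph{Main obstacle.} The delicate point is the interplay of parameters in Steps 1 and 2: I need the small-block security parameter, the number of blocks, the design order $q$, and the design error $\varepsilon$ to be simultaneously compatible with $\xi = \omega(\log n)$. The hybrid over blocks is harmless (only $\poly n$ of them), but invoking Theorem~\ref{thm:two-layer-design} at moment order $q = \poly n$ requires $\xi$ to beat the $\frac{16}{3}\log_2(nq^2/\varepsilon)$ threshold, and simultaneously requires each small Haar unitary to be treated as a strong $\frac{\varepsilon}{n}$-approximate $q$-design — which it is, exactly, so the only real constraint is the $\xi$-threshold, and $\omega(\log n)$ clears any fixed-polynomial $q$ and any $\varepsilon = 1/\poly n$. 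The other subtlety is ensuring the reduction in Step 1 is a legitimate \emph{strong} reduction: the full-system distinguisher must simulate queries to $U^\dagger, U^*, U^T$ of the composed circuit while isolating a single small block, which works because transpose/conjugate/inverse of a product distribute over the factors (with order reversed), and the surrounding $2$-designs and other blocks are sampled honestly by the reduction and queried in whichever variant is needed; I would spell this bookkeeping out but expect no genuine difficulty there.
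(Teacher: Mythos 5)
Your proposal is correct and follows essentially the same route as the paper's proof: a block-by-block hybrid replacing each small strong PRU with a Haar-random unitary on $2\xi$ qubits (costing $\negl(n)$ over the $\poly n$ blocks), followed by an application of Theorem~\ref{thm:two-layer-design} to the resulting all-Haar two-layer ensemble with $k=\poly n$ queries, where $\xi=\omega(\log n)$ clears the $\mathcal{O}(\log(nk^2/\varepsilon))$ threshold for every $k,1/\varepsilon=\poly n$. Your brief wobble about negligible versus $1/\poly n$ advantage is resolved exactly as you indicate (the design guarantee holds simultaneously for every inverse-polynomial, or even sub-polynomial, $\varepsilon$ such as $2^{-\xi/c}$ with $c>16/3$, so the fixed adversary's advantage is smaller than every inverse polynomial), which is also how the paper's statement ``yields a design for any $k,1/\varepsilon=\poly n$'' should be read.
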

\noindent The theorems immediately yield strong unitary designs and strong PRUs in the circuit depths  in Theorems~\ref{thm:strong-design-depth} and~\ref{thm:strong-PRU-depth}.
This follows by adding the circuit depth $\mathcal{O}(\log n)$ of an exact unitary 2-design~\cite{cleve2015near} to the circuit depths of the LRFC ensemble on $2\xi$ qubits\footnote{For strong unitary $k$-designs, we set $\xi = \mathcal{O}(\log nk/\varepsilon)$ (Theorem~\ref{thm:two-layer-design}) which yields a circuit $\mathcal{O}(\log k \log \log nk/\varepsilon)$ for the $2\xi$-qubit LRFC  unitary $k$-design. We then add this to the 2-design circuit depth and note that $\mathcal{O}(\log n + \log k \log \log nk/\varepsilon) = \mathcal{O}(\log n + \log k \log \log k/\varepsilon)$. For strong PRUs, we set $\xi = \omega(\log n)$ (Theorem~\ref{thm:two-layer-PRU}), which yields circuit depth $\poly(\log \log n)$ for the $2\xi$-qubit LRFC PRU. This is  sub-leading to the  2-design circuit depth  $\mathcal{O}(\log n)$.}.

As mentioned above, we analyze the scrambled two-layer ensemble by proving that one can glue small strong random unitaries together one brick at a time~\cite{schuster2024random}.
Without the larger unitary 2-designs, this gluing does not hold.
Applying the following lemma $n/\xi$ times yields Theorems~\ref{thm:two-layer-design} and~\ref{thm:two-layer-PRU}.

\begin{lemma}[Gluing strong random unitaries] \label{lemma: strong gluing}
    Let $\mathsf a, \mathsf b, \mathsf c$ be three subsystems of size at least $\xi$.
    Consider the unitary ensemble $U_1 = D_{\mathsf{abc}} U_{\mathsf{bc}} U_{\mathsf{ab}}   C_{\mathsf{abc}}$, where $C_{\mathsf{abc}}, D_{\mathsf{abc}}$ are strong $\varepsilon_2$-approximate unitary 2-designs and $U_{\mathsf{ab}},U_{\mathsf{bc}}$ are strong $\varepsilon_{\mathsf{ab}}$- and $\varepsilon_{\mathsf{bc}}$-approximate unitary $k$-designs on their respective subsystems.
    Then $U_1$ forms a strong $\varepsilon$-approximate unitary $k$-design with measurable error  $\varepsilon = \varepsilon_{\mathsf{ab}} + \varepsilon_{\mathsf{bc}} + \mathcal{O}(k^2/2^{(3/16)\xi}) + \mathcal{O}(k^{5/8}\varepsilon_2^{1/8})$.
    %
\end{lemma}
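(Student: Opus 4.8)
The plan is to combine a path-recording / purified-oracle analysis of strong random unitaries with the scrambling provided by the outer 2-designs $C_{\mathsf{abc}}, D_{\mathsf{abc}}$. First I would pass to an idealized world in which $U_{\mathsf{ab}}$ and $U_{\mathsf{bc}}$ are exactly Haar-random (on their subsystems) and $C_{\mathsf{abc}}, D_{\mathsf{abc}}$ are exactly Haar; the stated errors $\varepsilon_{\mathsf{ab}}, \varepsilon_{\mathsf{bc}}, \mathcal{O}(k^{5/8}\varepsilon_2^{1/8})$ will ultimately be the cost of undoing these replacements via the triangle inequality and the design definitions, so the core task is to show that in the all-Haar world the ensemble $D_{\mathsf{abc}} U_{\mathsf{bc}} U_{\mathsf{ab}} C_{\mathsf{abc}}$ is $\mathcal{O}(k^2/2^{(3/16)\xi})$-close to a single $n$-qubit Haar unitary against any $k$-query strong adversary. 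Here I would use the standard fact (from the PFC-type analyses of~\cite{metger2024simple,ma2024construct} and the strong-design framework of this paper) that a Haar unitary sandwiched appropriately can be modeled by a path-recording oracle, so that the only way the adversary sees non-Haar behavior is through a ``collision'' or ``bad query'' event — a query whose input register (as seen in the $\mathsf{abc}$ basis after $C_{\mathsf{abc}}$, or after $D_{\mathsf{abc}}^\dagger$ on the inverse side) lands in the small exceptional set where $U_{\mathsf{bc}} U_{\mathsf{ab}}$ fails to look like a genuine $n$-qubit Haar unitary.

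The key steps, in order, would be: (1) replace $C_{\mathsf{abc}}, D_{\mathsf{abc}}$ by Haar 2-designs, paying $\mathcal{O}(k^{5/8}\varepsilon_2^{1/8})$ — this is where the unusual fractional power arises, presumably from an operator-norm-to-measurable-error conversion or a Hölder-type inequality applied to the second-moment deviation, amplified over the $k$ queries; (2) replace $U_{\mathsf{ab}}, U_{\mathsf{bc}}$ by strong $k$-design ideals, paying $\varepsilon_{\mathsf{ab}} + \varepsilon_{\mathsf{bc}}$ directly from the strong-design definition applied inside the larger experiment (the outer unitaries and the other block are absorbed into the distinguisher's operations $W_j$); (3) in the all-ideal world, set up the purified/path-recording description of the composite oracle and of a genuine $n$-qubit Haar oracle, and identify the ``good'' event under which the two coincide as quantum states; (4) bound the amplitude of the ``bad'' event by $\mathcal{O}(k^2/2^{(3/16)\xi})$ using that each of the (at most $k$, squared for pairs) query inputs is, after scrambling by the 2-design, nearly uniformly distributed over a space of dimension $\gtrsim 2^{3\xi/2}$ — the exponent $3/16 = \tfrac{1}{2}\cdot\tfrac{3}{8}$ and the factor $3\xi/2$ suggest the relevant ``collision surface'' has codimension something like $3\xi/8$ per the Luby–Rackoff / shuffle-gate structure inherited from the LRFC analysis of Theorem~\ref{thm:LRFC-design}; (5) collect the three error terms.

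The main obstacle I expect is step (3)–(4): precisely characterizing the exceptional set on which $U_{\mathsf{bc}} U_{\mathsf{ab}}$ (two Haar unitaries on overlapping halves) fails to emulate an $n$-qubit Haar unitary \emph{against strong queries}, and showing the outer 2-design sufficiently randomizes the adversary's effective query inputs — including on the $U^\dagger, U^T, U^*$ branches, where conjugation/transposition swaps the roles of $C_{\mathsf{abc}}$ and $D_{\mathsf{abc}}$ and of the left/right registers, so one must verify the scrambling argument is symmetric under all four query types. A secondary subtlety is that ``gluing'' must be argued to compose: the lemma is stated for a single brick, but the intended application iterates it $n/\xi$ times, so the error in the good-event analysis must be expressed in a form (measurable error, i.e. trace distance of the $k$-query output state) that chains additively without the outer-2-design cost recurring at every step — which is exactly why $C_{\mathsf{abc}}, D_{\mathsf{abc}}$ are required to be only 2-designs and why their error enters with the mild fractional power rather than multiplicatively.
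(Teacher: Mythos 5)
Your overall strategy (path-recording analysis plus scrambling by the outer 2-designs, with the query types $U,U^\dagger,U^*,U^T$ treated symmetrically) is in the same family as the paper's proof, but two of your steps have genuine gaps. First, your step (1) — replacing the strong $\varepsilon_2$-approximate 2-designs $C_{\mathsf{abc}},D_{\mathsf{abc}}$ by Haar unitaries up front at cost $\mathcal{O}(k^{5/8}\varepsilon_2^{1/8})$ — is not a legitimate move: in the glued experiment $C_{\mathsf{abc}}$ and $D_{\mathsf{abc}}$ are each invoked $k$ times (once per query to $U_1$), and the 2-design guarantee only controls experiments with at most two queries to them, so no triangle inequality based on the design definition justifies the swap. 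The paper never makes this replacement; it carries the approximate designs through the whole path-recording argument (via the purified registers $\mathsf{C},\mathsf{D}$) and uses the 2-design property only inside spectral-norm bounds on twirled ``equal/not-equal'' projectors (the analogue of Lemma~\ref{lem:twirling-strongPRU} and Eqs.~(\ref{eq: clifford twirl eq neq 1 approx})--(\ref{eq: clifford twirl eq neq 2 approx})), where $\varepsilon_2$ enters additively as $t\varepsilon_2$; the exponent $1/8$ then emerges from two nested square roots (a norm-to-trace-distance conversion in the induction and a sequential gentle-measurement step), not from a H\"older-type conversion of a wholesale replacement.

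Second, your steps (3)--(4) leave the actual core of the lemma unproved. The paper's central technical content is an explicit identification of the ``good'' subspace and of the map between the two worlds: one defines locally bijective relation states on the register pair $(\mathsf{L}_{\mathsf{abc}},\mathsf{R}_{\mathsf{abc}})$, a partial isometry $\mathsf{Compress}$ that expands each such relation state into a paired superposition of relation states on $(\mathsf{L}_{\mathsf{ab}},\mathsf{R}_{\mathsf{ab}},\mathsf{L}_{\mathsf{bc}},\mathsf{R}_{\mathsf{bc}})$ (summing over the intermediate $\mathsf{b}$-bitstrings $z$), and then shows block-by-block that the projected path-recording oracle $W_{\mathsf{abc}}'$ equals $\mathsf{Compress}\cdot(W_{\mathsf{bc}}W_{\mathsf{ab}})'\cdot\mathsf{Compress}^\dagger$ up to normalization ratios of order $(\ell+r)/N_{\mathsf{ab}}+(\ell+r)/N_{\mathsf{bc}}$; the role of the outer 2-design is to justify inserting the locally-distinct projectors at each query (bounding the amplitude on collisions within a single subsystem by $\sim 1/N_\alpha+\varepsilon_2$ per query). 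Your sketch asserts that a ``small exceptional set'' exists and guesses the bound $\mathcal{O}(k^2/2^{(3/16)\xi})$ from a Luby--Rackoff collision-surface heuristic, but the glued bricks are arbitrary strong designs with no shuffle-gate structure; the exponent $3/16$ actually arises as $N_{\mathsf{abc}}^{1/16}=2^{3\xi/16}$, i.e., the $1/8$-power losses in the path-recording closeness bounds (Theorem~\ref{theorem:haar-cho-strong}, Lemma~\ref{lem:closeness-AWD-and-PhiVt}) combined with one further square root from the gentle-measurement step. Without the compression isometry and the induction that controls the projector insertions, the claim that the two oracles ``coincide on the good event'' is exactly the statement to be proven, so as written the proposal does not close the argument.
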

\noindent The strong gluing lemma also allows us to straightforwardly extend Theorems~\ref{thm:two-layer-design} and~\ref{thm:two-layer-PRU} to allow approximate strong unitary 2-designs instead of exact unitary 2-designs.
This extension will be helpful for our construction of ancilla-free designs and PRUs. 

We prove Lemma~\ref{lemma: strong gluing} in Appendix~\ref{sec: proof strong gluing} using the path-recording framework from~\cite{ma2024construct}.
We find that the path-recording framework enables the most effective analyses of strong random unitaries.
This contrasts with standard approximate unitary designs and PRUs, where other succinct approaches based on the permutation group are possible~\cite{schuster2024random,cui2025unitary}.

\subsection{Ancilla-free pseudorandom unitaries}
In this section, we give the first constructions of ancilla-free PRUs and strong PRUs. Our main crytographic building block will be pseudorandom functions \cite{goldreich1986construct} computable in the complexity class ``logspace-uniform $\mathsf{TC}^1$''. A function is computable in logspace-uniform $\mathsf{TC}^1$ if (1) it is computable by a family of $\mathcal{O}(\log n)$-depth circuits with large fan-in threshold gates and (2) this family of circuits is output by a logspace Turing machine on the input $1^n$. Crucially, it is known that such PRFs exist under the LWE assumption \cite{banerjee2012pseudorandom}, and that this construction is post-quantum secure \cite{zhandry2021PRF}. 

To construct ancilla-free PRUs, we prove (under standard cryptographic assumptions) the existence of an intermediate object: an ``ancilla-independent'' strong PRU. We say that an $(n+a)$-qubit circuit implements an $n$-qubit unitary $U$ in an ancilla-independent way if the circuit act as $U \otimes \mathbbm{1}_{2^a}$. Note that the standard notion of a PRU only requires that the ancilla is undisturbed when it is initialized to the all $0$ state; in comparison, an ancilla-independent implementation requires that the ancilla is undisturbed no matter how it is instantiated. 

\begin{theorem}\label{thm:ancilla-independent-PRU-intro}
    Assuming that there exist polynomially secure (respectively, sub-exponentially secure) post-quantum PRFs computable in logspace-uniform $\mathsf{TC}^1$, there exist polynomially secure (respectively, sub-exponentially secure) ancilla-independent strong PRUs. 
\end{theorem}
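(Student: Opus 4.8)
The plan is to build an ancilla-independent strong PRU by taking the strong PRU constructions already established in this paper — the LRFC ensemble (Theorem~\ref{thm:LRFC-PRU}) glued via Theorem~\ref{thm:two-layer-PRU} — and carefully re-examining the role of the ancilla qubits in each component, replacing every use of ancilla with a computation that returns the ancilla to an \emph{arbitrary} starting value rather than only $\ket{0}$. The key observation is that a reversible classical circuit computing a bijection on its full register (system $+$ workspace) automatically acts as $\sigma \otimes \mathbbm 1$ on the workspace if the workspace bits are never entangled with the system at the end; the standard trick for achieving this is Bennett-style uncomputation, which requires only that the function being computed is itself efficiently invertible or that one keeps a copy and runs the circuit backwards. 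So the first step is to note that the phase gate $F$ and shuffle gates $S_L, S_R$ in LRFC are all \emph{diagonal or permutation} unitaries built from a PRF $f$ (resp.\ $h_1, h_2$); if the PRF is evaluated by a reversible classical circuit that can be run forwards-then-backwards to clean its scratch space for \emph{any} scratch initialization, then the resulting implementation of $F$, $S_L$, $S_R$ is ancilla-independent. This is exactly where the hypothesis that the PRF lies in logspace-uniform $\mathsf{TC}^1$ enters: such a PRF has a shallow, highly structured circuit that one can compile into a reversible ancilla-independent form in low depth.

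Second, I would handle the Clifford 2-designs $C, D$ and the outer $n$-qubit 2-designs of the gluing construction. Random Clifford unitaries can be sampled and implemented directly on the $n$ system qubits with no ancilla at all (e.g.\ via the canonical-form circuit), so these are trivially ancilla-independent; the only subtlety is that in the \emph{glued} construction one wants a low-depth 2-design, and here one invokes the blocked fast-scrambling circuit of Fig.~\ref{fig:design-proof}(b) built from $2\xi$-qubit 2-designs, each of which is again ancilla-free. Third, and this is the crux, the $2\xi$-qubit small strong PRUs in the two-layer construction must themselves be implemented ancilla-independently; but recursively they are LRFC ensembles on $2\xi$ qubits, so the same reversible-PRF compilation applies, now on a register of $\poly(\log n)$ qubits, giving $\poly(\log\log n)$ depth, which is subleading. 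Putting the pieces together via Theorem~\ref{thm:two-layer-PRU} yields an $(n+a)$-qubit circuit acting as $U \otimes \mathbbm 1_{2^a}$ where $U$ is drawn from a strong PRU, and since the ancilla is genuinely decoupled the pseudorandomness guarantee on the system is unaffected by how the adversary prepares or measures the ancilla.

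The main obstacle I expect is the ancilla-cleaning step for the PRF evaluation: a generic reversible circuit for $f$ leaves scratch bits holding intermediate values that depend on the scratch's initial contents in a way that is \emph{not} uncomputed by naive Bennett reversal when the scratch does not start at $\ket{0}$. One needs the circuit to compute a genuine bijection of the combined (input, scratch) space whose action on the scratch factor is the identity — equivalently, the ``garbage'' must be a reversible function of the \emph{input alone}, independent of scratch, so that running a mirrored copy uncomputes it regardless of initialization. The logspace-uniform $\mathsf{TC}^1$ structure is what makes this achievable in low depth: one computes each threshold gate's output into a fresh-looking bit using the \emph{input} wires only (possible because the circuit is layered and logspace-uniform, so its wiring is explicitly computable), stores the $\mathcal O(\log n)$-depth layered values, applies the output transformation, and then reverses. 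Formalizing that this entire compilation can be done while keeping the total depth $\poly(\log n)$ — and, for the sub-exponential case, while preserving sub-exponential security under the reduction — is the technical heart of the argument; the security reduction itself is routine (any adversary against the ancilla-independent PRU that does something nontrivial with the ancilla can be simulated by an adversary against the underlying strong PRU that simply ignores its own local ancilla register), so I would spend most of the effort on the reversible-compilation lemma.
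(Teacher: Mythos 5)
Your high-level architecture matches the paper's: instantiate the LRFC ensemble with ancilla-independent implementations of the PRF-based gates, note that the 2-designs are ancilla-free, and observe that the security reduction is immediate. But the crux of the theorem is exactly the ``reversible-compilation lemma'' you defer to the end, and the sketch you give for it does not work. Your plan is to ``compute each threshold gate's output into a fresh-looking bit using the input wires only \ldots stores the $\mathcal{O}(\log n)$-depth layered values, applies the output transformation, and then reverses.'' When the ancilla may be \emph{arbitrarily} initialized there are no fresh bits: writing a gate value into a dirty scratch bit means XORing it onto unknown junk $w$, so every subsequent layer reads $w \oplus g$ rather than $g$, and Bennett-style mirroring only restores the junk, it does not make the forward computation correct. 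Logspace-uniformity of the wiring does not rescue this; correctness with a full memory is precisely the nontrivial content of catalytic computation. The paper's route is to import the catalytic-computation result (for logspace-uniform $\mathsf{TC}^1$ one gets a reversible circuit that is correct given a $\poly(n)$ \emph{untrusted} ancilla plus only $\ell = \mathcal{O}(\log n)$ \emph{clean} ancillas), and then to contribute two further reductions you do not have: first an ancilla-preserving-to-ancilla-controlled step, which uses a commutator construction $W^\dagger O^\dagger W O$ with a counter register to turn ``correct only when $a = 0^\ell$'' into ``applies $\chi_{a=0^\ell}\cdot f(x)$ and does nothing otherwise''; and second an ancilla-controlled-to-ancilla-independent step, which cycles an increment operator through all $2^\ell = \poly(n)$ values of the small trusted register so that $f(x)$ is added exactly once regardless of $a$. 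Without these (or some substitute), your construction is only guaranteed correct when the scratch starts at $\ket{0}$, i.e.\ it is a standard PRU implementation, not an ancilla-independent one.

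A second, smaller gap: for the ternary phase gate $F$ you cannot simply say that a clean reversible evaluation of $f$ yields an ancilla-independent phase oracle. The usual phase-kickback requires an ancilla prepared in a Fourier eigenstate, which is forbidden here; the paper needs an extra identity (conjugating the bit-flip oracle by $F_q$ on the output register and then taking a commutator with the increment operator, $R_{\mathsf Y} O^\dagger R_{\mathsf Y}^\dagger O$) to produce $\ket{x} \mapsto \omega_3^{f(x)}\ket{x}$ while acting as the identity on an arbitrarily initialized output/ancilla register. Finally, note that the theorem you are proving only requires polynomial-size ancilla-independent circuits, so the gluing construction and the $\poly(\log n)$-depth bookkeeping you include belong to the later ancilla-free, low-depth theorem rather than to this one; including them is harmless but not needed, and it does not compensate for the missing compilation argument.
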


We show this by instantiating the LRFC ensemble with ancilla-independent implementations of the underlying pseudorandom functions. Towards this goal, our main technical result on ancilla-free computation is as follows. 

\begin{theorem}\label{thm:ancilla-free-implementation-intro}
    Let $f: \{0,1\}^n \rightarrow \mathbb Z_q^m$ be any logspace-uniform $\mathsf{TC}^1$-computable function, where $q = O(1)$. Then, there is a $\mathsf{poly}(n, m)$-size reversible circuit implementing the permutation
    \begin{equation} (x,y,a) \mapsto (x, y + f(x) \!\!\!\! \mod q, a), 
    \end{equation}
    where $a$ denotes an arbitrary setting of the ancilla register. 
\end{theorem}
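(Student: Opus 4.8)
The plan is to reduce the problem to implementing a logspace-uniform $\mathsf{TC}^1$ circuit reversibly on $\poly(n)$-size with arbitrary ancilla. The key structural fact I would exploit is that a $\mathsf{TC}^1$ circuit consists of $O(\log n)$ layers of unbounded fan-in threshold (majority) gates, and logspace-uniformity means the wiring of each layer can itself be computed by a logspace (hence $\poly(n)$-time) Turing machine. A threshold gate on $k$ inputs is just "is the Hamming weight $\geq \theta$", which has a standard small reversible/arithmetic circuit (compute the sum of $k$ bits into $O(\log k)$ bits, compare to $\theta$). So each of the $O(\log n)$ layers becomes a $\poly(n)$-size reversible circuit, and composing them gives total size $\poly(n)$ — but composition of reversible circuits requires care about garbage.

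**The main obstacle** is exactly this garbage/ancilla management: a naive reversible simulation of a depth-$d$ Boolean circuit produces intermediate wire values as "junk" on scratch registers, and we are told the ancilla $a$ is set to an \emph{arbitrary} value, not $\ket{0^a}$, so we cannot assume clean scratch space and cannot leave dirty scratch behind (the circuit must act as $(x,y,a)\mapsto (x, y+f(x), a)$ exactly, with $a$ returned untouched). The standard fix is the Bennett-style compute–copy–uncompute trick: from clean scratch, compute all wire values of the Boolean circuit layer by layer, XOR the final output $f(x)$ into $y$, then run the entire computation in reverse to restore the scratch to its initial (clean) state. This makes the circuit act as identity on the scratch register regardless of how $y$ changed. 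To handle the fact that \emph{some} of the provided ancilla $a$ may be the only workspace available, I would first argue we only need $\poly(n,m)$ many ancilla wires total; if the circuit is given more, the extra ones are simply never touched (the permutation is defined to leave arbitrary $a$ fixed, which a circuit that ignores those wires trivially satisfies), and if the relevant scratch wires arrive dirty we have a problem — so I would instead carve the needed workspace out of $a$ by first running a sub-routine that does \emph{not} assume $a$ is clean. Concretely: allocate the scratch as a designated sub-block of $a$, but since that block holds arbitrary junk $a'$, run compute/uncompute \emph{relative to} $a'$, i.e. every "set wire $w_i$" becomes "XOR the computed value of $w_i$ into scratch bit $i$"; after the copy-out step, the reverse pass XORs the same values back, returning scratch bit $i$ to $a'_i$. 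XOR is self-inverse, so the dirty-ancilla case works identically to the clean case.

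**Then the detailed steps I would carry out are:** (1) Invoke the logspace-uniformity: run the logspace machine that, on input $1^n$, outputs a description of the $\mathsf{TC}^1$ circuit $\mathcal C_n$ computing $f$; this description has $\poly(n)$ size and is computable in $\poly(n)$ time, so the compiler producing our reversible circuit runs in $\poly(n,m)$ time. (2) For each threshold gate with inputs indexed by a set $S$ and threshold $\theta$, write down the standard $O(|S|\cdot\polylog|S|)$-size reversible subcircuit computing $\mathbb{1}[\sum_{i\in S} b_i \geq \theta]$ into a fresh scratch bit (using a reversible adder tree; $|S|\leq \poly(n)$ so this is $\poly(n)$). (3) Lay out scratch bits for every wire of $\mathcal C_n$ (there are $\poly(n)$ wires), taken from a designated sub-block of the ancilla register $a$; the remaining ancilla wires are left completely untouched. (4) Forward pass: for layers $\ell = 1,\dots,d$, for each gate in layer $\ell$, XOR its output into its scratch wire, using the inputs found on earlier scratch wires (or the input register $x$ for $\ell=1$). (5) Copy-out: for $j=1,\dots,m$, XOR the scratch wire holding the $j$-th output bit of $f(x)$ into $y_j$ — handling the $\mathbb Z_q$ addition for general constant $q$ by a small $\poly(\log q) = O(1)$-size reversible mod-$q$ adder rather than a plain XOR. (6) Reverse pass: run steps (4) backwards, for $\ell = d, \dots, 1$, which restores every scratch wire to its original (arbitrary) value because each gate's output-XOR is its own inverse and the inputs are untouched at that point. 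The net effect is $(x,y,a)\mapsto(x, y + f(x)\bmod q, a)$ with total gate count $O(d) \cdot \poly(n) + \poly(m,\log q) = \poly(n,m)$, as claimed. I would remark that reversible circuits over $\{0,1\}$ embed into $\poly$-size Boolean-reversible-gate (e.g. Toffoli) circuits, so "reversible circuit" can be read in whichever convention the paper's later sections need, in particular as a permutation realizable by a classical reversible circuit and hence liftable to a quantum circuit of the same size.
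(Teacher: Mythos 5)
There is a genuine gap, and it sits exactly at the point you flag and then wave away: storing intermediate wire values in \emph{dirty} scratch by XOR-ing does not work for a threshold circuit. If scratch bit $i$ initially holds junk $a'_i$, then after your ``XOR the computed value of $w_i$ into scratch bit $i$'' step it holds $a'_i \oplus w_i$, not $w_i$. Every gate in the next layer is a (nonlinear) threshold gate, so it reads corrupted inputs and computes the wrong value; the corruption propagates, and the bit you finally add into $y$ is some function $g(x,a')$ that agrees with $f(x)$ only when the scratch happened to be all zeros. Your reverse pass does restore the scratch to $a'$ (that part is fine), but the map you implement is $(x,y,a)\mapsto(x,y+g(x,a),a)$ with $g(x,0,\cdot)=f(x)$ --- which is precisely what the paper calls an \emph{ancilla-preserving} implementation, not the \emph{ancilla-independent} one the theorem demands (correctness for an \emph{arbitrary} setting of the ancilla). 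Self-inverseness of XOR guarantees the ancilla is returned, but it does not make the computed value independent of the ancilla; there is no way to ``subtract off'' $a'_i$ since the circuit is fixed and holds no clean copy of it.

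The paper's proof supplies exactly the machinery you are missing. First, it does not attempt to run Bennett's trick in dirty space; it invokes the catalytic-computation results (\cite{STOC:BCKLS14-catalytic,TQC:BFMSSST25-catalytic}) to get a $\poly(n)$-size reversible circuit for any logspace-uniform $\mathsf{TC}^1$ function that uses $\poly(n)$ \emph{catalytic} (arbitrary, restored) workspace plus only $\ell=O(\log n)$ \emph{clean} ancilla bits --- this step is a nontrivial theorem, not a layer-by-layer XOR simulation. Second, it converts this ancilla-preserving circuit into an \emph{ancilla-controlled} one, computing $y + \chi_{a=0^\ell}\cdot f(x)$, via the commutator-style gadget $W^\dagger O^\dagger W O$ (Lemma on preserving-to-controlled). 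Third, it removes the dependence on the small clean register by conjugating with an increment operator and repeating $2^{\ell}=\poly(n)$ times, so that the sum $\sum_{i}\chi_{(a+i)=0^\ell}\cdot f(x)=f(x)$ for every $a$ (Lemma on controlled-to-independent). Your proposal contains none of these three ingredients, and without something playing their role the construction does not meet the ``arbitrary ancilla'' requirement of the statement.
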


To prove \cref{thm:ancilla-free-implementation-intro}, we leverage and build upon recent work on catalytic quantum computation~\cite{TQC:BFMSSST25-catalytic}. Based on this work, it is known that logspace-uniform $\mathsf{TC}^1$ functions can be implemented in a \emph{somewhat} ancilla-independent way. Namely, they require $\poly(n)$ ancilla qubits but only $O(\log n)$ clean ancilla qubits: if the clean ancillae are all initialized to $\ket{0}$, the circuit properly computes the function and acts as identity on the remaining ancillae. In~\cref{app: no-ancilla}, we show how to remove the need for these these last $\mathcal{O}(\log n)$ clean ancillae by exploiting a number of reversible circuit identities. 

\paragraph{From ancilla-independent PRUs to ancilla-free PRUs.} Finally, to compile ancilla-independent strong PRUs into ancilla-free strong PRUs, we instantiate the scrambled two-layer ensemble with ancilla-independent strong PRUs on $n^{\epsilon}$ qubits each, where $\epsilon$ is a constant greater than zero.
Crucially, the strong PRU blocks can \emph{reuse} registers that serve as the ancilla registers of other PRU blocks, due to their ancilla-independence. This gives ancilla-free strong PRUs on $n$ qubits of $\poly(n)$ depth. To compress the depth down to $\poly(\log n)$, we instantiate the two-layer ensemble again, with
\begin{itemize}
    \item $\omega(\log^2 n)$-depth ancilla-free instantiations of the unitary $2$-designs; see Lemma~\ref{lemma: strong 2-designs} of the following section and set $\varepsilon^{-1} = \omega(\poly n)$, and
    \item ancilla-free PRUs on $\poly(\log n)$ qubits each, which have depth $\poly(\log n)$. Note that for these to be secure against all $\poly(n)$-time adversaries, we need to assume that the underlying cryptography is sub-exponentially secure.
\end{itemize}
This yields $\poly(\log n)$-depth strong ancilla-free PRUs.
\begin{theorem}\label{thm:ancilla-free-PRU-intro}
    Assuming post-quantum PRFs computable in $\mathsf{TC}^1$, there exist ancilla-free strong PRUs. Moreover, assuming sub-exponentially secure post-quantum PRFs computable in $\mathsf{TC}^1$, there exist ancilla-free strong PRUs computable in depth $\poly(\log n)$ with all-to-all circuits. 
\end{theorem}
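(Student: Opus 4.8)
The plan is to chain together three previously established ingredients: (i) the existence of ancilla-independent strong PRUs on $m$-qubit subsystems (Theorem~\ref{thm:ancilla-independent-PRU-intro}), (ii) the strong gluing lemma for the scrambled two-layer ensemble (Theorem~\ref{thm:two-layer-PRU} / Lemma~\ref{lemma: strong gluing}), and (iii) an ancilla-free realization of approximate strong unitary $2$-designs (Lemma~\ref{lemma: strong 2-designs}). First I would prove the weaker $\poly(n)$-depth statement. Instantiate the scrambled two-layer ensemble of Fig.~\ref{fig:constructions}(b) with block size $2\xi = n^\epsilon$, so each small unitary is an ancilla-independent strong PRU on $n^\epsilon$ qubits. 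Because these blocks act as $V \otimes \mathbbm{1}$ regardless of how their ancillae are instantiated, the ancilla register for one block can be taken to be a disjoint patch of the $n$ system qubits belonging to a different block in the same layer — the brickwork structure guarantees that in each of the two layers the blocks occupy disjoint qubit sets, so one can schedule the ancilla reuse patch-by-patch. This yields an ancilla-free implementation; the outer $n$-qubit strong unitary $2$-designs $C,D$ are instantiated with random Cliffords, which are ancilla-free and exact. Security against $\poly(n)$-time adversaries then follows from Theorem~\ref{thm:two-layer-PRU} with $\xi = \omega(\log n)$ (here $\xi = n^\epsilon/2$ easily satisfies this), provided the underlying PRFs are polynomially secure — since each block only needs to resist $\poly(n)$-time attacks and has parameter $n^\epsilon$, polynomial security of the $n^\epsilon$-parameter PRF suffices.

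To compress the depth to $\poly(\log n)$, I would apply the two-layer construction a second time, recursively, now at block size $2\xi = \poly(\log n)$. The inner blocks are ancilla-free strong PRUs on $\poly(\log n)$ qubits obtained from the previous paragraph (with $m = \poly(\log n)$), which therefore have depth $\poly(\log n)$; crucially, for these to resist $\poly(n)$-time adversaries while having security parameter only $\poly(\log n)$, the underlying PRFs must be \emph{sub-exponentially} secure, since $\poly(n) = 2^{O((\log n)^c)}$ must fall below the $2^{O(m^{\epsilon'})}$ attack bound. The outer $n$-qubit strong unitary $2$-designs can no longer be random Cliffords if we want depth $\poly(\log n)$ — instead I invoke Lemma~\ref{lemma: strong 2-designs} to get ancilla-free approximate strong $2$-designs in depth $\omega(\log^2 n) = \poly(\log n)$, choosing the $2$-design error $\varepsilon_2^{-1} = \omega(\poly n)$ so that the $\mathcal{O}(k^{5/8}\varepsilon_2^{1/8})$ slack term in Lemma~\ref{lemma: strong gluing} is negligible against any inverse polynomial (here $k$ is replaced by the adversary's query count, which is $\poly(n)$). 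Adding depths: $\poly(\log n)$ for the inner PRU blocks plus $\poly(\log n)$ for the outer approximate $2$-designs gives total depth $\poly(\log n)$, and all-to-all connectivity is used both inside the $\log n$-depth scrambling $2$-design circuits and to route the brickwork layers.

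The main obstacle I anticipate is bookkeeping the security reduction cleanly under the recursive composition: one must verify that Theorem~\ref{thm:two-layer-PRU} (and the gluing Lemma~\ref{lemma: strong gluing} underlying it) composes without loss when the "$2$-design" slots are only \emph{approximate} strong $2$-designs and when the inner blocks are themselves two-layer constructions, i.e.\ that the error terms $\varepsilon_{\mathsf{ab}} + \varepsilon_{\mathsf{bc}} + \mathcal{O}(k^2 / 2^{(3/16)\xi}) + \mathcal{O}(k^{5/8}\varepsilon_2^{1/8})$ remain negligible after being summed over the $n/\xi$ gluing steps and over both levels of recursion. Since this is a PRU statement rather than a design statement, $k$ is not fixed but is the adversary's (polynomial) query bound, so each "negligible" must be verified to beat every inverse polynomial; this is where sub-exponential security of the cryptographic primitive and the freedom to take $\varepsilon_2$ super-polynomially small are both essential. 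A secondary, more routine obstacle is making the ancilla-reuse scheduling precise — showing that there is a consistent assignment of which block's system qubits serve as which other block's ancilla register, in both brickwork layers simultaneously, without circular dependencies; the ancilla-independence guarantee of Theorem~\ref{thm:ancilla-independent-PRU-intro} (the circuit acts as $U \otimes \mathbbm{1}$ for \emph{arbitrary} ancilla state, not just $\ket{0}$) is exactly what makes this schedulable, since the "ancilla" qubits are simultaneously carrying live data for another block.
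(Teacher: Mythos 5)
Your proposal is correct and follows essentially the same route as the paper: instantiate the scrambled two-layer ensemble with ancilla-independent strong PRU blocks on $n^{\epsilon}$ qubits that reuse other blocks' system qubits as ancillae (applied sequentially, giving the $\poly(n)$-depth ancilla-free strong PRU), and then recurse once more with $\poly(\log n)$-qubit ancilla-free PRU blocks, ancilla-free approximate strong $2$-designs from Lemma~\ref{lemma: strong 2-designs} with $\varepsilon_2^{-1}=\omega(\poly n)$, and sub-exponential security of the underlying PRFs to cover the shrunken security parameter. The composition/error-bookkeeping concern you raise is handled in the paper exactly as you anticipate, via the $\mathcal{O}(k^{5/8}\varepsilon_2^{1/8})$ term in Lemma~\ref{lemma: strong gluing} and the remark extending Theorem~\ref{thm:two-layer-PRU} to approximate $2$-designs.
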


Since logspace-uniform $\mathsf{TC}^1$-computable PRFs are known under the LWE assumption \cite{banerjee2012pseudorandom}, we obtain instantiations of our results under LWE.

\begin{corollary}\label{cor:LWE-ancilla-free-PRUs-intro}
    Assuming the post-quantum hardness of LWE there exist ancilla-free PRUs. Assuming the sub-exponential post-quantum hardness of LWE, there exist ancilla-free strong PRUs computable in depth $\poly(\log n)$ with all-to-all circuits. 
\end{corollary}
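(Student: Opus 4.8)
The plan is to derive Corollary~\ref{cor:LWE-ancilla-free-PRUs-intro} as a direct instantiation of Theorem~\ref{thm:ancilla-free-PRU-intro}, using the known fact that the learning-with-errors assumption yields post-quantum pseudorandom functions of very low parallel complexity. All of the substantive work has already been carried out upstream (Lemma~\ref{lemma: strong gluing}, Theorems~\ref{thm:ancilla-independent-PRU-intro}--\ref{thm:ancilla-free-PRU-intro}, and the catalytic-computation machinery of Appendix~\ref{app: no-ancilla}), so the corollary is essentially a bookkeeping step at the cryptographic interface.

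\emph{Step 1: the LWE-based low-depth PRF.} Banerjee, Peikert, and Rosen~\cite{banerjee2012pseudorandom} construct pseudorandom functions whose evaluation map is computable by a logspace-uniform family of bounded-depth threshold circuits and which are pseudorandom under an appropriate parameterization of LWE; since $\mathsf{TC}^0 \subseteq \mathsf{TC}^1$, this construction is in particular logspace-uniform $\mathsf{TC}^1$-computable. Post-quantum security follows because the security reduction to LWE is a black-box polynomial-time classical reduction, hence applies verbatim against quantum adversaries~\cite{zhandry2021PRF}, and the hardness of LWE against quantum algorithms is exactly what is being assumed. Moreover the reduction is essentially security-preserving: if LWE is hard against $2^{n^{\delta}}$-time quantum adversaries for some $\delta>0$, then after a standard rescaling of the security parameter the PRF is secure against $2^{n^{\delta'}}$-time quantum adversaries for some $\delta'>0$.

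\emph{Step 2: plug into Theorem~\ref{thm:ancilla-free-PRU-intro}.} Under the post-quantum hardness of LWE, Step 1 supplies a polynomially secure post-quantum PRF in logspace-uniform $\mathsf{TC}^1$, so the first part of Theorem~\ref{thm:ancilla-free-PRU-intro} yields ancilla-free strong PRUs; these are in particular ancilla-free PRUs, giving the first sentence of the corollary. Under the sub-exponential post-quantum hardness of LWE, Step 1 supplies a sub-exponentially secure such PRF, so the second part of Theorem~\ref{thm:ancilla-free-PRU-intro} yields ancilla-free strong PRUs of depth $\poly(\log n)$ on all-to-all circuits, giving the second sentence.

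\emph{Main obstacle.} There is no deep obstacle: the only genuine point of care is verifying the interface in Step 1 --- namely that the BPR-style PRF is not merely low-depth but \emph{logspace-uniform} low-depth, so that it feeds into the ancilla-free compilation of Theorem~\ref{thm:ancilla-free-implementation-intro}, and that the LWE-to-PRF reduction transfers sub-exponential hardness with only a polynomial loss in running time and advantage. Both are standard in the lattice-cryptography literature, and checking them is routine, but they are precisely the hypotheses on which the clean LWE statement of the corollary rests.
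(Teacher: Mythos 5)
Your proposal is correct and follows the paper's own route: the paper also derives the corollary by instantiating Theorem~\ref{thm:ancilla-free-PRU-intro} with the LWE-based logspace-uniform $\mathsf{TC}^1$ PRF of~\cite{banerjee2012pseudorandom}, whose post-quantum security is inherited via~\cite{zhandry2021PRF}. Your extra remarks on uniformity and sub-exponential security transfer are just a more explicit check of the same interface the paper invokes in one line.
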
 

\subsection{Strong unitary designs from local random circuits} \label{sec: local random circuit}

We now present our final construction of strong random unitaries, in all-to-all-interacting local random circuits with a specific architecture.
This demonstrates that the fast formation of strong random unitaries is possible even in highly unstructured ensembles, complementing our highly structured implementations thus far.
This generality is not trivial; for example, standard unitary designs can be implemented in circuit depth $\mathcal{O}(\log \log n)$ in highly structured unitary ensembles, but require depth $\Omega(\log n)$ in local random circuits on any architecture~\cite{schuster2024random,fefferman2024anti,cui2025unitary}.
%
%

Our random circuit construction builds upon the scrambled two-layer ensemble introduced in the previous section [Fig.~\ref{fig:constructions}(c)].
We modify the ensemble in two ways to enable a random circuit realization.
First, we replace both of the $n$-qubit unitary 2-designs with blocked fast scrambling circuits, composed of small unitary 2-designs acting on $2\xi$ qubits each.
%
%
The small unitary 2-designs are arranged such that the $i$-th patch of qubits is coupled to the $(i+2^{d-1})$-patch of qubits at the $d$-th circuit layer.
This guarantees that the light-cone of every qubit doubles at every circuit layer, so that every light-cone encompasses all $n$ qubits after $\log_2(n/\xi)$ layers.
Second, we replace all of the small random unitaries in the ensemble with 1D local random circuits~\cite{brandao2016local}.
We specify the depths of these circuits below.
The circuit is highly unstructured, in the sense that each two-qubit gate is drawn independently at random from the Haar measure on $U(4)$.

To prove that the first modification is valid, we show that the blocked fast scrambling circuit is a strong unitary 2-design.
\begin{lemma}[The blocked fast scrambling circuit is a strong unitary 2-design] \label{lemma: strong 2-designs}
    Let each small random unitary be a strong $\frac{\varepsilon}{n}$-approximate unitary 2-design with depth $d$.
    Then the blocked fast scrambling circuit forms a strong $\varepsilon$-approximate unitary 2-design with depth $d \log_2(n/\xi)$ for any $\xi \geq \log_2( 5 n / \varepsilon)$. %
\end{lemma}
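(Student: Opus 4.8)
The plan is to build the blocked fast scrambling circuit up one layer at a time, applying the strong gluing lemma (Lemma~\ref{lemma: strong gluing}) repeatedly to show that each new layer preserves the strong unitary 2-design property while doubling every light-cone. Recall the architecture: the $n$ qubits are grouped into $n/\xi$ patches of $\xi$ qubits, and at layer $d$ a small strong unitary 2-design acts on patches $i$ and $i+2^{d-1}$ (indices mod $n/\xi$). After $L \coloneqq \log_2(n/\xi)$ layers, every pair of patches has been coupled along some path, so the light-cone of each qubit covers all $n$ qubits. The key structural observation is that layer $d$, acting on patches coupled at distance $2^{d-1}$, together with layers $1,\dots,d-1$ acting on scales $1,\dots,2^{d-2}$, fits the template $U_1 = D_{\mathsf{abc}}\, U_{\mathsf{bc}}\, U_{\mathsf{ab}}\, C_{\mathsf{abc}}$ of Lemma~\ref{lemma: strong gluing} once we identify $\mathsf{abc}$ with an appropriate block of $\Theta(2^{d-1}\xi)$ qubits: the circuit on layers $1,\dots,d-1$ restricted to the ``left half'' and ``right half'' of this block plays the role of the strong designs $C_{\mathsf{abc}}$ surrounding the small 2-design on the overlap patch.

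First I would set up an induction on the number of layers $d$, with inductive hypothesis that the depth-$d$ truncated circuit restricted to any block of $2^{d-1}$ consecutive patches forms a strong $\varepsilon_d$-approximate unitary 2-design on those $2^{d-1}\xi$ qubits. The base case $d=1$ is immediate: layer 1 consists of small strong $\tfrac{\varepsilon}{n}$-approximate 2-designs on $2\xi$ qubits, so $\varepsilon_1 = \tfrac{\varepsilon}{n}$. For the inductive step, I would view the depth-$(d+1)$ circuit on $2^{d}$ patches as a composition of (i) two independent copies of the depth-$d$ circuit on the left and right $2^{d-1}$ patches — these are strong $\varepsilon_d$-approximate 2-designs by hypothesis — followed by (ii) the layer-$(d{+}1)$ small 2-designs coupling patch $i$ to patch $i+2^{d-1}$. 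Each such coupling gate, together with the two halves it straddles, is exactly the gluing configuration with $\mathsf{a}, \mathsf{b}, \mathsf{c}$ chosen so that $\mathsf{b}$ is the coupled patch(es), $\mathsf{a}\mathsf{b}$ and $\mathsf{b}\mathsf{c}$ are the two halves, and the small 2-design on $\mathsf{b}$ plays the role of $U_{\mathsf{ab}}$ or $U_{\mathsf{bc}}$ for $k=2$; the depth-$d$ circuits on the halves serve as the surrounding strong 2-designs $C_{\mathsf{abc}}, D_{\mathsf{abc}}$. Applying Lemma~\ref{lemma: strong gluing} with $k=2$, $\varepsilon_{\mathsf{ab}} = \varepsilon_{\mathsf{bc}} = \tfrac{\varepsilon}{n}$ (the error of the layer-$(d{+}1)$ small designs, since one of $U_{\mathsf{ab}}, U_{\mathsf{bc}}$ is the small design and the other is ``ideal'' or absorbed), and $\varepsilon_2 = \varepsilon_d$ gives $\varepsilon_{d+1} = O(\varepsilon/n) + \varepsilon_d + O(2^{-(3/16)\xi}) + O(\varepsilon_d^{1/8})$. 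The condition $\xi \geq \log_2(5n/\varepsilon)$ makes the $2^{-(3/16)\xi}$ term at most $O((\varepsilon/n)^{3/16})$, comfortably subleading; and unrolling the recursion over $L = \log_2(n/\xi) \leq \log_2 n$ layers, one checks that as long as each $\varepsilon_d$ stays below a small constant the $\varepsilon_d^{1/8}$ term does not blow up and the accumulated error over all layers and all patches is $O(\varepsilon)$, giving a strong $\varepsilon$-approximate 2-design on all $n$ qubits.

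The main obstacle I anticipate is the bookkeeping of how the error compounds through the $\varepsilon_d^{1/8}$ feedback term in Lemma~\ref{lemma: strong gluing}: a naive iteration of $\varepsilon_{d+1} \approx \varepsilon_d + \varepsilon_d^{1/8}$ grows, so one must be careful that the gluing is always applied with the \emph{small} fresh 2-design (error $\varepsilon/n$) in the role of the approximate $k$-design and the large already-built circuit in the role of the 2-design whose error enters only as $\varepsilon_2^{1/8}$ — or, alternatively, apply a cleaner single-shot argument that glues all $2^{d-1}$ coupling gates of layer $d{+}1$ in parallel against one pair of exact (or near-exact) 2-designs so that the $1/8$-power term is incurred only once per layer rather than compounding multiplicatively. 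A secondary technical point is verifying the geometric claim that the layer structure really does realize every light-cone doubling and that at each stage the three-subsystem decomposition with $|\mathsf{a}|,|\mathsf{b}|,|\mathsf{c}| \geq \xi$ is available; this is a routine but necessary check on the patch-index arithmetic modulo $n/\xi$. Once the error recursion is controlled — most cleanly by the parallel-gluing variant — the lemma follows with depth $d \cdot L = d\log_2(n/\xi)$ as claimed.
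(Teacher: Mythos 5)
There is a genuine gap: your inductive step misapplies Lemma~\ref{lemma: strong gluing}. That lemma requires the two overlapping $k$-designs $U_{\mathsf{ab}},U_{\mathsf{bc}}$ to be sandwiched by strong 2-designs $C_{\mathsf{abc}},D_{\mathsf{abc}}$ acting on the \emph{entire} glued system $\mathsf{abc}$, and the paper stresses that the gluing fails without these global designs. In your configuration no such objects exist: the depth-$d$ circuits on the left and right halves act only on their own halves, and a tensor product of two independent 2-designs on disjoint halves is not a 2-design on the union (it cannot mix Pauli operators or create correlations across the cut), so it cannot play the role of $C_{\mathsf{abc}}$ or $D_{\mathsf{abc}}$. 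Moreover the blocked fast scrambling circuit has nothing acting globally before the first layer or after the last, whereas the gluing template $D_{\mathsf{abc}}U_{\mathsf{bc}}U_{\mathsf{ab}}C_{\mathsf{abc}}$ needs a global design on each side; the whole purpose of Lemma~\ref{lemma: strong 2-designs} is to \emph{manufacture} the global 2-design that your argument already assumes. Your fallback of gluing a whole layer ``against one pair of exact (or near-exact) 2-designs'' assumes exactly the missing ingredient.

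Even setting the structural mismatch aside, the error recursion you write down diverges. The term $\mathcal{O}(k^{5/8}\varepsilon_2^{1/8})$ is incurred with $\varepsilon_2=\varepsilon_d$, so $\varepsilon_{d+1}\gtrsim \varepsilon_d^{1/8}$; iterating from $\varepsilon_1=\varepsilon/n$ over $\log_2(n/\xi)$ layers gives $\varepsilon_d\approx(\varepsilon/n)^{8^{-(d-1)}}\to\Theta(1)$, not $\mathcal{O}(\varepsilon)$, and swapping which unitary is ``the fresh small design'' does not help because it is precisely the already-built large circuit whose error enters through the $1/8$-power. The paper avoids gluing altogether for this lemma: for the eight query classes involving one of $U,U^T$ and one of $U^*,U^\dagger$ it proves a direct operator-spreading criterion (Proposition~\ref{lemma: 2-design TVD}), bounding the total-variation distance between the circuit's Pauli-transfer distribution and the Haar one by showing every Pauli operator spreads to all patches except with probability $\mathcal{O}(m/4^\xi)$, with the small designs' errors entering only \emph{linearly} (as $2m\varepsilon/n$); the forward-only classes are handled by the standard (non-strong) gluing result of prior work. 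If you want to keep an inductive flavor you would need an argument of that type — one whose per-layer error is additive rather than passing through an eighth root — rather than Lemma~\ref{lemma: strong gluing}.
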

\noindent We then prove that each small strong approximate unitary design can be replaced with a 1D local random circuit.
\begin{lemma}[1D random circuits are strong unitary $k$-designs] \label{lemma: strong design 1D RUC}
    1D random circuits on $n$ qubits form strong $\varepsilon$-approximate unitary $k$-designs in depth $d = \Omega \big( \! \log(k)^7 ( nk + \log(1/\varepsilon) ) \big)$ for any $\varepsilon \geq 2k^2/2^n$.
\end{lemma}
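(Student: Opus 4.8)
\textbf{Proof proposal for Lemma~\ref{lemma: strong design 1D RUC}.}

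The plan is to combine the spectral-gap machinery for 1D local random circuits~\cite{brandao2016local,haferkamp2022random,chen2024incompressibility} with the observation --- alluded to in Fig.~\ref{fig:design-proof}(c) --- that a lower bound on the spectral gap of the $k$-th moment operator directly implies a strong unitary $k$-design guarantee, not just a standard one. First I would recall that for a 1D brickwork random circuit of depth $d$ built from Haar-random two-qubit gates, the $k$-th moment operator $\Phi_{\mathcal E}$ satisfies $\lVert \Phi_{\mathcal E} - \Phi_H \rVert_{\infty \to \infty} \le (1-\Delta)^{d}$ in the appropriate norm, where the spectral gap $\Delta$ of a single brickwork layer is known to be $\Delta = \Omega\!\big(1/(n \cdot \log(k)^{c})\big)$ for some constant $c$ (with the current best exponent coming from~\cite{chen2024incompressibility}, which removes the polynomial-in-$k$ dependence of earlier bounds). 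Setting $d = \Theta\!\big(\log(k)^{c}(nk + \log(1/\varepsilon))\big)$ then drives the moment operator to within additive error $\varepsilon \cdot 2^{-nk}$ of the Haar moment operator, and a standard argument (e.g.~the relation between additive moment error and relative/measurable error, see Appendix~\ref{app: preliminaries}) upgrades this to $\varepsilon$ relative error on the $2^{nk}$-dimensional $k$-copy space, valid provided $\varepsilon \ge 2k^2/2^n$.

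The key new step is translating relative-error designs (which control only forward queries $U^{\otimes k}$) into strong $\varepsilon$-designs (which control arbitrary query patterns $\circ_1,\dots,\circ_k \in \{\cdot, \dagger, T, *\}$). Here I would invoke the fact that the conjugate $U^*$ and transpose $U^T$ of a Haar-random $U(2^n)$ element are themselves Haar-distributed, and that the 1D random circuit ensemble is invariant under the maps $U \mapsto U^*$, $U \mapsto U^T$ (since each two-qubit Haar gate is, and the brickwork geometry is symmetric) --- hence moment bounds for the forward ensemble transfer verbatim to the conjugated/transposed ensembles. The inverse $U^\dagger$ is handled by noting that a depth-$d$ 1D brickwork circuit read backwards is again a depth-$d$ 1D brickwork circuit with independent Haar gates. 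The remaining work is to show that a relative-error bound on each individual ``block'' of queries, combined with the fact that relative error composes multiplicatively under concatenation of channels, yields the strong-design guarantee for mixed query sequences; this is essentially the content of the general principle in~\cite{schuster2024random,cui2025unitary} adapted to allow the four query types. I expect this bookkeeping --- carefully tracking how each of the four operations acts on the $k$-copy space and confirming that the relative-error inequality $(1-\varepsilon)\Phi_H \preceq \Phi_{\mathcal E} \preceq (1+\varepsilon)\Phi_H$ is preserved under each --- to be the main source of technical friction, though not a deep obstacle.

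The main genuine obstacle is getting the $\log(k)^7$ exponent (rather than a worse power of $\log k$ or a polynomial in $k$) in the depth. This requires feeding in the sharpest available single-layer spectral gap for 1D brickwork circuits; I would cite~\cite{chen2024incompressibility} for a gap of the form $\Omega(1/(n\,\mathrm{polylog}\,k))$ and track the polylog exponent through the conversion from spectral gap to moment-operator distance to relative error --- each conversion step can cost additional logarithmic factors, and the claimed exponent $7$ should be verified to be an upper bound on their sum. The constraint $\varepsilon \ge 2k^2/2^n$ arises precisely at the point where we convert additive moment error to relative error: below that threshold the $2^{nk}$-dimensional amplification overwhelms the gain from extra depth, so the bound is stated only in the regime where it is meaningful. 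Finally, I would note that no ancilla qubits appear anywhere in this argument, consistent with the ``without ancilla qubits'' qualifier, and that the $\Omega$ in the depth should be read as ``it suffices to take $d$ at least this large.''
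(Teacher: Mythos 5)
Your first half (spectral gap of 1D brickwork circuits $\rightarrow$ closeness of the moment operator $\rightarrow$ additive $\rightarrow$ relative error, with the $\log(k)^7$ coming from~\cite{chen2024incompressibility}) matches the paper. The second half, however, has a genuine gap: you try to pass from a \emph{forward} relative-error $k$-design to a \emph{strong} design using (i) invariance of the ensemble under $U \mapsto U^*, U^T, U^\dagger$ and (ii) a claim that ``relative error composes multiplicatively'' over blocks of queries. Point (i) only controls experiments whose queries are all of a single type; it says nothing about experiments that interleave $U$ with $U^\dagger$, $U^*$, or $U^T$, which are governed by the mixed twirl $\Phi^{(p,q)}_{\mathcal{E}}$ with $p+q=k$ --- a different object that is not bounded by closeness of $\Phi^{(k)}_{\mathcal{E}}$ to $\Phi^{(k)}_H$ plus ensemble symmetries. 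Indeed, Appendix~\ref{app: preliminaries} of the paper gives explicit counterexamples: shallow circuits are forward relative-error designs, are invariant under conjugation/transposition, and their inverses are again brickwork circuits, yet a single $U,U^\dagger$ (or $U,U^*$) experiment distinguishes them from Haar. So the implication you lean on is false in general, and the ``bookkeeping'' you defer is exactly the hard content of the lemma. Point (ii) is unsupported; the cited principle in~\cite{schuster2024random,cui2025unitary} is proven only for forward queries.

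What the paper actually does is bound the mixed twirl directly and then invoke a genuinely new translation lemma. The observation is that the essential norm $g(\mathcal{E}) = \lVert M_{\mathcal{E}} - M_H \rVert_\infty$ with $M = \E_U[U^{\otimes k} \otimes (U^*)^{\otimes k}]$ controls $\lVert (\Phi^{(p,q)}_{\mathcal{E}} - \Phi^{(p,q)}_H)(\rho) \rVert_\infty$ for \emph{every} split $p+q=k$, since the vectorized mixed twirl uses the same $k$ copies of $U$ and $k$ copies of $U^*$, merely distributed differently between ket and bra sides [Eq.~(\ref{eq: essential norm to spectral norm})]. The additive-to-relative upgrade is then performed for the mixed twirl via Lemma~\ref{lemma:translating-strong} (proven through the approximate mixed Haar twirl, Lemmas~\ref{lemma: approx mixed Haar twirl} and~\ref{lemma: relative error EPR strong}), which produces the amplification factor $4^{nk}/(p!q!)$ \emph{and} an irreducible additive term $k^2/2^n$ coming from the Weingarten corrections to the approximate mixed twirl. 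That floor, not any competition between depth and the $2^{nk}$ amplification, is the origin of the condition $\varepsilon \geq 2k^2/2^n$; your explanation of that constraint is another symptom of the missing mixed-twirl analysis. To repair your proposal you would need to replace the symmetry/composition argument with a direct bound on $\Phi^{(p,q)}_{\mathcal{E}} - \Phi^{(p,q)}_H$ and an additive-to-relative translation valid for mixed moments, i.e., essentially the machinery of Appendix~\ref{sec: approx twirl}.
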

\noindent This requires circuit depth $\mathcal{O}(\poly \log k  (\xi k + \log n/\varepsilon))$  for the two layers of small unitary $k$-designs, and  depth $\mathcal{O}(\xi+\log n/\varepsilon)$ per layer for the $\log_2(n/\xi)$ layers of small unitary 2-designs.
Setting $\xi = \mathcal{O}(\log nk/\varepsilon)$ as in Theorem~\ref{thm:two-layer-design} yields a total circuit depth of $\mathcal{O}(k \cdot \poly \log k \cdot \log n/\varepsilon + \log n \cdot \log n k/\varepsilon)$. 

The proofs of Lemma~\ref{lemma: strong 2-designs} and Lemma~\ref{lemma: strong design 1D RUC} are contained in Appendix~\ref{sec: proof blocked fast scrambling} and~\ref{sec: strong linear random}, respectively.
We prove Lemma~\ref{lemma: strong 2-designs} by establishing a formal connection between operator spreading and strong unitary 2-designs.
Operator spreading refers to the growth in support of initially local operators under a quantum circuit or time dynamics; this signifies the scrambling of local information into non-local correlations~\cite{roberts2015localized,roberts2018operator,xu2024scrambling}.
In random circuits, this growth is probabilistic and is characterized by a probability distribution over the set of Pauli strings~\cite{nahum2018operator}.
We show that an ensemble forms a strong unitary 2-design if and only if its operator spreading distributions are close to those of a Haar-random unitary. 
To establish that the blocked fast scrambling circuit satisfies this condition, we prove that with high probability every operator is randomly supported within its light-cone at every layer.

The statement of Lemma~\ref{lemma: strong design 1D RUC} for strong unitary designs closely parallels an analogous statement for standard unitary designs~\cite{chen2024incompressibility}.
These statements are derived by translating lower bounds on the spectral gap of random circuits to upper bounds on their design depth~\cite{brandao2016local}.
This translation involves several steps: One first uses the spectral gap to bound the so-called additive error of a unitary design, and then translates the additive error to a bound on the relative error.
For standard unitary designs, this translation can either be proven using the Schur-Weyl duality~\cite{brandao2016local} or from elementary properties of the permutation operators~\cite{schuster2024random}.
In Appendix~\ref{app: preliminaries}, we provide an analogous additive-to-relative-error translation result for strong unitary designs.
Our analysis yields numerous insights into the structure of the mixed Haar twirl, which may be useful in other contexts.
For example, we formalize the following statement: Within any quantum experiment, a Haar-random unitary $U$ and its inverse $U^\dagger$ either cancel one another, or behave identically to two independent random unitaries $U$ and $V$.

\section{Fast scrambling} \label{sec: scrambling}

\begin{figure}[t]
    \centering
    \includegraphics[width=1.0\textwidth]{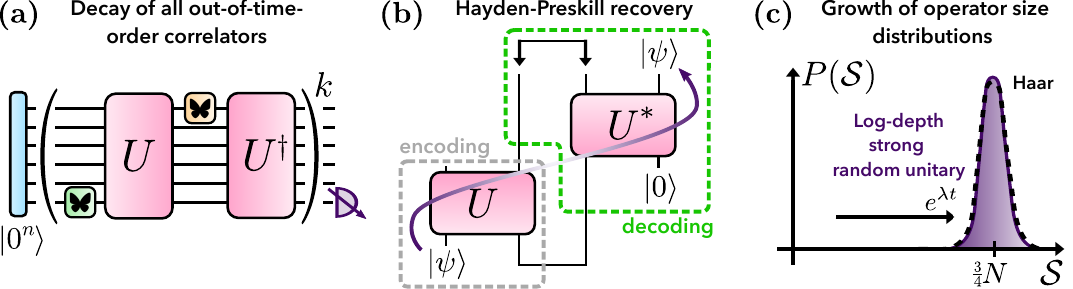}
    \caption{By allowing queries to $U^\dagger$ and $U^*$, strong random unitaries capture hallmark features of quantum information scrambling. 
    \textbf{(a)} In a strong unitary $k$-design, every $k$-point time-ordered and out-of-time-order correlation function decays to near zero with high probability.
    \textbf{(b)} Strong random unitaries are good encoders of quantum information, as in the Hayden-Preskill thought experiment~\cite{hayden2007black}. This follows because the decoding protocol uses the conjugate random unitary~\cite{yoshida2017efficient}.
    \textbf{(c)} In a strong unitary 4-design or strong pseudorandom unitary, the operator size distribution approaches its Haar-random value to within small total variational distance.
    }
    \label{fig:scrambling}
\end{figure}

We now elaborate on the connections between our results and fast quantum information scrambling.
Quantum information scrambling is a broad field focused on understanding the spreading of quantum information in dynamical many-body quantum systems. 
As discussed in detail in our introduction, the \emph{fast scrambling} conjecture~\cite{sekino2008fast} posits that (i) all-to-all-connected quantum systems can scramble information in time growing only logarithmically in the system size $n$, and (ii) this is the fastest that any quantum system can scramble information.

Our construction of strong unitary $k$-designs and strong pseudorandom unitaries in $\mathcal{O}(\log n)$ circuit depth provides the strongest confirmation to date of the fast scrambling conjecture.
These results prove that every observable feature of unitary quantum dynamics can mimic Haar-random behavior in logarithmic depth.
This significantly expands upon existing works, which, as aforementioned, focus on either a small number of specific signatures or solely unitary 2-design properties~\cite{maldacena2016remarks,kitaev2015simple,roberts2018operator,brown2012scrambling,brown2013short,brown2015decoupling,lashkari2013towards,cleve2015near,belyansky2020minimal,bentsen2019fast,vikram2024exact}.
Meanwhile, our lower bound, which is simple to derive, confirms that a logarithmic depth is optimal for any unitary to appear Haar-random in experiments that involve the inverse or conjugate.


We illustrate the connection between strong unitary designs and strong PRUs and quantum information scrambling in more detail through several examples (Fig.~\ref{fig:scrambling}).
We consider four hallmark diagnostics of quantum information scrambling: (a) the decay of all local time-ordered and out-of-time-order correlation functions~\cite{xu2024scrambling}, (b) the Hayden-Preskill thought experiment~\cite{hayden2007black}, (c) the saturation of operator size distributions to their Haar-random profile~\cite{roberts2018operator}, and (d) the growth of the entanglement and operator entanglement entropies.
The implications of our results for each behavior follow  straightforwardly from our definitions of strong unitary $k$-designs and strong PRUs.
We discuss this example-by-example below.

\paragraph{Decay of all out-of-time-order correlation functions.} 
Perhaps the simplest diagnostic of scrambling is the decay of all local correlation functions to zero~\cite{hosur2016chaos}.
This includes both conventional \emph{time-ordered} correlations functions (TOCs), as well as \emph{out-of-time-order} correlations functions (OTOCs)~\cite{xu2024scrambling}.
The former can be measured with forward time-evolution under $U$.
Crucially however, the latter can only be efficiently measured by alternating forward and backward time-evolution under $U$ and $U^\dagger$ [Fig.~\ref{fig:scrambling}(a)].
In Appendix~\ref{app: scrambling}, we provide a short proof that all local $k$-point correlation functions are near zero with high probability in any strong unitary $2k$-design.
A similar statement holds for strong PRUs. 
This complements existing observations that OTOCs can decay in logarithmic time in the Sachdev-Ye-Kitaev model~\cite{kitaev2015simple,maldacena2016remarks} and all-to-all-connected random circuits~\cite{schuster2022many,schuster2023operator}.
Our results capture both standard four-point OTOCs as well as higher-point OTOCs, which have risen in interest in recent work~\cite{abanin2025constructive,pappalardi2022eigenstate,fava2025designs,dowling2025free,vardhan2025free}.

\paragraph{Hayden-Preskill thought experiment.}
A key inspiration for early studies of quantum information scrambling came from the black hole information paradox~\cite{hawking1976breakdown}.
In this context, Hayden and Preskill proposed that, if one models the dynamics of a black hole by a random unitary, then one could use the Hawking radiation collected from a black hole to recover a quantum state that fell into the black hole at an earlier time~\cite{hayden2007black}.
An explicit decoding protocol was later provided by Yoshida and Kitaev~\cite{yoshida2017efficient}.
Crucially, the decoding protocol utilizes the conjugate random unitary $U^*$ [Fig.~\ref{fig:scrambling}(b)].
Thus, any strong unitary 2-design or strong pseudorandom unitary will encode information precisely as well as a Haar-random unitary in the Hayden-Preskill thought experiment\footnote{We note that the original work by Hayden and Preskill~\cite{hayden2007black} utilized an early definition of approximate unitary 2-designs~\cite{dankert2009exact} that in fact bears some resemblance with our strong definition. This definition fell out of use in later works which focused on unitary $k$-designs for general $k$.}.
Intriguingly, this result does not seem to extend to all applications of Haar-random unitaries as quantum codes; in particular, it applies only when the decoding can be performed \emph{efficiently} using query access to $U$, $U^\dagger$, $U^*$, $U^T$.

\paragraph{Growth of operator size distributions.} An even more fine-grained diagnostic of quantum information scrambling is the \emph{operator size distribution}~\cite{roberts2018operator}.
The size distribution characterizes the support of a time-evolved operator when expanded in the Pauli basis, $O(t) \equiv UOU^\dagger = \sum_P c_P(t) P$.
Specifically, $P(\mathcal{S}) \equiv \sum_{|P|=\mathcal{S}} |c_P(t)|^2$, where the sum is over all Pauli operators of weight (i.e.~\emph{size}) $\mathcal{S}$.
Operator size distributions and are central to applications of information scrambling in quantum gravity~\cite{roberts2018operator,brown2023quantum,nezami2023quantum,schuster2022many}, quantum sensing~\cite{kobrin2024universal}, and understanding the impact of noise on quantum systems~\cite{schuster2023operator,sanchez2021emergent,schuster2024polynomial}.
In Appendix~\ref{app: scrambling}, we prove that the operator size distribution of any strong $\varepsilon$-approximate unitary 4-design is $n^2\varepsilon$-close to its Haar-random value in total variational distance.
Setting $\varepsilon = 1/\omega(\poly n)$ yields operator size distributions that are super-polynomially close to their Haar values in circuit depth $\mathcal{O}(\log n)$.

\paragraph{Entanglement and operator entanglement entropy.} Finally, we consider the entanglement and operator entanglement entropies.
In principle, neither of these quantities can be efficiently measured in any quantum experiment.
Therefore their values in a Haar-random state need not be replicated by approximate strong unitary designs or strong PRUs.
Nonetheless, in Appendix~\ref{app: scrambling}, we show that an especially precise form of strong unitary designs, characterized by a small \emph{relative error}, can efficiently capture the entanglement and operator entanglement entropies.
In Appendix~\ref{app: LRFC}, we show that such designs can be formed in $\mathcal{O}(\log n)$ circuit depth.
From this, we find that the Renyi-2 entanglement entropy of any subsystem of any time-evolved state $\ket{\psi(t)} \equiv U \ket{\psi}$ reaches its Haar-random value in $\mathcal{O}(\log n)$ depth.
In addition, the Renyi-2 operator entanglement entropy of any subsystem of any time-evolved operator $O(t) \equiv UOU^\dagger$ reaches its Haar-random value in $\mathcal{O}(\log n)$ depth.
Formally, these bounds are achieved by reformulating each entropy as the expectation value of a positive-valued operator after $U$ and $U^*$ are applied to a fictitious larger system.

\section{Discussions} \label{sec: discussions}

Our results provide the strongest constructions of approximate unitary designs and PRUs to date. Moreover, the circuit depths of our constructions achieve the optimal $\Theta(\log n)$ scaling, as predicted by the fast scrambling conjecture. These results establish a rigorous operational foundation for quantum information scrambling and represent the most comprehensive confirmation of fast scrambling to date, capturing all efficiently observable quantum experiments. Our work leaves open several interesting questions.

We have motivated strong unitary designs and strong PRUs from applications to quantum information scrambling and black hole physics. What other applications of strong random unitaries might exist? For example, can strong unitary designs help us understand the classical hardness of recent quantum advantage proposals involving time-reversal dynamics~\cite{mi2021information,abanin2025constructive}? Or perhaps the sensitivity of such experiments to experimental noise~\cite{schuster2023operator}? More broadly, can strong random unitaries assist in device benchmarking and other quantum learning tasks? 

Our constructions provide the first examples of PRUs secure against queries to all of $U$, $U^\dagger$, $U^*$, and $U^T$. Can the existence of these strong PRUs and the ability to generate them in logarithmic depth enable new quantum cryptographic applications~\cite{zhandry2024model}? Our ancilla-free constructions address a fundamental limitation of previous PRU constructions, which relied on auxiliary systems that are unrealistic in physical settings. This advance opens a deeper question: can we build upon ancilla-free strong random unitaries to demonstrate that dynamics naturally arising in condensed matter, quantum chaos, and high-energy physics form strong random unitaries?

More broadly, giving ancilla-free constructions of \emph{any} quantum cryptographic object---such as commitments, encryption, uncloneable cryptography, etc.---can constitute a stronger form of evidence (compared to a pure existential result) that natural physical phenomena may possess these cryptographic properties. This motivates understanding whether, and under what hardness assumptions, other quantum cryptographic primitives have efficient ancilla-free instantiations. 

Finally, our new techniques for analyzing strong random unitaries raise several questions. Can the circuit depth of our local random circuit construction of strong unitary designs be further improved, from $\mathcal{O}(\log^2 n)$ to $\mathcal{O}(\log n)$? In particular, our treatment of the blocked fast scrambling circuit is extremely coarse and likely leaves room for an improved analysis. Along similar lines, can one further remove the structure from our random circuit designs, and prove the fast formation of strong unitary designs in the standard model of all-to-all random circuits~\cite{brown2012scrambling,brown2013short,brown2015decoupling}? More broadly, can our refined techniques for analyzing the mixed-unitary twirl~\cite{grinko2024linear,grinko2023gelfand,nguyen2023mixed}, and quantum experiments involving inverse and conjugate unitaries, yield any new progress or insights elsewhere in quantum information theory?

\vspace{0.5em}
\subsection*{Acknowledgments:}We are grateful to Laura Cui, Jonas Haferkamp, and Nicholas Hunter-Jones for valuable discussions.
T.S. acknowledges
support from the Walter Burke Institute for Theoretical Physics at Caltech.
T.S. and H.H. acknowledge support from the U.S. Department of Energy, Office of Science, National Quantum Information Science Research Centers, Quantum Systems Accelerator.
The Institute for Quantum Information and Matter, with which T.S., F.B., and H.H. are affiliated, is an NSF Physics Frontiers Center (NSF Grant PHY-2317110).
This work was done while F.M. was a postdoctoral fellow at the Simons Institute for the Theory of Computing, supported by DOE QSA grant FP00010905, NSF QLCI Grant 2016245 and DOE grant DE-SC0024124.

\vspace{3.5em}
\appendix

\addtocontents{toc}{\protect\setcounter{tocdepth}{2}}

\noindent 
\textbf{\LARGE{}Appendices}
\vspace{1em}

\noindent Our Appendices are organized as follows. In Appendix~\ref{app: designs}, we provide the full details of our results on strong unitary designs.
In Appendix~\ref{app: PRU}, we provide the full details of our results on strong pseudorandom unitaries.
In Appendix~\ref{app: LRFC}, we prove that the LRFC ensemble forms a strong unitary design and strong PRU.
In Appendix~\ref{app: gluing}, we prove our gluing results for strong unitary designs and strong PRUs.
In Appendix~\ref{sec: details mixed}, we provide additional details on the mixed Haar twirl, which are used to prove a translation lemma for the approximation errors of strong unitary designs in Appendix~\ref{app: designs}.
In Appendix~\ref{app: scrambling}, we provide full details on our applications of strong random unitaries to scrambling.

\tableofcontents

\section{Strong unitary designs} \label{app: designs}

In this Appendix, we provide the full details of our definition and constructions of strong unitary designs. Beyond the first preliminaries and definitions section, we have structured each subsection so that they can be read relatively independently of one another.

\subsection{Preliminaries and definitions} \label{app: preliminaries}

In this section, we introduce and define strong unitary $k$-designs for various notions of approximation error.
Our definitions are closely modeled on the analogous definitions for (standard) unitary $k$-designs.
With this in mind, we first review the standard definitions of unitary designs.
We then briefly discuss the limitations of these definitions in regards to fast scrambling and quantum experiments involving time-reversal.
We then introduce our strong definitions to capture such behaviors.

\subsubsection{Averaging over Haar-random unitaries}

To provide the definitions of unitary designs, let us first introduce the average (i.e.~\emph{twirl}) over a Haar-random unitary.
This will form our point of comparison in the study of unitary designs.
We consider two varieties of the twirl in our work: the Haar twirl and the mixed Haar twirl.

\paragraph{The Haar twirl.} The Haar twirl has the following definition.
\begin{definition}[The Haar twirl]
    Given a linear operator $X$ acting on $nk$ qubits, the $k$-th moment with respect to $U(2^n)$ is defined via the twirl over the unitary group:
    \begin{equation}
    \Phi^{(k)}_{H}(X) = \int dU \, U^{\otimes k} X (U^\dagger)^{\otimes k}.
    \end{equation}
\end{definition}
\noindent An explicit formula for the Haar twirl can be derived from a simple argument in representation theory~\cite{fulton2013representation, goodman2009symmetry,mele2024introduction}.
This yields the following expression.
\begin{lemma}[Explicit expression for the Haar twirl]
    For any $k \leq 2^n$. For any linear operator $X$ acting on $nk$ qubits, the $k$-th moment with respect to the unitary group can be written in the form 
    \begin{equation} \label{eq: Haar twirl}
    \Phi_{H}^{(k)}(X) = \sum_{\pi, \tilde{\pi} \in S_k} \Wg_{\pi, \tilde{\pi}} \cdot \tr(X \pi^{-1}) \cdot \tilde{\pi},
    \end{equation}
    where $\pi,\tilde{\pi} \in S_k$ permute the $k$ copies of the $n$-qubit Hilbert space, and the Weingarten matrix elements $\Wg_{\pi,\tilde{\pi}}$ depend on $k$ and the Hilbert space dimension $2^n$.
\end{lemma}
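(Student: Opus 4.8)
The plan is to identify the twirl $\Phi_H^{(k)}$ with the orthogonal projection onto the commutant of the $k$-fold tensor representation of $U(2^n)$, and then read off the stated formula from Schur--Weyl duality together with the general expression for an orthogonal projector in terms of a basis and its Gram matrix.

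First I would show that $\Phi_H^{(k)}$ is exactly the orthogonal projection (with respect to the Hilbert--Schmidt inner product $\langle A,B\rangle = \tr(A^\dagger B)$) onto the commutant $\mathcal{C} = \{A : A U^{\otimes k} = U^{\otimes k} A \text{ for all } U \in U(2^n)\}$. Left-invariance of the Haar measure under $U \mapsto VU$ gives $V^{\otimes k}\Phi_H^{(k)}(X)(V^\dagger)^{\otimes k} = \Phi_H^{(k)}(X)$ for every $V$, so the image of $\Phi_H^{(k)}$ lies in $\mathcal{C}$; conversely, if $A \in \mathcal{C}$ then each term of the integral equals $A$, so $\Phi_H^{(k)}$ acts as the identity on $\mathcal{C}$. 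Using invariance of the Haar measure under $U \mapsto U^\dagger$ one checks $\langle \Phi_H^{(k)}(X), Y\rangle = \langle X, \Phi_H^{(k)}(Y)\rangle$, i.e.\ $\Phi_H^{(k)}$ is self-adjoint. A self-adjoint idempotent-on-its-image map whose image is $\mathcal{C}$ and which restricts to the identity on $\mathcal{C}$ is precisely the orthogonal projector onto $\mathcal{C}$.

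Next I would invoke Schur--Weyl duality: $\mathcal{C}$ is spanned by the permutation operators $W_\pi$, $\pi \in S_k$, that permute the $k$ tensor factors of $(\CC^{2^n})^{\otimes k}$, and for $k \le 2^n$ these operators are linearly independent, hence form a basis of $\mathcal{C}$. Applying the standard formula for the orthogonal projector onto a subspace with basis $\{v_\pi\}$ and Gram matrix $G_{\pi\tilde\pi} = \langle v_\pi, v_{\tilde\pi}\rangle$, namely $P(X) = \sum_{\pi,\tilde\pi} (G^{-1})_{\pi\tilde\pi}\,\langle v_\pi, X\rangle\, v_{\tilde\pi}$, with $v_\pi = W_\pi$, I would compute $\langle W_\pi, X\rangle = \tr(W_\pi^\dagger X) = \tr(W_{\pi^{-1}} X) = \tr(X\pi^{-1})$ and $G_{\pi\tilde\pi} = \tr(W_{\pi^{-1}\tilde\pi}) = (2^n)^{\#\mathrm{cycles}(\pi^{-1}\tilde\pi)}$. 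Since the Weingarten matrix $\Wg_{\pi,\tilde\pi}$ is by definition the inverse of this Gram matrix, substitution yields $\Phi_H^{(k)}(X) = \sum_{\pi,\tilde\pi \in S_k}\Wg_{\pi,\tilde\pi}\,\tr(X\pi^{-1})\,\tilde\pi$, as claimed.

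The main obstacle is the input from representation theory: one needs Schur--Weyl duality to pin down the commutant, and---crucially for the hypothesis $k \le 2^n$---the linear independence of the $W_\pi$ (equivalently, invertibility of the Gram matrix $G$), without which the displayed formula would require a pseudoinverse rather than $\Wg = G^{-1}$. Once this is in hand, the remaining work is purely bookkeeping with the Hilbert--Schmidt inner product and the conjugation conventions; in particular, one must track that the projection formula is conjugate-linear in the "reference'' slot, which is why $\tr(X\pi^{-1})$ and not $\tr(X^\dagger\pi^{-1})$ appears, keeping the whole map linear in $X$ as it must be.
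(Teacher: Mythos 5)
Your proposal is correct and is essentially the standard representation-theoretic argument the paper itself appeals to (the paper gives no proof, deferring to its cited references): the twirl is the Hilbert--Schmidt orthogonal projector onto the commutant, Schur--Weyl identifies the commutant with the span of the permutation operators, linear independence for $k \le 2^n$ makes the Gram matrix $G_{\pi\tilde\pi} = (2^n)^{\#\mathrm{cycles}(\pi^{-1}\tilde\pi)}$ invertible, and the Weingarten matrix is precisely its inverse. No gaps; nothing further is needed.
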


\paragraph{The mixed Haar twirl.} To incorporate quantum experiments that can query the inverse and conjugate of a random unitary, we will also make use of the mixed Haar twirl in our work.
Here, the $k$ copies of the unitary $U$ are replaced by $p$ copies of $U$ and $q$ copies of its conjugate $U^*$.
\begin{definition}[The mixed Haar twirl]
    Given a linear operator $X$ acting on $nk$ qubits, the $(p,q)$-th moment with respect to $U(2^n)$ is defined via the mixed twirl over the unitary group:
    \begin{equation}
    \Phi^{(p,q)}_{H}(X) = \int dU \, (U^{\otimes p} \otimes U^{*,\otimes q}) X (U^{\dagger, \otimes p} \otimes U^{T,\otimes q}).
    \end{equation}
\end{definition}
\noindent The mixed Haar twirl can be obtained from the standard Haar twirl by taking the partial transpose of the last $q$ registers before and after $(U^*)^{\otimes q}$ is applied. Thus, the expression for the mixed Haar twirl is fully determined by the expression for the standard Haar twirl.
\begin{lemma}[Explicit expression for the mixed Haar twirl] \label{lemma: explicit mixed twirl}
    For any $p+q \leq 2^n$. For any linear operator $X$ acting on $nk$ qubits, the $(p,q)$-th moment with respect to the unitary group can be written
    \begin{equation} \label{eq: mixed Haar twirl}
    \Phi^{(p,q)}_{H}(X) = \sum_{\pi, \tilde{\pi} \in S_k} \Wg_{\pi, \tilde{\pi}} \cdot \tr(X (\pi^{-1})^\Gamma) \cdot \tilde{\pi}^\Gamma,
    \end{equation}
    where $\Gamma$ is the partial transpose on the final $q$ registers.
\end{lemma}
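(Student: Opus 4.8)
The plan is to reduce the mixed Haar twirl to the ordinary Haar twirl by the well-known observation that complex conjugation of an operator is the composition of the transpose with the Hermitian adjoint, so that the $q$ copies of $U^*$ and $U^T$ appearing in $\Phi^{(p,q)}_H$ can be traded for $q$ copies of $U$ and $U^\dagger$ at the cost of conjugating the input and output by a partial transpose. Concretely, for a single register, $U^* = (U^T)^\dagger$ and, for any operator $M$ acting on that register together with its ``partner'' slots, $U^* M (U^*)^\dagger$ equals $\big(U^T M^\Gamma (U^T)^\dagger\big)^\Gamma$ where $\Gamma$ is the transpose on that register only; but this is not yet $U^{\otimes} \cdots U^{\dagger,\otimes}$. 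The cleanest route is instead the standard identity
\begin{equation}
(U^{\otimes p} \otimes U^{*,\otimes q})\, X\, (U^{\dagger,\otimes p}\otimes U^{T,\otimes q}) = \Big[\,(U^{\otimes p}\otimes U^{\otimes q})\, X^\Gamma\, (U^{\dagger,\otimes p}\otimes U^{\dagger,\otimes q})\,\Big]^\Gamma,
\end{equation}
valid because taking the partial transpose $\Gamma$ on the final $q$ registers sends $U$ acting on those registers to $U^T$ on the transposed registers — more precisely, for any operators $A,B,M$, $(A\otimes B)\,M^\Gamma$ has partial transpose $(A\otimes B^T)^{\Gamma\Gamma}\cdots$; I would spell this out carefully as the first step, since getting the direction of the transpose right on both sides of $X$ is the one genuinely error-prone point. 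The key fact being used is that $\Gamma$ is an involution and that $(CMD)^\Gamma = C^\Gamma M^\Gamma D^\Gamma$ when $C,D$ act only on the registers being transposed (here, the final $q$ of the $k$ registers), while $\Gamma$ commutes with operators supported on the first $p$ registers.

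Granting that identity, the proof is immediate: apply $\Gamma$ to both sides and integrate over $U$ with respect to the Haar measure, obtaining
\begin{equation}
\Phi^{(p,q)}_H(X) = \Big[\,\Phi^{(k)}_H\!\big(X^\Gamma\big)\,\Big]^\Gamma
\end{equation}
with $k = p+q$. Now substitute the explicit expression from \eqref{eq: Haar twirl}:
\begin{equation}
\Phi^{(k)}_H\!\big(X^\Gamma\big) = \sum_{\pi,\tilde\pi\in S_k} \Wg_{\pi,\tilde\pi}\cdot \tr\!\big(X^\Gamma\,\pi^{-1}\big)\cdot \tilde\pi .
\end{equation}
Take the partial transpose on the final $q$ registers of this operator-valued expression. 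The scalar coefficients $\Wg_{\pi,\tilde\pi}$ are untouched; in the trace, $\tr(X^\Gamma\,\pi^{-1}) = \tr(X\,(\pi^{-1})^\Gamma)$ because trace is invariant under a global transpose and $\Gamma$ is its own inverse, so $\tr(AB) = \tr(A^\Gamma B^\Gamma)$ applied with $A = X^\Gamma$, $B = \pi^{-1}$; and the operator factor $\tilde\pi$ becomes $\tilde\pi^\Gamma$. This yields exactly \eqref{eq: mixed Haar twirl}. The dimension constraint $p+q \le 2^n$ is inherited verbatim from the nondegeneracy condition on the Weingarten matrix in the $k = p+q$ case of \eqref{eq: Haar twirl}.

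The only real obstacle is bookkeeping: one must be scrupulous about \emph{which} registers $\Gamma$ acts on and about the fact that partial transpose is not multiplicative in general — $(MN)^\Gamma \ne M^\Gamma N^\Gamma$ unless the transposed subsystem is ``factored through'' correctly — which is why I would phrase the central identity in the symmetric form above (conjugation of $X^\Gamma$ by $U^{\otimes k}$, then $\Gamma$ again) rather than trying to move the transpose past each $U$ individually. A secondary subtlety worth a sentence: one should note that $\Phi^{(k)}_H$ commutes with conjugation by the partial-transpose-inducing operator in the sense needed, i.e.\ that $\big[\Phi^{(k)}_H(X^\Gamma)\big]^\Gamma$ is well-defined independent of basis, which follows since both $\Gamma$ and the Haar average are basis-independent operations on $\mathrm{End}((\mathbb C^{2^n})^{\otimes k})$. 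With these points handled, the lemma follows in a few lines, as claimed in the text (``the expression for the mixed Haar twirl is fully determined by the expression for the standard Haar twirl'').
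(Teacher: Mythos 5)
Your proposal is correct and follows essentially the same route as the paper: it rests on the identity $\Phi^{(p,q)}_H(X) = \big[\Phi^{(k)}_H(X^\Gamma)\big]^\Gamma$ and then transfers $\Gamma$ inside the trace via $\tr(A^\Gamma B) = \tr(A B^\Gamma)$, which is exactly the paper's two-line argument. The additional bookkeeping remarks are fine but not needed beyond what the paper states.
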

\begin{proof}
    The expression follows from the equality $\Phi^{(p,q)}_H(X) = \Phi_H^{(k)}(X^\Gamma)^\Gamma$. The partial transpose $\Gamma$ is then transferred from $X$ to $\pi^{-1}$ inside the trace using $\tr(A^\Gamma B) = \tr(A B^\Gamma)$.
\end{proof}

\subsubsection{Approximate unitary designs}

Let us now turn to unitary designs.
A unitary ensemble is an exact unitary $k$-design if it exactly replicates the first $k$ moments of a Haar-random unitary.
\begin{definition}[Exact unitary $k$-design] \label{def: exact design}
    An ensemble of unitaries $\mathcal{E}$ is an \emph{exact unitary $k$-design} if it exactly reproduces the first $k$ moments of the Haar measure
    \begin{equation}
        \Phi^{(k)}_\mathcal{E} = \Phi^{(k)}_H
    \end{equation}
    where we have used the abbreviated notation
    \begin{equation}
    \Phi^{(k)}_\mathcal{E}(X) = \E_{U \sim \mathcal{E}} U^{\otimes k} X (U^\dagger)^{\otimes k}
    \end{equation}
    to denote the $k$-th moment over the unitary ensemble $\mathcal E$.
\end{definition}
\noindent An exact design is the strongest notion of unitary design.
It guarantees that any experiment that queries the unitary $U$ (or $U^\dagger$ or $U^T$ or $U^*$ or controlled versions of any of these quantities) up to $k$ times exactly reproduces the output of the same experiment querying a Haar-random unitary\footnote{For experiments involving controlled queries, one should also assume that the unitary ensemble $\mathcal{E}$ is invariant under a random global phase $e^{i\phi}$. This guarantees that all un-matched moments, such as $\E_{U \sim \mathcal{E}} [U^{\otimes k} (\cdot) U^{\otimes k'}]$ for $k \neq k'$, vanish. Equality of the matched moments (Definition~\ref{def: exact design}) then guarantees that all controlled queries to a Haar-random unitary are exactly reproduced by controlled queries to a unitary sampled from $\mathcal{E}$.}.

In practice, exact constructions of unitary designs are extremely scarce beyond very low moments $k \leq 3$.
This motivates the notion of an \emph{approximate} design.
Three forms of approximation error for unitary designs are common.

\paragraph{Additive error.} The simplest form of approximation error is the additive or diamond-norm error.
\begin{definition}[Unitary $k$-design with additive error]
    Let $\varepsilon > 0$.
    An ensemble of unitaries $\mathcal{E}$ is an approximate unitary $k$-design with additive error $\varepsilon$ if
    \begin{equation}
    \norm{\Phi^{(k)}_\mathcal{E} - \Phi^{(k)}_H}_\diamond \leq \varepsilon,
    \end{equation}
    where $\lVert \Phi - \Phi' \rVert_\diamond \equiv \max_\rho \lVert \Phi(\rho) - \Phi'(\rho) \rVert_1$ is the diamond norm. The maximization is over all states $\rho$ on $nk+m$ qubits, where the number $m$ of ancilla qubits may be arbitrarily large.
\end{definition}
\noindent Physically, the additive error is equivalent to security under parallel queries to the unitary $U$.
Namely, an ensemble $\mathcal{E}$ is an approximate unitary $k$-design up to additive error $\varepsilon$ if and only if for any quantum algorithm making a single query to $U^{\otimes k}$, i.e.~$k$ parallel queries to $U$, the output states when $U$ is sampled from $\mathcal{E}$ versus the Haar ensemble are $\varepsilon$-close in trace distance~\cite{cui2025unitary}.
This follows immediately from the definition of the diamond norm.

\paragraph{Measurable error.} The additive error has a significant drawback, in that it can only capture experiments in which $U$ is applied $k$ times in parallel.
To address this, Ref.~\cite{cui2025unitary} introduced a stronger notion of approximation error, which guarantees that an ensemble is indistinguishable from Haar-random in \emph{any} quantum experiment that queries $U$ up to $k$ times.
This is termed the \emph{measurable error} owing to its physical motivation.
\begin{definition}[Unitary $k$-design with measurable error]
    Let $\varepsilon > 0$. 
    An ensemble of unitaries $\mathcal{E}$ is an approximate unitary $k$-design with measurable error $\varepsilon$ if for any quantum experiment with $k$ queries to $U$, the output states when $U$ is sampled from $\mathcal{E}$ versus the Haar ensemble are $\varepsilon$-close in trace distance,
    \begin{equation}
        \sup_{W_1 \cdots W_{k+1}} \norm{\rho_\mathcal{E} - \rho_H}_1 \leq \varepsilon,
    \end{equation}
    where we have used the notation 
    \begin{equation} \nonumber
        \rho_\mathcal{E} = \E_{U \sim \mathcal{E}} \left[ W_{k+1} [U \otimes \mathbbm{1}_m] W_k \cdots W_2 [U \otimes \mathbbm{1}_m] W_1 |0^{n+m} \rangle\! \langle 0^{n+m} | W_1^\dagger [U^\dagger \otimes \mathbbm{1}_m] W_2^\dagger \cdots W_k^\dagger [U^\dagger \otimes \mathbbm{1}_m] W_{k+1}^\dagger 
        \right]
    \end{equation}
    to denote the expected output state of a general quantum experiment that queries $U$ $k$ times.
    Each $W_i$ is an arbitrary unitary on $n+m$ qubits, where the number $m$ of ancilla qubits may be arbitrarily large.
\end{definition}
\noindent In the definition, we can assume without loss of generality that each $U$ is applied in sequence on the same subsystem $A$ of $n$ qubits. If the unitaries are in fact applied in parallel, this is equivalent to performing the first unitary $U$, then using $W_1$ to swap $A$ with $n$ ancilla qubits, then performing the second unitary $U$, then using $W_2$ to swap back $A$ and the $n$ ancilla qubits, and so on.

The measurable error is in some sense the most natural notion of approximation error for unitary designs.
It captures precisely the features of a random unitary that can be measured in physical experiments that query the unitary.

\paragraph{Relative error.} The strongest notion of approximation error for unitary designs is the relative error. 
Unlike the additive or measurable errors, the relative error is sensitive to properties that cannot be efficient measured in any quantum experiment.
It has the following definition.
\begin{definition}[Unitary $k$-design with relative error]
    Let $\varepsilon > 0$. Then an ensemble of unitaries $\mathcal{E}$ is an approximate unitary $k$-design up to relative error $\varepsilon$ if
    \begin{equation}
    (1 - \varepsilon) \Phi^{(k)}_H \preceq \Phi^{(k)}_{\mathcal{E}} \preceq (1+\varepsilon) \Phi^{(k)}_H,
    \end{equation}
    where $\mathcal{A} \preceq \mathcal{B}$ denotes that $\mathcal{B}-\mathcal{A}$ is completely positive.
\end{definition}
\noindent Physically, an ensemble $\mathcal{E}$ is an approximate unitary $k$-design up to relative error $\varepsilon$ if and only if for any quantum experiment with $k$ queries to $U$ (or $U^T$), the expectation value of any positive-valued operator $\chi$ is equal to its Haar value to within multiplicative precision, $(1-\varepsilon) \text{tr}( \chi \rho_H ) \leq \text{tr}( \chi \rho_\mathcal{E} ) \leq (1+\varepsilon) \text{tr}( \chi \rho_H )$.
This holds even if the expectation value is exponentially small and cannot be efficient measured.

\paragraph{Translating between unitary design approximation errors.}
We can translate between different notions of approximation error for unitary designs as follows.
\begin{lemma}[Translating between different approximation errors~\cite{brandao2016local,schuster2024random,cui2025unitary}] \label{lemma:translating}
    The additive error $\varepsilon_a$ is upper bounded by the measurable error $\varepsilon_m$, which is in turn upper bounded by twice the relative error $\varepsilon_r$,
    \begin{equation}
        \varepsilon_a \leq \varepsilon_m \leq 2\varepsilon_r.
    \end{equation}
    Conversely, the relative error is bounded by the additive error times an exponentially large pre-factor,
    \begin{equation}
        \eps_r \leq 2^{nk} {2^n+k-1 \choose k} \varepsilon_a \leq \left( \frac{4^{nk}}{k!} \right) \left( 1+\frac{k^2}{2^n} \right) \eps_a,
    \end{equation}
    where the second inequality holds for $k^2 \leq 2^n$.
\end{lemma}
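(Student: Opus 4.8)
\textbf{Proof proposal for Lemma~\ref{lemma:translating}.}

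The plan is to establish the three claimed inequalities essentially separately, since they are of different characters. The first chain $\varepsilon_a \leq \varepsilon_m \leq 2\varepsilon_r$ is the ``easy direction'' and follows by unwinding definitions. For $\varepsilon_a \leq \varepsilon_m$, I would observe that a parallel query to $U^{\otimes k}$ is a special case of a $k$-query adaptive experiment (take each $W_i$ to be an appropriate fixed SWAP that moves the freshly-acted system out of the way, as described in the discussion following the definition of measurable error), so the supremum defining $\varepsilon_m$ is over a strictly larger family of experiments than the one computing $\varepsilon_a$ via the diamond norm. For $\varepsilon_m \leq 2\varepsilon_r$, I would use the operational characterization of relative error: the relative-error bound $(1-\varepsilon_r)\Phi_H \preceq \Phi_\mathcal{E} \preceq (1+\varepsilon_r)\Phi_H$ implies that for the (completely positive) maps $\Phi_\mathcal{E}$ and $\Phi_H$ we have $\Phi_\mathcal{E} - \Phi_H$ bounded in cb-norm by $\varepsilon_r$ times the cb-norm of $\Phi_H$; since $\Phi_H$ is trace-preserving up to the appropriate normalization this gives a diamond-norm-type bound on the per-query error of $2\varepsilon_r$, and one then composes over $k$ queries — or, more cleanly, one argues directly that applying the same positive-operator/trace-distance dual to each of the two output states $\rho_\mathcal{E},\rho_H$ and using that $\rho_\mathcal{E}$ is sandwiched multiplicatively between $(1\mp\varepsilon_r)\rho_H$ yields $\|\rho_\mathcal{E}-\rho_H\|_1 \leq 2\varepsilon_r$ for \emph{any} experiment, including adaptive ones, because the multiplicative sandwich at the level of the twirl channel is preserved under pre- and post-composition with arbitrary CP maps $W_i$.

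The second inequality, $\varepsilon_r \leq 2^{nk}\binom{2^n+k-1}{k}\varepsilon_a$, is the substantive one and is where I expect the real work to lie. The strategy is the standard one from~\cite{brandao2016local}: the relative error is controlled by the operator norm of $\Phi_H^{-1/2}(\Phi_\mathcal{E}-\Phi_H)\Phi_H^{-1/2}$ acting on the relevant (symmetric/commutant) subspace, and one bounds this by the additive (diamond/operator) error times the norm of the ``inverse'' of $\Phi_H$ restricted to that subspace. Concretely, $\Phi_H$ as an operator on the $4^{nk}$-dimensional space of operators on $nk$ qubits has its nonzero spectrum supported on the commutant of $U^{\otimes k}$, spanned by the $|S_k|$ permutation operators (or fewer, when $k > 2^n$); the smallest nonzero eigenvalue of $\Phi_H$ on this space is at least $1/\big(2^{nk}\binom{2^n+k-1}{k}\big)$ — this is where the dimension of the symmetric subspace $\binom{2^n+k-1}{k}$ and the factor $2^{nk}$ enter. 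I would prove this eigenvalue lower bound either by direct estimates on the Gram matrix of permutation operators (its smallest eigenvalue) together with the Weingarten/Gram relationship, or by citing the corresponding spectral estimate; then $\|\Phi_H^{-1}\|_{\infty,\text{on commutant}} \leq 2^{nk}\binom{2^n+k-1}{k}$, and combining with $\|\Phi_\mathcal{E}-\Phi_H\|_\infty \leq \|\Phi_\mathcal{E}-\Phi_H\|_\diamond = \varepsilon_a$ gives the relative error bound after checking that the multiplicative sandwich $(1\pm\varepsilon_r)\Phi_H$ is equivalent to $\|\Phi_H^{-1}(\Phi_\mathcal{E}-\Phi_H)\|\leq \varepsilon_r$ restricted to the image of $\Phi_H$ (which one must do carefully, since $\Phi_H$ is not invertible on the full space — the point is that $\Phi_\mathcal{E}-\Phi_H$ also has image inside the commutant, because both moment operators project onto it).

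The final inequality, $2^{nk}\binom{2^n+k-1}{k}\varepsilon_a \leq (4^{nk}/k!)(1+k^2/2^n)\varepsilon_a$, is a purely arithmetic simplification of the binomial coefficient valid for $k^2 \leq 2^n$: write $\binom{2^n+k-1}{k} = \frac{1}{k!}\prod_{j=0}^{k-1}(2^n+j) = \frac{2^{nk}}{k!}\prod_{j=0}^{k-1}(1+j/2^n)$, bound $\prod_{j=0}^{k-1}(1+j/2^n) \leq \exp\!\big(\sum_{j=0}^{k-1} j/2^n\big) = \exp\!\big(k(k-1)/2^{n+1}\big) \leq 1 + k^2/2^n$ using $k^2/2^n \leq 1$ and $e^x \leq 1+2x$ on $[0,1/2]$ (so $e^{k(k-1)/2^{n+1}} \leq 1 + k(k-1)/2^n \leq 1+k^2/2^n$), and then $2^{nk}\binom{2^n+k-1}{k} \leq 2^{nk}\cdot\frac{2^{nk}}{k!}(1+k^2/2^n) = \frac{4^{nk}}{k!}(1+k^2/2^n)$. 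The main obstacle throughout is the second inequality — specifically pinning down the smallest nonzero eigenvalue of the Haar twirl channel on the commutant, and handling the degenerate regime $k > 2^n$ where the permutation operators become linearly dependent and the naive bound must be interpreted via the image of $\Phi_H$ rather than a literal inverse — but since the excerpt flags this as a known result from~\cite{brandao2016local, schuster2024random, cui2025unitary}, the cleanest route is to cite those and supply only the short arithmetic step and the easy direction in detail.
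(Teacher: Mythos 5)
First, note that the paper does not actually prove Lemma~\ref{lemma:translating}: it is quoted as a known result from~\cite{brandao2016local,schuster2024random,cui2025unitary}, and the only in-paper argument of this type is the proof of the \emph{strong} analog (Lemma~\ref{lemma:translating-strong}), which goes through Lemma~\ref{lemma: relative error EPR strong}, i.e.\ it works entirely at the level of the Choi/EPR operator $[\Phi\otimes\mathbbm{1}](P_{\text{EPR}})$ and lower-bounds the minimum nonzero eigenvalue of that operator for the (approximate) Haar twirl. Your easy direction and your closing arithmetic are fine: parallel queries are a special case of adaptive ones, the bound $\lVert A\rVert_1\le 2\varepsilon_r$ from $-\varepsilon_r\rho_H\preceq A\preceq\varepsilon_r\rho_H$ is correct, and the estimate $\binom{2^n+k-1}{k}\le \frac{2^{nk}}{k!}(1+k^2/2^n)$ for $k^2\le 2^n$ is exactly the intended computation.

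There are, however, two genuine soft spots. (1) In the main inequality you propose to lower-bound ``the smallest nonzero eigenvalue of $\Phi_H$'' viewed as a superoperator on the $4^{nk}$-dimensional operator space. That object is the wrong one: $\Phi_H^{(k)}$ is the orthogonal projection (in Hilbert–Schmidt inner product) onto the commutant, so its nonzero superoperator eigenvalues are all equal to $1$. The relative-error condition $(1\pm\varepsilon)\Phi_H$ is a statement about complete positivity of the difference of \emph{maps}, equivalently positive semidefiniteness of the difference of their \emph{Choi operators}; the quantity you must bound is the minimum nonzero eigenvalue of $[\Phi_H^{(k)}\otimes\mathbbm{1}](P_{\text{EPR}})$ (together with the observation that the Choi operator of $\Phi_{\mathcal{E}}^{(k)}$ is automatically supported inside that of $\Phi_H^{(k)}$), and it is this eigenvalue that equals the reciprocal of $2^{nk}\binom{2^n+k-1}{k}$, the dimension-of-the-symmetric-subspace factor. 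This is precisely how the paper's Lemma~\ref{lemma: relative error EPR strong} is organized (a sum of projectors with minimum coefficient $p!q!/D^{2(p+q)}$ plus a support-containment step), and your sketch would need to be recast in that form to go through. (2) Your justification of $\varepsilon_m\le 2\varepsilon_r$ appeals to preservation of the multiplicative sandwich under ``pre- and post-composition with CP maps $W_i$,'' but in an adaptive experiment the $W_i$ are interleaved between sequential queries, so the output is not of the form $\Lambda_2\circ(\Phi^{(k)}\otimes\mathrm{id})\circ\Lambda_1$ as written. The missing ingredient is the gate-teleportation reformulation used in the paper (Fig.~\ref{fig:reformulation}): any adaptive $k$-query experiment equals a single parallel query to $U^{\otimes k}$ followed by post-selection on Bell states, which is a fixed CP (trace non-increasing) map with a normalization factor common to $\mathcal{E}$ and Haar; only after this step does the sandwich on $\Phi^{(k)}$ transfer to the output states. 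Both gaps are fixable with standard tools, and since the lemma is cited rather than proved, the cleanest completion is exactly the Choi-operator route used for the strong version.
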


\subsubsection{Strong approximate unitary designs}

We can now introduce our strong notions of approximation error for unitary designs. To do so, let us first highlight the weakness of  standard definitions towards capturing experiments that query the inverse or conjugate of a random unitary. We then introduce our definitions to resolve this.

\paragraph{Weakness of standard definitions of approximate unitary designs.} We can illustrate the weakness of standard definitions of approximate unitary designs with two examples. The examples involve experiments that perform one query to $U$ and one query to either the inverse or the conjugate. We show that both experiments can easily distinguish any low-depth unitary from Haar-random. Since low-depth circuits can form relative error unitary designs, this implies that such designs are not sufficient for bounding properties of such experiments. At a formal level, the translation from the relative error to the trace-norm error in the output of an experiment querying the inverse or conjugate incurs an exponential factor of $2^n$, which ruins the error bound when the number of qubits $n$ is large.

Our first and simplest example involves the inverse unitary~\cite{schuster2023learning,cotler2023emergent}.
Consider a quantum experiment that prepares the zero state on all qubits, scrambles the system via $U$, then applies a Pauli operator $X_1$ on the first qubit, then attempts to un-scramble the system via $U^\dagger$.
If $U$ is Haar-random, the Pauli perturbation completely disrupts the time-reversal, resulting in an effectively Haar-random state after $U^\dagger$ is applied.
On the other hand, if $U$ is low-depth, all qubits outside of the light-cone of the first qubit return to the zero state under $U^\dagger$.
Hence, one can measure e.g.~the fidelity for the last qubit to return to the zero state to easily distinguish any low-depth unitary from Haar-random.

\begin{figure}[t]
    \centering
    \includegraphics[width=0.75\textwidth]{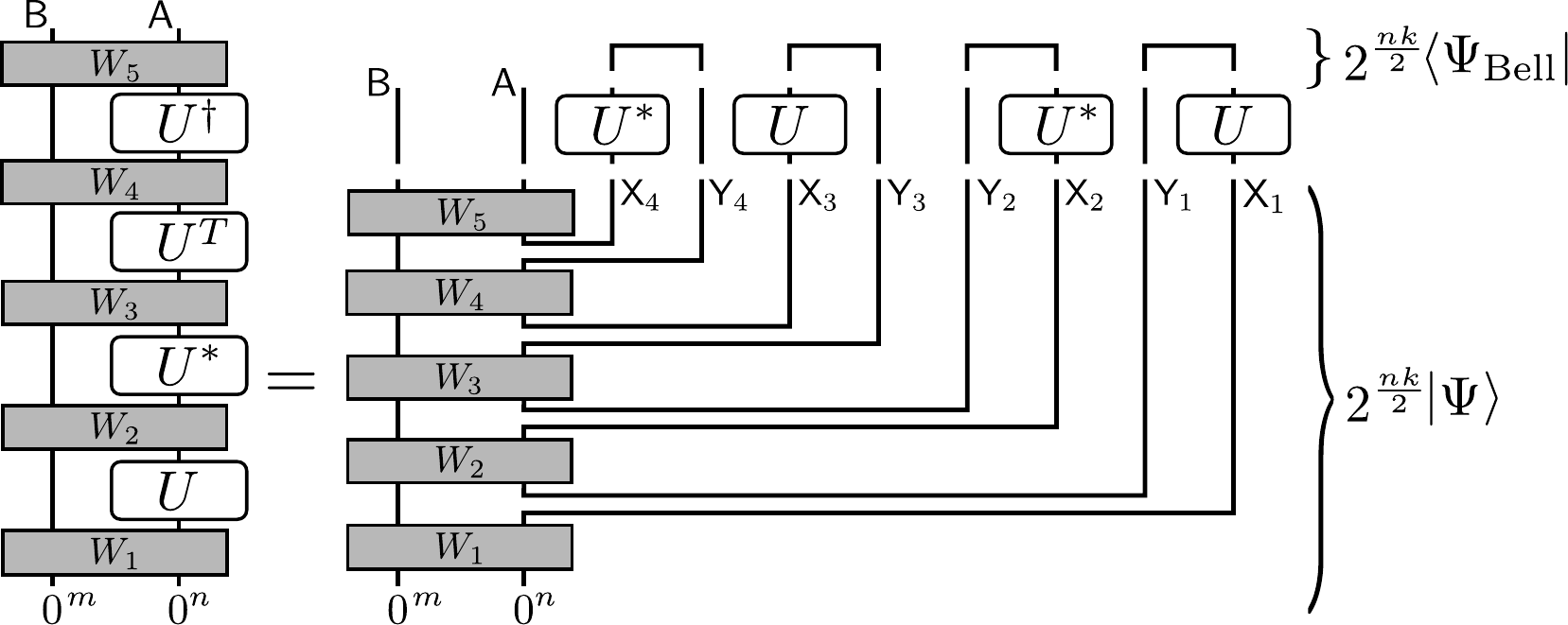}
    \caption{Reformulation of any quantum experiment that makes any $k$ queries to $U$, $U^*$, $U^T$, or $U^\dagger$ as an alternative experiment that makes a single query to $U^{\otimes p} \otimes U^{*,\otimes q}$ and performs $k$ post-selections on $n$-qubit Bell states. Here, $p$ counts the number of applications of $U$ and $U^T$ and $q$ counts the number of applications of $U^*$ and $U^\dagger$. The subsystem labels match the notation used in our proofs based on the path-recording framework (Appendix~\ref{sec: proof strong gluing} and~\ref{app: LRFC}).}
    \label{fig:reformulation}
\end{figure}

Our second example is similar but replaces the inverse unitary with the conjugate.
Consider an experiment that prepars the EPR state between two copies of $n$ qubits, applies a Pauli operator $X_1$ on the first qubit of the left side, then applies $U$ and $U^*$ in parallel to the left and right side.
This yields the state $(U \otimes U^*)(X_1 \otimes \mathbbm{1})\ket{\Psi_{\text{EPR}}} = (U X_1 U^\dagger \otimes \mathbbm{1})\ket{\Psi_{\text{EPR}}}$, where we use that $(\mathbbm{1} \otimes O)\ket{\Psi_{\text{EPR}}} = (O^T \otimes \mathbbm{1})\ket{\Psi_{\text{EPR}}}$ for any operator $O$.
When $U$ is Haar-random, the operator $UX_1 U^\dagger$ is a seemingly random operator on all $n$ qubits.
This implies that the fidelity for any pair of qubits between the left and right side to be in the EPR state is close to its maximally mixed value, $1/4$
On the other hand, when $U$ is low depth, the operator $U X_1 U^\dagger$ has support only on a small light-cone around the first qubit.
Thus, the fidelity of e.g.~the last pair of qubits to remain in the EPR state is equal to $1$.
Hence, one can again easily distinguish any low-depth unitary from Haar-random.

\paragraph{Strong measurable error.} We can now introduce our definitions of strong unitary designs. We introduce each definition in order of its prominence in our work. 

We begin with the strong analog of the measurable error. We say that a unitary ensemble is a strong unitary design with measurable error if it is indistinguishable from a Haar-random unitary in any quantum experiment with any combination $k$ queries to $U$ or $U^\dagger$ or $U^T$ or $U^*$.
\begin{definition}[Strong unitary $k$-design with measurable error]
    Let $\varepsilon > 0$. 
    An ensemble of unitaries $\mathcal{E}$ is a strong approximate unitary $k$-design with measurable error $\varepsilon$ if for any quantum experiment with any combination of $k$ queries to $U$ or $U^\dagger$ or $U^T$ or $U^*$, the output states when $U$ is sampled from $\mathcal{E}$ versus the Haar ensemble are $\varepsilon$-close in trace distance,
    \begin{equation}
        \sup_{W_1 \cdots W_{k+1}, U_1 \cdots U_k} \norm{\rho_\mathcal{E} - \rho_H}_1 \leq \varepsilon,
    \end{equation}
    where we have used the notation 
    \begin{equation} \nonumber
        \rho_\mathcal{E} = \E_{U \sim \mathcal{E}} \left[ W_{k+1} [U_k \otimes \mathbbm{1}_m] \cdots W_2 [U_1 \otimes \mathbbm{1}_m] W_1 |0^{n+m} \rangle\! \langle 0^{n+m} | W_1^\dagger [U_1^\dagger \otimes \mathbbm{1}_m] W_2^\dagger \cdots [U_k^\dagger \otimes \mathbbm{1}_m] W_{k+1}^\dagger 
        \right]
    \end{equation}
    to denote the expected output state of a general quantum experiment, where $W_i$ are arbitrary unitaries for each $i = 1,\ldots, k+1$ and $U_i \in \{U,U^\dagger,U^T,U^*\}$ for each $i = 1,\ldots, k$.
\end{definition}
\noindent Due to its natural definition, we will adopt the strong measurable error as our default error metric for strong unitary designs. Hence, in later sections, we will often refer to an \emph{approximate unitary $k$-design with measurable error $\varepsilon$} as simply an \emph{$\varepsilon$-approximate unitary $k$-design}.

\paragraph{Strong relative error.} We can also define analogs of the relative error and additive error for strong unitary designs. These share similar drawbacks to the definitions of the relative and additive error for standard unitary designs (as discussed in the previous section). Nonetheless, they will be convenient for select technical purposes in our analysis.
For convenience, we state the definitions in terms of the number of queries $p$ to $U$ or $U^T$ and the number of queries $q$ to $U^*$ or $U^\dagger$, instead of the total number of queries $k = p+q$.
\begin{definition}[Unitary $(p,q)$-design with relative error]
    Let $\varepsilon > 0$. Then an ensemble of unitaries $\mathcal{E}$ is an approximate unitary $(p,q)$-design with relative error $\varepsilon$ if
    \begin{equation}
    (1 - \varepsilon) \Phi^{(p,q)}_H \preceq \Phi^{(p,q)}_{\mathcal{E}} \preceq (1+\varepsilon) \Phi^{(p,q)}_H.
    \end{equation}
\end{definition}
\noindent The relative error guarantees that the expectation value of any positive-valued operator $\chi$ in the output state of any quantum experiment that performs any combination of $p$ queries to $U$ or $U^\dagger$ and $q$ queries to $U^T$ or $U^*$, is equal to its Haar value to within multiplicative precision. 
This follows from the fact that any such experiment can be reformulated as an experiment that performs $p$ and $q$ parallel queries to $U$ and $U^*$ and post-selects on the EPR state (Fig.~\ref{fig:reformulation}).

\paragraph{Strong additive error.} Finally, we define the additive error for strong unitary designs as follows.
\begin{definition}[Unitary $(p,q)$-design with additive error]
    Let $\varepsilon > 0$.
    An ensemble of unitaries $\mathcal{E}$ is an approximate unitary $(p,q)$-design with additive error $\varepsilon$ if
    \begin{equation}
    \norm{\Phi^{(p,q)}_\mathcal{E} - \Phi^{(p,q)}_H}_\diamond \leq \varepsilon.
    \end{equation}
\end{definition}
\noindent Physically, the additive error is equivalent to security under $p$ parallel queries to $U$ and $q$ parallel queries to $U^*$.
In principle, one could extend this definition to include parallel queries to $U^\dagger$ and $U^T$ as well.
However, the only use of the additive error in our work is as a stepping stone to prove small relative and measurable errors, and the current definition will be sufficient for these purposes.

\paragraph{Translating between strong unitary design approximation errors.} 
As in the standard case, it is possible translate between different notions of approximation error for strong unitary designs.
We formalize this in the following lemma.
The first statement of the lemma follows immediately from the definitions of the approximation errors.
The second statement is a significant result of our work. 
We provide the proof of the second statement in Section~\ref{sec: approx twirl}.
\begin{lemma}[Translating between different strong approximation errors] \label{lemma:translating-strong}
    Let $k = p+q$.
    The strong additive error $\varepsilon^{(p,q)}_a$ is upper bounded by the strong measurable error $\varepsilon^{(k)}_m$, which is in turn upper bounded by twice the strong relative error $\varepsilon^{(p,q)}_r$,
    \begin{equation}
        \varepsilon^{(p,q)}_a \leq \varepsilon^{(k)}_m \leq 2\varepsilon^{(p,q)}_r.
    \end{equation}
    Conversely, the strong relative error is bounded by the strong additive error as follows,
    \begin{equation}
        \eps_r^{(p,q)} \leq \left(\frac{4^{n(p+q)}}{p!q!} \right) 2 \eps^{(p,q)}_a + \frac{2(p+q)^2}{2^n},
    \end{equation}
    for any $2k^2 = 2(p+q)^2 \leq 2^n$.
\end{lemma}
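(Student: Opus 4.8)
The plan is to reduce the second statement to a structural property of the mixed Haar twirl and then adapt the additive‑to‑relative translation behind \cref{lemma:translating}. Write $\Phi_H \equiv \Phi_H^{(p,q)}$, $\Phi_{\mathcal{E}} \equiv \Phi_{\mathcal{E}}^{(p,q)}$, and $\Delta \equiv \Phi_{\mathcal{E}} - \Phi_H$, and recall that the additive error is exactly $\lVert\Delta\rVert_\diamond$ while the relative error $\eps_r^{(p,q)}$ is the least $\varepsilon$ for which $\varepsilon\Phi_H \pm \Delta$ is completely positive. The first step is to record, by the same Haar‑invariance manipulations as in the unmixed case, that $\Phi_H$ is idempotent and Hilbert--Schmidt self‑adjoint — hence the orthogonal projection onto the mixed commutant $\mathcal{C}_{p,q} = \operatorname{span}\{\pi^\Gamma : \pi \in S_{p+q}\}$ appearing in \cref{lemma: explicit mixed twirl} — and that $\Phi_{\mathcal{E}}\circ\Phi_H = \Phi_H\circ\Phi_{\mathcal{E}} = \Phi_H$ for every ensemble, so that $\Delta$ kills $\mathcal{C}_{p,q}$ and has range inside $\mathcal{C}_{p,q}^\perp$.

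The second step turns these facts into a quantitative bound. Because $\Delta$ ``lives on'' $\mathcal{C}_{p,q}^\perp$ from both sides while $\Phi_H$ is, on $\mathcal{C}_{p,q}$, a strictly positive CP object, $\varepsilon\Phi_H \pm \Delta$ is completely positive once $\varepsilon$ exceeds $\lVert\Delta\rVert_\diamond$ times the ``conditioning'' of $\Phi_H$, i.e.\ the reciprocal of its smallest CP‑eigenvalue, which is controlled by the norm of the inverse Gram matrix of the operators $\{\pi^\Gamma\}$. To bound this conditioning I would decompose $\Phi_H^{(p,q)}$ over contraction patterns: this is exactly the content of our statement that, inside any experiment, a Haar $U$ and a Haar $U^\dagger$ either cancel or behave as two independent Haar unitaries, and concretely (via the EPR‑reformulation of \cref{fig:reformulation}) it writes $\Phi_H^{(p,q)}$ as a finite sum, indexed by the set $M$ of $U$‑versus‑$U^*$ registers that are paired off, of maps of the form ``project onto the EPR pairs dictated by $M$'' composed with an independent product twirl $\Phi_H^{(p-|M|)}\otimes\Phi_H^{(q-|M|)}$. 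Applying the unmixed bound from \cref{lemma:translating} to each factor gives, for the $M$‑term, a conditioning at most $\tfrac{4^{n(p-|M|)}}{(p-|M|)!}\cdot\tfrac{4^{n(q-|M|)}}{(q-|M|)!}\,(1+\tfrac{k^2}{2^n})$, which is worst at $M = \varnothing$ and equals $\tfrac{4^{n(p+q)}}{p!\,q!}(1+\tfrac{k^2}{2^n})$; summing the (polynomially many, subleading) nontrivial patterns and using $2k^2 \le 2^n$ to collapse the $(1+k^2/2^n)$ factors into an overall factor $2$ and an additive slack $\tfrac{2k^2}{2^n}$ yields $\eps_r^{(p,q)} \le \big(\tfrac{4^{n(p+q)}}{p!q!}\big)\,2\,\eps_a^{(p,q)} + \tfrac{2(p+q)^2}{2^n}$.

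The step I expect to be the main obstacle is the conditioning bound: one must control $\Phi_H^{(p,q)}$ from below in the \emph{completely positive} order, not the Hilbert--Schmidt order, even though the partial transpose $\Gamma$ renders the operators $\pi^\Gamma$ indefinite. Making ``$\Delta$ cannot see the part of $\Phi_H$ outside $\mathcal{C}_{p,q}$'' rigorous requires a careful Choi‑matrix argument showing that the support of $J(\Delta)$ is compatible with that of $J(\Phi_H^{(p,q)})$, so that no out‑of‑support mass of $\Delta$ can spoil positivity, together with matching the near‑degenerate corrections (relevant when $p+q$ approaches $2^{n/2}$) to the $\tfrac{2(p+q)^2}{2^n}$ term. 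The remaining work — enumerating contraction patterns and bounding the associated sums of Weingarten coefficients — is routine and parallels the unmixed case.
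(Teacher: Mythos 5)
Your structural observations (idempotence and self-adjointness of $\Phi_H^{(p,q)}$, $\Delta\circ\Phi_H=\Phi_H\circ\Delta=0$, the reduction of complete positivity to a support condition plus a minimum-nonzero-eigenvalue bound on the Choi matrix) are correct and are indeed the right frame, but the proposal has a genuine gap at precisely the step you flag as ``the main obstacle,'' and that step is the bulk of the paper's actual proof. Concretely, you need a lower bound, in the completely positive order, on the exact mixed Haar twirl, i.e.\ on the minimum nonzero eigenvalue of $J(\Phi_H^{(p,q)})$. Your plan is to decompose $\Phi_H^{(p,q)}$ over cancellation patterns $M$ and take the worst blockwise conditioning (the $M=\varnothing$ term), but the EPR-pairing projectors belonging to different patterns of the same size are \emph{not} orthogonal, and for overlapping PSD blocks the minimum nonzero eigenvalue of the sum can be arbitrarily smaller than the blockwise minima (two nearly parallel rank-one projectors already give a near-zero eigenvalue of the sum). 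So ``worst at $M=\varnothing$, sum the subleading patterns'' does not control the conditioning of $J(\Phi_H^{(p,q)})$, and no argument is supplied that would. The additive $2(p+q)^2/2^n$ term is likewise attributed to vague ``near-degenerate corrections'' rather than derived.

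The paper circumvents exactly this issue by never lower-bounding the exact twirl directly. It first builds an \emph{approximate} mixed twirl $\Phi_a^{(p,q)}$ whose $\ell$-EPR components are made mutually orthogonal via the partial isometries $\tilde I_\ell$ (this orthogonalization is the content of Appendix~\ref{sec: details mixed}); then Lemma~\ref{lemma: relative error EPR strong} shows $[\Phi_a^{(p,q)}\otimes\mathbbm{1}](P_{\text{EPR}})$ is a sum of \emph{orthogonal} rescaled projectors with minimum nonzero eigenvalue $p!q!/4^{n(p+q)}$, together with a stabilization argument giving the support containment of the ensemble's Choi matrix; finally Lemma~\ref{lemma: approx mixed Haar twirl} relates $\Phi_a^{(p,q)}$ to $\Phi_H^{(p,q)}$ up to relative error $(p+q)^2/2^n$ via a Weingarten spectral-norm bound, which is the clean origin of the $2(p+q)^2/2^n$ slack and of the factor $2$ in the final inequality. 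Your intuition that ``$U$ and $U^\dagger$ either cancel or act independently'' matches the paper's picture, but turning it into the required CP-order bound needs the orthogonal $\ell$-subspace decomposition and the exact-versus-approximate Weingarten comparison, neither of which the proposal provides; as written it is a plan with the central quantitative step missing.
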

In Section~\ref{sec: strong linear random}, we apply the second statement of the lemma to translate existing results on the spectral gap of one-dimensional random circuits into the statement that such circuits form strong unitary designs with small relative error. 

\subsection{The mixed Haar twirl and strong unitary designs} \label{sec: approx twirl}

In this section, we build a basic framework for understanding the structure of the mixed Haar twirl.
We first use this framework to derive a simpler approximate formula for the mixed Haar twirl.
We then use this approximate formula to prove the second statement of Lemma~\ref{lemma:translating-strong}, which allows one to perform tight translations between the additive and relative errors of strong unitary designs.

\subsubsection{The approximate mixed Haar twirl} 

Let us begin by re-printing the formula for the mixed Haar twirl from Lemma~\ref{lemma: explicit mixed twirl},
\begin{equation} 
    \Phi^{(p,q)}_{H}(X) = \sum_{\sigma, \tau \in S^\Gamma_k} \widetilde{\Wg}_{\sigma, \tau} \cdot \text{tr}(X \sigma^{\dagger}) \cdot \tau,
\end{equation}
where $\sigma \equiv \pi^\Gamma$ and $\tau \equiv \tilde{\pi}^\Gamma$ are summed over the \emph{partially transposed permutations}, $S^\Gamma_k$.
Here, we define $\widetilde{\Wg}_{\sigma, \tau} \equiv \Wg_{\pi,\tilde{\pi}}$.
While this formula is simple to write down, it  provides fairly little intuition about the properties and behavior of the mixed Haar twirl.
We begin this section by reformulating this expression in a more intuitive manner.
We will then derive our \emph{approximate} expression for the mixed Haar twirl from this reformulation.
For brevity, we defer the proofs of several facts stated in our reformulation of the mixed Haar twirl to the later Appendix~\ref{sec: details mixed}.
The proofs are straightforward but require many detailed steps.

Unlike the permutation operators, which appear in the standard Haar twirl, the partially transposed permutations are not necessarily unitary.
Indeed, a key role in the mixed Haar twirl is played by a subset of partially transposed permutations that are \emph{projectors}.
Consider any permutation $\pi$ that is equal to a tensor product of (i) identity elements, and (ii) swap operations between a copy on the left side and a copy on the right side.
Here, the ``left side'' corresponds to the $p$ copies that $U$ acts on, and the ``right side'' to the $q$ copies that $U^*$ acts on.
When $\pi$ is partially transposed, it results in a permutation $\sigma = \pi^\Gamma$ that is a tensor product of (i) identity elements, and (ii) EPR projectors between a copy on the left side and a copy on the right side.
This follows because the partial transpose of a swap operator is proportional to an EPR projector, $\mathcal{S}_{ij}^\Gamma = 2^n P_{\text{EPR},ij}$ for copies $i,j$.
If we label the set of pairs between the left and right side as $\alpha_\sigma = \{(i_1,j_1),(i_2,j_2),\ldots,(i_{|\alpha_\sigma|},j_{|\alpha_\sigma|})\}$, then $P_{\alpha_\sigma} \equiv \sigma / 2^{n|\alpha_\sigma|} $ projects onto EPR states on every pair in $\alpha_\sigma$ and acts trivially on the remaining copies.

In Appendix~\ref{sec: details mixed}, we show that these projectors break up the Hilbert space $\mathcal{H}^{\otimes p} \otimes \mathcal{H}^{\otimes q}$ into a tensor sum of distinct components,
\begin{equation} \label{eq: H decomposition}
    \mathcal{H}^{\otimes p} \otimes \mathcal{H}^{\otimes q} \cong \bigoplus_{\ell = 0}^{\min(p,q)} \left( \left[ \mathcal{H}^{\otimes (p-\ell)} \otimes \mathcal{H}^{\otimes (q-\ell)} \right]_{\text{nE}} \otimes \mathcal{A}_\ell \right).
\end{equation}
Here, $\left[ \mathcal{H}^{\otimes (p-\ell)} \otimes \mathcal{H}^{\otimes (q-\ell)} \right]_{\text{nE}}$ denotes the subspace of $\mathcal{H}^{\otimes (p-\ell)} \otimes \mathcal{H}^{\otimes (q-\ell)}$ that is orthogonal to every EPR projector between a copy $i$ on the left side and $j$ on the right side.
Meanwhile, $\mathcal{A}_\ell$ is an auxiliary Hilbert space of dimension $|\mathcal{A}_\ell| = {p\choose \ell}{q \choose \ell}\ell!$. This equals the number of distinct sets of $\ell$ pairs between the left and right side.
Intuitively, the $\ell$-th subspace in the tensor sum contains all states in $\mathcal{H}^{\otimes p} \otimes \mathcal{H}^{\otimes q}$ that have exactly $\ell$ EPR pairs between the left and right sides.
There are ${p\choose \ell}{q \choose \ell}\ell!$ ways to place the $\ell$ EPR pairs, which are indexed by the subsystem $\mathcal{A}_\ell$.
Once placed, the remaining $p-\ell$ and $q-\ell$ copies are free to be in any quantum state that has zero EPR pairs between the left and right.
We provide a detailed derivation of Eq.~(\ref{eq: H decomposition}) in Appendix~\ref{sec: details mixed} using only basic properties of the partially transposed permutations.

A crucial property of the EPR state is that it is invariant under the action of $U \otimes U^*$ for any $U$.
This follows because $(U \otimes \mathbbm{1})P_{\text{EPR}} = (\mathbbm{1} \otimes U^T)P_{\text{EPR}}$ and $U^* U^T = \mathbbm{1}$.
Hence, when a mixed unitary $(U)^{\otimes p} \otimes (U^*)^{\otimes q}$ is applied to a subspace with exactly $\ell$ EPR pairs, its action on each of the $\ell$ EPR pairs becomes trivial.
This leaves $p-\ell$ copies of $U$ and $q-\ell$ copies of $U^*$ remaining. 
To formalize this, for each value of $\ell$ we define a partial isometry $\tilde{\mathcal{I}}_\ell$ (see Appendix~\ref{sec: details mixed} for details) that maps each $\ell$-EPR subspace $\mathcal{H}^{\otimes p} \otimes \mathcal{H}^{\otimes q}$ to the  $\ell$-th Hilbert space on the right side of Eq.~(\ref{eq: H decomposition}).
The partial isometries provide an orthogonal decomposition of the full Hilbert space, $\mathbbm{1} = \sum_\ell \tilde{I}_\ell^\dagger \tilde{I}_\ell$, where each $\tilde{I}_\ell^\dagger \tilde{I}_\ell$ projects onto the subspace of exactly $\ell$ EPR pairs.
We then show that one can re-write the mixed Haar twirl as,
\begin{equation} \nonumber
\begin{split}
    \Phi^{(p,q)}_H(X) &
     = \E_{U \sim H}  \sum_{\ell}  \tilde{I}_\ell^\dagger 
    \left((U)^{\otimes (p-\ell)} \otimes (U^*)^{\otimes (q-\ell)} \otimes \mathbbm{1} \right)
    \tilde{I}^{}_\ell \, X \, \tilde{I}_{\ell}^\dagger \left( (U^\dagger)^{\otimes (p-\ell)} \otimes (U^T)^{\otimes (q-\ell)}  \otimes \mathbbm{1} \right) \tilde{I}^{}_{\ell},
\end{split}
\end{equation}
where the action of the mixed unitary twirl is ``pulled'' inside each partial isometry to act on the Hilbert space $\left[ \mathcal{H}^{\otimes (p-\ell)} \otimes \mathcal{H}^{\otimes (q-\ell)} \right]_{\text{nE}}$.
One can then compute each twirl explicitly, which yields
\begin{equation} \label{eq: exact mixed twirl subspace}
\begin{split}
    \Phi^{(p,q)}_H(X) & = \sum_{\ell} \tilde{I}_{\ell}^\dagger \left[ \sum_{\pi_L  \pi_R} \sum_{\tilde{\pi}_L  \tilde{\pi}_R}
    \tr( \tilde{I}^{}_\ell \, X \, \tilde{I}_{\ell}^\dagger (\pi_L \otimes \pi_R)^{-1} ) \cdot \Wg^{(p+q-2\ell)}_{\pi_L \otimes \pi_R, \tilde{\pi}_L \otimes \tilde{\pi}_L} \cdot  \, (\tilde{\pi}_L \otimes \tilde{\pi}_R) \right] \tilde{I}_{\ell}.
\end{split}
\end{equation}
Here, we apply the formula in Lemma~\ref{lemma: explicit mixed twirl} for the mixed Haar twirl for each $\ell$.
The only partially transposed permutations that contribute correspond to tensor products $\sigma \equiv \pi_L \otimes \pi_R$ and $\tau \equiv \tilde{\pi}_L \otimes \tilde{\pi}_R$.
Every partially transposed permutation besides these contains at least one EPR projector and hence vanishes inside $\tilde{I}^\dagger_\ell(\cdot)\tilde{I}_\ell$.
The trace is partial over $\left[ \mathcal{H}^{\otimes (p-\ell)} \otimes \mathcal{H}^{\otimes (q-\ell)}\right]_{\text{nE}}$ and does not act on $\mathcal{A}_\ell$. 
We refer to Appendix~\ref{sec: details mixed} for full details.

This completes our exact reformulation of the mixed Haar twirl.
To provide a more intuitive picture of its behavior, we will now derive our simpler approximate expression.
Before doing so, let us first recall the approximate formula for the standard Haar twirl from Ref.~\cite{schuster2024random}:

\vspace{2mm}
\noindent \textbf{Lemma~1 of Ref.~\cite{schuster2024random}} (Approximation for Haar twirl)\textbf{.}
\emph{
For any $k^2 \leq D$, the Haar twirl is approximated by,
\begin{equation} \label{eq: Phi a def}
\Phi_a^{(k)}( X ) \equiv \frac{1}{2^{nk}} \sum_\pi \tr( X \pi^{-1} ) \cdot \pi,
\end{equation}
up to relative error, $(1-\varepsilon) \Phi^{(k)}_a \preceq \Phi^{(k)}_H \preceq (1+\varepsilon) \Phi^{(k)}_a$, for $\varepsilon = k^2/2D/(1-k^2/2D)$.
}
\vspace{2mm}

\noindent The approximate formula is accurate up to exponentially small error and enables much easier applications due to its simplicity. 
In effect, it replaces the Weingarten matrix $\Wg_{\pi,\tilde{\pi}}$ in the exact Haar twirl with the identity matrix $(1/2^{nk})\delta_{\pi,\tilde{\pi}}$.

We can now provide a similar simplification for the mixed Haar twirl.
\begin{lemma}[Approximation for the mixed Haar twirl] \label{lemma: approx mixed Haar twirl}
    For any $k^2 \leq D$, the mixed Haar twirl is approximated by,
    \begin{equation} 
    \begin{split}
        \Phi^{(p,q)}_H(X) & = \sum_{\ell} \tilde{I}_{\ell}^\dagger \left[ \frac{1}{2^{n(p+q-2\ell)}} \sum_{\pi_L  \pi_R}
        \tr( \tilde{I}^{}_\ell \, X \, \tilde{I}_{\ell}^\dagger (\pi_L \otimes \pi_R)^{-1} ) \cdot  (\pi_L \otimes \pi_R) \right] \tilde{I}_{\ell}.
    \end{split}
    \end{equation}
    up to relative error, $(1-\varepsilon) \Phi_a^{(p,q)} \preceq \Phi^{(p,q)}_H \preceq (1+\varepsilon) \Phi_a^{(p,q)}$, for $\varepsilon = (p+q)^2/D$.
    The approximate mixed Haar twirl can be equivalently written as,
    \begin{equation} \label{eq: Phi a p q def}
    \Phi_a^{(p,q)}
    = \sum_\ell \left( \mathcal{I}^\dagger_\ell \circ \big[ \Phi_a^{(p-\ell)} \otimes \Phi_a^{(q-\ell)} \otimes \mathbbm{1}_{\mathcal{A}_\ell} \big] \circ \mathcal{I}_\ell \right)  
    \end{equation}
    where $\Phi_a^{(p-\ell)}$ and $\Phi_a^{(q-\ell)}$ are the approximate standard Haar twirls [Eq.~(\ref{eq: Phi a def})] and $\mathcal{I}_\ell(\cdot) \equiv \tilde{I}_\ell (\cdot) \tilde{I}^\dagger_\ell$ denotes conjugation by the partial isometry.
\end{lemma}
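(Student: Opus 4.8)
The plan is to establish the relative-error bound one ``$\ell$-sector'' at a time, using the exact reformulation of the mixed twirl derived above (Eq.~(\ref{eq: exact mixed twirl subspace})) together with the orthogonal decomposition $\mathbbm{1} = \sum_\ell \tilde I_\ell^\dagger \tilde I_\ell$ into subspaces of exactly $\ell$ EPR pairs. Both the mixed twirl $\Phi^{(p,q)}_H$ and the claimed approximation $\Phi^{(p,q)}_a$ of Eq.~(\ref{eq: Phi a p q def}) have the form $\sum_\ell \mathcal I_\ell^\dagger \circ T_\ell \circ \mathcal I_\ell$, with $\mathcal I_\ell(\cdot) = \tilde I_\ell(\cdot)\tilde I_\ell^\dagger$ and $T_\ell$ acting on the $\ell$-th Hilbert space $[\mathcal H^{\otimes(p-\ell)}\otimes \mathcal H^{\otimes(q-\ell)}]_{\text{nE}}\otimes \mathcal A_\ell$, trivially on the $\mathcal A_\ell$ factor: for $\Phi^{(p,q)}_H$ the sector map is the bracketed expression of Eq.~(\ref{eq: exact mixed twirl subspace}), and for $\Phi^{(p,q)}_a$ it is $T_\ell^a := \Phi_a^{(p-\ell)}\otimes\Phi_a^{(q-\ell)}$ restricted to the no-EPR subspace, which is exactly what Eq.~(\ref{eq: Phi a p q def}) asserts. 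Because $\mathcal I_\ell$ and $\mathcal I_\ell^\dagger$ are completely positive and the $\tilde I_\ell$ have mutually orthogonal supports, $\Psi\mapsto\mathcal I_\ell^\dagger\circ\Psi\circ\mathcal I_\ell$ preserves the completely-positive (relative-error) order and the $\ell$-summands reassemble without interference; it therefore suffices to prove, for each $\ell$, a per-sector bound $(1-\varepsilon_\ell)\,T_\ell^a\preceq T_\ell\preceq(1+\varepsilon_\ell)\,T_\ell^a$ with $\varepsilon_\ell\le (p+q-2\ell)^2/2^n$. Reassembling and using $p+q-2\ell\le p+q$ in every sector then yields the stated $\varepsilon = (p+q)^2/2^n$, the elementary inequality $\tfrac{x/2}{1-x/2}\le x$ for $x\le 1$ converting the $\tfrac{k_\ell^2/2^{n+1}}{1-k_\ell^2/2^{n+1}}$ error of Lemma~1 of~\cite{schuster2024random} into $k_\ell^2/2^n\le(p+q)^2/2^n$ (valid since $(p+q)^2\le 2^n$).

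For the per-sector statement I would rerun the proof of Lemma~1 of~\cite{schuster2024random}, now for the frame of \emph{block} permutation operators $\{\pi_L\otimes\pi_R : \pi_L\in S_{p-\ell},\, \pi_R\in S_{q-\ell}\}$ acting on the no-EPR subspace $V_\ell := [\mathcal H^{\otimes(p-\ell)}\otimes\mathcal H^{\otimes(q-\ell)}]_{\text{nE}}$, rather than for all of $S_{p+q-2\ell}$ on the full space. By the results of Appendix~\ref{sec: details mixed}, the only partially transposed permutations that do not annihilate $V_\ell$ are exactly these block permutations, so on $V_\ell$ the map $T_\ell$ is the dual-frame reconstruction $Y\mapsto\sum \Wg^{(p+q-2\ell)}_{\pi_L\otimes\pi_R,\, \tilde\pi_L\otimes\tilde\pi_R}\,\tr\big(Y(\pi_L\otimes\pi_R)^{-1}\big)\,(\tilde\pi_L\otimes\tilde\pi_R)$ while $T_\ell^a$ is the same sum with $\Wg^{(p+q-2\ell)}$ replaced by $2^{-n(p+q-2\ell)}$ times the identity. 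I would then note that (i) the Gram matrix $\tr_{V_\ell}\!\big((\pi_L\otimes\pi_R)^\dagger(\tilde\pi_L\otimes\tilde\pi_R)\big)$ deviates from $2^{n(p+q-2\ell)}$ times the identity only through off-diagonal entries of order $2^{n(p+q-2\ell-1)}$, at most $O((p+q)^2)$ of them nonzero per row, together with a still smaller correction from the codimension of $V_\ell$ in $\mathcal H^{\otimes(p+q-2\ell)}$; (ii) hence, by a Gershgorin estimate, this Gram matrix and the block-indexed restriction of the Weingarten matrix both lie within a factor $1\pm O((p+q-2\ell)^2/2^n)$ of a scalar multiple of the identity; and (iii) therefore, exactly as in~\cite{schuster2024random}, the completely positive maps $T_\ell$ and $T_\ell^a$ agree up to relative error $O((p+q-2\ell)^2/2^n)$. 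The equivalence of the two displayed formulas in the lemma is then immediate: one rewrites $T_\ell^a = \Phi_a^{(p-\ell)}\otimes\Phi_a^{(q-\ell)}$ back into the sum over block permutations using the definition~(\ref{eq: Phi a def}) of the approximate standard Haar twirl.

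The step I expect to be the main obstacle is (iii). As in the standard case, one cannot simply transport a \emph{matrix} inequality such as $\Wg^{(k)}\succeq (1-\varepsilon)\,2^{-nk}\,\mathbbm{1}$ (or its block-indexed restriction) into a completely-positive inequality between the corresponding reconstruction maps, because the assignment ``$M\mapsto\sum_{\pi,\tilde\pi}M_{\pi,\tilde\pi}\,\pi\,\tr(\tilde\pi^{-1}\,\cdot\,)$'' is not monotone for the completely-positive order (a rank-one $M = |w\rangle\langle w|$ gives a map whose Choi operator is $W\otimes\overline{W}$, generally indefinite). One must instead re-derive, in the present setting of block permutations compressed to $V_\ell$, the structural argument of~\cite{schuster2024random} that makes the reconstruction genuinely completely positive, and verify that the codimension-of-$V_\ell$ corrections to the relevant Gram matrix stay $O((p+q)^2/2^n)$ and do not spoil the completely-positive conclusion. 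The $\ell$-decomposition, the completely-positive sandwiching by the partial isometries, and the error bookkeeping are routine given the preparatory results of Appendix~\ref{sec: details mixed}.
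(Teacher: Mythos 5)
Your skeleton matches the paper's: decompose both channels as tensor sums over the $\ell$-EPR sectors, observe that within each sector the exact map is the $(p-\ell,q-\ell)$ mixed twirl restricted to the no-EPR subspace while the approximation is $\Phi_a^{(p-\ell)}\otimes\Phi_a^{(q-\ell)}$, and trace the error to the deviation of the Weingarten matrix from $2^{-n(p+q-2\ell)}\,\delta_{\pi,\tilde\pi}$, of size $(p+q)^2/2^n$. But the proof is not complete, and the missing piece is exactly the step you flag as ``the main obstacle'' and then defer: converting closeness of the coefficient matrices into the completely-positive ordering $(1-\varepsilon)\Phi_a^{(p,q)}\preceq\Phi_H^{(p,q)}\preceq(1+\varepsilon)\Phi_a^{(p,q)}$. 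Saying that one ``must re-derive the structural argument of~\cite{schuster2024random} in the block-permutation setting compressed to $V_\ell$'' is a statement of the problem, not an argument; as you yourself note, the reconstruction-map assignment is not monotone in the coefficient matrix, so nothing in your steps (i)--(ii) yields a relative-error bound. In the paper this gap is closed by Lemma~\ref{lemma: relative error EPR strong} (its no-EPR variant): one works with the Choi/EPR state, shows that $\rho_a=[(\noEPR\circ\Phi_a)\otimes\mathbbm{1}](P_{\mathrm{EPR}})$ is a nonnegative combination of mutually commuting projectors whose minimum nonzero eigenvalue is $p!q!/D^{2(p+q)}$, shows that the support of the corresponding exact-twirl state is contained in that of $\rho_a$, deduces relative error on the EPR state with the prefactor $4^{n(p+q)}/(p!q!)$ times $\lVert[\delta\tilde\Phi\otimes\mathbbm{1}](\tilde P_{\mathrm{EPR}})\rVert_\infty$, and finally lifts relative error on the EPR state to relative error on arbitrary inputs via $\Phi(\chi)=D^{2(p+q)}\tr_2\big((\mathbbm{1}\otimes\chi^T)[\Phi\otimes\mathbbm{1}](P_{\mathrm{EPR}})\big)$. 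Without this (or an equivalent positivity mechanism), your per-sector claim $(1-\varepsilon_\ell)T_\ell^a\preceq T_\ell\preceq(1+\varepsilon_\ell)T_\ell^a$ is unsupported.

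Two smaller points. First, your description of the sector map as the ``dual-frame reconstruction'' for the block permutations on $V_\ell$ is not quite right: the coefficients appearing in Eq.~(\ref{eq: exact mixed twirl subspace}) are the restriction to block permutations of the full $(p+q-2\ell)$-copy Weingarten matrix (the inverse of the Gram matrix on the \emph{full} space), not the inverse of the Gram matrix of the restricted frame on $V_\ell$; the spectral control you want follows from the fact that eigenvalues of a principal submatrix are sandwiched by those of the full matrix together with the bound of~\cite{harrow2023approximate}, which is how the paper argues, rather than from a Gershgorin estimate on a $V_\ell$-restricted Gram matrix. Second, once the EPR-state machinery is in place, the quantity actually needed is not a two-sided eigenvalue bound on the restricted Weingarten matrix but a bound on the entrywise-absolute-value deviation matrix $\hat A_{\pi,\tilde\pi}=\lvert 2^{n(p+q-2\ell)}\Wg_{\pi,\tilde\pi}-\delta_{\pi,\tilde\pi}\rvert$; the paper bounds $\lVert\hat A\rVert_\infty$ by exploiting its positivity and permutation invariance (constant maximal eigenvector) and its being a submatrix of $\lvert\delta\hat W\rvert$, again invoking~\cite{harrow2023approximate}. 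These are repairable details, but the absent step (iii) is a genuine gap rather than a routine verification.
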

\noindent Similar to the standard Haar twirl, our approximation in effect replaces the Weingarten matrix in Eq.~(\ref{eq: exact mixed twirl subspace}) by the product of identity matrices, $(1/D^{p-\ell}) \delta_{\pi_L, \tilde{\pi}_L}$ and $(1/D^{q-\ell}) \delta_{\pi_R, \tilde{\pi}_R}$.
We provide a proof of the lemma in Appendix~\ref{sec: proof approx mixed},

In addition to its technical convenience, the approximate mixed Haar twirl provides an intuitive picture for the behavior of random unitaries in experiments involving time-reversal.
Consider a quantum experiment querying $U$ and $U^\dagger$ (or $U^*$ or $U^T$) many times.
First, within the experiment, some subset of the applications of $U$ might cancel with applications of $U^\dagger$ (or $U^*$, etc.).
The number of cancellations corresponds to the $\ell$ in our earlier discussion, and the pairing of the cancellations corresponds to the register $\mathcal{A}_\ell$.
In principle, a quantum experiment could be a superposition of many different pairings, hence $\mathcal{A}_\ell$ is a quantum register.
Then, among the un-cancelled applications, $U$ and $U^\dagger$ behave indistinguishably from two \emph{independent} random unitaries $U$ and $V$.
This is the statement of Lemma~\ref{lemma: approx mixed Haar twirl}: After pulling $U$ and $U^*$ inside the partial isometry, their twirl is equal to the tensor product of two standard Haar twirls $\Phi^{(p-\ell)}_a$ and $\Phi^{(q-\ell)}_a$ up to small relative error.
So the behavior of time-reversal experiments is in some sense very simple: $U$ and $U^\dagger$ either cancel or they act completely independently.
The function of the original Weingarten matrix elements and the ensuing partial isometries  is solely to keep track of all the different ways that $U$ and $U^\dagger$ might cancel.

\subsubsection{Bounding the relative error of strong unitary designs}

We now provide a handful of useful techniques for bounding the relative error of strong unitary designs.
Our strategy is as follows. 
We first establish a technical lemma (Lemma~\ref{lemma: relative error EPR strong}) that allows one to bound the relative error between any mixed unitary ensemble and the approximate mixed Haar twirl.
We then use this lemma to bound the relative error between the exact mixed Haar twirl and the approximate mixed Haar twirl, which proves Lemma~\ref{lemma: approx mixed Haar twirl}.
We then combine Lemma~\ref{lemma: approx mixed Haar twirl} with Lemma~\ref{lemma: relative error EPR strong} to prove Lemma~\ref{lemma:translating-strong}, which allows one to translate additive to relative errors for strong unitary designs.


Our first technical lemma is as follows. It is inspired by Lemma~7 of Ref.~\cite{schuster2024random} for the standard approximate Haar twirl.
\begin{lemma}[Relative error to the approximate mixed Haar twirl] \label{lemma: relative error EPR strong}
    Consider a unitary ensemble $\mathcal{E}$ and its mixed twirl $\Phi^{(p,q)}_{\mathcal{E}}$.
    The mixed twirl is approximated by $\Phi^{(p,q)}_a$ up to relative error,
    \begin{equation} \label{eq: relative error EPR normal}
        \varepsilon = \frac{4^{n(p+q)}}{p! q!} \big\lVert \big[ \delta \Phi \otimes \mathbbm{1} \big]( [\noEPR \otimes \mathbbm{1}] P_{\text{EPR}} [\noEPR \otimes \mathbbm{1}] )\big\rVert_\infty,
    \end{equation}
    where $\delta \Phi \equiv \Phi^{(p,q)}_{\mathcal{E}} - \Phi^{(p,q)}_a$ and  $P_{\text{EPR}}$ is the projector onto the EPR state on $(\mathcal{H}^{\otimes p} \otimes \mathcal{H}^{\otimes q})^{\otimes 2}$.
    
    Further, let $\noEPR \equiv \tilde{I}_{\ell=0}^\dagger \tilde{I}_{\ell=0}$ project to the subspace of $\mathcal{H}^{\otimes p} \otimes \mathcal{H}^{\otimes q}$ that is orthogonal to all EPR projectors between a copy on the left side and a copy on the right side.
    On the no-EPR subspace $\noEPR$, the mixed twirl is approximated by $\Phi^{(p)}_a \otimes \Phi^{(q)}_a$ up to relative error,
    \begin{equation} \label{eq: relative error EPR noEPR}
        \varepsilon = \frac{4^{n(p+q)}}{p! q!} \big\lVert \big[ \delta \tilde{\Phi} \otimes \mathbbm{1} \big](  \tilde{P}_{\text{EPR}}  )\big\rVert_\infty,
    \end{equation}
    where $\delta \tilde{\Phi}(\cdot) \equiv \Phi^{(p,q)}_{\mathcal{E}}(\noEPR(\cdot)\noEPR) - \Phi^{(p,q)}_a(\noEPR(\cdot)\noEPR)$ is the difference between channels restricted to the no-EPR subspace, and $\tilde{P}_{\text{EPR}} \equiv [\noEPR \otimes \mathbbm{1}] P_{\text{EPR}} [\noEPR \otimes \mathbbm{1}]$ is the EPR state on the no-EPR subspace.
    %
\end{lemma}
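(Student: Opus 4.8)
The plan is to follow the Choi-matrix strategy of Lemma~7 of Ref.~\cite{schuster2024random}, adapted to the mixed twirl through the $\ell$-graded structure of Eq.~\eqref{eq: Phi a p q def}. First I would rephrase relative error as an operator inequality: by the Choi\textendash Jamio\l kowski correspondence, $(1-\varepsilon)\Phi^{(p,q)}_a \preceq \Phi^{(p,q)}_{\mathcal E} \preceq (1+\varepsilon)\Phi^{(p,q)}_a$ holds if and only if
\[
-\,\varepsilon\,[\Phi^{(p,q)}_a\otimes\mathbbm{1}](P_{\text{EPR}}) \;\preceq\; [\delta\Phi\otimes\mathbbm{1}](P_{\text{EPR}}) \;\preceq\; \varepsilon\,[\Phi^{(p,q)}_a\otimes\mathbbm{1}](P_{\text{EPR}})
\]
on $(\mathcal H^{\otimes p}\otimes\mathcal H^{\otimes q})^{\otimes 2}$, where $P_{\text{EPR}}$ is the normalized EPR state, i.e.\ the Choi state of the identity. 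Since the Haar measure has full support on $U(2^n)$, one has $\operatorname{supp}[\Phi^{(p,q)}_{\mathcal E}\otimes\mathbbm{1}](P_{\text{EPR}}) \subseteq \operatorname{supp}[\Phi^{(p,q)}_H\otimes\mathbbm{1}](P_{\text{EPR}})$, and the explicit formulas (Lemma~\ref{lemma: explicit mixed twirl} and the decomposition \eqref{eq: H decomposition}) identify this with the support $\Pi$ of $[\Phi^{(p,q)}_a\otimes\mathbbm{1}](P_{\text{EPR}})$. It therefore suffices to prove the lower bound $[\Phi^{(p,q)}_a\otimes\mathbbm{1}](P_{\text{EPR}}) \succeq \frac{p!\,q!}{4^{n(p+q)}}\,\Pi$, since then (as $[\delta\Phi\otimes\mathbbm{1}](P_{\text{EPR}})$ is supported within $\Pi$) the displayed inequalities reduce to $\|\Pi\,[\delta\Phi\otimes\mathbbm{1}](P_{\text{EPR}})\,\Pi\|_\infty \le \frac{p!\,q!}{4^{n(p+q)}}\,\varepsilon$.

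For the eigenvalue bound I would use that $U^{\otimes p}\otimes U^{*,\otimes q}$ commutes with every EPR projector, so that $\Phi^{(p,q)}_{\mathcal E}$ (for any ensemble) and $\Phi^{(p,q)}_a$ both preserve the $\ell$-grading of \eqref{eq: H decomposition}; consequently $[\Phi^{(p,q)}_a\otimes\mathbbm{1}](P_{\text{EPR}})$ is block diagonal across this grading, with $\ell$-th block equal to $\Phi^{(p-\ell)}_a\otimes\Phi^{(q-\ell)}_a\otimes\mathbbm{1}_{\mathcal A_\ell}$ conjugated through the partial isometry $\tilde I_\ell$. A direct computation of the Choi operator of the approximate standard twirl gives $[\Phi^{(m)}_a\otimes\mathbbm{1}]$ applied to the EPR state on two copies of $\mathcal H^{\otimes m}$ equal to $4^{-nm}\sum_{\pi\in S_m}\pi\otimes\pi$, whose only nonzero eigenvalue is $m!/4^{nm}$; this follows from Schur\textendash Weyl duality together with Schur orthogonality, which yields $\sum_{\pi\in S_m}\rho_\lambda(\pi)\otimes\rho_\lambda(\pi) = m!\,\ketbra{\hat\phi_\lambda}$ for each irrep $\lambda$ (the $S_m$-irreps being real). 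Multiplying the two factors and including the dimension $\binom p\ell\binom q\ell\ell!$ of the auxiliary register $\mathcal A_\ell$, the $\ell$-th block has unique nonzero eigenvalue $\frac{p!\,q!}{\ell!}\,4^{-n(p+q-2\ell)}$, which is minimized at $\ell=0$ because $k^2\le 2^n$; this gives the claimed bound, after checking (using that $\noEPR$ commutes with the permutations appearing in $\Phi^{(p)}_a\otimes\Phi^{(q)}_a$) that each block support is genuinely a projector, so that restricting to it does not shrink the eigenvalue.

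Finally I would translate $\Pi$ into the no-EPR sandwich. Since $\Phi^{(p,q)}_{\mathcal E}$, $\Phi^{(p,q)}_a$, and hence $\delta\Phi$, all commute with conjugation by $\noEPR$ (again because $\noEPR$ commutes with $U^{\otimes p}\otimes U^{*,\otimes q}$), and since $\noEPR$ annihilates every $\ell\ge1$ block, the operator $[\noEPR\otimes\mathbbm{1}]\,[\delta\Phi\otimes\mathbbm{1}](P_{\text{EPR}})\,[\noEPR\otimes\mathbbm{1}]$ is exactly the $\ell=0$ block, and by the transpose trick $(\noEPR\otimes\mathbbm{1})\ket{\Omega} = (\mathbbm{1}\otimes\noEPR^{T})\ket{\Omega}$ it equals $[\delta\Phi\otimes\mathbbm{1}]\big([\noEPR\otimes\mathbbm{1}]P_{\text{EPR}}[\noEPR\otimes\mathbbm{1}]\big)$. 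Combined with the eigenvalue bound, the $\ell=0$ contribution to the relative error is at most $\frac{4^{n(p+q)}}{p!\,q!}\big\|[\delta\Phi\otimes\mathbbm{1}]([\noEPR\otimes\mathbbm{1}]P_{\text{EPR}}[\noEPR\otimes\mathbbm{1}])\big\|_\infty$, which is \eqref{eq: relative error EPR normal}; to see that the $\ell\ge1$ blocks never dominate one notes that each is, up to the $\ell$ spectator EPR pairs and the register $\mathcal A_\ell$, a no-EPR problem for the smaller moment $(p-\ell,q-\ell)$, whose contribution to $\varepsilon$ is suppressed by the rapid growth in $\ell$ of the block eigenvalue $\frac{p!\,q!}{\ell!}\,4^{2n\ell}/4^{n(p+q)}$. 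The second statement \eqref{eq: relative error EPR noEPR} is precisely this $\ell=0$ analysis in isolation: on the no-EPR subspace $\Phi^{(p,q)}_a$ acts as $\Phi^{(p)}_a\otimes\Phi^{(q)}_a$, $\delta\tilde\Phi$ is the corresponding restricted difference, and $\tilde P_{\text{EPR}}$ is its Choi state.

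The main obstacle is the eigenvalue computation: pinning down the exact value $p!\,q!/4^{n(p+q)}$ of the smallest nonzero eigenvalue of the Choi operator of the approximate mixed twirl, which requires carefully tracking how the partial isometries $\tilde I_\ell$, the registers $\mathcal A_\ell$, and the Weingarten-free standard twirls $\Phi^{(m)}_a$ compose, and in particular verifying that restricting a twirl to the no-EPR subspace leaves its nonzero eigenvalue unchanged --- which relies on $\noEPR$ commuting with all relevant permutation operators. A secondary subtlety, needed to obtain the clean form \eqref{eq: relative error EPR normal} rather than a maximum over $\ell$-sectors, is the claim that higher-$\ell$ blocks are never the bottleneck; this uses the recursive structure of the mixed twirl together with the growth of the block eigenvalues in $\ell$.
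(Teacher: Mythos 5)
Your overall route is the same as the paper's: pass to the Choi operator, use the $\ell$-graded decomposition of $\Phi_a^{(p,q)}$ to see that $[\Phi_a^{(p,q)}\otimes\mathbbm{1}](P_{\text{EPR}})$ is a positive combination of orthogonal projectors, show the ensemble's Choi operator is supported inside that of $\Phi_a^{(p,q)}$, divide by the smallest nonzero block eigenvalue, and finally lift relative error on the EPR state to relative error on arbitrary inputs. The skeleton is sound and matches the paper's steps (1)--(4). However, the step you yourself identify as the crux --- the block eigenvalue --- is computed incorrectly. The $\ell$-th block of $[\Phi_a^{(p,q)}\otimes\mathbbm{1}](P_{\text{EPR}})$ is
\[
\frac{1}{4^{n(p+q-\ell)}}\sum_{\pi_L,\pi_R}\bigl(\tilde I_\ell^\dagger(\pi_L\otimes\pi_R)\tilde I_\ell\bigr)\otimes\bigl(\tilde I_\ell^\dagger(\pi_L\otimes\pi_R)\tilde I_\ell\bigr)
= \frac{(p-\ell)!\,(q-\ell)!}{4^{n(p+q-\ell)}}\times(\text{projector}),
\]
so its unique nonzero eigenvalue is $(p-\ell)!(q-\ell)!/4^{n(p+q-\ell)}$, not $\frac{p!q!}{\ell!}4^{-n(p+q-2\ell)}$. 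Your value overshoots by exactly $\binom{p}{\ell}\binom{q}{\ell}\ell!\cdot 4^{n\ell}$: the dimension of $\mathcal{A}_\ell$ labels degenerate blocks and does not multiply the eigenvalue, and you dropped the $4^{-n\ell}$ weight that the global EPR state carries on the $\ell$-EPR sector (each contracted system--reference pair costs a factor $2^{-2n}$). A quick sanity check is $p=q=\ell=1$: the $\ell=1$ block of the Choi operator is $4^{-n}\,\dyad{\phi^+}\otimes\dyad{\phi^+}$, eigenvalue $4^{-n}$, whereas your formula gives $1$. Fortuitously, both formulas agree at $\ell=0$ and both are increasing in $\ell$, so the bound you actually need, $[\Phi_a^{(p,q)}\otimes\mathbbm{1}](P_{\text{EPR}})\succeq \frac{p!q!}{4^{n(p+q)}}\Pi$, survives; but the computation must be redone correctly, since as written the central quantitative claim is false.

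A second, related weak point is your last step, where you argue that the $\ell\geq 1$ blocks ``never dominate'' so that the bound can be phrased with only the no-EPR-sandwiched input, as in Eq.~\eqref{eq: relative error EPR normal}. As written this is an assertion, and the justification you offer (growth of the block eigenvalues like $4^{2n\ell}$) is based on the erroneous eigenvalue formula; with the correct eigenvalues, the $\ell$-block contribution to the relative error is a no-EPR problem at the smaller moment $(p-\ell,q-\ell)$ for the same ensemble, and it is not automatic that this is controlled by the $(p,q)$ no-EPR quantity --- an actual argument (or a restriction to the statement you really need) is owed here. The paper avoids this by building the sandwiching in from the start: it carries a pair of projectors $\Pi_1,\Pi_2$ commuting with the approximate Choi through the whole argument, proving the restricted statement \eqref{eq: relative error EPR noEPR} directly with $\Pi_1=\noEPR$, and this restricted form (together with the unsandwiched variant, which your argument does deliver) is what is used downstream in the proofs of Lemma~\ref{lemma: approx mixed Haar twirl} and Lemma~\ref{lemma:translating-strong}. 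Your support-containment step (via full support of the Haar measure plus identification of the Haar and approximate Choi supports) is fine in spirit, but note that the identification of those supports is exactly where the paper's step (2) does real work, so it should not be waved through by citing the explicit formulas alone.
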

\noindent We prove Lemma~\ref{lemma: relative error EPR strong} below, and apply it to prove Lemma~\ref{lemma: explicit mixed twirl} and Lemma~\ref{lemma:translating-strong} in the following sections.

\begin{proof} 
We will prove a slightly more general version of the lemma that encapsulates both Eq.~(\ref{eq: relative error EPR normal}) and Eq.~(\ref{eq: relative error EPR noEPR}).
Consider any projectors $\Pi_1$ and $\Pi_2$ such that $\Pi_1 \otimes \Pi_2$ commutes with $[\Phi^{(p,q)}_a \otimes \mathbbm{1}](P_\text{EPR})$.
We will set $\Pi_1 = \Pi_2 = \mathbbm{1}$ for Eq.~(\ref{eq: relative error EPR normal}), and $\Pi_1 =  \noEPR$ and $\Pi_2 = \mathbbm{1}$ for Eq.~(\ref{eq: relative error EPR noEPR}). 

We follow a similar general strategy to the proof of Lemma~2 in Ref.~\cite{schuster2024random}. 
Let
\begin{equation}
    \rho \equiv [( \Pi_1 \circ \Phi^{(p,q)}_\mathcal{E} \circ \Pi_2) \otimes \mathbbm{1}](P_{\text{EPR}}) = [\Pi_1 \otimes \Pi_2] \cdot [\Phi^{(p,q)}_\mathcal{E} \otimes \mathbbm{1}](P_{\text{EPR}}) \cdot [\Pi_1 \otimes \Pi_2],
\end{equation}
where we define the channel $\Pi_\alpha(X) \equiv \Pi_\alpha X \Pi_\alpha$ whose action will be clear from context.
Let $\rho_a$ denote the same expression with $\Phi$ replaced by $\Phi_a^{(p,q)}$.
Note that $\Phi_a^{(p,q)}(\noEPR(\cdot)\noEPR) = [\Phi_a^{(p)}\otimes \Phi_a^{(q)}](\noEPR(\cdot)\noEPR)$ for the proof of Eq.~(\ref{eq: relative error EPR noEPR}). We have
\begin{equation} \label{eq: rho a rel error proof}
\begin{split}
    \rho_a & \equiv [( \Pi_1 \circ \Phi \circ \Pi_2) \otimes \mathbbm{1}](P_{\text{EPR}}) \\
    & = [\Pi_1 \otimes \Pi_2] \cdot \sum_\ell \frac{D_\ell}{D} \left( [\tilde{I}_\ell^\dagger \otimes \tilde{I}_\ell^\dagger] \cdot [(\Phi_a^{(p-\ell)} \otimes \Phi_a^{(q-\ell)}) \otimes \mathbbm{1}](P^\ell_{\text{EPR}}) \cdot [\tilde{I}_\ell \otimes \tilde{I}_\ell] \right) \cdot [\Pi_1 \otimes \Pi_2]\\
    & = [\Pi_1 \otimes \Pi_2] \cdot \sum_\ell \frac{D_\ell}{D^{2p+2q-2\ell}} \left( [\tilde{I}_\ell^\dagger \otimes \tilde{I}_\ell^\dagger] \cdot \sum_{\pi} [\pi \otimes \pi] \cdot [\tilde{I}_\ell \otimes \tilde{I}_\ell] \right) \cdot [\Pi_1 \otimes \Pi_2] \\
    & = \sum_\ell  \frac{D_\ell (p-\ell)! (q-\ell)!}{D^{3p+3q-4\ell}}  \left( [\Pi_1 \otimes \Pi_2] \cdot  [\tilde{I}_\ell^\dagger \otimes \tilde{I}_\ell^\dagger] \cdot \frac{1}{(p-\ell)!(q-\ell)!} \sum_{\pi} [\pi \otimes \pi] \cdot [\tilde{I}_\ell \otimes \tilde{I}_\ell]  \cdot [\Pi_1 \otimes \Pi_2] \right)
\end{split}
\end{equation}
where $D_\ell \equiv \text{tr}(\tilde{I}^\dagger_\ell \tilde{I}_\ell) \leq D^{p+q-2\ell} {p \choose \ell} {q \choose \ell} \ell!$ is the rank of the $\ell$-EPR subspace, $D = 2^n$, and we abbreviate $\pi \equiv \pi_L \otimes \pi_R$, where $\pi_L \in S_{p-\ell}$ and $\pi_R \in S_{q-\ell}$.
We use that the partial isometries $\tilde{I}_\ell$ are real (Appendix~\ref{sec: details mixed}), so that $[\tilde{I}_\ell \otimes \mathbbm{1}] P_{\text{EPR}} = [\mathbbm{1} \otimes \tilde{I}_\ell^\dagger] P_{\text{EPR}}$.
We proceed in four steps.

\vspace{3mm}
\noindent (1) The term in parentheses in the final line of Eq.~(\ref{eq: rho a rel error proof}) is a projector, i.e.~it squares to itself.
This follows because the sum over permutations inside the partial isometry is a projector (onto the tensor product of the symmetric subspace~\cite{harrow2013church} of the left and right copy),
$\Pi_1 \otimes \Pi_2$ is a projector (by definition), 
and these two projectors commute with one another (by assumption).
Hence, their product is also a projector.
Thus, 
$\rho_a$ is equal to a sum of projectors with coefficients,
\begin{equation}
    \frac{D_\ell (p-\ell)! (q-\ell)!}{D^{3p+3q-4\ell}} \leq \frac{1}{\ell!} \frac{p! q!}{D^{2p+2q-2\ell}}.
\end{equation}
applying our upper bound on $D_\ell$.
The minimum non-zero eigenvalue of $\rho_a$ is given by the $\ell = 0$ coefficient, and is equal to $p! q! / D^{2(p+q)}$.

\vspace{3mm}
\noindent (2) The state $\rho$ has support entirely within the support of $\rho_a$. 
This follows because, first, the support of $\rho_a$ is equal to the support of $\sum_\ell [\Pi_1 \otimes \Pi_2] Q_\ell$ (note that $[\Pi_1 \otimes \Pi_2] Q_\ell [\Pi_1 \otimes \Pi_2] = [\Pi_1 \otimes \Pi_2] Q_\ell$ since the two projectors commute), where $Q_\ell$ is defined as the term in between the two copies of $\Pi_1 \otimes \Pi_2$ in Eq.~(\ref{eq: rho a rel error proof}). Second, the latter operator stabilizes $\rho$,
\begin{equation}
    \sum_\ell [\Pi_1 \otimes \Pi_2] Q_\ell \cdot \rho
    =
    [\Pi_1 \otimes \Pi_2] \cdot \rho
    = \rho,
\end{equation}
where the first step follows because $Q_\ell$ commutes with $\Pi_1 \otimes \Pi_2$ and with $(U)^{\otimes p} \otimes (U^*)^{\otimes q}$ for any $U \sim \mathcal{E}$, and $\sum_\ell Q_\ell P_{\text{EPR}} = P_{\text{EPR}}$.
The latter statement follows because $\pi \otimes \pi$ stabilizes the EPR state $[\tilde{I}_\ell \otimes \tilde{I}_\ell ]P_{\text{EPR}} [\tilde{I}^\dagger_\ell \otimes \tilde{I}^\dagger_\ell ]$ for any $\pi$. Hence, the sum over $\pi$ can be eliminated and the remaining sum over $\ell$ yields $\sum_\ell [\tilde{I}^\dagger_\ell \tilde{I}_\ell \otimes \tilde{I}^\dagger_\ell \tilde{I}_\ell] P_{\text{EPR}} = \sum_\ell [\tilde{I}^\dagger_\ell \tilde{I}_\ell \tilde{I}^\dagger_\ell \tilde{I}_\ell \otimes \mathbbm{1}] P_{\text{EPR}} = \sum_\ell [\tilde{I}^\dagger_\ell \tilde{I}_\ell  \otimes \mathbbm{1}] P_{\text{EPR}} = P_{\text{EPR}}$.

\vspace{3mm}
\noindent (3) Steps (1) and (2) immediately imply that the twirl has relative error $\varepsilon$ [Eq.~(\ref{eq: relative error EPR normal}) or Eq.~(\ref{eq: relative error EPR noEPR})] on the EPR state.

\vspace{3mm}
\noindent (4) The relative error on the EPR state upper bounds the relative error on any state.
This follows because we can express $\Phi(\chi) = D^{2(p+q)}\tr_2( (\mathbbm{1} \otimes \chi^T) [ \Phi \otimes \mathbbm{1} ](P_{\text{EPR}}))$ for any $\Phi$, where  the trace is over the second copy of $\mathcal{H}^{\otimes p} \otimes \mathcal{H}^{\otimes q}$.

\vspace{3mm}
\noindent This completes the proof.
\end{proof}

\subsubsection{Proof of Lemma~\ref{lemma: approx mixed Haar twirl}: The approximate mixed Haar twirl} \label{sec: proof approx mixed}

From Eq.~(\ref{eq: exact mixed twirl subspace}), we have the following expression for the mixed Haar twirl,
\begin{equation}
\begin{split}
    \Phi^{(p,q)}_H(\rho) & = \sum_{\ell} \tilde{I}_{\ell}^\dagger \left[ \sum_{\pi_L \pi_R} \sum_{\tilde{\pi}_L,  \tilde{\pi}_R}
    \tr( \tilde{I}^{}_\ell \, \rho \, \tilde{I}_{\ell}^\dagger (\pi_L \otimes \pi_R)^{-1} ) \cdot \Wg^{(p+q-2\ell)}_{\pi_L \otimes \pi_R, \tilde{\pi}_L \otimes \tilde{\pi}_L} \cdot \, (\tilde{\pi}_L \otimes \tilde{\pi}_R) \right] \tilde{I}_{\ell},
\end{split}
\end{equation}
    We can compare this with our desired approximation, 
    \begin{equation} \label{eq: approx twirl ell}
        \Phi_a^{(p,q)}(\rho) 
        = \sum_\ell \tilde{I}_\ell^\dagger \left[
        \frac{1}{2^{n(p+q-2\ell)}} \sum_{\pi_L \otimes \pi_R}
        \tr( \tilde{I}^{}_\ell \, \rho \, \tilde{I}_{\ell}^\dagger (\pi_L \otimes \pi_R)^{-1} ) \cdot \, (\pi_L \otimes \pi_R) \right] \tilde{I}_{\ell}.
    \end{equation}
    Since both the mixed Haar twirl and our desired approximation are a tensor sum over $\ell$, the two are close in relative error if and only if they are close in relative error within each partial isometry $\tilde{I}_\ell$.
    Within each partial isometry, the channels act solely on the no-EPR subspace of $\mathcal{H}^{\otimes (p-\ell)} \otimes \mathcal{H}^{\otimes (q-\ell)}$ by definition.
    Hence, to prove the proposition, it suffices to show that $\Phi_a^{(p-\ell)} \otimes \Phi_a^{(q-\ell)}$ and $\Phi_H^{(p-\ell,q-\ell)}$ are close in relative error on the no-EPR subspace.

    Observing Lemma~\ref{lemma: relative error EPR strong}, we can bound this relative error in terms of the spectral norm
    %
    \begin{equation}
    \begin{split}
        & [\delta \Phi \otimes \mathbbm{1}] ((\noEPR \otimes \mathbbm{1} )P_{\text{EPR}} (\noEPR \otimes \mathbbm{1} )) \\
        & \quad  =
        \frac{1}{4^{n(p+q-2\ell)}}
        \sum_{\pi_L \pi_R} \sum_{\tilde{\pi}_L   \tilde{\pi}_R}
        \left( 2^{n(p+q-2\ell)} \Wg^{(p+q-2\ell)}_{\pi_L \otimes \pi_R, \tilde{\pi}_L \otimes \tilde{\pi}_L} - \delta_{\pi_L \otimes \pi_R, \tilde{\pi}_L \otimes \tilde{\pi}_L}  \right)
        \cdot
        ( \pi_L \otimes \pi_R ) \otimes  (\tilde{\pi}_L \otimes \tilde{\pi}_R).
    \end{split}
    \end{equation}
    Applying the triangle inequality, we find
    \begin{equation} \label{eq: spectral norm bound phi tilde}
    \begin{split}
        4^{n(p+q-2\ell)} & \lVert [\delta \Phi^{(p-\ell,q-\ell)} \otimes \mathbbm{1}]((\noEPR \otimes \mathbbm{1} )P_{\text{EPR}} (\noEPR \otimes \mathbbm{1} )) \rVert_\infty\\
        & \hspace{15mm} \leq 
        \sum_{\pi_L \pi_R} \sum_{\tilde{\pi}_L   \tilde{\pi}_R}
        \left| 2^{n(p+q-2\ell)} \Wg^{(p+q-2\ell)}_{\pi_L \otimes \pi_R, \tilde{\pi}_L \otimes \tilde{\pi}_L} - \delta_{\pi_L \otimes \pi_R, \tilde{\pi}_L \otimes \tilde{\pi}_L} \right| \\
        & \hspace{15mm} \equiv \sum_{\pi_L \pi_R} \sum_{\tilde{\pi}_L   \tilde{\pi}_R} A_{\pi_L \otimes \pi_R, \tilde{\pi}_L \otimes \tilde{\pi}_L},
    \end{split}
    \end{equation}
    where we define the $(p-\ell)!(q-\ell)! \times (p-\ell)!(q-\ell)!$ matrix $\hat A$ elementwise via the absolute values in the preceding expression.
    
    Since $\hat A$ has all positive entries, its maximal eigenvalue is achieved by a vector with all positive entries.
    Moreover, it is invariant under multiplication by any permutation, $\pi_L \otimes \pi_R \rightarrow (\pi_L \otimes \pi_R) (\tilde{\pi}_L \otimes \tilde{\pi}_R)$, $\tilde{\pi}_L \otimes \tilde{\pi}_R \rightarrow (\tilde{\pi}_L \otimes \tilde{\pi}_R) (\tilde{\pi}_L \otimes \tilde{\pi}_R)$.
    Thus, without loss of generality, we can take its maximal eigenvector to be both positive and permutation invariant.
    This implies that the maximal eigenvalue is achieved by the constant vector.
    The sum in Eq.~(\ref{eq: spectral norm bound phi tilde}) is thus equal to the maximal eigenvalue multiplied by the matrix dimension, $(p-\ell)!(q-\ell)!$.

    To determine the maximum eigenvalue of $\hat A$, we note that $\hat A$ is a sub-matrix of the $(p+q-2\ell)! \times (p+q-2\ell)!$ matrix with entries, $| \delta W|_{\sigma,\tau} \equiv | 2^{(p+q-2\ell)} \Wg_{\sigma,\tau} - \delta_{\sigma,\tau} |$, where $\sigma,\tau \in S_{p+q-2\ell}$.
    The spectral norm of a sub-matrix is upper bounded by the spectral norm of the matrix, and hence $\lVert \hat A \rVert_\infty \leq \lVert | \delta \hat{W} | \rVert_\infty$.
    From Ref.~\cite{harrow2023approximate}, the latter spectral norm is upper bounded by $(p+q-2\ell)^2/2^{n} \leq (p+q)^2/2^n$.
    Applying Lemma~\ref{lemma: relative error EPR strong} completes our proof. \qed

\subsubsection{Proof of Lemma~\ref{lemma:translating-strong}: Translating between different strong approximation errors}

As aforementioned, the first statement of the lemma follows trivially from definitions.
The combination of Lemma~\ref{lemma: approx mixed Haar twirl} and Lemma~\ref{lemma: relative error EPR strong} immediately proves the second statement.
We have $\big\lVert \big[ (\Phi^{(p,q)}_\mathcal{E} - \Phi^{(p,q)}_H) \otimes \mathbbm{1} \big]( {P}_{\text{EPR}}  )\big\rVert_\infty \leq \varepsilon_a$ from the definition of the additive error.
We also have $\frac{4^{n(p+q)}}{p! q!} \big\lVert \big[ (\Phi^{(p,q)}_H - \Phi^{(p)}_a \otimes \Phi^{(q)}_a) \otimes \mathbbm{1} \big]( {P}_{\text{EPR}}  )\big\rVert_\infty \leq (p+q)^2/2^n$ from Lemma~\ref{lemma: approx mixed Haar twirl}.
Together, these imply that $\frac{4^{n(p+q)}}{p! q!} \big\lVert \big[ (\Phi^{(p,q)}_\mathcal{E} - \Phi^{(p)}_a \otimes \Phi^{(q)}_a) \otimes \mathbbm{1} \big]( {P}_{\text{EPR}}  )\big\rVert_\infty \leq \frac{4^{n(p+q)}}{p! q!} \varepsilon_a + (p+q)^2/2^n \equiv \varepsilon'$.
From Lemma~\ref{lemma: approx mixed Haar twirl} and Lemma~\ref{lemma: relative error EPR strong}, this proves that 
\begin{equation}\nonumber
    \Phi^{(p,q)}_\mathcal{E} \preceq (1+\varepsilon')\Phi^{(p,q)}_a \preceq \frac{1+\varepsilon'}{1-\frac{(p+q)^2}{2^n}}\Phi^{(p,q)}_H 
    \preceq \left(1+\left(\frac{4^{n(p+q)}}{p! q!}\right) \varepsilon_a + \frac{(p+q)^2}{2^n}\right)\left(1+\frac{2(p+q)^2}{2^n} \right) \Phi^{(p,q)}_H,
\end{equation}
and similar in the reverse direction. Here, we assumed $2(p+q)^2 \leq 2^n$ in the third inequality. This completes the proof. \qed

\subsection{Proof of Lemma~\ref{lemma: strong 2-designs}: Strong unitary 2-designs from blocked fast scrambling random circuits}

In this section, we prove Lemma~\ref{lemma: strong 2-designs} showing that the blocked fast scrambling circuit is a strong unitary 2-design.
The circuit has depth $\mathcal{O}(\log n \cdot \log n/\varepsilon)$ when each small random unitary is instantiated with a 1D random circuit of depth $\mathcal{O}(\log n/\varepsilon)$.
This forms a key building block in our  random circuit construction of strong unitary $k$-designs, as described in the main text.




Our proof of Lemma~\ref{lemma: strong 2-designs} proceeds in two steps.
First, we introduce and prove a simple proposition that quantifies when any unitary ensemble forms a strong approximate unitary 2-design.
Our proposition leverages a well-known mapping from 2-designs to classical Markov processes on the set of Pauli operators~\cite{nahum2018operator,hunter2019unitary}.
We show that the measurable error of a strong approximate 2-design is upper bounded by the total variational distance between the Markov process associated to the circuit ensemble, and that associated to the Haar ensemble.
Second, we prove that this total variational distance is small for the blocked fast scrambling circuit, which completes the proof.

Let us first review the mapping from unitary 2-designs to Markov processes, and formally state our proposition. 
Consider the action of the unitary $U \otimes U^*$ on $\mathcal{H} \otimes \mathcal{H}$, where $U$ is drawn from any ensemble $\mathcal{E}$ that is invariant under random single-qubit rotations at the input and output.
To understand the action of the unitary, we consider the complete basis of states, 
\begin{equation}
    \ket{P} \equiv (P \otimes \mathbbm{1}) \ket{\Psi_{\text{EPR}}},
\end{equation}
 where $\ket{\Psi_{\text{EPR}}}$ is the EPR state and $P$ runs over all $4^n$ Pauli operators.
 The invariance of $\mathcal{E}$ under random single-qubit rotations guarantees that the twirl over $\mathcal{E}$ can be written in the following form,
    \begin{equation}
        \E_{U \sim \mathcal{E}} \left[ (U\otimes U^*) \rho (U^\dagger \otimes U^T) \right]
        = \bra{ \mathbbm{1} } \rho \ket{ \mathbbm{1} } \cdot \dyad{ \mathbbm{1} }
        + \sum_{P,Q \neq \mathbbm{1}} p_{\mathcal{E}}(Q;P) \bra{ P } \rho \ket{ P } \cdot \dyad{Q},
    \end{equation}
where $p_{\mathcal{E}}(Q ; P)$ is a normalized probability distribution over Pauli operators $Q$, for each Pauli operator $P$.
For a Haar-random unitary, we have $p_{H}(Q;P) = 1/(4^n-1)$.

Our proposition is as follows.
\begin{proposition} \label{lemma: 2-design TVD}
    Consider any unitary ensemble $\mathcal{E}$ is invariant under conjugation, transposition, and random single-qubit Pauli rotations at the input and output circuit layer. 
    The ensemble forms a strong approximate unitary 2-design with measurable error,
    \begin{equation}
        \varepsilon = \max_P \,\, \text{\emph{TVD}}\big( \, p_{\mathcal{E}}(Q;P) \, ,\, p_H(Q;P) \, \big),
    \end{equation}
    in any quantum experiment that queries one of $U$ or $U^T$ and one of $U^*$ or $U^\dagger$.
\end{proposition}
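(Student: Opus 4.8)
The plan is to reduce the statement to a bound on the single-copy mixed twirl $\Phi^{(1,1)}_\mathcal{E}$, to diagonalize that twirl in the Pauli--EPR basis so that it becomes an embedded classical channel, and then to identify its diamond distance from the Haar counterpart with the advertised total variational distance.

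First I would reduce to the mixed twirl. Since $\mathcal{E}$ and the Haar ensemble are invariant under conjugation and transposition, any experiment that queries one of $\{U,U^T\}$ and one of $\{U^*,U^\dagger\}$ can be relabeled, without changing either output state, into one that queries $U$ once and $U^*$ once. Invoking the reformulation of Figure~\ref{fig:reformulation} (which teleports the two queries into a single parallel call to $U\otimes U^*$, at the cost of post-selecting onto $n$-qubit Bell states and absorbing everything else into a fixed input state on $\mathcal{H}_A\otimes\mathcal{H}_B$ plus ancilla and a fixed post-processing channel), the output trace distance $\lVert\rho_\mathcal{E}-\rho_H\rVert_1$ is controlled by the additive $(1,1)$-error $\lVert\Phi^{(1,1)}_\mathcal{E}-\Phi^{(1,1)}_H\rVert_\diamond$, so it suffices to bound this diamond norm.

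Next I would expose the structure of $\Phi^{(1,1)}$ in the orthonormal basis $\ket{P}\equiv(P\otimes\mathbbm 1)\ket{\Psi_{\text{EPR}}}$ of $\mathcal{H}\otimes\mathcal{H}$, with $P$ ranging over the $4^n$ Paulis and $\ket{\mathbbm 1}=\ket{\Psi_{\text{EPR}}}$. A short computation gives $(U\otimes U^*)\ket{P}=\ket{UPU^\dagger}$, so averaging over $\mathcal{E}$ and using the assumed invariance under random single-qubit Pauli rotations at the input and output layers annihilates every matrix element that is off-diagonal in $P$ or in $Q$, leaving $\Phi^{(1,1)}_\mathcal{E}(\ket{P}\bra{P'})=\delta_{P,P'}\sum_{Q\neq\mathbbm 1}p_\mathcal{E}(Q;P)\ket{Q}\bra{Q}$ for $P\neq\mathbbm 1$ (and the identity on $\ket{\mathbbm 1}\bra{\mathbbm 1}$), and the same for Haar with $p_H(Q;P)=1/(4^n-1)$. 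Hence $\delta\Phi\equiv\Phi^{(1,1)}_\mathcal{E}-\Phi^{(1,1)}_H$ annihilates the EPR ($\ell=1$) sector $\ket{\mathbbm 1}$ and all cross-terms, and on the complementary no-EPR sector it is precisely the embedding --- sandwiched between dephasing channels in the $\{\ket{P}\}$ basis --- of the classical stochastic map $P\mapsto p_\mathcal{E}(\cdot\,;P)-p_H(\cdot\,;P)$. I would then invoke the standard fact that for a channel which dephases its input in the $\{\ket{P}\}$ basis, applies a stochastic matrix, and dephases its output, a quantum reference gives no advantage, because the output is perfectly read out by the measurement $\{\ket{Q}\bra{Q}\}$; thus its diamond distance from another such channel is $\max_P\lVert p_\mathcal{E}(\cdot\,;P)-p_H(\cdot\,;P)\rVert_1$, attained by a point mass on the worst Pauli $P^\star$. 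Concretely, writing $\lVert(\delta\Phi\otimes\mathbbm 1)(\sigma)\rVert_1=\sum_Q\big\lVert\sum_{P\neq\mathbbm 1}(p_\mathcal{E}(Q;P)-p_H(Q;P))\,\sigma_P\big\rVert_1$ with $\sigma_P=\bra{P}\sigma\ket{P}\succeq 0$ and $\sum_P\tr(\sigma_P)=1$, the triangle inequality gives the upper bound and the point mass saturates it, so $\lVert\delta\Phi\rVert_\diamond=2\max_P\text{TVD}(p_\mathcal{E}(\cdot\,;P),p_H(\cdot\,;P))$, which is the claimed measurable error $\varepsilon$ under the normalization conventions of the statement.

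The hard part will be the first reduction. For a \emph{sequential} query pattern --- $U$ queried, then an arbitrary operation, then $U^*$ --- one cannot naively commute the two queries together, so one must pass through the teleportation/reformulation of Figure~\ref{fig:reformulation} and check that the post-selections it introduces do not weaken the bound; the cleanest route is probably to track, as in the approximate-mixed-twirl analysis of Section~\ref{sec: approx twirl}, that the only $U$-dependent data surviving to the final measurement is the conjugated Pauli $UPU^\dagger$ correlating the two registers, whose Pauli spectrum is governed by $p_\mathcal{E}(\cdot\,;P)$ versus $p_H(\cdot\,;P)$. The remaining pieces --- that the single-qubit Pauli invariance removes exactly the claimed cross-terms, and that the EPR sector cancels identically between $\mathcal{E}$ and Haar so only the classical Pauli-transition data survives --- are routine.
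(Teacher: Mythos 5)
Your second and third steps are sound and essentially reproduce the paper's key structural observation: after the twirl, invariance under single-qubit Pauli rotations (plus the EPR sector cancelling identically) leaves only the classical Pauli-transition data $p_{\mathcal{E}}(Q;P)$ versus $p_H(Q;P)$, and the worst-case input in the parallel setting is a Pauli-rotated EPR state, giving $\max_P \text{TVD}$ (your factor of $2$ is only the half-sum convention; the paper's TVD is the full $\ell_1$ distance). The problem is your first step. You assert that, via the reformulation of Fig.~\ref{fig:reformulation}, the output trace distance of a \emph{sequential} experiment querying one of $\{U,U^T\}$ and one of $\{U^*,U^\dagger\}$ ``is controlled by'' the additive error $\lVert \Phi^{(1,1)}_{\mathcal{E}}-\Phi^{(1,1)}_H\rVert_\diamond$. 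That reduction is exactly the nontrivial content of the proposition and is not justified by the reformulation alone: the reformulation writes the output as $4^n$ times a Bell-state compression of $(\Phi^{(1,1)}\otimes\mathbbm{1})(\dyad{\Psi})$, so a naive H\"older-type estimate through the diamond norm carries the post-selection factor $4^n$ — precisely the exponential loss the paper warns about when trying to control inverse/conjugate-query experiments by additive (or even relative) error. Also, your opening relabeling claim is too strong: ensemble invariance under conjugation and transposition permutes the sixteen query classes among themselves, but it cannot turn a sequential $U^\dagger$ query into a literal $U^*$ query on the same wire; removing the dagger/transpose is what requires the Bell trick and hence the post-selection.

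You flag this as ``the hard part'' and gesture at the right repair, but the repair is where the entire proof lives, and it is missing. The paper's argument works \emph{inside} the reformulated experiment: it expands the fixed normalized input $\dyad{\Psi}$ in the orthonormal basis $\{\ket{P}_{\gsA\gsX}\}$, contracts each twirled term $\dyad{Q}$ against the Bell bra $\bra{\Psi_{\text{Bell}}}_{\gsX\gsY}$ (which converts it into a Pauli $Q$ acting on a swapped register and absorbs the $4^n$), and then bounds the difference by the chain: triangle inequality, invariance of the trace norm under the Pauli rotation $Q$ (so $\lVert\rho_{PQ}\rVert_1=\lVert\rho_{P\mathbbm{1}}\rVert_1$), H\"older, and the normalization $\sum_P \tr(\rho_{P\mathbbm{1}})=\tr\dyad{\Psi}=1$. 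Without carrying out this cancellation, your argument only establishes the additive (parallel-query) bound $\lVert\delta\Phi^{(1,1)}\rVert_\diamond=\max_P\text{TVD}$, which does not by itself imply the claimed measurable-error bound for the adaptive experiments with inverse or transpose queries that the proposition covers.
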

\noindent The proposition does not cover experiments that query two of $U$ or $U^T$ and neither of $U^*$ or $U^\dagger$ (and vice versa). We will address such experiments using standard unitary design features later on.
\begin{proof}[Proof of Proposition~\ref{lemma: 2-design TVD}]
    Without loss of generality, we consider an experiment that first queries $U^\dagger$ and then queries $U$. The remaining seven classes of experiments follow by symmetric arguments\footnote{In particular, experiments that query $U^*$ and then $U$ follow the exact same proof exchanging $\gsX$ and $\gsY$. The remaining six classes of experiments follow from these two by exchanging replacements $U \leftrightarrow U^T$ or $U \leftrightarrow U^*$ or $U \leftrightarrow U^\dagger$. These replacements are allowed because the ensemble $\mathcal{E}$ is invariant under conjugation and transposition.}.
    The output state $\rho$ of any such experiment can be written as (Fig.~\ref{fig:reformulation})
    \begin{equation} \nonumber
        \rho_U = 4^n (\mathbbm{1}_{\gsB \gsA} \otimes \bra{\Psi_{\text{Bell}}}_{\gsX \gsY} ) (\mathbbm{1}_{\gsB} \otimes U_\gsA \otimes U^*_\gsX \otimes \mathbbm{1}_{\gsY}) \dyad{\Psi}_{\gsB \gsA \gsX \gsY} (\mathbbm{1}_{\gsB} \otimes U^\dagger_\gsA \otimes U^T_\gsX \otimes \mathbbm{1}_{\gsY}) (\mathbbm{1}_{\gsB \gsA} \otimes \ket{\Psi_{\text{Bell}}}_{\gsX \gsY} ),
    \end{equation}
    where $\ket{\Psi}$ is a normalized quantum state.
    Taking the twirl over $\mathcal{E}$ yields,
    \begin{equation} \nonumber
    \begin{split}
        \E_{U \sim \mathcal{E}} \left[ \rho_U \right] 
        & = 4^n \bra{ \mathbbm{1}}_{\gsX \gsY}  \dyad{\mathbbm{1}}_{\gsA \gsX} \dyad{\Psi}_{\gsB \gsA \gsX \gsY} \dyad{\mathbbm{1}}_{\gsA \gsX} \ket{ \mathbbm{1}}_{\gsX \gsY} \\
        & \,\,\,\,\,\,\, + 4^n \sum_{P,Q \neq \mathbbm{1}} p_{\mathcal{E}}(Q;P) \cdot \bra{ \mathbbm{1}}_{\gsX \gsY}  \dyad{Q}{P}_{\gsA \gsX} \dyad{\Psi}_{\gsB \gsA \gsX \gsY} \dyad{P}{Q}_{\gsA \gsX} \ket{ \mathbbm{1} }_{\gsX \gsY} \\
        & = S_{\gsY \rightarrow \gsA} ( \mathbbm{1}_{A'} \otimes \bra{\mathbbm{1}}_{\gsA \gsX} \otimes \mathbbm{1}_{\gsY} ) \dyad{\Psi}_{\gsB \gsA \gsX \gsY} ( \mathbbm{1}_{A'} \otimes \ket{\mathbbm{1}}_{\gsA \gsX} \otimes \mathbbm{1}_{\gsY} ) S^\dagger_{\gsY \rightarrow \gsA} \\
        & \,\,\,\,\,\,\, + \sum_{P,Q \neq \mathbbm{1}} p_{\mathcal{E}}(Q;P) \cdot S_{\gsY \rightarrow \gsA} ( \mathbbm{1}_{A'} \otimes \bra{P}_{\gsA \gsX} \otimes Q_{\gsY} ) \dyad{\Psi}_{\gsB \gsA \gsX \gsY} ( \mathbbm{1}_{A'} \otimes \ket{P}_{\gsA \gsX} \otimes Q_{\gsY} ) S^\dagger_{\gsY \rightarrow \gsA} \\
    \end{split}
    \end{equation}
    where $S_{\gsY \rightarrow \gsA}$ swaps register $\gsY$ to register $\gsA$, and is obtained from the multiplying the Bell pairs, $\bra{ \mathbbm{1}}_{\gsX \gsY}  \ket{Q}_{\gsA \gsX} = (1/2^n) S_{\gsY \rightarrow \gsA} Q_{\gsY}$.

    Let us denote each state in the final line above as $\rho_{PQ}$ which acts on $\gsA \gsB$.
    With this notation, the final line  becomes
    \begin{equation}
        \E_{U \sim \mathcal{E}} \left[ \rho_U \right] = \rho_{\mathbbm{1} \mathbbm{1}} + \sum_{P,Q\neq \mathbbm{1}} p_{\mathcal{E}}(Q;P) \cdot \rho_{PQ}.
    \end{equation}
    We also let $\delta p(P,Q) = p_{\mathcal{E}}(P,Q) - p_H(P,Q)$.
    We can bound the additive error as follows,
    \begin{equation}
    \begin{split}
        \left\lVert \E_{U \sim \mathcal{E}} \left[ \rho_U \right] - \E_{U \sim H} \left[ \rho_U \right] \right\rVert_1 
        & = \bigg\lVert \sum_{P,Q \neq \mathbbm{1}} \delta p(Q;P) \cdot \rho_{PQ} \bigg\rVert_1 \\ 
        & \leq \sum_{P,Q \neq \mathbbm{1}} | \delta p(Q;P) | \cdot \lVert \rho_{PQ} \rVert_1 \\ 
        & \leq \sum_{P \neq \mathbbm{1}} \lVert \rho_{P \mathbbm{1}} \rVert^{}_1 \cdot \sum_{Q \neq \mathbbm{1}} | \delta p(Q;P) | \\ 
        & = \sum_{P \neq \mathbbm{1}} \lVert \rho_{P \mathbbm{1}} \rVert^{}_1 \cdot \text{TVD}\big( \, p_{\mathcal{E}}(Q;P) \, ,\, p_H(Q;P) \, \big) \\ 
        & \leq \max_P \, \text{TVD}\big( \, p_{\mathcal{E}}(Q;P) \, ,\, p_H(Q;P) \, \big) \cdot \sum_{P \neq \mathbbm{1}} \lVert \rho_{P \mathbbm{1}} \rVert^{}_1 \\ 
        & \leq \max_P \, \text{TVD}\big( \, p_{\mathcal{E}}(Q;P) \, ,\, p_H(Q;P) \, \big). \\ 
    \end{split}
    \end{equation}  
    The first step uses the triangle inequality, the second step uses $\lVert \rho_{PQ} \rVert_1 = \lVert \rho_{P \mathbbm{1}} \rVert_1$ since the 1-norm is invariant under Pauli rotations, the third step uses the definition of the total variational distance, the fourth step uses Holder's inequality, and the final step uses $\sum_{P\neq \mathbbm{1}} \lVert \rho_{P\mathbbm{1}} \rVert_1 \leq \sum_{P} \lVert \rho_{P\mathbbm{1}} \rVert_1 = \sum_P \tr(\rho_{P\mathbbm{1}}) = \tr(\dyad{\Psi}) = 1$.
    This completes the proof.
\end{proof}

We can now proceed to the proof of Lemma~\ref{lemma: strong 2-designs}.

\begin{proof}[Proof of Lemma~\ref{lemma: strong 2-designs}]
There are sixteen classes of experiments that a strong unitary 2-design must capture, corresponding to whether the first query is to $U$, $U^T$, $U^*$, or $U^\dagger$ and similar for the second query.
Let us first consider the eight classes of experiment that query two of $U, U^T$ or two of $U^*, U^\dagger$.
From the gluing Lemma~3 of Ref.~\cite{schuster2024random}, the blocked fast scrambling circuit with small Haar-random unitaries forms a standard approximate unitary 2-design with relative error $4m/2^\xi$.
This implies that the ensemble has measurable error at most $8m/2^\xi$ in any of the eight experiments.
Replacing each small random unitary with a relative error $\frac{\varepsilon}{n}$-approximate unitary 2-design yields an additional measurable error $2 m \varepsilon / n = 2\varepsilon/\xi$. Here, we only need to glue $2m$ unitaries together in order to realize a standard design via Lemma~3 of Ref.~\cite{schuster2024random}; hence, we only pick up an additional error linear in $m$, instead of $m\log_2 m$.
%
Thus, the blocked fast scrambling circuit has measurable error at most $8 m /2^\xi + m \varepsilon / n$ in any of these eight experiments.
Setting $\xi \geq \min( \log_2(3n/\varepsilon),4)$ yields an error less than $\varepsilon$.

Let us now turn to the remaining eight experiments, which query one of $U, U^T$ and one of $U^*, U^\dagger$.
We will leverage Proposition~\ref{lemma: 2-design TVD} to bound the measurable error.
To begin, we consider the blocked fast scrambling circuit in which each small unitary is Haar-random.
Let $P$ be any Pauli operator. 
When a Pauli operator is acted on by a small random unitary within its support, the resulting Pauli operator has support on both patches of the small random unitary with probability $1 - (2 \cdot 4^\xi - 1)/(4^{2\xi}-1) \geq 1 - 2/4^{\xi}$, where the latter term counts the fraction of non-identity Pauli operators on $2\xi$ qubits that have support on only a single patch.
Consider any single patch in the support of $P$, and all the small random unitaries within the light-cone of the patch.
There are at most $1+2+4+\cdots+m = 2m-1$ such unitaries.
Therefore, the probability that \emph{every} small random unitary in the light-cone spreads the Pauli operator to both patches in its support is at least $1 - (2 m-1) (2/4^\xi)$.
The probability that the time-evolved operator $U P U^\dagger$ has support on every patch of $\xi$ qubits is hence at least this value.
We denote this probability as $1-\delta_{\mathcal{E}}$, with $\delta_{\mathcal{E}} \leq 4 m/4^\xi$.
%
%

Let us now turn to the action of an $n$-qubit Haar-random unitary.
Under an $n$-qubit Haar-random unitary, the time-evolved operator $U P U^\dagger$ is drawn from the flat distribution on the set of $4^n-1$ non-identity Pauli operators.
Therefore, it has support on every patch of $\xi$ qubits with probability $1-\delta_H = (4^\xi-1)^m/(4^n-1) \geq (4^n - m 4^{n-\xi})/(4^n-1) \geq 1 - m/4^\xi$.

The probability distributions $p_{\mathcal{E}}(P,Q)$ and $p_H(P,Q)$ are both flat among the Pauli operators $Q$ that have support on every patch of $\xi$ qubits, since both $\mathcal{E}$ and $H$ are invariant under composition with small random unitaries on each patch.
The TVD between the two distributions picks up contributions from two sources.
The $Q$ with full support contribute a factor of $| \delta_H - \delta_{\mathcal{E}} | \leq \delta_H + \delta_{\mathcal{E}}$, proportional to the difference in the flat value of $p_{\mathcal{E}}(P,Q)$ and $p_H(P,Q)$ on such $Q$.
Meanwhile, the other $Q$ contribute at most a factor of $\delta_H + \delta_{\mathcal{E}}$, since the two distributions have at most this support on such operators.
Thus, the TVD between the two distributions is less than,
\begin{equation}
    \text{TVD}\big( \, p_{\mathcal{E}}(Q;P) \, ,\, p_H(Q;P) \, \big) \leq 2(\delta_H + \delta_{\mathcal{E}}) \leq 2 \big( 4m /4^\xi + m/4^\xi \big),
\end{equation}
for any $P$. 
Applying Proposition~\ref{lemma: 2-design TVD} upper bounds the measurable error of the blocked fast scrambling circuit with small Haar-random unitaries.
Replacing each small Haar-random unitary in the light-cone with an $\frac{\varepsilon}{n}$-approximate strong unitary 2-design yields a measurable error less than 
\begin{equation}
    8m /4^\xi + 2 m/4^\xi + \left( \frac{2m}{n} \right) \varepsilon.
\end{equation}
Setting $\xi \geq \min( \frac{1}{2} \log_2(5n/\varepsilon) , 4 )$ yields an error less than 
\begin{equation}
    (8/5) (m/n) \varepsilon + (2/5) (m/n) \varepsilon  + 2 (m/n) \varepsilon \leq \varepsilon,
\end{equation}
which completes our proof.
\end{proof}

\subsection{Proof of Lemma~\ref{lemma: strong design 1D RUC}: Strong unitary $k$-designs from 1D random circuits} \label{sec: strong linear random}

In this section, we provide  our proof that one-dimensional local random circuits form strong unitary $k$-designs with small relative error in linear depth (Theorem~\ref{lemma: strong design 1D RUC} of the main text).
Our proof follows the same approach used to show that 1D random circuits form standard unitary designs~\cite{brandao2016local}.
This is enabled by our new Lemma~\ref{lemma:translating-strong} which allows one to translate from spectral gaps~\cite{brandao2016local,haferkamp2022random,chen2024incompressibility} to relative error strong unitary designs.


Let us first review a few of the key definitions used in the spectral gap literature~\cite{brandao2016local,haferkamp2021improved,chen2024incompressibility}.
The central object studied in these works is the so-called moment operator,
\begin{equation}
    M_{\mathcal{E}} = \E_{U \sim \mathcal{E}} \big[ U^{\otimes k} \otimes (U^*)^{\otimes k} \big],
\end{equation}
of a random unitary ensemble $\mathcal{E}$.
To quantify the closeness of the moment operator to the Haar-random moment operator, we let $\delta M = M_{\mathcal{E}} - M_H$ and define the \emph{essential norm},
\begin{equation}
    g(\mathcal{E}) = \left\lVert \delta M \right\rVert_\infty.
\end{equation}
Ref.~\cite{chen2024incompressibility} proves that the ensemble of 1D local random brickwork circuits of depth $d$ has essential norm $g(\mathcal{E}) \leq \exp \big( \!-\Omega(d \log^7 \! k) \big)$.

The relationship between the essential norm and the approximation errors of standard unitary designs is well-known.
One can bound $\lVert \delta \Phi(\rho) \rVert_\infty \leq g(\mathcal{E})$ for any $\rho$ when $\delta \Phi = \Phi^{(k)}_{\mathcal{E}} - \Phi^{(k)}_{H}$ is defined with respect to the standard twirl~\cite{brandao2016local}.
This allows one to bound the relative error via Lemma~\ref{lemma:translating}.
In what follows, we show that this approach proceeds identically for strong unitary designs after replacing Lemma~\ref{lemma:translating} with our Lemma~\ref{lemma:translating-strong}.

\begin{proof}[Proof of Lemma~\ref{lemma: strong design 1D RUC}]
Let $\delta \Phi = \Phi^{(p,q)}_{\mathcal{E}} - \Phi^{(p,q)}_{H}$ and $p+q=k$.
The relationship between the essential norm and the spectral norm when $\delta \Phi$ is applied to a state follows from a straightforward series of equalities,
\begin{equation} \label{eq: essential norm to spectral norm}
    \lVert \delta \Phi ( \rho ) \rVert_\infty \leq \lVert \delta \Phi ( \rho ) \rVert_2 = \sqrt{\tr( \delta \Phi ( \rho)^2 )} = \sqrt{ \bra{\rho} \delta M^2 \ket{\rho} } \leq \lVert \delta M \rVert_\infty \sqrt{ \langle \rho | \rho \rangle } = \lVert \delta M \rVert_\infty =  g(\mathcal{E}),
\end{equation}
where $\ket{\rho}$ denotes the vectorization of $\rho$.
Hence, an upper bound on the essential norm immediately translates to an identical upper bound on the spectral norm of $\delta \Phi(\rho)$ for any state $\rho$.
The two inequalities are saturated when $\rho$ is pure and equal to the extremal eigenvector of $\delta M$.

In previous works, the equalities above are applied when $\delta \Phi$ is defined with respect to the standard twirl~\cite{brandao2016local}.
In this case, the $k$ copies of $U$ in $\delta M$ act on the left (``ket'') side of the vectorization of $\rho$, and the $k$ copies of $U^*$ act on the right (``bra'') side.
However, an identical bound holds for the mixed twirl as well.
In this case, we have $p$ copies of $U$ and $q$ copies of $U^*$ that act on the left side of the vectorization of $\rho$, and $q$ copies of $U$ and $p$ copies of $U^*$ that act on the right side.
Since there $k$ copies of both $U$ and $U^*$ in total, the remaining steps proceed identically regardless of the value of $p,q$.

qawFrom Ref.~\cite{chen2024incompressibility} and Eq.~(\ref{eq: essential norm to spectral norm}), we have $\lVert \delta \Phi(\rho) \rVert_\infty \leq \exp \big( \!-\Omega(\log(k)^7 d) \big)$ for any state $\rho$.
Taking $\rho = P_{\text{EPR}}$ as in Lemma~\ref{lemma: relative error EPR strong}, we find from Lemma~\ref{lemma: relative error EPR strong} that $\mathcal{E}$ is a strong unitary design with relative error 
\begin{equation}
    \varepsilon \leq \left(1+\frac{k^2}{2^n} \right) \cdot \frac{4^{nk}}{p!q!} \cdot \exp \big( \!-\Omega(\log(k)^7 d) \big) + \frac{k^2}{2^n},
\end{equation}
where again, $k = p+q$.
The factors of $k^2/2^n$ arise when converting from the approximate mixed Haar twirl (used in Lemma~\ref{lemma: relative error EPR strong}) to the exact Haar twirl (in the definition of $\delta \Phi$ above).
For any $\varepsilon \geq 2 k^2/2^n$, we can set $d = \Omega \big( \! \log(k)^7 ( nk + \log(1/\varepsilon) ) \big)$ to ensure that $\mathcal{E}$ is a strong unitary design with relative error $\varepsilon$, as claimed. 
\end{proof}

\subsection{Proof of Theorem~\ref{thm:strong-design-depth}}

The combination of Theorem~\ref{thm:LRFC-design}, Theorem~\ref{thm:two-layer-design}, Lemma~\ref{lemma: strong 2-designs}, and Lemma~\ref{lemma: strong design 1D RUC}  immediately yield Theorem~\ref{thm:strong-design-depth} on the circuit depth of strong unitary designs. 
We refer to Ref.~\cite{cui2025unitary} for a detailed derivation of the circuit depths and resources required to implement the LRFC ensemble with $k$-wise independent functions.

\begin{proof}[Proof of Theorem~\ref{thm:strong-design-depth}]
The first two statements of Theorem~\ref{thm:strong-design-depth} follow immediately from Theorem~\ref{thm:LRFC-design} and Theorem~\ref{thm:two-layer-design} as described in the main text. 
The third statement of Theorem~\ref{thm:strong-design-depth} follows immediately from Theorem~\ref{thm:two-layer-design} (and its extension to the blocked fast scrambling circuit via Lemma~\ref{lemma: strong 2-designs}) and Lemma~\ref{lemma: strong design 1D RUC}.
\end{proof}

\subsection{Lower bounds on the depth of strong unitary designs} \label{sec: lower bound}

In this section, we prove our two lower bounds on the circuit depths of strong unitary designs.
As discussed in the main text, our first lower bound applies to any circuit ensemble and any notion of approximation error (additive, measurable, or relative).
It shows that strong unitary designs require depth $\Omega(n)$ in 1D circuits and $\Omega(\log n)$ in all-to-all connected circuits.
Our second lower bound is specific to local random circuits and the \emph{relative error} metric.
It shows that local random circuits require depth $\Omega(n)$ to form relative error strong unitary designs in \emph{any} circuit geometry.
This contrasts with the measurable error, where we achieved local random circuit designs in depth $\mathcal{O}(\log^2 n)$ in Theorem~\ref{thm:strong-design-depth}.

\subsubsection{Lower bound for any random unitary ensemble}

Our first lower bound is Proposition~\ref{prop: lower bound design} in the main text. The proof follows from a simple light-cone argument.

\begin{proof}
We consider the state, $U^\dagger Z_0 U \ket{0^n}$, where $Z_0$ is a local Pauli operator on the first qubit.
We then measure the state in the computational basis, and count the average fraction of bits that are 1.
This corresponds to the expectation value of the operator, $M = \sum_{\textbf{s} \in \{0,1\}^n} ( | \textbf{s} | / n ) \cdot \dyad{\textbf{s}}$, where $| \textbf{s} |$ is the Hamming weight of a bitstring $\textbf{s}$. 
When $U$ is Haar-random, the expected number of bits flipped is just above $1/2$,
\begin{equation}
    \mathds{E}_{U \sim H}[ \bra{0^n} U^\dagger Z_0 U \cdot M \cdot U^\dagger Z_0 U\ket{0^n}] \geq \frac{1}{2}.
\end{equation}
Therefore, any strong $\varepsilon$-approximate unitary 2-design must have expectation value at least $1/2 - \varepsilon$.

Let us now turn to circuits with bounded depth.
The operator $U^\dagger Z_0 U$ can have support on at most $L$ qubits, where $L$ is the size of the light-cone of $U$.
For 1D circuits, we have $L \leq 2d$ for any $U$, and for all-to-all circuits, we have $L \leq 2^d$.
This implies that the fraction of bits flipped in any individual unitary $U$ is at most $L/n$, which yields
\begin{equation}
    \mathds{E}_{U \sim \mathcal{E}} \big[ \bra{0^n} U^\dagger Z_0 U \cdot M \cdot U^\dagger Z_0 U\ket{0^n} \big] \leq \frac{L}{n}.
\end{equation}
Hence, we require $d \geq n (1/2-\varepsilon)$ in 1D, and $d \geq \log_2(n(1/2-\varepsilon))$ in all-to-all circuits. 
We set $\varepsilon = \mathcal{O}(1)$ to obtain our stated bounds.
\end{proof}

\subsubsection{Lower bound for local random circuit ensembles}

Our second lower bound is specific to circuit ensembles composed of independent random gates.

\begin{proposition} \label{prop: lower bound rc design}
    {\emph{(Depth lower bound for strong unitary designs from local random circuits)}}
    Any quantum circuit ensemble composed of independent local Haar-random gates requires circuit depth
    \begin{itemize}
    \item $d = \Omega \big( \! \min \! \big\{ \! \log( 1 / \varepsilon) , n \big\} \big)$, to form a strong $\varepsilon$-approximate unitary $2$-design,
    \item $d = \Omega ( n )$, to form a strong $\varepsilon$-approximate unitary $2$-design with relative error.
    \end{itemize}
\end{proposition}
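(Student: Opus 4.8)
The plan is to reduce both statements to a single distinguishing experiment, the out-of-time-order protocol already used to prove \cref{prop: lower bound design}, analyzed here through the operator-spreading Markov chain that is available \emph{specifically} for circuits of independent Haar-random two-qubit gates (the same tool behind \cref{lemma: 2-design TVD}). Consider the two-query experiment $E$: prepare $\ket{0^n}$, query $U$, apply the Pauli $Z_0$ on the first qubit, query $U^\dagger$, and then measure every qubit in the computational basis, accepting iff the outcome is $0^n$. This is a legal experiment for a strong unitary $2$-design (one forward and one inverse query), and its acceptance probability is $\Pr[\mathrm{acc}] = \E_U\, \big|\bra{0^n} U^\dagger Z_0 U \ket{0^n}\big|^2$.

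First I would compute the Haar value: with $\ket{\phi}=U\ket{0^n}$ a Haar-random state, $\Pr_H[\mathrm{acc}] = \E_\phi |\bra{\phi}Z_0\ket{\phi}|^2 = \frac{(\text{tr}\,Z_0)^2 + \text{tr}(Z_0^2)}{2^n(2^n+1)} = \frac{1}{2^n+1}$, a standard second-moment computation. Then, for a depth-$d$ ensemble $\mathcal{E}$ of independent Haar two-qubit gates, I would show $\Pr_{\mathcal E}[\mathrm{acc}] \ge 15^{-d}$. We may assume $\mathcal E$ is invariant under conjugation by single-qubit Paulis on both sides (true for any brickwork, and in general arrangeable by prepending and appending a layer of Haar gates, which preserves the strong-design property). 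Expanding $U^\dagger Z_0 U = \sum_P c_P(U)\, P$, this invariance kills all cross terms in expectation, so $\Pr_{\mathcal E}[\mathrm{acc}] = \sum_{P\ \mathrm{diagonal}} \E_U |c_P(U)|^2 \ge \E_U |c_{Z_0}(U)|^2$. The weights $w_t(P) := \E_U|c_P(U_{\le t})|^2$ obey the standard Pauli Markov chain for Haar two-qubit gates~\cite{nahum2018operator}: a layer acting on qubit $0$ sends $Z_0 \mapsto Z_0$ with probability $1/15$ (a single-qubit Pauli maps to a uniform nontrivial two-qubit Pauli under a Haar gate), and a layer not touching qubit $0$ fixes $Z_0$; since in a depth-$d$ circuit qubit $0$ participates in at most $d$ gates, the ``never leave $Z_0$'' path gives $\E_U|c_{Z_0}(U)|^2 = w_d(Z_0) \ge 15^{-d}$.

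Both bounds then drop out. A strong $\varepsilon$-approximate unitary $2$-design with measurable error must match $\Pr[\mathrm{acc}]$ to within $\varepsilon$ (the accept event is a projector, so Hölder applies), hence $15^{-d} \le \varepsilon + \tfrac{1}{2^n+1}$, i.e. $d \ge \frac{\log_2\!\big(1/(\varepsilon+2^{-n})\big)}{\log_2 15} = \Omega\!\big(\min\{\log(1/\varepsilon), n\}\big)$ — the $\min$ with $n$ appearing precisely because for $\varepsilon \lesssim 2^{-n}$ the Haar value $2^{-n}$ dominates the slack. For the relative-error statement, $\dyad{0^n}$ is a \emph{positive} operator, so a strong relative-error $2$-design forces $\Pr_{\mathcal E}[\mathrm{acc}] \le (1+\varepsilon)\Pr_H[\mathrm{acc}] < \tfrac{2}{2^n+1}$ for $\varepsilon<1$, whence $15^{-d} < 2^{1-n}$ and $d = \Omega(n)$, independent of $\varepsilon$ and of the circuit's connectivity.

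The main obstacle is the middle step: establishing $\Pr_{\mathcal E}[\mathrm{acc}] \ge 15^{-d}$ rigorously. This requires (i) recognizing that $\E_U|\bra{0^n}U^\dagger Z_0 U\ket{0^n}|^2$ is a pure second moment of $\mathcal E$, (ii) using the single-qubit-Pauli invariance to eliminate the cross terms, and (iii) invoking the Markov-chain evolution of $w_t$ under Haar two-qubit gates together with the depth bound on the number of gates incident to a fixed qubit. Everything else is a two-line deduction; in particular the relative-error bound requires no new work beyond the positivity of the accept projector, which is exactly why it upgrades to $\Omega(n)$ for all $\varepsilon$ whereas the measurable-error bound only yields $\Omega(\log(1/\varepsilon))$.
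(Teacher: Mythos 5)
Your proposal is correct and follows essentially the same route as the paper's proof: both lower-bound the survival weight of $Z_0$ under a depth-$d$ circuit of independent local Haar gates by $(4^2-1)^{-d}=15^{-d}$ per layer, compare this to an exponentially small Haar value via a two-query experiment (one forward query plus one reversed/conjugated query), and use positivity of the accept operator to upgrade the relative-error bound to $\Omega(n)$. The only packaging difference is that the paper runs the Choi-state test with one query to $U$ and one to $U^*$, whose success probability is exactly $\E_U|c_{Z_0}(U)|^2$ (so no cross terms and no Pauli-randomization or ensemble modification are needed), whereas your $\ket{0^n}$/OTOC version with $U,U^\dagger$ requires the extra single-qubit-randomization step to kill cross terms — which you handle correctly, and which costs only $\mathcal{O}(1)$ extra depth.
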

\noindent Both of the statements above hold on any circuit geometry.
To prove the first statement, we show that any local operator retains a small, exponentially-decaying memory of its initial value under any local random circuit.
To prove the latter statement, we show that this value must be exponentially small in $n$, $\mathcal{O}(4^{-n})$, in a strong unitary 2-design with relative error.

\begin{proof}
We consider an experiment which prepares the first qubit in the zero state and all other qubits in the maximally mixed state, applies $U$, and measures the probability for the first qubit to remain to the zero state.
This yields an expectation value,
\begin{equation}
\tr( Z_0 U \left( \dyad{0} \otimes (\mathbbm{1}/2)^{\otimes n-1} \right) U^\dagger )
= \frac{1}{2^n} \tr( Z_0 U Z_0 U^\dagger ).
\end{equation}
In the second equality, we expand, $\dyad{0} = (\mathbbm{1}+Z_0)/2$, and note that the identity term vanishes after the trace.
The expectation value will be zero, on average, whenever $U$ is drawn from a unitary 1-design.
To this end, our quantity of interest is the square of the expectation value.
When $U$ is Haar-random, we have
\begin{equation}
 \mathds{E}_{U \sim H} \left[ \frac{1}{4^n} \tr( Z_0 U Z_0 U^\dagger )^2 \right] = \frac{1}{4^n-1},
\end{equation}
because the time-evolved operator $U Z_0 U^\dagger$ has equal probability to be any of $4^n-1$ non-identity Pauli operators.

Let us now consider when $U$ is drawn from a circuit of independent local Haar-random gates.
For convenience, let us re-write the squared expectation value as
\begin{equation}
    \frac{1}{4^n} \tr( Z_0 U Z_0 U^\dagger )^2 = \bra{Z_0} (U \otimes U^*) \dyad{Z_0} (U^\dagger \otimes U^T) \ket{Z_0},
\end{equation}
where $\ket{Z_0} \equiv (Z_0 \otimes \mathbbm{1})\ket{E}$, and $\ket{E}$ is the EPR state.
To proceed, we decompose the $d$ layers of the circuit as, $U = U_d U_{d-1} \ldots U_1$, and denote the gate acting on the first qubit in each layer as $G_d$.
We assume the gates act on at most $r = \mathcal{O}(1)$ qubits.
After a single layer, $U_1$, of the circuit, we have
\begin{equation}
\begin{split}
    \mathds{E}_{U_1} \big[ (U_1 \otimes U_1^*) \dyad{Z_0} (U_1^\dagger \otimes U_1^T) \big] & = \mathds{E}_{G_1} \big[ (G_1 \otimes G_1^*) \dyad{Z_0} (G_1^\dagger \otimes G_1^T) \big] \\
    & = \frac{1}{4^r-1} \sum_{P \in \text{supp}(G_1), P \neq \mathbbm{1}} \dyad{P}, \\
\end{split}
\end{equation}
where the sum is over all non-identity Pauli operators in the support of $G_1$.
To proceed, we note that the state above is strictly greater than the initial state divided by $4^r-1$,
\begin{equation} \label{eq: U1 op ineq}
    \mathds{E}_{U_1} \big[ (U_1 \otimes U_1^*) \dyad{Z_0} (U_1^\dagger \otimes U_1^T) \big] = \frac{1}{4^r-1} \sum_{P \in \text{supp}(G_1), P \neq \mathbbm{1}} \dyad{P} \geq \frac{1}{4^r-1} \dyad{Z_0}.
\end{equation}
Intuitively, this is because the Pauli operator $Z_0$ has probability $1/(4^r-1)$ to return to itself under the $r$-qubit random gate $G_1$.

We can now iterate.
After the second circuit layer, we have
\begin{equation}
\begin{split}
    \mathds{E}_{U_2, U_1} \big[ (U_2 U_1 \otimes U_2^* U_1^*) \dyad{Z_0} (U_1^\dagger U_2^\dagger \otimes U_1^T U_2^T) \big]  & \geq 
    \mathds{E}_{U_2} \big[ (U_2 \otimes U_2^*) \dyad{Z_0} (U_2^\dagger \otimes U_2^T) \big] \\
    & \geq \left( \frac{1}{4^r-1} \right)^2 \dyad{Z_0}, \\
\end{split}
\end{equation}
where in the first inequality we apply Eq.~(\ref{eq: U1 op ineq}) for the unitary $U_1$, and in the second inequality we apply Eq.~(\ref{eq: U1 op ineq}) for the unitary $U_2$.
Proceeding through all $d$ circuit layers of $U$, we find
\begin{equation}
    \mathds{E}_{U \sim \mathcal{E}} \big[ (U \otimes U^*) \dyad{Z_0} (U^\dagger \otimes U^T) \big] \geq \left( \frac{1}{4^r-1} \right)^d  \dyad{Z_0}.
\end{equation}
This yields a lower bound on our quantity of interest,
\begin{equation}
    \mathds{E}_{U \sim \mathcal{E}} \left[ \frac{1}{4^n} \tr( Z_0 U Z_0 U^\dagger )^2 \right]
    = \mathds{E}_{U \sim \mathcal{E}} \big[ \bra{Z_0} (U \otimes U^*) \dyad{Z_0} (U^\dagger \otimes U^T) \ket{Z_0} \big] 
    \geq  \left( \frac{1}{4^r-1} \right)^d.
\end{equation}
Hence, for the ensemble $\mathcal{E}$ to form a strong $\varepsilon$-approximate unitary 2-design, we must have $1/(4^r-1)^d \leq 1/(4^n-1) + \varepsilon$.
This requires either $d = \Omega(\log(1/\varepsilon))$ or $d = \Omega(n)$.
For the ensemble $\mathcal{E}$ to form a strong $\varepsilon$-approximate unitary 2-design with relative error, we must have $1/(4^r-1)^d \leq (1+\varepsilon)/(4^n-1)$.
This requires $d = \Omega(n)$.
This completes the proof.
\end{proof}

\section{Strong pseudorandom unitaries} \label{app: PRU}

In this section, we introduce strong pseudorandom unitaries (PRUs) and extend the proof of~\cite{ma2025construct} to allow queries to the conjugate and transpose. 
This yields strong PRUs on $n$ qubits in $\poly n$ circuit depth.
We show how to reduce the circuit depth to $\mathcal{O}(\log n)$ using our new constructions in later sections.

\subsection{Definitions}

To define pseudorandom unitaries (PRU), we first define oracle adversaries that can query an $n$-qubit unitary oracle $\mathcal{O}$ in multiple ways. The oracle adversary is the quantum algorithm that aims to attack the pseudorandom unitary construction by distinguishing it from Haar-random unitaries. We consider the strongest possible adversary with access to all four fundamental operations: $U$, $U^\dagger$, $U^T$, and $U^*$.

\begin{definition}[Oracle adversaries]
\label{def:oracle-adversary}
A $t$-query oracle adversary $\mathcal{A}$ is parameterized by a sequence of $(n+m)$-qubit unitaries $(W_1,\dots,W_{t+1})$ acting on registers $(\sA,\sB)$, where $\mathsf{A}$ is the $n$-qubit query register and $\gsB$ is an $m$-qubit ancilla, and a sequence of oracle queries $U_1, \ldots, U_t$ where each $U_i \in \{U, U^\dagger, U^T, U^*\}$. The state after $t$ queries is
\begin{align}
    \ket{\mathcal{A}_t^{U}}_{\gsA\gsB} \coloneqq W_{t+1} [U_t \otimes \mathbbm{1}_m] \cdots W_2 [U_1 \otimes \mathbbm{1}_m] W_1 \ket{0^{n+m}}_{\gsA\gsB}.
\end{align}
\end{definition}

\begin{definition}[Pseudorandom unitaries]
\label{def:pru}
We say $\{\mathcal{U}_n\}_{n \in \mathbb{N}}$ is a secure PRU if, for all $n \in \mathbb{N}$, $\mathcal{U}_n = \{U_k\}_{k \in \mathcal{K}_n}$ is a set of $n$-qubit unitaries where $\mathcal{K}_n$ denotes the keyspace, satisfying:
\begin{itemize}
    \item \textbf{Efficient computation:} There exists a $\mathrm{poly}(n)$-time quantum algorithm that implements the $n$-qubit unitary $U_k$ for all $k \in \mathcal{K}_n$.
    \item \textbf{Indistinguishability from Haar:} For any oracle adversary $\mathcal{A}$ that runs in time $\mathrm{poly}(n)$ and measures a two-outcome observable $D_{\mathcal{A}}$ with eigenvalues $\{0,1\}$ after the queries, we have
    \begin{equation}
        \left| \mathbb{E}_{\mathcal{O} \leftarrow \mathcal{U}_n} \Tr\left( D_{\mathcal{A}} \cdot \ketbra{\mathcal{A}^{\mathcal{O}}}_{\gsA\gsB} \right) - \mathbb{E}_{\mathcal{O} \sim H} \Tr\left( D_{\mathcal{A}} \cdot \ketbra{\mathcal{A}^{\mathcal{O}}}_{\gsA\gsB} \right) \right| \leq \mathsf{negl}(n),
    \end{equation}
    where $\mathsf{negl}(n)$ is any function that is $o(1/n^c)$ for all $c > 0$.
\end{itemize}
We distinguish between two security levels based on the adversary's query capabilities:
\begin{itemize}
    \item A \textbf{standard PRU} achieves indistinguishability against adversaries that can only make forward queries, i.e., where each $U_i = U$ in Definition~\ref{def:oracle-adversary}.
    \item A \textbf{strong PRU} achieves indistinguishability against adversaries with access to all four query types: $U$, $U^\dagger$, $U^*$, and $U^T$, as defined in Definition~\ref{def:oracle-adversary}.
\end{itemize}
\end{definition}

\noindent The strong PRU definition considered here is stronger than that of \cite{ma2024construct} due to the ability for the oracle adversary to query the transpose $U^T$ and complex conjugation $U^*$.
We next show how to enhance the proof in \cite{ma2024construct} to handle transpose and complex conjugation.

\subsection{Preliminaries}

This section establishes the notation, definitions, and basic results from \cite{ma2024construct} that underpin our analysis. For completeness, we present all necessary background material.

\subsubsection{Notations}

\paragraph{Basic notation.} Let $N \coloneqq 2^n$ where $n$ denotes the number of qubits. We write $[N] \coloneqq \{1, \ldots, N\}$ and identify $[N]$ with $\{0,1\}^n$ via the binary representation of $i-1$ for each $i \in [N]$. For any $1 \leq t \leq N$, let $[N]^t_{\text{dist}}$ denote the set of length-$t$ sequences of distinct integers from $[N]$:
\begin{equation}
    [N]^t_{\text{dist}} \coloneqq \{(x_1,\dots,x_t) \in [N]^t: x_i \neq x_j \text{ for all } i \neq j \}.
\end{equation}
For $t = 0$, we set $[N]^0_{\text{dist}} \coloneqq \{ () \}$. For any permutation $\pi \in S_t$, define the unitary $S_{\pi}$ acting on $(\mathbb{C}^N)^{\otimes t}$:
\begin{equation}
    S_{\pi}: \ket{x_1,\dots,x_t} \mapsto \ket{x_{\pi^{-1}(1)},\dots,x_{\pi^{-1}(t)}}.
\end{equation}
We let $x = x_< \| x_> \in \{0, 1\}^n$, where $x_<, x_> \in \{0, 1\}^{n/2}$ and $\|$ denotes bitstring concatenation.

\paragraph{Quantum registers.} We use capital sans-serif letters for quantum registers. For register $\mathsf{A}$, the associated Hilbert space is $\mathcal{H}_{\mathsf{A}}$. States on multiple registers $(\sA, \sB)$ belong to $\mathcal{H}_{\mathsf{A}} \otimes \mathcal{H}_{\sB}$. We sometimes include register labels as subscripts, e.g., $\ket{\psi}_{\gsA \gsB}$. When an operator $U$ acts only on subsystem $\mathsf{A}$, we write $U_{\gsA}$ and extend it trivially to larger systems. To reduce notation, we often omit identity operators and write $U_{\gsA} \ket{\psi}_{\gsA\gsB}$ instead of $(U_{\gsA} \otimes \mathbbm{1}_{\gsB}) \ket{\psi}_{\gsA\gsB}$.

Given a projector $\Pi$ on register $\mathsf{A}$, we say state $\ket{\psi} \in \mathcal{H}_{\mathsf{A}}$ is in the image of $\Pi$ if $\Pi \ket{\psi} = \ket{\psi}$. For $\ket{\psi} \in \mathcal{H}_{\mathsf{A}} \otimes \mathcal{H}_{\gsB}$, we say $\ket{\psi}$ is in the image of $\Pi_{\gsA}$ if $\Pi_{\gsA} \ket{\psi}_{\gsA\gsB} = \ket{\psi}_{\gsA\gsB}$. We denote partial traces as $\Tr_{\gsB}(\ketbra{\psi})$ or $\Tr_{-\mathsf{A}}(\ketbra{\psi})$ when tracing out all systems except $\mathsf{A}$.

\paragraph{Relations} A \emph{relation} $R = \{(x_1, y_1), \dots, (x_t, y_t)\}$ is a multiset of ordered pairs $(x_i, y_i) \in [N]^2$. The \emph{size} $|R|$ equals the number of pairs counting multiplicities.

\begin{definition}[Sets of relations]
Let $\mathcal{R}$ denote the set of all relations and $\mathcal{R}_t$ the set of all size-$t$ relations. Define
\begin{align}
    \Dom(R) &= \{x \in [N]: \exists y \text{ such that } (x,y) \in R\},\\
    \Im(R) &= \{y \in [N]: \exists x \text{ such that } (x,y) \in R\},\\
    \Dom_<(R) &= \{x_< \in [\sqrt{N}]: \exists x_>, y \text{ such that } (x_< \| x_>, y) \in R\},\\
    \Dom_>(R) &= \{x_> \in [\sqrt{N}]: \exists x_<, y \text{ such that } (x_< \| x_>, y) \in R\},\\
    \Im_<(R) &= \{y_< \in [\sqrt{N}]: \exists x, y_> \text{ such that } (x, y_< \| y_>) \in R\},\\
    \Im_>(R) &= \{y_> \in [\sqrt{N}]: \exists x, y_< \text{ such that } (x, y_< \| y_>) \in R\}.
\end{align}
\end{definition}

Each relation $R$ corresponds to a \emph{relation state} in the symmetric subspace.

\begin{definition}[Relation states] 
\label{def:relation-states}
For relation $R = \{(x_1,y_1),\dots,(x_t,y_t)\}$, define
\begin{align}
    \ket{R} \coloneqq \frac{\sum_{\pi \in S_t} \ket{x_{\pi(1)},y_{\pi(1)},\dots,x_{\pi(t)},y_{\pi(t)}}}{\sqrt{t! \cdot \prod_{(x,y) \in [N]^2} \num(R,(x,y))!}},
\end{align}
where $\num(R,(x,y))$ denotes the multiplicity of pair $(x,y)$ in $R$.
\end{definition}

The relation states form an orthonormal basis for the symmetric subspace of $(\mathbb{C}^{N^2})^{\otimes t}$. When all pairs in $R$ are distinct, the normalization simplifies to $1/\sqrt{t!}$.

\begin{definition}[Restricted relation sets]
We define the restricted relation sets as follows.
\begin{itemize}
    \item $\mathcal{R}_t^{\text{inj}}$: injective relations where $(y_1,\dots,y_t) \in [N]^t_{\text{dist}}$
    \item $\mathcal{R}_t^{\text{bij}}$: bijective relations where $(x_1,\dots,x_t), (y_1,\dots,y_t) \in [N]^t_{\text{dist}}$
\end{itemize}
We also define $\mathcal{R}^{\text{inj}} = \bigcup_{t=0}^N \mathcal{R}_t^{\text{inj}}$ and $\mathcal{R}^{\text{bij}} = \bigcup_{t=0}^N \mathcal{R}_t^{\text{bij}}$.
\end{definition}

\paragraph{Variable-length registers} For each $t \geq 0$, let $\mathsf{R}^{(t)}$ be a register with Hilbert space $\mathcal{H}_{\mathsf{R}^{(t)}} = (\mathbb{C}^N \otimes \mathbb{C}^N)^{\otimes t}$. Define the variable-length register $\mathsf{R}$ with infinite-dimensional Hilbert space
\begin{align}
    \mathcal{H}_{\mathsf{R}} \coloneqq \bigoplus_{t=0}^\infty \mathcal{H}_{\mathsf{R}^{(t)}} = \bigoplus_{t=0}^\infty (\mathbb{C}^N \otimes \mathbb{C}^N)^{\otimes t}.
\end{align}
We decompose $\mathsf{R}^{(t)} = (\mathsf{R}^{(t)}_{\mathsf{X}}, \mathsf{R}^{(t)}_{\mathsf{Y}})$ where $\mathsf{R}^{(t)}_{\mathsf{X}}$ contains $\ket{x_1,\dots,x_t}$ and $\mathsf{R}^{(t)}_{\mathsf{Y}}$ contains $\ket{y_1,\dots,y_t}$. States of different lengths are orthogonal by the direct sum structure.

\begin{definition}[Projectors and extensions]
Define the projector onto relation states of size $t$:
\begin{align}
    \Pi^{\mathcal{R}}_t \coloneqq \sum_{R \in \mathcal{R}_t} \ketbra{R} = \Pi^{N^2,t}_{\ssym},
\end{align}
where $\Pi^{N^2,t}_{\ssym}$ projects onto the symmetric subspace of $(\mathbb{C}^{N^2})^{\otimes t}$. The projector onto all relation states~is
\begin{align}
    \Pi^{\mathcal{R}} \coloneqq \sum_{t=0}^\infty \Pi^{\mathcal{R}}_t = \sum_{R \in \mathcal{R}} \ketbra{R}.
\end{align}
For any operator $O$ on $\mathcal{H}_{\mathsf{R}^{(t)}}$, we extend it to $\mathcal{H}_{\mathsf{R}}$ by acting as the zero operator on $\mathcal{H}_{\mathsf{R}^{(t')}}$ for $t' \neq t$.
\end{definition}

\begin{definition}[Variable-length tensor powers]
For any unitary $U \in \mathcal{U}(N^2)$, define
\begin{align}
    U^{\otimes *} \coloneqq \sum_{t=0}^{\infty} U^{\otimes t}
\end{align}
acting on $\mathcal{H}_{\mathsf{R}}$.
\end{definition}

\paragraph{Pairs of variable-length registers} For constructions involving two variable-length registers $\mathsf{L}$ and $\mathsf{R}$, we introduce additional notation.

\begin{definition}[Length projectors] \label{notation:pi-leq-t}
For integers $\ell,r \geq 0$, let $\Pi_{\ell,r}$ project onto $\mathcal{H}_{\mathsf{L}^{(\ell)}} \otimes \mathcal{H}_{\mathsf{R}^{(r)}}$. For integer $t \geq 0$, let $\Pi_{\leq t}$ project onto $\bigoplus_{\ell,r \geq 0: \ell + r \leq t} \mathcal{H}_{\mathsf{L}^{(\ell)}} \otimes \mathcal{H}_{\mathsf{R}^{(r)}}$.
\end{definition}

\begin{definition}[Length-restricted operators]
For operator $B$ acting on registers $\mathsf{L}$ and $\mathsf{R}$, define
\begin{align}
    B_{\ell,r} &\coloneqq B \cdot \Pi_{\ell,r},\\
    B_{\leq t} &\coloneqq B \cdot \Pi_{\leq t}.
\end{align}
We adopt the convention that $B_{\leq t}^\dagger = (B_{\leq t})^\dagger$.
\end{definition}

\paragraph{Bounding distances between quantum states}
In our analysis of pseudorandom unitaries, we will make use of the following helpful inequalities for bounding distances between quantum states.
For any pure states $\ket{u}, \ket{v}$ with  $\langle u | u \rangle, \langle v | v \rangle \leq 1$,
\begin{equation} \label{eq: state to op bound}
    \big\lVert \dyad{u} - \dyad{v} \big\rVert_1 \leq 2 \big\lVert \ket{u} - \ket{v} \big\rVert_2.
\end{equation}
We also have the gentle measurement lemma,
\begin{equation} \label{eq: gentle measurement lemma}
    \big\lVert \Pi \rho \Pi - \rho \big\rVert_1 \leq 2 \sqrt{ 1 - \tr( \Pi \rho ) }.
\end{equation}
Finally, we will use the following variant on gentle measurement lemma from~\cite{ma2024construct}.
\begin{lemma}[Sequential gentle measurement; Lemma 2.3 of \cite{ma2024construct}]  \label{lem:seq-gentleM-pure}
    Let $\ket*{\psi}$ be a normalized state, $P_1,\dots,P_t$ be projectors, and $U_1,\dots,U_t$ be unitaries.
    \begin{align}
        \norm{U_t \ldots U_1 \ket*{\psi} -  P_t U_t \ldots P_{1} U_1 \ket*{\psi}}_2 \leq t \sqrt{1 - \norm{P_t U_t \ldots P_{1} U_1 \ket*{\psi}}_2^2}.
    \end{align}
\end{lemma}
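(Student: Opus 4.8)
The plan is to prove \cref{lem:seq-gentleM-pure} by a telescoping decomposition together with an elementary bound on the norm lost at each projection step. Write $V_j \coloneqq P_j U_j$ for the ``projected'' step, and introduce the partially-projected states $\ket*{\phi_0} \coloneqq \ket*{\psi}$ and $\ket*{\phi_j} \coloneqq V_j V_{j-1}\cdots V_1 \ket*{\psi} = P_j U_j \cdots P_1 U_1 \ket*{\psi}$ for $1 \le j \le t$, so that $\ket*{\phi_t}$ is exactly the projected state appearing on the right-hand side of the lemma. First I would expand
\begin{equation}
    U_t \cdots U_1 - V_t \cdots V_1 = \sum_{j=1}^t \big( U_t \cdots U_{j+1}\big)\,(U_j - V_j)\,\big(V_{j-1}\cdots V_1\big),
\end{equation}
observe that $U_j - V_j = (\mathbbm{1} - P_j)U_j$, apply both sides to $\ket*{\psi}$, and use the triangle inequality together with unitarity of $U_t\cdots U_{j+1}$ to get
\begin{equation}
    \norm{U_t \cdots U_1 \ket*{\psi} - P_t U_t \cdots P_1 U_1 \ket*{\psi}} \;\le\; \sum_{j=1}^t \norm{(\mathbbm{1} - P_j)\,U_j\,\ket*{\phi_{j-1}}}.
\end{equation}

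The second step is to control each summand. Since $P_j$ is a projector and $U_j$ is unitary, $\norm{(\mathbbm{1}-P_j)U_j\ket*{\phi_{j-1}}}^2 = \norm{U_j\ket*{\phi_{j-1}}}^2 - \norm{P_j U_j \ket*{\phi_{j-1}}}^2 = \norm{\ket*{\phi_{j-1}}}^2 - \norm{\ket*{\phi_j}}^2$. Because each $V_j$ is a contraction, the sequence $1 = \norm{\ket*{\phi_0}} \ge \norm{\ket*{\phi_1}} \ge \cdots \ge \norm{\ket*{\phi_t}}$ is non-increasing, so every term of the telescoping identity $\sum_{j=1}^t \big(\norm{\ket*{\phi_{j-1}}}^2 - \norm{\ket*{\phi_j}}^2\big) = 1 - \norm{\ket*{\phi_t}}^2$ is non-negative and hence bounded above by the whole sum. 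Therefore $\norm{(\mathbbm{1}-P_j)U_j\ket*{\phi_{j-1}}} \le \sqrt{1 - \norm{\ket*{\phi_t}}^2}$ for each $j$, and summing the $t$ identical bounds yields exactly $t\sqrt{1 - \norm{P_t U_t \cdots P_1 U_1 \ket*{\psi}}^2}$, as claimed.

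The only point that requires a moment's thought — and really the sole non-bookkeeping step — is the observation that each individual ``projection loss'' $\norm{\ket*{\phi_{j-1}}}^2 - \norm{\ket*{\phi_j}}^2$ is dominated by the \emph{total} loss $1 - \norm{\ket*{\phi_t}}^2$; a naive attempt to track the running norm $\norm{\ket*{\phi_{j-1}}}$ one step at a time produces only a weaker, data-dependent estimate. An alternative route is induction on $t$: apply the $t=1$ case of the gentle-measurement bound to the final step $P_t U_t$ and the inductive hypothesis to the first $t-1$ steps, then combine the two error terms using the same monotonicity of $\norm{\ket*{\phi_j}}$; the telescoping argument above simply packages this recursion more transparently.
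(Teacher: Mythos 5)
Your proof is correct: the hybrid/telescoping decomposition, the Pythagorean identity $\norm{(\mathbbm{1}-P_j)U_j\ket*{\phi_{j-1}}}^2 = \norm{\ket*{\phi_{j-1}}}^2 - \norm{\ket*{\phi_j}}^2$, and the observation that each per-step loss is dominated by the total loss $1-\norm{\ket*{\phi_t}}^2$ is exactly the standard argument for this lemma (the paper does not reprove it, but imports it as Lemma~2.3 of~\cite{ma2024construct}, whose proof proceeds in essentially this way). No gaps; nothing further is needed.
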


\paragraph{Formulas for the 2-design twirl}

Finally, we will make use of the following formulas for the twirl over an exact 2-design $\mathfrak{D}$.
Let $\alpha$ denote any subset of $n$ qubits, and $\bar \alpha$ its complement.
Also let $\Pi^{\mathsf{eq}} = \sum_{x} \dyad{x} \otimes \dyad{x}$ denote the projector onto bitstrings that are equal between two copies, and $\Pi^{\mathsf{neq}} = 1 - \Pi^{\mathsf{eq}}$ its complement.
We have
\begin{equation} \label{eq: clifford twirl eq neq 1}
    \E_{U \sim \mathfrak{D}} \left[ 
    ( U \otimes U )^\dagger 
    \cdot
    \Pi^{\mathsf{eq}}_{\alpha}
    \Pi^{\mathsf{neq}}_{\bar \alpha}
    \cdot  
    ( U \otimes U ) \right]
    =
    \frac{N_\alpha N_{\bar \alpha} (N_{\bar \alpha}-1) }{N^2}
    \cdot \mathbbm{1}
    \preceq
    \frac{1}{N_\alpha } \cdot \mathbbm{1},
\end{equation}
and
\begin{equation} \label{eq: clifford twirl eq neq 2}
    \E_{U \sim \mathfrak{D}} \left[ 
    ( U \otimes \bar U )^\dagger 
    \cdot
    \Pi^{\mathsf{eq}}_{\alpha}
    \Pi^{\mathsf{neq}}_{\bar \alpha}
    \cdot  
    ( U \otimes \bar U ) \right]
    =
    \frac{N_\alpha  N_{\bar \alpha} (N_{\bar \alpha}-1) }{N^2}
    \cdot \mathbbm{1}
    \preceq
    \frac{1}{N_\alpha } \cdot \mathbbm{1}.
\end{equation}
From this, for any approximate unitary 2-design with additive error $\varepsilon$, we have
\begin{equation} \label{eq: clifford twirl eq neq 1 approx}
    \left\lVert \E_{U \sim \mathfrak{D}} \left[ 
    ( U \otimes U )^\dagger 
    \cdot
    \Pi^{\mathsf{eq}}_{\alpha}
    \Pi^{\mathsf{neq}}_{\bar \alpha}
    \cdot  
    ( U \otimes U ) \right] \right\rVert_\infty
    \leq
    \frac{1}{N_\alpha } + \varepsilon,
\end{equation}
and for any strong approximate unitary 2-design with additive error $\varepsilon$, we have
\begin{equation} \label{eq: clifford twirl eq neq 2 approx}
    \left\lVert \E_{U \sim \mathfrak{D}} \left[ 
    ( U \otimes \bar U )^\dagger 
    \cdot
    \Pi^{\mathsf{eq}}_{\alpha}
    \Pi^{\mathsf{neq}}_{\bar \alpha}
    \cdot  
    ( U \otimes \bar U ) \right] \right\rVert_\infty
    \leq
    \frac{1}{N_\alpha } + \varepsilon.
\end{equation}

\subsection{The purified permutation-function oracle}
\label{sec:PF3-oracle}

We now introduce the strong PFC ensemble~\cite{metger2024simple}.
In the remaining subsections of this section, we will present our extension of the proof of \cite{ma2024construct} to include the conjugate and transpose.
We also extend the proof to allow for any strong approximate unitary 2-design instead of an exact unitary 2-design.

We analyze the view of an adversary that can make standard, inverse, complex-conjugated, and transposed queries to an oracle $P_{\pi} \cdot F_{f}$, for uniformly random $\pi \sim \sSym_N$ and a random \textbf{ternary} function $f \sim \{0,1,2\}^N$. This will motivate an extension of the definition of the path recording oracle $V$ proposed in \cite{ma2024construct} to include its conjugate $\overline{V}$.

\begin{definition}[Purified permutation-function oracle] 
    The purified permutation-function oracle $\spfo$ is a unitary acting on registers $\sA,\sP,\sF$, where
    \begin{itemize}
        \item $\sP$ is a register associated with the Hilbert space $\calH_{\sP}$,  defined to be the span of the orthonormal states $\ket*{\pi}$ for all $\pi \in \sSym_N$. 
        \item $\sF$ is a register associated with the Hilbert space $\calH_{\sF}$, defined to be the span of the orthonormal states $\ket*{f}$ for all $f \in \{0,1,2\}^N$. 
    \end{itemize}
    The unitary $\spfo$ is defined to act as follows:
    \begin{align}
        \spfo_{\gsA\gsP\gsF} \ket*{x}_{\gsA} \ket*{\pi}_{\gsP} \ket*{f}_{\gsF} & \coloneqq \omega_3^{f(x)} \ket*{\pi(x)}_{\gsA} \ket*{\pi}_{\gsP} \ket*{f}_{\gsF}, \label{eq:pfo-definition-strong}\\
        &= \sum_{y \in [N]} \ket*{y}_{\gsA} \delta_{\pi(x) = y} \ket*{\pi} \omega_3^{f(x)} \ket*{f},
    \end{align}
    for all $x \in [N], \pi \in \sSym_N,$ and $f \in \{0, 1, 2\}^N$.
    Here, $\omega_3 = \exp( 2 \pi i / 3)$.
\end{definition}

The action of $\spfo^\dagger$ is
\begin{align}
    \spfo^\dagger \ket*{y}_{\gsA} \ket*{\pi} \ket*{f} &=  \sum_{x \in [N]} \ket*{x}_{\gsA} \delta_{\pi(x) = y} \ket*{\pi} \omega_3^{-f(x)} \ket*{f}.
\end{align}
The action of $\spfo^*$ is
\begin{align}
    \spfo^* \ket*{x}_{\gsA} \ket*{\pi} \ket*{f} &=  \sum_{y \in [N]} \ket*{y}_{\gsA} \delta_{\pi(x) = y} \ket*{\pi} \omega_3^{-f(x)} \ket*{f}.
\end{align}
The action of $\spfo^T$ is
\begin{align}
    \spfo^T \ket*{y}_{\gsA} \ket*{\pi} \ket*{f} &=  \sum_{x \in [N]} \ket*{x}_{\gsA} \delta_{\pi(x) = y} \ket*{\pi} \omega_3^{-f(x)} \ket*{f}.
\end{align}
Consider $P_\pi \coloneq \sum_{x \in [N]} \ketbra{\pi(x)}{x}$ and $F_f \coloneq \sum_{x \in [N]} \omega_3^{f(x)} \ketbra{x}{x}$. We have the equivalence:

\begin{fact}[Equivalence of purified and standard oracles] \label{claim:equiv-purified-vs-standard}
    For any oracle adversary, the following oracle instantiations are perfectly indistinguishable:
    \begin{itemize}
        \item (Queries to a random $P_\pi \cdot F_f$) Sample a uniformly random $\pi \sim \sSym_N, f \sim \{0,1, 2\}^N$. On each query, apply $(P_\pi \cdot F_f), (P_\pi \cdot F_f)^\dagger, (P_\pi \cdot F_f)^*, (P_\pi \cdot F_f)^T$ to register $\sA$.
        \item (Queries to $\spfo$) Initialize registers $\sP,\sF$ to $\frac{1}{\sqrt{N!}} \sum_{\pi \in \sSym_N} \ket*{\pi}_{\gsP} \otimes \frac{1}{\sqrt{2^N}} \sum_{f \in \{0,1, 2\}^N} \ket*{f}_{\gsF}$. At each query, apply $\spfo, \spfo^\dagger, \spfo^*$ or $\spfo^T$ to registers $\sA,\sP,\sF$.
    \end{itemize}
\end{fact}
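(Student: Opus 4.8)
The plan is to prove Fact~\ref{claim:equiv-purified-vs-standard} by the standard purification-of-a-classical-oracle argument, extended to handle all four query types at once. The key structural observation is that $\spfo$ is \emph{block-diagonal} in the $(\pi,f)$-eigenbasis of registers $\gsP,\gsF$: reading off Eq.~(\ref{eq:pfo-definition-strong}) gives
\begin{equation}
    \spfo_{\gsA\gsP\gsF} = \sum_{\pi \in \sSym_N} \sum_{f \in \{0,1,2\}^N} (P_\pi F_f)_{\gsA} \otimes \ketbra{\pi}_{\gsP} \otimes \ketbra{f}_{\gsF}.
\end{equation}
Because each $\ketbra{\pi}$ and $\ketbra{f}$ is a \emph{real} rank-one projector in the computational basis, it is fixed by complex conjugation and by transposition; hence the operations $\dagger$, $*$, and $T$ pass through the block structure, and for every $\circ \in \{\,\cdot\,,\dagger,*,T\}$ we get $\spfo^{\circ} = \sum_{\pi,f} (P_\pi F_f)^{\circ}_{\gsA} \otimes \ketbra{\pi}_{\gsP} \otimes \ketbra{f}_{\gsF}$. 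Matching the four explicit formulas listed above against $(P_\pi F_f)^{\circ}$, using $P_\pi^{\dagger} = P_\pi^{T} = P_{\pi^{-1}}$, the reality of $P_\pi$, and the diagonality of $F_f$, is a short bookkeeping check.

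Given this, fix any $t$-query oracle adversary $\mathcal A$ specified by $(W_1,\dots,W_{t+1})$ on $\gsA\gsB$ and a query pattern $U_1,\dots,U_t$ with $U_i \in \{U,U^\dagger,U^*,U^T\}$; write $\circ_i$ for the corresponding element of $\{\,\cdot\,,\dagger,*,T\}$. Instantiate the oracle by $\spfo$ with $\gsP,\gsF$ prepared in $\ket{\mathsf{init}} \coloneqq \tfrac{1}{\sqrt{N!}}\sum_\pi \ket{\pi}_{\gsP} \otimes \tfrac{1}{\sqrt{3^N}}\sum_f \ket{f}_{\gsF}$. Since the $W_i$ act trivially on $\gsP\gsF$ and each $\spfo^{\circ_i}$ is block-diagonal in $(\pi,f)$, an easy induction on the number of queries shows that the global state after all $t+1$ unitary layers equals
\begin{equation}
    \frac{1}{\sqrt{N!\,3^N}} \sum_{\pi,f} \Big( W_{t+1}\,(P_\pi F_f)^{\circ_t}_{\gsA}\, W_t \cdots W_2\, (P_\pi F_f)^{\circ_1}_{\gsA}\, W_1 \ket{0^{n+m}}_{\gsA\gsB} \Big) \otimes \ket{\pi}_{\gsP} \otimes \ket{f}_{\gsF},
\end{equation}
i.e.\ conditioned on $(\pi,f)$ the $\gsA\gsB$ part is exactly the state the adversary would hold against the \emph{fixed} oracle $P_\pi F_f$ with the same query pattern.

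Finally, the adversary's output depends only on $\gsA\gsB$ — it measures $D_{\mathcal A}$ there and never touches $\gsP\gsF$ — so I would trace out $\gsP\gsF$. Because the states $\ket{\pi}\ket{f}$ are mutually orthogonal, the reduced state on $\gsA\gsB$ is the uniform mixture $\tfrac{1}{N!\,3^N}\sum_{\pi,f} \dyad{\mathcal A^{(\pi,f)}}$, which is precisely the state obtained by first sampling $\pi\sim\sSym_N$ and $f\sim\{0,1,2\}^N$ uniformly and then running $\mathcal A$ with queries to $(P_\pi F_f)^{\circ_i}$. Hence the two instantiations produce identical measurement statistics for every adversary, giving the claimed perfect indistinguishability. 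The only genuine work is the bookkeeping check in the first paragraph — verifying that the listed $\spfo^\dagger,\spfo^*,\spfo^T$ are the blockwise $\dagger,*,T$ of $P_\pi F_f$ — and this is exactly what makes the inclusion of transpose and conjugate queries go through as cleanly as the forward-only equivalence already used in~\cite{ma2024construct}; everything else is the textbook oracle-purification argument.
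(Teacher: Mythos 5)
Your proposal is correct and is exactly the standard oracle-purification argument the paper implicitly relies on (the paper states this Fact without proof, treating the block-diagonal structure of $\spfo$ in the $(\pi,f)$ basis and the trace-out of $\sP,\sF$ as immediate); your bookkeeping that $\dagger$, $*$, $T$ commute with the real rank-one projectors $\ketbra{\pi},\ketbra{f}$ and act blockwise on $P_\pi F_f$ is precisely the point that extends the forward-only equivalence of~\cite{ma2024construct} to all four query types. (You also, correctly, normalize the $\sF$ register by $1/\sqrt{3^N}$, quietly fixing the $1/\sqrt{2^N}$ slip in the Fact's statement.)
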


\begin{definition}[$\mathsf{pf}$-relation state]
    For relation $L = \{(x_1, y_1), \dots, (x_\ell, y_\ell)\} \in \calR_\ell$ and relation $R = \{(x'_1,y'_1),\dots,(x'_r,y'_r)\} \in \calR_r$, where $\ell$ and $r$ are non-negative integers such that $\ell + r \leq N$, let
    \begin{align}
        \ket*{\pf_{L,R}} \coloneqq \frac{1}{\sqrt{3^N (N-\ell-r)!}} \sum_{\pi \in \sSym_N} \delta_{\pi,L \cup R} \ket*{\pi} \sum_{f \in \{0,1,2\}^N} \omega_3^{f(x_1) + \cdots + f(x_\ell) - (f(x'_1) + \cdots + f(x'_r))} \ket*{f},
    \end{align}
    where $\delta_{\pi,L \cup R}$ is an indicator variable that equals $1$ if $\pi(x) = y$ for all $(x,y) \in L \cup R$, and is $0$ otherwise.
\end{definition}

\begin{definition}
    Let $\calR^{2,\dist}$ be the set of all ordered pairs of relations $(L,R) \in \calR^2$ where $L \cup R = \{(x_1, y_1), \ldots, (x_t, y_t))\}$ is a bijective relation, i.e., $x_1, \ldots, x_t$ are distinct and $y_1, \ldots, y_t$ are distinct.
\end{definition}

\begin{lemma}[Orthonormality of $\mathsf{pf}$-relation states; From Claim 7 of \cite{ma2024construct}]
\label{claim:phi-L-R-orthogonal}
    $\{\ket*{\pf_{L,R}}\}_{(L,R) \in \calR^{2,\dist}}$ is an orthonormal set of vectors. 
\end{lemma}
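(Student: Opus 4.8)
The plan is to exploit the fact that $\ket*{\pf_{L,R}}$ factors as a tensor product across the permutation register $\sP$ and the function register $\sF$. Writing $G \coloneqq L \cup R$, which by the definition of $\calR^{2,\dist}$ is a bijective relation on $t \coloneqq \ell + r$ pairs, one has
\begin{equation}
\ket*{\pf_{L,R}} \;=\; \frac{1}{\sqrt{3^N (N-t)!}}\Bigg(\sum_{\pi \in \sSym_N:\, \pi \supseteq G} \ket*{\pi}\Bigg)_{\sP} \otimes \Bigg(\sum_{f \in \{0,1,2\}^N} \omega_3^{\,\phi_{L,R}(f)} \ket*{f}\Bigg)_{\sF},
\end{equation}
where $\phi_{L,R}(f) \coloneqq \sum_{(x,y)\in L} f(x) - \sum_{(x,y)\in R} f(x)$ and ``$\pi \supseteq G$'' abbreviates ``$\pi(x)=y$ for all $(x,y)\in G$''. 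Since $G$ has pairwise distinct first coordinates, $L$ and $R$ are disjoint with disjoint domains, so $\phi_{L,R}(f) = \sum_{x\in[N]} \epsilon_{L,R}(x)\, f(x)$ with $\epsilon_{L,R}(x) = +1$ if $x \in \Dom(L)$, $\epsilon_{L,R}(x) = -1$ if $x \in \Dom(R)$, and $\epsilon_{L,R}(x) = 0$ otherwise. First I would observe that normalization is automatic: the $\sP$-part has squared norm $(N-t)!$ (the number of permutations extending the partial bijection $G$), the $\sF$-part has squared norm $\sum_{f} 1 = 3^N$, and the prefactor $1/\sqrt{3^N (N-t)!}$ makes $\ket*{\pf_{L,R}}$ a unit vector.

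For orthogonality I would fix $(L,R) \ne (L',R')$ in $\calR^{2,\dist}$, put $G \coloneqq L\cup R$ and $G' \coloneqq L'\cup R'$, and note that $\braket*{\pf_{L,R}}{\pf_{L',R'}}$ equals a nonzero constant times the product of the $\sP$-overlap $\#\{\pi \in \sSym_N : \pi \supseteq G \text{ and } \pi \supseteq G'\}$ and the $\sF$-overlap, and that the latter factorizes as $\prod_{x\in[N]} \big(\sum_{a\in\{0,1,2\}} \omega_3^{\,(\epsilon_{L',R'}(x)-\epsilon_{L,R}(x))\,a}\big)$. I would then split into two complementary cases. If $\epsilon_{L,R} \ne \epsilon_{L',R'}$, I choose $x^*$ with $c_{x^*} \coloneqq \epsilon_{L',R'}(x^*) - \epsilon_{L,R}(x^*) \ne 0$; since $\epsilon$-values lie in $\{-1,0,+1\}$, $c_{x^*} \in \{-2,-1,1,2\}$ is coprime to $3$, and because $\sum_{a\in\{0,1,2\}} \omega_3^{\,b a} = 0$ whenever $3 \nmid b$, the $\sF$-overlap has a vanishing factor at $x^*$. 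If instead $\epsilon_{L,R} = \epsilon_{L',R'}$, then $\Dom(L)=\Dom(L')$ and $\Dom(R)=\Dom(R')$, so $(L,R)\ne(L',R')$ forces $L\ne L'$ (or symmetrically $R\ne R'$); as $L$ and $L'$ are graphs of functions on the same domain, they disagree at some $x^*$, whence $G(x^*)=L(x^*)\ne L'(x^*)=G'(x^*)$ and no permutation extends both $G$ and $G'$, so the $\sP$-overlap is $0$. In every case $\braket*{\pf_{L,R}}{\pf_{L',R'}} = 0$, which completes the argument.

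The only part needing genuine care is checking that these two cases are exhaustive and correctly handled --- in particular, in the case $\epsilon_{L,R}=\epsilon_{L',R'}$, that a mismatch between $(L,R)$ and $(L',R')$ must produce a first coordinate on which $G$ and $G'$ genuinely disagree --- together with making sure the $\sF$-character sum annihilates not only the $c_{x^*}=\pm 1$ configurations but also the $c_{x^*}=\pm 2$ ones. The latter point is exactly where the ternary alphabet $\{0,1,2\}$ and the primitive cube root $\omega_3$ are essential: a binary phase would give $\sum_{a\in\{0,1\}}(-1)^{2a}=2\ne 0$, so the $c_{x^*}=\pm 2$ configurations (which occur, for instance, when the two decompositions of a single relation place a given pair into opposite $L$-versus-$R$ registers) would survive. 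Everything else is routine bookkeeping, and the argument is the analogue for the ternary oracle $\spfo$ --- the object we need in order to accommodate conjugate and transpose queries --- of Claim~7 of~\cite{ma2024construct}.
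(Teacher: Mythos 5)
Your proof is correct, and it is essentially the same argument underlying the cited result (the paper itself only imports this from Claim~7 of~\cite{ma2024construct}): factor $\ket*{\pf_{L,R}}$ across the $\sP$ and $\sF$ registers, get normalization from the $(N-t)!$ permutation extensions and the $3^N$ functions, and kill cross terms either by the ternary character sum (when the $\pm1$ sign patterns on $\Dom(L)$ versus $\Dom(R)$ differ, including the $\pm2$ case that forces $\omega_3$ rather than a binary phase) or by the impossibility of a permutation extending two partial bijections that disagree at a point. Your case split is exhaustive precisely because equal sign patterns force equal domains for $L,L'$ and $R,R'$, so nothing further is needed.
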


\begin{definition}
    Define the partial isometry $\Compress^{\mathsf{PF}}: \calH_{\sP} \otimes \calH_{\sF} \rightarrow \calH_{\sL} \otimes \calH_{\sR}$ to be
    \begin{align}
        \Compress^{\mathsf{PF}} \coloneqq \sum_{(L,R) \in \calR^{2,\dist}} \ket*{L}_{\gsL} \otimes \ket*{R}_{\gsR} \cdot \bra{\pf_{L, R}}_{\gsP \gsF}.
    \end{align}
\end{definition}

We can use the action of $\spfo^*, \spfo^T$ to strengthen Claim 8 of \cite{ma2024construct} to obtain the following.

\begin{lemma}[Action of $\spfo$; From Claim 8 of \cite{ma2024construct}]
\label{claim:ternary-pfo-action}
    For any $(L,R) \in \calR^{2,\dist}$ and $x \in [N]$ such that $x \not\in \Dom(L \cup R)$, we have
    \begin{align}
        \spfo \ket*{x}_{\gsA} \ket*{\pf_{L,R}}_{\gsP \gsF} = \frac{1}{\sqrt{N-\abs{L \cup R}}} \sum_{\substack{y \in [N]:\\ y\not\in \Im(L \cup R)}} \ket*{y}_{\gsA} \ket*{\pf_{L \cup \{(x,y)\},R}}_{\gsP \gsF}. \label{eq:tpfo-map}
    \end{align}
    For any $(L,R) \in \calR^{2,\dist}$ and $y \in [N]$ such that $y \not\in \Im(L \cup R)$, we have
    \begin{align}
        \spfo^\dagger \ket*{y}_{\gsA} \ket*{\pf_{L,R}}_{\gsP \gsF} = \frac{1}{\sqrt{N-\abs{L \cup R}}} \sum_{\substack{x \in [N]:\\ x\not\in \Dom(L \cup R)}} \ket*{x}_{\gsA} \ket*{\pf_{L,R \cup \{(x,y)\}}}_{\gsP \gsF}. \label{eq:tpfo-inverse-map}
    \end{align}
    For any $(L,R) \in \calR^{2,\dist}$ and $x \in [N]$ such that $x \not\in \Dom(L \cup R)$, we have
    \begin{align}
        \spfo^* \ket*{x}_{\gsA} \ket*{\pf_{L,R}}_{\gsP \gsF} = \frac{1}{\sqrt{N-\abs{L \cup R}}} \sum_{\substack{y \in [N]:\\ y\not\in \Im(L \cup R)}} \ket*{y}_{\gsA} \ket*{\pf_{L , R\cup \{(x,y)\}}}_{\gsP \gsF}. \label{eq:tpfo-map-conj}
    \end{align}
    For any $(L,R) \in \calR^{2,\dist}$ and $y \in [N]$ such that $y \not\in \Im(L \cup R)$, we have
    \begin{align}
        \spfo^T \ket*{y}_{\gsA} \ket*{\pf_{L,R}}_{\gsP \gsF} = \frac{1}{\sqrt{N-\abs{L \cup R}}} \sum_{\substack{x \in [N]:\\ x\not\in \Dom(L \cup R)}} \ket*{x}_{\gsA} \ket*{\pf_{L\cup \{(x,y)\},R }}_{\gsP \gsF}. \label{eq:tpfo-inverse-map-conj}
    \end{align}
\end{lemma}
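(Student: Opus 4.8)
The plan is to prove each of the four identities by directly unfolding the definitions of $\spfo$ (and of $\spfo^\dagger$, $\spfo^*$, $\spfo^T$) and of the $\mathsf{pf}$-relation state $\ket{\pf_{L,R}}$, and then regrouping the sum over permutations. All four computations share the same skeleton, so the plan is to carry out the $\spfo$ case in full and then record the small modifications that yield the remaining three.

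For the $\spfo$ identity, fix $(L,R) \in \calR^{2,\dist}$, set $t = |L \cup R|$, and take $x \notin \Dom(L \cup R)$. Writing $\pi \supseteq L \cup R$ to abbreviate the condition $\pi(x') = y'$ for all $(x',y') \in L \cup R$, and letting $\Sigma_{L,R}(f) \coloneqq f(x_1) + \cdots + f(x_\ell) - (f(x_1') + \cdots + f(x_r'))$ be the exponent appearing in $\ket{\pf_{L,R}}$, substitution of the definitions gives
\begin{equation} \nonumber
    \spfo \ket{x}_{\sA} \ket{\pf_{L,R}}_{\sP \sF} = \frac{1}{\sqrt{3^N (N - t)!}} \sum_{\pi \supseteq L \cup R} \ket{\pi(x)}_{\sA} \ket{\pi}_{\sP} \sum_{f \in \{0,1,2\}^N} \omega_3^{f(x) + \Sigma_{L,R}(f)} \ket{f}_{\sF}.
\end{equation}
The combinatorial heart of the argument is that, because $x \notin \Dom(L \cup R)$, the map $\pi \mapsto \pi(x)$ carries the set of extensions of $L \cup R$ onto $[N] \setminus \Im(L \cup R)$, with the fibre over a value $y$ being precisely the set of extensions of $L' \cup R$, where $L' \coloneqq L \cup \{(x,y)\}$. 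Regrouping the $\pi$-sum by $y = \pi(x)$, the phase exponent becomes $f(x) + \Sigma_{L,R}(f) = \Sigma_{L',R}(f)$, so the inner double sum over $\pi$ and $f$ equals $\sqrt{3^N (N - t - 1)!}\, \ket{\pf_{L',R}}$; collecting the two normalisation constants produces the claimed prefactor $1/\sqrt{N - |L \cup R|}$. One also checks that $(L',R) \in \calR^{2,\dist}$ --- adjoining $(x,y)$ with $x \notin \Dom(L \cup R)$ and $y \notin \Im(L \cup R)$ preserves bijectivity of $L \cup R$ --- so $\ket{\pf_{L',R}}$ is well defined and, by Lemma~\ref{claim:phi-L-R-orthogonal}, normalised.

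The other three identities differ only in two bookkeeping points. First, $\spfo^\dagger$ and $\spfo^T$ send $\ket{y} \mapsto \ket{\pi^{-1}(y)}$ rather than $\ket{x} \mapsto \ket{\pi(x)}$, so one instead regroups the $\pi$-sum by $x \coloneqq \pi^{-1}(y)$; this is valid because $y \notin \Im(L \cup R)$ forces $\pi^{-1}(y) \notin \Dom(L \cup R)$. Second, $\spfo$ and $\spfo^T$ emit the phase $\omega_3^{+f(x)}$, while $\spfo^\dagger$ and $\spfo^*$ emit $\omega_3^{-f(x)}$; a $+$ phase merges into $\Sigma_{L,R}$ to give $\Sigma_{L',R}$ with $L' = L \cup \{(x,y)\}$ (so the new pair lands in the $L$-relation), whereas a $-$ phase gives $\Sigma_{L,R'}$ with $R' = R \cup \{(x,y)\}$ (so it lands in $R$). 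Matching these against the four cases reproduces exactly the right-hand sides in the statement, and in each case $(L',R)$ or $(L,R')$ again lies in $\calR^{2,\dist}$.

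I do not expect a genuine obstacle here: the proof is a substitution followed by a regrouping of the permutation sum. The one point that needs care --- and the only place where this lemma truly extends Claim~8 of \cite{ma2024construct} --- is the sign of the $\omega_3$-phase emitted by each of $\spfo$, $\spfo^\dagger$, $\spfo^*$, $\spfo^T$, since that sign is what decides whether a query is recorded into $L$ or into $R$. I would fix these signs by deriving the action of $\spfo^T$ and $\spfo^*$ directly from the matrix-transpose and entrywise-conjugate definitions of $U^T$ and $U^*$, and check that the resulting rule --- forward-type queries ($U$, $U^T$) feed the $L$-register and backward-type queries ($U^\dagger$, $U^*$) feed the $R$-register --- is consistent with the $(p,q)$-accounting used throughout the paper (cf.\ Fig.~\ref{fig:reformulation}).
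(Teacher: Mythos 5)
Your proposal is correct and follows essentially the same route as the paper, which likewise just unfolds the definitions of $\spfo,\spfo^\dagger,\spfo^*,\spfo^T$ on the purified $\sP,\sF$ registers and regroups the permutation sum by the new input/output value (the paper defers this computation to Claim~8 of the cited work and records only that the conjugate and transpose cases swap which of $L$ and $R$ receives the new pair). Your insistence on deriving the phase signs from the matrix transpose and entrywise conjugate is exactly the right check: it gives $\omega_3^{+f(x)}$ for $\spfo^T$ (so the new pair enters $L$), which is what the lemma asserts, and it also reveals that the paper's displayed action of $\spfo^T$ carries a sign typo ($\omega_3^{-f(x)}$, which would make it coincide with $\spfo^\dagger$).
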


From the above lemma, we can see that $\spfo^*, \spfo^T$ closely mimics $\spfo, \spfo^\dagger$ with the only difference being the relations $L$ and $R$ are switched.

\subsection{The path-recording oracle $V$ and its conjugate $\overline{V}$}

The path-recording oracle $V$ proposed in \cite{ma2024construct} efficiently simulates a Haar-random unitary $U$ under both queries to $U$ (via query to $V$) and $U^\dagger$ (via query to $V^\dagger$).
In this work, we extend the path-recording framework to enable queries to $U, U^\dagger, U^*,$ and $U^\dagger$ by defining a new operator $\overline{V}$ that serves as the conjugate of $V$.
While $\overline{V}$ is not the actual complex conjugate of $V$, we will show that $V, V^\dagger, \overline{V}, \overline{V}^\dagger$ efficiently simulates a Haar-random unitary $U$ under queries to $U, U^\dagger, U^*, U^T$, respectively.

To define the path-recording oracle $V$, we need to define the left part $V^L$ and right part $V^R$ of $V$.

\begin{definition}[Left and right parts of $V$] \label{def:V-sym-PRO}
    Let $V^L$ be the linear operator that acts as follows. For $x \in [N]$ and $(L,R) \in \mathcal{R}^{2,\leq N-1}$,
    \begin{equation}
        V^L \cdot \ket*{x}_{\gsA} \ket*{L}_{\gsL} \ket*{R}_{\gsR} \coloneqq \sum_{\substack{y \in [N]:\\ y\not\in \Im(L \cup R)}} \frac{1}{\sqrt{N - \abs{\Im(L \cup R)}}} \ket*{y}_{\gsA} \ket*{L \cup \{(x,y)\}}_{\gsL} \ket*{R}_{\gsR}.
    \end{equation}
    Define $V^R$ to be the linear operator such that for all $y \in [N]$ and $(L,R) \in \mathcal{R}^{2,\leq N-1}$,
    \begin{equation}
        V^R \cdot \ket*{y}_{\gsA} \ket*{L}_{\gsL} \ket*{R}_{\gsR} \coloneqq \sum_{\substack{x \in [N]:\\ x\not\in \Dom(L \cup R)}} \frac{1}{\sqrt{N - \abs{\Dom(L \cup R)}}} \ket*{x}_{\gsA} \ket*{L}_{\gsL} \ket*{R \cup \{(x, y)\} }_{\gsR}.
    \end{equation}
    By construction, $V^L$ and $V^R$ take states in $\Id_{\gsA} \otimes \Pi^{\calR^2}_{\leq i, \gsL \gsR}$ to $\Id_{\gsA} \otimes \Pi^{\calR^2}_{\leq i+1, \gsL \gsR}$.
\end{definition}

\begin{lemma}[Claim 14 \cite{ma2024construct}]
$V^L$ and $V^R$ are partial isometries.
\end{lemma}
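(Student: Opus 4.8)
The plan is to verify directly that $(V^L)^\dagger V^L$ is a projector (onto $\mathcal{H}_{\gsA}\otimes\Pi^{\mathcal{R}^2}_{\leq N-1}$), since this is precisely the assertion that $V^L$ is a partial isometry; the case of $V^R$ is handled by the mirror-image argument. Because the vectors $\ket*{x}_{\gsA}\ket*{L}_{\gsL}\ket*{R}_{\gsR}$ with $x\in[N]$ and $(L,R)\in\mathcal{R}^{2,\leq N-1}$ form an orthonormal basis of $\mathcal{H}_{\gsA}\otimes\Pi^{\mathcal{R}^2}_{\leq N-1}$, and $V^L$ acts as the zero operator off this subspace, it suffices to check that $V^L$ carries this orthonormal family to an orthonormal family of vectors.

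Normalization is a counting step. By definition $V^L\ket*{x}_{\gsA}\ket*{L}_{\gsL}\ket*{R}_{\gsR}$ is the uniform superposition, with each amplitude equal to $(N-\abs{\Im(L\cup R)})^{-1/2}$, of the states $\ket*{y}_{\gsA}\ket*{L\cup\{(x,y)\}}_{\gsL}\ket*{R}_{\gsR}$ ranging over the $N-\abs{\Im(L\cup R)}$ values $y\notin \Im(L\cup R)$. These states are mutually orthogonal (they carry distinct labels $y$ on register $\gsA$), and each is a unit vector since a relation state is normalized (Definition~\ref{def:relation-states}); hence the image has norm one.

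For orthogonality, fix distinct inputs $(x,L,R)\neq(x',L',R')$ and expand the inner product of their images, using orthonormality of the computational basis on $\gsA$ and of relation states on $\gsL$ and $\gsR$. A surviving cross term forces $R=R'$, $y=y'$, and $L\cup\{(x,y)\}=L'\cup\{(x',y)\}$ for some $y\notin\Im(L\cup R)$ (and, after $R=R'$, also $y\notin\Im(L'\cup R)$). The one point requiring care: because $y\notin\Im(L)$, the unique pair of $L\cup\{(x,y)\}$ whose second coordinate is $y$ is $(x,y)$, occurring with multiplicity one, and likewise the unique such pair of $L'\cup\{(x',y)\}$ is $(x',y)$; equating these multisets forces $(x,y)=(x',y)$, i.e.\ $x=x'$, and then cancelling this common pair gives $L=L'$. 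This contradicts $(x,L,R)\neq(x',L',R')$, so the images are orthogonal. Therefore $(V^L)^\dagger V^L=\Id_{\gsA}\otimes\Pi^{\mathcal{R}^2}_{\leq N-1}$ is a projector and $V^L$ is a partial isometry.

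The argument for $V^R$ is verbatim the same after replacing $\Im$ by $\Dom$, ``second coordinate'' by ``first coordinate,'' appending pairs $(x,y)$ to the register $\gsR$ instead of $\gsL$, and using amplitudes $(N-\abs{\Dom(L\cup R)})^{-1/2}$; the freshness condition is now $x\notin\Dom(L\cup R)$, which again pins down the appended pair inside $R\cup\{(x,y)\}$ and lets one recover $(x,L,R)$ from the output. There is no deep obstacle here — it is a bookkeeping verification matching Claim~14 of \cite{ma2024construct} — but the step that absolutely cannot be skipped is exploiting the freshness constraint ($y\notin\Im(L\cup R)$, respectively $x\notin\Dom(L\cup R)$) to identify the newly appended pair, and hence the input, uniquely from the image; without it $V^L$ and $V^R$ would genuinely fail to be partial isometries.
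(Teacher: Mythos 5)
Your verification is correct: the freshness condition $y\notin\Im(L\cup R)$ (resp.\ $x\notin\Dom(L\cup R)$) is exactly what makes the appended pair, and hence the input $(x,L,R)$, recoverable from the output, so images of distinct domain basis vectors are orthonormal and $V^{L\dagger}V^L$, $V^{R\dagger}V^R$ are projectors. The paper itself offers no proof here — it simply imports Claim~14 of \cite{ma2024construct} — and your argument is the same direct orthonormality check used there, so nothing further is needed.
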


\begin{definition}[Path-recording oracle $V$]
\label{def:symmetric-V}
    The path-recording oracle is the operator $V$ defined as
    \begin{align}
        V &:= V^L \cdot (\Id - V^R \cdot V^{R,\dagger}) + (\Id - V^L \cdot V^{L,\dagger}) \cdot V^{R,\dagger}.
    \end{align}
    By construction, $V$ and $V^\dagger$ take states in $\Id_{\gsA} \otimes \Pi^{\calR^2}_{\leq i, \gsL \gsR}$ to $\Id_{\gsA} \otimes \Pi^{\calR^2}_{\leq i+1, \gsL \gsR}$ for any integer $i \geq 0$.
\end{definition}

\begin{lemma}[Claim 15 \cite{ma2024construct}]
$V$ is a partial isometry.
\end{lemma}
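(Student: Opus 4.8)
The plan is to verify directly that $V^{\dagger}V$ is a projector, which by definition is exactly the statement that $V$ is a partial isometry. It is convenient to abbreviate the domain and range projectors $\Pi^{L}_{\mathrm{dom}} \coloneqq V^{L,\dagger}V^{L}$, $\Pi^{L}_{\mathrm{ran}} \coloneqq V^{L}V^{L,\dagger}$ and similarly $\Pi^{R}_{\mathrm{dom}}, \Pi^{R}_{\mathrm{ran}}$; these are genuine projectors because $V^{L}$ and $V^{R}$ are partial isometries (Claim~14 of~\cite{ma2024construct}). From Definition~\ref{def:V-sym-PRO} one reads off two facts that will be used repeatedly: (i) $\Pi^{L}_{\mathrm{dom}} = \Pi^{R}_{\mathrm{dom}} = \Id_{\gsA}\otimes\Pi^{\calR^{2}}_{\leq N-1}$, and (ii) $V^{L}$ and $V^{R}$ raise the relation-length grading on $(\gsL,\gsR)$ by exactly one, so that $\Pi^{L}_{\mathrm{ran}}$ and $\Pi^{R}_{\mathrm{ran}}$ are block-diagonal with respect to that grading and in particular commute with $\Id_{\gsA}\otimes\Pi^{\calR^{2}}_{\leq t}$ for every $t$.

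First I would expand $V^{\dagger}V$ using $V = V^{L}(\Id - \Pi^{R}_{\mathrm{ran}}) + (\Id - \Pi^{L}_{\mathrm{ran}})V^{R,\dagger}$ and $V^{\dagger} = (\Id - \Pi^{R}_{\mathrm{ran}})V^{L,\dagger} + V^{R}(\Id - \Pi^{L}_{\mathrm{ran}})$. Of the four resulting terms, the two cross terms vanish: $V^{L,\dagger}(\Id - \Pi^{L}_{\mathrm{ran}}) = V^{L,\dagger} - V^{L,\dagger}V^{L}V^{L,\dagger} = 0$ and symmetrically $(\Id - \Pi^{L}_{\mathrm{ran}})V^{L} = 0$, both because $V^{L}$ is a partial isometry. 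Using also $(\Id - \Pi^{L}_{\mathrm{ran}})^{2} = \Id - \Pi^{L}_{\mathrm{ran}}$, this leaves
\begin{equation}
    V^{\dagger}V \;=\; \underbrace{(\Id - \Pi^{R}_{\mathrm{ran}})\,\Pi^{L}_{\mathrm{dom}}\,(\Id - \Pi^{R}_{\mathrm{ran}})}_{\eqqcolon A} \;+\; \underbrace{V^{R}(\Id - \Pi^{L}_{\mathrm{ran}})V^{R,\dagger}}_{\eqqcolon B}.
\end{equation}

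It then remains to show $A$ and $B$ are orthogonal projectors, from which $V^{\dagger}V = A+B$ is a projector and we are done. Since $\Pi^{L}_{\mathrm{dom}}=\Id_{\gsA}\otimes\Pi^{\calR^{2}}_{\leq N-1}$ commutes with $\Pi^{R}_{\mathrm{ran}}$ by fact (ii), $A = \Pi^{L}_{\mathrm{dom}}(\Id - \Pi^{R}_{\mathrm{ran}})$ is a product of commuting projectors, hence a projector whose range is orthogonal to that of $\Pi^{R}_{\mathrm{ran}}$. For $B$: inserting $V^{R}=V^{R}\Pi^{R}_{\mathrm{dom}}$ and $V^{R,\dagger}=\Pi^{R}_{\mathrm{dom}}V^{R,\dagger}$ and again using that $\Pi^{R}_{\mathrm{dom}}$ commutes with $\Pi^{L}_{\mathrm{ran}}$, one gets $B = V^{R}\bigl(\Pi^{R}_{\mathrm{dom}}(\Id - \Pi^{L}_{\mathrm{ran}})\bigr)V^{R,\dagger}$, where the middle operator is a projector $P$ with $P\leq \Pi^{R}_{\mathrm{dom}}=V^{R,\dagger}V^{R}$; conjugating such a $P$ by the partial isometry $V^{R}$ yields a projector, and moreover $B\leq V^{R}\Pi^{R}_{\mathrm{dom}}V^{R,\dagger}=\Pi^{R}_{\mathrm{ran}}$, so the range of $B$ lies in the range of $\Pi^{R}_{\mathrm{ran}}$. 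Hence $\mathrm{ran}(A)\perp\mathrm{ran}(B)$, giving $AB=BA=0$. The only step that is not pure bookkeeping is fact (ii) — that $\Pi^{L}_{\mathrm{ran}}$ and $\Pi^{R}_{\mathrm{ran}}$ respect the relation-length decomposition of $\calH_{\gsL}\otimes\calH_{\gsR}$, which is precisely what forces the needed projectors to commute — and this is where I would be most careful, although it follows directly from the explicit formulas for $V^{L}$ and $V^{R}$.
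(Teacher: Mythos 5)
Your proof is correct. Note that the paper does not prove this statement internally at all — it simply imports it as Claim~15 of~\cite{ma2024construct} — so there is no in-paper argument to compare against; your write-up is a legitimate self-contained verification in the same spirit as the cited source. The computation $V^{\dagger}V = (\Id-\Pi^{R}_{\mathrm{ran}})\,\Pi^{L}_{\mathrm{dom}}\,(\Id-\Pi^{R}_{\mathrm{ran}}) + V^{R}(\Id-\Pi^{L}_{\mathrm{ran}})V^{R,\dagger}$ is right (the cross terms vanish exactly as you say, via $V^{L,\dagger}(\Id-\Pi^{L}_{\mathrm{ran}})=0$ and its adjoint), and the two remaining terms are indeed mutually orthogonal projectors. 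The only substantive inputs are the ones you flag: that $V^{L,\dagger}V^{L}=V^{R,\dagger}V^{R}=\Id_{\gsA}\otimes\Pi^{\calR^{2}}_{\leq N-1}$, and that the range projectors of $V^{L},V^{R}$ preserve the relation-length grading and lie inside the relation-state (symmetric) subspace, which is what makes them commute with $\Id_{\gsA}\otimes\Pi^{\calR^{2}}_{\leq N-1}$; both follow from the explicit formulas in Definition~\ref{def:V-sym-PRO} together with the restriction $y\notin\Im(L\cup R)$ (resp.\ $x\notin\Dom(L\cup R)$), which is exactly what guarantees orthonormality of the images of distinct basis vectors. When invoking the commutation, it is worth stating explicitly (as you implicitly use) that the ranges of $V^{L},V^{R}$ are contained in the span of relation states, not merely graded by length, since $\Pi^{\calR^{2}}_{\leq t}$ is a symmetric-subspace projector and not a bare length cutoff.
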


We now define the conjugate $\overline{V}$ of $V$. This operator is not the actual complex conjugation of the path-recording oracle $V$. $\overline{V}$ is a new object proposed to simulate the action of the complex conjugation $U^*$ of a Haar-random unitary $U$. And the conjugate transpose $\overline{V}^\dagger$ of $\overline{V}$ is designed to simulate the action of the transpose $U^T$ of a Haar-random unitary $U$.

\begin{definition}[Conjugated left and right parts of $V$] \label{def:V-sym-PRO-conj}
    Let $\overline{V}^L$ be the linear operator that acts as follows. For $x \in [N]$ and $(L,R) \in \mathcal{R}^{2,\leq N-1}$,
    \begin{equation}
        \overline{V}^L \cdot \ket*{x}_{\gsA} \ket*{L}_{\gsL} \ket*{R}_{\gsR} \coloneqq \sum_{\substack{y \in [N]:\\ y\not\in \Im(L \cup R)}} \frac{1}{\sqrt{N - \abs{\Im(L \cup R)}}} \ket*{y}_{\gsA} \ket*{L}_{\gsL} \ket*{R \cup \{(x,y)\}}_{\gsR}.
    \end{equation}
    Define $\overline{V}^R$ to be the linear operator such that for all $y \in [N]$ and $(L,R) \in \mathcal{R}^{2,\leq N-1}$,
    \begin{equation}
        \overline{V}^R \cdot \ket*{y}_{\gsA} \ket*{L}_{\gsL} \ket*{R}_{\gsR} \coloneqq \sum_{\substack{x \in [N]:\\ x\not\in \Dom(L \cup R)}} \frac{1}{\sqrt{N - \abs{\Dom(L \cup R)}}} \ket*{x}_{\gsA} \ket*{L \cup \{(x, y)\} }_{\gsL} \ket*{R }_{\gsR}.
    \end{equation}
    By construction, $\overline{V}^L$ and $\overline{V}^R$ take states in $\Id_{\gsA} \otimes \Pi^{\calR^2}_{\leq i, \gsL \gsR}$ to $\Id_{\gsA} \otimes \Pi^{\calR^2}_{\leq i+1, \gsL \gsR}$.
\end{definition}

\begin{definition}[Conjugated path-recording oracle $\overline{V}$]
\label{def:symmetric-V-conj}
    The conjugated path-recording oracle is the operator $\overline{V}$ defined as
    \begin{align}
        \overline{V} &:= \overline{V}^L \cdot (\Id - \overline{V}^R \cdot \overline{V}^{R,\dagger}) + (\Id - \overline{V}^L \cdot \overline{V}^{L,\dagger}) \cdot \overline{V}^{R,\dagger}.
    \end{align}
    By construction, $\overline{V}$ and $\overline{V}^\dagger$ take states in $\Id_{\gsA} \otimes \Pi^{\calR^2}_{\leq i, \gsL \gsR}$ to $\Id_{\gsA} \otimes \Pi^{\calR^2}_{\leq i+1, \gsL \gsR}$ for any integer $i \geq 0$.
\end{definition}

Because the main change in $\overline{V}$ over $V$ is in swapping $L$ and $R$, using the same proof as Claim 14 and 15 of \cite{ma2024construct}, we have the following lemma.

\begin{lemma} \label{lem:partial-iso-overline}
$\overline{V}^L$, $\overline{V}^R$, and $\overline{V}$ are partial isometries.
\end{lemma}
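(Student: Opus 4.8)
The plan is to realize $\overline V{}^L$, $\overline V{}^R$, and $\overline V$ as the unitary conjugates of $V^L$, $V^R$, and $V$ under the operator that exchanges the two variable-length path registers. Concretely, let $W \coloneqq \SWAP_{\sL\sR}$ be the unitary on $\mathcal H_{\sL}\otimes\mathcal H_{\sR}$ that swaps the contents of registers $\sL$ and $\sR$; since $\mathcal H_{\sL}$ and $\mathcal H_{\sR}$ are isomorphic copies of the same symmetric-subspace Hilbert space, $W$ is a well-defined unitary involution ($W=W^\dagger=W^{-1}$) acting as $W\ket{L}_{\sL}\ket{R}_{\sR}=\ket{R}_{\sL}\ket{L}_{\sR}$, and it maps the set of admissible pairs $(L,R)$ to itself because the only constraints ($\ell+r\le N-1$, and $L\cup R$ being bijective) are symmetric under swapping the two relations.

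First I would check the basis-state identities $\overline V{}^L = W V^L W$ and $\overline V{}^R = W V^R W$ by evaluating both sides on $\ket{x}_{\sA}\ket{L}_{\sL}\ket{R}_{\sR}$ and $\ket{y}_{\sA}\ket{L}_{\sL}\ket{R}_{\sR}$ respectively. Acting with $W$ first exchanges $L$ and $R$ in the registers; then $V^L$ (resp.\ $V^R$) appends a new pair to the register currently labelled $\sL$ (resp.\ $\sR$) with the normalization governed by $\abs{\Im(R\cup L)}$ (resp.\ $\abs{\Dom(R\cup L)}$); the final $W$ swaps the registers back. Using that $R\cup L=L\cup R$ as relations — so $\Im(R\cup L)=\Im(L\cup R)$ and $\Dom(R\cup L)=\Dom(L\cup R)$ — the outcome is exactly the expression in Definition~\ref{def:V-sym-PRO-conj}, since there the only change from Definition~\ref{def:V-sym-PRO} is that the new pair lands in $\sR$ instead of $\sL$ (and vice versa). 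This step is the one that requires care, though it is purely bookkeeping: one must confirm that $W$ respects every domain restriction in Definitions~\ref{def:V-sym-PRO} and~\ref{def:V-sym-PRO-conj} so that the conjugation identities hold on the full subspace where the operators act, not just on individual basis vectors.

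Next I would propagate this through the definition of $\overline V$. Because $W^2=\mathbbm 1$, we get $\overline V{}^R\overline V{}^{R,\dagger}=W V^R V^{R,\dagger}W$ and $\overline V{}^L\overline V{}^{L,\dagger}=W V^L V^{L,\dagger}W$, hence $\mathbbm 1-\overline V{}^R\overline V{}^{R,\dagger}=W(\mathbbm 1-V^R V^{R,\dagger})W$ and likewise for the $L$ term. Substituting into Definition~\ref{def:symmetric-V-conj} and cancelling adjacent $W^2$ factors yields $\overline V = W V W$. Finally, conjugation by a unitary preserves the partial-isometry property: if $T T^\dagger T=T$ then $(WTW)(WTW)^\dagger(WTW)=W(T T^\dagger T)W=WTW$. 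Applying this with $T\in\{V^L,V^R,V\}$, which are partial isometries by Claims~14 and~15 of~\cite{ma2024construct} (the two lemmas quoted above), concludes that $\overline V{}^L$, $\overline V{}^R$, and $\overline V$ are partial isometries.

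There is no real obstacle here — the statement is a structural observation, and the substantive content is entirely the verification in the second paragraph that $\overline{(\,\cdot\,)}$ is nothing more than relabelling $\sL\leftrightarrow\sR$. An equivalent route, if one prefers to avoid introducing $W$ explicitly, is to rerun the proofs of Claims~14 and~15 of~\cite{ma2024construct} verbatim with the roles of $\sL$ and $\sR$ interchanged throughout, which is precisely what the sentence preceding the lemma asserts.
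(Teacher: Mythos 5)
Your proof is correct and takes essentially the same route as the paper, which simply notes that $\overline V{}^{L}$, $\overline V{}^{R}$, $\overline V$ arise from $V^{L}$, $V^{R}$, $V$ by exchanging the roles of $\sL$ and $\sR$ and then reuses the proofs of Claims 14 and 15 of~\cite{ma2024construct}. Your conjugation by $\SWAP_{\sL\sR}$ is just a clean formalization of that same relabelling symmetry (transporting the partial-isometry property through a unitary involution), so nothing further is needed.
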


We next present a central property of the path-recording oracle $V$ and its conjugate $\overline{V}$.

\begin{definition} \label{def:multi-rot-Q}
    For any $n$-qubit unitary $C,D$, define
    \begin{align}
        Q[C,D] &\coloneqq (C \otimes D^T)^{\otimes *}_{\gsL} \otimes (C^* \otimes D^{\dagger})^{\otimes *}_{\gsR},
    \end{align}
    where $C^*$ is the complex conjugate of $C$.
\end{definition}

Because $\overline{V}$ corresponds to swapping $L$ and $R$ in $V$, using the same proof, we can obtain the following two-sided unitary invariance property for $V$ and $\overline{V}$.

\begin{lemma}[Two-sided unitary invariance; Claim 16 of \cite{ma2024construct}]
\label{claim:two-sided-invariance}
    For any integer $0 \leq t \leq N-1$ and any pair of $n$-qubit unitaries $C,D$,
    \begin{align}
        \norm{D_{\gsA} \cdot V_{\leq t} \cdot C_{\gsA} \otimes Q[C,D]_{\gsL \gsR}
        - Q[C,D]_{\gsL \gsR} \cdot V_{\leq t}}_{\infty} & \leq 16\sqrt{\frac{2t(t+1)}{N}},\\
        \norm{C_{\gsA}^\dagger \cdot (V^\dagger)_{\leq t} \cdot D_{\gsA}^\dagger \otimes Q[C,D]_{\gsL \gsR}
        - Q[C,D]_{\gsL \gsR} \cdot (V^\dagger)_{\leq t}}_{\infty} & \leq 16\sqrt{\frac{2t(t+1)}{N}},\\
        \norm{D^*_{\gsA} \cdot \overline{V}_{\leq t} \cdot C^*_{\gsA} \otimes Q[C,D]_{\gsL \gsR}
        - Q[C,D]_{\gsL \gsR} \cdot \overline{V}_{\leq t}}_{\infty} & \leq 16\sqrt{\frac{2t(t+1)}{N}},\\
        \norm{C_{\gsA}^T \cdot (\overline{V}^\dagger)_{\leq t} \cdot D_{\gsA}^T \otimes Q[C,D]_{\gsL \gsR}
        - Q[C,D]_{\gsL \gsR} \cdot (\overline{V}^\dagger)_{\leq t}}_{\infty} & \leq 16\sqrt{\frac{2t(t+1)}{N}},
    \end{align}
\end{lemma}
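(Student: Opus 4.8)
The plan is to deduce the last two inequalities (those involving $\overline{V}$) from the first two—which are precisely Claim~16 of \cite{ma2024construct}, since $V$ and the definition of $Q[C,D]$ are unchanged here—by making the observation ``$\overline{V}$ is obtained from $V$ by swapping the registers $\gsL$ and $\gsR$'' precise. Let $\SWAP_{\gsL\gsR}$ be the unitary on $\calH_{\sL}\otimes\calH_{\sR}$ that exchanges the two variable-length registers, acting on relation states by $\ket{L}_{\gsL}\ket{R}_{\gsR}\mapsto\ket{R}_{\gsL}\ket{L}_{\gsR}$ (well defined since $\calH_{\sL^{(t)}}=\calH_{\sR^{(t)}}$ for all $t$). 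Comparing Definition~\ref{def:V-sym-PRO} with Definition~\ref{def:V-sym-PRO-conj}, and using that $\Im(L\cup R)$ and $\Dom(L\cup R)$ are symmetric under exchanging $L$ and $R$, one reads off directly that $\overline{V}^L=\SWAP_{\gsL\gsR}V^L\SWAP_{\gsL\gsR}$ and $\overline{V}^R=\SWAP_{\gsL\gsR}V^R\SWAP_{\gsL\gsR}$. Since $\overline{V}$ is built from $\overline{V}^L,\overline{V}^R$ by exactly the same formula by which $V$ is built from $V^L,V^R$ (Definition~\ref{def:symmetric-V} vs.\ Definition~\ref{def:symmetric-V-conj}), conjugating that formula by $\SWAP_{\gsL\gsR}$ and using $\SWAP_{\gsL\gsR}^2=\Id$ yields $\overline{V}=\SWAP_{\gsL\gsR}V\SWAP_{\gsL\gsR}$, hence also $\overline{V}^\dagger=\SWAP_{\gsL\gsR}V^\dagger\SWAP_{\gsL\gsR}$. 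Finally $\SWAP_{\gsL\gsR}$ preserves the total length $\ell+r$, so it commutes with $\Pi_{\leq t}$, and therefore $\overline{V}_{\leq t}=\SWAP_{\gsL\gsR}V_{\leq t}\SWAP_{\gsL\gsR}$ and $(\overline{V}^\dagger)_{\leq t}=\SWAP_{\gsL\gsR}(V^\dagger)_{\leq t}\SWAP_{\gsL\gsR}$.

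The second ingredient I would record is how $Q[C,D]$ transforms under the swap. From Definition~\ref{def:multi-rot-Q}, $\SWAP_{\gsL\gsR}$ interchanges the $\gsL$-factor $(C\otimes D^T)^{\otimes *}$ and the $\gsR$-factor $(C^*\otimes D^\dagger)^{\otimes *}$; rewriting these via the entrywise identities $(C^*)^*=C$, $(D^*)^T=D^\dagger$, and $(D^*)^\dagger=D^T$, the resulting operator is exactly $(C^*\otimes (D^*)^T)^{\otimes *}_{\gsL}\otimes((C^*)^*\otimes(D^*)^\dagger)^{\otimes *}_{\gsR}=Q[C^*,D^*]$. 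That is, $\SWAP_{\gsL\gsR}\,Q[C,D]_{\gsL\gsR}\,\SWAP_{\gsL\gsR}=Q[C^*,D^*]_{\gsL\gsR}$.

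With these two facts the reduction is purely mechanical. For the third inequality, substitute $\overline{V}_{\leq t}=\SWAP_{\gsL\gsR}V_{\leq t}\SWAP_{\gsL\gsR}$ into the operator inside the norm, commute the two copies of $\SWAP_{\gsL\gsR}$ past $C^*_{\gsA}$ and $D^*_{\gsA}$ (which act only on $\gsA$) and past $Q[C,D]_{\gsL\gsR}$ via the identity above; the operator then equals $\SWAP_{\gsL\gsR}$ times the operator appearing in the first inequality with $C\mapsto C^*$, $D\mapsto D^*$, times $\SWAP_{\gsL\gsR}$. Invariance of $\norm{\cdot}_\infty$ under conjugation by the unitary $\SWAP_{\gsL\gsR}$ then gives the stated bound, directly from the first inequality applied to the pair $(C^*,D^*)$. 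The fourth inequality follows from the second in the same way, additionally using $(C^*)^\dagger=C^T$ and $(D^*)^\dagger=D^T$ to identify $C^T_{\gsA},D^T_{\gsA}$ with the $\gsA$-side operators $(C^*)^\dagger_{\gsA},(D^*)^\dagger_{\gsA}$ of the substituted second inequality.

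The only part that is not entirely routine is the bookkeeping in the two structural identities: checking that $\overline{V}$ really is the $\SWAP_{\gsL\gsR}$-conjugate of $V$ on each length sector (so that restricting to lengths $\leq t$ is harmless), and tracking how the conjugate and transpose of $C$ and $D$ get permuted by the swap inside $Q[C,D]$. Neither is deep, but both are easy to slip up on. As an alternative that avoids the symmetry argument entirely, one could rerun the proof of Claim~16 of \cite{ma2024construct} verbatim with $\overline{V}^L,\overline{V}^R$ in place of $V^L,V^R$, since these satisfy the same recursion and partial-isometry estimates (Lemma~\ref{lem:partial-iso-overline}); but the conjugation route is shorter and makes the ``swap $L$ and $R$'' heuristic precise.
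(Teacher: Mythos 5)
Your proposal is correct, and it is in essence the argument the paper intends: the paper disposes of this lemma with the one-line remark that ``$\overline{V}$ corresponds to swapping $L$ and $R$ in $V$, so the same proof as Claim~16 of the cited work applies,'' whereas you turn that heuristic into a genuine reduction. Concretely, you verify $\overline{V}^L=\SWAP_{\gsL\gsR}V^L\SWAP_{\gsL\gsR}$, $\overline{V}^R=\SWAP_{\gsL\gsR}V^R\SWAP_{\gsL\gsR}$ (hence $\overline{V}_{\leq t}=\SWAP_{\gsL\gsR}V_{\leq t}\SWAP_{\gsL\gsR}$, using that the swap preserves $\ell+r$ and so commutes with $\Pi_{\leq t}$) and $\SWAP_{\gsL\gsR}\,Q[C,D]_{\gsL\gsR}\,\SWAP_{\gsL\gsR}=Q[C^*,D^*]_{\gsL\gsR}$ via $(D^*)^T=D^\dagger$, $(D^*)^\dagger=D^T$; then the third and fourth bounds follow from the first two applied to the pair $(C^*,D^*)$ (legitimate since those hold for arbitrary unitaries) together with unitary invariance of the operator norm. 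All of these identities check out, including the $\gsA$-side bookkeeping $(C^*)^\dagger=C^T$, $(D^*)^\dagger=D^T$ in the fourth inequality. The difference from the paper is only in mechanism: the paper (implicitly) reruns the proof of Claim~16 with $\overline{V}^L,\overline{V}^R$ in place of $V^L,V^R$, while you obtain the new inequalities as corollaries of the already-proved ones, which avoids re-verifying the internal estimates of that claim at the small cost of the swap/conjugation bookkeeping you carried out; either route is fine, and yours is arguably the cleaner write-up of the symmetry the paper invokes.
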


\subsection{Partial path-recording oracle $W$ and its conjugate $\overline{W}$}

A very useful object proposed in \cite{ma2024construct} is the partial path-recording oracle $W$, which is a restricted version of the path-recording oracle $V$. The operator $W$ only acts nontrivially on a subspace and maps the orthogonal subspace to zero. The subspace is defined based on $\calR^{2,\dist}$.

Similar to $V$, the partial path-recording oracle $W$ contains a left part $W^L$ and a right part $W^R$.

\begin{definition}[$W^L$ and $W^R$]
\label{def:ternary-W-action}
    Define $W^L$ to be the linear map such that for any $(L,R) \in \calR^{2,\dist}$ and $x \in [N]$ such that $x\not\in \Dom(L \cup R)$,
    \begin{align}
        W^L \cdot \ket*{x}_{\gsA} \ket*{L}_{\gsL} \ket*{R}_{\gsR} \coloneqq \frac{1}{\sqrt{N-\abs{L \cup R}}} \sum_{\substack{y \in [N]:\\ y\not\in \Im(L \cup R)}} \ket*{y}_{\gsA} \ket*{L \cup \{(x,y)\}}_{\gsL} \ket*{R}_{\gsR}. \label{eq:WL-def}
    \end{align}
    Similarly, define $W^R$ be the linear map such that for any $(L,R) \in \calR^{2,\dist}$ and $y \in [N]$ such that $y\not\in \Im(L \cup R)$,
    \begin{align}
        W^R \cdot \ket*{y}_{\gsA} \ket*{L}_{\gsL} \ket*{R}_{\gsR} \coloneqq \frac{1}{\sqrt{N-\abs{L \cup R}}} \sum_{\substack{x \in [N]:\\ x\not\in \Dom(L \cup R)}} \ket*{x}_{\gsA} \ket*{L}_{\gsL} \ket*{R \cup \{(x,y)\}}_{\gsR}. \label{eq:WR-def}
    \end{align}
\end{definition}

\begin{definition} \label{def:symmetric-W}
    The partial path-recording oracle is the operator $W$ defined as
    \begin{equation}
        W \coloneqq W^L + W^{R,\dagger}.
    \end{equation}
\end{definition}

We now extend the definition of $W$ in \cite{ma2024construct} to define its conjugate $\overline{W}$. Note that $\overline{W}$ is not the exact complex conjugate of $W$. However, intuitively speaking, for an oracle adversary, $\overline{W}$ will behave like the complex conjugation of $W$ similar to how $\overline{V}$ behave like the complex conjugation of $V$.

\begin{definition}[Conjugates of $W^L, W^R, $ and $W$]
\label{def:ternary-W-action-conj}
    Define $\overline{W}^L$ to be the linear map such that for any $(L,R) \in \calR^{2,\dist}$ and $x \in [N]$ such that $x\not\in \Dom(L \cup R)$,
    \begin{align}
        \overline{W}^L \cdot \ket*{x}_{\gsA} \ket*{L}_{\gsL} \ket*{R}_{\gsR} \coloneqq \frac{1}{\sqrt{N-\abs{L \cup R}}} \sum_{\substack{y \in [N]:\\ y\not\in \Im(L \cup R)}} \ket*{y}_{\gsA} \ket*{L }_{\gsL} \ket*{R \cup \{(x,y)\} }_{\gsR}. \label{eq:conjWL-def}
    \end{align}
    Similarly, define $\overline{W}^R$ be the linear map such that for any $(L,R) \in \calR^{2,\dist}$ and $y \in [N]$ such that $y\not\in \Im(L \cup R)$,
    \begin{align}
        \overline{W}^R \cdot \ket*{y}_{\gsA} \ket*{L}_{\gsL} \ket*{R}_{\gsR} \coloneqq \frac{1}{\sqrt{N-\abs{L \cup R}}} \sum_{\substack{x \in [N]:\\ x\not\in \Dom(L \cup R)}} \ket*{x}_{\gsA} \ket*{L\cup \{(x,y)\}}_{\gsL} \ket*{R}_{\gsR}. \label{eq:conjWR-def}
    \end{align}
    The conjugate of the partial path-recording oracle $W$ is the operator $\overline{W}$ defined as
    \begin{equation}
        \overline{W} \coloneqq \overline{W}^L + \overline{W}^{R,\dagger}.
    \end{equation}
\end{definition}

We instantiate the following definitions of projectors.

\begin{definition}[Bijective-relation projectors] \label{def:bij-proj}
    Define the projectors
    \begin{align}
        \Pi^{\bij}_{\gsL \gsR} \coloneq \sum_{(L,R) \in \mathcal{R}^{2,\dist}} \ketbra*{L}_{\gsL} \otimes \ketbra*{R}_{\gsR}, \quad\quad \Pi^{\bij}_{\leq t, \gsL \gsR} \coloneq \Pi^{\bij}_{\gsL \gsR} \cdot \Pi_{\leq t, \gsL \gsR} = \Pi_{\leq t, \gsL \gsR} \cdot \Pi^{\bij}_{\gsL \gsR},
    \end{align}
    where the projector $\Pi_{\leq t, \gsL \gsR}$ is the maximum-length projector defined in \cref{notation:pi-leq-t}.
\end{definition}

\begin{definition}
    For a partial isometry $G$, let $\calD(G)$ and $\calI(G)$ denote its domain and image. Let $\Pi^{\calD(G)} = G^\dagger \cdot G$ and $\Pi^{\calI(G)} = G \cdot G^\dagger$ denote the orthogonal projectors onto $\calD(G)$ and $\calI(G)$.
\end{definition}

Because the conjugated versions of $W^L, W^R,$ and $W$ amounts to swapping the $L$ and $R$ register, the proofs in \cite{ma2024construct} can be combined with the action of $\spfo^*$ and $\spfo^T$ to establish the following lemmas.

\begin{lemma}[$W$ is a restriction of $\spfo$ up to isometry; From Claim 13 of \cite{ma2024construct}]
\label{claim:relate-W-and-spfo}
We have
    \begin{align}
        W &= \Compress \cdot \spfo \cdot \Compress^\dagger \cdot \Pi^{\calD(W)},\label{eq:compress-proof-spfo-goal-1}\\
        W^\dagger &= \Compress \cdot \spfo^\dagger \cdot \Compress^\dagger \cdot \Pi^{\calI(W)} \label{eq:compress-proof-spfo-goal-2}\\
        \overline{W} &= \Compress \cdot \spfo^* \cdot \Compress^\dagger \cdot \Pi^{\calD(\overline{W})},\label{eq:compress-proof-spfo-goal-1-conj}\\
        \overline{W}^\dagger &= \Compress \cdot \spfo^\dagger \cdot \Compress^\dagger \cdot \Pi^{\calI(\overline{W})}. \label{eq:compress-proof-spfo-goal-2-conj}
    \end{align}
\end{lemma}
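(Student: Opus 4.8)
The plan is to read off the first two identities directly from Claim~13 of \cite{ma2024construct} and then derive the conjugated identities via an $\sL\!\leftrightarrow\!\sR$ swap symmetry. First I would observe that Eqs.~(\ref{eq:compress-proof-spfo-goal-1})--(\ref{eq:compress-proof-spfo-goal-2}) are exactly the statement of Claim~13 of \cite{ma2024construct}: the only novelty in our setting is that the phase function $f$ is ternary rather than binary, but that proof uses nothing about $f$ beyond the action of $\spfo,\spfo^\dagger$ on $\mathsf{pf}$-relation states --- i.e. Eqs.~(\ref{eq:tpfo-map})--(\ref{eq:tpfo-inverse-map}) of Lemma~\ref{claim:ternary-pfo-action} --- together with the orthonormality of $\{\ket*{\pf_{L,R}}\}_{(L,R)\in\calR^{2,\dist}}$ (Lemma~\ref{claim:phi-L-R-orthogonal}); both hold verbatim here. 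Concretely, on the image of $\Pi^{\calD(W)}$ the partial isometry $\Compress^\dagger$ sends $\ket*{x}_{\gsA}\ket*{L}_{\gsL}\ket*{R}_{\gsR}$ to $\ket*{x}_{\gsA}\ket*{\pf_{L,R}}_{\gsP\gsF}$; then $\spfo$ either adjoins a fresh pair $(x,y)$ to $L$ (when $x\notin\Dom(L\cup R)$, reproducing $W^L$) or deletes a pair from $R$ (reproducing $W^{R,\dagger}$, since $\omega_3^{f(x)}\omega_3^{-f(x)}=1$), and $\Compress$ re-encodes the resulting $\mathsf{pf}$-relation states as relation states; linearity and the projector give $W$. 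The case of $\spfo^\dagger$ and $W^\dagger$ is symmetric.

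For the conjugated identities I would introduce two bookkeeping unitaries: the register swap $\SWAP_{\gsL\gsR}\colon\ket*{L}_{\gsL}\ket*{R}_{\gsR}\mapsto\ket*{R}_{\gsL}\ket*{L}_{\gsR}$ (which preserves $\calR^{2,\dist}$ since $L\cup R$ is bijective iff $R\cup L$ is), and the phase-conjugation unitary $K_{\gsF}\colon\ket*{f}\mapsto\ket*{-f}$ on $\calH_{\gsF}$, with $(-f)(x)\coloneqq -f(x)\bmod 3$. Two elementary checks are then needed. \textbf{(i)} Comparing Definitions~\ref{def:ternary-W-action} and~\ref{def:ternary-W-action-conj}, one has $\overline{W}^L=\SWAP_{\gsL\gsR}\,W^L\,\SWAP_{\gsL\gsR}$ and $\overline{W}^R=\SWAP_{\gsL\gsR}\,W^R\,\SWAP_{\gsL\gsR}$, hence $\overline{W}=\SWAP_{\gsL\gsR}\,W\,\SWAP_{\gsL\gsR}$; consequently $\calD(\overline{W})=\SWAP_{\gsL\gsR}\,\calD(W)$ and $\calI(\overline{W})=\SWAP_{\gsL\gsR}\,\calI(W)$, so $\Pi^{\calD(\overline{W})}=\SWAP_{\gsL\gsR}\,\Pi^{\calD(W)}\,\SWAP_{\gsL\gsR}$ and likewise for $\Pi^{\calI}$. \textbf{(ii)} Directly from the definitions, $\spfo^{*}=K_{\gsF}\,\spfo\,K_{\gsF}$ and $\spfo^{T}=K_{\gsF}\,\spfo^{\dagger}\,K_{\gsF}$; moreover, since $\omega_3^{-1}=\omega_3^{2}$ and $\ket*{\pf_{L,R}}$ depends on $(L,R)$ only through $L\cup R$ in its $\sP$-component and through $\sum_{L}f-\sum_{R}f$ in its $\sF$-component, one gets $K_{\gsF}\ket*{\pf_{L,R}}=\ket*{\pf_{R,L}}$, which rearranges to $\Compress\,K_{\gsF}=\SWAP_{\gsL\gsR}\,\Compress$ and $K_{\gsF}\,\Compress^{\dagger}=\Compress^{\dagger}\,\SWAP_{\gsL\gsR}$.

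Given (i) and (ii), each conjugated identity follows by a one-line manipulation; for $\overline{W}$,
\begin{align*}
\Compress\,\spfo^{*}\,\Compress^{\dagger}\,\Pi^{\calD(\overline{W})}
&= \big(\Compress\,K_{\gsF}\big)\,\spfo\,\big(K_{\gsF}\,\Compress^{\dagger}\big)\,\Pi^{\calD(\overline{W})}
 = \SWAP_{\gsL\gsR}\,\big(\Compress\,\spfo\,\Compress^{\dagger}\big)\,\SWAP_{\gsL\gsR}\,\Pi^{\calD(\overline{W})} \\
&= \SWAP_{\gsL\gsR}\,\big(\Compress\,\spfo\,\Compress^{\dagger}\big)\,\Pi^{\calD(W)}\,\SWAP_{\gsL\gsR}
 = \SWAP_{\gsL\gsR}\,W\,\SWAP_{\gsL\gsR} = \overline{W},
\end{align*}
where the third equality uses $\SWAP_{\gsL\gsR}\,\Pi^{\calD(\overline{W})}=\Pi^{\calD(W)}\,\SWAP_{\gsL\gsR}$ and the fourth invokes the already-established Eq.~(\ref{eq:compress-proof-spfo-goal-1}). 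The identity for $\overline{W}^{\dagger}$ is obtained the same way, replacing $\spfo,\spfo^{*}$ by $\spfo^{\dagger},\spfo^{T}$, $\Pi^{\calD}$ by $\Pi^{\calI}$, and invoking Eq.~(\ref{eq:compress-proof-spfo-goal-2}); alternatively, one simply re-runs the proof of Claim~13 line by line, substituting Eqs.~(\ref{eq:tpfo-map-conj})--(\ref{eq:tpfo-inverse-map-conj}) for Eqs.~(\ref{eq:tpfo-map})--(\ref{eq:tpfo-inverse-map}). I do not anticipate a genuine obstacle here --- the content is bookkeeping --- so the only points deserving care are identifying $\Pi^{\calD(\overline{W})},\Pi^{\calI(\overline{W})}$ as the $\SWAP_{\gsL\gsR}$-conjugates of $\Pi^{\calD(W)},\Pi^{\calI(W)}$, and verifying the commutation identity $\Compress\,K_{\gsF}=\SWAP_{\gsL\gsR}\,\Compress$, which is exactly where the ternary structure ($\omega_3^{-1}=\omega_3^{2}$) enters.
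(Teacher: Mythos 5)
Your proposal is correct and follows essentially the same route as the paper, which gives no separate argument beyond citing Claim~13 of~\cite{ma2024construct} and observing that the conjugated oracles amount to swapping the $\sL$ and $\sR$ registers; your conjugation identities $\overline{W}=\SWAP_{\gsL\gsR}\,W\,\SWAP_{\gsL\gsR}$, $\spfo^{*}=K_{\gsF}\,\spfo\,K_{\gsF}$ and $\Compress\,K_{\gsF}=\SWAP_{\gsL\gsR}\,\Compress$ just make that remark precise, and each checks out against Lemmas~\ref{claim:phi-L-R-orthogonal} and~\ref{claim:ternary-pfo-action}. One point worth noting: your derivation produces $\overline{W}^{\dagger}=\Compress\cdot\spfo^{T}\cdot\Compress^{\dagger}\cdot\Pi^{\calI(\overline{W})}$, whereas Eq.~(\ref{eq:compress-proof-spfo-goal-2-conj}) as printed has $\spfo^{\dagger}$; the printed version is inconsistent with Eq.~(\ref{eq:tpfo-inverse-map-conj}) (e.g.\ on $\ket*{y}_{\gsA}\ket*{\varnothing}_{\gsL}\ket*{\varnothing}_{\gsR}$ it would append the new pair to $R$ rather than $L$), so you have in fact proved the intended, corrected statement.
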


\begin{lemma}[From Fact 5 of \cite{ma2024construct}] \label{fact:WLWR-space-leqi}
    For any integer $i \geq 0$, $W^L, W^R, \overline{W}^L, \overline{W}^R$ map states in the subspace associated to the projector $\Id_{\gsA} \otimes \Pi^{\bij}_{\leq i, \gsL \gsR}$ into the subspace associated with the projector $\Id_{\gsA} \otimes \Pi^{\bij}_{\leq i+1, \gsL \gsR}$.
\end{lemma}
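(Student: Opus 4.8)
The plan is to verify the statement directly on basis states and then extend by linearity, handling the four operators $W^L, W^R, \overline W^L, \overline W^R$ in parallel. The subspace attached to $\Id_{\gsA}\otimes\Pi^{\bij}_{\leq i,\gsL\gsR}$ is spanned by the states $\ket{x}_{\gsA}\ket{L}_{\gsL}\ket{R}_{\gsR}$ with $(L,R)\in\calR^{2,\dist}$ and $\abs{L}+\abs{R}\leq i$. Since each of the four operators is linear and annihilates every basis state outside its stated domain, it suffices to examine where they act nontrivially: for $W^L$ and $\overline W^L$, an index $x\notin\Dom(L\cup R)$; for $W^R$ and $\overline W^R$, an index $y\notin\Im(L\cup R)$. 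First I would record that in each case the output is a superposition, over fresh indices ($y\notin\Im(L\cup R)$ for the ``$L$'' operators, $x\notin\Dom(L\cup R)$ for the ``$R$'' operators), of basis states $\ket{\cdot}_{\gsA}\ket{L'}_{\gsL}\ket{R'}_{\gsR}$ where $(L',R')$ is obtained from $(L,R)$ by inserting the pair $(x,y)$ into one of the two relations ($L$ for $W^L$ and $\overline W^R$, $R$ for $W^R$ and $\overline W^L$).

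There are then exactly two things to check: (i) $(L',R')\in\calR^{2,\dist}$, i.e. $L'\cup R' = (L\cup R)\cup\{(x,y)\}$ is a bijective relation; and (ii) $\abs{L'}+\abs{R'}\leq i+1$. For (i), I would use the identities $\Dom\big((L\cup R)\cup\{(x,y)\}\big)=\Dom(L\cup R)\sqcup\{x\}$ and $\Im\big((L\cup R)\cup\{(x,y)\}\big)=\Im(L\cup R)\sqcup\{y\}$, which hold precisely because $x$ is fresh for the domain and $y$ is fresh for the image; since $L\cup R$ already has distinct left entries and distinct right entries, so does $(L\cup R)\cup\{(x,y)\}$, giving bijectivity. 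The same freshness shows the appended pair is genuinely new, so $\abs{L'}+\abs{R'}=\abs{L}+\abs{R}+1\leq i+1$, which is (ii). Using $\Pi^{\bij}_{\leq i+1,\gsL\gsR}=\Pi^{\bij}_{\gsL\gsR}\cdot\Pi_{\leq i+1,\gsL\gsR}$ from Definition~\ref{def:bij-proj}, this places every output basis state—and hence, by linearity, the image of the entire subspace—inside $\Id_{\gsA}\otimes\Pi^{\bij}_{\leq i+1,\gsL\gsR}$.

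I do not expect any real obstacle here: the claim is the strong-PRU analogue of Fact~5 of~\cite{ma2024construct}, and the proof is pure bookkeeping about how appending a fresh pair affects the domain, image, and size of a relation. The only point requiring minor care is tracking which register ($\gsL$ versus $\gsR$) each operator appends to—this is the single structural difference between the $W$-type and $\overline W$-type operators—but since the bijectivity and length conditions are symmetric under the $L\leftrightarrow R$ swap, all four cases collapse to the identical verification, and I would present them together rather than repeating the argument four times.
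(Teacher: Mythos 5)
Your proof is correct and is exactly the routine bookkeeping argument the paper relies on: it does not spell out a proof at all but simply cites Fact~5 of~\cite{ma2024construct} together with the observation that the conjugated operators $\overline{W}^L,\overline{W}^R$ differ only by swapping which of $\gsL,\gsR$ receives the fresh pair. Your basis-state verification (freshness of $x$ in $\Dom(L\cup R)$ and of $y$ in $\Im(L\cup R)$ preserves bijectivity of $L'\cup R'$ and increments the total length by one) is precisely that argument, with the register-tracking for the four operators handled correctly.
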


\begin{lemma}[From Claim 9 and Claim 11 of \cite{ma2024construct}]
    $W^L, W^R, W, \overline{W}^L, \overline{W}^R, \overline{W}$ are partial isometries.
\end{lemma}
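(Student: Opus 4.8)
The plan is to reduce the entire statement to the already-established fact (Claim~9 and Claim~11 of~\cite{ma2024construct}) that $W^L$, $W^R$, and $W$ are partial isometries, and then obtain the conjugated operators $\overline{W}^L$, $\overline{W}^R$, $\overline{W}$ for free by observing that each is \emph{unitarily conjugate} to its unbarred counterpart. First I would introduce the register-swap operator $\SWAP_{\gsL \gsR}$, which exchanges the contents of the two variable-length registers $\mathsf{L}$ and $\mathsf{R}$, i.e.~$\ket*{L}_{\gsL}\ket*{R}_{\gsR} \mapsto \ket*{R}_{\gsL}\ket*{L}_{\gsR}$, and acts as the identity on $\mathsf{A}$. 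Since $\mathcal{H}_{\gsL}$ and $\mathcal{H}_{\gsR}$ are literally the same Hilbert space $\bigoplus_{t \geq 0}(\mathbb{C}^N \otimes \mathbb{C}^N)^{\otimes t}$, this is a genuine unitary on $\mathcal{H}_{\gsL} \otimes \mathcal{H}_{\gsR}$.

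Next I would verify the intertwining identities $\overline{W}^L = \SWAP_{\gsL \gsR}\, W^L\, \SWAP_{\gsL \gsR}$ and $\overline{W}^R = \SWAP_{\gsL \gsR}\, W^R\, \SWAP_{\gsL \gsR}$ by comparing Definition~\ref{def:ternary-W-action} with Definition~\ref{def:ternary-W-action-conj} on the spanning states $\ket*{x}_{\gsA}\ket*{L}_{\gsL}\ket*{R}_{\gsR}$ with $(L,R) \in \calR^{2,\dist}$. This uses nothing beyond the observations that $\Dom(L \cup R)$, $\Im(L \cup R)$, and $\abs{L \cup R}$ are symmetric under exchanging $L$ and $R$, and that $(L,R) \in \calR^{2,\dist}$ if and only if $(R,L) \in \calR^{2,\dist}$, so the side conditions ($x \notin \Dom(L \cup R)$ in the domain, $y \notin \Im(L \cup R)$ in the superposition) are preserved. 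Taking $\overline{W} = \overline{W}^L + \overline{W}^{R,\dagger}$ from Definition~\ref{def:ternary-W-action-conj} then gives $\overline{W} = \SWAP_{\gsL \gsR}\, W\, \SWAP_{\gsL \gsR}$ as well. Finally I would invoke the elementary fact that conjugation by a unitary preserves the partial-isometry property: if $G^\dagger G$ is a projector, then for any unitary $S$ the operator $(S G S^\dagger)^\dagger (S G S^\dagger) = S (G^\dagger G) S^\dagger$ is again a projector; applying this with $S = \SWAP_{\gsL \gsR}$ and $G \in \{W^L, W^R, W\}$ transfers the conclusion to $\overline{W}^L, \overline{W}^R, \overline{W}$.

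There is no substantial obstacle here; this is a bookkeeping lemma. The only point requiring a little care is making the register swap precise on the infinite-dimensional variable-length registers and checking that every quantity appearing in the definitions of $W^L, W^R$ is genuinely invariant under the $L \leftrightarrow R$ exchange, so that the intertwining identities hold exactly (not just up to the side conditions). If one preferred to avoid introducing the swap, an equally valid alternative would be to mirror the original proofs of Claims~9 and~11 directly: compute $\overline{W}^{L,\dagger}\overline{W}^L$, $\overline{W}^{R,\dagger}\overline{W}^R$, and $\overline{W}^\dagger \overline{W}$ on the bijective-relation basis, observe that each is the obvious projector onto the relevant span of states in $\calR^{2,\dist}$, and note that the cross terms in $\overline{W}^\dagger \overline{W}$ vanish by the same orthogonality-of-images argument as in the unbarred case. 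I would present the swap-conjugation argument as the main line, since it is the shortest and makes the ``swap $L$ and $R$'' intuition stated in the text fully rigorous.
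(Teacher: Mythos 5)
Your proposal is correct and follows essentially the same route as the paper, which simply invokes Claims~9 and~11 of~\cite{ma2024construct} after remarking that $\overline{W}^L, \overline{W}^R, \overline{W}$ are obtained from $W^L, W^R, W$ by swapping the $\mathsf{L}$ and $\mathsf{R}$ registers. Your conjugation by an explicit $\mathsf{L}\leftrightarrow\mathsf{R}$ swap unitary is just a clean formalization of that remark (the intertwining identities and the preservation of the partial-isometry property under unitary conjugation both check out), so there is nothing further to fix.
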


\begin{lemma}[From Fact 8 of \cite{ma2024construct}]\label{claim:W-partial-isometry}
    The domain and image of the partial isometry $W$ are given by
    \begin{align}
        \Pi^{\calD(W)} &= \Pi^{\calD(W^L)} + \Pi^{\calI(W^R)}, \label{eq:expand-DW}\\
        \Pi^{\calI(W)} &= \Pi^{\calD(W^R)} + \Pi^{\calI(W^L)}. \label{eq:expand-IW}
    \end{align}
    The domain and image of the partial isometry $\overline{W}$ are given by
    \begin{align}
        \Pi^{\calD(\overline{W})} &= \Pi^{\calD(\overline{W}^L)} + \Pi^{\calI(\overline{W}^R)}, \label{eq:expand-DW-conj}\\
        \Pi^{\calI(\overline{W})} &= \Pi^{\calD(\overline{W}^R)} + \Pi^{\calI(\overline{W}^L)}. \label{eq:expand-IW-conj}
    \end{align}
\end{lemma}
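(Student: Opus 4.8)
The plan is to exploit the fact---already packaged in the paper's definition $\Pi^{\calD(G)}=G^\dagger G$, $\Pi^{\calI(G)}=GG^\dagger$ for a partial isometry $G$, together with the immediately preceding lemma that $W^L,W^R,W,\overline W^L,\overline W^R,\overline W$ are all partial isometries---that it suffices to expand $W^\dagger W$ and $WW^\dagger$ using $W=W^L+W^{R,\dagger}$ (Definition~\ref{def:symmetric-W}) and show the cross terms vanish. Writing
\begin{equation*}
W^\dagger W = W^{L,\dagger}W^L + W^R W^{R,\dagger} + \big(W^R W^L + W^{L,\dagger}W^{R,\dagger}\big),
\end{equation*}
the first two summands equal $\Pi^{\calD(W^L)}$ and $\Pi^{\calI(W^R)}$ by the partial-isometry property, and the whole claim for $\Pi^{\calD(W)}$ reduces to showing the parenthesized term is zero (which in turn certifies that $\Pi^{\calD(W^L)}+\Pi^{\calI(W^R)}$ is a projector, i.e.\ the two summands are orthogonal).

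The key observation is the combinatorial identity $W^R W^L = 0$. By Definition~\ref{def:ternary-W-action}, $W^L$ sends a basis state $\ket{x}_{\gsA}\ket{L}_{\gsL}\ket{R}_{\gsR}$ (with $(L,R)\in\calR^{2,\dist}$, $x\notin\Dom(L\cup R)$) into a superposition of states $\ket{y}_{\gsA}\ket{L\cup\{(x,y)\}}_{\gsL}\ket{R}_{\gsR}$, and in each of these the value held in register $\gsA$, namely $y$, already lies in $\Im((L\cup\{(x,y)\})\cup R)$. Since $W^R$ acts nontrivially only on basis states whose $\gsA$-register value is \emph{not} in the image of the stored relation and annihilates all others, $W^R W^L = 0$; taking adjoints, $W^{L,\dagger}W^{R,\dagger}=0$. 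Hence $\Pi^{\calD(W)} = \Pi^{\calD(W^L)}+\Pi^{\calI(W^R)}$. The identity $\Pi^{\calI(W)} = WW^\dagger = \Pi^{\calI(W^L)}+\Pi^{\calD(W^R)}$ follows by the mirror argument: expanding $WW^\dagger = W^L W^{L,\dagger}+W^{R,\dagger}W^R+(W^L W^R + W^{R,\dagger}W^{L,\dagger})$, one notes $W^R$ maps into states whose $\gsA$-value lies in $\Dom(L\cup R)$, which $W^L$ annihilates, so $W^L W^R=0$.

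Finally, the four identities for $\overline W$ follow by repeating the above verbatim with the registers $\gsL$ and $\gsR$ interchanged: Definition~\ref{def:ternary-W-action-conj} shows $\overline W^L$ (resp.\ $\overline W^R$) obeys exactly the same ``fresh-index'' side conditions as $W^R$ (resp.\ $W^L$), merely recording the new pair in the opposite register, so $\overline W^R\,\overline W^L = 0$ and $\overline W^L\,\overline W^R = 0$ for the identical reason, giving $\Pi^{\calD(\overline W)} = \Pi^{\calD(\overline W^L)}+\Pi^{\calI(\overline W^R)}$ and $\Pi^{\calI(\overline W)} = \Pi^{\calI(\overline W^L)}+\Pi^{\calD(\overline W^R)}$. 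I expect no real obstacle here---the statement for $W$ alone is Fact~8 of~\cite{ma2024construct}, and all that is added is the routine verification that the argument is insensitive to the $\gsL\leftrightarrow\gsR$ relabelling defining $\overline W$; the only care needed anywhere is the bookkeeping of which register absorbs the freshly recorded pair so that the domain/image side conditions kill the cross terms.
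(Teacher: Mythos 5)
Your proof is correct and is essentially the argument the paper intends: the paper simply imports Fact~8 of~\cite{ma2024construct} and notes that the barred case follows by the $\sL\leftrightarrow\sR$ relabelling, which is exactly your route of expanding $W^\dagger W$ and $WW^\dagger$ via $W=W^L+W^{R,\dagger}$ and killing the cross terms through $W^RW^L=0$ and $W^LW^R=0$ (and likewise for $\overline W$). One minor wording slip: $\overline W^L$ obeys the same input side condition as $W^L$ (namely $x\notin\Dom(L\cup R)$, summing over fresh $y$) while merely recording the new pair in the right register like $W^R$ does, rather than having ``the same side conditions as $W^R$'' --- but this does not affect the cross-term cancellation or the conclusion.
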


\begin{lemma}[From Claim 10 and Claim 12 of \cite{ma2024construct}] \label{claim:pileqt-commutes-with-WLWRDomIm}
For all integers $t \geq 0$,
    $\Pi_{\leq t}$ commutes with $\Pi^{\calD(W^L)}$, $\Pi^{\calI(W^L)}$, $\Pi^{\calD(W^R)}$, $\Pi^{\calI(W^R)}$, $\Pi^{\calD(W)}$, $\Pi^{\calI(W)}$, $\Pi^{\calD(\overline{W}^L)}$, $\Pi^{\calI(\overline{W}^L)}$, $\Pi^{\calD(\overline{W}^R)}$, $\Pi^{\calI(\overline{W}^R)}$, $\Pi^{\calD(\overline{W})}$, and $\Pi^{\calI(\overline{W})}$.
\end{lemma}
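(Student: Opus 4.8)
\emph{Proof plan.} The strategy is to observe that all twelve projectors in the statement are block-diagonal with respect to the length grading $\mathcal{H}_{\mathsf{L}}\otimes\mathcal{H}_{\mathsf{R}}=\bigoplus_{\ell,r\geq 0}\mathcal{H}_{\mathsf{L}^{(\ell)}}\otimes\mathcal{H}_{\mathsf{R}^{(r)}}$ of the relation registers, whereas $\Pi_{\leq t}=\sum_{\ell+r\leq t}\Pi_{\ell,r}$ is itself the orthogonal projector onto a union of these length sectors (Definition~\ref{notation:pi-leq-t}). Any operator that maps each length sector into itself therefore commutes with $\Pi_{\leq t}$, and this is all that is needed. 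The six unconjugated statements are imported directly from Claim~10 and Claim~12 of~\cite{ma2024construct}; it remains to extend them to the conjugated operators $\overline{W}^L$, $\overline{W}^R$, $\overline{W}$.

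First I would treat the four atomic conjugated projectors $\Pi^{\calD(\overline{W}^L)}=\overline{W}^{L,\dagger}\overline{W}^L$, $\Pi^{\calI(\overline{W}^L)}=\overline{W}^L\overline{W}^{L,\dagger}$, $\Pi^{\calD(\overline{W}^R)}$, and $\Pi^{\calI(\overline{W}^R)}$. From Definition~\ref{def:ternary-W-action-conj} (see also Lemma~\ref{fact:WLWR-space-leqi}), $\overline{W}^L$ sends the sector of length $(\ell,r)$ into the single sector of length $(\ell,r+1)$ and $\overline{W}^R$ sends $(\ell,r)$ into $(\ell+1,r)$. Since distinct length sectors are orthogonal, each of $\overline{W}^{L,\dagger}\overline{W}^L$, $\overline{W}^L\overline{W}^{L,\dagger}$, $\overline{W}^{R,\dagger}\overline{W}^R$, and $\overline{W}^R\overline{W}^{R,\dagger}$ maps every length sector into itself, hence commutes with every $\Pi_{\ell,r}$ and therefore with $\Pi_{\leq t}$. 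This is verbatim the argument of Claim~10 of~\cite{ma2024construct}, with $\mathsf{L}$ and $\mathsf{R}$ relabeled to reflect that $\overline{W}^L$ increments the $\mathsf{R}$-length and $\overline{W}^R$ increments the $\mathsf{L}$-length, rather than the reverse; because $\Pi_{\leq t}$ depends only on the sum $\ell+r$, this relabeling changes nothing.

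Next I would deduce the composite cases from Lemma~\ref{claim:W-partial-isometry}: $\Pi^{\calD(\overline{W})}=\Pi^{\calD(\overline{W}^L)}+\Pi^{\calI(\overline{W}^R)}$ and $\Pi^{\calI(\overline{W})}=\Pi^{\calD(\overline{W}^R)}+\Pi^{\calI(\overline{W}^L)}$. Since $\Pi_{\leq t}$ commutes with each of the four summands by the previous step, it commutes with both sums. This parallels Claim~12 of~\cite{ma2024construct} and completes the list.

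There is no substantive obstacle here; the only point demanding care is the bookkeeping of which register's length $\overline{W}^L$ versus $\overline{W}^R$ increments (opposite to the unconjugated $W^L,W^R$), together with the elementary observation from Definition~\ref{notation:pi-leq-t} that $\Pi_{\leq t}$ is invariant under exchanging $\mathsf{L}\leftrightarrow\mathsf{R}$, which is precisely what licenses quoting the relabeled argument of~\cite{ma2024construct} without modification.
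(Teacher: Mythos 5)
Your proposal is correct and follows essentially the same route as the paper, which does not reprove the claim but imports it from Claims 10 and 12 of~\cite{ma2024construct} on the grounds that $\overline{W}^L$, $\overline{W}^R$, $\overline{W}$ differ from $W^L$, $W^R$, $W$ only by swapping the roles of the $\sL$ and $\sR$ registers, a swap under which $\Pi_{\leq t}$ (depending only on $\ell+r$) is invariant. Your block-diagonal/length-sector argument for the atomic projectors and the use of Lemma~\ref{claim:W-partial-isometry} for the composite ones is exactly the intended adaptation.
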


\begin{lemma}[$W$ is a restriction of $V$; From Claim 17 of \cite{ma2024construct}]
\label{claim:relate-W-and-V}
We have
    \begin{align}
        W &= V \cdot \Pi^{\calD(W)}, \label{eq:relate-W-and-V-goal1}\\
        W^\dagger &= V^\dagger \cdot \Pi^{\calI(W)}, \label{eq:relate-W-and-V-goal2}\\
        \overline{W} &= \overline{V} \cdot \Pi^{\calD(\overline{W})}, \label{eq:relate-W-and-V-goal1-conj}\\
        \overline{W}^\dagger &= \overline{V}^\dagger \cdot \Pi^{\calI(\overline{W})}. \label{eq:relate-W-and-V-goal2-conj}
    \end{align}
\end{lemma}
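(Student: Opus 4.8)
The plan is to prove the first identity $W = V\cdot\Pi^{\calD(W)}$ directly, and then obtain the other three cheaply: $W^\dagger = V^\dagger\cdot\Pi^{\calI(W)}$ follows by a short projector manipulation from the first identity, and the two conjugated identities follow from the first two by the symmetry of the construction under exchanging the recording registers $\gsL$ and $\gsR$. This symmetry is precisely the relationship between Definitions~\ref{def:V-sym-PRO}/\ref{def:ternary-W-action} and Definitions~\ref{def:V-sym-PRO-conj}/\ref{def:ternary-W-action-conj}, so every structural fact I use for $V,W$ has a verbatim analogue for $\overline{V},\overline{W}$ via Lemmas~\ref{lem:partial-iso-overline}, \ref{claim:W-partial-isometry}, and~\ref{claim:pileqt-commutes-with-WLWRDomIm}.

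Two ingredients go into the proof of the first identity. The first is a pair of ``base identities'' obtained by inspecting the action on the basis $\{\ket{x}_{\gsA}\ket{L}_{\gsL}\ket{R}_{\gsR}\}$: namely $W^L = V^L\cdot\Pi^{\calD(W^L)}$ and $W^R = V^R\cdot\Pi^{\calD(W^R)}$, together with their $\overline{(\cdot)}$ counterparts. These hold because on a basis state with $(L,R)\in\calR^{2,\dist}$ and $x\notin\Dom(L\cup R)$ one has $\abs{\Im(L\cup R)} = \abs{L\cup R}$, so the normalization in the definition of $V^L$ collapses to that of $W^L$, and outside $\calD(W^L)$ both sides vanish; in particular this yields the inclusions $\calD(W^R)\subseteq\calD(V^R)$ and $\calI(W^R) = V^R\,\calD(W^R)\subseteq\calI(V^R)$, and likewise on the left. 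The second ingredient is two orthogonality facts, which are just bookkeeping about where the $\gsA$-register label sits relative to the domain and image of the recorded relation: (i) every state in $\calI(V^R)$ has its $\gsA$-label in $\Dom(R)$ while every state in the image of $\Pi^{\calD(W^L)}$ has its $\gsA$-label outside $\Dom(L\cup R)$, so $\Pi^{\calI(V^R)}\Pi^{\calD(W^L)} = 0$, giving both $V^R V^{R,\dagger}\,\Pi^{\calD(W^L)} = 0$ and $V^{R,\dagger}\,\Pi^{\calD(W^L)} = 0$; and (ii) every state in $\calI(V^L)$ has its $\gsA$-label in $\Im(L)$ while every state in $\calD(W^R)$ (the image of $W^{R,\dagger}$) has its $\gsA$-label outside $\Im(L\cup R)$, so $V^L V^{L,\dagger}\,W^{R,\dagger} = 0$. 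Throughout I would use Lemma~\ref{claim:pileqt-commutes-with-WLWRDomIm} to commute the various length and bijectivity projectors at will.

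The computation is then short. Writing $\Pi^{\calD(W)} = \Pi^{\calD(W^L)} + \Pi^{\calI(W^R)}$ as an orthogonal sum of projectors (Lemma~\ref{claim:W-partial-isometry}) and expanding $V = V^L(\Id - V^R V^{R,\dagger}) + (\Id - V^L V^{L,\dagger})V^{R,\dagger}$ (Definition~\ref{def:symmetric-V}), I would evaluate the two pieces separately. On $\Pi^{\calD(W^L)}$, fact~(i) makes $(\Id - V^R V^{R,\dagger})$ act as the identity and annihilates the second summand, so $V\,\Pi^{\calD(W^L)} = V^L\,\Pi^{\calD(W^L)} = W^L$. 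On $\Pi^{\calI(W^R)}$, the inclusion $\calI(W^R)\subseteq\calI(V^R)$ annihilates the first summand, while $V^{R,\dagger}\,\Pi^{\calI(W^R)} = \Pi^{\calD(W^R)}\,V^{R,\dagger} = W^{R,\dagger}$ (using $W^R = V^R\Pi^{\calD(W^R)}$ and $\calD(W^R)\subseteq\calD(V^R)$), and fact~(ii) leaves this unchanged, so $V\,\Pi^{\calI(W^R)} = W^{R,\dagger}$. Summing gives $V\,\Pi^{\calD(W)} = W^L + W^{R,\dagger} = W$. For the adjoint identity, I would then write $V^\dagger\,\Pi^{\calI(W)} = V^\dagger(W W^\dagger) = (V^\dagger V)\,\Pi^{\calD(W)}\,W^\dagger = \Pi^{\calD(W)}\,W^\dagger = W^\dagger$, using $W = V\Pi^{\calD(W)}$, the inclusion $\calD(W)\subseteq\calD(V)$, and that the image of $W^\dagger$ is $\calD(W)$. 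The $\overline{V},\overline{W}$ versions are then obtained by running exactly the same argument with the roles of $\gsL$ and $\gsR$ exchanged.

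I expect the only real obstacle to be the bookkeeping in the ``base'' and ``orthogonality'' steps: one must track carefully which of $\Dom(L),\Dom(R),\Im(L),\Im(R)$ the current $\gsA$-label lands in after each of $V^L$ and $V^R$ acts, and check that the relation-state symmetrization does not spoil these membership statements. Once those facts are pinned down, everything that follows is a couple of lines of projector algebra, and the passage to $\overline{V},\overline{W}$ adds nothing genuinely new.
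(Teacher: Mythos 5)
Your proposal is correct, and it is essentially the argument the paper relies on: the paper imports this statement from Claim 17 of the cited path-recording work and handles the $\overline{V},\overline{W}$ versions exactly as you do, by the $\gsL\leftrightarrow\gsR$ symmetry of the conjugated definitions. Your reconstruction — splitting $\Pi^{\calD(W)} = \Pi^{\calD(W^L)} + \Pi^{\calI(W^R)}$, checking the base identities $W^{L} = V^{L}\Pi^{\calD(W^L)}$, $W^{R} = V^{R}\Pi^{\calD(W^R)}$ on the distinct subspace where the normalizations coincide, and killing the cross terms via the domain/image membership of the $\gsA$-label — is sound, including the short projector manipulation giving $W^\dagger = V^\dagger\Pi^{\calI(W)}$.
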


\begin{lemma}[From Corollary 8.3 of \cite{ma2024construct}] \label{claim:relate-W-V-projector}
We have
    \begin{align}
        W^\dagger \cdot V &= \Pi^{\calD(W)},\\
        W \cdot V^\dagger &= \Pi^{\calI(W)},\\
        \overline{W}^\dagger \cdot \overline{V} &= \Pi^{\calD(\overline{W})},\\
        \overline{W} \cdot \overline{V}^\dagger &= \Pi^{\calI(\overline{W})}.
    \end{align}
\end{lemma}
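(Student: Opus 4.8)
The plan is to obtain all four identities by pure operator algebra, leaning only on two inputs already established in this section: that $V$, $\overline{V}$, $W$, and $\overline{W}$ are partial isometries (so that, for instance, $V^\dagger V = \Pi^{\calD(V)}$, $V V^\dagger = \Pi^{\calI(V)}$, $W^\dagger W = \Pi^{\calD(W)}$, $W W^\dagger = \Pi^{\calI(W)}$, and likewise for the barred operators), and the restriction identities of Lemma~\ref{claim:relate-W-and-V}, i.e.\ $W = V \cdot \Pi^{\calD(W)}$, $W^\dagger = V^\dagger \cdot \Pi^{\calI(W)}$, and their conjugated analogues. I would prove the first identity $W^\dagger \cdot V = \Pi^{\calD(W)}$ in detail; the remaining three are word-for-word repetitions after the substitution $V \mapsto V^\dagger$ (for $W \cdot V^\dagger = \Pi^{\calI(W)}$) or $V \mapsto \overline{V}$, $V \mapsto \overline{V}^\dagger$ (for the two conjugated identities), together with the matching lines of Lemma~\ref{claim:relate-W-and-V} and the partial-isometry property of $\overline{V}$ from Lemma~\ref{lem:partial-iso-overline}.

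First I would reduce to a statement about domain projectors. Taking the adjoint of $W = V \cdot \Pi^{\calD(W)}$ gives $W^\dagger = \Pi^{\calD(W)} \cdot V^\dagger$, so
\begin{equation}
    W^\dagger \cdot V = \Pi^{\calD(W)} \cdot V^\dagger V = \Pi^{\calD(W)} \cdot \Pi^{\calD(V)},
\end{equation}
using that $V$ is a partial isometry. Hence the claim is exactly the inclusion $\calD(W) \subseteq \calD(V)$, i.e.\ $\Pi^{\calD(W)} \Pi^{\calD(V)} = \Pi^{\calD(W)}$. To extract this, I would expand the partial-isometry relation $W^\dagger W = \Pi^{\calD(W)}$ with the same substitutions:
\begin{equation}
    \Pi^{\calD(W)} = W^\dagger W = \Pi^{\calD(W)} \cdot V^\dagger V \cdot \Pi^{\calD(W)} = \Pi^{\calD(W)} \, \Pi^{\calD(V)} \, \Pi^{\calD(W)}.
\end{equation}
Evaluating this operator identity on an arbitrary vector $\ket{v}$ and using idempotency of $\Pi^{\calD(V)}$ yields $\lVert \Pi^{\calD(V)} \Pi^{\calD(W)} \ket{v} \rVert^2 = \lVert \Pi^{\calD(W)} \ket{v} \rVert^2$. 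Since an orthogonal projector never increases norms, with equality precisely on its range, this forces $\Pi^{\calD(W)}\ket{v}$ to lie in the range of $\Pi^{\calD(V)}$ for every $\ket{v}$, hence $\Pi^{\calD(V)} \Pi^{\calD(W)} = \Pi^{\calD(W)}$, and after taking adjoints $\Pi^{\calD(W)} \Pi^{\calD(V)} = \Pi^{\calD(W)}$. Combined with the first display, $W^\dagger V = \Pi^{\calD(W)}$.

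For $W \cdot V^\dagger = \Pi^{\calI(W)}$ I would run the identical argument on the image side: the adjoint of $W^\dagger = V^\dagger \Pi^{\calI(W)}$ is $W = \Pi^{\calI(W)} V$, so $W V^\dagger = \Pi^{\calI(W)} V V^\dagger = \Pi^{\calI(W)} \Pi^{\calI(V)}$, and $W W^\dagger = \Pi^{\calI(W)}$ forces $\Pi^{\calI(W)} \Pi^{\calI(V)} \Pi^{\calI(W)} = \Pi^{\calI(W)}$, whence $\Pi^{\calI(W)}\Pi^{\calI(V)} = \Pi^{\calI(W)}$ by the same norm argument. The two conjugated identities $\overline{W}^\dagger \overline{V} = \Pi^{\calD(\overline{W})}$ and $\overline{W}\,\overline{V}^\dagger = \Pi^{\calI(\overline{W})}$ follow verbatim with $(V,W)$ replaced by $(\overline{V},\overline{W})$, invoking the third and fourth lines of Lemma~\ref{claim:relate-W-and-V}, that $\overline{V}$ is a partial isometry (Lemma~\ref{lem:partial-iso-overline}), and that $\overline{W}$ is a partial isometry. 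The only step carrying any content is the projector-inclusion argument, and even that is a two-line consequence of the partial-isometry property; I do not expect a genuine obstacle, since all the structural work — establishing that these operators are partial isometries and that $W$ is a restriction of $V$ (Lemma~\ref{claim:relate-W-and-V}) — has already been done.
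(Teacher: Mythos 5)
Your proposal is correct. Note that the paper itself offers no proof of this lemma: it simply imports Corollary 8.3 of the cited work and observes that the barred identities follow because $\overline{V},\overline{W}$ are obtained from $V,W$ by exchanging the roles of the $\sL$ and $\sR$ registers. Your derivation is a self-contained alternative that uses only the two neighboring lemmas — that $V,\overline{V},W,\overline{W}$ are partial isometries and that $W=V\cdot\Pi^{\calD(W)}$, $W^\dagger=V^\dagger\cdot\Pi^{\calI(W)}$ (and their barred analogues). The chain $\Pi^{\calD(W)}=W^\dagger W=\Pi^{\calD(W)}\Pi^{\calD(V)}\Pi^{\calD(W)}$, followed by the norm-equality argument showing $\Pi^{\calD(V)}\Pi^{\calD(W)}=\Pi^{\calD(W)}$ (i.e.\ $\calD(W)\subseteq\calD(V)$), and then $W^\dagger V=\Pi^{\calD(W)}\Pi^{\calD(V)}=\Pi^{\calD(W)}$, is airtight, and the image-side and conjugated cases are indeed verbatim repetitions. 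What your route buys is generality and independence from the explicit action of the oracles on relation states: it works for any pair of partial isometries related by such restriction identities, so the $\sL\leftrightarrow\sR$ symmetry argument is not even needed. A cosmetic remark: the norm step is the standard fact that $\Pi Q\Pi=\Pi$ for orthogonal projectors $\Pi,Q$ forces $\range(\Pi)\subseteq\range(Q)$, which you could cite in one line instead of spelling out the vector computation.
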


\begin{lemma}[Twirling by strong approximate unitary $2$-design; From Lemma 9.2 of \cite{ma2024construct}] \label{lem:twirling-strongPRU}
For any strong approximate unitary $2$-design $\mathfrak{D}$ with additive error $\varepsilon$, and any integer $0 \leq t \leq N-1$, we have
\begin{align}
    \norm{  \E_{C,D \sim \mathfrak{D}} (C_{\gsA} \otimes Q[C,D]_{\gsL \gsR})^\dagger \cdot \Big( \Pi^{\bij}_{\leq t, \gsL \gsR} - \Pi^{\calD(W)}_{\leq t, \gsA \gsL \gsR}\Big) \cdot (C_{\gsA} \otimes Q[C,D]_{\gsL \gsR}) }_{\infty} &\leq 6t \sqrt{\frac{t}{N}} + 2t\varepsilon,\\
    \norm{  \E_{C,D \sim \mathfrak{D}} (D^\dagger_{\gsA} \otimes Q[C,D]_{\gsL \gsR})^\dagger \cdot \Big( \Pi^{\bij}_{\leq t, \gsL \gsR} - \Pi^{\calI(W)}_{\leq t, \gsA \gsL \gsR}\Big) \cdot (D^\dagger_{\gsA} \otimes Q[C,D]_{\gsL \gsR}) }_{\infty} &\leq 6t \sqrt{\frac{t}{N}} + 2t\varepsilon,\\
    \norm{  \E_{C,D \sim \mathfrak{D}} (C^*_{\gsA} \otimes Q[C,D]_{\gsL \gsR})^\dagger \cdot \Big( \Pi^{\bij}_{\leq t, \gsL \gsR} - \Pi^{\calD(\overline{W})}_{\leq t, \gsA \gsL \gsR}\Big) \cdot (C^*_{\gsA} \otimes Q[C,D]_{\gsL \gsR}) }_{\infty} &\leq 6t \sqrt{\frac{t}{N}} + 2t\varepsilon,\\
    \norm{  \E_{C,D \sim \mathfrak{D}} (D^T_{\gsA} \otimes Q[C,D]_{\gsL \gsR})^\dagger \cdot \Big( \Pi^{\bij}_{\leq t, \gsL \gsR} - \Pi^{\calI(\overline{W})}_{\leq t, \gsA \gsL \gsR}\Big) \cdot (D^T_{\gsA} \otimes Q[C,D]_{\gsL \gsR}) }_{\infty} &\leq 6t \sqrt{\frac{t}{N}} + 2t\varepsilon.
\end{align}
\end{lemma}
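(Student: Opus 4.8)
The plan is to follow the proof of Lemma 9.2 of~\cite{ma2024construct} essentially verbatim for the first two bounds (the $V$-side bounds), and then observe that the last two bounds (the $\overline{V}$-side bounds) reduce to the first two under the $\mathsf{L} \leftrightarrow \mathsf{R}$ symmetry that defines the conjugated objects $\overline{W}, \overline{V}$. Concretely, the projector $\Pi^{\bij}_{\leq t, \gsL \gsR}$ is invariant under swapping the roles of $\mathsf{L}$ and $\mathsf{R}$, and by \cref{claim:W-partial-isometry} we have $\Pi^{\calD(\overline{W})} = \Pi^{\calD(\overline{W}^L)} + \Pi^{\calI(\overline{W}^R)}$, which is obtained from $\Pi^{\calD(W)} = \Pi^{\calD(W^L)} + \Pi^{\calI(W^R)}$ by interchanging $\mathsf{L}$ and $\mathsf{R}$ (and correspondingly replacing $C$ by $C^*$, since $\overline{V}^L$ records into $\mathsf{R}$ where $V^L$ records into $\mathsf{L}$). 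Thus the difference $\Pi^{\bij}_{\leq t, \gsL \gsR} - \Pi^{\calD(\overline{W})}_{\leq t}$ is, after conjugating by the unitary that swaps the two variable-length registers, exactly the operator $\Pi^{\bij}_{\leq t, \gsL \gsR} - \Pi^{\calD(W)}_{\leq t}$ with $C \mapsto C^*$. Since $\mathfrak{D}$ is a \emph{strong} $2$-design, the ensemble of $C^*$ for $C \sim \mathfrak{D}$ is also a strong approximate $2$-design with the same additive error $\varepsilon$ (by the invariance of the strong design property under complex conjugation), so the bound transfers with the same right-hand side.

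First I would set up the $V$-side argument for the first inequality. The key structural input is that $\Pi^{\calD(W)}_{\leq t} = W^\dagger W$ restricted to length at most $t$, and that $\Pi^{\bij}_{\leq t, \gsL \gsR}$ projects onto the span of relation states $\ket{L}\ket{R}$ with $L \cup R$ bijective. The gap between the two arises precisely from bijective relation states on which $W^L$ (equivalently $\spfo$) fails to act as a partial isometry of full rank — i.e. those $(L,R)$ where appending a fresh $x \notin \Dom(L\cup R)$ to register $\mathsf{A}$ lands outside the domain. Following~\cite{ma2024construct}, one expands $\Pi^{\calD(W)}$ via \cref{claim:relate-W-and-spfo} and \cref{claim:W-partial-isometry}, tracks the ``collision'' events where $\Dom_<$ or $\Dom_>$ coincide, and bounds the probability of such collisions. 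The twirl by $C \otimes Q[C,D]$ pushes register $\mathsf{A}$ through a fresh (approximately) Haar-random basis, so that a uniformly random query value hits a previously-recorded domain element with probability $O(t/N)$; the $\Pi^{\mathsf{eq}}/\Pi^{\mathsf{neq}}$ twirl identities \cref{eq: clifford twirl eq neq 1 approx,eq: clifford twirl eq neq 2 approx} for strong approximate $2$-designs supply exactly the bound $1/N_\alpha + \varepsilon$ on the relevant operator norms. Summing the per-query contributions over $t$ queries, each of which is controlled via a triangle inequality / telescoping argument as in~\cite{ma2024construct}, yields $6t\sqrt{t/N} + 2t\varepsilon$: the $6t\sqrt{t/N}$ term is the exact-$2$-design contribution and the $2t\varepsilon$ term is the accumulated additive-error penalty, one factor of $\varepsilon$ per query split across the two twirled registers.

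The second $V$-side inequality (the one with $D^\dagger$ and $\Pi^{\calI(W)}$) follows by the same argument with the roles of domain and image exchanged, using $\Pi^{\calI(W)} = \Pi^{\calD(W^R)} + \Pi^{\calI(W^L)}$ from \cref{claim:W-partial-isometry} and the appropriate half of the two-sided invariance \cref{claim:two-sided-invariance}. I do not expect any genuinely new obstacle here — the main point of this lemma is \emph{bookkeeping}: verifying that every ingredient of the~\cite{ma2024construct} proof (the compression isometry relations, the commutation of $\Pi_{\leq t}$ with the domain/image projectors, the two-sided unitary invariance, the strong $2$-design twirl identities) has been correctly extended to the conjugated objects, which \cref{lem:partial-iso-overline}, \cref{claim:relate-W-and-spfo}, \cref{claim:W-partial-isometry}, \cref{claim:pileqt-commutes-with-WLWRDomIm}, \cref{claim:relate-W-and-V}, \cref{claim:relate-W-V-projector}, and \cref{claim:two-sided-invariance} already provide. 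The one place to be careful is the precise substitution $C \mapsto C^*$ (resp. $D \mapsto D^T$) in the $\overline{V}$-side bounds: one must check that $Q[C,D]$ as defined in \cref{def:multi-rot-Q} is consistent with this substitution — and indeed it is, since $Q[C,D] = (C\otimes D^T)^{\otimes *}_{\gsL} \otimes (C^*\otimes D^\dagger)^{\otimes *}_{\gsR}$ already treats $\mathsf{L}$ with $C$ and $\mathsf{R}$ with $C^*$, so the register swap exactly interchanges these and reproduces the correct twirl. Hence the strong-$2$-design hypothesis (rather than merely a standard $2$-design) is what makes the $C^*$ and $D^T$ twirls controllable, which is exactly why all four bounds hold with the same right-hand side $6t\sqrt{t/N} + 2t\varepsilon$.
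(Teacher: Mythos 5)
Your proposal is correct and takes essentially the same route as the paper: the paper also proves this lemma by rerunning the proof of Lemma~9.2 of~\cite{ma2024construct}, replacing the exact-2-design spectral-norm bounds on the twirled $\Pi^{\mathsf{eq}}$ (for $U\otimes U$) and $\Pi^{\mathsf{eq}}-\Pi^{\mathsf{EPR}}$ (for $U\otimes U^*$, which is where the \emph{strong} design hypothesis enters) with approximate-design versions carrying an extra $+\varepsilon$, and propagating this through the argument to accumulate the $2t\varepsilon$ term. The conjugate and transpose cases are handled exactly as you describe, via the $\mathsf{L}\leftrightarrow\mathsf{R}$ symmetry built into $\overline{W},\overline{V}$ and the already-established conjugated versions of the supporting claims.
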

\begin{proof}
    The proof follows from the proof of Lemma~9.2 in~\cite{ma2025construct} with one replacement.
    In Claims 29 and 30 and Eq.~(11.82), \cite{ma2025construct} uses the following spectral norm bound for the twirl over an \emph{exact} unitary 2-design $\mathfrak{D}'$,
    \begin{align}
        \left\lVert \E_{U \sim \mathfrak{D}'} \left[ (U \otimes U)^\dagger \cdot \Pi^{\mathsf{eq}} \cdot (U \otimes U) \right] \right\rVert_\infty & \leq \frac{2}{N+1}, \\
        \left\lVert \E_{U \sim \mathfrak{D}'} \left[ (U \otimes U^*) \cdot ( \Pi^{\mathsf{eq}} - \Pi^{\mathsf{EPR}} ) \cdot (U \otimes U^*) \right] \right\rVert_\infty & \leq \frac{1}{N+1}.
    \end{align}
    Here, we replace these bounds with analogous bounds for the twirl over an \emph{approximate} unitary 2-design $\mathfrak{D}$.
    For any (standard) approximate unitary 2-design with additive error $\varepsilon$, the first inequality becomes,
    \begin{align}
        \left\lVert \E_{U \sim \mathfrak{D}} \left[ (U \otimes U)^\dagger \cdot \Pi^{\mathsf{eq}} \cdot (U \otimes U) \right] \right\rVert_\infty & \leq \frac{2}{N+1}  + \varepsilon.
    \end{align}
    Meanwhile, for any strong approximate unitary 2-design with additive error $\varepsilon$, the second inequality becomes,
    \begin{align}
        \left\lVert \E_{U \sim \mathfrak{D}'} \left[ (U \otimes U^*) \cdot ( \Pi^{\mathsf{eq}} - \Pi^{\mathsf{EPR}} ) \cdot (U \otimes U^*) \right] \right\rVert_\infty & \leq \frac{1}{N+1} + \varepsilon.
    \end{align}
    To derive the first inequality, we abbreviate $\Phi_\mathfrak{D}(X) \equiv \E_{U \sim \mathfrak{D}}\left[ (U \otimes U) X (U \otimes U)^\dagger \right]$ and $\delta\Phi_\mathfrak{D} \equiv \Phi_\mathfrak{D} - \Phi_{\mathfrak{D}'}$, and apply
    $\lVert \delta\Phi^\dagger_\mathfrak{D}(O) \rVert_\infty = \max_\rho \text{tr}( \delta\Phi^\dagger_\mathfrak{D}(O) \rho) = \max_\rho \text{tr}( O \delta\Phi_\mathfrak{D}(\rho) ) \leq \max_\rho \lVert O \rVert_\infty \cdot \lVert \Phi_\mathfrak{D}(\rho) \rVert_1 \leq \lVert O \rVert_\infty \cdot \varepsilon$.
    An identical series of steps derives the second inequality. 
    Propagating this replacement through the remainder of the proof yields Lemma~\ref{lem:twirling-strongPRU}.
\end{proof}

Note that in the statement of~\cref{lem:twirling-strongPRU}, $\Pi^{\bij}_{\leq t, \gsL \gsR}$ is shorthand for $\Id_{\gsA} \otimes \Pi^{\bij}_{\leq t, \gsL \gsR}$, and thus the operators inside the spectral norm $\norm{\cdot}_{\infty}$ act on $\sA,\sL,\sR$.

\subsection{$V, \overline{V}$ approximates Haar-random unitary $U$ under $U, U^\dagger, U^*, U^T$}

Because all lemmas generalize to complex conjugation and transpose using the suitably defined $\overline{W}$ and $\overline{V}$, we can follow the same proof of \cite{ma2024construct} to show that the extended path recording oracle $V, \overline{V}$ approximates the following random unitary ensemble under $U, U^\dagger, U^*, U^T$.

\begin{definition}[$\mathsf{sPFC}(\mathfrak{D})$ distribution]
    For any distribution $\mathfrak{D}$ supported on $\calU(N)$, define the distribution $\mathsf{sPFC}({\mathfrak{D}})$ as follows:
    \begin{enumerate}
        \item Sample a uniformly random permutation $\pi \sim \sSym_{N}$, a uniformly random $f \sim \{0,1, 2\}^N$, and two independently sampled $n$-qubit unitaries $C, D \sim \mathfrak{D}$. Following the definitions in~\cref{sec:PF3-oracle}, 
        \begin{align}
            F_f \coloneqq \sum_{x \in [N]} e^{2 \pi \cdot f(x) \cdot i/3} \ketbra*{x} \quad \text{and} \quad P_{\pi} \coloneqq \sum_{x \in [N]} \ketbra*{\pi(x)}{x}.
        \end{align}
        \item Output the $n$-qubit unitary $\calO \coloneqq D \cdot P_\pi \cdot F_f \cdot C$.
    \end{enumerate} 
\end{definition}

\begin{definition}[Global state after queries to $V, \overline{V}$]
    For a $t$-query oracle adversary $\mathcal{A}$ that can perform queries to $U, U^\dagger, U^*, U^T$, where $b_i \in \{0, 1\}$ and $c_i \in \{0, 1\}$ denote the four choices (\,$U \rightarrow b_i=0, c_i=0; U^\dagger \rightarrow b_i=1, c_i=0; U^* \rightarrow b_i=0, c_i=1; U^T \rightarrow b_i=1, c_i=1$), and any $0 \leq i \leq t$, let
    \begin{align}
        \ket*{\calA_i^{V, \overline{V}}}_{\gsA \gsB \gsL \gsR} \coloneqq \prod_{i = 1}^t &\Bigg( \Big( (1-c_i)((1-b_i) \cdot V_{\gsA \gsL \gsR} + b_i \cdot V_{\gsA \gsL \gsR}^\dagger)\\
        &+ c_i((1-b_i) \cdot \overline{V}_{\gsA \gsL \gsR} + b_i \cdot \overline{V}_{\gsA \gsL \gsR}^\dagger) \Big) \cdot A_{i,\gsA \gsB} \Bigg) \ket*{0^{n+m}}_{\gsA \gsB} \otimes \ket*{\varnothing}_{\gsL} \ket*{\varnothing}_{\gsR}
    \end{align}
    denote the global state on registers $\sA,\sB,\sL,\sR$ after $\calA$ makes $i$ queries to $V$.
\end{definition}

We will also consider the global purified state after queries to $W, \overline{W}$, where we twirl the input and the output states by two independent random unitaries sampled from any unitary $2$-design. For this purpose, we define the purification of two random unitaries $C, D$.

\begin{definition}
\label{def:init-D-state}
    For any distribution $\frakD$ over $n$-qubit unitaries, define the state
    \begin{align}
    \ket*{\init(\mathfrak{D})}_{\gsC \gsD} \coloneq \int_{C,D} \sqrt{ d\mu_{\mathfrak{D}}(C) d\mu_{\mathfrak{D}}(D)} \ket*{C}_{\gsC} \otimes \ket*{D}_{\gsD},
\end{align}
where $\mu_{\mathfrak{D}}(C)$ is the probability measure for which $C$ is sampled from $\mathfrak{D}$.
\end{definition}

\begin{definition}[Controlled $C,D$ and $Q$]
\label{def:controlled-CDQ}
    Define the following operators
    \begin{align}
        &\mathsf{cC} \coloneq \int_{C} C_{\gsA} \otimes \ketbra*{C}_{\gsC}, \quad \mathsf{cD} \coloneq \int_{D} D_{\gsA} \otimes \ketbra*{D}_{\gsD},\\
    &\mathsf{cQ} \coloneq \int_{C, D} Q[C, D]_{\gsL {\color{gray} ,} \gsR} \otimes \ketbra*{C}_{\gsC} \otimes \ketbra*{D}_{\gsD}.
\end{align}
\end{definition} 

\begin{definition}[Global state after queries to Twirled $W, \overline{W}$]
\label{def:twirled-W-state}
    For a $t$-query adversary $\mathcal{A}$ that can perform queries to $U, U^\dagger, U^*, U^T$, where $b_i \in \{0, 1\}$ and $c_i \in \{0, 1\}$ denote the four choices (\,$U \rightarrow b_i=0, c_i=0; U^\dagger \rightarrow b_i=1, c_i=0; U^* \rightarrow b_i=0, c_i=1; U^T \rightarrow b_i=1, c_i=1$), let
    \begin{align}
        \ket*{\mathcal{A}_0^{W, \overline{W},\frakD}} &\coloneqq \ket*{0^n}_{\gsA} \ket*{0^m}_{\gsB} \ket*{\varnothing}_{\gsL} \ket*{\varnothing}_{\gsR} \ket*{\mathsf{init}(\mathfrak{D})}_{\gsC \gsD}.
    \end{align}
    For $i$ from $1$ to $t$, let
    \begin{align}
        \ket*{\mathcal{A}_i^{W, \overline{W},\frakD}} \coloneqq &\Big( (1-c_i)((1-b_i) \cdot (\scD \cdot W \cdot \scC) + b_i \cdot (\scD \cdot W \cdot \scC)^\dagger) +\\
        &\quad c_i ((1-b_i) \cdot (\scD^* \cdot \overline{W} \cdot \scC^*) + b_i \cdot (\scD^* \cdot \overline{W} \cdot \scC^* )^\dagger) \Big) \cdot A_i \cdot \ket*{\mathcal{A}^{W,\frakD}_{i-1}}.
    \end{align}
\end{definition}

\begin{lemma}[$W$ is indistinguishable from $V$ after twirling; From Lemma 9.3 of \cite{ma2024construct}] \label{lem:closeness-AWD-and-PhiVt}
Let $\mathfrak{D}$ be any strong approximate unitary $2$-design with additive error $\varepsilon$. For any $t$-query oracle adversary $\calA$ that can query $\mathcal{O}, \mathcal{O}^\dagger, \mathcal{O}^*, \mathcal{O}^T$,
    \begin{align}
        \TD( \Tr_{- \sA \sB} \ketbra*{\mathcal{A}^{W, \overline{W}, \mathfrak{D}}_t}_{\gsA \gsB \gsL \gsR \gsC \gsD}, \Tr_{- \sA \sB} \ketbra*{\mathcal{A}^{V, \overline{V}}_t}_{\gsA \gsB \gsL \gsR} ) \leq \frac{9t}{N^{1/8}} + 2 t^{1/4} \varepsilon^{1/4}. \label{eq:TD-phi-psi}
    \end{align}
\end{lemma}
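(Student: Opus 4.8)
## Proof Proposal for Lemma~\ref{lem:closeness-AWD-and-PhiVt}

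\textbf{Overall approach.} The plan is to follow the structure of the proof of Lemma~9.3 in~\cite{ma2024construct} essentially verbatim, exploiting the fact that every ingredient used there has already been extended above to the conjugated setting. The key observation is that all the relevant structural facts about $V, W$ have exact analogues for $\overline{V}, \overline{W}$ (Lemmas~\ref{lem:partial-iso-overline}, \ref{claim:relate-W-and-spfo}, \ref{claim:W-partial-isometry}, \ref{claim:relate-W-and-V}, \ref{claim:relate-W-V-projector}, \ref{claim:pileqt-commutes-with-WLWRDomIm}, \ref{claim:two-sided-invariance}, \ref{lem:twirling-strongPRU}), with the only change being that $L$ and $R$ are swapped. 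The argument therefore proceeds by hybrid, replacing one query at a time.

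\textbf{Key steps, in order.} First, I would set up the telescoping/hybrid decomposition: define intermediate states $\ket*{\calA^{(j)}}$ in which the first $j$ queries use $W$/$\overline W$ (conjugated appropriately by $\scC, \scD$) and the remaining $t-j$ queries use $V$/$\overline V$, with the input and output of the $W$-queries twirled by the purified $2$-design register $(\sC, \sD)$. Using Lemma~\ref{claim:relate-W-and-V} ($W = V \cdot \Pi^{\calD(W)}$ and the conjugated version $\overline W = \overline V \cdot \Pi^{\calD(\overline W)}$, together with the daggered versions), the difference between consecutive hybrids is governed by the failure probability of the projector $\Pi^{\calD(W)}$ (or $\Pi^{\calI(W)}$, $\Pi^{\calD(\overline W)}$, $\Pi^{\calI(\overline W)}$, depending on which of $b_i, c_i$ is active) when applied to the state after $i-1$ queries. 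Second, I would use the two-sided unitary invariance (Lemma~\ref{claim:two-sided-invariance}) to ``pull'' the unitaries $C, D$ (or $C^*, D^T$, etc.) through $V_{\leq t}$ up to error $O(t/\sqrt N)$, so that the state after $i-1$ queries to which the projector is applied can, up to small error, be taken to be of the form $(C_\gsA \otimes Q[C,D]_{\gsL\gsR}) \ket{\text{something}}$. Third, I would invoke Lemma~\ref{lem:twirling-strongPRU} — the twirled-projector bound for strong approximate $2$-designs — to conclude that $\E_{C,D}\norm{(\Pi^{\bij}_{\leq t} - \Pi^{\calD(W)}_{\leq t})(C_\gsA \otimes Q)\ket{\cdot}}$ is at most $O(t\sqrt{t/N}) + O(t\varepsilon)$, and similarly for the three conjugated projectors. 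Fourth, I would apply the sequential gentle measurement lemma (Lemma~\ref{lem:seq-gentleM-pure}) to accumulate these per-query errors over all $t$ queries, picking up the standard extra factor of $t$, and finally convert from the $\ell_2$-distance on vectors to trace distance on the reduced states via Eq.~(\ref{eq: state to op bound}). Collecting the powers — the $\sqrt{t/N}$ terms aggregate to $O(t^{3/2}/\sqrt N)$, and after taking square roots and the gentle-measurement factor one lands on the stated $9t/N^{1/8}$; the $\varepsilon$ terms aggregate to $O(t\varepsilon)$ inside the square root, giving $O(t^{1/4}\varepsilon^{1/4})$ — yields exactly the bound claimed. The partial trace $\Tr_{-\sA\sB}$ on both sides only decreases trace distance, so it can be applied at the end for free.

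\textbf{Main obstacle.} The genuinely new content is \emph{not} in the combinatorics of the hybrid argument — that is purely mechanical given the lemmas above — but in making sure that the conjugated oracles $\overline V, \overline W$ slot into the two-sided invariance and twirling lemmas correctly. Specifically, the subtle point is that a query to $U^*$ corresponds to $\overline V$ conjugated by $\scC^*$ on the input and $\scD^*$ on the output (and $U^T$ corresponds to $\overline V^\dagger$ conjugated by $\scD^T$, $\scC^T$), so when one pulls the twirl through one must use the \emph{third and fourth} lines of Lemma~\ref{claim:two-sided-invariance} and the \emph{third and fourth} lines of Lemma~\ref{lem:twirling-strongPRU} rather than the first two; the bookkeeping of which of $C, D, C^*, D^*, C^T, D^T$ appears where, indexed by $(b_i, c_i)$, is where an error could creep in. A second, minor obstacle is verifying that the purified $2$-design register $(\sC,\sD)$ and the controlled operators $\scC, \scD, \scQ$ of Definition~\ref{def:controlled-CDQ} interact correctly with complex conjugation — i.e.\ that $\ket*{\init(\mathfrak D)}$ is real in the computational basis so that $\scC^* $ acting on register $\sA$ is implemented by the same controlled structure — but this is immediate from the definitions. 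Once these bookkeeping points are pinned down, the proof is a direct transcription of the argument in~\cite{ma2024construct}, and indeed the lemma statement itself already signals this by citing ``From Lemma 9.3 of \cite{ma2024construct}.''
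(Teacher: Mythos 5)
Your proposal follows essentially the same route as the paper: the paper's proof is a one-paragraph reduction to the proof of Lemma 9.3 of \cite{ma2024construct}, with the single substitution of Lemma~\ref{lem:twirling-strongPRU} (the strong approximate 2-design twirl, contributing the $\sqrt{2t\varepsilon}$ term in Claim~18) and with conjugate/transpose queries handled automatically by the $\overline{V},\overline{W}$ analogues established in the preceding lemmas — exactly the hybrid-plus-projector-insertion argument you describe. Your informal exponent bookkeeping at the end is looser than the paper's (the $t$- and $N^{1/8}$-dependence comes from propagating the modified Claim~18 bound $1-35t^2/N^{1/4}-\sqrt{2t\varepsilon}$, not from an extra gentle-measurement factor of $t$ on top), but the structure and conclusion match.
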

\begin{proof}
    The proof follows from the proof of Lemma~9.3 in~\cite{ma2025construct} with one replacement. In the application of Lemma~9.2 in Eq.~(9.45) of the proof of Claim~18, we apply Lemma~\ref{lem:twirling-strongPRU} for the twirl over a strong $\varepsilon$-approximate unitary 2-design instead. This modifies the right hand side of the statement of Claim~18 to $1 - 35t^2/N^{1/4} - \sqrt{2 t \varepsilon}$. Propagating this replacement through the rest of the proof of Lemma~9.3 and applying the inequality $\sqrt{x+y} \leq \sqrt{x} + \sqrt{y}$ yields Lemma~\ref{lem:closeness-AWD-and-PhiVt}.
\end{proof}

\begin{lemma}[$\mathsf{sPFC}(\mathfrak{D})$ is indistinguishable from $V$; From Lemma 9.1 of \cite{ma2024construct}]\label{lemma:pfd-cho-strong}
     Let $\mathfrak{D}$ be any strong approximate unitary $2$-design with additive error $\varepsilon$. For any $t$-query oracle adversary $\calA$ that can query $\mathcal{O}, \mathcal{O}^\dagger, \mathcal{O}^*, \mathcal{O}^T$,
     \begin{align}
        \TD\left(\E_{\calO \sim \mathsf{sPFC}(\mathfrak{D})} \ketbra*{\mathcal{A}_t^{\calO}}_{\gsA \gsB}, \,\,\, \Tr_{\sL \sR}\left( \ketbra*{\mathcal{A}^{V, \overline{V}}_t}_{\gsA \gsB \gsL \gsR} \right) \right) \leq \frac{9t(t+1)}{N^{1/8}} + 4 t^{5/4}\varepsilon^{1/4}. \label{eq:intermediate-step-strong-main-thm}
    \end{align}
\end{lemma}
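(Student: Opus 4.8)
The plan is to obtain the statement by composing two ingredients: (i) that an adversary's view of $\mathsf{sPFC}(\mathfrak{D})$ under queries to $\calO,\calO^\dagger,\calO^*,\calO^T$ coincides, up to a small error, with its view of the twirled partial path-recording oracles $W,\overline{W}$ (the state $\ket*{\mathcal{A}^{W,\overline{W},\mathfrak{D}}_t}$ of Definition~\ref{def:twirled-W-state}); and (ii) Lemma~\ref{lem:closeness-AWD-and-PhiVt}, which says that this twirled-$W$ state is $\big(\tfrac{9t}{N^{1/8}}+2t^{1/4}\varepsilon^{1/4}\big)$-close in trace distance (on registers $\sA\sB$) to the state $\ket*{\mathcal{A}^{V,\overline{V}}_t}$ produced by queries to $V,\overline{V}$. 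Thus the only new content is step (i), which I would establish exactly as in the proof of Lemma~9.1 of~\cite{ma2024construct}, now carrying the conjugated objects $\overline{W},\overline{V}$ through symmetrically; the triangle inequality then combines the two error terms into $\tfrac{9t(t+1)}{N^{1/8}}+4t^{5/4}\varepsilon^{1/4}$.

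For step (i), the first move is to purify the oracle. By Fact~\ref{claim:equiv-purified-vs-standard} — whose extension to $\calO^*$ and $\calO^T$ is precisely the additional action formulas for $\spfo^*,\spfo^T$ recorded in Lemma~\ref{claim:ternary-pfo-action} — a query to $D\cdot P_\pi\cdot F_f\cdot C$ (resp.\ its inverse, conjugate, transpose) is perfectly simulated by conjugating $\spfo$ (resp.\ $\spfo^\dagger,\spfo^*,\spfo^T$) on register $\sA$ by $C$ and $D$, with registers $\sP,\sF$ initialized to the uniform superposition proportional to $\sum_{\pi}\ket*{\pi}\sum_f\ket*{f}$. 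One then also purifies the sampling of $C,D\sim\mathfrak{D}$ into registers $\sC,\sD$ initialized to $\ket*{\init(\mathfrak{D})}_{\sC\sD}$ (Definition~\ref{def:init-D-state}), replacing the $\sA$-side dressings by the controlled operators $\scC,\scD$ of Definition~\ref{def:controlled-CDQ}. Next, apply the compression partial isometry $\Compress$ (the $\mathsf{PF}$ version): the uniform initial state on $\sP\sF$ is exactly the relation state $\ket*{\pf_{\varnothing,\varnothing}}$, so it maps to $\ket*{\varnothing}_{\sL}\ket*{\varnothing}_{\sR}$, and by Lemma~\ref{claim:relate-W-and-spfo} the compressions $\Compress\,\spfo\,\Compress^\dagger,\ \Compress\,\spfo^\dagger\,\Compress^\dagger,\ \Compress\,\spfo^*\,\Compress^\dagger,\ \Compress\,\spfo^T\,\Compress^\dagger$ agree with $W,W^\dagger,\overline{W},\overline{W}^\dagger$ on the images of $\Pi^{\calD(W)},\Pi^{\calI(W)},\Pi^{\calD(\overline{W})},\Pi^{\calI(\overline{W})}$ respectively. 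Inserting the appropriate one of these four projectors immediately before each query therefore turns the compressed, purified $\mathsf{sPFC}(\mathfrak{D})$ state into exactly $\ket*{\mathcal{A}^{W,\overline{W},\mathfrak{D}}_t}$ after tracing out $\sL\sR\sC\sD$.

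It remains to bound the cost of these projector insertions. Here I would use sequential gentle measurement (Lemma~\ref{lem:seq-gentleM-pure}): the accumulated error after $t$ queries is at most $t$ times the square root of the worst-case single-step failure probability $1-\norm{\Pi\,\ket*{\mathrm{state}}}_2^2$. Since $W,\overline{W}$ preserve bijective relations whose size is bounded by the number of queries so far (Lemma~\ref{fact:WLWR-space-leqi}), the state right before the $i$-th query lies in the image of $\Pi^{\bij}_{\leq t}$, so the relevant failure is governed by $\Pi^{\bij}_{\leq t}-\Pi^{\calD(W)}_{\leq t}$ (or its image/conjugate variants). After commuting the controlled $\sA$-side dressings through using the two-sided unitary invariance of $V,\overline{V}$ (Lemma~\ref{claim:two-sided-invariance}) and averaging over $\sC,\sD$, this failure probability is exactly one of the four quantities bounded by Lemma~\ref{lem:twirling-strongPRU}, namely $\leq 6t\sqrt{t/N}+2t\varepsilon$ — and all four bounds coincide because $\overline{W},\overline{V}$ are obtained from $W,V$ by swapping $\sL\leftrightarrow\sR$. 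Plugging this into sequential gentle measurement, converting from vector to trace distance via $\norm{\dyad{u}-\dyad{v}}_1\leq 2\norm{\ket*{u}-\ket*{v}}_2$ and $\sqrt{x+y}\leq\sqrt{x}+\sqrt{y}$, bounds the step-(i) error by $O\!\big(t(t+1)/N^{1/8}+t^{5/4}\varepsilon^{1/4}\big)$, which together with Lemma~\ref{lem:closeness-AWD-and-PhiVt} gives the claim.

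The main obstacle I anticipate is not conceptual but bookkeeping: because the adversary chooses one of four query directions at each of $t$ steps, one must insert the correct projector and invoke the correct bound from Lemma~\ref{lem:twirling-strongPRU} uniformly across all $4^t$ branches, keep careful track of whether the previous query left the state in the image of $W$ versus the domain of $W$ (and their conjugated mirrors), and verify that the powers of $N$ and $\varepsilon$ emerging from the nested gentle-measurement/twirling estimates match the stated $N^{1/8}$ and $\varepsilon^{1/4}$. There is, however, no genuinely new difficulty beyond the forward/inverse case treated in~\cite{ma2024construct}, precisely because $\overline{W},\overline{V}$ were engineered to be exact $\sL\leftrightarrow\sR$ mirrors of $W,V$, so every lemma invoked (the compression identity Lemma~\ref{claim:relate-W-and-spfo}, the invariance Lemma~\ref{claim:two-sided-invariance}, the twirling bound Lemma~\ref{lem:twirling-strongPRU}, and Lemma~\ref{lem:closeness-AWD-and-PhiVt} itself) already holds for the conjugated operators with the same constants.
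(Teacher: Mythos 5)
Your proposal follows essentially the same route as the paper: the paper's proof of Lemma~\ref{lemma:pfd-cho-strong} simply reruns the proof of Lemma~9.1 of~\cite{ma2024construct}, where your step (i) is the analogue of its Lemmas~9.4/9.5 (purify via Fact~\ref{claim:equiv-purified-vs-standard}, compress via Lemma~\ref{claim:relate-W-and-spfo}, and charge the inserted domain/image projectors through sequential gentle measurement with the twirling bound of Lemma~\ref{lem:twirling-strongPRU}, now reading $6t\sqrt{t/N}+2t\varepsilon$), and your step (ii) is exactly the invocation of Lemma~\ref{lem:closeness-AWD-and-PhiVt}, combined by the triangle inequality into $\tfrac{9t(t+1)}{N^{1/8}}+4t^{5/4}\varepsilon^{1/4}$. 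The only difference is that you leave the step-(i) constants as $O(\cdot)$ where the paper tracks them explicitly ($9t^2/N^{1/8}+2t^{5/4}\varepsilon^{1/4}$), which is a matter of bookkeeping rather than substance.
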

\begin{proof}
    The proof follows from the proof of Lemma~9.1 in~\cite{ma2025construct}. The right hand side of the statement of Lemma~9.4 is modified to $1-70t^2/N^{1/4} - 2\sqrt{2t\varepsilon}$ following the modification of Claim~18. From this, the right hand side of the statement of Lemma~9.5 is modified to $9t^2/N^{1/8} + t \sqrt{2\sqrt{2t\varepsilon}} \leq 9t^2/N^{1/8} + 2 t^{5/4} \varepsilon^{1/4}$. Inserting this modification and that of Lemma~25 into the proof of Lemma~9.1 in~\cite{ma2025construct} yields Lemma~\ref{lemma:pfd-cho-strong}.
\end{proof}

\begin{theorem}[$V$ is indistinguishable from a Haar-random unitary; From Theorem 8 of \cite{ma2024construct}]\label{theorem:haar-cho-strong}
     For any $t$-query oracle adversary $\mathcal{A}$ that can can query $\mathcal{O}, \mathcal{O}^\dagger, \mathcal{O}^*, \mathcal{O}^T$,
     \begin{align}
        \TD\left(\E_{\calO \sim \mu_{\mathsf{Haar}}} \ketbra*{\mathcal{A}_t^{\calO}}_{\gsA \gsB}, \,\,\, \Tr_{\sL \sR}\left( \ketbra*{\mathcal{A}^{V, \overline{V}}_t}_{\gsA \gsB \gsL \gsR} \right) \right) \leq \frac{9t(t+1)}{N^{1/8}}.
    \end{align}
\end{theorem}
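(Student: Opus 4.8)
The plan is to read the statement off from Lemma~\ref{lemma:pfd-cho-strong} by taking the strong approximate unitary $2$-design $\mathfrak{D}$ appearing there to be the Haar measure itself, so that the additive-error contribution vanishes, and then identifying $\mathsf{sPFC}(\mu_{\mathsf{Haar}})$ with the Haar ensemble. The first step is to observe that $\mu_{\mathsf{Haar}}$ on $\calU(N)$ is an exact unitary $k$-design for every $k$ — in particular it reproduces all mixed $(p,q)$-moments with $p+q\le 2$ exactly, hence it is a \emph{strong} approximate unitary $2$-design with additive error $\varepsilon = 0$ (and trivially commutes with conjugation and transposition). Substituting $\mathfrak{D} = \mu_{\mathsf{Haar}}$, $\varepsilon = 0$ into Lemma~\ref{lemma:pfd-cho-strong} gives, for any $t$-query oracle adversary $\mathcal{A}$ with access to $\calO,\calO^\dagger,\calO^*,\calO^T$,
\begin{equation}
    \TD\!\left(\E_{\calO \sim \mathsf{sPFC}(\mu_{\mathsf{Haar}})} \ketbra*{\mathcal{A}_t^{\calO}}_{\gsA \gsB}, \,\, \Tr_{\sL \sR}\!\left( \ketbra*{\mathcal{A}^{V, \overline{V}}_t}_{\gsA \gsB \gsL \gsR} \right) \right) \leq \frac{9t(t+1)}{N^{1/8}}.
\end{equation}

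The second step is to check that $\mathsf{sPFC}(\mu_{\mathsf{Haar}})$ is simply the Haar ensemble as far as the adversary can tell. A sample of $\mathsf{sPFC}(\mu_{\mathsf{Haar}})$ is $\calO = D \cdot P_\pi \cdot F_f \cdot C$ with $C,D \sim \mu_{\mathsf{Haar}}$ drawn independently of $\pi$ and $f$. For each fixed $(\pi,f,C)$ the operator $P_\pi F_f C$ is a fixed unitary, and left-invariance of the Haar measure implies that $D\cdot(P_\pi F_f C)$ is again Haar-distributed; averaging over the independent triple $(\pi,f,C)$ leaves the marginal law of $\calO$ equal to $\mu_{\mathsf{Haar}}$. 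Since $\ket*{\mathcal{A}_t^{\calO}}$ is a deterministic function of $\calO$ and the fixed adversary, $\E_{\calO}\ketbra*{\mathcal{A}_t^{\calO}}$ depends on the randomness only through the distribution of $\calO$, so $\E_{\calO \sim \mathsf{sPFC}(\mu_{\mathsf{Haar}})}\ketbra*{\mathcal{A}_t^{\calO}}_{\gsA\gsB} = \E_{\calO \sim \mu_{\mathsf{Haar}}}\ketbra*{\mathcal{A}_t^{\calO}}_{\gsA\gsB}$. Inserting this identity into the displayed inequality yields exactly the claimed bound. (The bound is non-vacuous only for roughly $t \lesssim N^{1/16}$, comfortably inside the range $t\le N-1$ in which $\ket*{\mathcal{A}^{V,\overline{V}}_t}$ is defined; for larger $t$ the right-hand side already exceeds $1$.)

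I expect essentially no obstacle at this stage: all the genuinely hard work is upstream, in Lemma~\ref{lemma:pfd-cho-strong}, which chains together the extended path-recording machinery ($V,\overline{V}$, the partial isometries $W,\overline{W}$, the two-sided unitary-invariance estimate of Lemma~\ref{claim:two-sided-invariance}, and the strong-$2$-design twirl bound of Lemma~\ref{lem:twirling-strongPRU}), all adapted from \cite{ma2024construct} to absorb the new $\calO^*$ and $\calO^T$ queries via the conjugated operators. Given those inputs, the only point that needs a moment's care is confirming that $\mu_{\mathsf{Haar}}$ legitimately instantiates "$\mathfrak{D}$" in Lemma~\ref{lemma:pfd-cho-strong} as a \emph{strong} (not merely standard) $2$-design — which holds since the Haar twirl reproduces the $(2,0)$, $(1,1)$, and $(0,2)$ moments exactly — and noting that no triangle inequality or extra error accumulation beyond Lemma~\ref{lemma:pfd-cho-strong} is incurred.
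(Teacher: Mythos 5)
Your proposal is correct and is essentially the intended derivation: the paper gives no separate argument for Theorem~\ref{theorem:haar-cho-strong} (it imports it from Theorem~8 of~\cite{ma2024construct}), and that argument is exactly your route — instantiate the strong $2$-design $\mathfrak{D}$ in Lemma~\ref{lemma:pfd-cho-strong} with the Haar measure so that $\varepsilon=0$, and use unitary invariance of Haar to identify $\mathsf{sPFC}(\mu_{\mathsf{Haar}})$ with $\mu_{\mathsf{Haar}}$, which reproduces the stated bound $9t(t+1)/N^{1/8}$ with no extra error terms. Your side remarks (Haar is a strong $2$-design with zero additive error; the adversary's output depends only on the law of $\calO$) are the right points to check, and there is no circularity since Lemma~\ref{lemma:pfd-cho-strong} is proved independently of this theorem.
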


To simplify the notation, we will often denote $\ket*{\mathcal{A}^{V, \overline{V}}_t}_{\gsA \gsB \gsL \gsR}$ as simply $\ket*{\mathcal{A}^{V}_t}_{\gsA \gsB \gsL \gsR}$. Similarly, we will denote $\ket*{\mathcal{A}^{W, \overline{W}, \mathfrak{D}}_t}_{\gsA \gsB \gsL \gsR \gsC \gsD}$ as simply $ \ket*{\mathcal{A}^{W, \mathfrak{D}}_t}_{\gsA \gsB \gsL \gsR \gsC \gsD}$ when appropriate.

\section{The Luby-Rackoff-Function-Clifford (LRFC) ensemble} \label{app: LRFC}

In this section we present the central random unitary construction of this work, the Luby-Rackoff-Function-Clifford (LRFC) ensemble, which only uses random unitary $2$-designs and random functions. The LRFC ensemble does not require the use of random permutations, which is useful for generating minimum depth strong random unitaries, due to the lack of known low-depth constructions of quantum-secure pseudorandom permutations. The best known construction of quantum-secure strong pseudorandom permutations is given in \cite{zhandry2016note}, which requires $\poly(n)$ circuit depth for $n$-qubit systems.

In what follows, we first define the LRFC ensemble and then proceed step-by-step through our proof that it is indistinguishable from a Haar-random unitary.

\subsection{Definition of the ensemble}

We begin by formally defining the LRFC ensemble and its key ingredients. 

\paragraph{Feistel network.} Let $n$ be the number of qubits and $N \coloneqq 2^n$. Our construction utilizes a simple variant of the Feistel network, also known as the Luby-Rackoff construction \cite{luby1988construct}. For a function $h: \{0, 1\}^{n/2} \rightarrow \{0, 1\}^{n/2}$, we define the \emph{Left} and \emph{Right} Luby-Rackoff function as follows:
\begin{align}
     \mathsf{L}_{h}(x_< \| x_>) &\coloneqq (x_< \oplus h(x_>)) \| x_>, \\
     \mathsf{R}_{h}(x_< \| x_>) &\coloneqq x_< \| (x_> \oplus h(x_<)),
\end{align}
where $x = x_< \| x_> \in \{0, 1\}^n$, and $\|$ denotes bitstring concatenation.

\paragraph{Quantum oracles.} We define the following $n$-qubit quantum oracles:
\begin{align}
    \mathcal{O}^{f} &\coloneqq \sum_{x \in \{0,1\}^n} e^{2 \pi i f(x) / 3} \ketbra{x}, \\
    \mathcal{O}^{\mathsf{L}, h_1} &\coloneqq \sum_{x \in \{0,1\}^n} \ketbra{\mathsf{L}_{h_1}(x)}{x} = \sum_{x \in \{0,1\}^n} \ketbra{(x_< \oplus h_1(x_>)) \| x_>}{x}, \\
    \mathcal{O}^{\mathsf{R}, h_2} &\coloneqq \sum_{x \in \{0,1\}^n} \ketbra{\mathsf{R}_{h_2}(x)}{x} = \sum_{x \in \{0,1\}^n} \ketbra{x_< \| (x_> \oplus h_2(x_<))}{x}.
\end{align}
These are identical to the operators $F$, $S_L$, $S_R$ defined in the main text.

\paragraph{Construction.} Let $C,D$ be two $n$-qubit random unitaries sampled independently from a unitary 2-design, such as a random Clifford circuit. A random unitary $U$ sampled from the LRFC ensemble is given by:
\begin{equation}
    U \coloneqq D \cdot \mathcal{O}^{\mathsf{R}, h_2} \cdot \mathcal{O}^{\mathsf{L}, h_1} \cdot \mathcal{O}^{f} \cdot C.
\end{equation}
Putting everything together, we have the following definition for LRFC ensemble.

\begin{definition}[LRFC ensemble] \label{def:LRFC}
    Suppose $h_1$ and $h_2$ are drawn uniformly randomly from functions on $\{0,1\}^{n/2} \rightarrow \{0,1\}^{n/2}$,  $f$ is drawn uniformly randomly from ternary functions on $\{0,1\}^n$, and $C, D$ are drawn uniformly from a unitary $2$-design on $n$ qubits.
    Then the Luby-Rackoff-Function-Clifford (LRFC) ensemble is given by the family of $n$-qubit unitaries:
    \begin{equation}
        U := \sum_{x \in \{0, 1\}^n } e^{2 \pi i f(x) / 3} \cdot D \cdot \ketbra{x_< \oplus h_1(x_>) \| x_> \oplus h_2(x_< \oplus h_1(x_>))}{x} \cdot C,
    \end{equation}
    where $\lVert$ denotes the concatenation of two $\frac{n}{2}$-bit strings and $x = x_< \lVert x_>$.
\end{definition}

\subsection{Purified Luby-Rackoff-Function oracle}

In this section, we analyze the view of an adversary that makes queries to an oracle implementing the Luby-Rackoff-Function construction with random functions $h_1$, $h_2$, and a random ternary function $f$. We will do this by analyzing the \emph{purified Luby-Rackoff-Function oracle}, which uses a purification of these random functions.

\begin{definition}[Purified Luby-Rackoff-Function oracle] 
    The purified Luby-Rackoff-Function oracle $\mathsf{lrfO}$ is a unitary acting on registers $\sA$, $\mathsf{H}_{\mathsf{1}}$, $\mathsf{H}_{\mathsf{2}}$, $\sF$, where
    \begin{itemize}
        \item $\mathsf{H}_{\mathsf{1}}$ is a register associated with the Hilbert space $\mathcal{H}_{\mathsf{H}_{\mathsf{1}}}$, defined to be the span of the orthonormal states $\ket{h_1}$ for all $h_1: \{0,1\}^{n/2} \rightarrow \{0,1\}^{n/2}$.
        \item $\mathsf{H}_{\mathsf{2}}$ is a register associated with the Hilbert space $\mathcal{H}_{\mathsf{H}_{\mathsf{2}}}$, defined to be the span of the orthonormal states $\ket{h_2}$ for all $h_2: \{0,1\}^{n/2} \rightarrow \{0,1\}^{n/2}$.
        \item $\sF$ is a register associated with the Hilbert space $\mathcal{H}_{\mathsf{F}}$, defined to be the span of the orthonormal states $\ket{f}$ for all $f: \{0,1\}^n \rightarrow \{0,1,2\}$. 
    \end{itemize}
    The unitary $\mathsf{lrfO}$ is defined to act as follows:
    \begin{align}
        &\mathsf{lrfO}_{\gsA\gsH_{\gsL}\gsH_{\gsR}\gsF} \ket{x}_{\gsA} \ket{h_1}_{\gsH_{{ \mathsf{\color{gray} 1}}}} \ket{h_2}_{\gsH_{{\mathsf{\color{gray} 2}}}} \ket{f}_{\gsF} \\
        &\coloneqq \omega_3^{f(x)} \ket{x_< \oplus h_1(x_>) \| x_> \oplus h_2(x_< \oplus h_1(x_>))}_{\gsA} \ket{h_1}_{\gsH_{\color{gray} 2}} \ket{h_2}_{\gsH_{\mathsf{\color{gray} 2}}} \ket{f}_{\gsF},
    \end{align}
    for all $x = x_{<} \| x_{>} \in \{0,1\}^n$ with $x_{<}, x_{>} \in \{0,1\}^{n/2}$, and for all functions $h_1, h_2: \{0,1\}^{n/2} \rightarrow \{0,1\}^{n/2}$, $f: \{0,1\}^n \rightarrow \{0, 1, 2\}$. Here, $\omega_3 = \exp(2 \pi i / 3)$.
\end{definition}

The action of $\lrfo^*$ is
\begin{align}
    &\lrfo^* \ket{x}_{\gsA} \ket{h_1}_{\gsH_{{ \mathsf{\color{gray} 1}}}} \ket{h_2}_{\gsH_{{\mathsf{\color{gray} 2}}}} \ket{f}_{\gsF} \\
    &= \omega_3^{-f(x)} \ket{x_< \oplus h_1(x_>) \| (x_> \oplus h_2(x_< \oplus h_1(x_>)))}_{\gsA} \ket{h_1}_{\gsH_{\color{gray} 2}} \ket{h_2}_{\gsH_{\mathsf{\color{gray} 2}}} \ket{f}_{\gsF}.
\end{align}
The action of $\lrfo^\dagger$ is
\begin{align}
    &\lrfo^\dagger \ket{y}_{\gsA} \ket{h_1}_{\gsH_{{ \mathsf{\color{gray} 1}}}} \ket{h_2}_{\gsH_{{\mathsf{\color{gray} 2}}}} \ket{f}_{\gsF} \\
    &= \omega_3^{-f(x)} \ket{ (y_< \oplus h_1(y_> \oplus h_2(y_<))) \| y_> \oplus h_2(y_<) }_{\gsA} \ket{h_1}_{\gsH_{\color{gray} 2}} \ket{h_2}_{\gsH_{\mathsf{\color{gray} 2}}} \ket{f}_{\gsF}.
\end{align}
The action of $\lrfo^T$ is
\begin{align}
    &\lrfo^T \ket{y}_{\gsA} \ket{h_1}_{\gsH_{{ \mathsf{\color{gray} 1}}}} \ket{h_2}_{\gsH_{{\mathsf{\color{gray} 2}}}} \ket{f}_{\gsF} \\
    &= \omega_3^{f(x)} \ket{ (y_< \oplus h_1(y_> \oplus h_2(y_<))) \| y_> \oplus h_2(y_<) }_{\gsA} \ket{h_1}_{\gsH_{\color{gray} 2}} \ket{h_2}_{\gsH_{\mathsf{\color{gray} 2}}} \ket{f}_{\gsF}.
\end{align}
Because $\lrfo$ is constructed by purifying the randomness inf $h_1, h_2, f$, the output state of any oracle adversary that queries the purified oracle after tracing out $\sH_1, \sH_2, \sF$ is equivalent to the output state of the adversary that queries the standard oracle $\mathcal{O}^{\mathsf{R}, h_2} \cdot \mathcal{O}^{\mathsf{L}, h_1} \cdot \mathcal{O}^{f}$, for uniformly random $h_1, h_2 \sim \{0, 1\}^{n/2 \cdot \sqrt{N}}$ and $f \sim \{0,1,2\}^N$.

\begin{fact}[Equivalence of purified and standard oracles] \label{claim:purified-vs-standard-ternary-LRFO}
    For any oracle adversary, the following oracle instantiations are perfectly indistinguishable:
    \begin{itemize}
        \item (Queries to a random $\mathcal{O}^{\mathsf{R}, h_2} \cdot \mathcal{O}^{\mathsf{L}, h_1} \cdot \mathcal{O}^{f}$) Sample a uniformly random $h_1, h_2 \sim \{0, 1\}^{n/2 \cdot \sqrt{N}}, f \sim \{0, 1, 2\}^N$. On each query, apply $U = \mathcal{O}^{\mathsf{R}, h_2} \cdot \mathcal{O}^{\mathsf{L}, h_1} \cdot \mathcal{O}^{f}, U^\dagger, U^*$ or $U^T$ to register $\sA$.
        \item (Queries to $\lrfo$) Initialize registers $\sH_1,\sH_2, \sF$ to $\frac{1}{\sqrt{N}^{\sqrt{N}}} \sum_{h_1, h_2 : \{0, 1\}^{n/2} \rightarrow \{0, 1\}^{n/2} } \ket{h_1}_{\gsH_{\color{gray} 1}} \otimes \ket{h_2}_{\gsH_{\color{gray} 2}} \otimes \frac{1}{\sqrt{3^N}} \sum_{f \in \{0,1, 2\}^N} \ket*{f}_{\gsF}$. On each query, apply $\lrfo, \lrfo^\dagger, \lrfo^*$ or $\lrfo^T$ to registers $\sA, \sH_1,\sH_2, \sF$.
    \end{itemize}
\end{fact}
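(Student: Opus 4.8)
The plan is to prove this by the standard purification argument, exactly mirroring \cref{claim:equiv-purified-vs-standard} for the $\spfo$ oracle, with the permutation register replaced by the two Luby--Rackoff function registers $\sH_1, \sH_2$. The key structural observation is that $\lrfo$, $\lrfo^\dagger$, $\lrfo^*$, and $\lrfo^T$ never disturb the function registers $\sH_1, \sH_2, \sF$ in the computational basis; they only \emph{read} the stored functions and apply a function-controlled operation to the query register $\sA$. Since an oracle adversary's own operations $W_1,\dots,W_{t+1}$ act only on $\sA$ and its private ancilla $\sB$, the function registers remain a ``frozen'' classical control throughout the whole experiment, so that tracing them out at the end realizes exactly the uniform mixture over $(h_1,h_2,f)$.

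Concretely, I would first record the one-query action of each oracle variant conditioned on a basis state $\ket{h_1}_{\sH_1}\ket{h_2}_{\sH_2}\ket{f}_{\sF}$. From the definitions, on $\sA$ the oracle $\lrfo$ acts as the fixed $n$-qubit unitary $U_{h_1,h_2,f}\coloneqq \mathcal{O}^{\mathsf{R},h_2}\cdot\mathcal{O}^{\mathsf{L},h_1}\cdot\mathcal{O}^{f}$, because $\mathcal{O}^{f}$ is diagonal and $\mathcal{O}^{\mathsf{L},h_1}, \mathcal{O}^{\mathsf{R},h_2}$ are the permutation operators for $\mathsf{L}_{h_1},\mathsf{R}_{h_2}$. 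Likewise $\lrfo^\dagger, \lrfo^*, \lrfo^T$ act on $\sA$ as $U_{h_1,h_2,f}^\dagger, U_{h_1,h_2,f}^*, U_{h_1,h_2,f}^T$; the displayed formulas for these three are checked using that $\mathsf{L}_{h_1}$ and $\mathsf{R}_{h_2}$ are involutions (so $\mathsf{L}_{h_1}^{-1}=\mathsf{L}_{h_1}$ and $\mathsf{R}_{h_2}^{-1}=\mathsf{R}_{h_2}$, whence $U_{h_1,h_2,f}^{-1}$ applies $\mathsf{L}_{h_1}$ then $\mathsf{R}_{h_2}$) and that conjugation or transposition of $U_{h_1,h_2,f}$ flips the sign of the phase $f$ while leaving the underlying permutation (or its inverse) unchanged. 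Thus each of the four oracle variants equals $\sum_{h_1,h_2,f} U_{h_1,h_2,f}^{\,\bullet}\otimes \ketbra{h_1}_{\sH_1}\otimes\ketbra{h_2}_{\sH_2}\otimes\ketbra{f}_{\sF}$, where $U^{\,\bullet}$ denotes $U$, $U^\dagger$, $U^*$, or $U^T$ as appropriate.

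With this in hand I would prove by induction on the number of queries $i$ that the global state of the purified experiment is
\begin{equation}
    \ket{\Psi_i} = \frac{1}{\sqrt{\mathcal{N}}} \sum_{h_1,h_2,f} \ket{\psi^{(i)}_{h_1,h_2,f}}_{\sA\sB} \otimes \ket{h_1}_{\sH_1}\ket{h_2}_{\sH_2}\ket{f}_{\sF},
\end{equation}
where $\mathcal{N}$ is the number of function triples and $\ket{\psi^{(i)}_{h_1,h_2,f}}$ is precisely the adversary's state after $i$ queries to the fixed classical oracle $U_{h_1,h_2,f}$ with the same query pattern. The base case is the product initial state; the inductive step applies the function-independent $W_i$ and then one oracle variant, and the frozen-control structure above updates each term $\ket{\psi^{(i-1)}_{h_1,h_2,f}} \mapsto \ket{\psi^{(i)}_{h_1,h_2,f}}$. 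Since the states $\ket{h_1}\ket{h_2}\ket{f}$ are mutually orthogonal across distinct triples, $\Tr_{\sH_1\sH_2\sF}\ketbra{\Psi_t} = \frac{1}{\mathcal{N}}\sum_{h_1,h_2,f}\ketbra{\psi^{(t)}_{h_1,h_2,f}}_{\sA\sB} = \E_{h_1,h_2,f}\big[\ketbra{\psi^{(t)}_{h_1,h_2,f}}_{\sA\sB}\big]$ over uniformly random $h_1, h_2, f$, which is exactly the reduced state of the standard-oracle experiment; hence the two instantiations are perfectly indistinguishable. There is no real obstacle here: this is a routine purification lemma, and the only point needing care is the bookkeeping in the first step, namely confirming that $\lrfo^\dagger, \lrfo^*, \lrfo^T$ implement $U^\dagger, U^*, U^T$ of the \emph{correctly ordered} product $U = \mathcal{O}^{\mathsf{R},h_2}\mathcal{O}^{\mathsf{L},h_1}\mathcal{O}^f$, which is where the involution property of the Luby--Rackoff maps enters.
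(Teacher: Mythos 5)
Your proposal is correct and is essentially the argument the paper intends: the paper treats this Fact as immediate from the purification structure (each of $\lrfo,\lrfo^\dagger,\lrfo^*,\lrfo^T$ is a computational-basis-controlled unitary on $\sA$ with control $\ket{h_1,h_2,f}$, so conditioning and tracing out the orthogonal function registers yields exactly the uniform mixture over queries to $U_{h_1,h_2,f}^{\,\bullet}$), which is precisely your induction. The only nit is your parenthetical that $U_{h_1,h_2,f}^{-1}$ "applies $\mathsf{L}_{h_1}$ then $\mathsf{R}_{h_2}$" — the temporal order is $\mathsf{R}_{h_2}$ first, then $\mathsf{L}_{h_1}$, then the inverse phase, as in the paper's formula for $\lrfo^\dagger$ — but this bookkeeping slip does not affect the argument, since the controlled-unitary structure already guarantees the conditional action is $U^\dagger$, $U^*$, or $U^T$ without computing it explicitly.
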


Next, we define the relation states for the LRF oracle. 

\begin{definition}[$\mathsf{lrf}$-relation state]
    For the relations
    $L = \{(x_1, y_1), \dots, (x_\ell, y_\ell)\} \in \mathcal{L}_\ell$ and $R = \{(x'_1,y'_1),\dots,(x'_r,y'_r)\} \in \mathcal{R}_r$,
    where $\ell$ and $r$ are non-negative integers such  that $\ell + r \leq 2^{n/2}$, let
    \begin{align}
        \ket{\mathrm{lrf}_{L,R}}_{\gsH_{\mathsf{\color{gray} 1}} \gsH_{\mathsf{\color{gray} 2}} \gsF} &\coloneqq \frac{1}{\sqrt{\sqrt{N}^{\sqrt{N} - \ell - r}}} \sum_{h_1: \{0,1\}^{n/2} \rightarrow \{0,1\}^{n/2}} \delta_{h_1,L \cup R} \ket{h_1}_{\gsH_{\color{gray} 1}} \\
        &\otimes \frac{1}{\sqrt{\sqrt{N}^{\sqrt{N} - \ell - r}}} \sum_{h_2: \{0,1\}^{n/2} \rightarrow \{0,1\}^{n/2}} \delta'_{h_2, L \cup R} \ket{h_2}_{\gsH_{\mathsf{\color{gray} 2}}} \\
        &\otimes \frac{1}{\sqrt{3^{N}}} \sum_{f: \{0,1\}^n \rightarrow \{0,1,2\}} \omega_3^{\sum_{(x,y) \in L} f(x) - \sum_{(x',y') \in R} f(x')} \ket{f}_{\gsF}.
    \end{align}
    Here, $\delta_{h_1, L \cup R}$ is an indicator variable that equals $1$ if $y_< = (x_< \oplus h_1(x_>))$ for all $(x,y) \in L \cup R$, and $0$ otherwise; $\delta'_{h_2, L \cup R}$ equals $1$ if $(y_> \oplus h_2(y_<)) = x_>$ for all $(x,y) \in L \cup R$, and $0$ otherwise.
\end{definition}

\subsection{Projecting onto the local distinct subspace}

Our analysis of the LRFC ensemble centers upon a projection onto the \emph{local distinct subspace} on the registers $\mathsf{L}$ and $\mathsf{R}$.
In this section, we first define this subspace and then show that we can project onto this subspace (and close variants of it) throughout any quantum experiment that queries a Haar-random unitary or the LRFC ensemble.
These latter steps form the core technical results behind our proof that the LRFC ensemble is indistinguishable from a Haar-random unitary.

We define the local distinct subspace as follows.
\begin{definition}
    Let $\calR^{2,\lcdist}$ be the set of all ordered pairs of relations $(L,R) \in \calR^2$ where $L \cup R = \{(x_1,y_1),\dots,(x_t,y_t)\}$ satisfies $x_{1, >}, \ldots, x_{t, >}$ are all distinct and $y_{1, <}, \ldots, y_{t, <}$ are all distinct.
\end{definition}

The states $\ket{\mathrm{lrf}_{L,R}}$ possess several nice properties when $L$ and $R$ are locally distinct.
For example, by expanding the definition of $\ket{\mathrm{lrf}_{L,R}}$, we obtain the following facts.
\begin{fact}[Orthonormality]
\label{fact:phi_lrf-orthogonal}
    $\{\ket{\mathsf{lrf}_{L,R}}\}_{(L,R) \in \calR^{2,\lcdist}}$ forms an orthonormal set of vectors.
\end{fact}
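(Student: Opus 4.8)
\textbf{Proof plan for Fact~\ref{fact:phi_lrf-orthogonal} (Orthonormality of $\{\ket{\mathsf{lrf}_{L,R}}\}_{(L,R) \in \calR^{2,\lcdist}}$).}

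The plan is to compute the inner product $\braket{\mathsf{lrf}_{L,R}}{\mathsf{lrf}_{L',R'}}$ directly from the definition and show it equals $\delta_{(L,R),(L',R')}$. Since $\ket{\mathsf{lrf}_{L,R}}$ is a tensor product over the three registers $\sH_1$, $\sH_2$, $\sF$, the inner product factorizes as a product of three inner products, and I would bound/evaluate each factor separately, using the local-distinctness hypothesis to guarantee that the constraint sets $\delta_{h_1,L\cup R}$ and $\delta'_{h_2,L\cup R}$ are \emph{consistent} (i.e.~define a nonempty affine subspace of functions) and thus normalized correctly.

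First I would handle the $\sH_1$ factor. The state on $\sH_1$ is a uniform superposition over all $h_1$ satisfying $y_< = x_< \oplus h_1(x_>)$ for every $(x,y) \in L \cup R$; equivalently $h_1(x_>) = x_< \oplus y_<$. Because $(L,R) \in \calR^{2,\lcdist}$ the values $x_{1,>}, \dots, x_{t,>}$ are all distinct, so these are $t$ constraints on $h_1$ at $t$ distinct input points, hence exactly $\sqrt{N}^{\sqrt{N}-t}$ functions satisfy them (one free choice of $\sqrt{N}$ values at each of the remaining $\sqrt{N}-t$ inputs), matching the normalization $1/\sqrt{\sqrt N^{\sqrt N - t}}$ with $t = \ell+r = |L\cup R|$. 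The overlap of two such states is nonzero iff the two constraint sets are identical as partial functions, i.e.~iff $\{(x_>, x_<\oplus y_<) : (x,y)\in L\cup R\} = \{(x_>,x_<\oplus y_<):(x,y)\in L'\cup R'\}$, in which case the overlap is $1$. Symmetrically, the $\sH_2$ factor enforces $h_2(y_<) = x_> \oplus y_>$, and local distinctness of $y_{1,<},\dots,y_{t,<}$ makes these $t$ constraints at distinct points, so the same counting and the same "overlap is $1$ iff constraint sets coincide" conclusion holds. The $\sF$ factor is a phased uniform superposition over all $3^N$ ternary functions; its self-overlap is $1$ by orthonormality of the $\ket f$ basis, and the cross-overlap between $(L,R)$ and $(L',R')$ equals $3^{-N}\sum_f \omega_3^{g_{L,R}(f) - g_{L',R'}(f)}$ where $g_{L,R}(f) = \sum_{(x,y)\in L}f(x) - \sum_{(x',y')\in R}f(x')$; this is $1$ if the two linear functionals on $f$ agree identically and $0$ otherwise (a standard character-sum computation over $\Z_3^N$).

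Putting the three factors together: $\braket{\mathsf{lrf}_{L,R}}{\mathsf{lrf}_{L',R'}}$ is $1$ iff (a) $L\cup R$ and $L'\cup R'$ induce the same partial function via $x_> \mapsto x_<\oplus y_<$, (b) they induce the same partial function via $y_<\mapsto x_>\oplus y_>$, and (c) the $\sF$-phase functionals coincide, and it is $0$ otherwise; in particular it is at most $1$ in absolute value always, so normalization is immediate once consistency of the constraints is checked. The remaining step — the only one with any real content — is to argue that (a), (b), (c) simultaneously force $(L,R) = (L',R')$ as ordered pairs of relations. I expect this to be the main obstacle: one must show that knowing, for each pair $(x,y)$ in the union, the quantities $x_>$, $x_<\oplus y_<$, $y_<$, $x_>\oplus y_>$ (and the signed multiset membership in $L$ vs.~$R$ recorded by the phase functional) is enough to reconstruct each $(x,y)$ and its side. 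From $y_<$ and $x_<\oplus y_<$ we recover $x_<$, hence $x = x_< \| x_>$; from $x_>$ and $x_>\oplus y_>$ we recover $y_>$, hence $y = y_< \| y_>$; and the $\sF$ functional $g_{L,R}$, viewed coefficient-wise in $\Z_3^N$, records $(\#\text{occurrences in }L) - (\#\text{occurrences in }R)$ at each point $x$, while the $\sH$ constraints already pin down the total multiset $L\cup R$ — together these determine $L$ and $R$ separately. Care is needed here with multiplicities (a pair $(x,y)$ could occur several times, split between $L$ and $R$), but since $\calR^{2,\lcdist}$ forces the $x_>$ and $y_<$ components to be distinct across the union, each pair in $L\cup R$ in fact occurs with multiplicity one, which collapses the multiplicity bookkeeping and makes the reconstruction clean. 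I would write out this reconstruction argument carefully and note at the end that the same distinctness hypothesis is exactly what guarantees normalization, completing the proof that the family is orthonormal.
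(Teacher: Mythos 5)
Your overall route---expand the definition, factorize the inner product over the registers $\sH_1,\sH_2,\sF$, evaluate the $\sF$ factor by a character sum over $\mathbb{Z}_3^N$, count constrained functions for the $\sH$ factors, and finish with an injectivity/reconstruction argument using local distinctness---is exactly the straightforward verification the paper intends (it states the fact as following ``by expanding the definition'' and gives no further proof). Your normalization argument is correct as written: local distinctness of the $x_{i,>}$ (resp.\ $y_{i,<}$) makes the $t=\ell+r$ constraints on $h_1$ (resp.\ $h_2$) sit at distinct points, so exactly $\sqrt N^{\sqrt N - t}$ functions satisfy them, matching the prefactor.

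There is, however, one intermediate claim that is false as stated and would leave a gap if taken literally: the $\sH_1$ (and $\sH_2$) cross-overlap is \emph{not} zero whenever the two constraint sets differ---it vanishes only when they are \emph{inconsistent} as partial functions. If they are compatible but different (e.g.\ $L\cup R \subsetneq L'\cup R'$, so one constraint set extends the other), the $\sH_1$ overlap equals $\sqrt N^{\,(t+t')/2-u}>0$, where $t,t'$ are the two domain sizes and $u$ the size of the merged domain. In such cases orthogonality is supplied entirely by the $\sF$ factor: since local distinctness forces every $x$ to occur at most once in $L\cup R$, the functional $g_{L,R}$ is the signed indicator taking value $+1$ on $\Dom(L)$, $-1$ on $\Dom(R)$ and $0$ elsewhere (distinct residues mod $3$), and the character sum $3^{-N}\sum_f \omega_3^{g_{L,R}(f)-g_{L',R'}(f)}$ vanishes whenever these indicators differ anywhere; this is also what separates $(\{(x,y)\},\emptyset)$ from $(\emptyset,\{(x,y)\})$, where both $\sH$ factors equal $1$. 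The repair is small and uses pieces you already have: a nonzero inner product forces (i) identical $\sF$ functionals, hence $\Dom(L\cup R)=\Dom(L'\cup R')$ with matching $L$/$R$ side labels, and (ii) mere \emph{consistency} of the $h_1,h_2$ constraints; with equal domains, consistency upgrades to equality, and then your reconstruction (knowing the full $x$'s from the $\sF$ data---which also fixes the pairing between an $h_1$-constraint at $x_>$ and an $h_2$-constraint at $y_<$ that your sketch implicitly assumes---recover $y_<=x_<\oplus d_1(x_>)$ and then $y_>=x_>\oplus d_2(y_<)$) yields $(L,R)=(L',R')$.
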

\begin{fact}[Action of $\mathsf{lrfO}$]
\label{fact:ternary-lrfo-action}
    For any $(L,R) \in \calR^{2,\lcdist}$ and $x \in [N]$ such that $x_> \not\in \Dom_>(L \cup R)$,
    \begin{align}
        \mathsf{lrfO} \ket*{x}_{\gsA} \ket*{\mathsf{lrf}_{L,R}}_{\gsH_{\mathsf{\color{gray} 1}} \gsH_{\mathsf{\color{gray} 2}} \gsF} &= \frac{1}{\sqrt{N}} \sum_{y \in [N]} \ket*{y}_{\gsA} \ket*{\mathsf{lrf}_{L \cup \{(x, y)\}, R}}_{\gsH_{\mathsf{\color{gray} 1}} \gsH_{\mathsf{\color{gray} 2}} \gsF}, \label{eq:tlrfo-map}\\
        \mathsf{lrfO}^* \ket*{x}_{\gsA} \ket*{\mathsf{lrf}_{L,R}}_{\gsH_{\mathsf{\color{gray} 1}} \gsH_{\mathsf{\color{gray} 2}} \gsF} &= \frac{1}{\sqrt{N}} \sum_{y \in [N]} \ket*{y}_{\gsA} \ket*{\mathsf{lrf}_{L, R \cup \{(x, y)\}}}_{\gsH_{\mathsf{\color{gray} 1}} \gsH_{\mathsf{\color{gray} 2}} \gsF}. \label{eq:tlrfo-conj-map}
    \end{align}
    Similarly, for any $(L,R) \in \calR^{2,\lcdist}$ and $y \in [N]$ such that $y_< \not\in \Im_<(L \cup R)$,
    \begin{align}
        \mathsf{lrfO}^\dagger \ket*{y}_{\gsA} \ket*{\mathsf{lrf}_{L,R}}_{\gsH_{\mathsf{\color{gray} 1}} \gsH_{\mathsf{\color{gray} 2}} \gsF} &= \frac{1}{\sqrt{N}} \sum_{x \in [N]} \ket*{x}_{\gsA} \ket*{\mathsf{lrf}_{L, R \cup \{(x, y)\}}}_{\gsH_{\mathsf{\color{gray} 1}} \gsH_{\mathsf{\color{gray} 2}} \gsF}, \label{eq:tlrfo-inverse-map}\\
        \mathsf{lrfO}^T \ket*{y}_{\gsA} \ket*{\mathsf{lrf}_{L,R}}_{\gsH_{\mathsf{\color{gray} 1}} \gsH_{\mathsf{\color{gray} 2}} \gsF} &= \frac{1}{\sqrt{N}} \sum_{x \in [N]} \ket*{x}_{\gsA} \ket*{\mathsf{lrf}_{L \cup \{(x, y)\}, R}}_{\gsH_{\mathsf{\color{gray} 1}} \gsH_{\mathsf{\color{gray} 2}} \gsF}. \label{eq:tlrfo-T-map}
    \end{align}
\end{fact}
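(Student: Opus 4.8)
The plan is to prove all four identities by directly expanding the definition of the $\mathsf{lrf}$-relation state and propagating it through the (unitary) action of $\mathsf{lrfO}$ term by term, tracking the three ancilla registers $\mathsf{H}_1$, $\mathsf{H}_2$, $\mathsf{F}$ separately. First I would reduce the four displayed identities to the single statement about $\mathsf{lrfO}$, using two elementary symmetries. (i) $\mathsf{lrfO}^\dagger$ and $\mathsf{lrfO}^T$ act on register $\gsA$ by the \emph{inverse} Feistel permutation $y \mapsto \big(y_< \oplus h_1(y_> \oplus h_2(y_<))\big) \| \big(y_> \oplus h_2(y_<)\big)$ instead of the forward one; the entire argument below then runs verbatim with the roles of the input halves $(x_<,x_>)$ and output halves $(y_<,y_>)$ exchanged, and with the hypothesis $x_> \notin \Dom_>(L\cup R)$ replaced by $y_< \notin \Im_<(L\cup R)$. (ii) $\mathsf{lrfO}$ and $\mathsf{lrfO}^*$ (likewise $\mathsf{lrfO}^T$ and $\mathsf{lrfO}^\dagger$) differ only in the sign of the scalar $\omega_3^{\pm f(x)}$; since adjoining a pair $(x,y)$ to $L$ contributes $+f(x)$ and adjoining it to $R$ contributes $-f(x)$ to the $\mathsf{F}$-phase in the definition of $\ket{\mathsf{lrf}_{L,R}}$, and the $\mathsf{H}_1$- and $\mathsf{H}_2$-components depend only on the \emph{union} $L\cup R$ (not on how it is split), toggling this sign exactly switches which side the new pair is recorded on.

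For $\mathsf{lrfO}$ itself: expand $\ket{x}_{\gsA}\ket{\mathsf{lrf}_{L,R}}$ as a sum over $h_1,h_2,f$ and apply $\mathsf{lrfO}$ to each term, sending $\ket{x}$ to $\omega_3^{f(x)}\ket{y_< \| y_>}$ with $y_< := x_< \oplus h_1(x_>)$, $y_> := x_> \oplus h_2(y_<)$. Three observations carry the computation. \emph{$\mathsf{F}$-register:} $x_> \notin \Dom_>(L\cup R)$ implies $x \notin \Dom(L\cup R)$, so the extra factor $\omega_3^{f(x)}$ turns the phase $\omega_3^{\sum_{L}f-\sum_{R}f}$ of $\ket{\mathsf{lrf}_{L,R}}$ into that of $\ket{\mathsf{lrf}_{L\cup\{(x,y)\},R}}$, uniformly in $y$. \emph{$\mathsf{H}_1$-register:} regroup $\sum_{h_1:\,\delta_{h_1,L\cup R}=1}\ket{h_1}$ by the value $v := h_1(x_>)$; because $x_> \notin \Dom_>(L\cup R)$ the constraint $\delta_{h_1,L\cup R}=1$ says nothing about $h_1(x_>)$, so $v$ runs uniformly over $\{0,1\}^{n/2}$ and $\{h_1: \delta_{h_1,L\cup R}=1,\ h_1(x_>)=v\}$ is exactly the support of the $\mathsf{H}_1$-factor of $\ket{\mathsf{lrf}_{L\cup\{(x,y)\},R}}$ for the pair $y$ with $y_< = x_< \oplus v$; this contributes a scalar $1/\sqrt{\sqrt N}$ and the correctly-normalized new $\mathsf{H}_1$-component. \emph{$\mathsf{H}_2$-register:} with $y_<$ fixed, $y_> = x_> \oplus h_2(y_<)$; when $y_< \notin \Im_<(L\cup R)$ the value $h_2(y_<)$ is unconstrained, so the same regrouping by $w := h_2(y_<)$ gives another scalar $1/\sqrt{\sqrt N}$ and the new $\mathsf{H}_2$-component, with $y = y_< \| (x_> \oplus w)$ sweeping all of $y_>$. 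Multiplying the two $1/\sqrt{\sqrt N}$ factors produces the advertised $1/\sqrt N$.

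The one delicate point — and the step I expect to need the most care — is the remaining case $y_< \in \Im_<(L\cup R)$ in the $\mathsf{H}_2$ step. There, local distinctness of $(L,R)$ pins $h_2(y_<)$ to the single value $c$ determined by the unique pair of $L\cup R$ whose output half is $y_<$, so $y_> = x_> \oplus c$ is forced and no $\mathsf{H}_2$-splitting occurs; the term inherits only the $1/\sqrt{\sqrt N}$ from the $\mathsf{H}_1$ step. On the right-hand side, for every $y$ with this $y_<$ except $y_> = x_> \oplus c$ the pair $(x,y)$ is \emph{inconsistent} with $\delta'$, so $\ket{\mathsf{lrf}_{L\cup\{(x,y)\},R}} = 0$; for $y_> = x_> \oplus c$ the pair is consistent but $L\cup\{(x,y)\}\cup R$ fails local distinctness, so the normalization prefactor written in the definition of $\ket{\mathsf{lrf}_{L\cup\{(x,y)\},R}}$ (in its $\mathsf{H}_2$ factor) uses an exponent one too small relative to the true fiber size, making that state have norm $N^{1/4}$ — precisely $N^{1/4}$ times the unit vector already produced by the $\mathsf{H}_1$, $\mathsf{H}_2$, $\mathsf{F}$ analysis. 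Since $\sqrt{\sqrt N} = N^{1/4}$, the left-hand coefficient $1/\sqrt{\sqrt N}$ matches $\tfrac{1}{\sqrt N}\cdot N^{1/4}$ on the right, so it is legitimate to extend the sum to all $y \in [N]$ (the extra terms being zero). Collecting the fresh and colliding contributions yields $\mathsf{lrfO}\ket{x}_{\gsA}\ket{\mathsf{lrf}_{L,R}} = \tfrac{1}{\sqrt N}\sum_{y\in[N]}\ket{y}_{\gsA}\ket{\mathsf{lrf}_{L\cup\{(x,y)\},R}}$, and the two symmetries from the first paragraph finish the other three equations. As a consistency check, summing norms gives $\sum_{y\in[N]}\lVert \mathsf{lrf}_{L\cup\{(x,y)\},R}\rVert^2 = (\sqrt N - |L\cup R|)\sqrt N + |L\cup R|\sqrt N = N$, as it must, since both sides are unit vectors.
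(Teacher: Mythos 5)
Your proof is correct and follows essentially the same route the paper intends: the paper states this as a Fact justified only ``by expanding the definition of $\ket*{\mathsf{lrf}_{L,R}}$,'' and your argument is precisely that expansion, carried out carefully --- including the key bookkeeping that regrouping over $h_1(x_>)$ and $h_2(y_<)$ each contributes a factor $N^{-1/4}$, and that for colliding $y_<\in\Im_<(L\cup R)$ the would-be state $\ket*{\mathsf{lrf}_{L\cup\{(x,y)\},R}}$ either vanishes or has norm $N^{1/4}$, which exactly compensates. The reductions of the $\dagger$, $*$, $T$ cases via the input/output-half exchange and the sign of the $\omega_3^{\pm f(x)}$ phase (which decides whether the new pair is recorded in $L$ or $R$) are also sound.
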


We can also define a partial isometry between the states $\ket{\mathrm{lrf}_{L,R}}$, and the states $\ket{L} \otimes \ket{R}$, when $L$ and $R$ are locally distinct.
\begin{definition}
    Define the partial isometry $\Compress_{\mathsf{LRF}}: \mathcal{H}_{\mathsf{H}_{\mathsf{1}}} \otimes \mathcal{H}_{\mathsf{H}_{\mathsf{2}}} \otimes \mathcal{H}_{\mathsf{F}} \rightarrow \calH_{\sL} \otimes \calH_{\sR}$ to be
    \begin{align}
        \Compress_{\mathsf{LRF}} \coloneqq \sum_{(L,R) \in \calR^{2,\lcdist}} \ket*{L}_{\gsL} \otimes \ket*{R}_{\gsR} \cdot \bra{\mathsf{lrf}_{L, R}}_{\gsH_{\mathsf{\color{gray} 1}} \gsH_{\mathsf{\color{gray} 2}} \gsF}.
    \end{align}
\end{definition}

\noindent Note that $\Compress$ is a partial isometry by~\cref{fact:phi_lrf-orthogonal}.

For our later analysis, it will be convenient to define projectors that keep one in the locally distinct subspace.
We do so as follows.
First, we recall the definition of the projector onto  \emph{bijective} relation states,
\begin{equation}
    \Pi^{\mathsf{bij}}_{\gsL \gsR} 
    \ket{ L }_{\gsL}
   \ket{ R }_{\gsR} 
   = 
   \begin{cases}
       \ket{ L }_{\gsL} \ket{ R }_{\gsR}, & \text{if } \mathsf{Dom}(L \cup R) \in \mathsf{ distinct }, \\
       & \text{and } \mathsf{Im}(L \cup R) \in \mathsf{ distinct } \\
       0, & \text{else}. \\
   \end{cases}
\end{equation}
%
In a similar fashion, we define the not-in-domain projector on $\mathsf{ALR}$ as,
\begin{equation}
    \Pi^{\notin \mathsf{Dom}}_{\gsA \gsL \gsR} 
    \ket{ x }_{\gsA}
    \ket{ L }_{\gsL}
   \ket{ R }_{\gsR} 
   = 
   \begin{cases}
       \ket{ x }_{\gsA}\ket{ L }_{\gsL} \ket{ R }_{\gsR}, & \text{if } x \notin \mathsf{Dom}(L \cup R), \\
       0, & \text{else}, \\
   \end{cases}
\end{equation}
and the not-in-image projector as,
\begin{equation}
    \Pi^{\notin \mathsf{Im}}_{\gsA \gsL \gsR} 
    \ket{ y }_{\gsA}
    \ket{ L }_{\gsL}
   \ket{ R }_{\gsR} 
   = 
   \begin{cases}
       \ket{ y }_{\gsA}\ket{ L }_{\gsL} \ket{ R }_{\gsR}, & \text{if } y \notin \mathsf{Im}(L \cup R), \\
       0, & \text{else}. \\
   \end{cases}
\end{equation}
Turning to the local distinct projectors, we define the \emph{locally bijective} relation states via the projector,
\begin{equation}
    \Pi^{\mathsf{locbij}}_{\gsL \gsR} 
    \ket{ L }_{\gsL}
   \ket{ R }_{\gsR} 
   = 
   \begin{cases}
       \ket{ L }_{\gsL} \ket{ R }_{\gsR}, & \text{if } \mathsf{Dom}_>(L \cup R) \in \mathsf{ distinct }, \\
       & \text{and } \mathsf{Im}_<(L \cup R) \in \mathsf{ distinct } \\
       0, & \text{else}, \\
   \end{cases}
\end{equation}
and the not-in-local-domain and not-in-local-image projectors as,
\begin{align}
    \Pi^{\notin \mathsf{locDom}}_{\gsA \gsL \gsR} 
    \ket{ x }_{\gsA}
    \ket{ L }_{\gsL}
   \ket{ R }_{\gsR} 
   & = 
   \begin{cases}
       \ket{ x }_{\gsA}\ket{ L }_{\gsL} \ket{ R }_{\gsR}, & \text{if } x_> \notin \mathsf{Dom}_>(L \cup R), \\
       0, & \text{else}, \\
   \end{cases} \\
    \Pi^{\notin \mathsf{locIm}}_{\gsA \gsL \gsR} 
    \ket{ y }_{\gsA}
    \ket{ L }_{\gsL}
   \ket{ R }_{\gsR} 
   & = 
   \begin{cases}
       \ket{ y }_{\gsA}\ket{ L }_{\gsL} \ket{ R }_{\gsR}, & \text{if } y_< \notin \mathsf{Im}_<(L \cup R), \\
       0, & \text{else}. \\
   \end{cases}
\end{align}
We also let,
\begin{align}
    \Pi^{\notin \mathsf{Dom} \rightarrow \notin \mathsf{locDom}}_{\gsA \gsL \gsR} & = 
    1 
    - \Pi^{\notin \mathsf{Dom}}_{\gsA \gsL \gsR}
    + \Pi^{\notin \mathsf{locDom}}_{\gsA \gsL \gsR}  \\
     \Pi^{\notin \mathsf{Im} \rightarrow \notin \mathsf{locIm}}_{\gsA \gsL \gsR} & = 
    1 
    - \Pi^{\notin \mathsf{Im}}_{\gsA \gsL \gsR}
    + \Pi^{\notin \mathsf{locIm}}_{\gsA \gsL \gsR},
\end{align}
denote projectors which do nothing if $x \in \mathsf{Dom}(L\cup R)$, and project to the not-in-local-domain subspace if $x \notin \mathsf{Dom}(L\cup R)$ (and similar for $\mathsf{Im}$).

\subsubsection{Twirled $W$ is indistinguishable from twirled projected $W$}

In this and the following two sections, we use the local distinct subspace projectors can be inserted, up to small error, throughout any quantum experiment that queries the LRFC ensemble or a Haar-random unitary.
We begin in this section by analyzing the insertion of the local distinct subspace projectors for experiments that involve the path-recording oracles $W$ and $\overline{W}$.

We define projected versions of the path-recording oracles $W$ and $\overline{W}$ as follows,
\begin{align}
    W'
    & \equiv 
    \Pi^{\mathsf{locbij}}
    \cdot W \cdot
    \Pi^{\notin \mathsf{Dom} \rightarrow \notin \mathsf{locDom}} \cdot \Pi^{\mathsf{locbij}} \\
    (W^\dagger)'
    & \equiv 
    \Pi^{\mathsf{locbij}}
    \cdot W^\dagger \cdot
    \Pi^{\notin \mathsf{Im} \rightarrow \notin \mathsf{locIm}} \cdot \Pi^{\mathsf{locbij}} \\
    \overline{W}'
    & \equiv 
    \Pi^{\mathsf{locbij}}
    \cdot \overline{W} \cdot
    \Pi^{\notin \mathsf{Dom} \rightarrow \notin \mathsf{locDom}} \cdot \Pi^{\mathsf{locbij}} \\
    (\overline{W}^\dagger)'
    & \equiv 
    \Pi^{\mathsf{locbij}}
    \cdot \overline{W}^\dagger \cdot
    \Pi^{\notin \mathsf{Im} \rightarrow \notin \mathsf{locIm}} \cdot \Pi^{\mathsf{locbij}}.
\end{align}
In the remainder of this section, we show that the projected oracles are indistinguishable from the original oracles whenever the oracles are surrounded by random unitary 2-designs.
\begin{lemma}
[Twirled $W$ is indistinguishable from twirled  $W'$] \label{lem:closeness-AWD-and-AWpD}
Let $\mathfrak{D}$ be any strong approximate unitary 2-design with additive error $\varepsilon$. For any $t$-query oracle adversary $\mathcal{A}$, we have
\begin{equation} \label{eq: TDt 1}
    \mathsf{TD}\left( 
      \dyad*{ \mathcal{A}^{W, \mathfrak{D}}_t }_{\gsA \gsL \gsR \gsC \gsD}
    , 
    \dyad*{ \mathcal{A}^{W', \mathfrak{D}}_t }_{\gsA \gsL \gsR \gsC \gsD}
    \right)
    \leq 
    \frac{\sqrt{70} t(t-1)}{N^{1/8}} + \frac{2 t(t+1)}{N^{1/4}} + 4 t^{5/4} \varepsilon^{1/4}.
\end{equation}
\end{lemma}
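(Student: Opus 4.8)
The plan is to bound the trace distance in Eq.~\eqref{eq: TDt 1} by a hybrid argument over the $t$ queries, replacing $W$ (resp.\ $\overline W$, $W^\dagger$, $\overline W^\dagger$) with its projected counterpart $W'$ (resp.\ $\overline W'$, etc.) one query at a time. At each step, the key quantity to control is the norm of the component of the intermediate state that is \emph{killed} by the newly inserted projectors $\Pi^{\mathsf{locbij}}$ and $\Pi^{\notin\mathsf{Dom}\to\notin\mathsf{locDom}}$ (or their image-side analogues). By the sequential gentle-measurement lemma (Lemma~\ref{lem:seq-gentleM-pure}) together with the state-to-operator bound~\eqref{eq: state to op bound}, it suffices to show that for each $i\le t$, the probability that the projectors act nontrivially on the state $\ket*{\mathcal{A}^{W,\mathfrak D}_i}$ after the $i$-th query is at most $O(t^2/\sqrt N + t\varepsilon)$, so that summing over the $t$ hybrids and taking square roots yields the stated bound with the $t(t-1)/N^{1/8}$ and $t^{5/4}\varepsilon^{1/4}$ scalings.

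The main work is therefore to estimate, for a state that lies (approximately) in the bijective-relation subspace $\Pi^{\bij}_{\le t}$ with relations of size $<t$, the probability that applying $W$ and then the projectors lands outside the locally-distinct subspace. First I would use the twirling invariance: since the oracle is surrounded by $Q[C,D]$ and $C,D$ are drawn from a strong approximate $2$-design, I can pass to the path-recording oracle $V$ picture via Lemma~\ref{lem:closeness-AWD-and-PhiVt} (which accounts for the $9t/N^{1/8}+2t^{1/4}\varepsilon^{1/4}$-type error, appearing squared-off in the final bound), and then invoke the two-sided unitary invariance (Lemma~\ref{claim:two-sided-invariance}) so that the input register $\gsA$ effectively carries a Haar-random (or $2$-design-random) label before each query. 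Then the failure event decomposes into two pieces: (i) when $W^L$ or $\overline W^L$ acts, a \emph{fresh} pair $(x,y)$ is appended with a uniformly random $y\in[N]\setminus\Im(L\cup R)$; the chance that $y_<$ collides with one of the $\le t$ already-recorded left-halves is $\le t/\sqrt N$ (using that there are $\sqrt N$ possible values of $y_<$ and at most $t$ forbidden ones, while $y$ ranges over $N-|L\cup R|\ge N-t$ values, i.e.\ a fraction $\le t\sqrt N/(N-t) = O(t/\sqrt N)$); (ii) symmetrically, when $W^{R,\dagger}$ or $\overline W^{R,\dagger}$ acts the new $x_>$ collides with a recorded right-half with probability $O(t/\sqrt N)$; (iii) the ``$\Pi^{\notin\mathsf{Dom}\to\notin\mathsf{locDom}}$'' projector fails only when the \emph{query input} $x$ already satisfies $x\in\mathsf{Dom}(L\cup R)$ but $x_>\in\mathsf{Dom}_>(L\cup R)$ — but in the $V$/$W$ picture with a fresh random input register this again costs $O(t/\sqrt N)$. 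Summing these over a single query gives $O(t/\sqrt N)$, and over all $i\le t$ queries gives $O(t^2/\sqrt N)$, whose square root is $O(t/N^{1/4})$ — accounting for the $2t(t+1)/N^{1/4}$ term — while the $2$-design approximation error contributes the $\varepsilon$-dependent terms through Lemma~\ref{lem:twirling-strongPRU}.

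Concretely, the key steps in order are: (1) set up the telescoping hybrid between $\ket*{\mathcal{A}^{W,\mathfrak D}_t}$ and $\ket*{\mathcal{A}^{W',\mathfrak D}_t}$, writing the difference as a sum of $t$ terms where query $i$ uses the projected oracle and queries $i+1,\dots,t$ use the unprojected one; (2) apply Lemma~\ref{lem:seq-gentleM-pure} to reduce to bounding $\sum_i \sqrt{1-\|\text{(projectors)}\,\ket*{\mathcal{A}^{W,\mathfrak D}_i}\|^2}$; (3) for each $i$, use Lemma~\ref{lem:closeness-AWD-and-PhiVt} and Lemma~\ref{claim:two-sided-invariance} to reduce the amplitude estimate to a computation with a Haar-random input on register $\gsA$ and a size-$\le i$ relation on $\gsL\gsR$; (4) bound the four failure probabilities (fresh-$y_<$ collision, fresh-$x_>$ collision, query-$x$ local-domain failure, and the analogous image-side events) by $O(t/\sqrt N)+O(\varepsilon)$ each, using Lemma~\ref{lem:twirling-strongPRU} to absorb the $2$-design error; (5) sum over $i$, take square roots, and combine with the $V$-picture conversion errors via the triangle inequality and $\sqrt{x+y}\le\sqrt x+\sqrt y$ to obtain Eq.~\eqref{eq: TDt 1}. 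The main obstacle I anticipate is step (3)--(4): carefully tracking how the two-sided invariance lets one treat the recorded relation halves as ``uniformly random targets'' so that the union bound over $\le t$ recorded values is legitimate, and in particular handling the $\Pi^{\notin\mathsf{Dom}\to\notin\mathsf{locDom}}$ projector, whose failure mode (input already in the domain globally but now also locally) is not a simple fresh-randomness collision and requires unpacking the definition of $W^L+W^{R,\dagger}$ and Lemma~\ref{claim:W-partial-isometry} to see which branch of the oracle actually contributes.
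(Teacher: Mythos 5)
Your proposal follows essentially the same route as the paper: an induction/hybrid over the $t$ queries, a per-query bound on the amplitude killed by the two inserted projectors, conversion of the twirled-$W$ state to the $\mathsf{cQ}\cdot V$ picture via the closeness underlying Lemma~\ref{lem:closeness-AWD-and-PhiVt} (the source of the $N^{1/8}$ and $\varepsilon^{1/4}$ terms), and treatment of the $\Pi^{\mathsf{locbij}}$ failure after the oracle acts as a fresh-randomness collision of the newly appended half-string — your items (i)–(ii) match the paper's analysis of the two image classes of $W\,\Pi^{\mathsf{locbij}}_{\le t}$ and its $\sqrt{t/N^{1/2}}$ term.

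The genuine gap is exactly where you flag your ``main obstacle,'' the projector $\Pi^{\notin\mathsf{Dom}\to\notin\mathsf{locDom}}$. First, you mis-identify its failure event: by definition it acts as the identity whenever the query input $x$ lies in $\mathsf{Dom}(L\cup R)$, and it kills amplitude precisely when $x\notin\mathsf{Dom}(L\cup R)$ globally while $x_>\in\mathsf{Dom}_>(L\cup R)$ locally (similarly on the image side); your stated event (``$x$ already in the domain globally but now also locally'') is the case where the projector does nothing. Second, the mechanism you invoke to bound this event does not deliver it: the query input is adversarially and adaptively chosen, not ``fresh random,'' and the two-sided invariance of Lemma~\ref{claim:two-sided-invariance} only commutes $Q[C,D]$ through $V$ — it does not randomize the adversary's input — while Lemma~\ref{lem:twirling-strongPRU} concerns $\Pi^{\bij}$ versus $\Pi^{\calD(W)}$ and is not the statement needed here. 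What the paper actually does is bound $1-\Pi^{\notin\mathsf{Dom}\to\notin\mathsf{locDom}}$ by a sum of at most $\ell+r\le t$ projectors of the form $\Pi^{\mathsf{eq}}$ on the $>$ halves times $\Pi^{\mathsf{neq}}$ on the $<$ halves, comparing the query register with each recorded pair; after cancelling all but one factor of $\mathsf{cQ}$, each term becomes a $(C\otimes C)$ or $(C\otimes\bar C)$ twirl of that eq/neq projector, which the strong approximate $2$-design bounds of Eqs.~(\ref{eq: clifford twirl eq neq 1 approx})–(\ref{eq: clifford twirl eq neq 2 approx}) control by $1/\sqrt{N}+\varepsilon$ per term. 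This twirl computation is the crux (and the place where the design error $\varepsilon$ and the need for a \emph{strong} design, for the $\bar C$ case, enter); without it your per-query estimate for item (iii) is unsupported. A smaller issue: in the hybrid you must sum per-query amplitudes rather than take one square root of the summed failure probabilities; your accounting lands on the correct $t^2/N^{1/4}$ scaling only because the paper's per-query amplitude bound is itself $O(t/N^{1/4})$.
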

\noindent Here and in the remainder of the manuscript, we abbreviate $\ket*{ \mathcal{A}^{W,\overline{W}, \mathfrak{D}}_t }$ as simply $\ket*{ \mathcal{A}^{W, \mathfrak{D}}_t }$.
All of our analysis includes queries to the conjugate and transpose.

\begin{proof}
Let $\mathsf{TD}_t$ denote the trace distance in Eq.~(\ref{eq: TDt 1}). We will prove the theorem by induction.
The statement holds trivially at $t=0$.
To prove the inductive step, suppose that the Eq.~(\ref{eq: TDt 1}) holds up to time $t-1$ for any $t \geq 1$.
Without loss of generality, we assume that the oracle $W$ is applied at time $t$.
The case when $W^\dagger$, $\overline{W}$, and $\overline{W}^\dagger$ are applied follow by symmetric arguments.
The states at time $t$ are obtained from the states at time $t-1$ as follows,
\begin{align}
    \ket*{ \mathcal{A}^{W, \mathfrak{D}}_{t} }
    & =
    \mathsf{cD}
    \cdot
    W
    \cdot
    \mathsf{cC}
    \cdot
    A_{t}
    \cdot
    \ket*{ \mathcal{A}^{W, \mathfrak{D}}_{t-1} } \\
    \ket*{ \mathcal{A}^{W', \mathfrak{D}}_{t} }
    & =
    \mathsf{cD}
    \cdot
    \Pi^{\mathsf{locbij}}
    \cdot W \cdot
    \Pi^{\notin \mathsf{Dom} \rightarrow \notin \mathsf{locDom}}
    \cdot
    \mathsf{cC}
    \cdot
    A_{t}
    \cdot
    \ket*{ \mathcal{A}^{W', \mathfrak{D}}_{t-1} }
\end{align}
In the second line, we use that $\Pi^{\mathsf{locbij}} \ket*{\mathcal{A}^{W',\mathfrak{D}}_{t-1}} = \ket*{\mathcal{A}^{W',\mathfrak{D}}_{t-1}}$ to eliminate the final projector in $W'$. 
This follows because the projector $\Pi^{\mathsf{locbij}}$ has already been applied after the prior $(t-1)$-th query.
We have 
\begin{equation} \label{eq: step 1 1}
\begin{split}
    \mathsf{TD}_{t} \leq \mathsf{TD}_{t-1} 
    & + 2 \left\lVert \big( 1- \Pi^{\notin \mathsf{Dom} \rightarrow \notin \mathsf{locDom}} \big)
    \cdot \mathsf{cC} \cdot A_{t} 
    \ket*{ \mathcal{A}^{W, \mathfrak{D}}_{t-1} }
     \right\rVert_2 \\
    & \quad + 2 \left\lVert
    \big( 1 - \Pi^{\mathsf{locbij}} \big)
    \cdot W \cdot
    \Pi^{\notin \mathsf{Dom} \rightarrow \notin \mathsf{locDom}}
    \cdot
    \mathsf{cC}
    \cdot
    A_{t}
    \cdot
    \ket*{ \mathcal{A}^{W', \mathfrak{D}}_{t-1} }
    \right\rVert_2, \\
\end{split}
\end{equation}
where the first term accounts for the error up to time $t-1$, the second term for the error induced by the projector $\Pi^{\notin \mathsf{Dom} \rightarrow \notin \mathsf{locDom}}$ [using Eq.~(\ref{eq: state to op bound})], and the third term for the error induced by the projector $\Pi^{\mathsf{locbij}}$ [again using Eq.~(\ref{eq: state to op bound})].

We bound the second term as follows.
From Eq.~(9.55) and Claim 18 of Ref.~\cite{ma2024construct} (see also the modification of Claim 18 to strong approximate designs in the proof of Lemma~\ref{lem:closeness-AWD-and-PhiVt}), we have
\begin{equation}
    \left\lVert 
    \ket*{ \mathcal{A}^{W, \mathfrak{D}}_{t-1} }
    -
    \mathsf{cQ} \cdot \ket*{ \mathcal{A}^{V}_{t-1} }
    \right\rVert_2 \leq \frac{\sqrt{70}(t-1)}{N^{1/8}} + 2 t^{1/4} \varepsilon^{1/4}.
\end{equation}
This yields,
\begin{equation}
\begin{split}
    & \left\lVert \big( 1- \Pi^{\notin \mathsf{Dom} \rightarrow \notin \mathsf{locDom}} \big)
    \cdot \mathsf{cC} \cdot A_{t} \cdot
    \ket*{ \mathcal{A}^{W, \mathfrak{D}}_{t-1} }
     \right\rVert_2  \\
     & \quad \quad \quad  \quad \leq 
     \left\lVert \big( 1- \Pi^{\notin \mathsf{Dom} \rightarrow \notin \mathsf{locDom}} \big) \cdot
    \mathsf{cC} \cdot A_{t} \cdot
    \mathsf{cQ} \cdot \ket*{ \mathcal{A}^{V}_{t-1} }
     \right\rVert_2
     +
     \frac{\sqrt{70}(t-1)}{N^{1/8}}  + 2 t^{1/4} \varepsilon^{1/4}.
\end{split}
\end{equation}
The latter state norm can be written out explicitly, as
\begin{equation}
\begin{split}
    & \left\lVert \big( 1- \Pi^{\notin \mathsf{Dom} \rightarrow \notin \mathsf{locDom}} \big) \cdot
    \mathsf{cC} \cdot A_{t} \cdot
    \mathsf{cQ} \cdot \ket*{ \mathcal{A}^{V}_{t-1} }
     \right\rVert_2  \\
     & \quad \quad  \quad \quad \quad  \quad =
     \sqrt{
     \bra*{ \mathcal{A}^{V}_{t-1} }
      \cdot A_{t}^\dagger \cdot 
     \mathsf{cQ}^\dagger \cdot 
     \mathsf{cC}^\dagger \cdot 
     \big( 1- \Pi^{\notin \mathsf{Dom} \rightarrow \notin \mathsf{locDom}} \big) 
     \cdot \mathsf{cC} \cdot 
    \mathsf{cQ}   \cdot A_{t} \cdot  \ket*{ \mathcal{A}^{V}_{t-1} }
     }.
\end{split}
\end{equation}
where we used that $A_{t}$ and $\mathsf{cQ}$ act on distinct registers to commute them past one another.

To proceed, we first apply the operator inequality,
\begin{equation}
    1 - \Pi^{\notin \mathsf{Dom} \rightarrow \notin \mathsf{locDom}}  \preceq 
    \sum_\ell \sum_{i \in [\ell]} 
    \Pi^{\mathsf{eq}}_{\mathsf{A}_{>} \mathsf{L}_{\mathsf{X_>, i}}^{(\ell)}}
    \Pi^{\mathsf{neq}}_{\mathsf{A}_{<} \mathsf{L}_{\mathsf{X_{<}, i}}^{(\ell)}}
    +
    \sum_r \sum_{j \in [r]} 
    \Pi^{\mathsf{eq}}_{\mathsf{A}_{>} \mathsf{R}_{\mathsf{X_>, j}}^{(r)}}
    \Pi^{\mathsf{neq}}_{\mathsf{A}_{<} \mathsf{R}_{\mathsf{X_{<}, j}}^{(r)}},
\end{equation} 
where $\Pi^{\mathsf{eq}}_{\mathsf{A}_{>} \mathsf{L}_{\mathsf{X_>, i}}^{(\ell)}}$ projects onto states with the same bitstring on $\mathsf{A}_>$ as on $\mathsf{L}_{\mathsf{X_{>}, i}}^{(\ell)}$, and 
\begin{equation}
    \Pi^{\mathsf{neq}}_{\mathsf{A}_{<} \mathsf{L}_{\mathsf{X_{<}, i}}^{(\ell)}} \equiv 1 - \Pi^{\mathsf{eq}}_{\mathsf{A}_{<} \mathsf{L}_{\mathsf{X_{<}, i}}^{(\ell)}}
\end{equation} 
does the reverse on $<$.
For each individual term with $i \in [\ell]$, we have
\begin{equation} \label{eq: EV eq neq 1 1}
\begin{split}
    & \bra*{ \mathcal{A}^{V}_{t-1} }
      \cdot A_{t}^\dagger \cdot 
     \mathsf{cQ}_{\gsC \gsD \gsL \gsR}^\dagger \cdot 
     \mathsf{cC}_{\gsC \gsA}^\dagger \cdot 
     \Pi^{\mathsf{eq}}_{\mathsf{A}_{>} \mathsf{L}_{\mathsf{X_>, i}}^{(\ell)}}
    \Pi^{\mathsf{neq}}_{\mathsf{A}_{<} \mathsf{L}_{\mathsf{X_{<}, i}}^{(\ell)}}
     \cdot \mathsf{cC}_{\gsC \gsA} \cdot 
    \mathsf{cQ}_{\gsC \gsD \gsL \gsR}   \cdot A_{t} \cdot  \ket*{ \mathcal{A}^{V}_t }  \\
     & \quad \quad  \quad \quad  \quad =
     \bra*{ \mathcal{A}^{V}_{t-1} }
      \cdot A_{t}^\dagger \cdot 
     \mathsf{cC}_{\gsC 
    {\textcolor{gray}{\mathsf{L}_{\mathsf{X_{<}, i}}^{(\ell)}}}}^\dagger \cdot 
     \mathsf{cC}_{\gsC \gsA}^\dagger \cdot 
     \Pi^{\mathsf{eq}}_{\mathsf{A}_{>} \mathsf{L}_{\mathsf{X_>, i}}^{(\ell)}}
    \Pi^{\mathsf{neq}}_{\mathsf{A}_{<} \mathsf{L}_{\mathsf{X_{<}, i}}^{(\ell)}} 
     \cdot 
     \mathsf{cC}_{\gsC \gsA} \cdot 
    \mathsf{cC}_{\gsC 
    {\textcolor{gray}{\mathsf{L}_{\mathsf{X_{<}, i}}^{(\ell)}}}}
    \cdot A_{t} \cdot  \ket*{ \mathcal{A}^{V}_{t-1} },
\end{split}
\end{equation}
where all but one of the Clifford unitaries in $\mathsf{cQ}$ cancel, since the middle term in the expectation value acts only on register $\mathsf{L}_{\mathsf{X_>, i}}$. 
For terms with $j \in [r]$, we have instead
\begin{equation} \label{eq: EV eq neq 2 1}
\begin{split}
    & \bra*{ \mathcal{A}^{V}_{t-1} }
      \cdot A_{t}^\dagger \cdot 
     \mathsf{cQ}_{\gsC \gsD \gsL \gsR}^\dagger \cdot 
     \mathsf{cC}_{\gsC \gsA}^\dagger \cdot 
     \Pi^{\mathsf{eq}}_{\mathsf{A}_{>} \mathsf{R}_{\mathsf{X_>, j}}^{(r)}}
    \Pi^{\mathsf{neq}}_{\mathsf{A}_{<} \mathsf{R}_{\mathsf{X_{<}, j}}^{(r)}}
     \cdot \mathsf{cC}_{\gsC \gsA} \cdot 
    \mathsf{cQ}_{\gsC \gsD \gsL \gsR}   \cdot A_{t} å\cdot  \ket*{ \mathcal{A}^{V}_{t-1} }  \\
     & \quad \quad  \quad \quad  \quad =
     \bra*{ \mathcal{A}^{V}_{t-1} }
      \cdot A_{t}^\dagger \cdot 
     \mathsf{c}\overline{\mathsf{C}}_{\gsC 
    {\textcolor{gray}{\mathsf{R}_{\mathsf{X_{<}, j}}^{(r)}}}}^\dagger \cdot 
     \mathsf{cC}_{\gsC \gsA}^\dagger \cdot 
     \Pi^{\mathsf{eq}}_{\mathsf{A}_{>} \mathsf{R}_{\mathsf{X_>, j}}^{(r)}}
    \Pi^{\mathsf{neq}}_{\mathsf{A}_{<} \mathsf{R}_{\mathsf{X_{<}, j}}^{(r)}} 
     \cdot 
     \mathsf{cC}_{\gsC \gsA} \cdot 
    \mathsf{c}\overline{\mathsf{C}}_{\gsC 
    {\textcolor{gray}{\mathsf{R}_{\mathsf{X_{<}, j}}^{(r)}}}}
    \cdot A_{t} \cdot  \ket*{ \mathcal{A}^{V}_{t-1} }.
\end{split}
\end{equation}
We can upper bound the latter expectation values by performing the twirl over $C$.
From Eq.~(\ref{eq: clifford twirl eq neq 1 approx}) and Eq.~(\ref{eq: clifford twirl eq neq 2 approx}), this yields an upper bound of $1/N^{1/2}+\varepsilon$ on both Eq.~(\ref{eq: EV eq neq 1 1}) and Eq.~(\ref{eq: EV eq neq 2 1}).
Therefore, in total, we have an upper bound
\begin{equation}
    \left\lVert \big( 1- \Pi^{\notin \mathsf{Dom} \rightarrow \notin \mathsf{locDom}} \big) \cdot
    \mathsf{cC} \cdot A_{t} \cdot
    \mathsf{cQ} \cdot \ket*{ \mathcal{A}^{V}_{t-1} }
     \right\rVert_2
     \leq 
     \sqrt{ (\ell+r)^2 (1/N^{1/2}+\varepsilon)}
     \leq t/N^{1/4} + t \varepsilon^{1/2}.
\end{equation}

The third term in Eq.~(\ref{eq: step 1 1}) is simpler to bound.
The input state to $W$ lies in the subspace $\Pi^{\mathsf{locbij}}_{\leq t}$ by construction.
Therefore, the output of $W$ lies in the subspace $\Pi^{\mathcal{I}( W \Pi^{\mathsf{locbij}}_{\leq t})}$.
This latter subspace is spanned by two classes of states.
The first class is,
\begin{equation}
    \ket{ y }_{\gsA}
    \ket{ L }_{\gsL}
    \ket{ R }_{\gsR},
\end{equation}
for any $\ell+r \leq t$, where $\mathsf{Dom}_>(L \cup R)$ is distinct, $\mathsf{Im}_<(L \cup R)$ is distinct, and $y_< \notin \mathsf{Im}_<(L \cup R)$.
 These arise if the $W^{R,\dagger}$ branch of $W$ is applied. 
 The second class is,
\begin{equation}
    \frac{1}{\sqrt{N-\ell-r}}
    \sum_{y \notin \mathsf{Im}(L\cup R)}
    \ket{ y }_{\gsA}
    \ket{ L \cup ( x , y ) }_{\gsL}
    \ket{ R }_{\gsR},
\end{equation}
for $\ell+r \leq t$,  where $\mathsf{Dom}_>(L \cup R)$ is distinct, $\mathsf{Im}_<(L \cup R)$ is distinct, and $x_> \notin \mathsf{Dom}_>(L \cup R)$.
These arise if the $W^L$ branch of $W$ is applied.
The states above are mutually orthogonal to one another as well as between different $\ell,r$.

The first class of states is invariant under $\Pi^{\mathsf{locbij}}$.
Thus, the projector $\Pi^{\mathsf{locbij}}$ acts trivially and incurs no error.
Meanwhile, on the second class of states, we have
\begin{equation}
\begin{split}
    & \Pi^{\mathsf{locbij}}
    \frac{1}{\sqrt{N-\ell-r}}
    \sum_{y \notin \mathsf{Im}(L\cup R)}
    \ket{ y }_{\gsA}
    \ket{ L \cup ( x , y ) }_{\gsL}
    \ket{ R }_{\gsR} \\
    & \quad \quad \quad \quad \quad \quad \quad = 
    \frac{1}{\sqrt{N-\ell-r}}
    \sum_{y_< \notin \mathsf{Im}_<(L\cup R)}
    \ket{ y }_{\gsA}
    \ket{ L \cup ( x , y ) }_{\gsL}
    \ket{ R }_{\gsR}.
\end{split}
\end{equation}
The final state is orthogonal to the first class of states, as well as between different $\ell,r$.
The state has norm $( N^{1/2}(N^{1/2} - \ell - r) ) / (N-\ell-r) \geq 1 - t / N^{1/2}$.
The above analysis establishes that $\Pi^{\mathsf{locbij}} \Pi^{\mathcal{I}( W \Pi^{\mathsf{locbij}}_{\leq t})}$ is block diagonal between the two classes of input and output states, as well as between different $\ell, r$.
Therefore, the desired error is given by the maximum error within each block.
From the above, the maximum is achieved at $\ell + r = t$, which yields,
\begin{equation}
    \left\lVert \big( 1 - \Pi^{\mathsf{locbij}} \big)
    \cdot W \cdot
    \Pi^{\notin \mathsf{Dom} \rightarrow \notin \mathsf{locDom}}
    \cdot
    \mathsf{cC}
    \cdot
    A_{t}
    \cdot
    \ket*{ \mathcal{A}^{\overline{W}, \mathfrak{D}}_{t-1} }
    \right\rVert_2 \leq \sqrt{t/N^{1/2}}.
\end{equation}

In total, we have shown that the error in Eq.~(\ref{eq: step 1 1}) is upper bounded by,
\begin{equation} \nonumber
    \mathsf{TD}_{t} \leq \mathsf{TD}_{t-1} 
    + \frac{2\sqrt{70}(t-1)}{N^{1/8}} + 4 t^{1/4} \varepsilon^{1/4} + 2 t/N^{1/4} + 2 \sqrt{t/N^{1/2}}
    \leq
    \frac{\sqrt{70} t(t-1)}{N^{1/8}} + 4 t^{5/4} \varepsilon^{1/4} + \frac{2 t(t+1)}{N^{1/4}}
    , \\
\end{equation}
applying the inductive hypothesis.
This completes our proof.
\end{proof}

\subsubsection{Projected $W$ is indistinguishable from projected $\mathsf{lrfO}$}

We will now show that the path-recording oracle $W$ and the LRF oracle $\lrfo$ are equal on the locally distinct subspace.
To show this, let us adopt the general notation,
\begin{align}
    \widetilde{\Pi} & \equiv \mathsf{Compress}_{\mathsf{LRF}}^\dagger \cdot \Pi \cdot \mathsf{Compress}_{\mathsf{LRF}}
\end{align}
for any projector $\Pi$ on $\mathsf{ALR}$.
We will also let $\widetilde{\Pi}^{\calI(\lrfo \Pi^{\notin \mathsf{locDom}} \Pi^{\mathsf{locbij}})}$ denote the projector onto the subspace spanned by states of the form Eq.~(\ref{eq:tlrfo-map}), $\widetilde{\Pi}^{\calI(\lrfo^* \Pi^{\notin \mathsf{locDom}} \Pi^{\mathsf{locbij}})}$ analogously for Eq.~(\ref{eq:tlrfo-conj-map}), $\widetilde{\Pi}^{\calI(\lrfo^\dagger \Pi^{\notin \mathsf{locIm}} \Pi^{\mathsf{locbij}})}$ for Eq.~(\ref{eq:tlrfo-inverse-map}), and $\widetilde{\Pi}^{\calI(\lrfo^T \Pi^{\notin \mathsf{locIm}} \Pi^{\mathsf{locbij}})}$ for Eq.~(\ref{eq:tlrfo-T-map}).
As indicated in the notation, these project onto the image of the $\lrfo$ oracles when they are applied to locally distinct states.

Leveraging these projectors, we can define projected versions of the $\lrfo$ oracle as follows,
\begin{align}
    \lrfo'
    & \equiv 
    \widetilde{\Pi}^{\mathsf{locbij}}
    \cdot \lrfo \cdot
    \left( \widetilde{\Pi}^{\notin \mathsf{locDom}}  + \widetilde{\Pi}^{\calI(\lrfo^\dagger \Pi^{\notin \mathsf{locIm}}  \Pi^{\mathsf{locbij}})} \cdot \widetilde{\Pi}^{\calD(W_R^\dagger)}  \right) \cdot \widetilde{\Pi}^{\mathsf{locbij}} \label{eq:lrfop} \\
    (\lrfo^\dagger)'
    & \equiv 
    \widetilde{\Pi}^{\mathsf{locbij}}
    \cdot \lrfo^\dagger \cdot
    \left( \widetilde{\Pi}^{\notin \mathsf{locIm}}  + \widetilde{\Pi}^{\calD(\lrfo \Pi^{\notin \mathsf{locDom}}  \Pi^{\mathsf{locbij}})} \cdot \widetilde{\Pi}^{\calD(W_L^\dagger)}  \right) \cdot \widetilde{\Pi}^{\mathsf{locbij}} \\
    (\lrfo^*)'
    & \equiv 
    \widetilde{\Pi}^{\mathsf{locbij}}
    \cdot \lrfo^* \cdot
    \left( \widetilde{\Pi}^{\notin \mathsf{locDom}}  + \widetilde{\Pi}^{\calI(\lrfo^T \Pi^{\notin \mathsf{locIm}}  \Pi^{\mathsf{locbij}})} \cdot \widetilde{\Pi}^{\calD(\overline{W}_R^\dagger)}  \right) \cdot \widetilde{\Pi}^{\mathsf{locbij}} \\
    (\lrfo^T)'
    & \equiv 
    \widetilde{\Pi}^{\mathsf{locbij}}
    \cdot \lrfo^T \cdot
    \left( \widetilde{\Pi}^{\notin \mathsf{locIm}}  + \widetilde{\Pi}^{\calD(\lrfo^* \Pi^{\notin \mathsf{locDom}}  \Pi^{\mathsf{locbij}})} \cdot \widetilde{\Pi}^{\calD(\overline{W}_L^\dagger)}  \right) \cdot \widetilde{\Pi}^{\mathsf{locbij}}.
\end{align}
The first key result of this section is that $W'$ and $\lrfo'$ are nearly equal up to the compress isometry.
\begin{lemma}[$W'$ and $\lrfo'$ are nearly equal up to isometry]
\label{claim:relate-W-and-lrfo}
We have
    \begin{align}
        \left\lVert \Pi_{\leq t} \left( W' - \Compress_{\mathsf{LRF}} \cdot \lrfo' \cdot \Compress^\dagger_{\mathsf{LRF}} \right) \Pi_{\leq t}\right\rVert_\infty & \leq t/N,\label{eq:compress-proof-spfo-goal-1-x}\\
        \left\lVert \Pi_{\leq t} \left( (W^\dagger)' - \Compress_{\mathsf{LRF}} \cdot (\lrfo^\dagger)' \cdot \Compress^\dagger_{\mathsf{LRF}}  \right) \Pi_{\leq t} \right\rVert_\infty & \leq t/N \label{eq:compress-proof-spfo-goal-2-x}\\
        \left\lVert \Pi_{\leq t} \left( \overline{W}' - \Compress_{\mathsf{LRF}} \cdot (\lrfo^*)' \cdot \Compress^\dagger_{\mathsf{LRF}} \right) \Pi_{\leq t} \right\rVert_\infty & \leq t/N\label{eq:compress-proof-spfo-goal-1-x-conj}\\
        \left\lVert \Pi_{\leq t} \left( (\overline{W}^\dagger)' - \Compress_{\mathsf{LRF}} \cdot (\lrfo^T)' \cdot \Compress^\dagger_{\mathsf{LRF}} \right) \Pi_{\leq t} \right\rVert_\infty & \leq t/N. \label{eq:compress-proof-spfo-goal-2-x-conj}
    \end{align}
\end{lemma}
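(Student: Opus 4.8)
The plan is to prove each of the four inequalities by direct comparison of the two sides on a convenient spanning set of the input subspace $\Pi_{\leq t}\,\widetilde\Pi^{\mathsf{locbij}}$, and to show that the error term $t/N$ arises entirely from the mismatch between the two normalization factors $1/\sqrt{N}$ (appearing in the $\lrfo$ action, Fact~\ref{fact:ternary-lrfo-action}) and $1/\sqrt{N-|L\cup R|}$ (appearing in the $W^L$/$W^R$ action, Definition~\ref{def:ternary-W-action}). By symmetry (swapping $L\leftrightarrow R$, or applying complex conjugation/transpose, exactly as in the earlier parts of Appendix~\ref{app: PRU}) it suffices to treat only Eq.~(\ref{eq:compress-proof-spfo-goal-1-x}); the other three follow by identical arguments with the roles of the $L$ and $R$ registers, and of $W$ versus $\overline W$, interchanged. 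First I would fix an arbitrary $(L,R)\in\calR^{2,\lcdist}$ with $\ell+r\le t$ and an $x\in[N]$, and compute $W' \ket{x}_\gsA\ket{L}_\gsL\ket{R}_\gsR$ by expanding $W = W^L + W^{R,\dagger}$ through the middle projector $\Pi^{\notin\mathsf{Dom}\to\notin\mathsf{locDom}}$. The middle projector splits the computation into the case $x\in\mathsf{Dom}(L\cup R)$ (where it acts trivially) and the case $x\notin\mathsf{Dom}(L\cup R)$ (where it projects onto $x_>\notin\mathsf{Dom}_>(L\cup R)$); in the latter case the $W^L$ branch adds $(x,y)$ to $L$ with weight $1/\sqrt{N-|L\cup R|}$ over $y\notin\mathsf{Im}(L\cup R)$, and the surviving $W^{R,\dagger}$ contributions are handled by the extra $\widetilde\Pi^{\calI(\lrfo^\dagger\cdots)}\widetilde\Pi^{\calD(W_R^\dagger)}$ term in the definition~(\ref{eq:lrfop}) of $\lrfo'$, which is precisely engineered to match the branch of $W^{R,\dagger}$ that lands back in the locally-distinct subspace.

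Next I would carry out the analogous computation for $\Compress_{\mathsf{LRF}}\cdot\lrfo'\cdot\Compress^\dagger_{\mathsf{LRF}}$ applied to the same basis vector. Using $\Compress^\dagger_{\mathsf{LRF}}\ket{x}_\gsA\ket{L}_\gsL\ket{R}_\gsR = \ket{x}_\gsA\ket{\mathsf{lrf}_{L,R}}$ (valid since $(L,R)$ is locally distinct), Fact~\ref{fact:ternary-lrfo-action} gives $\lrfo\ket{x}_\gsA\ket{\mathsf{lrf}_{L,R}} = \tfrac1{\sqrt N}\sum_{y\in[N]}\ket{y}_\gsA\ket{\mathsf{lrf}_{L\cup\{(x,y)\},R}}$. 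Applying $\Compress_{\mathsf{LRF}}$ then reads off exactly those $y$ for which $(L\cup\{(x,y)\},R)$ remains locally distinct — i.e. $y_<\notin\mathsf{Im}_<(L\cup R)$ — so after the projectors in~(\ref{eq:lrfop}) one obtains a normalized-by-$1/\sqrt N$ sum over $y$ with $y_<\notin\mathsf{Im}_<(L\cup R)$, matching the structure of the $W^L$ output but with a different prefactor. The key arithmetic step is then the elementary bound on the operator norm of the difference of two (sub-normalized) superpositions over the same index set $\{y_<\notin\mathsf{Im}_<(L\cup R)\}$ with prefactors $1/\sqrt{N-|L\cup R|}$ versus $1/\sqrt N$; this difference has norm at most $|1 - \sqrt{(N-|L\cup R|)/N}|\le |L\cup R|/N \le t/N$. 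Since the analysis is block-diagonal over $(\ell,r)$ and over the two branches of $W$ (exactly as in the proof of Lemma~\ref{lem:closeness-AWD-and-AWpD}), the overall operator-norm error is the maximum of the per-block errors, namely $t/N$.

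The main obstacle I anticipate is the careful bookkeeping of the $W^{R,\dagger}$ branch: one must verify that the auxiliary projector $\widetilde\Pi^{\calI(\lrfo^\dagger\Pi^{\notin\mathsf{locIm}}\Pi^{\mathsf{locbij}})}\cdot\widetilde\Pi^{\calD(W_R^\dagger)}$ in~(\ref{eq:lrfop}) picks out precisely the states on which $\lrfo$ acts as the "inverse-undoing" of a previously recorded pair in $R$, so that $\lrfo'$ genuinely reproduces $W^{R,\dagger}$'s contribution and not spurious extra terms, and that no double-counting occurs between the $W^L$ and $W^{R,\dagger}$ images (they are orthogonal, which one checks via the domain/image projector identities in Lemma~\ref{claim:W-partial-isometry} restricted to the locally-distinct subspace). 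This is a routine but intricate case analysis; once it is in place, the normalization estimate is immediate, and one does not need to invoke any unitary-design twirl here — this lemma is purely about the combinatorics of the path-recording and LRF oracles, with the twirl/2-design ingredients entering only in the surrounding lemmas (Lemma~\ref{lem:closeness-AWD-and-AWpD} and its $\lrfo$ analog).
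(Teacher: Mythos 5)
Your plan is correct and follows essentially the same route as the paper's proof: restrict to the locally distinct subspace, split into the two branches corresponding to the two projector terms in the definition of $\lrfo'$ (the $W^L$-type branch and the $W^{R,\dagger}$-type branch spanned by the image states of $W^R$), compute both $W'$ and $\Compress_{\mathsf{LRF}}\cdot\lrfo'\cdot\Compress^\dagger_{\mathsf{LRF}}$ explicitly on each, observe they agree up to the normalization ratio $\sqrt{1-(\ell+r)/N}$, and use block-diagonality in $(\ell,r)$ and the branches to bound the spectral norm by $(\ell+r)/N\le t/N$, with the remaining three identities following by symmetry. The only refinement relative to your write-up is that the $W^{R,\dagger}$ branch is most cleanly handled on the superposition (image-of-$W^R$) states rather than individual basis vectors with $x\in\Dom(L\cup R)$, which is exactly the bookkeeping you already flagged as the intricate step.
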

\begin{proof}
We focus on the first equality without loss of generality. The remaining three equalities following by symmetric arguments.
The $W'$ and $\lrfo'$ oracles act on states in two domains, corresponding to the two terms in parentheses in Eq.~(\ref{eq:lrfop}).
The first is the domain of $\widetilde{\Pi}^{\notin \mathsf{locDom}}  \widetilde{\Pi}^{\mathsf{locbij}}$.
For states in this domain, $\lrfo'$ acts as
\begin{align}
    \lrfo' \ket*{x}_{\gsA} \ket*{\mathsf{lrf}_{L,R}}_{\gsH_{\mathsf{\color{gray} 1}} \gsH_{\mathsf{\color{gray} 2}} \gsF} &= \frac{1}{\sqrt{N}} \sum_{y \in [N]} \delta_{y_< \notin \mathsf{Im}_<(L \cup R)} \ket*{y}_{\gsA} \ket*{\mathsf{lrf}_{L \cup \{(x, y)\}, R}}_{\gsH_{\mathsf{\color{gray} 1}} \gsH_{\mathsf{\color{gray} 2}} \gsF}, \label{eq:tlrfo-map-prime}
\end{align}
where $(L,R) \in \mathcal{R}^{2,\text{lcdist}}$ and $x_> \notin \mathsf{Dom}_>(L \cup R)$. Meanwhile, on the un-compressed versions of the same states, $W'$ acts as
\begin{align}
    W' \ket*{x}_{\gsA} \ket*{L}_{\gsL} \ket*{R}_{\gsL} &= \frac{1}{\sqrt{N-\ell-r}} \sum_{y \notin \mathsf{Im}(L\cup R)} \delta_{y_< \notin \mathsf{Im}_<(L \cup R)} \ket*{y}_{\gsA} \ket*{L \cup \{(x, y)\}}_\gsL \ket{R}_\gsR,
\end{align}
After compression by $\mathsf{Compress}_{\mathsf{LRF}}$, the actions of the two oracles are identical aside from a normalization ratio of $\sqrt{1-(\ell+r)/N}$.

The second is the domain of $\widetilde{\Pi}^{\calD(W_R^\dagger)}  \widetilde{\Pi}^{\mathsf{locbij}}$.
This domain is spanned by states of the form,
\begin{align}
    \sum_{x \notin \mathsf{Dom}(L \cup R)} \delta_{x_> \notin \mathsf{Dom}_>(L \cup R)} \ket*{x}_{\gsA} \ket*{\mathsf{lrf}_{L, R \cup \{(x, y)\}}}_{\gsH_{\mathsf{\color{gray} 1}} \gsH_{\mathsf{\color{gray} 2}} \gsF},
\end{align}
where $(L,R) \in \mathcal{R}^{2,\text{lcdist}}$ and $y_< \notin \mathsf{Im}_<(L \cup R)$, and we leave the state un-normalized for brevity.
For states in this domain, $\lrfo'$ acts as
\begin{align}
    \lrfo' \sum_{x \notin \mathsf{Dom}(L \cup R)} & \delta_{x_> \notin \mathsf{Dom}_>(L \cup R)} \ket*{x}_{\gsA} \ket*{\mathsf{lrf}_{L, R \cup \{(x, y)\}}}_{\gsH_{\mathsf{\color{gray} 1}} \gsH_{\mathsf{\color{gray} 2}} \gsF} \\
    & = N^{1/2} \frac{N-\ell-r}{N} \frac{N^{1/2}(N^{1/2}-\ell-r)}{N-\ell-r} \ket*{y}_{\gsA} \ket*{\mathsf{lrf}_{L, R}}_{\gsH_{\mathsf{\color{gray} 1}} \gsH_{\mathsf{\color{gray} 2}} \gsF},
\end{align}
where the second ratio arises from the action of $\widetilde{\Pi}^{\calD(W_R^\dagger)}$, the first ratio arises from the action of $\widetilde{\Pi}^{\calI(\lrfo^\dagger \Pi^{\notin \mathsf{locIm}}  \Pi^{\mathsf{locbij}})}$, and then factor of $N^{1/2}$ arises from applying Eq.~(\ref{eq:tlrfo-inverse-map}).
Meanwhile, on the un-compressed versions of the same states, $W'$ acts as
\begin{align}
    W' \sum_{x \notin \mathsf{Dom}(L \cup R)} & \delta_{x_> \notin \mathsf{Dom}_>(L \cup R)} \ket*{x}_{\gsA} \ket*{L}_{\gsL} \ket*{R \cup \{(x, y)\}}_{\gsR} \\
    & = (N-\ell-r)^{1/2} \frac{N^{1/2}(N^{1/2}-\ell-r)}{N-\ell-r} \ket*{y}_{\gsA} \ket*{L}_\gsL \ket{R}_\gsR, 
\end{align}
After compression by $\mathsf{Compress}_{\mathsf{LRF}}$, the actions of the two oracles are identical aside from a normalization ratio of $\sqrt{1-(\ell+r)/N}$.

The $W'$ and $\lrfo'$ oracles are block-diagonal between different values of $\ell, r$. 
Hence, the spectral norm of the difference between the two operators (after compressing $\lrfo'$) is bounded by the maximum spectral norm of the difference for each $\ell,r$.
From the above analysis, the two oracles are related by a constant re-scaling $\sqrt{1-(\ell+r)/N} \geq 1-(\ell+r)/N$ within each $\ell,r$.
Hence, the spectral norm of their difference is at most $(\ell+r)/N$. Applying $\ell+r \leq t$ completes the proof.
\end{proof}

Using Lemma~\ref{claim:relate-W-and-lrfo},
we can then show that $W'$ is indistinguishable from $\mathsf{lrfO}'$ by any adversary.
\begin{lemma}
[$W'$ is indistinguishable from $\mathsf{lrfO}'$] \label{lem:closeness-AWpD-and-LRFpD}
For any $t$-query oracle adversary $\mathcal{A}$, we have
\begin{equation}
    \left\lVert \Tr_{\gsL \gsR \gsC \gsD} \big( \dyad*{ \mathcal{A}^{W', \mathfrak{D}}_t }_{ \gsA \gsB \gsL \gsR \gsC \gsD} \big) - \Tr_{\gsH_{\color{gray} \mathsf{1}} \gsH_{\color{gray} \mathsf{2}} \gsF \gsC \gsD} \big( \dyad*{ \mathcal{A}^{\lrfo', \mathfrak{D}}_t }_{ \gsA \gsB \gsH_{\color{gray} \mathsf{1}} \gsH_{\color{gray} \mathsf{2}} \gsF \gsC \gsD} \big) \right\rVert_1 \leq \frac{t(t+1)}{2N}.
\end{equation}
\end{lemma}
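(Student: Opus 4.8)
The plan is to prove this by induction on the number of queries $t$, in direct parallel to the structure used for Lemma~\ref{lem:closeness-AWD-and-PhiVt} and Lemma~\ref{lem:closeness-AWD-and-AWpD}, but now exploiting that $W'$ and $\lrfo'$ are literally the \emph{same} operator up to the compression isometry $\Compress_{\mathsf{LRF}}$, modulo the exponentially small spectral-norm error quantified in Lemma~\ref{claim:relate-W-and-lrfo}. First I would set up the two global states $\ket*{\mathcal{A}^{W',\mathfrak D}_t}$ on registers $\gsA\gsB\gsL\gsR\gsC\gsD$ and $\ket*{\mathcal{A}^{\lrfo',\mathfrak D}_t}$ on registers $\gsA\gsB\gsH_{\color{gray}\mathsf 1}\gsH_{\color{gray}\mathsf 2}\gsF\gsC\gsD$, and observe that conjugating the first by $\Compress_{\mathsf{LRF}}^\dagger$ on the $\gsL\gsR$ registers maps the space the path-recording state lives in into the $\gsH_{\color{gray}\mathsf 1}\gsH_{\color{gray}\mathsf 2}\gsF$ space. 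Define the ``transported'' state $\ket*{\widetilde{\mathcal{A}}^{W',\mathfrak D}_t} \coloneqq \Compress^\dagger_{\mathsf{LRF}}\,\ket*{\mathcal{A}^{W',\mathfrak D}_t}$ and show inductively that $\bigl\lVert \ket*{\widetilde{\mathcal{A}}^{W',\mathfrak D}_t} - \ket*{\mathcal{A}^{\lrfo',\mathfrak D}_t} \bigr\rVert_2$ is small; since both $W'$ and the transported $\lrfo'$ keep the state supported in the image of $\Compress_{\mathsf{LRF}}$ (by the $\widetilde\Pi^{\mathsf{locbij}}$ projectors appended to every query), the partial-trace statement then follows from Eq.~\eqref{eq: state to op bound} plus the fact that a partial trace is contractive.

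The key steps in order would be: (1) At step $t$, write $\ket*{\mathcal{A}^{W',\mathfrak D}_t} = \scD\cdot W' \cdot \scC\cdot A_t\ket*{\mathcal{A}^{W',\mathfrak D}_{t-1}}$ (treating the $W'$ case; the $(W^\dagger)'$, $\overline W'$, $(\overline W^\dagger)'$ cases are symmetric), using that $\widetilde\Pi^{\mathsf{locbij}}$ was already applied at step $t-1$ so the leading projector in $W'$ is redundant. (2) Insert the identity $W' = \Compress_{\mathsf{LRF}}\cdot\lrfo'\cdot\Compress^\dagger_{\mathsf{LRF}} + E_t$ from Lemma~\ref{claim:relate-W-and-lrfo}, where $\lVert \Pi_{\leq t}E_t\Pi_{\leq t}\rVert_\infty \le t/N$; since the state at step $t-1$ is supported on $\Pi_{\leq t-1}\preceq\Pi_{\leq t}$, the error contributed at this step is at most $\lVert E_t \rVert_\infty \le t/N$ in $2$-norm (the controlled Cliffords $\scC,\scD$ are unitary and commute past $\Compress_{\mathsf{LRF}}$ since they act on $\gsA\gsC\gsD$, which are disjoint from $\gsL\gsR$ and $\gsH_{\color{gray}\mathsf 1}\gsH_{\color{gray}\mathsf 2}\gsF$). (3) Use that $\Compress_{\mathsf{LRF}}$ is a partial isometry whose image contains the support of every state in the sequence, so $\Compress_{\mathsf{LRF}}\Compress^\dagger_{\mathsf{LRF}}$ acts as the identity on these states; this lets the $\Compress_{\mathsf{LRF}}$ on the left of one query cancel with the $\Compress^\dagger_{\mathsf{LRF}}$ on the right of the next, collapsing the telescoping product down to a single $\Compress_{\mathsf{LRF}}$ acting on the correctly-evolved $\lrfo'$ state. (4) Sum the per-step errors: $\sum_{s=1}^{t} s/N = t(t+1)/(2N)$, which by the triangle inequality and Eq.~\eqref{eq: state to op bound} gives the stated bound on the trace distance after the partial trace, since tracing out registers only decreases trace distance and the factor of $2$ from Eq.~\eqref{eq: state to op bound} is absorbed because the per-step $2$-norm error is actually $s/(2N)$ after accounting for the $\sqrt{1-(\ell+r)/N}\ge 1-(\ell+r)/N$ normalization slack being one-sided — more carefully, one tracks that the operator-norm difference $t/N$ already accounts for the factor.

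The main obstacle I expect is step (3): carefully verifying that the image of $\Compress_{\mathsf{LRF}}$ (i.e.\ the span of $\ket*{\mathsf{lrf}_{L,R}}$ over $(L,R)\in\calR^{2,\mathrm{lcdist}}$) is genuinely invariant under the transported dynamics so that the isometry cancellations are exact, not merely approximate. This requires checking that after applying $\lrfo'$ (with its $\widetilde\Pi^{\mathsf{locbij}}$ projector and the $\widetilde\Pi^{\notin\mathsf{locDom}}+\widetilde\Pi^{\calI(\cdots)}\widetilde\Pi^{\calD(W_R^\dagger)}$ domain restriction) to a state in the locally-distinct subspace, one lands back in the locally-distinct subspace with no leakage — this is exactly what the various ``$\mathrm{lcdist}$'' projectors in the definitions of $\lrfo'$ were engineered to guarantee, but spelling out the bookkeeping (particularly that the ``second class of states'' arising from the $W^{R,\dagger}$ branch, per the proof of Lemma~\ref{claim:relate-W-and-lrfo}, stays inside the right subspace) is where the real work lies. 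A secondary subtlety is confirming that the $\Compress^\dagger_{\mathsf{LRF}}\Compress_{\mathsf{LRF}}$ on the $\gsH$-side equals the identity on the relevant subspace — true because $\Compress_{\mathsf{LRF}}$ is a partial isometry and we only ever feed it states in $\widetilde\Pi^{\mathsf{locbij}}$ — so that the transported $\lrfo'$ iteration is faithfully mirrored by the $W'$ iteration. Everything else is routine telescoping and arithmetic.
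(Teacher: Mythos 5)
Your proposal follows essentially the same route as the paper's proof: replace each query's $W'$ by $\Compress_{\mathsf{LRF}}\cdot\lrfo'\cdot\Compress^\dagger_{\mathsf{LRF}}$ via Lemma~\ref{claim:relate-W-and-lrfo}, commute the compression isometries past $A_s$, $\scC$, $\scD$ (which act on disjoint registers), cancel the adjacent pairs using that the $\widetilde{\Pi}^{\mathsf{locbij}}$ sandwiching keeps the state inside the image of $\Compress^\dagger_{\mathsf{LRF}}$, handle the first and last isometries via the empty initial relation registers and the final partial trace, and sum the per-step errors to $\sum_{s=1}^{t}s/N=t(t+1)/2N$. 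The invariance you flag in step (3) is exactly what the projectors in the definition of $\lrfo'$ guarantee, and your unresolved factor-of-two bookkeeping when converting $2$-norm to trace norm is no worse than the paper's own accounting, which likewise passes from the spectral-norm bound of Lemma~\ref{claim:relate-W-and-lrfo} directly to a trace-norm error of $s/N$ per query.
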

\begin{proof}
    From Lemma~\ref{claim:relate-W-and-lrfo}, every application of $W'$ in $\ket*{ \mathcal{A}^{W', \mathfrak{D}}_t }$ can be replaced by an application of $\mathsf{Compress}_{\mathsf{LFR}} \cdot \lrfo' \cdot \mathsf{Compress}^\dagger_{\mathsf{LFR}}$ up to total trace norm error $\sum_{s=1}^t s/N = t(t+1)/2N$.
    Since the $\mathsf{Compress}^\dagger_{\mathsf{LFR}}$ operations act only the $\mathsf{L}$ and $\mathsf{R}$ registers, they commute with all other objects in $\ket*{ \mathcal{A}^{W', \mathfrak{D}}_t }$ (namely, $\mathsf{cC}$ and $\mathsf{cD}$ and $A_s$).
    This implies that all compress operations between adjacent applications of the $\lrfo$ oracle cancel one another, leaving only a first application of $\mathsf{Compress}^\dagger_{\mathsf{LFR}}$ before the first query to $\lrfo$ and a last application of $\mathsf{Compress}_{\mathsf{LFR}}$ following the $t$-th application.
    The first application acts trivially because $L$ and $R$ are empty in the initial state.
    The final application has no effect since $\mathsf{L}$ and $\mathsf{R}$ are traced out in the final state.
    Hence, all applications of $\mathsf{Compress}_{\mathsf{LFR}}$ and $\mathsf{Compress}^\dagger_{\mathsf{LFR}}$ vanish which yields the state $\ket*{ \mathcal{A}^{\lrfo', \mathfrak{D}}_t }$.
\end{proof}

\subsubsection{Twirled projected $\mathsf{lrfO}$ is indistinguishable from twirled $\mathsf{lrfO}$}

Finally, we can leverage our results thus far to show that the projected oracle $\lrfo'$ is indistinguishable from the original oracle $\lrfo$ by any adversary.
\begin{lemma}
[Twirled $\mathsf{lrfO}'$ is indistinguishable from twirled $\mathsf{lrfO}$] \label{lem:closeness-LRFpD-and-LRFD}
Let $\mathfrak{D}$ be any strong approximate unitary 2-design with additive error $\varepsilon$. For any $t$-query oracle adversary $\mathcal{A}$, we have
\begin{equation} \nonumber
    \left\lVert \dyad*{ \mathcal{A}^{\lrfo', \mathfrak{D}}_t }  - \dyad*{ \mathcal{A}^{\lrfo, \mathfrak{D}}_t } \right\rVert_1 = \mathcal{O}(t^2/N^{1/16}) + \mathcal{O}(t^{5/8} \varepsilon^{1/8}).
\end{equation}
\end{lemma}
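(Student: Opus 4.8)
The plan is to run a hybrid (induction on the number of queries $t$) argument that bounds the trace distance accumulated by inserting, at each query step, the projector that pins us to the locally distinct subspace. This is the same skeleton as the proof of Lemma~\ref{lem:closeness-AWD-and-AWpD} (twirled $W$ vs.\ twirled $W'$), but now run on the $\lrfo$ side instead of the $W$ side. Concretely, I would prove by induction on $s = 0, 1, \dots, t$ that
\begin{equation} \nonumber
    \mathsf{TD}\left( \dyad*{\mathcal{A}^{\lrfo',\mathfrak{D}}_s}, \dyad*{\mathcal{A}^{\lrfo,\mathfrak{D}}_s} \right) \leq \mathcal{O}(s^2/N^{1/16}) + \mathcal{O}(s^{5/8}\varepsilon^{1/8}),
\end{equation}
with the base case $s=0$ trivial since the initial state lies in the locally distinct subspace. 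For the inductive step, I would write the two states at step $s$ in terms of the states at step $s-1$: the $\lrfo$ state has the bare oracle $\mathcal{O}_s \in \{\lrfo, \lrfo^\dagger, \lrfo^*, \lrfo^T\}$ twirled by $\mathsf{cC},\mathsf{cD}$, while the $\lrfo'$ state has the projected oracle $\mathcal{O}_s'$; using that $\widetilde{\Pi}^{\mathsf{locbij}}$ was already applied after query $s-1$, one can drop the trailing projector in $\mathcal{O}_s'$ exactly as in Lemma~\ref{lem:closeness-AWD-and-AWpD}. Then by Eq.~(\ref{eq: state to op bound}), $\mathsf{TD}_s \leq \mathsf{TD}_{s-1} + 2\lVert(1 - \widetilde{\Pi}^{\notin\mathsf{locDom}} - \widetilde{\Pi}^{\calD(W_R^\dagger)}\cdots)\,\mathsf{cC}\,A_s\ket*{\mathcal{A}^{\lrfo,\mathfrak{D}}_{s-1}}\rVert_2 + 2\lVert(1-\widetilde{\Pi}^{\mathsf{locbij}})\,\lrfo\cdots\ket*{\mathcal{A}^{\lrfo',\mathfrak{D}}_{s-1}}\rVert_2$, i.e.\ the new error per step splits into a ``domain projector'' term and an ``image projector'' term.

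The key simplification is that these per-step errors can be imported almost verbatim from the $W$-side analysis via the chain of equivalences already established: $\mathsf{lrfO}$ is perfectly indistinguishable from the standard Luby–Rackoff–Function oracle (Fact~\ref{claim:purified-vs-standard-ternary-LRFO}); the twirled $W$ oracle is indistinguishable from the twirled $V$ oracle up to $\mathcal{O}(t/N^{1/8}) + \mathcal{O}(t^{1/4}\varepsilon^{1/4})$ (Lemma~\ref{lem:closeness-AWD-and-PhiVt}); and on the locally distinct subspace the compressed $\lrfo'$ equals $W'$ up to $\mathcal{O}(t/N)$ per step (Lemma~\ref{claim:relate-W-and-lrfo}). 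So to bound the ``domain projector'' term, I would (i) pass from $\ket*{\mathcal{A}^{\lrfo,\mathfrak{D}}_{s-1}}$ to $\mathsf{cQ}\ket*{\mathcal{A}^{V}_{s-1}}$ at cost $\mathcal{O}(s/N^{1/8}) + \mathcal{O}(s^{1/4}\varepsilon^{1/4})$ — this is where the $N^{1/8}$ and $\varepsilon^{1/4}$ rates enter — then (ii) apply the operator inequality bounding $1 - \widetilde{\Pi}^{\notin\mathsf{locDom}}$ by a sum of $\Pi^{\mathsf{eq}}_>\Pi^{\mathsf{neq}}_<$ terms over the $\mathcal{O}(s)$ recorded pairs, exactly as in Eq.~(\ref{eq: EV eq neq 1 1})–(\ref{eq: EV eq neq 2 1}), and (iii) twirl over $C$ using the strong $2$-design bounds Eq.~(\ref{eq: clifford twirl eq neq 1 approx})–(\ref{eq: clifford twirl eq neq 2 approx}) to get $\mathcal{O}(s/N^{1/4}) + \mathcal{O}(s\varepsilon^{1/2})$ for that piece. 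The ``image projector'' term I would handle by the same block-diagonal-in-$(\ell,r)$ argument as in Lemma~\ref{lem:closeness-AWD-and-AWpD}: the output of $\lrfo$ on locally distinct input decomposes into two classes of states (from the $W^L$-like and $W^{R,\dagger}$-like branches), on one of which $\widetilde{\Pi}^{\mathsf{locbij}}$ acts trivially and on the other it removes a normalization factor $\geq 1 - s/N^{1/2}$, giving $\mathcal{O}(\sqrt{s/N^{1/2}})$. Summing these per-step contributions over $s = 1,\dots,t$ and taking the dominant terms yields $\mathcal{O}(t^2/N^{1/16}) + \mathcal{O}(t^{5/8}\varepsilon^{1/8})$ — note the worse exponents ($N^{1/16}$, $\varepsilon^{1/8}$) should arise from one further square-root incurred either in a gentle-measurement step or in converting a state-norm bound to trace distance when chaining through Lemma~\ref{lem:closeness-AWD-and-PhiVt} twice, so I would budget carefully for where that square root lands.

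The main obstacle I anticipate is bookkeeping the error rates so that they actually collapse to the stated $N^{1/16}$ and $\varepsilon^{1/8}$, rather than something better or worse. The delicate point is step (i) above: replacing $\ket*{\mathcal{A}^{\lrfo,\mathfrak{D}}_{s-1}}$ by a $V$-side state requires first going from $\lrfo$ to $W$ (which is only true \emph{on the projected subspace}, so there is a mild circularity that must be broken by using the $\lrfo'$/$W'$ equivalence of Lemma~\ref{lem:closeness-AWpD-and-LRFpD} together with the already-proven closeness of $W$ to $W'$), and then from $W$ to $V$ via Lemma~\ref{lem:closeness-AWD-and-PhiVt}. I would resolve the circularity by proving the induction for the pair $(\lrfo', W')$ simultaneously, or more cleanly by routing everything through the $V$-side state as the common reference point, since the $W$-side analysis (Lemma~\ref{lem:closeness-AWD-and-AWpD}) already controls $W$ vs.\ $W'$ unconditionally. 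Once the reference point is fixed, the remaining steps are the routine $\Pi^{\mathsf{eq}}\Pi^{\mathsf{neq}}$-twirl estimates and the block-diagonal normalization argument, both of which are mechanical given the lemmas cited above; I would only need to track constants loosely since the final bound is stated with $\mathcal{O}(\cdot)$.
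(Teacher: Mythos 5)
Your plan is a per-query hybrid that inserts the locally-distinct projectors one query at a time, mimicking the proof of Lemma~\ref{lem:closeness-AWD-and-AWpD} but run on the $\lrfo$ side, and it contains a genuine gap exactly at the point you flag but do not resolve. In the $W$-side argument, the per-step ``domain-projector'' error is evaluated on the \emph{unprojected} twirled state $\ket*{\mathcal{A}^{W,\mathfrak{D}}_{s-1}}$, and this is only tractable because Claim~18 / Lemma~9.3 of~\cite{ma2024construct} give a $2$-norm handle relating that state to $\mathsf{cQ}\ket*{\mathcal{A}^{V}_{s-1}}$. No analogue of that handle exists for the unprojected twirled-$\lrfo$ state: $\lrfo$ is only tied to the path-recording picture \emph{on} the locally distinct subspace (Lemma~\ref{claim:relate-W-and-lrfo} relates $\lrfo'$ to $W'$, not $\lrfo$ to $W$), and off that subspace the purified $\sH_1,\sH_2,\sF$ registers evolve in a way the $W/V$ machinery does not describe. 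So step (i) of your plan — replacing $\ket*{\mathcal{A}^{\lrfo,\mathfrak{D}}_{s-1}}$ by a $V$-side state at cost $\mathcal{O}(s/N^{1/8})+\mathcal{O}(s^{1/4}\varepsilon^{1/4})$ — presupposes the very closeness you are trying to prove. Your two suggested fixes (a simultaneous induction for the pair, or ``routing through $V$'') would require a vector-norm, not trace-distance, relation between the unprojected $\lrfo$ state and an analyzable reference, and neither is worked out; moreover your ``image-projector'' step needs the two-branch block-diagonal decomposition of $\lrfo$'s output on general (not locally distinct) inputs, which is not established anywhere and is messy precisely because $\lrfo$ is not a path-recording isometry off the good subspace.

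The paper's proof avoids analyzing the unprojected $\lrfo$ state altogether. It rewrites $\lrfo'$ as $\lrfo$ composed with (four) projectors, applies Eq.~(\ref{eq: state to op bound}) together with the sequential gentle measurement lemma (Lemma~\ref{lem:seq-gentleM-pure}) \emph{once}, so that the entire trace distance is bounded by $8t\sqrt{1-\langle \mathcal{A}^{\lrfo',\mathfrak{D}}_t | \mathcal{A}^{\lrfo',\mathfrak{D}}_t\rangle}$, i.e.\ by the norm deficit of the fully \emph{projected} state. That deficit is then controlled entirely on the projected side by chaining Lemma~\ref{lem:closeness-AWpD-and-LRFpD}, Lemma~\ref{lem:closeness-AWD-and-AWpD}, Lemma~\ref{lem:closeness-AWD-and-PhiVt} and Theorem~\ref{theorem:haar-cho-strong} against the Haar-query state, which has unit norm; the square root of the accumulated $\mathcal{O}(t^2/N^{1/8})+\mathcal{O}(t^{5/4}\varepsilon^{1/4})$ is the single extra square root you anticipated, and it is what produces the $N^{1/16}$ and $\varepsilon^{1/8}$ exponents. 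If you want to salvage your hybrid, you would have to evaluate every per-step error on the projected-prefix states (so the $\lrfo'\!\approx\!W'\!\approx\!V$ chain applies) and then sum the resulting norm losses — which is essentially a re-derivation of the sequential gentle measurement lemma, i.e.\ the paper's route.
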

\begin{proof}
Let us rewrite the projected oracle as follows,
\begin{align}\nonumber
    \lrfo'
    & \equiv 
    \widetilde{\Pi}^{\mathsf{locbij}}
    \cdot \lrfo \cdot
    \left( \widetilde{\Pi}^{\notin \mathsf{locDom}}  + \widetilde{\Pi}^{\calI(\lrfo^\dagger \Pi^{\notin \mathsf{locIm}}  \Pi^{\mathsf{locbij}})}    \right) \cdot
    \left( \widetilde{\Pi}^{\notin \mathsf{locDom}}  + \widetilde{\Pi}^{\calD(W_R^\dagger)}  \right) \cdot \widetilde{\Pi}^{\mathsf{locbij}},
\end{align}
and similar for $(\lrfo^\dagger)'$, $(\lrfo^*)'$, and $(\lrfo^T)'$.
This decomposition follows from the original definition because $\widetilde{\Pi}^{\notin \mathsf{locDom}}$ acts on an orthogonal subspace to $\widetilde{\Pi}^{\calI(\lrfo^\dagger \Pi^{\notin \mathsf{locIm}}  \Pi^{\mathsf{locbij}})}$ and $\widetilde{\Pi}^{\calD(W_R^\dagger)}$.
This fact also implies that each sum in parentheses above is a projector.
Hence, $\lrfo'$ is equal to the product of $\lrfo$ and four projectors.

From Eq.~(\ref{eq: state to op bound}) and the sequential gentle measurement lemma (Lemma~\ref{lem:seq-gentleM-pure}), the trace distance of interest is upper bounded as,
\begin{equation} \nonumber
    \left\lVert \dyad*{ \mathcal{A}^{\lrfo', \mathfrak{D}}_t }  - \dyad*{ \mathcal{A}^{\lrfo, \mathfrak{D}}_t } \right\rVert_1
    \leq 
    2 
    \left\lVert \ket*{ \mathcal{A}^{\lrfo', \mathfrak{D}}_t }  - \ket*{ \mathcal{A}^{\lrfo, \mathfrak{D}}_t }
    \right\rVert_2
    \leq 
    8 t \sqrt{1 - \langle
    \mathcal{A}^{\lrfo', \mathfrak{D}}_t 
    \big|
    \mathcal{A}^{\lrfo', \mathfrak{D}}_t 
    \rangle}.
\end{equation}
To bound the normalization on the right hand side, we apply Lemma~\ref{lem:closeness-AWpD-and-LRFpD} and Lemma~\ref{lem:closeness-AWD-and-AWpD} and Lemma~\ref{lem:closeness-AWD-and-PhiVt} and Theorem~\ref{theorem:haar-cho-strong} to find 
\begin{equation} \nonumber
    \left| \langle
    \mathcal{A}^{\lrfo', \mathfrak{D}}_t 
    \big|
    \mathcal{A}^{\lrfo', \mathfrak{D}}_t 
    \rangle
    -
    \E_{U \sim H} \langle \mathcal{A}^{U, \mathfrak{D}}_t 
    \big|
    \mathcal{A}^{U, \mathfrak{D}}_t \rangle
    \right|
    \leq \frac{t(t+1)}{2N} + \frac{\sqrt{70}t(t-1)}{N^{1/8}} + \frac{2t(t+1)}{N^{1/4}} + \frac{9t(t+2)}{N^{1/8}} + 8t^{5/4} \varepsilon^{1/4}.
\end{equation}
We have $\langle \mathcal{A}^{U, \mathfrak{D}}_t 
    \big|
    \mathcal{A}^{U, \mathfrak{D}}_t \rangle = 1$ since $U$ is unitary.
Hence, the trace distance is bounded above by $8t$ multiplied by the square root of the right hand side of the above equation.
The right hand side is $\mathcal{O}(t^2/N^{1/8}) + \mathcal{O}(t^{5/4} \varepsilon^{1/4})$.
Hence, the trace distance is $\mathcal{O}(t^2/N^{1/16}) + \mathcal{O}(t^{5/8} \varepsilon^{1/8})$ as claimed.
\end{proof}

\subsection{LRFC  is indistinguishable from a Haar-random unitary}

We now prove our main result, that the LRFC ensemble is indistinguishable from a Haar-random unitary.
This follows relatively quickly from the results in the previous sections.
\begin{theorem}
[LRFC is indistinguishable from Haar-random] \label{thm:LRFC}
Let $\mathfrak{D}$ be any strong approximate unitary 2-design with additive error $\varepsilon$. For any $t$-query oracle adversary $\mathcal{A}$, we have
\begin{equation} \nonumber
    \left\lVert \E_{U \sim \text{\emph{LRFC}}} \left( \dyad*{ \mathcal{A}^{U}_t } \right) - \E_{U \sim H} \left( \dyad*{ \mathcal{A}^{U}_t } \right) \right\rVert_1 = \mathcal{O}(t^2/N^{1/16}) + \mathcal{O}(t^{5/8} \varepsilon^{1/8}).
\end{equation}
\end{theorem}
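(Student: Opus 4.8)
The plan is to assemble Theorem~\ref{thm:LRFC} from the chain of indistinguishability results established in the preceding subsections, together with the LRFC-specific analysis via the $\lrfo$ oracle. The overall strategy is a hybrid argument: we interpolate from the real LRFC ensemble to a Haar-random unitary through a sequence of intermediate oracles, bounding the trace distance introduced at each step.

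First I would reduce the LRFC ensemble with random functions $h_1, h_2, f$ to the purified Luby-Rackoff-Function oracle $\lrfo$ sandwiched by the $2$-design unitaries $C, D$. By \cref{claim:purified-vs-standard-ternary-LRFO}, the view of any oracle adversary querying $U = D \cdot \mathcal{O}^{\mathsf{R},h_2} \cdot \mathcal{O}^{\mathsf{L},h_1} \cdot \mathcal{O}^f \cdot C$ (and its $\dagger$, $*$, $T$ variants) with uniformly random functions is perfectly indistinguishable from the view of the adversary querying $\lrfo$ (and variants) on the purified registers, while the $C, D$ twirl is carried along unchanged. That is, $\E_{U \sim \text{LRFC}} (\dyad{\mathcal{A}^U_t}) = \Tr_{\gsH_{\color{gray} 1} \gsH_{\color{gray} 2} \gsF \gsC \gsD}(\dyad{\mathcal{A}^{\lrfo, \mathfrak{D}}_t})$ after tracing out the purification and Clifford registers. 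Then I would chain the four lemmas proved in the LRFC section: \cref{lem:closeness-LRFpD-and-LRFD} replaces twirled $\lrfo$ by twirled projected $\lrfo'$ at cost $\mathcal{O}(t^2/N^{1/16}) + \mathcal{O}(t^{5/8}\varepsilon^{1/8})$; \cref{lem:closeness-AWpD-and-LRFpD} replaces projected $\lrfo'$ by projected path-recording $W'$ at cost $t(t+1)/2N$; \cref{lem:closeness-AWD-and-AWpD} replaces twirled $W'$ by twirled $W$ at cost $\frac{\sqrt{70}t(t-1)}{N^{1/8}} + \frac{2t(t+1)}{N^{1/4}} + 4t^{5/4}\varepsilon^{1/4}$; and finally \cref{lem:closeness-AWD-and-PhiVt} together with \cref{theorem:haar-cho-strong} replaces twirled $W$ by the genuine path-recording oracle $V, \overline{V}$ and then by a Haar-random unitary, at additional cost $\mathcal{O}(t/N^{1/8}) + \mathcal{O}(t^{1/4}\varepsilon^{1/4})$ plus $\frac{9t(t+1)}{N^{1/8}}$.

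Concretely I would write $\| \E_{U \sim \text{LRFC}}(\dyad{\mathcal{A}^U_t}) - \E_{U \sim H}(\dyad{\mathcal{A}^U_t}) \|_1$ and insert, one at a time, the hybrids $\Tr(\dyad{\mathcal{A}^{\lrfo,\mathfrak{D}}_t})$, $\Tr(\dyad{\mathcal{A}^{\lrfo',\mathfrak{D}}_t})$, $\Tr(\dyad{\mathcal{A}^{W',\mathfrak{D}}_t})$, $\Tr(\dyad{\mathcal{A}^{W,\mathfrak{D}}_t})$, $\Tr(\dyad{\mathcal{A}^{V,\overline{V}}_t})$, and $\E_{U\sim H}\Tr(\dyad{\mathcal{A}^{U,\mathfrak{D}}_t}) = \E_{U\sim H}(\dyad{\mathcal{A}^U_t})$ (the last equality because the $2$-design twirl on a genuine unitary can be absorbed into the query sequence). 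By the triangle inequality for the trace norm, the total error is the sum of the individual bounds, which is dominated by the $\mathcal{O}(t^2/N^{1/16})$ and $\mathcal{O}(t^{5/8}\varepsilon^{1/8})$ terms — all other contributions ($1/N^{1/8}$, $1/N^{1/4}$, $1/N$, $\varepsilon^{1/4}$) are of strictly smaller order in $N$ and $\varepsilon$, so they are absorbed into the big-$\mathcal{O}$. One should also remember that for a Clifford $2$-design $\varepsilon = 0$ the bound collapses to $\mathcal{O}(t^2/N^{1/16})$.

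The main obstacle is not in the final assembly — which is a routine triangle-inequality chaining once the lemmas are in hand — but in making sure the hybrids are stated with mutually compatible register conventions so that the partial traces line up: the $\lrfo$-side states live on $\gsA \gsB \gsH_{\color{gray} 1} \gsH_{\color{gray} 2} \gsF \gsC \gsD$, the $W$-side states on $\gsA \gsB \gsL \gsR \gsC \gsD$, and the $V$-side states on $\gsA \gsB \gsL \gsR$, so each transition lemma implicitly traces out the registers that differ. I would therefore state each step as an inequality between the \emph{reduced} states on $\gsA \gsB$ (using that the trace norm is non-increasing under partial trace, so e.g. \cref{lem:closeness-LRFpD-and-LRFD}'s bound on the full states on all registers descends to a bound on the $\gsA\gsB$-reduced states), and only at the very end invoke \cref{lem:closeness-AWD-and-PhiVt} and \cref{theorem:haar-cho-strong} which are already phrased in terms of the $\gsA\gsB$-reduced states. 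Once this bookkeeping is fixed, the proof is a two-line computation.
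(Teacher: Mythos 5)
Your proposal follows exactly the paper's own proof: the same chain of hybrids (LRFC $=$ twirled $\lrfo$ $\to$ twirled $\lrfo'$ $\to$ twirled $W'$ $\to$ twirled $W$ $\to$ $V,\overline{V}$ $\to$ Haar), invoking the same four lemmas plus Theorem~\ref{theorem:haar-cho-strong}, assembled by the triangle inequality and the data-processing inequality for the partial traces, with the dominant terms $\mathcal{O}(t^2/N^{1/16})$ and $\mathcal{O}(t^{5/8}\varepsilon^{1/8})$ coming from the $\lrfo'$-to-$\lrfo$ step. Apart from traversing the chain in the opposite direction and minor constant bookkeeping, this is the paper's argument, and it is correct.
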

\begin{proof}
Our proof follows from the results in the preceding subsections in four steps.
At each step, we bound the trace distance between two density matrices.
The density matrices are,
\begin{align}
    \rho^{(0)} & = \E_{U \sim H}  \big( \dyad*{ \mathcal{A}^{U}_t }_{ \gsA \gsB} \big) \\
    \rho^{(1)} & = \Tr_{\gsL \gsR \gsC \gsD} \big( \dyad*{ \mathcal{A}^{W, \mathfrak{D}}_t }_{ \gsA \gsB \gsL \gsR \gsC \gsD} \big) \\
    \rho^{(2)} & = \Tr_{\gsL \gsR \gsC \gsD} \big( \dyad*{ \mathcal{A}^{W', \mathfrak{D}}_t }_{ \gsA \gsB \gsL \gsR \gsC \gsD} \big) \\
    \rho^{(3)} & = \Tr_{\gsH_{\color{gray} \mathsf{1}} \gsH_{\color{gray} \mathsf{2}} \gsF \gsC \gsD} \big( \dyad*{ \mathcal{A}^{\lrfo', \mathfrak{D}}_t }_{ \gsA \gsB \gsH_{\color{gray} \mathsf{1}} \gsH_{\color{gray} \mathsf{2}} \gsF \gsC \gsD} \big) \\
    \rho^{(4)} & = \E_{U \sim \text{LRFC}}  \big( \dyad*{ \mathcal{A}^{U}_t }_{ \gsA \gsB} \big) 
    = \Tr_{\gsH_{\color{gray} \mathsf{1}} \gsH_{\color{gray} \mathsf{2}} \gsF \gsC \gsD} \big( \dyad*{ \mathcal{A}^{\lrfo, \mathfrak{D}}_t }_{ \gsA \gsB \gsH_{\color{gray} \mathsf{1}} \gsH_{\color{gray} \mathsf{2}} \gsF \gsC \gsD} \big)
\end{align}
The first density matrix is the expected output state of an experiment that queries a Haar-random unitary $U$.
The last density matrix is the expected output of an experiment that queries the a random LRFC unitary.
The intermediary density matrices denote the output state of experiments in which the action of each unitary is replaced with a path-recording oracle from the previous sections.

In the previous sections, we have already bounded the trace distance between each pair of density matrices. These are:
\begin{enumerate}
\setcounter{enumi}{0}

\item $\lVert \rho^{(0)} - \rho^{(1)} \rVert_1 \leq 18t(t+1)/N^{1/8}$ \hspace*{0pt}\hfill  (Lemma~\ref{lem:closeness-AWD-and-PhiVt} and Theorem~\ref{theorem:haar-cho-strong})

\item $\lVert \rho^{(1)} - \rho^{(2)} \rVert_1 \leq \sqrt{70} t(t-1)/N^{1/8} + 2 t(t+1)/N^{1/4} + 4 t^{5/4} \varepsilon^{1/4}$ \hspace*{0pt}\hfill 
 (Lemma~\ref{lem:closeness-AWD-and-AWpD})

\item $\lVert \rho^{(2)} - \rho^{(3)} \rVert_1  \leq t(t+1)/2N$ \hspace*{0pt}\hfill (Lemma~\ref{lem:closeness-AWpD-and-LRFpD})

\item $\lVert \rho^{(3)} - \rho^{(4)} \rVert_1 = \mathcal{O}(t^2/N^{1/16}) + \mathcal{O}(t^{5/8} \varepsilon^{1/8})$ \hspace*{0pt}\hfill (Lemma~\ref{lem:closeness-LRFpD-and-LRFD})

\end{enumerate}
\noindent By the triangle inequality, the total trace distance, $\lVert \rho^{(0)} - \rho^{(4)} \rVert_1$, is less than the sum of the four distances above.
This yields $\lVert \rho^{(0)} - \rho^{(4)} \rVert_1 = \mathcal{O}(t^2/N^{1/16})$ as claimed. 
\end{proof}

\subsection{Proof of Theorem~\ref{thm:LRFC-design}: LRFC is a strong unitary design}

We let $\mathfrak{D}$ be an exact unitary 2-design~\cite{webb2015clifford,cleve2015near} with additive error zero.
By definition, the output of any quantum experiment that queries any combination of $U$, $U^\dagger$, $U^*$, $U^T$ up to $k$ times is identical whether $f, h_1, h_2$ are $2k$-wise independent random functions versus truly random functions.
From Theorem~\ref{thm:LRFC}, the output of any quantum experiment that queries the truly random LRFC ensemble $k$ times is close to the output of the same experiment that queries a Haar-random unitary, up to trace distance $\mathcal{O}(k^2/N^{1/6})$ where $N = 2^n$.
Hence, the $2k$-wise independent variant of the LRFC ensemble forms an $\varepsilon$-approximate strong unitary $k$-design with $\varepsilon = \mathcal{O}(t^2/N^{1/6})$. \qed

\subsection{Proof of Theorem~\ref{thm:LRFC-PRU}: LRFC is a strong PRU}

We let $\mathfrak{D}$ be an exact unitary 2-design~\cite{webb2015clifford,cleve2015near} with additive error zero.
By definition, no subexponential-time quantum experiment can distinguish whether $f,h_1,h_2$ are PRFs (with security against any subexponential-time quantum adversary) versus truly random functions.
From Theorem~\ref{thm:LRFC}, the output of any quantum experiment that queries the truly random LRFC ensemble $k$ times is close to the output of the same experiment that queries a Haar-random unitary, up to trace distance $\mathcal{O}(k^2/N^{1/6})$ where $N=2^n$.
This is negligibly small for any $k$ subexponential in $n$.
Hence, the pseudorandom variant of the LRFC ensemble forms a strong PRU with security against any subexponential-time quantum adversary. \qed

\section{Gluing strong random unitaries} \label{app: gluing}

In this section, we provide a proof of the strong gluing lemma (Lemma~\ref{lemma: strong gluing}).
We then apply the strong gluing lemma to prove our Theorems~\ref{thm:two-layer-design} and~\ref{thm:two-layer-PRU} on the scrambled two-layer circuit ensemble.

\subsection{Proof of Lemma~\ref{lemma: strong gluing}: Gluing strong random unitaries} \label{sec: proof strong gluing}

Our proof is long but straightforward, and uses the path-recording framework introduced in Ref.~\cite{ma2024construct}.
We refer the reader to Appendices~\ref{app: PRU} and~\ref{app: LRFC} for a complete introduction to this framework and key notation.
In what follows, we begin in Appendix~\ref{sec: preliminaries} by introducing several new objects within the path-recording framework that will be useful in the strong gluing proof.
We then provide a summary of our proof in Appendix~\ref{sec: summary}.
Each step in the proof summary is then proven individually in the following Appendices~\ref{sec: twirled WABC to twirled projected WABC},~\ref{sec: projected WABC to projected WAB WBC}, and~\ref{sec: twirled projected WAB WBC to twirled WAB WBC}.
At a high-level, our proof follows a roughly similar approach to our analysis of the LRFC ensemble in Appendix~\ref{app: LRFC}.

\subsubsection{Preliminaries} \label{sec: preliminaries}

In this subsection, we provide a short overview of the new notation used in our proof.
%

\paragraph{Registers.} Our proof will apply the path-recording framework to the unitaries $U_{\mathsf{abc}}$ and $U_{\mathsf{bc}} U_{\mathsf{ab}}$.
To each Haar-random unitary, the path-recording framework associate two ancilla registers.
We denote these registers as $\mathsf{L}_{\mathsf{abc}}$ and $\mathsf{R}_{\mathsf{abc}}$, $\mathsf{L}_{\mathsf{bc}}$ and $\mathsf{R}_{\mathsf{bc}}$, and $\mathsf{L}_{\mathsf{ab}}$ and $\mathsf{R}_{\mathsf{ab}}$, for the three unitaries in consideration.
We denote the system register as $A = \mathsf{a} \cup \mathsf{b} \cup \mathsf{c}$.
As in Ref.~\cite{ma2024construct}, we also allow an arbitrary-sized physical register $\mathsf{B}$, as well as ancilla registers $\mathsf{C}$ and $\mathsf{D}$ which purify the twirl over $C, D \sim \mathfrak{D}$.

The registers $\mathsf{L}_{\mathsf{abc}}$, $\mathsf{R}_{\mathsf{abc}}$, $\mathsf{L}_{\mathsf{bc}}$, $\mathsf{R}_{\mathsf{bc}}$, $\mathsf{L}_{\mathsf{ab}}$, $\mathsf{R}_{\mathsf{ab}}$ contain relation states.
For $\mathsf{L}_{\mathsf{abc}}$ and $\mathsf{R}_{\mathsf{abc}}$, a relation state takes the form,
\begin{align}
    \ket{ L }_{\gsLabc} & = \ket{ \big\{ \big( x^i_\mathsf{a} x^i_\mathsf{b} x^i_\mathsf{c} , y^i_\mathsf{a} y^i_\mathsf{b} y^i_\mathsf{c} \big) : i \in [\ell] \big\} }_{\gsLabc} \\
    \ket{ R }_{\gsRabc} & = \ket{ \big\{ \big( x^j_\mathsf{a} x^j_\mathsf{b} x^j_\mathsf{c} , y^j_\mathsf{a} y^j_\mathsf{b} y^j_\mathsf{c} \big) : j \in [r] \big\} }_{\gsRabc},
\end{align}
where $x^i_\alpha$ is bitstring on subsystem $\alpha \in \{ \mathsf{a} , \mathsf{b}, \mathsf{c} \}$ (and similar for $y^i_\alpha$, $x^j_\alpha$, and $y^j_\alpha$).
Here, $\ell$ and $r$ denote the length of the relation state registers.
For $\mathsf{L}_{\mathsf{bc}}$ and $\mathsf{R}_{\mathsf{bc}}$ and $\mathsf{L}_{\mathsf{ab}}$ and $\mathsf{R}_{\mathsf{ab}}$, we write,
\begin{align}
    \ket{ L_{\mathsf{ab}} }_{\gsLab} \ket*{ L_{\mathsf{bc}} }_{\gsLbc} 
    & = \ket*{ \big\{ \big( x^i_\mathsf{a} x^i_\mathsf{b} , y^i_\mathsf{a} z^i_\mathsf{b} \big) : i \in [\ell] \big\} }_{\gsLab}
    \ket*{ \big\{ \big( z'^i_\mathsf{b} x^i_\mathsf{c} , y^i_\mathsf{b} y^i_\mathsf{c} \big) : i \in [\ell'] \big\} }_{\gsLbc}\\
    \ket{ R_{\mathsf{ab}} }_{\gsRab} \ket*{ R_{\mathsf{bc}} }_{\gsRbc} 
    & = \ket*{ \big\{ \big( x^j_\mathsf{a} x^j_\mathsf{b} , y^j_\mathsf{a} z^j_\mathsf{b} \big) : j \in [r] \big\} }_{\gsRab}
    \ket*{ \big\{ \big( z'^j_\mathsf{b} x^j_\mathsf{c} , y^j_\mathsf{b} y^j_\mathsf{c} \big) : j \in [r'] \big\} }_{\gsRbc},
\end{align}
where $z^i_\mathsf{b}$, $z'^i_\mathsf{b}$, $z^j_\mathsf{b}$, $z'^j_\mathsf{b}$ are bitstrings on subsystem $\mathsf{b}$.
We use a different character, $z$, for these bitstrings, because they will correspond to the bitstrings that appear ``between'' $U_{\mathsf{bc}}$ and $U_{\mathsf{ab}}$ in $U_{\mathsf{bc}} U_{\mathsf{ab}}$, and will play a special role in our proof.

\paragraph{Projectors on $\mathsf{L}_\mathsf{abc}$ and $\mathsf{R}_\mathsf{abc}$.}

We will often wish to restrict attention to certain subsets of the relation states.
As in Appendix~\ref{app: LRFC}, we have the projector onto bijective relation states,
\begin{equation}
    \Pi^{\mathsf{bij}}_{\gsLabc \gsRabc} 
    \ket{ L }_{\gsLabc}
   \ket{ R }_{\gsRabc} 
   = 
   \begin{cases}
       \ket{ L }_{\gsLabc} \ket{ R }_{\gsRabc}, & \text{if } \mathsf{Dom}(L \cup R) \in \mathsf{ distinct }, \\
       & \text{and } \mathsf{Im}(L \cup R) \in \mathsf{ distinct } \\
       0, & \text{else}. \\
   \end{cases}
\end{equation}
Here, $\mathsf{Dom}(L \cup R) = \{ x^i_{\mathsf{a}} x^i_{\mathsf{b}} x^i_{\mathsf{c}} : i \in [\ell] \} \cup \{ x^j_{\mathsf{a}} x^j_{\mathsf{b}} x^j_{\mathsf{c}} : j \in [r] \}$,
and 
$\mathsf{Im}(L \cup R) = \{ y^i_{\mathsf{a}} y^i_{\mathsf{b}} y^i_{\mathsf{c}} : i \in [\ell] \} \cup \{ y^j_{\mathsf{a}} y^j_{\mathsf{b}} y^j_{\mathsf{c}} : j \in [r] \}$.
Identical to before, we also define the not-in-domain projector on $\mathsf{ALR}$ as,
\begin{equation}
    \Pi^{\notin \mathsf{Dom}}_{\gsA \gsLabc \gsRabc} 
    \ket{ x_\mathsf{a} x_\mathsf{b} x_\mathsf{c} }_{\gsA}
    \ket{ L }_{\gsLabc}
   \ket{ R }_{\gsRabc} 
   = 
   \begin{cases}
       \ket{ x_\mathsf{a} x_\mathsf{b} x_\mathsf{c} }_{\gsA}\ket{ L }_{\gsLabc} \ket{ R }_{\gsRabc}, & \text{if } x_\mathsf{a} x_\mathsf{b} x_\mathsf{c} \notin \mathsf{Dom}(L \cup R), \\
       0, & \text{else}, \\
   \end{cases}
\end{equation}
and the not-in-image projector as,
\begin{equation}
    \Pi^{\notin \mathsf{Im}}_{\gsA \gsLabc \gsRabc} 
    \ket{ y_\mathsf{a} y_\mathsf{b} y_\mathsf{c} }_{\gsA}
    \ket{ L }_{\gsLabc}
   \ket{ R }_{\gsRabc} 
   = 
   \begin{cases}
       \ket{ y_\mathsf{a} y_\mathsf{b} y_\mathsf{c} }_{\gsA}\ket{ L }_{\gsLabc} \ket{ R }_{\gsRabc}, & \text{if } y_\mathsf{a} y_\mathsf{b} y_\mathsf{c} \notin \mathsf{Im}(L \cup R), \\
       0, & \text{else}. \\
   \end{cases}
\end{equation}
For the strong gluing proof, we will also introduce new local variants of the above projectors.
These are extremely similar to those defined in Appendix~\ref{app: LRFC}.
For this reason, we use the same notation here as in Appendix~\ref{app: LRFC}, even though the precise definitions are slightly different.
We define the \emph{locally bijective} relation states via the projector,
\begin{equation}
    \Pi^{\mathsf{locbij}}_{\gsLabc \gsRabc} 
    \ket{ L }_{\gsLabc}
   \ket{ R }_{\gsRabc} 
   = 
   \begin{cases}
       \ket{ L }_{\gsLabc} \ket{ R }_{\gsRabc}, & \text{if } \mathsf{Dom}(L \cup R)_\alpha \in \mathsf{ distinct }, \forall \alpha = \mathsf{a}, \mathsf{b}, \mathsf{c} \\
       & \text{and } \mathsf{Im}(L \cup R)_\alpha \in \mathsf{ distinct }, \forall  \alpha = \mathsf{a}, \mathsf{b}, \mathsf{c} \\
       0, & \text{else}, \\
   \end{cases}
\end{equation}
where $\mathsf{Dom}(L \cup R)_\alpha = \{ x^i_\alpha : i \in [\ell] \} \cup \{ x^j_\alpha : j \in [r] \}$,
and 
$\mathsf{Im}(L \cup R) = \{ y^i_\alpha : i \in [\ell] \} \cup \{ y^j_\alpha : j \in [r] \}$.
Similarly, we define the not-in-local-domain and not-in-local-image projectors as,
\begin{align}
    \Pi^{\notin \mathsf{locDom}}_{\gsA \gsLabc \gsRabc} 
    \ket{ x_\mathsf{a} x_\mathsf{b} x_\mathsf{c} }_{\gsA}
    \ket{ L }_{\gsLabc}
   \ket{ R }_{\gsRabc} 
   & = 
   \begin{cases}
       \ket{ x_\mathsf{a} x_\mathsf{b} x_\mathsf{c} }_{\gsA}\ket{ L }_{\gsLabc} \ket{ R }_{\gsRabc}, & \text{if } x_\alpha \notin \mathsf{Dom}(L \cup R)_\alpha, \forall  \alpha = \mathsf{a}, \mathsf{b}, \mathsf{c} \\
       0, & \text{else}, \\
   \end{cases} \\
    \Pi^{\notin \mathsf{locIm}}_{\gsA \gsLabc \gsRabc} 
    \ket{ y_\mathsf{a} y_\mathsf{b} y_\mathsf{c} }_{\gsA}
    \ket{ L }_{\gsLabc}
   \ket{ R }_{\gsRabc} 
   & = 
   \begin{cases}
       \ket{ y_\mathsf{a} y_\mathsf{b} y_\mathsf{c} }_{\gsA}\ket{ L }_{\gsLabc} \ket{ R }_{\gsRabc}, & \text{if } y_\alpha \notin \mathsf{Im}(L \cup R)_\alpha, \,\, \forall \alpha = \mathsf{a}, \mathsf{b}, \mathsf{c} \\
       0, & \text{else}. \\
   \end{cases}
\end{align}
We also let,
\begin{align}
    \Pi^{\notin \mathsf{Dom} \rightarrow \notin \mathsf{locDom}}_{\gsA \gsLabc \gsRabc} & = 
    1 
    - \Pi^{\mathsf{Dom}}_{\gsA \gsLabc \gsRabc}
    + \Pi^{\mathsf{locDom}}_{\gsA \gsLabc \gsRabc}  \\
     \Pi^{\notin \mathsf{Im} \rightarrow \notin \mathsf{locIm}}_{\gsA \gsLabc \gsRabc} & = 
    1 
    - \Pi^{\mathsf{Im}}_{\gsA \gsLabc \gsRabc}
    + \Pi^{\mathsf{locIm}}_{\gsA \gsLabc \gsRabc},
\end{align}
denote projectors which do nothing if $x_\mathsf{a} x_\mathsf{b} x_\mathsf{c} \in \mathsf{Dom}(L\cup R)$, and project to the not-in-local-domain subspace if $x_\mathsf{a} x_\mathsf{b} x_\mathsf{c} \notin \mathsf{Dom}(L\cup R)$ (and similar for $\mathsf{Im}$).

\paragraph{Projectors on $\mathsf{L}_\mathsf{ab}$, $\mathsf{R}_\mathsf{ab}$, $\mathsf{L}_\mathsf{bc}$, $\mathsf{R}_\mathsf{bc}$ and the $\mathsf{Compress}$ partial isometry.}

Let us now turn to the registers $\mathsf{L}_{\mathsf{bc}}$, $\mathsf{R}_{\mathsf{bc}}$, $\mathsf{L}_{\mathsf{ab}}$, $\mathsf{R}_{\mathsf{ab}}$.
A key component of our proof is to construct a partial isometry between relation states on $\mathsf{L}_{\mathsf{abc}} \otimes \mathsf{R}_{\mathsf{abc}}$ and those on $\mathsf{L}_{\mathsf{ab}} \otimes \mathsf{R}_{\mathsf{ab}} \otimes \mathsf{L}_{\mathsf{bc}} \otimes \mathsf{R}_{\mathsf{bc}}$.
Namely, for any state in $\Pi^{\mathsf{locbij}}_{\gsLabc \gsRabc}$ on $\mathsf{L}_{\mathsf{abc}} \otimes \mathsf{R}_{\mathsf{abc}}$, we define an ``un-compressed'' state,
\begin{equation}
\begin{split}
    \mathsf{Compress}^\dagger & \cdot 
    \ket*{ \big\{ \big( x^i_\mathsf{a} x^i_\mathsf{b} x^i_\mathsf{c} , y^i_\mathsf{a} y^i_\mathsf{b} y^i_\mathsf{c} \big) : i \in [\ell] \big\} }_{\gsLabc}
    \ket*{ \big\{ \big( x^j_\mathsf{a} x^j_\mathsf{b} x^j_\mathsf{c} , y^j_\mathsf{a} y^j_\mathsf{b} y^j_\mathsf{c} \big) : j \in [r] \big\} }_{\gsRabc} \\
    & = 
    \frac{1}{\sqrt{ N_\mathsf{b}^{\ell+r} }} \sum_{z_\ell, z_r}
    \ket*{ \big\{ \big( x^i_\mathsf{a} x^i_\mathsf{b} , y^i_\mathsf{a} z^i_\mathsf{b} \big) : i \in [\ell] \big\} }_{\gsLab}
    \ket*{ \big\{ \big( z^i_\mathsf{b} x^i_\mathsf{c} , y^i_\mathsf{b} y^i_\mathsf{c} \big) : i \in [\ell] \big\} }_{\gsLbc}
    \\
    & \quad \quad \quad \quad \quad \quad \quad \quad \otimes 
    \ket*{ \big\{ \big( x^j_\mathsf{a} x^j_\mathsf{b} , y^j_\mathsf{a} z^j_\mathsf{b} \big) : j \in [r] \big\} }_{\gsRab}
    \ket*{ \big\{ \big( z^j_\mathsf{b} x^j_\mathsf{c} , y^j_\mathsf{b} y^j_\mathsf{c} \big) : j \in [r] \big\} }_{\gsRbc},
\end{split}
\end{equation}
where we abbreviate $z_\ell \equiv \{ z^i_\mathsf{b} : i \in [\ell] \}$, $z_r \equiv \{ z^j_\mathsf{b} : j \in [r] \}$.
The final state is a valid relation state because $\{ y^i_\mathsf{a} : i \in [\ell] \} \cup \{ y^j_\mathsf{a} : j \in [r] \}$ and $\{ x^i_\mathsf{c} : i \in [\ell] \} \cup \{ x^j_\mathsf{c} : j \in [r] \}$ are distinct by assumption.
We define $\mathsf{Compress}$ as the adjoint of this operation.

The range of $\mathsf{Compress}^\dagger$ on $\mathsf{L}_{\mathsf{ab}} \otimes \mathsf{R}_{\mathsf{ab}} \otimes \mathsf{L}_{\mathsf{bc}} \otimes \mathsf{R}_{\mathsf{bc}}$ consists of the \emph{paired} relation states,
\begin{equation}
\begin{split}
    \frac{1}{\sqrt{ N_\mathsf{b}^{\ell+r} }} \sum_{z_\ell, z_r} &
    \ket{ L_{\mathsf{ab}}^{z_\ell} }_{\gsLab}
    \ket{ L_{\mathsf{bc}}^{z_\ell} }_{\gsLbc}
    \ket{ R_{\mathsf{ab}}^{z_r} }_{\gsRab}
    \ket{ R_{\mathsf{bc}}^{z_r} }_{\gsRbc}  \\
    & \quad \equiv \frac{1}{\sqrt{ N_\mathsf{b}^{\ell+r} }} \sum_{z_\ell, z_r}
    \ket*{ \big\{ \big( x^i_\mathsf{a} x^i_\mathsf{b} , y^i_\mathsf{a} z^i_\mathsf{b} \big) : i \in [\ell] \big\} }_{\gsLab}
    \ket*{ \big\{ \big( z^i_\mathsf{b} x^i_\mathsf{c} , y^i_\mathsf{b} y^i_\mathsf{c} \big) : i \in [\ell] \big\} }_{\gsLbc}
    \\
    & \quad \quad \quad \quad \quad \quad \quad \quad \quad \otimes 
    \ket*{ \big\{ \big( x^j_\mathsf{a} x^j_\mathsf{b} , y^j_\mathsf{a} z^j_\mathsf{b} \big) : j \in [r] \big\} }_{\gsRab}
    \ket*{ \big\{ \big( z^j_\mathsf{b} x^j_\mathsf{c} , y^j_\mathsf{b} y^j_\mathsf{c} \big) : j \in [r] \big\} }_{\gsRbc}, \\
\end{split}
\end{equation}
for any locally bijective $L \equiv \{ (x^i_\mathsf{a} x^i_\mathsf{b} x^i_\mathsf{c} , y^i_\mathsf{a} y^i_\mathsf{b} y^i_\mathsf{c} ) : i \in [\ell] \}$
and $R \equiv \{ (x^j_\mathsf{a} x^j_\mathsf{b} x^j_\mathsf{c} , y^j_\mathsf{a} y^j_\mathsf{b} y^j_\mathsf{c} ) : j \in [r] \}$.
We let $\Pi^{\mathsf{paired}}_{\gsLab \gsLbc \gsRab \gsRbc}$ denote the projector onto the set of states above.
We have
\begin{align}
    \mathsf{Compress} \cdot \mathsf{Compress}^\dagger 
    & = 
    \Pi^{\mathsf{locbij}}_{\gsLabc \gsRabc}
     \\
    \mathsf{Compress}^\dagger \cdot \mathsf{Compress} & = 
     \Pi^{\mathsf{paired}}_{\gsLab \gsLbc \gsRab \gsRbc},
\end{align}
by construction.

\subsubsection{Proof overview} \label{sec: summary}

Our proof proceeds in five steps.
At each step, we bound the trace distance between two density matrices.
The density matrices are,
\begin{align}
    \rho^{(0)} & = \E_{U_{\mathsf{abc}} \sim H} \big( \dyad*{ \mathcal{A}^{U_{\mathsf{abc}}, \mathfrak{D}}_t }_{ \gsA \gsB \gsC \gsD} \big) \\
    \rho^{(1)} & = \Tr_{\gsLabc \gsRabc \gsC \gsD} \big( \dyad*{ \mathcal{A}^{W_{\mathsf{abc}}, \mathfrak{D}}_t }_{ \gsA \gsB \gsLabc \gsRabc \gsC \gsD} \big) \\
    \rho^{(2)} & = \Tr_{\gsLabc \gsRabc \gsC \gsD} \big( \dyad*{ \mathcal{A}^{W_{\mathsf{abc}}', \mathfrak{D}}_t }_{ \gsA \gsB \gsLabc \gsRabc \gsC \gsD} \big) \\
    \rho^{(3)} & = \Tr_{\gsLab \gsLbc \gsRab \gsRbc \gsC \gsD} \big( \dyad*{ \mathcal{A}^{(W_{\mathsf{bc}} W_{\mathsf{ab}})', \mathfrak{D}}_t }_{ \gsA \gsB \gsLab \gsLbc \gsRab \gsRbc \gsC \gsD} \big) \\
    \rho^{(4)} & = \Tr_{\gsLab \gsLbc \gsRab \gsRbc \gsC \gsD} \big( \dyad*{ \mathcal{A}^{V_{\mathsf{bc}} V_{\mathsf{ab}}, \mathfrak{D}}_t }_{ \gsA \gsB \gsLab \gsLbc \gsRab \gsRbc \gsC \gsD} \big) \\
    \rho^{(5)} & = \E_{U_{\mathsf{ab}}, U_{\mathsf{bc}} \sim H } \big( \dyad*{ \mathcal{A}^{U_{\mathsf{bc}} U_{\mathsf{ab}}, \mathfrak{D}}_t }_{ \gsA \gsB \gsC \gsD} \big)
\end{align}
The first density matrix is the expected output state of an experiment that queries a Haar-random unitary $U_{\mathsf{abc}}$ (and its inverse, conjugate, and transpose).
The last density matrix is the expected output of an experiment that queries $U_{\mathsf{bc}} U_{\mathsf{ab}}$.

The intermediary density matrices denote the output state of experiments in which the action of each unitary is replaced with a path-recording oracle.
In particular, in the third step, $\ket*{ \mathcal{A}^{W_{\mathsf{abc}}', \mathfrak{D}}_t}$ denotes the state in which each application of $W^{}_{\mathsf{abc}}$, $W_{\mathsf{abc}}^\dagger$, $\overline{W}_{\mathsf{abc}}$, $\overline{W}_{\mathsf{abc}}^\dagger$ is replaced by the operators,
\begin{align}
    W_{\mathsf{abc}}' 
    & \equiv 
    \Pi^{\mathsf{locbij}}
    \cdot W_{\mathsf{abc}} \cdot
    \Pi^{\notin \mathsf{Dom} \rightarrow \notin \mathsf{locDom}} \\
    (W^\dagger_{\mathsf{abc}})'
    & \equiv 
    \Pi^{\mathsf{locbij}}
    \cdot W^\dagger_{\mathsf{abc}} \cdot
    \Pi^{\notin \mathsf{Im} \rightarrow \notin \mathsf{locIm}} \\
    \overline{W}_{\mathsf{abc}}' 
    & \equiv 
    \Pi^{\mathsf{locbij}}
    \cdot \overline{W}_{\mathsf{abc}} \cdot
    \Pi^{\notin \mathsf{Dom} \rightarrow \notin \mathsf{locDom}} \\
    (\overline{W}^\dagger_{\mathsf{abc}})'
    & \equiv 
    \Pi^{\mathsf{locbij}}
    \cdot \overline{W}^\dagger_{\mathsf{abc}} \cdot
    \Pi^{\notin \mathsf{Im} \rightarrow \notin \mathsf{locIm}},
\end{align}
respectively.
Similarly, in the fourth step, $\ket*{ \mathcal{A}^{(W_{\mathsf{bc}} W_{\mathsf{ab}})', \mathfrak{D}}_t }$ denotes the state in which each application of $W^{}_{\mathsf{bc}} W^{}_{\mathsf{ab}}$, $W^{\dagger}_{\mathsf{ab}} W^{\dagger}_{\mathsf{bc}}$, $\overline{W}_{\mathsf{bc}} \overline{W}_{\mathsf{ab}}$, $ \overline{W}^{\dagger}_{\mathsf{ab}} \overline{W}^{\dagger}_{\mathsf{bc}}$ is replaced by,
\begin{align}
    (W_{\mathsf{bc}} W_{\mathsf{ab}})' 
    & \equiv 
    \Pi^{\mathsf{paired}} \cdot
    W_{\mathsf{bc}} W_{\mathsf{ab}} \cdot
    \widetilde{\Pi}^{\mathcal{D}(W_{\mathsf{abc}})} \cdot
    \widetilde{\Pi}^{\notin \mathsf{Dom} \rightarrow \notin \mathsf{locDom}} \\
    (W^\dagger_{\mathsf{ab}} W^\dagger_{\mathsf{bc}})' 
    & \equiv 
    \Pi^{\mathsf{paired}} \cdot
    W^\dagger_{\mathsf{ab}} W^\dagger_{\mathsf{bc}}  \cdot
    \widetilde{\Pi}^{\mathcal{D}(W^\dagger_{\mathsf{abc}})} \cdot
    \widetilde{\Pi}^{\notin \mathsf{Im} \rightarrow \notin \mathsf{locIm}} \\
    (\overline{W}_{\mathsf{bc}} \overline{W}_{\mathsf{ab}})' 
    & \equiv 
    \Pi^{\mathsf{paired}} \cdot
    \overline{W}_{\mathsf{bc}} \overline{W}_{\mathsf{ab}} \cdot
    \widetilde{\Pi}^{\mathcal{D}(\overline{W}_{\mathsf{abc}})} \cdot
    \widetilde{\Pi}^{\notin \mathsf{Dom} \rightarrow \notin \mathsf{locDom}} \\
    (\overline{W}^\dagger_{\mathsf{ab}} \overline{W}^\dagger_{\mathsf{bc}})' 
    & \equiv 
    \Pi^{\mathsf{paired}} \cdot
    \overline{W}^\dagger_{\mathsf{ab}} \overline{W}^\dagger_{\mathsf{bc}}  \cdot
    \widetilde{\Pi}^{\mathcal{D}(\overline{W}^\dagger_{\mathsf{abc}})} \cdot
    \widetilde{\Pi}^{\notin \mathsf{Im} \rightarrow \notin \mathsf{locIm}},
\end{align}
respectively. Here, we let $\widetilde{\Pi} \equiv \mathsf{Compress}^\dagger \cdot \Pi \cdot \mathsf{Compress}$, for any projector $\Pi$.

We bound the trace distance between each pair of density matrices as follows.
\begin{enumerate}
\setcounter{enumi}{0}

\item $\lVert \rho^{(0)} - \rho^{(1)} \rVert_1 \leq 9t(t+2)/N_{\mathsf{abc}}^{1/8} + 2t^{1/4}\varepsilon^{1/4}$ \hspace*{0pt}\hfill  (Theorem~\ref{theorem:haar-cho-strong} and Lemma~\ref{lem:closeness-AWD-and-PhiVt})

\item $\lVert \rho^{(1)} - \rho^{(2)} \rVert_1 \leq 17 t^2/N_{\mathsf{abc}}^{1/8} + 7 t^{3/2}/ (\min_\alpha N_\alpha)^{1/2} + 6 t^{5/4} \varepsilon^{1/4}$ \hspace*{0pt}\hfill 
 (Section~\ref{sec: twirled WABC to twirled projected WABC})

\item $\lVert \rho^{(2)} - \rho^{(3)} \rVert_1 \leq t^2 / N_{\mathsf{ab}} + t^2 / N_{\mathsf{bc}}$ \hspace*{0pt}\hfill (Section~\ref{sec: projected WABC to projected WAB WBC})

\item $\lVert \rho^{(3)} - \rho^{(4)} \rVert_1 \leq 2t \sqrt{ \frac{17t^2}{N_{\mathsf{abc}}^{1/8}} + \frac{7 t^{3/2}}{(\min_\alpha N_\alpha)^{1/2}}  + \frac{2t^2}{N_{\mathsf{ab}}} + \frac{2t^2}{N_{\mathsf{bc}}} +\frac{9 t}{N_{\mathsf{abc}}^{1/8}}+ 8 t^{5/4} \varepsilon^{1/4}}$ \hspace*{0pt}\hfill (Section~\ref{sec: twirled projected WAB WBC to twirled WAB WBC})

\item $\lVert \rho^{(4)} - \rho^{(5)} \rVert_1 \leq 9t(t+1)/N_{\mathsf{ab}}^{1/8} + 9t(t+1)/N_{\mathsf{bc}}^{1/8}$ \hspace*{0pt}\hfill (Theorem~\ref{theorem:haar-cho-strong})

\end{enumerate}
\noindent By the triangle inequality, the total trace distance, $\lVert \rho^{(0)} - \rho^{(5)} \rVert_1$, is less than the sum of the five distances above.
If each local Hilbert space has dimension at least $N_\alpha \geq 2^{\xi}$, then we have $\lVert \rho^{(0)} - \rho^{(5)} \rVert_1 \leq \mathcal{O}\big( t^2/2^{(3/16) \xi} \big) + \mathcal{O}\big( t^{5/8} \varepsilon^{1/8} \big)$ as claimed. \qed

\subsubsection{Twirled $W_{\mathsf{abc}}$ is indistinguishable from twirled projected $W_{\mathsf{abc}}$} \label{sec: twirled WABC to twirled projected WABC}

We will prove that
\begin{equation} \label{eq: TDt}
    \mathsf{TD}_t \equiv 
    \mathsf{TD}\left( 
      \dyad*{ \mathcal{A}^{W_{\mathsf{abc}}, \mathfrak{D}}_t }
    , 
    \dyad*{ \mathcal{A}^{W_{\mathsf{abc}}', \mathfrak{D}}_t }
    \right)
    \leq 
    \frac{2 \sqrt{70} t^2}{N_{\mathsf{abc}}^{1/8}} + \frac{4\sqrt{3} t^{3/2}}{(\min_\alpha N_\alpha)^{1/2}}  + 6 t^{5/4} \varepsilon^{1/4}.
\end{equation}
This implies that $\lVert \rho^{(1)} - \rho^{(2)} \rVert_1$ is less than the same value since the 1-norm cannot increase after tracing out $\mathsf{L}_{\mathsf{abc}}  \mathsf{R}_{\mathsf{abc}}$.
The claim follows since $2\sqrt{70} < 17$ and $4\sqrt{3} < 7$.

We proceed by induction.
The statement holds trivially at $t=0$.
To prove the inductive step, suppose that the Eq.~(\ref{eq: TDt}) holds up to time $t-1$ for any $t \geq 1$.
Without loss of generality, we assume that the forward unitary is applied at time $t$.
The case when the inverse or conjugate or transpose are applied follow by symmetric arguments.
The states at time $t$ are obtained from the states at time $t-1$ as follows,
\begin{align}
    \ket*{ \mathcal{A}^{W_{\mathsf{abc}}, \mathfrak{D}}_{t} }
    & =
    \mathsf{cD}
    \cdot
    W_{\mathsf{abc}}
    \cdot
    \mathsf{cC}
    \cdot
    A_{t}
    \cdot
    \ket*{ \mathcal{A}^{W_{\mathsf{abc}}, \mathfrak{D}}_{t-1} } \\
    \ket*{ \mathcal{A}^{W_{\mathsf{abc}}', \mathfrak{D}}_{t} }
    & =
    \mathsf{cD}
    \cdot
    \Pi^{\mathsf{locbij}}
    \cdot W_{\mathsf{abc}} \cdot
    \Pi^{\notin \mathsf{Dom} \rightarrow \notin \mathsf{locDom}}
    \cdot
    \mathsf{cC}
    \cdot
    A_{t}
    \cdot
    \ket*{ \mathcal{A}^{W_{\mathsf{abc}}', \mathfrak{D}}_{t-1} }
\end{align}
We have 
\begin{equation} \label{eq: step 1}
\begin{split}
    \mathsf{TD}_{t} \leq \mathsf{TD}_{t} 
    & + 2 \left\lVert \big( 1- \Pi^{\notin \mathsf{Dom} \rightarrow \notin \mathsf{locDom}} \big) \cdot
    \mathsf{cC} \cdot A_{t} 
    \ket*{ \mathcal{A}^{W_{\mathsf{abc}}, \mathfrak{D}}_{t-1} }
     \right\rVert_2 \\
    & \quad + 2 \left\lVert
    \big( 1 - \Pi^{\mathsf{locbij}} \big)
    \cdot W_{\mathsf{abc}} \cdot
    \Pi^{\notin \mathsf{Dom} \rightarrow \notin \mathsf{locDom}}
    \cdot
    \mathsf{cC}
    \cdot
    A_{t}
    \cdot
    \ket*{ \mathcal{A}^{W_{\mathsf{abc}}', \mathfrak{D}}_{t-1} }
    \right\rVert_2, \\
\end{split}
\end{equation}
where the first term accounts for the error up to time $t$, the second term for the error induced by the projector $\Pi^{\notin \mathsf{Dom} \rightarrow \notin \mathsf{locDom}}$ [using Eq.~(\ref{eq: state to op bound})], and the third term for the error induced by the projector $\Pi^{\mathsf{locbij}}$ [again using Eq.~(\ref{eq: state to op bound})].

We bound the second term as follows.
From Eq.~(9.55) and Claim 18 of Ref.~\cite{ma2024construct} (see also the modification of Claim 18 to strong approximate designs in the proof of Lemma~\ref{lem:closeness-AWD-and-PhiVt}), we have
\begin{equation}
    \left\lVert 
    \ket*{ \mathcal{A}^{W_{\mathsf{abc}}, \mathfrak{D}}_{t-1} }
    -
    \mathsf{cQ} \cdot \ket*{ \mathcal{A}^{V_{\mathsf{abc}}}_{t-1} }
    \right\rVert_2 \leq \frac{\sqrt{70}t}{N_{\mathsf{abc}}^{1/8}} +  2 t^{1/4} \varepsilon^{1/4}.
\end{equation}
This yields,
\begin{equation}
\begin{split}
    & \left\lVert \big( 1- \Pi^{\notin \mathsf{Dom} \rightarrow \notin \mathsf{locDom}} \big)
    \cdot \mathsf{cC} \cdot A_{t} \cdot
    \ket*{ \mathcal{A}^{W_{\mathsf{abc}}, \mathfrak{D}}_{t-1} }
     \right\rVert_2  \\
     & \quad  \quad \quad \quad \quad  \quad \leq 
     \left\lVert \big( 1- \Pi^{\notin \mathsf{Dom} \rightarrow \notin \mathsf{locDom}} \big) \cdot
    \mathsf{cC} \cdot A_{t} \cdot
    \mathsf{cQ} \cdot \ket*{ \mathcal{A}^{V_{\mathsf{abc}}}_{t-1} }
     \right\rVert_2
     +
     \frac{\sqrt{70}t}{N_{\mathsf{abc}}^{1/8}} + +  2 t^{1/4} \varepsilon^{1/4}.
\end{split}
\end{equation}
The latter state norm can be written out explicitly, as
\begin{equation}
\begin{split}
    & \left\lVert \big( 1- \Pi^{\notin \mathsf{Dom} \rightarrow \notin \mathsf{locDom}} \big) \cdot
    \mathsf{cC} \cdot A_{t} \cdot
    \mathsf{cQ} \cdot \ket*{ \mathcal{A}^{V_{\mathsf{abc}}}_{t-1} }
     \right\rVert_2  \\
     & \quad \quad  \quad \quad \quad  \quad =
     \sqrt{
     \bra*{ \mathcal{A}^{V_{\mathsf{abc}}}_{t-1} }
      \cdot A_{t}^\dagger \cdot 
     \mathsf{cQ}^\dagger \cdot 
     \mathsf{cC}^\dagger \cdot 
     \big( 1- \Pi^{\notin \mathsf{Dom} \rightarrow \notin \mathsf{locDom}} \big) 
     \cdot \mathsf{cC} \cdot 
    \mathsf{cQ}   \cdot A_{t} \cdot  \ket*{ \mathcal{A}^{V_{\mathsf{abc}}}_{t-1} }
     }.
\end{split}
\end{equation}
where we used that $A_{t}$ and $\mathsf{cQ}$ act on distinct registers to commute them past one another.

To proceed, we first apply the operator inequality,
\begin{equation}
    1 - \Pi^{\notin \mathsf{Dom} \rightarrow \notin \mathsf{locDom}}  \preceq 
    \sum_\alpha \sum_{i \in [\ell]} 
    \Pi^{\mathsf{eq}}_{\mathsf{A}_{\alpha} \mathsf{L}_{\mathsf{X_\alpha, i}}^{(\ell)}}
    \Pi^{\mathsf{neq}}_{\mathsf{A}_{\bar \alpha} \mathsf{L}_{\mathsf{X_{\bar \alpha}, i}}^{(\ell)}}
    +
    \sum_\alpha \sum_{j \in [r]} 
    \Pi^{\mathsf{eq}}_{\mathsf{A}_{\alpha} \mathsf{R}_{\mathsf{X_\alpha, j}}^{(r)}}
    \Pi^{\mathsf{neq}}_{\mathsf{A}_{\bar \alpha} \mathsf{R}_{\mathsf{X_{\bar \alpha}, j}}^{(r)}},
\end{equation} 
where $\Pi^{\mathsf{eq}}_{\mathsf{A}_{\alpha} \mathsf{L}_{\mathsf{X_\alpha, i}}^{(\ell)}}$ projects onto states with the same bitstring on $\mathsf{A}_\alpha$ as on $\mathsf{L}_{\mathsf{X_{\bar \alpha}, i}}^{(\ell)}$, and 
\begin{equation}
    \Pi^{\mathsf{neq}}_{\mathsf{A}_{\bar \alpha} \mathsf{L}_{\mathsf{X_{\bar \alpha}, i}}^{(\ell)}} \equiv 1 - \Pi^{\mathsf{eq}}_{\mathsf{A}_{\bar \alpha} \mathsf{L}_{\mathsf{X_{\bar \alpha}, i}}^{(\ell)}}
\end{equation} 
does the reverse on $\bar \alpha$ (where we define $\bar \alpha \equiv \mathsf{bc}$ if $\alpha = \mathsf{a}$, and analogous for $\alpha = \mathsf{b},\mathsf{c}$).
For each individual term with $i \in [\ell]$, we have
\begin{equation} \label{eq: EV eq neq 1}
\begin{split}
    & \bra*{ \mathcal{A}^{V_{\mathsf{abc}}}_{t-1} }
      \cdot A_{t}^\dagger \cdot 
     \mathsf{cQ}_{\gsC \gsD \gsLabc \gsRabc}^\dagger \cdot 
     \mathsf{cC}_{\gsC \gsA}^\dagger \cdot 
     \Pi^{\mathsf{eq}}_{\mathsf{A}_{\alpha} \mathsf{L}_{\mathsf{X_\alpha, i}}^{(\ell)}}
    \Pi^{\mathsf{neq}}_{\mathsf{A}_{\bar \alpha} \mathsf{L}_{\mathsf{X_{\bar \alpha}, i}}^{(\ell)}}
     \cdot \mathsf{cC}_{\gsC \gsA} \cdot 
    \mathsf{cQ}_{\gsC \gsD \gsLabc \gsRabc}   \cdot A_{t} \cdot  \ket*{ \mathcal{A}^{V_{\mathsf{abc}}}_{t-1} }  \\
     & \quad \quad  \quad \quad  \quad =
     \bra*{ \mathcal{A}^{V_{\mathsf{abc}}}_{t-1} }
      \cdot A_{t}^\dagger \cdot 
     \mathsf{cC}_{\gsC 
    {\textcolor{gray}{\mathsf{L}_{\mathsf{X_{\bar \alpha}, i}}^{(\ell)}}}}^\dagger \cdot 
     \mathsf{cC}_{\gsC \gsA}^\dagger \cdot 
     \Pi^{\mathsf{eq}}_{\mathsf{A}_{\alpha} \mathsf{L}_{\mathsf{X_\alpha, i}}^{(\ell)}}
    \Pi^{\mathsf{neq}}_{\mathsf{A}_{\bar \alpha} \mathsf{L}_{\mathsf{X_{\bar \alpha}, i}}^{(\ell)}} 
     \cdot 
     \mathsf{cC}_{\gsC \gsA} \cdot 
    \mathsf{cC}_{\gsC 
    {\textcolor{gray}{\mathsf{L}_{\mathsf{X_{\bar \alpha}, i}}^{(\ell)}}}}
    \cdot A_{t} \cdot  \ket*{ \mathcal{A}^{V_{\mathsf{abc}}}_{t-1} },
\end{split}
\end{equation}
where all but one of the Clifford unitaries in $\mathsf{cQ}$ cancel, since the middle term in the expectation value acts only on register $\mathsf{L}_{\mathsf{X_\alpha, i}}$. 
For terms with $j \in [r]$, we have instead
\begin{equation} \label{eq: EV eq neq 2}
\begin{split}
    & \bra*{ \mathcal{A}^{V_{\mathsf{abc}}}_{t-1} }
      \cdot A_{t}^\dagger \cdot 
     \mathsf{cQ}_{\gsC \gsD \gsLabc \gsRabc}^\dagger \cdot 
     \mathsf{cC}_{\gsC \gsA}^\dagger \cdot 
     \Pi^{\mathsf{eq}}_{\mathsf{A}_{\alpha} \mathsf{R}_{\mathsf{X_\alpha, j}}^{(r)}}
    \Pi^{\mathsf{neq}}_{\mathsf{A}_{\bar \alpha} \mathsf{R}_{\mathsf{X_{\bar \alpha}, j}}^{(r)}}
     \cdot \mathsf{cC}_{\gsC \gsA} \cdot 
    \mathsf{cQ}_{\gsC \gsD \gsLabc \gsRabc}   \cdot A_{t} \cdot  \ket*{ \mathcal{A}^{V_{\mathsf{abc}}}_{t-1} }  \\
     & \quad \quad  \quad \quad  \quad =
     \bra*{ \mathcal{A}^{V_{\mathsf{abc}}}_{t-1} }
      \cdot A_{t}^\dagger \cdot 
     \mathsf{c}\overline{\mathsf{C}}_{\gsC 
    {\textcolor{gray}{\mathsf{R}_{\mathsf{X_{\bar \alpha}, j}}^{(r)}}}}^\dagger \cdot 
     \mathsf{cC}_{\gsC \gsA}^\dagger \cdot 
     \Pi^{\mathsf{eq}}_{\mathsf{A}_{\alpha} \mathsf{R}_{\mathsf{X_\alpha, j}}^{(r)}}
    \Pi^{\mathsf{neq}}_{\mathsf{A}_{\bar \alpha} \mathsf{R}_{\mathsf{X_{\bar \alpha}, j}}^{(r)}} 
     \cdot 
     \mathsf{cC}_{\gsC \gsA} \cdot 
    \mathsf{c}\overline{\mathsf{C}}_{\gsC 
    {\textcolor{gray}{\mathsf{R}_{\mathsf{X_{\bar \alpha}, j}}^{(r)}}}}
    \cdot A_{t} \cdot  \ket*{ \mathcal{A}^{V_{\mathsf{abc}}}_{t-1} }.
\end{split}
\end{equation}
We can upper bound the latter expectation values by performing the twirl over $C$.
From Eq.~(\ref{eq: clifford twirl eq neq 1}) and Eq.~(\ref{eq: clifford twirl eq neq 2}), this yields an upper bound of $1/N_\alpha$ on both Eq.~(\ref{eq: EV eq neq 1}) and Eq.~(\ref{eq: EV eq neq 2}).
Therefore, in total, we have an upper bound
\begin{equation} \nonumber
    \left\lVert \big( 1- \Pi^{\notin \mathsf{Dom} \rightarrow \notin \mathsf{locDom}} \big) \cdot
    \mathsf{cC} \cdot A_{t} \cdot
    \mathsf{cQ} \cdot \ket*{ \mathcal{A}^{V_{\mathsf{abc}}}_{t-1} }
     \right\rVert_2
     \leq 
     \sqrt{ (\ell+r) \sum_\alpha (1/N_\alpha + \varepsilon)}
     \leq \sqrt{3 t / \min_\alpha N_\alpha + 3t\varepsilon}.
\end{equation}

The third term in Eq.~(\ref{eq: step 1}) is simpler to bound.
The input state to $W_{\mathsf{abc}}$ lies in the subspace $\Pi^{\mathsf{locbij}}_{\leq t}$ by construction.
Therefore, the output of $W_{\mathsf{abc}}$ lies in the subspace $\Pi^{\mathcal{I}( W_{\mathsf{abc}} \Pi^{\mathsf{locbij}}_{\leq t})}$.
This latter subspace is spanned by two classes of states.
The first class is,
\begin{equation}
    \ket{ y_\mathsf{a} y_\mathsf{b} y_\mathsf{c} }_{\gsA}
    \ket{ L }_{\gsLabc}
    \ket{ R }_{\gsRabc},
\end{equation}
for any $\ell+r \leq t$, where $\mathsf{Dom}(L \cup R)_\alpha$ is distinct, $\mathsf{Im}(L \cup R)_\alpha$ is distinct, and $y_\alpha \notin \mathsf{Im}(L \cup R)_\alpha$.
 These arise if the $W_{\mathsf{abc}}^{R,\dagger}$ branch of $W_{\mathsf{abc}}$ is applied. 
 The second class is,
\begin{equation}
    \frac{1}{\sqrt{N_{\mathsf{abc}}-\ell-r}}
    \sum_{y_\mathsf{a} y_\mathsf{b} y_\mathsf{c} \notin \mathsf{Im}(L\cup R)}
    \ket{ y_\mathsf{a} y_\mathsf{b} y_\mathsf{c} }_{\gsA}
    \ket{ L \cup ( x_\mathsf{a} x_\mathsf{b} x_\mathsf{c} , y_\mathsf{a} y_\mathsf{b} y_\mathsf{c} ) }_{\gsLabc}
    \ket{ R }_{\gsRabc},
\end{equation}
for $\ell+r \leq t$,  where $\mathsf{Dom}(L \cup R)_\alpha$ is distinct, $\mathsf{Im}(L \cup R)_\alpha$ is distinct, and $x_\alpha \notin \mathsf{Dom}(L \cup R)_\alpha$.
These arise if the $W_{\mathsf{abc}}^L$ branch of $W_{\mathsf{abc}}$ is applied.
The states above are mutually orthogonal to one another as well as between different $\ell,r$.

The first class of states is invariant under $\Pi^{\mathsf{locbij}}$.
Thus, the projector $\Pi^{\mathsf{locbij}}$ acts trivially and incurs no error.
Meanwhile, on the second class of states, we have
\begin{equation}
\begin{split}
    & \Pi^{\mathsf{locbij}}
    \frac{1}{\sqrt{N_{\mathsf{abc}}-\ell-r}}
    \sum_{y_\mathsf{a} y_\mathsf{b} y_\mathsf{c} \notin \mathsf{Im}(L\cup R)}
    \ket{ y_\mathsf{a} y_\mathsf{b} y_\mathsf{c} }_{\gsA}
    \ket{ L \cup ( x_\mathsf{a} x_\mathsf{b} x_\mathsf{c} , y_\mathsf{a} y_\mathsf{b} y_\mathsf{c} ) }_{\gsLabc}
    \ket{ R }_{\gsRabc} \\
    & \quad \quad \quad \quad \quad \quad \quad = 
    \frac{1}{\sqrt{N_{\mathsf{abc}}-\ell-r}}
    \sum_{\substack{ y_\mathsf{a} \notin \mathsf{Im}(L\cup R)_\mathsf{a} \\ y_\mathsf{b} \notin \mathsf{Im}(L\cup R)_\mathsf{b} \\ y_\mathsf{c} \notin \mathsf{Im}(L\cup R)_\mathsf{c}}}
    \ket{ y_\mathsf{a} y_\mathsf{b} y_\mathsf{c} }_{\gsA}
    \ket{ L \cup ( x_\mathsf{a} x_\mathsf{b} x_\mathsf{c} , y_\mathsf{a} y_\mathsf{b} y_\mathsf{c} ) }_{\gsLabc}
    \ket{ R }_{\gsRabc}.
\end{split}
\end{equation}
The final state is orthogonal to the first class of states, as well as between different $\ell,r$.
The state has norm $( \prod_\alpha (N_\alpha - \ell - r) ) / (N-\ell-r) \geq 1 - 3 t / \min_\alpha N_\alpha$.
The above analysis establishes that $\Pi^{\mathsf{locbij}} \Pi^{\mathcal{I}( W_{\mathsf{abc}} \Pi^{\mathsf{locbij}}_{\leq t})}$ is block diagonal between the two classes of input and output states, as well as between different $\ell, r$.
Therefore, the desired error is given by the maximum error within each block.
From the above, the maximum is achieved at $\ell + r = t$, which yields,
\begin{equation}
    \left\lVert \big( 1 - \Pi^{\mathsf{locbij}} \big)
    \cdot W_{\mathsf{abc}} \cdot
    \Pi^{\notin \mathsf{Dom} \rightarrow \notin \mathsf{locDom}}
    \cdot
    \mathsf{cC}
    \cdot
    A_{t}
    \cdot
    \ket*{ \mathcal{A}^{W_{\mathsf{abc}}', \mathfrak{D}}_{t-1} }
    \right\rVert_2 \leq \sqrt{3t/\min_\alpha N_\alpha}.
\end{equation}

In total, we have shown that the error in Eq.~(\ref{eq: step 1}) is upper bounded by,
\begin{equation} 
    \mathsf{TD}_{t} \leq \mathsf{TD}_{t} 
    + \frac{2\sqrt{70}t}{N_{\mathsf{abc}}^{1/8}} + 2t^{1/4} \varepsilon^{1/4} + 4 \sqrt{\frac{3t}{\min_\alpha N_\alpha}} + 2\sqrt{3t\varepsilon}
    \leq
    \frac{2 \sqrt{70} t^2}{N_{\mathsf{abc}}^{1/8}} + \frac{4\sqrt{3} t^{3/2}}{(\min_\alpha N_\alpha)^{1/2}} + 6 t^{5/4} \varepsilon^{1/4} 
    , \\
\end{equation}
applying the inductive hypothesis.
This completes our proof.

\subsubsection{Projected $W_{\mathsf{abc}}$ is indistinguishable from projected $W_{\mathsf{bc}} W_{\mathsf{ab}}$} \label{sec: projected WABC to projected WAB WBC}

We will prove that 
\begin{equation} \label{eq: 2-norm bound 2to3}
    \left\lVert 
    \mathsf{Compress}^\dagger 
    \ket*{ \mathcal{A}^{W_{\mathsf{abc}}', \mathfrak{D}}_t }
    -
    \ket*{ \mathcal{A}^{(W_{\mathsf{bc}} W_{\mathsf{ab}})', \mathfrak{D}}_t }
    \right\rVert_2 \leq \frac{t(t-1)}{2N_{\mathsf{ab}}}+\frac{t(t-1)}{2N_{\mathsf{bc}}} 
\end{equation}
Using Eq.~(\ref{eq: state to op bound}), this implies that 
\begin{equation}
    \mathsf{TD}\left( 
    \mathsf{Compress}^\dagger  \dyad*{ \mathcal{A}^{W_{\mathsf{abc}}', \mathfrak{D}}_t }
     \! \mathsf{Compress} 
    , 
    \dyad*{ \mathcal{A}^{(W_{\mathsf{bc}} W_{\mathsf{ab}})', \mathfrak{D}}_t }
    \right)
    \leq 
    \frac{t^2}{N_{\mathsf{ab}}}+\frac{t^2}{N_{\mathsf{bc}}} 
\end{equation}
which implies that $\lVert \rho^{(2)} - \rho^{(3)} \rVert_1$ is less than the same value since the 1-norm cannot increase after tracing out $\mathsf{L}_{\mathsf{ab}} \mathsf{L}_{\mathsf{bc}} \mathsf{R}_{\mathsf{ab}} \mathsf{R}_{\mathsf{bc}} \mathsf{C} \mathsf{D}$.

We proceed by induction.
The statement holds trivially at $t=0$.
For the inductive step, we assume that Eq.~(\ref{eq: 2-norm bound 2to3}) holds up to time $t-1$.
We will show that the claim holds for time $t$ as well.
Without loss of generality, we assume that the forward unitary is applied at time $t$.
The case when the inverse or conjugate or transpose are applied follow by symmetric arguments.

The states at time $t$ are obtained from the states at time $t-1$ as follows,
\begin{align}
    \ket*{ \mathcal{A}^{W_{\mathsf{abc}}', \mathfrak{D}}_{t} }
    & =
    \mathsf{cD}
    \cdot
    W_{\mathsf{abc}}'
    \cdot
    \mathsf{cC}
    \cdot
    A_{t}
    \cdot
    \ket*{ \mathcal{A}^{W_{\mathsf{abc}}', \mathfrak{D}}_{t-1} } \\
    \ket*{ \mathcal{A}^{(W_{\mathsf{bc}} W_{\mathsf{ab}})', \mathfrak{D}}_{t} }
    & =
    \mathsf{cD}
    \cdot
    (W_{\mathsf{bc}} W_{\mathsf{ab}})'
    \cdot
    \mathsf{cC}
    \cdot
    A_{t}
    \cdot
    \ket*{ \mathcal{A}^{(W_{\mathsf{bc}} W_{\mathsf{ab}})', \mathfrak{D}}_{t-1} }
\end{align}
The final projection in $W_{\mathsf{abc}}'$ and $(W_{\mathsf{bc}} W_{\mathsf{ab}})'$ guarantees that the input state to the $t$-th application of $W_{\mathsf{abc}}'$ and $(W_{\mathsf{bc}} W_{\mathsf{ab}})'$ obeys,
\begin{align}
    \Pi^{\mathsf{locbij}}_{\gsLabc \gsRabc}
    \cdot \mathsf{cC} \cdot
    A_{t} \cdot
    \ket*{ \mathcal{A}^{W_{\mathsf{abc}}', \mathfrak{D}}_{t-1} }
    & =
    \mathsf{cC} \cdot
    A_{t} \cdot
    \ket*{ \mathcal{A}^{W_{\mathsf{abc}}', \mathfrak{D}}_{t-1} } \\
    \Pi^{\mathsf{paired}}_{\gsLab \gsLbc \gsRab \gsRbc}
    \cdot \mathsf{cC} \cdot
    A_{t} \cdot
    \ket*{ \mathcal{A}^{(W_{\mathsf{bc}} W_{\mathsf{ab}})', \mathfrak{D}}_{t-1} }
    & =
    \mathsf{cC} \cdot
    A_{t} \cdot
    \ket*{ \mathcal{A}^{(W_{\mathsf{bc}} W_{\mathsf{ab}})', \mathfrak{D}}_{t-1} }.
\end{align}
We will now analyze the action of first $W_{\mathsf{abc}}'$ and then $(W_{\mathsf{bc}} W_{\mathsf{ab}})'$.

The domain of $W_{\mathsf{abc}}' \Pi^{\mathsf{locbij}}$ contains two classes of states, corresponding to the domain of
\begin{equation}
    \Pi^{\mathcal{D}(W_{\mathsf{abc}}')} 
    \Pi^{\mathsf{locbij}}
    = 
    \Pi^{\mathcal{D}(W_{\mathsf{abc}})} 
    \Pi^{\notin \mathsf{Dom} \rightarrow \notin \mathsf{locDom}} 
    \Pi^{\mathsf{locbij}}
    =
    \Pi^{\notin \mathsf{locDom}} 
    \Pi^{\mathsf{locbij}} 
    +
    \Pi^{\mathcal{I}(W^R_{\mathsf{abc}})}
    \Pi^{\mathsf{locbij}}.
\end{equation}
The second equality follows from the domain of $W_{\mathsf{abc}}$,
\begin{equation}
    \Pi^{\mathcal{D}(W_{\mathsf{abc}})} = 
    \Pi^{\notin \mathsf{Dom}}
    \Pi^{\mathsf{bij}}
    +
    \Pi^{\mathcal{I}(W^R_{\mathsf{abc}})}
    .
\end{equation}
We will consider the action on each class of states separately.
Focusing on the $\mathsf{A} \mathsf{L}_{\mathsf{abc}} \mathsf{R}_{\mathsf{abc}}$ registers, a complete basis for the first class of states is given by, 
\begin{equation}
    \ket{ x_\mathsf{a} x_\mathsf{b} x_\mathsf{c} }_{\gsA}
    \ket{ L }_{\gsLabc}
    \ket{ R }_{\gsRabc},
\end{equation}
where $\mathsf{Dom}(L \cup R)_\alpha$ is distinct, $\mathsf{Im}(L \cup R)_\alpha$ is distinct, and $x_\alpha \notin \mathsf{Dom}(L \cup R)_\alpha$.
We then have,
\begin{equation} \label{eq: abc result 1 before compress}
\begin{split}
    W_{\mathsf{abc}}' &
    \ket{ x_\mathsf{a} x_\mathsf{b} x_\mathsf{c} }_{\gsA}
    \ket{ L }_{\gsLabc}
    \ket{ R }_{\gsRabc}
    = 
    \Pi^{\mathsf{locbij}} W_{\mathsf{abc}}
    \ket{ x_\mathsf{a} x_\mathsf{b} x_\mathsf{c} }_{\gsA}
    \ket{ L }_{\gsLabc}
    \ket{ R }_{\gsRabc} \\
    & = 
    \Pi^{\mathsf{locbij}}
    \frac{1}{\sqrt{N_{\mathsf{abc}}-\ell-r}}
    \sum_{y_\mathsf{a} y_\mathsf{b} y_\mathsf{c} \notin \mathsf{Im}(L\cup R)}
    \ket{ y_\mathsf{a} y_\mathsf{b} y_\mathsf{c} }_{\gsA}
    \ket{ L \cup ( x_\mathsf{a} x_\mathsf{b} x_\mathsf{c} , y_\mathsf{a} y_\mathsf{b} y_\mathsf{c} ) }_{\gsLabc}
    \ket{ R }_{\gsRabc} \\
    & = 
    \frac{1}{\sqrt{N_{\mathsf{abc}}-\ell-r}}
    \sum_{\substack{ y_\mathsf{a} \notin \mathsf{Im}(L\cup R)_\mathsf{a} \\ y_\mathsf{b} \notin \mathsf{Im}(L\cup R)_\mathsf{b} \\ y_\mathsf{c} \notin \mathsf{Im}(L\cup R)_\mathsf{c}}}
    \ket{ y_\mathsf{a} y_\mathsf{b} y_\mathsf{c} }_{\gsA}
    \ket{ L \cup ( x_\mathsf{a} x_\mathsf{b} x_\mathsf{c} , y_\mathsf{a} y_\mathsf{b} y_\mathsf{c} ) }_{\gsLabc}
    \ket{ R }_{\gsRabc}. \\
\end{split}
\end{equation}

Meanwhile, a complete basis for the second class of states is given by 
\begin{equation}
    \frac{1}{\sqrt{\prod_\alpha (N_\alpha - \ell - r)}} 
    \sum_{\substack{ x_\mathsf{a} \notin \mathsf{Dom}(L\cup R)_\mathsf{a} \\ x_\mathsf{b} \notin \mathsf{Dom}(L\cup R)_\mathsf{b} \\ x_\mathsf{c} \notin \mathsf{Dom}(L\cup R)_\mathsf{c}}}
    \ket{ x_\mathsf{a} x_\mathsf{b} x_\mathsf{c} }_{\gsA}
    \ket{ L }_{\gsLabc}
    \ket{ R \cup ( x_\mathsf{a} x_\mathsf{b} x_\mathsf{c} , y_\mathsf{a} y_\mathsf{b} y_\mathsf{c} ) }_{\gsRabc},
\end{equation}
where $\mathsf{Dom}(L \cup R)_\alpha$ is distinct, $\mathsf{Im}(L \cup R)_\alpha$ is distinct, and $y_\alpha \notin \mathsf{Im}(L \cup R)_\alpha$.
We then have,
\begin{equation} \label{eq: abc result 2 before compress}
\begin{split}
    & W_{\mathsf{abc}}'
    \cdot \frac{1}{\sqrt{\prod_\alpha (N_\alpha - \ell - r)}} 
    \sum_{\substack{ x_\mathsf{a} \notin \mathsf{Dom}(L\cup R)_\mathsf{a} \\ x_\mathsf{b} \notin \mathsf{Dom}(L\cup R)_\mathsf{b} \\ x_\mathsf{c} \notin \mathsf{Dom}(L\cup R)_\mathsf{c}}}
    \ket{ x_\mathsf{a} x_\mathsf{b} x_\mathsf{c} }_{\gsA}
    \ket{ L }_{\gsLabc}
    \ket{ R \cup ( x_\mathsf{a} x_\mathsf{b} x_\mathsf{c} , y_\mathsf{a} y_\mathsf{b} y_\mathsf{c} ) }_{\gsRabc} \\
    & \,\,\,\, = 
    \Pi^{\mathsf{locbij}} 
    W^{R,\dagger}_{\mathsf{abc}} 
    \cdot \frac{1}{\sqrt{\prod_\alpha (N_\alpha - \ell - r)}} 
    \sum_{\substack{ x_\mathsf{a} \notin \mathsf{Dom}(L\cup R)_\mathsf{a} \\ x_\mathsf{b} \notin \mathsf{Dom}(L\cup R)_\mathsf{b} \\ x_\mathsf{c} \notin \mathsf{Dom}(L\cup R)_\mathsf{c}}}
    \ket{ x_\mathsf{a} x_\mathsf{b} x_\mathsf{c} }_{\gsA}
    \ket{ L }_{\gsLabc}
    \ket{ R \cup ( x_\mathsf{a} x_\mathsf{b} x_\mathsf{c} , y_\mathsf{a} y_\mathsf{b} y_\mathsf{c} ) }_{\gsRabc} \\
    & \,\,\,\, = 
    \left( \frac{\sqrt{\prod_\alpha (N_\alpha - \ell - r)}}{\sqrt{N_{\mathsf{abc}}-\ell-r}} \right)
    \Pi^{\mathsf{locbij}} 
    \ket{ y_\mathsf{a} y_\mathsf{b} y_\mathsf{c} }_{\gsA}
    \ket{ L }_{\gsLabc}
    \ket{ R }_{\gsRabc} \\
    & \,\,\,\, = 
    \left( \frac{\sqrt{\prod_\alpha (N_\alpha - \ell - r)}}{\sqrt{N_{\mathsf{abc}}-\ell-r}} \right)
    \ket{ y_\mathsf{a} y_\mathsf{b} y_\mathsf{c} }_{\gsA}
    \ket{ L }_{\gsLabc}
    \ket{ R }_{\gsRabc}, \\
\end{split}
\end{equation}
where the factor in parentheses arises from the decrease in normalization when the projector $\Pi^{\mathcal{I}(W^R_{\mathsf{abc}})}$ is applied (through $W^{R,\dagger}_{\mathsf{abc}} = W^{R,\dagger}_{\mathsf{abc}} \Pi^{\mathcal{I}(W^R_{\mathsf{abc}})}$).

Let us now turn to $(W_{\mathsf{bc}} W_{\mathsf{ab}})'$.
The input state to $W_{\mathsf{bc}} W_{\mathsf{ab}}$ is contained within the domain of
\begin{equation}
    \widetilde{\Pi}^{\mathcal{D}(W_{\mathsf{abc}})} 
    \widetilde{\Pi}^{\notin \mathsf{Dom} \rightarrow \notin \mathsf{locDom}} 
    \Pi^{\mathsf{paired}}
    =
    \widetilde{\Pi}^{\notin \mathsf{locDom}} 
    \Pi^{\mathsf{paired}}
    +
    \widetilde{\Pi}^{\mathcal{I}(W^R_{\mathsf{abc}})}
    \Pi^{\mathsf{paired}}.
\end{equation}
As before, we must consider two classes of input states, corresponding to the domain of each term on the right side above.
Focusing on the $\mathsf{L}_{\mathsf{ab}} \mathsf{L}_{\mathsf{bc}} \mathsf{R}_{\mathsf{ab}} \mathsf{R}_{\mathsf{bc}}$ registers, a complete basis for the first class of states is given by, 
\begin{equation}
    \frac{1}{\sqrt{ N_\mathsf{b}^{\ell+r} }} \sum_{z_\ell, z_r}
    \ket{ x_\mathsf{a} x_\mathsf{b} x_\mathsf{c} }_{\gsA}
    \ket{ L_{\mathsf{ab}}^{z_\ell} }_{\gsLab}
    \ket{ L_{\mathsf{bc}}^{z_\ell} }_{\gsLbc}
    \ket{ R_{\mathsf{ab}}^{z_r} }_{\gsRab}
    \ket{ R_{\mathsf{bc}}^{z_r} }_{\gsRbc},
\end{equation}
where $\mathsf{Dom}(L \cup R)_\alpha$ is distinct, $\mathsf{Im}(L \cup R)_\alpha$ is distinct, and $x_\alpha \notin \mathsf{Dom}(L \cup R)_\alpha$.
(We refer to Section~\ref{sec: preliminaries} for the definitions of $L_{\mathsf{ab}}^{z_\ell}$, $L_{\mathsf{bc}}^{z_\ell}$, $R_{\mathsf{ab}}^{z_r}$, $R_{\mathsf{bc}}^{z_r}$, $L$, $R$.)
Focusing on the $\mathsf{A} \mathsf{L}_{\mathsf{ab}} \mathsf{R}_{\mathsf{ab}}$ registers, the application of $W_{\mathsf{ab}}$ gives,
\begin{equation}
\begin{split}
    W_{\mathsf{ab}} 
    \ket{ x_\mathsf{a} x_\mathsf{b} x_\mathsf{c} }_{\gsA}
    &
    \ket{L^{z_\ell}_{\mathsf{ab}} }_{\gsLab}
    \ket{R^{z_r}_{\mathsf{ab}} }_{\gsRab} \\
    & = 
    \frac{1}{\sqrt{N_{\mathsf{ab}}-\ell-r}}
    \sum_{y_\mathsf{a} z_\mathsf{b} \notin \mathsf{Im}(L_{\mathsf{ab}} \cup R_{\mathsf{ab}})}
    \ket{ y_\mathsf{a} z_\mathsf{b} x_\mathsf{c} }_{\gsA}
    \ket{L^{z_\ell}_{\mathsf{ab}} \cup ( x_\mathsf{a} x_\mathsf{b} , y_\mathsf{a} z_\mathsf{b} ) }_{\gsLab}
    \ket{R^{z_r}_{\mathsf{ab}} }_{\gsRab}.
\end{split}
\end{equation}
Focusing on the $\mathsf{A} \mathsf{L}_{\mathsf{bc}} \mathsf{R}_{\mathsf{bc}}$ registers, the ensuing application of $W_{\mathsf{bc}}$ gives,
\begin{equation}
\begin{split}
    W_{\mathsf{bc}} 
    \ket{ y_\mathsf{a} z_\mathsf{b} x_\mathsf{c} }_{\gsA}
    &
    \ket{L^{z_\ell}_{\mathsf{bc}} }_{\gsLbc}
    \ket{R^{z_r}_{\mathsf{bc}} }_{\gsRbc} \\
    & = 
    \frac{1}{\sqrt{N_{\mathsf{bc}}-\ell-r}}
    \sum_{y_\mathsf{b} y_\mathsf{c} \notin \mathsf{Im}(L_{\mathsf{bc}} \cup R_{\mathsf{bc}})}
    \ket{ y_\mathsf{a} y_\mathsf{b} y_\mathsf{c} }_{\gsA}
    \ket{L^{z_\ell}_{\mathsf{bc}} \cup ( z_\mathsf{b} x_\mathsf{c} , y_\mathsf{b} y_\mathsf{c} ) }_{\gsLbc}
    \ket{R^{z_r}_{\mathsf{bc}} }_{\gsRbc}.
\end{split}
\end{equation}
In total, we have the state,
\begin{equation}
    \frac{1}{\sqrt{\mathcal{N}_1}} \sum_{ 
    \substack{ z_\ell, z_r \\ y_\mathsf{a} z_\mathsf{b} \notin \mathsf{Im}(L_{\mathsf{ab}} \cup R_{\mathsf{ab}}) \\ y_\mathsf{b} y_\mathsf{c} \notin \mathsf{Im}(L_{\mathsf{bc}} \cup R_{\mathsf{bc}})}
    }
    \ket{ y_\mathsf{a} y_\mathsf{b} y_\mathsf{c} }_\gsA
    \ket{ L_{\mathsf{ab}}^{z_\ell} \cup ( x_\mathsf{a} x_\mathsf{b} , y_\mathsf{a} z_\mathsf{b} )}_\gsLab
    \ket{ L_{\mathsf{bc}}^{z_\ell} \cup ( z_\mathsf{b} x_\mathsf{c} , y_\mathsf{b} y_\mathsf{c} ) }_\gsLbc
    \ket{ R_{\mathsf{ab}}^{z_r} }_\gsRab
    \ket{ R_{\mathsf{bc}}^{z_r} }_\gsRbc,
\end{equation}
where $\mathcal{N}_1 \equiv N_\mathsf{b}^{\ell+r} (N_{\mathsf{ab}}-\ell-r) (N_{\mathsf{bc}}-\ell-r)$.
Applying the final projection $\Pi^{\mathsf{paired}}$ forces the $y_\alpha$ to be locally distinct, which yields,
\begin{equation}
    \frac{1}{\sqrt{\mathcal{N}_1}} \sum_{ 
    \substack{ z_\ell, z_r, z_\mathsf{b} \\ y_\mathsf{a} \notin \mathsf{Im}(L \cup R)_\mathsf{a} \\ y_\mathsf{b} \notin \mathsf{Im}(L \cup R)_\mathsf{b} \\ y_\mathsf{c} \notin \mathsf{Im}(L \cup R)_\mathsf{c}}
    }
    \ket{ y_\mathsf{a} y_\mathsf{b} y_\mathsf{c} }_\gsA
    \ket{ L_{\mathsf{ab}}^{z_\ell} \cup ( x_\mathsf{a} x_\mathsf{b} , y_\mathsf{a} z_\mathsf{b} )}_\gsLab
    \ket{ L_{\mathsf{bc}}^{z_\ell} \cup ( z_\mathsf{b} x_\mathsf{c} , y_\mathsf{b} y_\mathsf{c} ) }_\gsLbc
    \ket{ R_{\mathsf{ab}}^{z_r} }_\gsRab
    \ket{ R_{\mathsf{bc}}^{z_r} }_\gsRbc,
\end{equation}
where the sum over $z_\mathsf{b}$ is unrestricted, since $y_{\mathsf{a}} \notin \mathsf{Im}(L \cup R)_\mathsf{a}$ implies that $y_\mathsf{a} z_\mathsf{b} \notin \mathsf{Im}(L_{\mathsf{ab}} \cup R_{\mathsf{ab}})$.
%
Applying $\mathsf{Compress}$ to the state yields the state in Eq.~(\ref{eq: abc result 1 before compress}) up to a normalization difference,
\begin{equation}
    \left| \sqrt{ \frac{N_{\mathsf{b}} \prod_\alpha (N_\alpha-\ell-r)}{(N_{\mathsf{ab}}-\ell-r) (N_{\mathsf{bc}}-\ell-r)} }
    -
    \sqrt{ \frac{\prod_\alpha (N_\alpha-\ell-r)}{N-\ell-r} } \right|
    \leq 
    \frac{\ell+r}{N_{\mathsf{ab}}} + \frac{\ell+r}{N_{\mathsf{bc}}},
\end{equation}
where the first inequality holds for $(\ell+r)/ N_{\mathsf{ab}} + (\ell+r)/ N_{\mathsf{bc}} \leq 1/2$.

We now turn to the second class of states.
A complete basis is given by 
\begin{equation}
    \frac{1}{\sqrt{\mathcal{N}_2}}
    \sum_{ 
    \substack{ z_\ell, z_r, z_\mathsf{b} \\ x_\mathsf{a} \notin \mathsf{Dom}(L \cup R)_\mathsf{a} \\ x_\mathsf{b} \notin \mathsf{Dom}(L \cup R)_\mathsf{b} \\ x_\mathsf{c} \notin \mathsf{Dom}(L \cup R)_\mathsf{c}}
    }
    \ket{ x_\mathsf{a} x_\mathsf{b} x_\mathsf{c} }_\gsA
    \ket{ L_{\mathsf{ab}}^{z_\ell} }_\gsLab
    \ket{ L_{\mathsf{bc}}^{z_\ell}  }_\gsLbc
    \ket{ R_{\mathsf{ab}}^{z_r} \cup ( x_\mathsf{a} x_\mathsf{b} , y_\mathsf{a} z_\mathsf{b} ) }_\gsRab
    \ket{ R_{\mathsf{bc}}^{z_r} \cup ( z_\mathsf{b} x_\mathsf{c} , y_\mathsf{b} y_\mathsf{c} ) }_\gsRbc,
\end{equation}
where $\mathsf{Dom}(L \cup R)_\alpha$ is distinct, $\mathsf{Im}(L \cup R)_\alpha$ is distinct, $y_\alpha \notin \mathsf{Im}(L \cup R)_\alpha$, and the normalization is given by $\mathcal{N}_2 \equiv N_\mathsf{b}^{\ell+r+1} \prod_\alpha (N_\alpha-\ell-r)$.
The application of $\widetilde{\Pi}^{\mathcal{I}( W^R_\mathsf{abc} )}$ gives,
\begin{equation}
    \frac{1}{\sqrt{\mathcal{N}_3}}
    \sum_{ 
    \substack{ z_\ell, z_r, z_\mathsf{b} \\ x_\mathsf{a} x_\mathsf{b} x_\mathsf{c} \notin \mathsf{Dom}(L \cup R)}
    }
    \ket{ x_\mathsf{a} x_\mathsf{b} x_\mathsf{c} }_\gsA
    \ket{ L_{\mathsf{ab}}^{z_\ell} }_\gsLab
    \ket{ L_{\mathsf{bc}}^{z_\ell}  }_\gsLbc
    \ket{ R_{\mathsf{ab}}^{z_r} \cup ( x_\mathsf{a} x_\mathsf{b} , y_\mathsf{a} z_\mathsf{b} ) }_\gsRab
    \ket{ R_{\mathsf{bc}}^{z_r} \cup ( z_\mathsf{b} x_\mathsf{c} , y_\mathsf{b} y_\mathsf{c} ) }_\gsRbc,
\end{equation}
where $\mathcal{N}_3 = N_\mathsf{b}^{\ell+r+1} (N_{\mathsf{abc}}-\ell-r)^2 / \prod_\alpha (N_\alpha-\ell-r)$.
The application of $W_{\mathsf{ab}}$ gives,
\begin{equation}
    \sqrt{  \frac{N_\mathsf{ab}-\ell-r  }{\mathcal{N}_3} }
    \sum_{ 
    \substack{ z_\ell, z_r, z_\mathsf{b} \\  x_\mathsf{c}}
    }
    \ket{ y_\mathsf{a} z_\mathsf{b} x_\mathsf{c} }_\gsA
    \ket{ L_{\mathsf{ab}}^{z_\ell} }_\gsLab
    \ket{ L_{\mathsf{bc}}^{z_\ell}  }_\gsLbc
    \ket{ R_{\mathsf{ab}}^{z_r} }_\gsRab
     \ket{ R_{\mathsf{bc}}^{z_r} \cup ( z_\mathsf{b} x_\mathsf{c} , y_\mathsf{b} y_\mathsf{c} ) }_\gsRbc,
\end{equation}
where the sum over $x_\mathsf{c}$ is unconstrained because the application of $W^{R,\dagger}_{\mathsf{ab}}$ enforces that $x_\mathsf{a} x_\mathsf{b} \notin \mathsf{Dom}(L_{\mathsf{ab}} \cup R_{\mathsf{ab}})$, which implies that $x_\mathsf{a} x_\mathsf{b} x_\mathsf{c} \notin \mathsf{Dom}(L \cup R)$.
The application of $W_{\mathsf{bc}}$ then gives,
\begin{equation}
    \sqrt{  \frac{(N_\mathsf{ab}-\ell-r)(N_\mathsf{bc}-\ell-r)}{\mathcal{N}_3} }
    \sum_{ 
    \substack{ z_\ell, z_r }
    }
    \ket{ y_\mathsf{a} y_\mathsf{b} y_\mathsf{c} }_\gsA
    \ket{ L_{\mathsf{ab}}^{z_\ell} }_\gsLab
    \ket{ L_{\mathsf{bc}}^{z_\ell}  }_\gsLbc
    \ket{ R_{\mathsf{ab}}^{z_r} }_\gsRab
     \ket{ R_{\mathsf{bc}}^{z_r}  }_\gsRbc.
\end{equation}
The state is invariant under the final projection $\Pi^{\mathsf{paired}}$.
Applying $\mathsf{Compress}$ to the state yields the state in Eq.~(\ref{eq: abc result 2 before compress}) up to a normalization difference,
\begin{equation}
    \left| \sqrt{ \frac{(N_\mathsf{ab}-\ell-r)(N_\mathsf{bc}-\ell-r) \prod_\alpha (N_\alpha - \ell -r )}{N_\mathsf{b} (N-\ell-r)^2} }
    -
    \sqrt{ \frac{\prod_\alpha (N_\alpha - \ell -r )}{(N_{\mathsf{abc}}-\ell-r)} } \right| 
    \leq \frac{\ell+r}{N} ,
\end{equation}
where the first inequality holds for $\ell + r \leq 2N_{\mathsf{abc}}$.

Note that both $\mathsf{Compress}^\dagger \cdot \overline{ W_{\mathsf{abc}} } \cdot \Pi^{\mathsf{locbij}} \cdot \mathsf{Compress}$ and $\overline{ W_{\mathsf{bc}} W_{\mathsf{ab}} } \cdot  \Pi^{\mathsf{paired}}$ are block diagonal in $\ell$ and $r$, as well as between the two classes of states considered in the above analysis.
Thus, the spectral norm of the difference of the two operators is given by the maximum spectral norm of the difference within each block.
From the above analysis, the maximum is achieved for the first class of states, at $\ell + r = t-1$. The spectral norm is thus bounded by
\begin{equation}\nonumber
    \left\lVert 
    \mathsf{Compress}^\dagger \cdot \overline{ W_{\mathsf{abc}} } \cdot \Pi^{\mathsf{locbij}} \cdot \mathsf{Compress} \cdot \Pi^{\leq t-1}  
    -
    \overline{ W_{\mathsf{bc}} W_{\mathsf{ab}} } \cdot  \Pi^{\mathsf{paired}} \cdot
    \Pi^{\leq t-1}
    \right\rVert_\infty
    \leq
    \frac{(t-1)}{N_{\mathsf{ab}}}+\frac{(t-1)}{N_{\mathsf{bc}}},
\end{equation}
where $\Pi^{\leq t-1}$ restricts to relation state register lengths $\ell + r \leq t-1$.
We have,
\begin{equation} \nonumber
\begin{split}
    & 
    \left\lVert 
    \mathsf{Compress}^\dagger 
    \ket*{ \mathcal{A}^{W_{\mathsf{abc}}', \mathfrak{D}}_{t} }
    -
    \ket*{ \mathcal{A}^{(W_{\mathsf{bc}} W_{\mathsf{ab}})', \mathfrak{D}}_{t} }
    \right\rVert_2  \\
    & \quad \quad \quad \quad  \quad   \leq \left\lVert 
    \mathsf{Compress}^\dagger 
    \ket*{ \mathcal{A}^{W_{\mathsf{abc}}', \mathfrak{D}}_{t-1} }
    -
    \ket*{ \mathcal{A}^{(W_{\mathsf{bc}} W_{\mathsf{ab}})', \mathfrak{D}}_{t-1} }
    \right\rVert_2  \\
    & \quad \quad \quad \quad  \quad \quad \quad    +
    \left\lVert 
    \mathsf{Compress}^\dagger \cdot \overline{ W_{\mathsf{abc}} } \cdot \Pi^{\mathsf{locbij}} \cdot \mathsf{Compress} \cdot \Pi^{\leq t-1}  
    -
    (W_{\mathsf{bc}} W_{\mathsf{ab}} )'\cdot  \Pi^{\mathsf{paired}} \cdot
    \Pi^{\leq t-1}
    \right\rVert_\infty
    \\
    & \quad \quad \quad \quad  \quad   \leq
    \mathsf{TD}_{t-1}
    + (t-1)/N_{\mathsf{ab}}
    + (t-1)/N_{\mathsf{bc}}
    \\
\end{split}
\end{equation}
as claimed. This completes our proof of step 2.

\subsubsection{Twirled projected $W_{\mathsf{bc}} W_{\mathsf{ab}}$ is indistinguishable from twirled $V_{\mathsf{bc}} V_{\mathsf{ab}}$} \label{sec: twirled projected WAB WBC to twirled WAB WBC}

Our proof of step 4 follows quickly from the results of steps 1-3.
We use that $W_{\mathsf{bc}}$ and $W_{\mathsf{ab}}$ are restrictions of $V_{\mathsf{bc}}$ and $V_{\mathsf{ab}}$ to write,
\begin{align}
    (W_{\mathsf{bc}} W_{\mathsf{ab}})' 
    & = 
    \Pi^{\mathsf{paired}}
    \cdot V_{\mathsf{bc}}
    \cdot \Pi^{\mathcal{D}(W_{\mathsf{bc}})} 
    \cdot V_{\mathsf{ab}}
    \cdot \Pi^{\mathcal{D}(W_{\mathsf{bc}})} 
    \cdot \widetilde{\Pi}^{\mathcal{D}(W_{\mathsf{abc}})}
    \cdot \widetilde{\Pi}^{\notin \mathsf{Dom} \rightarrow \notin \mathsf{locDom}} \\
    (W^\dagger_{\mathsf{ab}} W^\dagger_{\mathsf{bc}})' 
    & = 
    \Pi^{\mathsf{paired}}
    \cdot V^\dagger_{\mathsf{ab}}
    \cdot \Pi^{\mathcal{I}(W_{\mathsf{ab}})} \cdot 
    V^\dagger_{\mathsf{bc}} 
    \cdot \Pi^{\mathcal{I}(W_{\mathsf{bc}})} 
    \cdot
    \widetilde{\Pi}^{\mathcal{D}(W^\dagger_{\mathsf{abc}})}
    \cdot \widetilde{\Pi}^{\notin \mathsf{Im} \rightarrow \notin \mathsf{locIm}} \\
    (\overline{W}_{\mathsf{bc}} \overline{W}_{\mathsf{ab}})' 
    & = 
    \Pi^{\mathsf{paired}}
    \cdot \overline{V}_{\mathsf{bc}}
    \cdot \Pi^{\mathcal{D}(\overline{W}_{\mathsf{bc}})} 
    \cdot \overline{V}_{\mathsf{ab}}
    \cdot \Pi^{\mathcal{D}(\overline{W}_{\mathsf{bc}})} 
    \cdot \widetilde{\Pi}^{\mathcal{D}(\overline{W}_{\mathsf{abc}})}
    \cdot \widetilde{\Pi}^{\notin \mathsf{Dom} \rightarrow \notin \mathsf{locDom}} \\
    (\overline{W}^\dagger_{\mathsf{ab}} \overline{W}^\dagger_{\mathsf{bc}})' 
    & = 
    \Pi^{\mathsf{paired}}
    \cdot \overline{V}^\dagger_{\mathsf{ab}}
    \cdot \Pi^{\mathcal{I}(\overline{W}_{\mathsf{ab}})} \cdot 
    \overline{V}^\dagger_{\mathsf{bc}} 
    \cdot \Pi^{\mathcal{I}(\overline{W}_{\mathsf{bc}})} 
    \cdot
    \widetilde{\Pi}^{\mathcal{D}(\overline{W}^\dagger_{\mathsf{abc}})}
    \cdot \widetilde{\Pi}^{\notin \mathsf{Im} \rightarrow \notin \mathsf{locIm}} 
\end{align}
Therefore, the state $\ket*{ \mathcal{A}^{(W_{\mathsf{bc}} W_{\mathsf{ab}})', \mathfrak{D}}_t }$ differs from the state $\ket*{ \mathcal{A}^{V_{\mathsf{bc}} V_{\mathsf{ab}}, \mathfrak{D}}_t }$ solely by the insertion of projectors throughout the time evolution.

From Eq.~(\ref{eq: state to op bound}) and the sequential gentle measurement lemma (Lemma~\ref{lem:seq-gentleM-pure}), the difference between the two states is bounded as,
\begin{equation} \label{eq: step 4}
    \big\lVert \rho^{(3)} - \rho^{(4)} \big\rVert_1 \leq 
    2 \left\lVert 
    \ket*{ \mathcal{A}^{(W_{\mathsf{bc}} W_{\mathsf{ab}})', \mathfrak{D}}_t }
    -
    \ket*{ \mathcal{A}^{V_{\mathsf{bc}} V_{\mathsf{ab}}, \mathfrak{D}}_t }
    \right\rVert_2
    \leq 
    2 t \sqrt{1 - \langle
    \mathcal{A}^{(W_{\mathsf{bc}} W_{\mathsf{ab}})', \mathfrak{D}}_t 
    \big|
    \mathcal{A}^{(W_{\mathsf{bc}} W_{\mathsf{ab}})', \mathfrak{D}}_t 
    \rangle}.
\end{equation}
From our proofs of step 2 and step 3, we have that 
\begin{equation}
\begin{split}
    & \mathsf{TD}\left( 
     \dyad*{ \mathcal{A}^{(W_{\mathsf{bc}} W_{\mathsf{ab}})', \mathfrak{D}}_t }
     ,
    \mathsf{Compress}^\dagger  \dyad*{ \mathcal{A}^{W_{\mathsf{abc}}, \mathfrak{D}}_t }
     \! \mathsf{Compress} 
    \right) \\
    & \quad  \quad  \quad  \quad  \quad  \quad  \quad  \quad  \quad  \quad  \quad  \quad  \quad  \quad  \quad  \quad  \quad  \quad  \quad  \quad \leq 
    \frac{17t^2}{N_{\mathsf{abc}}^{1/8}} + \frac{7 t^{3/2}}{(\min_\alpha N_\alpha)^{1/2}} + \frac{2t^2}{N_{\mathsf{ab}}} + \frac{2t^2}{N_{\mathsf{bc}}}.
\end{split}
\end{equation}
Meanwhile, from Lemma~9.3 of Ref.~\cite{ma2024construct} [see in particular Eqs.~(9.58),~(9.59)], we have 
\begin{equation}
    \mathsf{TD}\left( 
    \dyad*{ \mathcal{A}^{W_{\mathsf{abc}}, \mathfrak{D}}_t } 
    , 
    \mathsf{cQ} \cdot \dyad*{ \mathcal{A}^{V_{\mathsf{abc}}, \mathfrak{D}}_t } \cdot \mathsf{cQ}
    \right)
    \leq 
    \frac{9 t}{N_{\mathsf{abc}}^{1/8}} .
\end{equation}
The state $\mathsf{cQ} \cdot \ket*{ \mathcal{A}^{V_{\mathsf{abc}}, \mathfrak{D}}_t } $ is obtained from solely unitary time-evolution, and thus has norm one.
Combining the two above equations therefore yields,
\begin{equation}
    \langle
    \mathcal{A}^{(W_{\mathsf{bc}} W_{\mathsf{ab}})', \mathfrak{D}}_t 
    \big|
    \mathcal{A}^{(W_{\mathsf{bc}} W_{\mathsf{ab}})', \mathfrak{D}}_t 
    \rangle
    \geq 1 - \left( \frac{17t^2}{N_{\mathsf{abc}}^{1/8}} + \frac{7 t^{3/2}}{(\min_\alpha N_\alpha)^{1/2}} + \frac{2t^2}{N_{\mathsf{ab}}} + \frac{2t^2}{N_{\mathsf{bc}}} \right) - \frac{9 t}{N_{\mathsf{abc}}^{1/8}}
\end{equation}
which, from Eq.~(\ref{eq: step 4}), implies that 
\begin{equation}
\begin{split}
    \big\lVert \rho^{(3)} - \rho^{(4)} \big\rVert_1
    & \leq
    2t \sqrt{ \frac{17t^2}{N_{\mathsf{abc}}^{1/8}} + \frac{7 t^{3/2}}{(\min_\alpha N_\alpha)^{1/2}} + \frac{2t^2}{N_{\mathsf{ab}}} + \frac{2t^2}{N_{\mathsf{bc}}} +\frac{9 t}{N_{\mathsf{abc}}^{1/8}} } \\
    & \leq
    \frac{2\sqrt{17}t^2}{N_{\mathsf{abc}}^{1/16}} + \frac{2\sqrt{7} t^{7/4}}{(\min_\alpha N_\alpha)^{1/4}} + \frac{2\sqrt{2}t^2}{N_{\mathsf{ab}}^{1/2}} + \frac{2\sqrt{2}t^2}{N_{\mathsf{bc}}^{1/2}} +\frac{6 t^{3/2}}{N_{\mathsf{abc}}^{1/16}} , \\
\end{split}
\end{equation}
where in the second line we use that the square root is subadditive.
This completes the proof.

\subsection{Proof of Theorems~\ref{thm:two-layer-design} and~\ref{thm:two-layer-PRU}} \label{sec: proof blocked fast scrambling} 

Our proof of Theorems~\ref{thm:two-layer-design} and~\ref{thm:two-layer-PRU} follow immediately from  Lemma~\ref{lemma: strong gluing}.
The extension of Theorem~\ref{thm:two-layer-design} to the blocked fast scrambling circuit follows immediately from Lemma~\ref{lemma: strong 2-designs}.

\begin{proof}[Proof of Theorem~\ref{thm:two-layer-design}]
Iterating Lemma~\ref{lemma: strong gluing} $m = n/\xi$ times, we can replace the two-layer circuit with a Haar-random unitary up to a measurable error $(n/\xi)(\varepsilon/n) + \mathcal{O}(nk^2/2^{(3/16)\xi}\xi)$.
The first term is less than $\varepsilon/2$ whenever $\xi \geq 2$.
The second term is less than $\varepsilon/2$ if $\xi \geq \frac{16}{3} \log_2(nk^2/\varepsilon) + \mathcal{O}(1)$.
This completes the proof.
\end{proof}

\noindent \emph{Remark.} The proof immediately extends to a modified blocked fast scrambling circuit in Section~\ref{sec: local random circuit} of the main text, in which we replace the exact $n$-qubit unitary 2-designs with blocked fast scrambling circuits composed of small strong $\frac{\varepsilon_2}{n}$-approximate unitary 2-designs.
From Lemma~\ref{lemma: strong 2-designs}, this blocked fast scrambling circuit forms a strong $\varepsilon_2$-approximate unitary 2-design when $\xi \geq \log_2(5n/\varepsilon_2)$.
Iterating Lemma~\ref{lemma: strong gluing} $m = n/\xi$ times, we can replace the two-layer circuit with a Haar-random unitary up to a measurable error $(n/\xi)(\varepsilon/n) + \mathcal{O}(nk^2/2^{(3/16)\xi}\xi) + \mathcal{O}(n k^{5/8} \varepsilon_2^{1/8})$.
The first term is less than $\varepsilon/3$ whenever $\xi \geq 3$.
The second term is less than $\varepsilon/3$ if $\xi \geq \frac{16}{3} \log_2(nk^2/\varepsilon) + \mathcal{O}(1)$.
The third term is less than $\varepsilon/3$ if $\varepsilon_2 = \mathcal{O}(\varepsilon^8 / n^8 k^5)$, which requires $\xi \geq \log_2(n^9 k^5/\varepsilon^8) + \mathcal{O}(1) = \mathcal{O}(\log nk/\varepsilon)$. \qed

\begin{proof}[Proof of Theorem~\ref{thm:two-layer-PRU}]
By assumption, each individual small PRU in the two-layer circuit is indistinguishable from a small Haar-random unitary by any $\poly n$-time quantum experiment.
Hence, following identical steps to the proof of Theorem~2 in Ref.~\cite{schuster2024random}, the scrambled two-layer circuit of small PRUs is indistinguishable from a scrambled two-layer circuit of small Haar-random unitaries.
From Theorem~\ref{thm:two-layer-PRU}, the latter ensemble forms an $\varepsilon$-approximate strong unitary $k$-design for any $k^2/\varepsilon \leq \mathcal{O}(2^{\xi}/n)$.
Setting $\xi = \omega(\log n)$ yields a design for any $k, 1/\varepsilon = \poly n$.
Hence, from the definition of strong approximate unitary $k$-designs, the scrambled two-layer circuit of small Haar-random unitaries is indistinguishable from a Haar-random unitary by any $\poly n$-time quantum experiment. 
\end{proof}

\subsection{Proof of Theorems~\ref{thm:strong-design-depth} and~\ref{thm:strong-PRU-depth}}

The combination of Theorem~\ref{thm:LRFC-design} and Theorem~\ref{thm:two-layer-design}  immediately yield Theorem~\ref{thm:strong-design-depth} on the circuit depth of strong unitary designs. 

\begin{proof}[Proof of Theorem~\ref{thm:strong-design-depth}]
The second and third statements of Theorem~\ref{thm:strong-design-depth} follows immediately from Theorem~\ref{thm:LRFC-design} and the circuit depth required to implement the LRFC ensemble  with $2k$-wise independent functions~\cite{cui2025unitary}. 
The first statement of Theorem~\ref{thm:strong-design-depth} follows from Theorem~\ref{thm:two-layer-design} and Lemma~\ref{lemma: strong design 1D RUC}.
\end{proof}

\noindent The combination of Theorem~\ref{thm:LRFC-PRU} and Theorem~\ref{thm:two-layer-PRU}  immediately yield Theorem~\ref{thm:strong-PRU-depth} on the circuit depth of strong pseudorandom unitaries. 

\begin{proof}[Proof of Theorem~\ref{thm:strong-PRU-depth}]
The first statement of Theorem~\ref{thm:strong-PRU-depth} follows immediately from Theorem~\ref{thm:LRFC-PRU} and the circuit depth required to implement the LRFC ensemble  with pseudorandom functions~\cite{schuster2024random}. 
The second statement of Theorem~\ref{thm:strong-PRU-depth} follows from Theorem~\ref{thm:LRFC-PRU} and Theorem~\ref{thm:two-layer-PRU}.
The derivation of the circuit depth is described in the main text.
\end{proof}

\section{Ancilla-free pseudorandom unitaries} \label{app: no-ancilla}

In this section, we give the first constructions of ancilla-free (strong) pseudorandom unitarites. Our main crytographic building block will be pseudorandom functions \cite{goldreich1986construct} computable in the complexity class ``logspace-uniform $\mathsf{TC}^1$.'' A function is computable in logspace-uniform $\mathsf{TC}^1$ if (1) it is computable by a family of $O(\log n)$-depth circuits with large fan-in threshold gates and (2) this family of circuits is output by a logspace Turing machine on the input $1^n$. Crucially, it is known that such PRFs exist under the LWE assumption \cite{banerjee2012pseudorandom}, and that this construction is post-quantum secure \cite{zhandry2021PRF}. 

Our main technical result about ancilla-free computation is as follows. 

\begin{theorem}\label{thm:ancilla-free-implementation}
    Let $f: \{0,1\}^n \rightarrow \mathbb Z_q^m$ be any logspace-uniform $\mathsf{TC}^1$-computable function, where $q = O(1)$. Then, there is a $\mathsf{poly}(n, m)$-size reversible circuit implementing the permutation
    \begin{equation} (x,y,a) \mapsto (x, y + f(x) \pmod q, a), 
    \end{equation}
    where $a$ denotes an arbitrary setting of the ancilla register. 
\end{theorem}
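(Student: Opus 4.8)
\textbf{Proof proposal for Theorem~\ref{thm:ancilla-free-implementation}.}

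The plan is to reduce to the catalytic computation result of~\cite{TQC:BFMSSST25-catalytic} and then shave off the remaining $O(\log n)$ clean ancillae by hand. First I would recall what catalytic computation buys us: since $f$ is logspace-uniform $\mathsf{TC}^1$, the results of~\cite{TQC:BFMSSST25-catalytic} give a $\poly(n,m)$-size reversible circuit $R_0$ on registers $(x,y,a',c)$ --- where $a'$ is a $\poly(n)$-size ``dirty'' catalytic register holding arbitrary junk, and $c$ is an $O(\log n)$-size ``clean'' register --- such that $R_0$ implements $(x,y,a',c) \mapsto (x, y + f(x) \bmod q, a', c)$ \emph{provided} $c$ is initialized to $\ket{0^{O(\log n)}}$, and acts as the identity on $a'$ regardless of $a'$. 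Our target is to remove the requirement that $c$ be clean, i.e.\ to produce a circuit that acts as $(x,y,a)\mapsto (x,y+f(x)\bmod q, a)$ for an \emph{arbitrary} setting $a = (a', c)$ of the full ancilla register.

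The key idea is a standard ``compute the clean workspace first'' trick, adapted to the reversible/permutation setting. Observe that the string $c$ we need is short --- only $O(\log n)$ bits --- so there are only $\poly(n)$ possible values it could be holding. The first step would be to run, for each of the (at most) $O(\log n)$ bits of $c$, a controlled-reset gadget: we will XOR into $c$ a value that forces it to $0^{O(\log n)}$. The subtlety is that $c$ cannot be reset unconditionally by a permutation (that is not reversible), so instead I would borrow a fresh block of $O(\log n)$ \emph{clean} bits from \emph{inside} the dirty register $a'$ by the following maneuver. Since $a'$ has $\poly(n) \gg O(\log n)$ bits and the catalytic circuit only promises to restore $a'$ as a whole, we can first deterministically and reversibly ``sort'' a designated $O(\log n)$-bit window $w \subset a'$ into the all-zeros state while recording the information needed to undo this into another part of $a'$ --- but this again is impossible for a permutation when $w$ starts arbitrary. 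The correct fix, which is what~\cref{app: no-ancilla} alludes to via ``reversible circuit identities,'' is to use $c$ itself as the scratch for computing a clean copy: run the $\mathsf{TC}^1$ circuit for $f$ a \emph{second} time in a ``Bennett-style'' pebbling so that we (i) compute $f(x)$ into a clean target using $c$ as workspace, (ii) add it into $y$, and (iii) uncompute, returning $c$ to its original (arbitrary) value. The standard Bennett uncomputation identity $R^{-1} \cdot (\text{copy-out}) \cdot R$ is a permutation that acts as the identity on the workspace register no matter what it held, at the cost of a factor-$2$ blowup in size and one genuinely clean target register of size $m\log q = O(m)$. But we have no clean target either.

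So the real crux, and the step I expect to be the main obstacle, is bootstrapping a single clean register of size $O(m)$ out of nothing but dirty ancilla and the input $x$. Here I would exploit that $x$ itself is $n$ bits of ``known'' (not clean, but controllable) data: use $x$ to address a reversible lookup that deposits $f(x)\bmod q$ directly into $y$ via a sequence of Toffoli-like gates whose controls are the bits of $x$ and whose intermediate AND-values are stored in $c$ and then uncomputed --- i.e.\ apply the catalytic circuit $R_0$ \emph{twice in a Bennett sandwich around the single addition $y \mathrel{+}= (\text{output register})$}, where the ``output register'' is realized as a designated $O(m)$-bit slice of $a'$ that $R_0$'s first invocation fills with $f(x)$ (using $c$, now treated as part of $a'$-style dirty workspace via the catalytic guarantee) and $R_0^{-1}$'s invocation empties back to its original content. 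Concretely: let $R_0$ on $(x; t; a'', c)$ map $(x,t,a'',c)\mapsto(x, t+f(x), a'', c)$ when $c = 0$; replacing $c$'s cleanliness requirement is exactly what catalytic computation was \emph{already} doing for the bulk of $a''$, and the $O(\log n)$ leftover clean bits are handled by one final application of the ``clean-up'' identity from~\cite{TQC:BFMSSST25-catalytic} combined with the observation that $O(\log n)$ clean bits can be manufactured from $x$ whenever $f$ depends on all of $x$ (and if not, pad $f$ to a function that does). I would then verify (routine) that the composed circuit $R_0^{-1}\cdot \mathsf{ADD}_{t\to y}\cdot R_0$ (i) has size $\poly(n,m)$, (ii) implements $(x,y,a)\mapsto(x, y+f(x)\bmod q, a)$, and (iii) acts as the identity on the entire ancilla register $a=(t,a'',c)$ for every setting of $a$, which is precisely the ancilla-independence claimed. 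The bookkeeping of exactly which reversible identities remove the last $O(\log n)$ clean qubits --- and checking that none of them secretly reintroduces a clean-ancilla requirement --- is where the genuine care is needed, and I would defer the detailed gadget-level verification to~\cref{app: no-ancilla}.
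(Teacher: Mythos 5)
There is a genuine gap at exactly the step the theorem is about. Your reduction to the catalytic result of~\cite{TQC:BFMSSST25-catalytic} matches the paper's starting point (its Lemma~\ref{lemma:TC1-implementation}), and you correctly identify the crux: removing the last $O(\log n)$ \emph{clean} ancillae. But your proposed resolution does not work. A Bennett sandwich $R_0^{-1}\cdot\mathsf{ADD}_{t\to y}\cdot R_0$ only guarantees that the scratch registers are \emph{restored}; it does nothing to make the value deposited in $y$ correct when the clean register $c$ is in fact dirty. In that case the catalytic circuit computes some garbage $g(x,a,w)\neq f(x)$, and the sandwich faithfully adds that garbage to $y$, so the composed circuit is not the permutation $(x,y,a)\mapsto(x,y+f(x),a)$ for arbitrary $a$. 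Your fallback — ``manufacture $O(\log n)$ clean bits from $x$, padding $f$ if necessary'' — is not available either: $x$ is an arbitrary input that must be preserved by the permutation, so its bits cannot be zeroed reversibly without clean storage elsewhere, which is the same obstruction you yourself noted and rejected for the dirty register. Deferring ``the detailed gadget-level verification'' to the appendix is circular, since that verification \emph{is} the content of the theorem.

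The paper's actual argument uses two ideas absent from your proposal. First (Lemma~\ref{lemma:preserving-to-controlled}), a commutator identity: adjoin a $\mathbb Z_q$-valued control register and build $O$ (apply the ancilla-preserving circuit $z$ times, i.e.\ add $z\cdot g(x,a,w)$ to $y$) and $W$ (decrement the control by $\chi_{a=0^\ell}$, a unitary on only $\ell+O(1)$ qubits and hence implementable ancilla-free in size $2^{O(\ell)}=\poly(n)$); then $W^\dagger O^\dagger W O$ adds exactly $\chi_{a=0^\ell}\cdot f(x)$ to $y$, because the difference $z\cdot g-(z-\chi_{a=0^\ell})\cdot g=\chi_{a=0^\ell}\cdot g$ kills the garbage whenever $a\neq 0^\ell$ and equals $f(x)$ otherwise. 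Second (Lemma~\ref{lemma:controlled-to-independent}), a cycling trick: with $R$ the increment on the $\ell$-bit trusted ancilla, the circuit $(R\cdot C)^{2^\ell}$ sweeps $a$ through all $2^\ell=\poly(n)$ values so that exactly one pass fires, yielding an ancilla-\emph{independent} implementation at polynomial cost; the $m>1$ case then follows by computing the output symbols one at a time. Neither the controlled cancellation of the dirty-ancilla garbage nor the exhaustive cycling appears in your write-up, and without them the proposal does not close.
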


By combining \cref{thm:ancilla-free-implementation} with the LRFC construction of \cref{sec: LRFC}, obtain the following intermediate result: a PRU family with an efficient ancilla-free implementation of the unitary $U_k \otimes \mathsf{Id}$ (rather than $U_k$ alone). We call such PRUs ``ancilla-independent.''

\begin{theorem}\label{thm:ancilla-independent-PRU}
    Assuming polynomially secure (respectively, sub-exponentially secure) post-quantum PRFs computable in logspace-uniform $\mathsf{TC}^1$, there exist polynomially secure (respectively, sub-exponentially secure) ancilla-independent strong PRUs. 
\end{theorem}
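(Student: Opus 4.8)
\textbf{Proof plan for Theorem~\ref{thm:ancilla-independent-PRU}.}
The plan is to instantiate the LRFC ensemble (Definition~\ref{def:LRFC}) with ancilla-independent implementations of each of its ingredients, and to observe that the security proof of Theorem~\ref{thm:LRFC-PRU} carries over verbatim. Recall that a sample from the LRFC ensemble has the form $U = D \cdot \mathcal{O}^{\mathsf{R},h_2} \cdot \mathcal{O}^{\mathsf{L},h_1} \cdot \mathcal{O}^{f} \cdot C$, where $C,D$ are drawn from a strong unitary $2$-design (e.g.\ random Cliffords), $f$ is a ternary pseudorandom function, and $h_1,h_2$ are pseudorandom functions $\{0,1\}^{n/2}\to\{0,1\}^{n/2}$. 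Theorem~\ref{thm:LRFC-PRU} shows that when $f,h_1,h_2$ are subexponentially quantum-secure PRFs, the LRFC ensemble is a strong PRU with subexponential security (and likewise polynomial security from polynomially secure PRFs). The only thing to check is that each factor can be realized by a $\poly(n)$-size circuit acting as $(\textrm{that factor})\otimes\mathbbm{1}_{2^a}$ on any ancilla register.

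The key steps, in order, are as follows. First, note that $C$ and $D$ are Clifford unitaries, which are computed by $\poly(n)$-size circuits over $n$ qubits using no ancillae at all, so they are trivially ancilla-independent. Second, observe that $\mathcal{O}^{\mathsf{L},h_1}$ and $\mathcal{O}^{\mathsf{R},h_2}$ are precisely classical reversible maps of the form $(x_<\|x_>)\mapsto (x_<\oplus h_1(x_>))\|x_>$ and $x_<\|(x_>\oplus h_2(x_<))$; these are exactly the permutations $(x,y,a)\mapsto (x, y+f(x)\bmod q, a)$ of Theorem~\ref{thm:ancilla-free-implementation} with $q=2$ (one invocation per half-register, using the standard trick of treating the ``$y$'' register as the half that gets XOR'ed and the ``$x$'' register as the control half), provided $h_1,h_2$ are computable in logspace-uniform $\mathsf{TC}^1$. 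The hypothesis of the theorem gives exactly such PRFs under the stated assumption. Third, the ternary phase oracle $\mathcal{O}^{f} = \sum_x \omega_3^{f(x)}\ketbra{x}$ is handled by: (a) using Theorem~\ref{thm:ancilla-free-implementation} with $q=3$ and a fresh (borrowed, dirty) ternary register to compute $(x,y,a)\mapsto(x,y+f(x)\bmod 3,a)$; (b) applying the diagonal phase $\sum_{z\in\mathbb{Z}_3}\omega_3^{z}\ketbra{z}$ on that register — but since we only want the phase to attach to the $\ket{x}$ branch, initialize the ancillary ternary register in $\ket{0}$ rather than borrowing it arbitrarily, so that after step (a) it holds $\ket{f(x)}$, apply the phase, then uncompute (a). Each of these sub-circuits is $\poly(n)$-size, and the $\ket{0}$-initialized ternary ancilla is cleanly returned; the remaining $\poly(n)$ dirty ancillae internal to Theorem~\ref{thm:ancilla-free-implementation} are restored by its ancilla-independence. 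Composing all five factors gives a $\poly(n)$-size circuit computing $U\otimes\mathbbm{1}_{2^a}$ on any dirty-ancilla register, except for the $O(1)$-sized $\ket{0}$-initialized ternary work register needed for the phase oracle — which is sub-leading and can itself be absorbed into the ``clean ancilla'' budget and then removed via the reversible-circuit identities developed for Theorem~\ref{thm:ancilla-free-implementation} (this is the same maneuver used there to eliminate the last $O(\log n)$ clean ancillae from the catalytic construction of~\cite{TQC:BFMSSST25-catalytic}).

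Finally, security: since the ancilla-independent circuit computes exactly the same $n$-qubit unitary $U$ (now tensored with identity on whatever ancillae are present), an adversary against the ancilla-independent construction is in particular an adversary against the plain LRFC PRU of Theorem~\ref{thm:LRFC-PRU}, and conversely the ancilla-independent implementation gives an efficient circuit for $U$. Hence Theorem~\ref{thm:LRFC-PRU} directly yields that this ensemble is a strong PRU with the claimed security level, and the added ancilla-independence is a property of the implementation rather than of the ensemble. I expect the main obstacle to be the bookkeeping in the third step: faithfully reducing the ternary phase oracle to Theorem~\ref{thm:ancilla-free-implementation} requires carefully tracking which ancillae are clean versus dirty, ensuring the $\ket{0}$-initialized ternary register is genuinely uncomputed after the phase is applied (so that it does not leak $f(x)$ into the ancilla), and confirming that the compute–phase–uncompute sandwich interacts correctly with the dirty internal ancillae of Theorem~\ref{thm:ancilla-free-implementation} — in particular that those ancillae are restored despite being touched twice. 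Establishing that logspace-uniform $\mathsf{TC}^1$ is closed under the operations needed (bitwise XOR of a $\mathsf{TC}^1$-function output into a register, and the half-register plumbing) is routine but must be stated. Everything else is immediate from the results already established in the excerpt.
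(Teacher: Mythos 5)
Your overall route is the paper's: instantiate the LRFC ensemble with ancilla-independent components (ancilla-free Clifford/2-design layers, shuffle gates obtained from \cref{thm:ancilla-free-implementation}), and observe that security transfers verbatim from \cref{thm:LRFC-PRU} because the ensemble of unitaries is unchanged. The one place you diverge is the ternary phase oracle, and that is where there is a genuine gap. Your compute--phase--uncompute sandwich requires the auxiliary ternary register to be initialized to $\ket{0}$: on a dirty value $y$ the sandwich acts as $\bigl(\sum_x \omega_3^{f(x)}\ketbra{x}\bigr)\otimes\bigl(\sum_y \omega_3^{y}\ketbra{y}\bigr)$ on system and register, i.e.\ not as $U\otimes\mathbbm{1}$, so ancilla-independence fails. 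You then propose to eliminate this clean register ``via the reversible-circuit identities developed for \cref{thm:ancilla-free-implementation}'', but those identities (\cref{lemma:preserving-to-controlled,lemma:controlled-to-independent}) are stated and proved for classical reversible maps $(x,y,a,w)\mapsto(x,y+g(x,a,w),a,w)$, where the failure mode on an untrusted ancilla is a wrong classical value added into a separate output register; in your situation the failure mode is an unwanted relative phase attached to the ancilla itself, which those lemmas do not address, and you give no argument for why the same maneuver removes it. As written, that removal step is unsupported.

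The paper closes exactly this hole with a dedicated conversion, \cref{lemma:bit-flip-to-phase}: starting from the ancilla-independent bit-flip implementation of \cref{thm:ancilla-free-implementation}, conjugate it by the $q$-ary Fourier transform on a \emph{dirty} register $\sY$ to obtain $O:\ket{x,y,a}\mapsto\omega_q^{\,y\cdot f(x)}\ket{x,y,a}$, and then form $R_{\sY}\, O^\dagger\, R_{\sY}^\dagger\, O$ with the increment $R_{\sY}$, which leaves exactly the phase $\omega_q^{f(x)}$ for every initialization of $\sY$ and of the internal ancillae---no clean register anywhere. (Your construction is also rescued by a one-gate observation: since your sandwich acts as $U\otimes P$ with $P=\sum_z\omega_3^{z}\ketbra{z}$ a fixed constant-size gate on the ternary register, appending $P^\dagger$ on that register restores $U\otimes\mathbbm{1}$ even when the register is dirty; but some such argument must be made explicitly---the appeal to the classical-circuit lemmas does not do the job.) With the phase oracle handled in either of these ways, the remainder of your argument (shuffle gates as the $q=2$ case of \cref{thm:ancilla-free-implementation}, ancilla-free 2-designs, and the security transfer from \cref{thm:LRFC-PRU}) coincides with the paper's proof.
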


Finally, combining \cref{thm:ancilla-free-PRU} with the strong PRU gluing lemma (and its corollary, \cref{thm:two-layer-PRU}), we obtain ancilla-free strong PRUs. This approach works because when gluing ancilla-independent PRU implementations, one can use the ancilla register of one unitary as part of the \emph{input} register of another unitary. In the end, the resulting glued circuit will compute a PRU on its entire domain.  

\begin{theorem}\label{thm:ancilla-free-PRU}
    Assuming post-quantum PRFs computable in $\mathsf{TC}^1$, there exist ancilla-free strong PRUs. Moreover, assuming sub-exponentially secure post-quantum PRFs computable in $\mathsf{TC}^1$, there exist ancilla-free strong PRUs computable in depth $\poly(\log n)$ with all-to-all circuits. 
\end{theorem}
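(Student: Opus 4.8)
\textbf{Proof plan for Theorem~\ref{thm:ancilla-free-PRU}.}
The plan is to combine three previously established ingredients: (i) the ancilla-independent strong PRUs of Theorem~\ref{thm:ancilla-independent-PRU}, (ii) the strong gluing Lemma~\ref{lemma: strong gluing}, via its corollary Theorem~\ref{thm:two-layer-PRU} on the scrambled two-layer ensemble, and (iii) the ancilla-free strong unitary $2$-designs obtained by instantiating the blocked fast scrambling circuit of Lemma~\ref{lemma: strong 2-designs} with ancilla-free $2$-designs on small patches. The key conceptual point is that ``ancilla-independent'' means the circuit implements $U \otimes \mathbbm{1}_{2^a}$ for \emph{any} setting of the ancilla register, so when we tile the $n$ qubits into overlapping patches and place an ancilla-independent strong PRU on each patch, the qubits that serve as the ancilla register of one block can simultaneously be bona fide input qubits of a neighboring block. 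Thus the entire $n$-qubit register is used as ``system,'' and no dedicated ancilla qubits are needed anywhere in the final circuit.

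First I would carry out the $\poly(n)$-depth construction. Take the scrambled two-layer ensemble of Fig.~\ref{fig:constructions}(b): partition the $n$ qubits into $n/\xi$ patches of size $\xi = n^{\epsilon}$, place a two-layer brickwork of strong PRUs on pairs of neighboring patches (so each small unitary acts on $2\xi$ qubits), and sandwich this between two $n$-qubit strong unitary $2$-designs. Instantiate each $2\xi$-qubit block with the ancilla-independent strong PRU of Theorem~\ref{thm:ancilla-independent-PRU}, and implement each block's internal $\poly(2\xi)$ ancilla requirement by borrowing the registers belonging to other patches: since the small PRU is ancilla-independent, this borrowing is harmless, and by a standard scheduling argument the brickwork structure guarantees that at each time step a block and the blocks lending it ancillae are disjoint. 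The two outer $n$-qubit strong $2$-designs can be taken exact (e.g.\ random Cliffords), which require no ancillae. Theorem~\ref{thm:two-layer-PRU} then says that taking $\xi = \omega(\log n)$ — in particular $\xi = n^{\epsilon}$ — makes this ensemble a strong PRU with $\poly(n)$-time security, so the $n$-qubit circuit is an ancilla-free strong PRU. Its depth is the depth of one $n^{\epsilon}$-qubit PRU plus $O(\log n)$ for the $2$-designs; recursing the same construction (or simply noting $\poly(n^{\epsilon})$ depth suffices) gives the stated $\poly(n)$-depth ancilla-free strong PRU under polynomially secure PRFs in $\mathsf{TC}^1$.

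Next I would compress the depth to $\poly(\log n)$. Apply the two-layer ensemble a second time, now with patch size $\xi = \poly(\log n)$, using for the outer $2$-designs the ancilla-free blocked fast scrambling circuits of Lemma~\ref{lemma: strong 2-designs} (built from ancilla-free $2\xi'$-qubit $2$-designs, with $\varepsilon^{-1} = \omega(\poly n)$ so the gluing error is negligible) in depth $\poly(\log n)$, and for the $2\xi$-qubit blocks the ancilla-free PRUs on $\poly(\log n)$ qubits already constructed in the previous paragraph, which have depth $\poly(\log n)$. As before the ancilla-independence of the blocks lets neighboring patches supply each other's ancillae, so the whole $n$-qubit circuit is ancilla-free, and its depth is $\poly(\log n)$. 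For these $\poly(\log n)$-qubit PRU blocks to withstand all $\poly(n)$-time adversaries — which on a $\poly(\log n)$-qubit subsystem means superpolynomial-in-its-own-size time — one must assume the underlying PRFs are sub-exponentially secure; this is where the second hypothesis of the theorem enters. Invoking Theorem~\ref{thm:two-layer-PRU} once more (with the extension noted after its proof, allowing approximate rather than exact $2$-designs) shows the result is a strong PRU with $\poly(n)$-time security. Feeding in the LWE-based $\mathsf{TC}^1$ PRFs of \cite{banerjee2012pseudorandom,zhandry2021PRF} yields Corollary~\ref{cor:LWE-ancilla-free-PRUs-intro}.

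The main obstacle is verifying the ancilla-borrowing bookkeeping rigorously: one must check that at every layer of the glued circuit, the $\poly(\xi)$-qubit ancilla demand of each active PRU block can be met simultaneously by currently-idle registers, that no register is claimed as ancilla by two blocks at once, and — crucially — that the ancilla-independence property survives composition, i.e.\ that after a block acts as $U\otimes\mathbbm{1}$ on its borrowed ancillae and those qubits are later used as genuine inputs to the next block, the overall channel is still the intended product of PRUs rather than something entangled through the ancilla register. This is exactly the content that makes Theorem~\ref{thm:ancilla-independent-PRU}'s ``for any setting of the ancilla'' clause (as opposed to the weaker ``ancilla returned to $\ket{0}$'') indispensable, and the detailed scheduling argument — together with the removal of the last $O(\log n)$ clean ancillae via the reversible-circuit identities behind Theorem~\ref{thm:ancilla-free-implementation} — is deferred to the body of this appendix.
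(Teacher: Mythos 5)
Your proposal is correct and follows essentially the same route as the paper: instantiate the scrambled two-layer gluing construction (Theorem~\ref{thm:two-layer-PRU}) twice, first with $n^{\epsilon}$-qubit ancilla-independent strong PRU blocks that borrow neighboring patches' registers as ancillae to get a $\poly(n)$-depth ancilla-free strong PRU, and then again with $\poly(\log n)$-qubit ancilla-free PRU blocks plus ancilla-free low-depth $2$-designs (Lemma~\ref{lemma: strong 2-designs} with $\varepsilon^{-1}=\omega(\poly n)$), invoking sub-exponential security so the small blocks resist $\poly(n)$-time adversaries. The only minor divergence is your claim that the borrowed-ancilla blocks can be scheduled in parallel; the paper instead concedes the brickwork gates must be applied sequentially due to overlapping registers, but since the first stage only needs $\poly(n)$ depth this does not affect the result.
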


Since logspace-uniform $\mathsf{TC}^1$-computable PRFs are known under the standard LWE assumption \cite{banerjee2012pseudorandom}, we obtain instantiations of our results under LWE.

\begin{corollary}\label{cor:LWE-ancilla-free-PRUs}
    Assuming the post-quantum hardness of LWE, there exist ancilla-free PRUs. Assuming the sub-exponential post-quantum hardness of LWE, there exist ancilla-free strong PRUs computable in depth $\poly(\log n)$ with all-to-all circuits. 
\end{corollary}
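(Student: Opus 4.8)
The final statement to prove is Corollary~\ref{cor:LWE-ancilla-free-PRUs}, which instantiates the abstract constructions under a concrete assumption.

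\textbf{Overview of the plan.} The corollary is a direct specialization of Theorem~\ref{thm:ancilla-free-PRU} (ancilla-free strong PRUs from $\mathsf{TC}^1$-computable post-quantum PRFs) and Theorem~\ref{thm:ancilla-independent-PRU} (ancilla-independent strong PRUs from the same). The plan is therefore to (i) recall that PRFs computable in logspace-uniform $\mathsf{TC}^1$ exist assuming the post-quantum hardness of LWE, citing the lattice-based PRF of Banerjee--Peikert--Rosen~\cite{banerjee2012pseudorandom} together with the fact that its security lifts to the post-quantum setting~\cite{zhandry2021PRF}; and (ii) feed these PRFs into Theorems~\ref{thm:ancilla-independent-PRU} and~\ref{thm:ancilla-free-PRU} as a black box, matching the two hardness regimes (polynomial vs.\ subexponential) in the hypotheses of those theorems to the two statements of the corollary.

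\textbf{Key steps in order.} First, I would invoke~\cite{banerjee2012pseudorandom} to obtain a family of pseudorandom functions whose evaluation is a logspace-uniform $\mathsf{TC}^1$ computation: the relevant point is that their PRF evaluates to (essentially) a rounded matrix product, which lives in $\mathsf{TC}^0 \subseteq \mathsf{TC}^1$, and the logspace-uniformity of the circuit family follows from the explicit description of the key-dependent matrices. Second, I would note that the security reduction of~\cite{banerjee2012pseudorandom} is a reduction to (ring- or plain) LWE that is itself classical and efficient, so by~\cite{zhandry2021PRF} (or the standard observation that LWE-to-PRF reductions are post-quantum sound when the underlying LWE assumption is post-quantum) this PRF is post-quantum secure under post-quantum LWE; moreover, if one assumes \emph{subexponential} post-quantum hardness of LWE, the same reduction yields a subexponentially post-quantum-secure PRF, since the reduction incurs only polynomial overhead. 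Third, I would apply Theorem~\ref{thm:ancilla-free-PRU}: its first statement, which assumes post-quantum $\mathsf{TC}^1$-PRFs, gives ancilla-free strong PRUs (of $\poly(n)$ depth), establishing the first claim; its second statement, which assumes subexponentially secure post-quantum $\mathsf{TC}^1$-PRFs, gives ancilla-free strong PRUs computable in $\poly(\log n)$ depth over all-to-all circuits, establishing the second claim. (One can equivalently route the first claim through Theorem~\ref{thm:ancilla-independent-PRU} followed by the $\poly(n)$-depth gluing construction described after Theorem~\ref{thm:ancilla-free-implementation-intro}.) Finally, I would remark that the same pipeline, stopping at the forward-query-only analysis, recovers ancilla-free (non-strong) PRUs under plain post-quantum LWE, matching the phrasing of the corollary.

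\textbf{Main obstacle.} Since all of the heavy lifting---the ancilla-free compilation of $\mathsf{TC}^1$ functions (Theorem~\ref{thm:ancilla-free-implementation}), the ancilla-independent LRFC instantiation (Theorem~\ref{thm:ancilla-independent-PRU}), and the strong gluing lemma (Lemma~\ref{lemma: strong gluing}, Theorem~\ref{thm:two-layer-PRU})---is already established earlier in the paper, the only genuinely new content here is the cryptographic bookkeeping: verifying that the Banerjee--Peikert--Rosen PRF genuinely falls in \emph{logspace-uniform} $\mathsf{TC}^1$ (as opposed to merely $\mathsf{TC}^1$ with a non-uniform advice string), and that the post-quantum and subexponential security claims propagate correctly through the reduction without loss. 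This is routine but must be stated carefully, since the depth and security guarantees of the final PRU inherit directly from these properties; in particular, the $\poly(\log n)$-depth claim relies on the evaluation circuit having the stated uniform low-depth structure so that Theorem~\ref{thm:ancilla-free-implementation} applies, and the subexponential assumption is needed precisely because the glued $\poly(\log n)$-qubit PRU blocks must resist all $\poly(n)$-time adversaries. I expect no conceptual difficulty beyond ensuring these citations are invoked with the right quantitative parameters.
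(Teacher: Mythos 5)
Your proposal is correct and follows essentially the same route as the paper: the paper derives this corollary directly by citing that logspace-uniform $\mathsf{TC}^1$-computable PRFs exist under LWE~\cite{banerjee2012pseudorandom} with post-quantum security via~\cite{zhandry2021PRF}, and then plugging these into Theorem~\ref{thm:ancilla-free-PRU} for the two hardness regimes. (Your optional detour through a forward-query-only analysis for the first claim is unnecessary, since the strong ancilla-free PRU from Theorem~\ref{thm:ancilla-free-PRU} already subsumes the standard one.)
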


The rest of this section is devoted to proving \cref{thm:ancilla-free-implementation,thm:ancilla-independent-PRU,thm:ancilla-free-PRU}. 

\subsection{Ancilla-preserving reversible computation of functions}
Let $f: \mathcal X \rightarrow \mathbb Z_q$ denote a function with $q = O(1)$. We study different ``ancilla-respecting'' reversible circuit implementations of the computation
\begin{equation}
    \ket{x, y} \mapsto \ket{x, y+ f(x) \pmod q}.
\end{equation}
We will use bra-ket notation to describe the action of these reversible circuits, but we note that they only use Toffoli gates and thus correspond to classical reversible computation, i.e., permutations. 

Our reversible circuits will operate on four registers:
\begin{itemize}
    \item Let $\mathsf X$ denote an $n$-qubit register whose standard basis states correspond to $\mathcal X$.
    \item Let $\mathsf Y$ denote an output register with standard basis in bijection with $\mathbb Z_q$.
    \item Let $\mathsf A$ denote an $\ell$-qubit \emph{trusted} ancilla register. This means that (at least initially), our computations will rely on the ancilla being initialized to the $\ket*{0^\ell}$ state. 
    \item Let $\mathsf W$ denote an \emph{untrusted} (or catalytic) ancilla register of size $\poly n$. 
\end{itemize}

\begin{definition}
\label{def:ap-imp-f}
    We say that a reversible circuit $C$ acting on $(\mathsf{X},  \mathsf{Y}, \mathsf{A}, \mathsf{W})$ is an \emph{ancilla-preserving} implementation of $f$ with trusted space $\sA$ and untrusted space $\sW$ if it maps
    \begin{align}
        \ket{x,y,a,w} \mapsto \ket{x,y + g(x,a,w),a,w},
    \end{align}
    where $g$ is any function that agrees with $f$ when $a = 0$. That is, $g(x,0^\ell,w) = f(x)$, but otherwise $g(x,a,w)$ may be arbitrary.
\end{definition}
Here, ``ancilla-preserving'' refers to the fact that $C$ never changes the values stored in the $\sA$ and $\sW$ register, regardless of what they are initialized to. We say that $\sA$ is ``trusted space'', since $g(x,a,w)$ is only guaranteed to compute the output $f(x)$ correctly when $\sA$ is initialized properly to $0^\ell$. $\sW$ is ``untrusted'' because we require that $g(x,0,w)$ correctly compute $f(x)$ for any choice of $w$. 

As we will show, the following lemma is an easy consequence of recent work on logspace catalytic classical computation \cite{STOC:BCKLS14-catalytic,TQC:BFMSSST25-catalytic}.


\begin{lemma}\label{lemma:TC1-implementation}
    For every function $f: \{0,1\}^n \rightarrow \mathbb Z_q$ computable in logspace-uniform $\mathsf{TC}^1$, there is a $\poly(n)$-size ancilla-preserving reversible circuit $C$ that implements $f$ with an $\ell = O(\log n)$-size trusted ancilla space $\sA$ and a $\poly(n)$-size untrusted ancilla space $\sW$. 
\end{lemma}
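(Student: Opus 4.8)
\textbf{Proof plan for Lemma~\ref{lemma:TC1-implementation}.}
The plan is to reduce to the catalytic logspace computation result. Recall that a logspace-uniform $\mathsf{TC}^1$ circuit family has depth $O(\log n)$ with unbounded fan-in threshold (majority) gates, and the circuit description is produced by a logspace Turing machine on input $1^n$. The key fact I would invoke is that $\mathsf{TC}^1 \subseteq \mathsf{L}$-uniform $\mathsf{NC}^2 \subseteq \mathsf{DSPACE}(\log^2 n)$; more to the point, by the recent work on catalytic computation~\cite{STOC:BCKLS14-catalytic,TQC:BFMSSST25-catalytic}, every function computable in logspace-uniform $\mathsf{TC}^1$ (indeed in logspace-uniform $\mathsf{NC}^2$, or even $\mathsf{CL} = $ catalytic logspace, which contains $\mathsf{TC}^1$) admits a reversible circuit implementation of polynomial size that uses $O(\log n)$ \emph{clean} work bits together with $\poly(n)$ \emph{catalytic} work bits whose content is restored exactly at the end, regardless of their initial value. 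This is precisely the ``somewhat ancilla-independent'' statement alluded to in~\cref{app: no-ancilla}: the catalytic tape plays the role of the untrusted register $\sW$ and the $O(\log n)$ clean bits play the role of the trusted register $\sA$.

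The steps I would carry out are: (i) State the catalytic-computation black box precisely: for $f \in$ logspace-uniform $\mathsf{TC}^1$, there is a $\poly(n)$-size reversible (Toffoli) circuit $C_0$ on registers $(\sX, \sB, \sA, \sW)$ with $|\sA| = O(\log n)$ and $|\sW| = \poly(n)$, where $\sB$ is an output register, such that for every $x$ and every $w$, $C_0\ket{x,0,0^\ell,w} = \ket{x, f(x), 0^\ell, w}$, and moreover $C_0$ acts as the identity on $(\sA,\sW)$ whenever $\sA$ starts in $0^\ell$. (ii) Observe that, since $C_0$ is reversible, running $C_0$, then adding the contents of $\sB$ into the target register $\sY$ modulo $q$ via a $\poly(n)$-size reversible adder (this uses only Toffoli and X gates, with its own clean scratch space absorbed into $\sW$ — or recomputed-and-uncomputed), then running $C_0^{-1}$, implements $\ket{x,y,0^\ell,w}\mapsto \ket{x, y+f(x)\bmod q, 0^\ell, w}$ while leaving $\sB$ in $\ket{0}$ again. (iii) Define $g(x,a,w)$ to be whatever this composite circuit $C \coloneqq C_0^{-1}\cdot \mathsf{ADD}_q \cdot C_0$ computes on a general basis state $\ket{x,y,a,w}$; because every gate of $C$ is a permutation, $C$ necessarily has the form $\ket{x,y,a,w}\mapsto\ket{x, y + g(x,a,w), a, w}$ for some function $g$ provided I set things up so that $\sX,\sA,\sW$ are never overwritten — this requires that the only register $C_0$ writes persistently is $\sB$, which I arrange by the compute/uncompute sandwich, and that $\sA,\sW$ are returned to their input values, which is the catalytic guarantee. (iv) Conclude that $g(x,0^\ell,w) = f(x)$ for all $x,w$ from step (ii), which is exactly~\cref{def:ap-imp-f}.

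The main obstacle — really the only non-bookkeeping point — is verifying that the catalytic-computation machinery applies to \emph{arbitrary} initial contents of the $\poly(n)$-size register $\sW$ and restores them exactly, which is what makes $\sW$ ``untrusted'' in the sense of~\cref{def:ap-imp-f} rather than merely ``clean ancilla.'' This is the content of the cited catalytic results~\cite{STOC:BCKLS14-catalytic,TQC:BFMSSST25-catalytic}, so the work is in quoting the right theorem and checking the parameter regime: one needs $\mathsf{TC}^1 \subseteq \mathsf{CL}$ (catalytic logspace), which holds since $\mathsf{CL} \supseteq \mathsf{L}\text{-uniform } \mathsf{NC}^2 \supseteq \mathsf{TC}^1$, and one needs the transformation from a catalytic \emph{Turing machine} to a catalytic \emph{reversible circuit} of polynomial size with only $O(\log n)$ clean bits — this last step is standard (Bennett-style reversibilization of a space-bounded machine, where the $O(\log n)$ clean bits hold the work tape and head positions and the catalytic tape is carried along untouched except via reversible read/modify/restore operations). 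A secondary, purely technical point is that the modular adder $\mathsf{ADD}_q$ for constant $q$ and the compute/uncompute discipline introduce no net change to $\sB$ or to any scratch bits; since $q = O(1)$ this adder has size $O(n)$ and its scratch can be uncomputed, so it contributes nothing persistent. Assembling these pieces gives the claimed $\poly(n)$-size ancilla-preserving circuit with $\ell = O(\log n)$ trusted and $\poly(n)$ untrusted ancilla.
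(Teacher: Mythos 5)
Your proposal follows essentially the same route as the paper's proof: both invoke the catalytic-computation result of~\cite{TQC:BFMSSST25-catalytic} to get a $\poly(n)$-size reversible circuit for a logspace-uniform $\mathsf{TC}^1$ function with $O(\log n)$ clean bits and a $\poly(n)$ catalytic register, and then apply the standard compute--copy-out--uncompute sandwich, with the temporary output slot absorbed into the trusted ancilla (hence $\ell+O(1)$ trusted bits), to land exactly in the form of~\cref{def:ap-imp-f}. One minor remark: your side justification via $\mathsf{CL}\supseteq\mathsf{NC}^2$ is neither known nor needed---the cited result applies to logspace-uniform $\mathsf{TC}^1$ directly, and even its weaker guarantee (the circuit may scramble $x$ and the clean bits on the good input, restoring only the catalytic register while placing $f(x)$ in the output slot) suffices for your construction precisely because of the final uncomputation step.
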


\begin{proof}
    According to~\cite{TQC:BFMSSST25-catalytic}, for $f: \{0,1\}^n \rightarrow \mathbb Z_q$ computable in logspace-uniform $\mathsf{TC}^1$, there is a $\poly(n)$-size reversible circuit $C'$ that maps
    \begin{align}
        \ket*{x,y,a,w} \mapsto \ket*{p(x,y,a,w)}
    \end{align}
    where $p(x,0,0^{\ell},w) = (x',f(x),a',w)$, but on other inputs, $p(x,y,a,w)$ may be arbitrary. Given such a circuit $C'$, we can implement $C$ satisfying~\cref{def:ap-imp-f}, where the trusted ancilla is now size $\ell+O(1)$, using standard techniques from reversible computation:
    \begin{enumerate}
        \item On input $\ket{x,y,a,w}$, parse $a$ as $(y',a')$, where $y'\in \mathbb Z_q$ and $a'$ is length $\ell$. Run $C'$ on $(x,y',a',w)$. This step does not affect the $y$ register, and the effect on $\ket{x,a,w}$ is:
        \begin{align}
            \ket*{x,a= (y',a'),w} \mapsto \ket*{p(x,y',a',w)}.
        \end{align}
        \item Parse $p(x,y',a',w)$ as $p(x,y',a',w) = (x_{\mathrm{out}},y'_{\mathrm{out}},a'_{\mathrm{out}},w_{\mathrm{out}})$. Add the value of $y'_{\mathrm{out}}$ onto $\ket{y}$.
        \item Apply the inverse of $C'$.
    \end{enumerate}
    The result is that we have performed the map
    \begin{align}
        \ket{x,y,a,w} \mapsto \ket*{x,y + y'_{\mathrm{out}},a,w},
    \end{align}
    where $y'_{\mathrm{out}}$ is a function of $(x,a=(y',a'),w)$, which equals $f(x)$ when $a = (y',a') = (0,0^{\ell})$.
\end{proof}
We will now define two further restricted classes of ancilla-preserving implementations, and we will show how to compile circuits satisfying~\cref{def:ap-imp-f} into circuits satisfying these more restricted notions.

Recall that an ancilla-preserving implementation of $f$ is a circuit $C$ that maps $(x,y,a,w)$ to $(x,y + g(x,a,w),a,w)$, where $g$ is \emph{any function} that agrees with $f$ when $a = 0$. We now define two more restrictive notions where we impose further requirements on $g$.

\begin{definition}[Stronger versions of ancilla-preserving implementations] Let $C$ be an ancilla-preserving implementation of $f$, and let $g$ be as in~\cref{def:ap-imp-f}. Then we say that $C$ is:
    \begin{itemize}
        \item an \textbf{ancilla-controlled} implementation of $f$ if $g(x,a,w) = f(x) \cdot \chi_{a = 0^{\ell}}$ and is $0$ otherwise. 
        \item an \textbf{ancilla-independent} implementation of $f$ if $g(x,a,w) = f(x)$ for all $a,w$.
    \end{itemize}
\end{definition}

\subsubsection{Converting ancilla-preserving to ancilla-controlled}
Next, we show that an ancilla-preserving implementation of a function $f$,
\begin{equation*}
    \ket{x, y, a, w}\mapsto \ket{x, y + g(x,a,w), a, w},
\end{equation*}
can be efficiently converted into an ancilla-controlled implementation of $f$,

\begin{equation*}
    \ket{x, y, a, w'} \mapsto \ket{x,y + \chi_{a = 0^\ell}\cdot f(x), a, w'},
\end{equation*}
where the trusted ancilla space $\mathcal A$ is unchanged and $\mathcal W' = \mathcal W \times \mathbb Z_q$. 

\begin{lemma}\label{lemma:preserving-to-controlled}
    Let $g(x, a, w)$ be any function such that $g(x, 0, w) = f(x)$ for all $(x, w)$. Then, if there is a circuit $C$ implementing the map $\ket{x, y, a, w}\mapsto \ket{x, y + g(x,a,w), a, w}$, there is another circuit $C'$ implementing the map $\ket{x, y, a, w'} \mapsto \ket{x,y + \chi_{a = 0^\ell}\cdot f(x), a, w'}$. 
    
    Moreover, the size and depth of $C'$ is bounded in terms of $C$:
    \begin{itemize}
        \item $|C'| = O(|C|) + 2^{O(\ell)}$,~\footnote{We believe it should be possible to improve the $2^{O(\ell)}$ dependence to $\poly(\ell)$, but due to a subsequent $2^{O(\ell)}$ overhead, it makes no difference for our purposes.} and
        \item $\mathrm{depth}(C') = O(\mathrm{depth}(C)) + 2^{O(\ell)}$
    \end{itemize}
\end{lemma}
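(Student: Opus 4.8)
The plan is to build $C'$ out of the given circuit $C$ together with a gadget that tests whether the trusted ancilla register $\mathsf{A}$ equals $0^\ell$ and writes the answer into a fresh single-register catalytic ancilla $\mathsf{W}'$ of size $\log q$ (i.e.\ one $\mathbb{Z}_q$-valued register). Since $g(x,0^\ell,w)=f(x)$ for all $w$, if we could first \emph{force} the value of $\mathsf{A}$ to $0^\ell$ before running $C$, and then undo that forcing afterward, the circuit would compute $f(x)$ in the $a=0^\ell$ case and, by the control, compute nothing in the $a\neq 0^\ell$ case. The catch is that reversible computation cannot overwrite $\mathsf{A}$ without recording what it overwrote. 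The standard resolution: conditioned on a flag bit $\chi_{a=0^\ell}$ being $0$ we do nothing; conditioned on it being $1$ (which means $a$ already equals $0^\ell$) we may safely run $C$, since in that branch $g(x,a,w)=g(x,0^\ell,w)=f(x)$ automatically. So no overwriting of $\mathsf{A}$ is needed at all.

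Concretely, the steps I would carry out are: (1) Introduce a fresh ancilla register $\mathsf{F}$ (one qubit, part of the new catalytic space $\mathsf{W}'$) and compute $\chi_{a=0^\ell}$ into it by a multi-controlled gate reading all $\ell$ qubits of $\mathsf{A}$ --- this costs $2^{O(\ell)}$ gates in the worst case (or $\poly(\ell)$ with a better pebbling, but as the footnote notes it is immaterial). After this step $\mathsf{F}$ holds $\chi_{a=0^\ell}$ and $\mathsf{A}$ is untouched. (2) Apply $C$ \emph{controlled on $\mathsf{F}=1$}. When $\mathsf{F}=1$ we have $a=0^\ell$, so this maps $\ket{x,y,a,w}\mapsto\ket{x,y+g(x,0^\ell,w),a,w}=\ket{x,y+f(x),a,w}$; when $\mathsf{F}=0$ it acts as identity. (3) Uncompute $\mathsf{F}$ by the same multi-controlled gate from step (1); this is valid because steps (1) and (2) leave $\mathsf{A}$ unchanged, so the value of $\chi_{a=0^\ell}$ is the same as when it was written. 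The net effect is $\ket{x,y,a,w,0}\mapsto\ket{x,\,y+\chi_{a=0^\ell}f(x),\,a,\,w,\,0}$, which is exactly an ancilla-controlled implementation of $f$ with untrusted space $\mathsf{W}'=\mathsf{W}\times\{0,1\}$ (or $\mathsf{W}\times\mathbb{Z}_q$ if one prefers to keep everything $\mathbb{Z}_q$-valued), and trusted space $\mathsf{A}$ unchanged.

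For the size and depth bounds: step (2) is $C$ with one extra control line threaded through every gate, so $|C_{\text{step }2}|=O(|C|)$ and $\mathrm{depth}=O(\mathrm{depth}(C))$ (each Toffoli becomes a bounded-size gadget on one more wire, which one decomposes with $O(1)$ Toffolis and at most a constant number of scratch bits that can themselves be absorbed into $\mathsf{W}'$). Steps (1) and (3) together contribute $2^{O(\ell)}$ gates and $2^{O(\ell)}$ depth. Summing gives $|C'|=O(|C|)+2^{O(\ell)}$ and $\mathrm{depth}(C')=O(\mathrm{depth}(C))+2^{O(\ell)}$, as claimed.

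I expect the only genuinely delicate point --- and hence the "main obstacle" such as it is --- to be making precise that controlling the reversible circuit $C$ on the flag $\mathsf{F}$ preserves its ancilla-preserving property and does not require fresh clean ancillas of its own: one must check that the controlled version of each Toffoli in $C$ can be realized using only Toffolis acting on the existing wires plus catalytic scratch that returns to its initial value, so that $\mathsf{W}'$ remains untrusted (arbitrary initial content, restored at the end). This is routine but is where the bookkeeping lives; everything else (the $\chi_{a=0^\ell}$ computation and its uncomputation, the observation that $g(x,0^\ell,w)=f(x)$ for all $w$ makes the $\mathsf{W}$-register content irrelevant in the active branch) is immediate.
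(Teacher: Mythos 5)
There is a genuine gap, and it sits exactly at the point your construction glosses over: the flag register. The lemma requires $C'$ to implement $\ket{x,y,a,w'}\mapsto\ket{x,y+\chi_{a=0^\ell}f(x),a,w'}$ for \emph{arbitrary} $w'$, i.e.\ the new register you add must be untrusted/catalytic. But your steps (1)--(3) only work if the flag register $\mathsf F$ starts in the known state $0$: if $\mathsf F$ starts in an arbitrary bit $b$, then after XOR-ing $\chi_{a=0^\ell}$ into it the control value is $b\oplus\chi_{a=0^\ell}$, and step (2) runs $C$ in the wrong branch whenever $b=1$ (in particular, for $a\neq 0^\ell$, $b=1$ it adds the garbage value $g(x,a,w)$ to $y$). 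Your own summary of the net effect, $\ket{x,y,a,w,0}\mapsto\ket{x,y+\chi_{a=0^\ell}f(x),a,w,0}$, makes the hidden assumption explicit: the last slot is pinned to $0$. That makes $\mathsf F$ a clean ancilla, which is precisely what this chain of lemmas is trying to eliminate. You cannot repair it by absorbing $\mathsf F$ into the trusted register either: with respect to the enlarged register $(a,b)$ the circuit adds $g(x,a,w)\neq 0$ on inputs with $a\neq 0^\ell$, $b=1$, so it is not ancilla-controlled, and the next step (Lemma~\ref{lemma:controlled-to-independent}, which cycles the trusted register through all $2^{\ell}$ values and relies on a nonzero contribution occurring \emph{only} at the all-zero value) breaks. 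Nor can the clean flag be tolerated downstream, since in the final gluing the ``ancilla'' qubits are neighbors' data qubits in arbitrary states.

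The missing idea is a cancellation trick that makes the construction insensitive to the initial content of the new register, and this is what the paper's proof supplies. It adjoins a $\mathbb Z_q$-valued catalytic register $\sQ$ in an arbitrary state $\ket{z}$, builds $O:\ket{z}\ket{x,y,a,w}\mapsto\ket{z}\ket{x,y+z\cdot g(x,a,w),a,w}$ (a $z$-fold controlled version of $C$, cost $O(q\,|C|)=O(|C|)$) and $W:\ket{z}\mapsto\ket{z-\chi_{a=0^\ell}}$ (a unitary on $\ell+O(1)$ qubits, cost $2^{O(\ell)}$), and takes the group commutator $W^{\dagger}O^{\dagger}WO$. The two applications of $O$ contribute $z\cdot g-(z-\chi_{a=0^\ell})\cdot g=\chi_{a=0^\ell}\,g(x,a,w)=\chi_{a=0^\ell}f(x)$, with the dependence on $z$ telescoping away, and $\sQ$ is restored to $\ket z$. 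A simple XOR-compute/control/uncompute of the flag, as in your proposal, has no analogue of this cancellation; even the standard dirty-ancilla toggling tricks fail here because the ``off'' branch would apply $C$ zero or two times, and $2g\not\equiv 0 \pmod q$ in general. So the approach as written does not prove the lemma; the commutator construction (or something equivalent) is needed.
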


\renewcommand{\Id}{\mathsf{Id}}

\begin{proof}
    Let us introduce a new register $\sQ$ supported on states $\ket{z}$ for $z \in \mathbb{Z}_q$. Suppose we could implement the following maps:
    \begin{align}
        O:& \ket{z}_{\sQ} \ket{x,y,a,w} \mapsto \ket{z}_{\sQ} \ket{x,y + z \cdot g(x,a,w),a,w}\\
        W:& \ket{z}_{\sQ} \ket{x,y,a,w} \mapsto \ket{z - \chi_{a = 0^{\ell}}}_{\sQ}\ket{x,y,a,w},
    \end{align}
    where $\chi_{a= 0^{\ell}}$ is the element $1 \in \mathbb{Z}_q$ if $a = 0^{\ell}$ and is $0 \in \mathbb{Z}_q$ otherwise.
    Since
    \begin{align}
        WO:& \ket{z}\ket{x,y,a,w} \mapsto \ket{z-\chi_{a = 0^{\ell}}}  \ket{x,y + z \cdot g(x,a,w),a,w},\\
        W^\dagger O^\dagger:& \ket{z}\ket{x,y,a,w} \mapsto \ket{z+\chi_{a = 0^{\ell}}}  \ket{x,y - z \cdot g(x,a,w),a,w},
    \end{align}
    we can compose these operations to obtain the desired ancilla-controlled implementation:
    \begin{align}
        W^\dagger O^\dagger W O: \ket{z}\ket{x,y,a,w} &\mapsto \ket{z}  \ket{x,y + z \cdot  g(x,a,w) - (z - \chi_{a=0^\ell})  \cdot g(x,a,w),a,w}\\
        &= \ket{z} \ket{x,y +  \chi_{a = 0^{\ell}} f(x),a,w}.
    \end{align}
    In the last equality, we used the fact that $g$ satisfies $\chi_{a=0^{\ell}} \cdot f(x) = \chi_{a = 0^{\ell}} \cdot g(x,a,w)$. It remains to show how to implement $O$ and $W$.

    \vspace{0.5em}
    \noindent \textbf{Implementing O.} For $i \in \mathbb{Z}_q$ let $C^{(i)}$ be the circuit that applies $C$ controlled on element $z$ in the $\sQ$ register satisfying $z \leq i$. Given an implementation of $C$, we can implement $C^{(i)}$ by replacing each gate $g$ with the gate $g^{(i)}$, which implements $g$ controlled on the element $z$ in the $\sQ$ register satisfying $z \leq i$. Since $q = O(1)$, this only requires a constant number of additional elementary gates. Then $O$ can be implemented as $O = C^{(q)} C^{(q-1)} \cdots C^{(1)}$.

    \vspace{0.5em}
    \noindent \textbf{Implementing $W$}. $W$ is an $\ell + O(1)$ qubit unitary, so it has a $2^{O(\ell)}$-size ancilla-free implementation. \qedhere

\end{proof}

\subsubsection{Converting ancilla-controlled to ancilla-independent}
Finally, we give a simple transformation from ancilla-controlled implementations to ancilla-independent implementations of $f$. The transformation preserves the ancilla registers but has a size and depth blowup of $2^\ell$. 

\begin{lemma}\label{lemma:controlled-to-independent}
    Let $C$ be an ancilla-controlled implementation of $f$, i.e., it implements the map $\ket{x, y,a, w} \mapsto \ket{x, y + \chi_{a = 0^\ell} \cdot f(x), a, w}$. There is an ancilla-independent implementation of $f$, i.e., a circuit $C'$ that maps $\ket{x, y,a, w} \mapsto \ket{x, y + f(x), a, w}$, with size $\Theta(2^\ell \cdot \abs{C})$ and depth $\Theta(2^\ell \cdot \mathrm{depth}(C))$.
\end{lemma}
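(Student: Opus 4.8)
The idea is to iterate the ancilla-controlled implementation $C$ once for each of the $2^\ell$ possible values of the trusted ancilla register, using a sequence of ``relabeling'' permutations on $\sA$ to cycle through all settings $a \in \{0,1\}^\ell$, so that each setting in turn gets temporarily mapped to $0^\ell$ and thus triggers the addition of $f(x)$. Concretely, fix any permutation $\sigma$ of $\{0,1\}^\ell$ that is a single $2^\ell$-cycle, and let $P_\sigma$ denote the $\ell$-qubit reversible circuit (a permutation on $\sA$) implementing $\ket{a} \mapsto \ket{\sigma(a)}$. Since $\ell = O(\log n)$ here (this is how the lemma is used, via \cref{lemma:TC1-implementation,lemma:preserving-to-controlled}), $P_\sigma$ has a $2^{O(\ell)} = \poly(n)$-size ancilla-free implementation. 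The plan is to define
\begin{equation}
    C' \coloneqq \big( P_\sigma^\dagger \cdot C \cdot P_\sigma \big)^{2^\ell}
    = \underbrace{P_\sigma^\dagger \cdot C \cdot P_\sigma \cdot P_\sigma^\dagger \cdot C \cdot P_\sigma \cdots}_{2^\ell \text{ copies of } C},
\end{equation}
where the adjacent $P_\sigma$ and $P_\sigma^\dagger$ between consecutive copies cancel, leaving $C' = P_\sigma^{-2^\ell} \cdot (C P_\sigma) \cdots (C P_\sigma) = C P_\sigma C P_\sigma \cdots C P_\sigma$ up to the fact that $P_\sigma^{2^\ell} = \Id$. (One writes it cleanly as $C' = \prod_{j=1}^{2^\ell} P_\sigma^{-j} C P_\sigma^{j}$, noting the telescoping and $P_\sigma^{2^\ell}=\Id$.)

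\textbf{Correctness.} First I would track the action on a basis state $\ket{x,y,a,w}$. Each factor $P_\sigma^{-j} C P_\sigma^{j}$ leaves $\sX$, $\sW$, and $\sA$ unchanged (since $C$ preserves all three registers and conjugation by a permutation of $\sA$ still preserves $\sA$), and adds $\chi_{\sigma^j(a) = 0^\ell} \cdot f(x)$ into $\sY$ — because applying $P_\sigma^j$ sends the $\sA$ value $a$ to $\sigma^j(a)$, then $C$ adds $f(x)$ iff that value equals $0^\ell$, then $P_\sigma^{-j}$ restores $a$. Summing over $j = 1, \dots, 2^\ell$: since $\sigma$ is a single $2^\ell$-cycle, the values $\sigma^1(a), \sigma^2(a), \dots, \sigma^{2^\ell}(a)$ run over every element of $\{0,1\}^\ell$ exactly once, so exactly one of the indicators $\chi_{\sigma^j(a) = 0^\ell}$ fires. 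Hence the total is $\ket{x, y + f(x), a, w}$ for every $a$ (and every $w$), which is precisely the ancilla-independent specification in the lemma statement. I would also remark that $f$ here may be $\Z_q$-valued with the addition modulo $q$; the argument is unchanged since the added quantities commute.

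\textbf{Size and depth.} This is routine: $C'$ consists of $2^\ell$ copies of $C$ interleaved with $2^\ell + 1$ copies of (the ancilla-free implementation of) $P_\sigma$ or its inverse. The copies of $C$ contribute size $2^\ell \cdot |C|$ and depth $2^\ell \cdot \operatorname{depth}(C)$. Each $P_\sigma$ acts on $\ell$ qubits and, being an arbitrary permutation on $2^\ell$ basis states, admits a reversible circuit of size $2^{O(\ell)}$ and depth $2^{O(\ell)}$; there are $O(2^\ell)$ such factors, contributing $2^{O(\ell)}$ to both size and depth. Combining, $|C'| = \Theta(2^\ell \cdot |C|) + 2^{O(\ell)}$ and $\operatorname{depth}(C') = \Theta(2^\ell \cdot \operatorname{depth}(C)) + 2^{O(\ell)}$; in the regime where the lemma is applied one has $|C|, \operatorname{depth}(C) \geq \poly(\ell)$, so the $2^{O(\ell)}$ terms are absorbed and the bounds read as stated, $|C'| = \Theta(2^\ell |C|)$ and $\operatorname{depth}(C') = \Theta(2^\ell \operatorname{depth}(C))$.

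\textbf{Main obstacle.} There is no deep obstacle — the construction is elementary. The only thing requiring a small amount of care is bookkeeping: verifying that conjugation by $P_\sigma^j$ genuinely preserves the $\sA$ register (so that the factors can be composed without the ancilla drifting), and confirming that one really does enumerate all $2^\ell$ ancilla settings exactly once, which is why $\sigma$ must be chosen to be a \emph{full} cycle rather than an arbitrary permutation. A secondary point worth stating explicitly is that $P_\sigma$ must be implemented \emph{ancilla-free} on $\sA$ alone (not borrowing scratch space), which is possible precisely because $\sA$ is only $O(\log n)$ qubits; this is where the $2^{O(\ell)}$ overhead enters and why the lemma as stated is only useful when $\ell$ is logarithmic.
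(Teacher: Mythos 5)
Your proposal is correct and is essentially the paper's proof: the paper takes the full-cycle permutation to be the increment operator $R:\ket{a}\mapsto\ket{a+1}$ and defines $C'=(R\cdot C)^{2^\ell}$, which is exactly your telescoped product $\prod_j P_\sigma^{-j}CP_\sigma^{j}$ with $\sigma$ the increment cycle, and the correctness argument (exactly one indicator $\chi_{\sigma^j(a)=0^\ell}$ fires as $a$ cycles through all $2^\ell$ values) is the same. Your extra bookkeeping on the $2^{O(\ell)}$ cost of implementing the cycle ancilla-free is consistent with how the paper treats these low-order terms.
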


\begin{proof}
Let $R\coloneqq \sum_{a \in \mathcal A} \ketbra{a+1}{a}$ be the increment operator acting on $\sA$. Then we claim that $C' = (R \cdot C)^{2^{\ell}}$ is an ancilla-independent implementation of $f$. This works because $R \cdot C$ maps  $\ket{x,y,a,w}$ to 
\begin{align}
    \ket{x,y + \chi_{a = 0^{\ell}} \cdot f(x) ,a+1,w},
\end{align}
so repeating this $2^{\ell}$ times maps $\ket{x,y,a,w}$ to
\begin{align}
    \ket*{x,y + \sum_{i \in \calA} \chi_{(a+i) = 0^{\ell}} \cdot f(x),a,w} = \ket*{x,y + f(x),a,w}.
\end{align}
\end{proof}

\subsubsection{Completing the proof of \cref{thm:ancilla-free-implementation}}
By combining \cref{lemma:TC1-implementation,lemma:preserving-to-controlled,lemma:controlled-to-independent}, we obtain \cref{thm:ancilla-free-implementation} in the special case of $m=1$. That is, we have an ancilla-free implementation of any logspace-uniform $\mathsf{TC}^1$ function $f: \{0,1\}^n \rightarrow \mathbb Z_q$ with size $\poly n$. To complete the proof of \cref{thm:ancilla-free-implementation}, it suffices to extend this to $f: \{0,1\}^n \rightarrow \mathbb Z_q$ with $m > 1$. However, this turns out to be simple: compute each output symbol one-at-a-time in sequence (using a different output register for each symbol). It is clear that the ancilla-independent implementations compose, completing the proof.  

\subsection{Constructing ancilla-independent PRUs}
In this section, we proceed from studying ancilla-independent implementations of \emph{functions} to studying ancilla-independent implementations of \emph{unitaries}. We say that $C$ is an ancilla-independent implementation of a unitary $U$ (acting on Hilbert space $\sX$) if $C$ implements the map $U_{\sX} \otimes \Id_{\sA}$. 

\begin{definition}[Ancilla-independent PRU]
We say that a PRU family $\{U_k\}$ is an ancilla-independent PRU family if for every key $k$, there is a polynomial-size quantum circuit $C_k$ that is an ancilla-independent implementation of $U_k$. Moreover, we require that $C_k$ is efficiently computable from $k$. 
\end{definition}

Our goal in this section is to prove the following result
\begin{theorem}\label{thm:ancilla-independent-PRU-from-function}
    Assume the existence of a post-quantum PRF that has a polynomial-size ancilla-independent implementation. Then, there exists an ancilla-independent PRU family. Moreover, if the PRF family is sub-exponentially secure, so is the PRU.
\end{theorem}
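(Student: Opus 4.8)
# Proof proposal for Theorem~\ref{thm:ancilla-independent-PRU-from-function}

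The plan is to instantiate the LRFC ensemble of Definition~\ref{def:LRFC} with ancilla-independent implementations of each of its constituent pieces, and then invoke Theorem~\ref{thm:LRFC-PRU} to conclude strong pseudorandomness. Recall that a sample $U = D \cdot \mathcal{O}^{\mathsf{R},h_2} \cdot \mathcal{O}^{\mathsf{L},h_1} \cdot \mathcal{O}^f \cdot C$ from the LRFC ensemble consists of: two random unitary $2$-designs $C,D$; a ternary phase oracle $\mathcal{O}^f$ for a random function $f:\{0,1\}^n \to \{0,1,2\}$; and two Luby--Rackoff shuffle oracles $\mathcal{O}^{\mathsf{L},h_1},\mathcal{O}^{\mathsf{R},h_2}$ for random functions $h_1,h_2:\{0,1\}^{n/2}\to\{0,1\}^{n/2}$. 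To get an \emph{ancilla-independent} implementation, I would replace $f,h_1,h_2$ with the post-quantum PRF that, by hypothesis, has a polynomial-size ancilla-independent implementation, and observe that each of the three oracles can then be built ancilla-independently from that implementation. For the $2$-designs $C,D$ one can use random Clifford circuits, which are exact $2$-designs computable by polynomial-size \emph{ancilla-free} circuits (hence trivially ancilla-independent). The key will be to show that the phase and shuffle oracles inherit ancilla-independence from the underlying PRF.

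The main structural point is an ancilla-independence lemma for the two oracle types. For the shuffle oracle $\mathcal{O}^{\mathsf{R},h}\ket{x_< \| x_>} = \ket{x_< \| (x_> \oplus h(x_<))}$: given an ancilla-independent implementation of $h$ (i.e.\ a reversible circuit acting as $(u,v,a) \mapsto (u, v \oplus h(u), a)$ for arbitrary ancilla setting $a$, which is exactly the $q=2$, XOR-into-register form guaranteed by the hypothesis — note this matches the statement of Theorem~\ref{thm:ancilla-free-implementation}), one obtains $\mathcal{O}^{\mathsf{R},h}$ by feeding $x_<$ as the input register $u$ and $x_>$ as the output register $v$; the circuit then acts as $\mathcal{O}^{\mathsf{R},h} \otimes \Id$ on the ancilla. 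The oracle $\mathcal{O}^{\mathsf{L},h_1}$ is analogous with the roles of the two halves swapped. For the ternary phase oracle $\mathcal{O}^f = \sum_x \omega_3^{f(x)}\ketbra{x}$: using the ancilla-independent implementation of $f$ (mapping $(x, y, a)\mapsto (x, y + f(x) \bmod 3, a)$ on an auxiliary $\mathbb{Z}_3$ register $y$), one initializes $y$ to an eigenstate of the $\mathbb{Z}_3$ shift $\sum_z \ketbra{z+1}{z}$ — concretely, conjugate by the size-$3$ QFT so that "add $f(x)$ into $y$" becomes "multiply by $\omega_3^{f(x)}$" on the phase eigenbasis, uncompute, and the net effect is $\mathcal{O}^f \otimes \Id$ with the $y$-register and the original ancilla both returned untouched. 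These are standard phase-kickback / catalytic-register manipulations; I would state them as a short lemma and give the one-paragraph verification. Composing the five ancilla-independent pieces (Clifford, phase, two shuffles, Clifford), all sharing the same ancilla register, gives an ancilla-independent implementation of $U_k$ of polynomial size, computable efficiently from the key $k = (h_1, h_2, f)$ (the PRF keys) together with the Clifford descriptions.

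With the ancilla-independent implementation in hand, security is immediate from the earlier results: by Theorem~\ref{thm:LRFC-PRU}, if $f,h_1,h_2$ are sub-exponentially quantum-secure PRFs then the LRFC ensemble is a strong PRU with sub-exponential security; and the reduction is query-preserving so if the PRF is only polynomially secure, the resulting PRU is polynomially secure. (The exact-$2$-design choice $\mathfrak{D} = $ Clifford has additive error $\varepsilon = 0$, so Theorem~\ref{thm:LRFC} gives trace distance $\mathcal{O}(t^2/N^{1/16})$, negligible for all polynomial — indeed all sub-exponential — $t$, once the PRFs are swapped in for truly random functions.) I expect the \emph{main obstacle} to be a clean treatment of the phase oracle's auxiliary $\mathbb{Z}_3$ register: unlike the shuffle oracles, where the ``output register'' is literally part of the $n$-qubit system, the phase oracle needs a fresh register that must be both introduced and catalytically recycled without breaking ancilla-independence, and one must be careful that the $\mathbb{Z}_3$ QFT conjugation plays well with the fact that $f$'s implementation only guarantees correct behavior of the \emph{map} $(x,y,a)\mapsto(x, y+f(x), a)$ and not any particular treatment of $y$ when it is not a standard-basis state — but since the implementation is a permutation of basis states it is linear, so it acts correctly on the QFT-basis superposition as well, and the argument goes through. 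A secondary bookkeeping point is confirming that all five components can genuinely share one common untrusted ancilla register (they can, since ancilla-independence means each acts as identity on \emph{any} ancilla content), so no ancilla count blowup occurs beyond the polynomial factors already present in each component.
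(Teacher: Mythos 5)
Your overall architecture matches the paper's: instantiate the LRFC ensemble with ancilla-independent PRF implementations, note that the shuffle oracles inherit ancilla-independence directly because their output register is part of the $n$-qubit system, take ancilla-free Cliffords for the $2$-designs, and invoke \cref{thm:LRFC-PRU} for security. The shuffle-oracle argument and the security bookkeeping are fine.

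However, your treatment of the ternary phase oracle has a genuine gap. Phase kickback as you describe it requires the auxiliary $\mathbb{Z}_3$ register to be \emph{initialized} to the appropriate Fourier eigenstate: after conjugating the adder by the qutrit Fourier transform, the map is $\ket{x}\ket{y}\ket{a} \mapsto \omega_3^{\,y\cdot f(x)}\ket{x}\ket{y}\ket{a}$, so the kicked-back phase is $\omega_3^{y f(x)}$, which equals $\omega_3^{f(x)}$ only when $y=1$. For an arbitrary setting of that register the circuit implements a \emph{controlled} phase, not $\mathcal{O}^f \otimes \Id$, so it is not ancilla-independent in the sense required (it is closer to what the paper calls an ancilla-controlled implementation). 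Your proposed resolution---that the permutation acts linearly and hence correctly on Fourier-basis superpositions---addresses a different worry and does not remove the $y$-dependence of the phase; and demanding that $y$ be prepared in a specific state reintroduces exactly the clean-ancilla requirement the theorem is supposed to eliminate. The paper closes this gap with a commutator trick (\cref{lemma:bit-flip-to-phase}): writing $O = (\Id \otimes F_3)\,C\,(\Id \otimes F_3^\dagger)$ for the Fourier-conjugated adder and $R_{\sY}$ for the increment on the auxiliary register, the composite $R_{\sY}\, O^\dagger\, R_{\sY}^\dagger\, O$ applies the phase $\omega_3^{\,y f(x) - (y-1) f(x)} = \omega_3^{f(x)}$ uniformly for \emph{every} value of $y$ and of the remaining ancilla, yielding a true ancilla-independent implementation of $\ket{x}\mapsto \omega_3^{f(x)}\ket{x}$ with only a constant-factor overhead. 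With that lemma substituted for your phase-kickback step, the rest of your argument goes through as in the paper.
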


To prove this theorem, we wish to make use of the LRFC construction (\cref{thm:LRFC-PRU}). To do so, we must first give an ancilla-independent implementation of a pseudorandom \emph{ternary phase oracle}. This is achieved via the following lemma. 

\begin{lemma}\label{lemma:bit-flip-to-phase}
    Let $f: \{0,1\}^n \rightarrow \mathbb Z_q$ be a function with $q = O(1)$. Given an ancilla-independent implementation $C$ of $\ket{x, y} \mapsto \ket{x, y+f(x)}$, there is an ancilla-independent implementation $C'$ of $\ket{x}\mapsto \omega_q^{f(x)} \ket{x}$. Moreover, we have that $|C'| = O(|C|) $ and $\mathrm{depth}(C') = O(\mathrm{depth}(C))$. 
\end{lemma}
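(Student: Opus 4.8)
The plan is to realize the phase oracle $\ket{x} \mapsto \omega_q^{f(x)}\ket{x}$ by the standard ``phase kickback'' trick applied in reverse: introduce a single $\log q = O(1)$-qubit work register $\mathsf{Y}$, prepared in a state on which adding $f(x)$ induces the desired phase, then apply the given bit-flip implementation $C$ controlled in the appropriate way, and uncompute $\mathsf{Y}$. Concretely, let $\ket{\chi} \coloneqq \frac{1}{\sqrt{q}}\sum_{j \in \mathbb{Z}_q} \omega_q^{-j}\ket{j}$ be the ``Fourier'' eigenstate of the increment operator $R_q\colon \ket{j}\mapsto\ket{j+1}$ on $\mathsf{Y}$; it satisfies $R_q^{f(x)}\ket{\chi} = \omega_q^{f(x)}\ket{\chi}$. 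First I would prepare $\mathsf{Y}$ in $\ket{\chi}$ from some fixed computational basis state $\ket{0}$ by a constant-size unitary $P$ (a $q$-dimensional Fourier transform, which is $O(1)$-size since $q = O(1)$). Then applying $C$ — which maps $\ket{x,y,a}\mapsto\ket{x,\,y+f(x),\,a}$ for every ancilla setting $a$ — acts on $\ket{x}_{\mathsf{X}}\ket{\chi}_{\mathsf{Y}}$ as $R_{q,\mathsf{Y}}^{f(x)}$ tensored with identity on the $\mathsf{X}$ register, producing $\omega_q^{f(x)}\ket{x}_{\mathsf{X}}\ket{\chi}_{\mathsf{Y}}$. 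Finally I would apply $P^\dagger$ to return $\mathsf{Y}$ to $\ket{0}$, leaving $\omega_q^{f(x)}\ket{x}_{\mathsf{X}}\ket{0}_{\mathsf{Y}}$.

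The key steps, in order: (i) verify that $\ket{\chi}$ is an eigenvector of $R_q$ with eigenvalue $\omega_q$, so that $R_q^{f(x)}\ket{\chi} = \omega_q^{f(x)}\ket{\chi}$ — this is a one-line computation; (ii) observe that $C$ acts as $\ket{x}\ket{y}\ket{a}\mapsto\ket{x}\ket{y+f(x)}\ket{a}$ for \emph{all} $a$ by the ancilla-independence hypothesis, hence on the product state $\ket{x}_{\mathsf{X}}\ket{\chi}_{\mathsf{Y}}\ket{a}_{\mathsf{A}'}$ (where $\mathsf{A}'$ absorbs both the original ancilla of $C$ and any further ancillae) it implements exactly $\sum_x \omega_q^{f(x)}\ketbra{x}{x}_{\mathsf{X}} \otimes \Id$; (iii) conjugate by $P$ to move between the computational basis and the Fourier basis on $\mathsf{Y}$, and argue that the whole composite circuit $C' = P^\dagger_{\mathsf{Y}} \, C \, P_{\mathsf{Y}}$ is again ancilla-independent — the register $\mathsf{Y}$ is returned to $\ket{0}$ and the register $\mathsf{A}'$ is untouched by construction, and one can then regard $\mathsf{Y}$ together with $\mathsf{A}'$ as the ancilla register of $C'$, on which $C'$ acts as identity whenever it is initialized appropriately. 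Actually, since we want a genuinely ancilla-\emph{independent} implementation, I would be slightly more careful in step (iii): the register $\mathsf{Y}$ must be a genuine ancilla, so I should check that $C'$ acts as $(\sum_x \omega_q^{f(x)}\ketbra{x}{x})_{\mathsf{X}}\otimes \Id_{\mathsf{Y}\mathsf{A}'}$ on all of $\mathsf{Y}\mathsf{A}'$. For $\mathsf{A}'$ this is immediate from ancilla-independence of $C$. For $\mathsf{Y}$, note that $P^\dagger R_q^{f(x)} P$ is a diagonal unitary in the computational basis of $\mathsf{Y}$ with entries $\omega_q^{j f(x)}$ on $\ket{j}$; this is not identity on $\mathsf{Y}$. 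So a cleaner route is to not absorb $\mathsf{Y}$ into the ancilla at all: instead prepend the preparation $\ket{0}\mapsto\ket{\chi}$ and append its inverse as \emph{part of $C'$'s gate sequence}, so that $\mathsf{Y}$ is a transient scratch register that $C'$ initializes and resets internally. Then $C'$ uses $\log q = O(1)$ \emph{clean} ancillae — but by the same trick as in Lemma~\ref{lemma:controlled-to-independent} (or more simply, since $q=O(1)$ the $\mathsf{Y}$-register can be made ancilla-independent by summing over its $q$ initial values), one removes this last dependence at constant cost.

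The size and depth bounds are immediate: $C' = P^\dagger C P$ (with $P$ and $P^\dagger$ constant-size since $q = O(1)$), so $|C'| = |C| + O(1) = O(|C|)$ and $\mathrm{depth}(C') = \mathrm{depth}(C) + O(1) = O(\mathrm{depth}(C))$, as claimed.

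I expect the main obstacle — and it is a minor one — to be the bookkeeping in step (iii): making sure that the constant-size register $\mathsf{Y}$ used for phase kickback does not secretly reintroduce an ancilla-dependence, so that the output circuit is honestly ancilla-independent in the sense of the definition. This is handled by either treating $\mathsf{Y}$ as an internal scratch space that $C'$ prepares and uncomputes (absorbing $P, P^\dagger$ into $C'$'s gate list), or, if one insists $\mathsf{Y}$ be externally supplied, by the standard $q$-fold repetition trick over the $q$ possible initializations of a $\log q$-qubit register, exactly as in the proof of Lemma~\ref{lemma:controlled-to-independent}; since $q = O(1)$ this incurs only a constant-factor overhead and preserves all stated bounds.
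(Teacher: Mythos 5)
Your first two steps are essentially the paper's: conjugating the bit-flip circuit $C$ by the $q$-ary Fourier transform on the output register produces exactly the operator the paper calls $O$, which acts as $\ket{x,y,a}\mapsto \omega_q^{y\cdot f(x)}\ket{x,y,a}$ for every setting of $y$ and $a$. The gap is in how you remove the resulting dependence on the $\mathsf{Y}$ register. Your phase-kickback route needs $\mathsf{Y}$ prepared in a specific state (an eigenstate of the increment, or equivalently a fixed computational basis value after conjugation), i.e.\ an $O(1)$-qubit \emph{clean} ancilla, and the two patches you offer do not repair this as stated. The ``sum over the $q$ initial values'' / $q$-fold repetition idea fails concretely: cycling $\mathsf{Y}$ through a full residue system while applying the $y$-dependent phase each round multiplies the phases $\omega_q^{y f(x)}$ over all $y\in\mathbb{Z}_q$, giving total phase $\omega_q^{f(x)\, q(q-1)/2}$, which is $1$ for every odd $q$ --- in particular for $q=3$, the ternary case actually needed for the LRFC ensemble --- rather than $\omega_q^{f(x)}$. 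And the trick of Lemma~\ref{lemma:controlled-to-independent} cannot be invoked ``exactly'': that lemma takes as input an \emph{ancilla-controlled} implementation (identity unless the ancilla is $0$), whereas your conjugated circuit applies a nontrivial phase $\omega_q^{j f(x)}$ on every basis value $j$ of $\mathsf{Y}$; turning it into an ancilla-controlled circuit would require the machinery of Lemma~\ref{lemma:preserving-to-controlled}, which you do not describe and which at that point is more work than the direct argument.

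The missing idea is the paper's commutator (discrete-difference) trick: with $O=(\Id\otimes F_q)\,C\,(\Id\otimes F_q^\dagger)$ and $R_{\mathsf{Y}}$ the increment on $\mathsf{Y}$, the composition $R_{\mathsf{Y}}\, O^\dagger\, R_{\mathsf{Y}}^\dagger\, O$ applies phase $\omega_q^{y f(x)-(y-1)f(x)}=\omega_q^{f(x)}$ and restores $x,y,a$ exactly, for \emph{all} $y$ and $a$. This yields a genuinely ancilla-independent implementation (with $\mathsf{Y}$ absorbed into the ancilla) at cost two calls to $C$ plus $O(1)$ gates, matching the stated size and depth bounds --- no clean ancilla, no eigenstate preparation, and no $q$-fold repetition. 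Without this step (or an equivalent one), your construction either requires a clean ancilla, violating ancilla-independence, or produces the wrong phase.
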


\begin{proof}
    Let $C$ be an ancilla-independent circuit implementation of $\ket{x}\ket{y}\mapsto \ket{x}\ket{y+f(x)}$ with ancilla register $\sA$. Then, to implement $\ket{x}\mapsto \omega_q^{f(x)} \ket{x}$, we use an ancilla register $\sA' = \sA \otimes \sY$. We then implement the map

    \begin{equation}
        O = (\Id_{\sX, \sA} \otimes F_q) C (\Id_{\sX, \sA} \otimes F_q^\dagger)
    \end{equation}
    where $F_q$ denotes the $q$-ary Fourier transform on register $\sY$. This is equivalent to the map

    \begin{equation}
        \ket{x, y, a} \mapsto \ket{x} \otimes \frac 1 {\sqrt q} F_q \sum_{z\in \mathbb F_q} \omega_q^{-z\cdot y} \ket{z + f(x)} \otimes \ket{a} = \omega_q^{y\cdot f(x)} \ket{x, y, a}.
    \end{equation}
    Finally, we observe that, letting $R_{\sY}$ denote the increment operator on $\sY$, 
    \begin{equation}
        R_{\sY}\cdot O^\dagger \cdot R_{\sY}^\dagger \cdot O = \sum_{x, y, a} \omega_q^{f(x)} \ketbra{x, y, a},
    \end{equation}
    yielding the desired ancilla-independent implementation of $\ket{x}\mapsto \omega_q^{f(x)} \ket{x}$. 
\end{proof}

\begin{proof}[Proofs of \cref{thm:ancilla-independent-PRU-from-function,thm:ancilla-independent-PRU}]
We now prove \cref{thm:ancilla-independent-PRU-from-function} by appealing to the LRFC construction of \cref{thm:LRFC-PRU}. That is, unitaries in the PRU family have the form

\begin{equation}
    U = D \cdot S_R \cdot F \cdot S_L \cdot C.
\end{equation}
where $D$ and $C$ are $2$-designs, $F$ is a pseudorandom ternary phase oracle, and $S_L, S_R$ are pseudorandom bit-flip permutations $\ket{x_L, x_R} \mapsto \ket{x_L\oplus f(x_R), x_R}$ and $\ket{x_L, x_R}\mapsto \ket{x_L, x_R\oplus g(x_L)}$. By \cref{thm:ancilla-free-implementation,lemma:bit-flip-to-phase}, we know that there exist ancilla-independent pseudorandom instantiations of $S_L, S_R, F$. Moreover, it is known that approximate $2$-designs have efficient ancilla-independent (even ancilla-free) implementations \cite{PRA:DCEL09-designs}. Thus, by invoking \cref{thm:LRFC-PRU}, we obtain \cref{thm:ancilla-independent-PRU-from-function}.
Moreover, by combining \cref{thm:ancilla-independent-PRU-from-function} with \cref{thm:ancilla-free-implementation}, we immediately obtain \cref{thm:ancilla-independent-PRU}. 
\end{proof}

\subsection{Constructing ancilla-free PRUs}

Finally, we prove \cref{thm:ancilla-free-PRU} by combining \cref{thm:ancilla-independent-PRU} with \cref{thm:two-layer-PRU}. Recall that \cref{thm:two-layer-PRU} states that the following construction yields a strong PRU when its building block unitaries are instantiated with strong PRUs:
\begin{itemize}
    \item Apply a $2$-design.
    \item Apply two brickwork layers of building block unitary gates.
    \item Apply an independent $2$-design.
\end{itemize}
We will use this construction \emph{twice} to prove \cref{thm:ancilla-free-PRU}. First, we instantiate a high-depth version of the construction, where:
\begin{itemize}
    \item The building block unitaries act on $2\cdot n^\epsilon$ qubits. 
    \item The input register is divided into $n^{1-\epsilon}$ blocks of $n^\epsilon$ qubits. 
    \item Given an ancilla-independent implementation of the building block unitary using $n^{\alpha\cdot \epsilon}$ ancilla qubits (for some constant $\alpha > 0$), we construct a circuit from the glued unitary in which each building block unitary uses adjacent registers as its ancilla space.
\end{itemize}
This construction results in a quantum circuit implementation of a distribution of unitaries that is pseudorandom on its entire domain by the gluing lemma. However, its depth is large, especially because all gates in the ``brickwork layers'' must now be concatenated sequentially due to the circuit implementations having overlapping registers. Nevertheless, this yields the high-depth case of \cref{thm:ancilla-free-PRU}. Moreover, if the initial ancilla-independent PRU family is sub-exponentially secure, then so is the ancilla-free PRU family.

Finally, under this sub-exponential security assumption, we can plug our ancilla-free PRU back into \cref{thm:two-layer-PRU}, using block size $\poly(\log n)$, yielding an ancilla-free PRU family of depth $\poly(\log n)$.

\section{Analysis of the mixed Haar twirl} \label{sec: details mixed}

In this Appendix, we provide further details on our analysis of the mixed Haar twirl. This completes our proof of the additive-to-relative error translation result (Lemma~\ref{lemma:translating-strong}) for strong unitary designs. We also provide several additional results on the mixed Haar twirl not used in this work.

\subsection{Reformulating the mixed Haar twirl} 

In this section, we provide the full details of the derivation of our reformulation of the mixed Haar twirl [Eq.~(\ref{eq: exact mixed twirl subspace})], which was used to prove Lemma~\ref{lemma:translating-strong} in Appendix~\ref{sec: proof approx mixed}. Our derivation uses only basic properties of the partially transposed permutations. Nonetheless, it requires several detailed steps to formally construct the partial isometries $\tilde{I}_\ell$ and prove their essential properties.

For the ease of the reader, we structure this section in a pedagogical format. We first motivate and introduce the relevant objects one-by-one. We then provide a series of short proofs of their key properties, which leads to our reformulated expression for the mixed Haar twirl. 
We hope that our analysis may be useful in future works on the mixed Haar twirl, even beyond the context of strong unitary designs.


\subsubsection{The partially transposed permutations (PTPs)}

As previously discussed, we can write the exact expression for the mixed Haar twirl in terms of the partially transposed permutations (PTPs),
\begin{equation} 
    \Phi^{(p,q)}_{H}(X) = \sum_{\sigma, \tau \in S^\Gamma_k} \widetilde{\Wg}_{\sigma, \tau} \cdot \text{tr}(X \sigma^{\dagger}) \cdot \tau,
\end{equation}
where $\sigma = \pi^\Gamma$ and  $\tau = (\tilde{\pi})^\Gamma$ correspond to the original permutation operators with the partial transpose $\Gamma$ applied on the right $q$ copies. 
The summation is over all $(p+q)!$ possible PTPs for both $\sigma$ and $\tau$.
We also let $\widetilde{\Wg}_{\sigma,\tau} \equiv \Wg_{\pi,\tilde{\pi}}$ denote the analog of the Weingarten matrix elements for the PTPs.

Let us begin by reviewing a few basic facts regarding the PTPs.
We denote a \emph{mixed tensor unitary} (MTU) acting on $\mathcal{H}^{\otimes p} \otimes \mathcal{H}^{\otimes q}$ as,
\begin{equation}
\mathcal{U}^p_q \equiv U^{\otimes p} \otimes  U^{*, \otimes q},
\end{equation}
for any $U \in U(D)$ with $D = 2^n$.
We can draw each PTP using tensor network notation,
\begin{align}
\figbox{0.4}{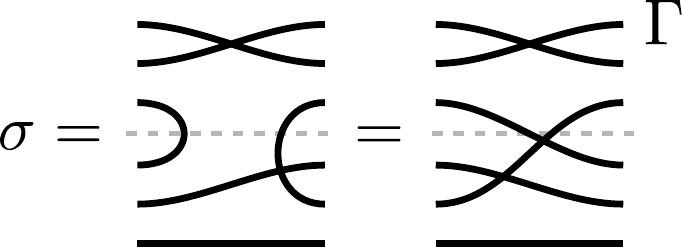}, \centering \label{eq: Brauer def}
\end{align}
where the ``left'' $p$ copies are depicted above the dashed line of the diagram, and the ``right'' $q$ copies are depicted below the dashed line.
One can easily verify that any PTP commutes with any MTU,
\begin{align}
\figbox{0.4}{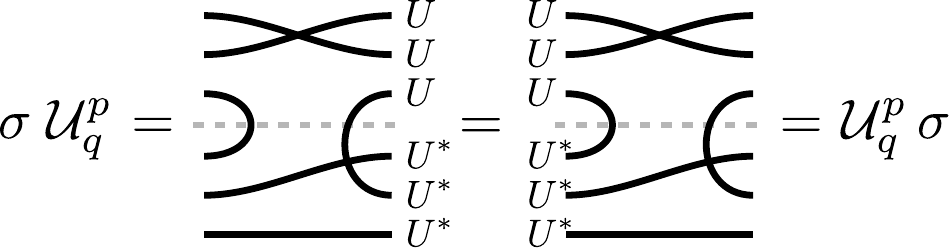}. \centering \label{eq: Brauer comm}
\end{align}
This is especially clear in the diagrammatic depiction of the PTP. Each $U$ in the MTU either slides from the right to left of the PTP, or cancels with a $U^*$ acting on a paired leg of the PTP.
The PTPs form a generating set for the commutant of the MTUs, i.e.~any operator that commutes with every MTU can be written as a sum of PTPs\footnote{An operator commutes with all MTUs if and only if its partial transpose commutes with all tensor power unitaries $U^{\otimes (p+q)}$. The statement that the PTPs form the commutant of the MTUs then follows from the well-known fact that the set of permutations $\pi \in S_k$ generates the commutant of the set of tensor power unitaries. This implies that the partial transpose of the aforementioned operator can be written as a sum of permutations, and hence the operator can be written as a sum of PTPs.}.

The PTPs form a representation of the so-called ``walled Brauer algebra'', $\mathcal{B}^D_{p,q}$. 
The multiplication rules of the algebra can be computed by connecting the legs of the associated PTPs.
For example,
\begin{align}
\figbox{0.4}{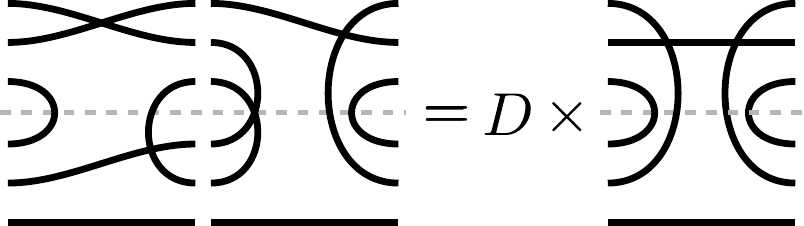} \centering \label{eq: Brauer multiply}
\end{align}
where each closed loop contributes a factor of the Hilbert space dimension $D$.
The representation is \emph{faithful} whenever $D \geq p+q$, i.e.~a linear combination of PTPs is equal to zero, $\sum_\sigma c_\sigma \sigma = 0$, if and only if each coefficient is zero\footnote{This follows from the well-known fact that the representation of the permutation group $S_k$ on $\mathcal{H}^{\otimes k}$ is faithful when $D \geq k$. A linear combination of PTPs is equal to zero if and only if its partial tranpose is equal to zero. Since the representation of the permutation group is faithful, this can be true only if every coefficient is equal to zero.}, $c_\sigma = 0$.

\subsubsection{Constructing complete orthogonal projectors from the PTPs}

Having defined the PTPs, we now begin our own analysis.
Let us first introduce some useful notation.
We can uniquely label each PTP by a set of four quantities, depicted as follows,
\begin{align}
\figbox{0.4}{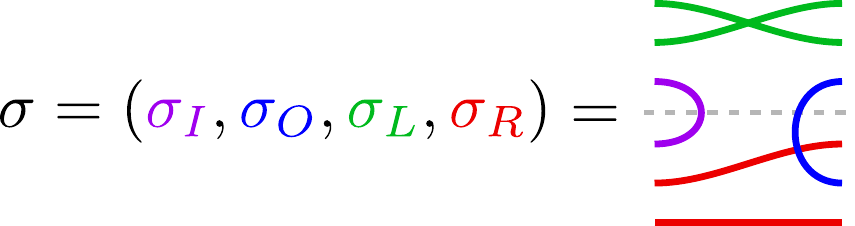} \centering \label{eq: Brauer notation}
\end{align}
The first quantity, $\sigma_I$, specifies the input legs that are paired under $\sigma$ (purple).
We term the number of pairs in $\sigma_I$ the \emph{size} of the PTP, and denote it as $\ell_\sigma = |\sigma_I|$, where $|\cdot|$ counts the number of pairs.
We have $0 \leq \ell_\sigma \leq \min(p,q)$.
The second quantity, $\sigma_O$, specifies the output legs that are paired under $\sigma$ (blue).
The number of output pairs is also $\ell_\sigma$, equal to the number of input pairs.
The third quantity, $\sigma_L \in S_{p-\ell_\sigma}$, specifies a permutation acting on the remaining $p-\ell_\sigma$  legs on the left side (green). 
Similarly, the fourth quantity, $\sigma_R \in S_{q-\ell_\sigma}$, specifies a permutation acting on the remaining $q-\ell_\sigma$  legs on the right side (red). 

The main aim of this section is to show that the PTPs naturally decompose the Hilbert space $\mathcal{H}^{\otimes p} \otimes \mathcal{H}^{\otimes q}$ into a  tensor sum of orthogonal subspaces.
Our derivation of this decomposition requires several steps.
%
To begin, we note that a subset of the PTPs with $\sigma_L = \sigma_R = \mathbbm{1}$ and $\sigma_I = \sigma_O \equiv \alpha$ (for any set of pairs $\alpha$), are proportional to projectors.
Namely, we have $\sigma^2 = D^{\ell_\sigma} \sigma$ for any such PTP where $\ell_\sigma = |\alpha|$.
This allows us to define the ``bare'' projectors,
\begin{equation}
	P_\alpha = \frac{1}{D^{|\alpha|}} \big( \alpha , \alpha , \mathbbm{1} , \mathbbm{1} \big).
\end{equation}
Each $P_\alpha$ projects onto the EPR state on each pair in $\alpha$, and acts as the identity on copies not in $\alpha$.

In general, the bare projectors are not orthogonal to one another. 
For example, whenever $\alpha \supset \beta$, the subspace defined by $P_\alpha$ is strictly contained within the subspace defined by $P_\beta$ (i.e.~$P_\alpha P_\beta = P_\alpha$).
This fact can make working with the bare projectors somewhat inconvenient.
To address this, we will need to orthogonalize the bare projectors.

We do so by introducing a new object: the \emph{no-EPR projector}.
For any subset $\beta$ of the $p+q$ copies, we define the no-EPR projector, $\noEPR_{\beta}$, as the projector onto the subspace of the Hilbert space of $\beta$ that is orthogonal to every EPR projector on $\beta$. That is, we let $\noEPR_{\beta}$ project onto the orthogonal complement of $\{ P_\alpha : \alpha \supseteq \beta \}$.
In the special case when $\beta$ contains every copy, we simply write $\noEPR$, which is the projector onto the orthogonal complement of $\{ P_\alpha  : \forall \alpha \neq \varnothing \}$.
%
We will write down an explicit expression for the no-EPR projector at the end of this subsection.
For now, we simply note that the no-EPR projector can be written as a sum of PTPs supported on $\beta$, since it commutes with every MTP unitary and acts as the identity on $\bar \beta$.

We can use the no-EPR projector to (partially) remedy the non-orthogonality of the bare projectors.
For each set of pairs $\alpha$, we define the ``nearly-orthogonal'' projectors,
\begin{align}
\figbox{0.4}{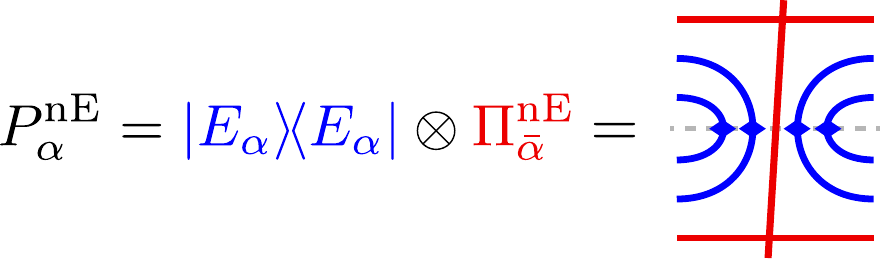} \centering \label{eq: PnE}
\end{align}
where $\ket{E_\alpha}$ denotes the EPR state on all pairs in $\alpha$.
Here, the slash represents the no-EPR projector on $\bar \alpha$, and each diamond represents a factor of the inverse square root Hilbert space dimension, $1/\sqrt{D}$, arising from the normalization of $\ket{E_\alpha}$. 
By definition, the nearly-orthogonal projector projects onto all states in $P_\alpha$ that are not contained in any $P_\beta$ for $\alpha \supset \beta$.
We can also write $\noP_\alpha = P_\alpha \noEPR_{\bar\alpha} = \noEPR_{\bar\alpha} P_\alpha$.

A simple argument shows that any two $\noP_\alpha$ and $\noP_\beta$ are orthogonal whenever the size of $\alpha$ and $\beta$ differ. We prove this in Appendix~\ref{app: nearly}.
\begin{proposition}[Nearly-orthogonal projectors] \label{prop: PnE orthogonality}
	Two nearly-orthogonal projectors $\noP_\alpha$ and $\noP_\beta$ are orthogonal, $\noP_\alpha \noP_\beta = 0$, if $|\alpha| \neq |\beta|$. 
\end{proposition}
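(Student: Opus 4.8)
\textbf{Proof strategy for Proposition~\ref{prop: PnE orthogonality}.}

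The plan is to prove the contrapositive-adjacent statement directly: if $|\alpha| \neq |\beta|$, then $\noP_\alpha \noP_\beta = 0$. Without loss of generality assume $|\alpha| < |\beta|$. The key structural fact is that $\noP_\beta = P_\beta \, \noEPR_{\bar\beta}$ lies entirely inside the image of $P_\beta$, i.e.\ $P_\beta \noP_\beta = \noP_\beta$. So it suffices to show $\noP_\alpha P_\beta = 0$, since then $\noP_\alpha \noP_\beta = \noP_\alpha P_\beta \noP_\beta = 0$. Now I would analyze the product $\noP_\alpha P_\beta = P_\alpha \, \noEPR_{\bar\alpha} \, P_\beta$ by cases on how the pair-sets $\alpha$ and $\beta$ overlap.

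The main case is when $\beta \not\subseteq \alpha$, which is forced to happen in \emph{some} form since $|\beta| > |\alpha|$: there must be at least one pair in $\beta$ that is not a pair of $\alpha$. The cleanest sub-case is when $\beta$ contains a pair $(i,j)$ such that neither $i$ nor $j$ is matched by any pair of $\alpha$ — then $(i,j) \subseteq \bar\alpha$, so $P_\beta$ contains the EPR projector $P_{(i,j)}$ as a factor acting within the $\bar\alpha$ block, and this is killed by $\noEPR_{\bar\alpha}$ (which by definition is orthogonal to every EPR projector on a pair inside $\bar\alpha$), giving $\noEPR_{\bar\alpha} P_\beta = 0$. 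The remaining sub-case is when every pair of $\beta$ that is not a pair of $\alpha$ nonetheless shares at least one leg with some pair of $\alpha$. Here I would argue that $P_\alpha$ and $P_\beta$ together force an EPR structure that collapses: concretely, multiplying $P_\alpha$ by such a $P_\beta$ produces, via the walled-Brauer multiplication rule Eq.~(\ref{eq: Brauer multiply}), a PTP whose pair-set on the input side strictly contains $\alpha$ (a new pair gets formed by composing overlapping EPR projectors), so $P_\alpha P_\beta$ is proportional to $P_{\alpha'}$ with $\alpha' \supsetneq \alpha$; but then $\noEPR_{\bar\alpha} P_\alpha P_\beta \propto \noEPR_{\bar\alpha} P_{\alpha'} = 0$ because $\alpha' \setminus \alpha$ is a nonempty set of pairs inside $\bar\alpha$ and $P_{\alpha'}$ contains the corresponding EPR projectors as factors. (If instead $\beta \subseteq \alpha$, then since $|\beta| < |\alpha|$ we have $\alpha \supsetneq \beta$, and I would use the complementary observation $\noP_\alpha = \noEPR_{\bar\alpha} P_\alpha$ with $P_\alpha P_\beta = P_\alpha$, so $\noP_\alpha P_\beta = \noP_\alpha$, which is not obviously zero — so actually this case cannot arise for the \emph{product} being zero via this route, and I would instead swap roles and use $\noP_\beta P_\alpha$: since $\alpha \supsetneq \beta$ means $P_\alpha$ contains an EPR factor on a pair inside $\bar\beta$, we get $\noEPR_{\bar\beta} P_\alpha = 0$, hence $\noP_\beta \noP_\alpha = 0$ and by taking adjoints — both projectors are Hermitian — $\noP_\alpha \noP_\beta = 0$.)

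The main obstacle I anticipate is the careful bookkeeping in the ``overlapping legs'' sub-case: I need to verify that composing $P_\alpha$ with an EPR projector on a pair that shares exactly one leg with $\alpha$ genuinely produces (up to a power of $D$) a bare projector $P_{\alpha'}$ with $|\alpha'| = |\alpha|$ or $|\alpha'| > |\alpha|$ — the point being that in either situation $\alpha'$ contains a pair lying inside $\bar\alpha$, so that $\noEPR_{\bar\alpha}$ annihilates it. Drawing the tensor diagrams as in Eq.~(\ref{eq: Brauer multiply}) makes this transparent: connecting an EPR ket--bra on legs $\{i,k\}$ to the EPR structure of $P_\alpha$ (which, say, pairs $i$ with $i'$) reroutes to pair $k$ with $i'$, and $k \in \bar\alpha$, $i' \in \bar\alpha$ by construction, so the new pair $\{k,i'\}$ is genuinely inside $\bar\alpha$. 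Once this diagrammatic identity is pinned down, the proposition follows by assembling the cases; I would also remark that the explicit formula for $\noEPR$ derived at the end of the subsection (or simply its defining property of being orthogonal to all $P_\gamma$ with $\gamma \neq \varnothing$ on the relevant block) is all that is needed, so no heavy computation is required.
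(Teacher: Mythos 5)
Your overall mechanism—reduce to showing $\noEPR_{\bar\alpha}\,P_\alpha P_\beta=0$ by arguing that the product of the bare projectors carries an EPR pair inside $\bar\alpha$—is the same one the paper uses, and your first sub-case (some pair of $\beta$ lying entirely in $\bar\alpha$) is fine, while the parenthetical $\beta\subseteq\alpha$ discussion is vacuous under your WLOG $|\alpha|<|\beta|$ (it even asserts $|\beta|<|\alpha|$). The genuine gap is in the crucial remaining case, where every pair of $\beta\setminus\alpha$ shares a leg with $\alpha$: both claims you lean on there are false as stated. First, $P_\alpha P_\beta$ is generally \emph{not} proportional to a bare projector $P_{\alpha'}$; for instance with $\alpha=\{(1,1')\}$ and $\beta=\{(1,2'),(2,1')\}$ one finds
\begin{equation}
    P_\alpha P_\beta \;\propto\; \ket{E_{11'}}\!\ket{E_{22'}}\bra{E_{12'}}\!\bra{E_{21'}},
\end{equation}
whose ket-side and bra-side pairings differ (this example sits squarely in your problematic case, since both pairs of $\beta$ touch $\alpha$). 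Second, your rerouting claim that the newly created pair is $\{k,i'\}$ with $k,i'\in\bar\alpha$ cannot be right: $i'$ is by definition the $\alpha$-partner of $i$, so it lies in the support of $\alpha$, not in $\bar\alpha$; and if $i'$ happens to be unpaired by $\beta$, contracting $\bra{E_{ii'}}$ against $\ket{E_{ik}}$ produces a through-line from $i'$ to $k$ rather than any new pair, so that local move creates no new ket-side pair at all.

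What actually forces the extra $\bar\alpha$-pair is a counting argument, which is precisely how the paper proceeds: by Fact~\ref{fact: PPT mult}, the product of a size-$|\alpha|$ and a size-$|\beta|$ partially transposed permutation is (a power of $D$ times) a PTP of size at least $\max(|\alpha|,|\beta|)=|\beta|$, and its pair set on the side facing $\noEPR_{\bar\alpha}$ (the side inherited from $P_\alpha$) contains $\alpha$. Since $|\beta|>|\alpha|$, that pairing has at least one pair beyond $\alpha$; because the pairs of a pairing are disjoint, the extra pair uses only legs in $\bar\alpha$, and $\noEPR_{\bar\alpha}$ annihilates it from the left, giving $\noP_\alpha\noP_\beta=\noEPR_{\bar\alpha}P_\alpha P_\beta\noEPR_{\bar\beta}=0$. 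If you replace your local diagrammatic claim with this size/pigeonhole step (and drop the assertion that the product is a bare projector—only the pair structure on the $\noEPR_{\bar\alpha}$-facing side matters), your argument closes; indeed the size argument subsumes both of your cases, which is why the paper needs no case split.
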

\noindent However, the nearly-orthogonal projectors are \emph{not} orthogonal to one another when $|\alpha|=|\beta|$.

To address this latter fact, we can define a final set of ``orthogonal'' projectors, $\tilde{P}_\alpha$. 
We do so via the Gram-Schmidt process.
For each size $\ell = 0,\ldots,\min(p,q)$, we consider any ordering, $\{ \alpha_0, \alpha_1 , \ldots \}$, of the pairings $\alpha$ with size $|\alpha| = \ell$.
We then proceed $\alpha$-by-$\alpha$ through the ordering, and at each step define the orthogonal projector $\tilde{P}_\alpha$ as,
\begin{equation} \label{eq: def tilde P}
	\tilde{P}_\alpha \ket{\psi} = 
	\begin{cases}
		1, & \text{ if } \ket{\psi} \in \text{span} \left( \{ \noP_{\alpha'} : \alpha' \leq \alpha \} \right) \text{ and } \ket{\psi} \notin \text{span} \left( \{ \noP_{\alpha'} : \alpha' < \alpha \} \right) \\
		0, & \text{ else } \\
	\end{cases}.
\end{equation}
In the special case where $\alpha = \varnothing$ is the empty set, we have $\tilde{P}_{\varnothing} = \noEPR$.
The projectors $\tilde{P}_\alpha$ are mutually orthogonal by definition.
They are also complete, $\sum_\alpha \tilde{P}_\alpha = \mathbbm{1}$, by definition. 
Finally, it will be convenient to also define the projector onto the \emph{union} of all $\tilde{P}_\alpha$ of a given size $|\alpha| = \ell$. We term this the $\ell$\emph{-EPR projector},
\begin{equation}
	\tilde{P}_\ell \,\,\, = \sum_{\alpha : |\alpha| = \ell} \tilde{P}_\alpha.
\end{equation}
Unlike the individual orthogonal projectors $\tilde{P}_\alpha$, the projector $\tilde{P}_\ell$ is uniquely defined, independent of our ordering of the pairings $\alpha$ within each size $\ell$.

We can now demonstrate several useful properties of the orthogonal projectors.
We begin by proving (Appendix~\ref{app: orthogonal}) that the orthogonal projectors can be written as a sum of PTPs.
\begin{proposition}[Orthogonal projectors] \label{prop: proj diagrams}
	Each projector $\tilde{P}_\alpha$ can be written as a sum of PTPs with either (i) size greater than $|\alpha|$, or (ii) size equal to $|\alpha|$ and $\alpha_I, \alpha_O \leq \alpha$ with respect to the ordering in Eq.~(\ref{eq: def tilde P}).
\end{proposition}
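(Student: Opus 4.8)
The plan is to establish the claim by strong induction on $|\alpha|$, running through the Gram–Schmidt ordering, and to express each $\tilde P_\alpha$ in terms of the ``nearly-orthogonal'' projectors $\noP_{\alpha'}$ and lower-size orthogonal projectors, both of which we will have already shown can be written as sums of PTPs of the requisite form. The starting point is the base case $\alpha = \varnothing$: here $\tilde P_\varnothing = \noEPR$, and since $\noEPR$ commutes with every MTU and equals (by the discussion preceding the proposition) a sum of PTPs, it remains to check that every PTP appearing in the explicit expression for $\noEPR$ (which I would take from Appendix~\ref{app: nearly}, where the explicit formula for the no-EPR projector is promised) either has size $>0$ or has size $0$ with trivial pairings — but size-$0$ PTPs have \emph{no} pairings at all, so condition (ii) is vacuously satisfied. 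More generally, for any $\alpha$ with $|\alpha| = \ell$, I would first handle $\noP_\alpha = P_\alpha \noEPR_{\bar\alpha}$: the diagram in Eq.~(\ref{eq: PnE}) shows $\noP_\alpha$ is $P_\alpha$ (a single PTP of size exactly $\ell$ with $\sigma_I = \sigma_O = \alpha$, $\sigma_L = \sigma_R = \mathbbm{1}$) composed with the no-EPR projector on $\bar\alpha$ (a sum of PTPs supported on $\bar\alpha$). Multiplying these diagrams via the walled-Brauer composition rule Eq.~(\ref{eq: Brauer multiply}), every resulting PTP contains the pair structure $\alpha$ on both input and output (since $P_\alpha$ forces EPR pairs on $\alpha$ that cannot be undone by operators on $\bar\alpha$), hence has size $\geq \ell$; and a resulting PTP has size exactly $\ell$ only when the $\bar\alpha$-part contributes a pure permutation (no new pairs), in which case its pairings are precisely $\alpha_I = \alpha_O = \alpha \leq \alpha$. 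So each $\noP_\alpha$ is a sum of PTPs of the allowed form.

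Next, for the inductive step I would use the defining relation Eq.~(\ref{eq: def tilde P}): $\tilde P_\alpha$ is the orthogonal projector onto the part of $\mathrm{span}(\{\noP_{\alpha'} : \alpha' \leq \alpha\})$ not already in $\mathrm{span}(\{\noP_{\alpha'} : \alpha' < \alpha\})$. I would argue that this ``new'' subspace is annihilated by every $\tilde P_\beta$ with $|\beta| < \ell$ (by Proposition~\ref{prop: PnE orthogonality}, any $\noP_{\alpha'}$ with $|\alpha'| = \ell$ is orthogonal to every $\noP_\beta$ of smaller size, hence to every $\tilde P_\beta$ of smaller size, since those $\tilde P_\beta$ live in the span of the smaller-size $\noP$'s by the inductive construction). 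Consequently $\tilde P_\alpha = \tilde P_\ell \cdot \tilde P_\alpha$, and within the size-$\ell$ block the Gram–Schmidt recursion reads
\begin{equation}
\tilde P_\alpha = Q_\alpha - \sum_{\alpha' < \alpha,\ |\alpha'| = \ell} \tilde P_{\alpha'} Q_\alpha \tilde P_{\alpha'} \cdot (\text{normalization}),
\end{equation}
or more cleanly, $\tilde P_\alpha$ is obtained from $\noP_\alpha$ by subtracting its components along the previously-constructed $\tilde P_{\alpha'}$, $\alpha' < \alpha$. The key point is closure: each $\noP_{\alpha'}$ with $\alpha' \leq \alpha$ is a sum of PTPs of size $\geq \ell$, with the size-$\ell$ ones having pairings $\leq \alpha' \leq \alpha$; each already-constructed $\tilde P_{\alpha'}$ ($\alpha' < \alpha$) is, by induction, a sum of PTPs of size $\geq \ell$ with the size-$\ell$ ones having pairings $\leq \alpha' < \alpha$; and products of such PTPs (via Eq.~(\ref{eq: Brauer multiply})) can only increase the size or leave it at $\ell$ with pairings contained in those of the factors, hence $\leq \alpha$. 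Therefore $\tilde P_\alpha$, being a linear combination of such products, is a sum of PTPs of the asserted form.

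The one genuinely delicate point — and the step I expect to be the main obstacle — is verifying that the Gram–Schmidt subtraction does not \emph{lose} the size-$\ell$ leading term or introduce spurious size-$(\ell{-}1)$ PTPs. Concretely: one must check that when we write $\tilde P_\alpha$ as a polynomial in the $\noP_{\alpha'}$'s, the size-$\ell$ portion survives with pairings exactly $\leq \alpha$ and nothing of strictly smaller size appears. This follows because, by Proposition~\ref{prop: PnE orthogonality}, the size-$(\ell{-}1)$-and-smaller PTP contributions to the various $\noP_{\alpha'}$ are exactly the pieces that live in (and span) the smaller-size subspaces $\mathrm{span}(\{\noP_\beta : |\beta| < \ell\})$; projecting onto the orthogonal complement of those subspaces — which is precisely what Gram–Schmidt against the smaller-size $\tilde P_\beta$'s accomplishes, using $\tilde P_\alpha = \tilde P_\ell \tilde P_\alpha$ — kills exactly those low-size pieces while preserving the size-$\ell$ leading behavior. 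I would formalize this by an orthogonality/rank argument: $\tilde P_\ell = \sum_{|\alpha| = \ell}\tilde P_\alpha$ is uniquely determined and equals the projector onto the span of the size-$\ell$ $\noP$'s modulo the lower-size spans, which by the faithfulness of the walled-Brauer representation (valid since $D \geq p+q$, stated in the excerpt) is spanned by size-$\geq \ell$ PTPs with the size-$\ell$ ones free. A careful bookkeeping of which pairings $\alpha_I, \alpha_O$ can appear — tracking that composition never ``enlarges'' a pairing set beyond the union of the factors' pairings, and that the ordering constraint $\leq \alpha$ is inherited — completes the argument. This pairing-bookkeeping within the fixed-size block, and its interaction with the $\alpha' < \alpha$ ordering, is where the real care is needed; everything else is routine diagrammatic manipulation via Eq.~(\ref{eq: Brauer multiply}) and the definitions.
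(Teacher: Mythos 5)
Your inductive strategy stalls at exactly the point you flag, and the flag is not resolved. The unjustified step is the claim that $\tilde{P}_\alpha$ is ``a linear combination of products'' of the $\noP_{\alpha'}$ and the previously constructed $\tilde{P}_{\alpha'}$. The recursion you write, $\tilde{P}_\alpha = Q_\alpha - \sum_{\alpha'<\alpha}\tilde{P}_{\alpha'}Q_\alpha\tilde{P}_{\alpha'}$ (up to normalization), is not a correct formula: because the $\noP_{\alpha'}$ of equal size are \emph{not} mutually orthogonal, subtracting from $\noP_\alpha$ its ``components along'' earlier projectors yields a positive operator that is generically not idempotent, and the true $\tilde{P}_\alpha$ of Eq.~(\ref{eq: def tilde P}) --- the orthogonal projector onto the complement of $\mathrm{span}\{\noP_{\alpha'}:\alpha'<\alpha\}$ inside $\mathrm{span}\{\noP_{\alpha'}:\alpha'\le\alpha\}$ --- is a \emph{support projection}, i.e.\ a spectral function of the given operators, not a polynomial in them. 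Your closure observation (products and linear combinations of allowed PTPs stay allowed, which is correct apart from the direction of the pairing containment: the product's pairings \emph{contain} those of the factors, and equal them only when the size does not grow) therefore does not by itself apply to $\tilde{P}_\alpha$; the appeal to $\tilde{P}_\alpha=\tilde{P}_\ell\tilde{P}_\alpha$ is circular (since $\tilde{P}_\ell$ is defined as $\sum_{|\alpha|=\ell}\tilde{P}_\alpha$), and ``rank argument'' plus ``pairing bookkeeping'' is not a proof of the one statement that matters, namely that forming the projector onto a span of non-orthogonal ranges preserves membership in the span of allowed PTPs. The gap is repairable inside your framework --- e.g.\ note that the span $\mathcal{A}_\alpha$ of allowed PTPs is a non-unital $*$-algebra closed under products; that $P_<$, the projector onto $\mathrm{span}\{\noP_{\alpha'}:\alpha'<\alpha\}$, is the support projection of the positive element $\sum_{\alpha'<\alpha}\noP_{\alpha'}\in\mathcal{A}_\alpha$ and hence lies in $\mathcal{A}_\alpha$ (support projections of positive elements are limits of constant-free polynomials, and $\mathcal{A}_\alpha$ is closed); and that $\tilde{P}_\alpha$ is the support projection of $(1-P_<)\noP_\alpha(1-P_<)=\noP_\alpha-P_<\noP_\alpha-\noP_\alpha P_<+P_<\noP_\alpha P_<\in\mathcal{A}_\alpha$ --- but none of this appears in your write-up, and without it the inductive step does not go through.

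For comparison, the paper's proof avoids Gram--Schmidt bookkeeping entirely. It first shows directly from Eq.~(\ref{eq: def tilde P}) that $\tilde{P}_\alpha$ commutes with every mixed tensor unitary, then writes $\tilde{P}_\alpha=\E_{U\sim H}\big[\mathcal{U}^p_q\,\tilde{P}_\alpha\,(\mathcal{U}^p_q)^\dagger\big]$ and expands $\tilde{P}_\alpha$ in an orthonormal basis of its range: each basis vector lies in $\mathrm{span}\{P_{\alpha'}:\alpha'\le\alpha\}$, so it is a combination of vectors $\ket{E_{\alpha'}}\otimes\ket{\phi_{\bar\alpha'}}$ with $\alpha'\le\alpha$. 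Applying the explicit mixed-Haar-twirl formula (Lemma~\ref{lemma: explicit mixed twirl}) to each resulting cross term produces only PTPs whose input and output pairings contain $\alpha'$ and $\alpha''$ with $\alpha',\alpha''\le\alpha$, which is precisely clauses (i) and (ii). Your treatment of the building blocks ($\noEPR$, $\noP_\alpha=P_\alpha\noEPR_{\bar\alpha}$, and the product/size Fact) is sound, but the central projector step needs either the support-projection argument sketched above or the paper's twirl argument.
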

\noindent Intuitively, this follows because the Gram-Schmidt process constructs an orthogonal vector by taking a linear combination of the current vector and all vectors previous to it.

We also have the following useful fact (Appendix~\ref{app: ell}).
\begin{proposition}[The $\ell$-EPR projector]  \label{ref: proj ell commutes}
	The projector $\tilde{P}_\ell$ commutes with every PTP.
\end{proposition}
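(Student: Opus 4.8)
The plan is to show that $\tilde P_\ell$, defined as $\sum_{\alpha:|\alpha|=\ell}\tilde P_\alpha$, commutes with every PTP by reducing to the following two facts: (i) $\tilde P_\ell$ is the orthogonal projector onto a canonically defined subspace $\mathcal{V}_\ell$ of $\mathcal{H}^{\otimes p}\otimes\mathcal{H}^{\otimes q}$ that does not depend on the Gram–Schmidt ordering, and (ii) that subspace $\mathcal{V}_\ell$ is invariant under every PTP. The first fact is essentially definitional: by Proposition~\ref{prop: PnE orthogonality} the nearly-orthogonal projectors $\noP_\alpha$ with $|\alpha|=\ell$ are mutually orthogonal to those of any other size, so the span of $\{\noP_\alpha:|\alpha|=\ell\}$ is a well-defined subspace independent of ordering; and the Gram–Schmidt construction in Eq.~(\ref{eq: def tilde P}) is built so that $\sum_{|\alpha|=\ell}\tilde P_\alpha$ is exactly the projector onto this span. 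I would state this cleanly as: $\tilde P_\ell$ projects onto $\mathcal{V}_\ell := \mathrm{span}\big(\{\noP_\alpha\ket{\psi} : |\alpha|=\ell\}\big)$, and hence it suffices to prove $\mathcal{V}_\ell$ is PTP-invariant.

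To prove PTP-invariance of $\mathcal{V}_\ell$, the key observation is that $\mathcal{V}_\ell$ has an alternative characterization as the "exactly $\ell$ EPR pairs" subspace appearing in Eq.~(\ref{eq: H decomposition}), i.e. the subspace on which every bare projector $P_\alpha$ with $|\alpha|>\ell$ vanishes but which is spanned by images of $P_\alpha$ with $|\alpha|=\ell$. Concretely I would argue that $\mathcal{V}_\ell = \Big(\sum_{|\alpha|=\ell} P_\alpha\Big)\cdot\mathcal{H}^{\otimes p}\otimes\mathcal{H}^{\otimes q}\ \cap\ \Big(\bigcap_{|\beta|=\ell+1}\ker P_\beta\Big)$; the first factor is the "at least $\ell$ pairs" space and intersecting with the kernels removes everything with $\ell+1$ or more pairs, which by Proposition~\ref{prop: PnE orthogonality} is precisely the orthogonal complement picked out by the no-EPR projectors $\noEPR_{\bar\alpha}$. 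Given this, invariance follows because a PTP $\sigma$, viewed diagrammatically as in Eq.~(\ref{eq: Brauer def}), can only merge or permute existing EPR pairs or leave them alone — it never creates pairs where none existed nor destroys them below a threshold in a way that changes the count $\ell$. More carefully: using the multiplication rules Eq.~(\ref{eq: Brauer multiply}), for any PTP $\sigma$ and any bare projector $P_\alpha$, the product $\sigma P_\alpha$ is a linear combination of terms, each of which is a scalar times a PTP whose input-pairing set has size at least $|\alpha|$; dually $P_\beta \sigma$ is supported on PTPs with output-pairing of size at least $|\beta|$. Combining these with the two defining conditions of $\mathcal{V}_\ell$ shows $\sigma\mathcal{V}_\ell\subseteq\mathcal{V}_\ell$, and applying the same argument to $\sigma^\dagger$ (which is again a PTP up to the walled-Brauer involution) gives the reverse inclusion, hence equality. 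Since $\tilde P_\ell$ is the orthogonal projector onto a PTP-invariant subspace and every PTP leaves both $\mathcal{V}_\ell$ and $\mathcal{V}_\ell^\perp = \bigoplus_{\ell'\neq\ell}\mathcal{V}_{\ell'}$ invariant, $\tilde P_\ell$ commutes with every PTP.

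An alternative and perhaps cleaner route, which I would present as the main line if the subspace bookkeeping gets unwieldy, uses Proposition~\ref{prop: proj diagrams} directly together with faithfulness of the walled-Brauer representation (valid since we assume $D = 2^n \geq p+q$). By Proposition~\ref{prop: proj diagrams}, $\tilde P_\ell = \sum_{|\alpha|=\ell}\tilde P_\alpha$ is a linear combination of PTPs of size $\geq \ell$, so it lies in the walled-Brauer algebra $\mathcal{B}^D_{p,q}$. Thus commutation of $\tilde P_\ell$ with every PTP is equivalent to the statement that $\tilde P_\ell$ is a \emph{central} element of $\mathcal{B}^D_{p,q}$. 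By semisimplicity of the walled-Brauer algebra in the faithful regime, its centre is spanned by the minimal central idempotents, which are exactly the projectors onto isotypic components; one then identifies $\tilde P_\ell$ with the sum of those isotypic projectors corresponding to irreducibles labelled by bipartitions whose associated EPR-pair count equals $\ell$, using the branching/decomposition of $\mathcal{H}^{\otimes p}\otimes\mathcal{H}^{\otimes q}$ under the walled-Brauer action (the mixed Schur–Weyl duality). Either way, the main obstacle is the same: carefully justifying that the size-$\ell$ span $\mathcal{V}_\ell$ is the full isotypic block (equivalently, that the Gram–Schmidt output $\tilde P_\ell$ really does not see the intra-size ordering and really is PTP-invariant), and this is where I would spend most of the effort — translating the diagrammatic "a PTP cannot change the EPR-pair count of a state" intuition into a rigorous statement via Eq.~(\ref{eq: Brauer multiply}) and Proposition~\ref{prop: PnE orthogonality}.
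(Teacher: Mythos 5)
Your proof is correct, but it takes a genuinely different route from the paper's. The paper reduces commutation to two generating cases---tensor products of permutations $\pi_L\otimes\pi_R$ and a single cross EPR projector $\Pi_{ij}$---handles the permutations by the ordering-independence of $\tilde{P}_\ell$, and handles $\Pi_{ij}$ by showing the off-diagonal blocks $\tilde{P}_{\ell'}\Pi_{ij}\tilde{P}_\ell$ vanish for $\ell'\neq\ell$, which leans on Proposition~\ref{prop: proj diagrams} (each $\tilde{P}_\ell$ is a sum of PTPs of size $\geq\ell$) together with Fact~\ref{fact: PPT mult}, and then concludes via completeness $\sum_{\ell'}\tilde{P}_{\ell'}=\mathbbm{1}$. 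You never invoke Proposition~\ref{prop: proj diagrams}: instead you identify the image of $\tilde{P}_\ell$ with the canonical ``exactly $\ell$ EPR pairs'' subspace $\mathcal{V}_\ell=W_\ell\cap W_{\ell+1}^{\perp}$ (with $W_s$ the span of the images of the bare projectors of size $s$), and prove directly that an arbitrary PTP preserves both $W_\ell$ and $W_{\ell+1}^{\perp}$ using only the size-monotonicity of diagram multiplication; commutation then follows because every $\mathcal{V}_{\ell'}$, and hence $\mathcal{V}_\ell^{\perp}=\bigoplus_{\ell'\neq\ell}\mathcal{V}_{\ell'}$, is invariant. The paper's route buys brevity, since Proposition~\ref{prop: proj diagrams} is already proved; yours buys a cleaner conceptual statement (PTPs respect the EPR-count grading) and independence from the PTP expansion of $\tilde{P}_\ell$, at the cost of having to justify $\mathrm{Im}\,\tilde{P}_\ell=W_\ell\cap W_{\ell+1}^{\perp}$. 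That identification needs Proposition~\ref{prop: PnE orthogonality} plus the identity $\mathrm{span}\{\mathrm{Im}\,\noP_{\alpha}:|\alpha|\geq\ell\}=W_\ell$, which follows by downward induction from the decomposition $P_\alpha=\noP_\alpha+\dyad{E_\alpha}\otimes\oneEPR_{\bar\alpha}$ used in the proof of Proposition~\ref{prop: proj rank}; it is available, but you should make it explicit rather than gesture at it.

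Two small repairs. First, the claim that applying the argument to $\sigma^{\dagger}$ gives $\sigma\mathcal{V}_\ell=\mathcal{V}_\ell$ is neither needed nor true in general, since PTPs containing EPR projectors are not invertible; what you actually use---invariance of $\mathcal{V}_\ell$ and of $\mathcal{V}_\ell^{\perp}$ under every PTP---is correct and suffices. Second, for $\sigma P_\alpha$ the fact you need is that its size (equivalently its output pairing) is at least $|\alpha|$, which is Fact~\ref{fact: PPT mult}; your phrasing in terms of the input pairing is the dual statement, and both hold, but the output version is the one that feeds the $W_\ell$-invariance step. Your alternative walled-Brauer-centrality route is heavier machinery: identifying $\tilde{P}_\ell$ with a sum of isotypic projectors essentially reproduces the subspace argument anyway, and the paper nowhere needs semisimplicity, so I would keep it as a remark at most.
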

\noindent This does not apply to the individual orthogonal projectors $\tilde{P}_\alpha$.

We can also characterize the rank of each orthogonal subspace.
To do so, for any $p$ and $q$, we let
\begin{equation}
	N_{\text{EPR}}^{(p,q)} = \text{rank } \left( \sum_{\alpha \neq \varnothing} P_{\alpha} \right),
\end{equation}
count the number of states on $\mathcal{H}^{\otimes p} \otimes \mathcal{H}^{\otimes q}$ that contain at least one EPR pair.
That is, the number of states in the span of any non-trivial bare projector $P_\alpha$.
We then show (Appendix~\ref{app: rank}),
\begin{proposition}[Rank of the orthogonal projectors] \label{prop: proj rank}
	For any $D \geq p + q$. Each projector $\noP_\alpha$ and $\tilde{P}_\alpha$ has rank $D^{p+q-2|\alpha|} - N_{\text{EPR}}^{(p-|\alpha|,q-|\alpha|)}$.
\end{proposition}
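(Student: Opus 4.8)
The plan is to treat the two families of projectors separately: the rank of $\noP_\alpha$ falls out of a tensor factorization and the definition of $N_{\text{EPR}}$, while the rank of $\tilde P_\alpha$ is then pinned down by combining the Gram--Schmidt construction with a global dimension count.

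\textbf{The $\noP_\alpha$ case.} I would use that $\noP_\alpha = P_\alpha\,\noEPR_{\bar\alpha} = \noEPR_{\bar\alpha}\,P_\alpha$ factorizes across the bipartition of the $p+q$ copies into those lying in a pair of $\alpha$ and those in $\bar\alpha$, namely $\noP_\alpha = \bigl(\bigotimes_{(i,j)\in\alpha}\dyad{E_{ij}}\bigr)\otimes\noEPR_{\bar\alpha}$. Each EPR projector $\dyad{E_{ij}}$ has rank one, so $\mathrm{rank}(\noP_\alpha)=\mathrm{rank}(\noEPR_{\bar\alpha})$. Since $\bar\alpha$ consists of $p-|\alpha|$ of the left copies and $q-|\alpha|$ of the right copies, its Hilbert space has dimension $D^{(p-|\alpha|)+(q-|\alpha|)}=D^{p+q-2|\alpha|}$, and the EPR projectors entering the definition of $\noEPR_{\bar\alpha}$ are exactly those between a left and a right copy of $\bar\alpha$. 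Hence $\noEPR_{\bar\alpha}$ is precisely the no-EPR projector for parameters $(p-|\alpha|,q-|\alpha|)$, whose rank is $D^{p+q-2|\alpha|}-N_{\text{EPR}}^{(p-|\alpha|,q-|\alpha|)}$ by the very definition of $N_{\text{EPR}}$. This settles the claim for $\noP_\alpha$; note that it uses nothing about faithfulness.

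\textbf{The $\tilde P_\alpha$ case.} Set $\ell=|\alpha|$ and $r_\ell := D^{p+q-2\ell}-N_{\text{EPR}}^{(p-\ell,q-\ell)}=\mathrm{rank}(\noP_\alpha)$. From the Gram--Schmidt definition in Eq.~(\ref{eq: def tilde P}), $\tilde P_\alpha$ projects onto a subspace of $\mathrm{span}\{\mathrm{Im}\,\noP_{\alpha'}:\alpha'\le\alpha\}$ that is orthogonal to $\mathrm{span}\{\mathrm{Im}\,\noP_{\alpha'}:\alpha'<\alpha\}$, so $\mathrm{rank}(\tilde P_\alpha)\le\mathrm{rank}(\noP_\alpha)=r_\ell$, with equality precisely when $\mathrm{Im}\,\noP_\alpha$ meets $\mathrm{span}\{\mathrm{Im}\,\noP_{\alpha'}:\alpha'<\alpha\}$ only in $\{0\}$. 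The $\tilde P_\alpha$ are mutually orthogonal and (since the $\noP_\alpha$ collectively span, by a short induction on the number of EPR pairs a state maximally contains) sum to $\mathbbm 1$, so $\sum_\alpha \mathrm{rank}(\tilde P_\alpha)=D^{p+q}$. I would then establish the counting identity $\sum_\alpha r_{|\alpha|}=\sum_\ell M_\ell\,r_\ell=D^{p+q}$, where $M_\ell=\binom{p}{\ell}\binom{q}{\ell}\ell!$ is the number of size-$\ell$ pairings; combined with the termwise inequality $\mathrm{rank}(\tilde P_\alpha)\le r_{|\alpha|}$ and the fact that both sums equal $D^{p+q}$, this forces $\mathrm{rank}(\tilde P_\alpha)=r_{|\alpha|}$ for every $\alpha$, which is the claim. (No relabelling-symmetry argument is needed for this squeeze, though the $S_p\times S_q$ symmetry can be used as an alternative route to ``all $\tilde P_\alpha$ of a fixed size have equal rank''.)

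\textbf{Main obstacle.} The genuinely nontrivial ingredient is the counting identity, which at heart is the statement that the subspaces $\{\mathrm{Im}\,\noP_\alpha\}_{|\alpha|=\ell}$ are in direct sum (equivalently, that Gram--Schmidt loses no dimension). Proposition~\ref{prop: PnE orthogonality} already gives cross-size orthogonality and Propositions~\ref{prop: proj diagrams} and~\ref{ref: proj ell commutes} keep everything inside the span of PTPs, but the within-size independence is new. I see two ways to get it: (a) indirectly, by proving $\sum_\ell M_\ell r_\ell = D^{p+q}$ on its own, e.g.\ by induction on $p+q$ using a recursion for $N_{\text{EPR}}^{(p,q)}$ obtained by splitting on whether a fixed left copy is matched, or by invoking the orthogonal decomposition $\mathcal H^{\otimes p}\otimes\mathcal H^{\otimes q}=\bigoplus_\alpha\bigl(\dyad{E_\alpha}\otimes[\mathcal H^{\otimes(p-\ell)}\otimes\mathcal H^{\otimes(q-\ell)}]_{\text{nE}}\bigr)$ that underlies Eq.~(\ref{eq: H decomposition}); or (b) directly, by showing that the Gram matrix of the vectors $\ket{E_\alpha}\otimes\ket{\phi}$, with $|\alpha|=\ell$ and $\ket{\phi}$ ranging over an orthonormal basis of the $\bar\alpha$ no-EPR subspace, is nonsingular — by $S_p\times S_q$ symmetry this reduces to diagonalizing a small matrix indexed by the overlap pattern of two size-$\ell$ pairings, and $D\ge p+q$ is exactly what makes all its eigenvalues strictly positive. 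I would pursue route (a) since it avoids the eigenvalue computation and dovetails with the rest of this appendix.
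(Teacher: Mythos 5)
Your handling of $\noP_\alpha$ and the squeeze logic for $\tilde{P}_\alpha$ (the termwise bound $\mathrm{rank}(\tilde{P}_\alpha)\le r_{|\alpha|}$, completeness of the $\tilde{P}_\alpha$ via the spanning induction, and termwise equality once the totals match) are sound, but the proposal has a genuine gap exactly at what you call the main obstacle: the counting identity $\sum_\ell \binom{p}{\ell}\binom{q}{\ell}\ell!\, r_\ell = D^{p+q}$ is not an independent input you can invoke --- it is logically equivalent to the statement being proven, namely that the subspaces $\mathrm{Im}\,\noP_\alpha$ of a fixed size are linearly independent (cross-size orthogonality and spanning you already have). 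Your preferred route (a) does not close this. Invoking Eq.~(\ref{eq: H decomposition}) is circular in this paper: that decomposition is derived downstream of Proposition~\ref{prop: proj rank} (the rank count is what allows Proposition~\ref{prop: isometry} to identify the range of $\tilde{I}_\alpha$ with the no-EPR subspace, and Eq.~(\ref{eq: H decomposition}) is the assembled consequence). And a ``recursion for $N_{\text{EPR}}^{(p,q)}$ by splitting on whether a fixed left copy is matched'' is a restatement of the direct-sum claim for that copy, so making it rigorous requires exactly the independence you are trying to deduce. Route (b) is a legitimate direct attack, but the claim that $D\ge p+q$ ``is exactly what makes all eigenvalues strictly positive'' is itself the nontrivial computation: the paper's own overlap bounds (Theorem~\ref{thm: approx orth}) only give nonsingularity of the Gram matrix when $D$ is parametrically larger than $\ell(p+q)$, so at $D\ge p+q$ you would need a genuinely sharper spectral argument that the proposal does not supply.

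The paper closes this hole with a local, per-$\alpha$ argument instead of a global count: suppose a nonzero $\ket{\psi}\in\noP_\alpha$ lies in $\mathrm{span}\{P_{\alpha'}:\alpha'<\alpha\}$; Haar-twirl $\dyad{\psi}$ using both decompositions of $\ket{\psi}$. The first decomposition produces partially transposed permutations whose coefficients on the terms with input and output pairing exactly $\alpha$ are controlled by $\langle\psi_{\bar\alpha}|\psi_{\bar\alpha}\rangle$, while the second produces only PTPs with strictly earlier pairings; linear independence of the PTPs (faithfulness of the walled Brauer representation, which is precisely where $D\ge p+q$ enters) then forces $\langle\psi_{\bar\alpha}|\psi_{\bar\alpha}\rangle=0$, a contradiction, giving $\mathrm{rank}(\tilde{P}_\alpha)=\mathrm{rank}(\noP_\alpha)$ directly without any global dimension count. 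If you wish to keep your squeeze structure, you still need an ingredient of this strength (twirl-plus-faithfulness, or a sharp Gram-matrix positivity statement) to establish the counting identity; as written, the hard step is deferred rather than resolved.
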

\noindent The positive term $D^{p+q-2|\alpha|}$ is the rank of the bare projector $P_\alpha$. The negative term $N_{\text{EPR}}^{(p-|\alpha|,q-|\alpha|)}$ counts the number of states in $P_{\alpha}$ that are contained in $P_{\alpha'}$ for any $\alpha' \supset \alpha$.
These states are removed from $\noP_\alpha$ and $\tilde{P}_\alpha$ and hence are subtracted from the rank.
The fact that the rank of $\noP_\alpha$ and $\tilde{P}_\alpha$ is the same implies that every state in $\noP_\alpha$ is outside the span of $\noP_\beta$ for all $\beta \neq \alpha$.
The proposition immediately implies that the rank of the $\ell$-EPR projector $\tilde{P}_\ell$ is equal to 
\begin{equation}
    D_\ell \equiv \text{rank } \tilde{P}_\ell = {p \choose \ell}{q \choose \ell} \ell! \cdot \bigg(D^{p+q-2\ell} - N_{\text{EPR}}^{(p-\ell,q-\ell)} \bigg),
\end{equation}
where ${p \choose \ell}{q \choose \ell} \ell!$ counts the number of $\alpha$ with size $\ell$.
The value of $N_{\text{EPR}}^{(p,q)}$ is tricky to compute in general.
However, one has an immediate upper bound, $N_{\text{EPR}}^{(p',q')} \leq \sum_{|\alpha|=1} \text{rank}(P_\alpha) = pq \cdot D^{p'+q'-2}$.

Before proceeding, we pause to provide the following explicit expression for the no-EPR projector $\noEPR$.
This expression is not needed for any of the results in the preceding sections; we mention it solely for completeness for the interested reader.
The expression is as follows (Appendix~\ref{app: expression}).
\begin{proposition}[Expression for the no-EPR projector] \label{prop: no EPR}
    For any $p+q \leq D$.
    The inner product of the no-EPR projector, $\noEPR$, with a permutation operator, $\pi_L \otimes \pi_R$, is given by,
    \begin{equation} \label{eq: trace noEPR}
        \tr( \noEPR (\pi_L \otimes \pi_R)^{-1} ) = [ \hat{\Wg}|_{\text{\emph {perm}}} ]^{-1}_{\mathbbm{1}, \pi_L \otimes \pi_R},
    \end{equation}
    where $\hat{\Wg}|_{\text{\emph {perm}}}$ is the $(p! q!) \times (p! q!)$ sub-matrix obtained by restricting  the $(p+q)! \times (p+q)!$ Weingarten matrix, $\hat{\Wg}$, to the permutation operators.
    As a consequence, the no-EPR projector can be written as a sum of PTPs with coefficients, 
    \begin{equation} \label{eq: noEPR projector expansion}
        \noEPR = \sum_\sigma \left( \sum_{\pi_L , \pi_R} [ \hat{\Wg}|_{\text{\emph {perm}}} ]^{-1}_{\mathbbm{1}, \pi_L \otimes \pi_R}
        \Wg_{\pi_L \otimes \pi_R,\sigma} \right) \sigma.
    \end{equation}.
\end{proposition}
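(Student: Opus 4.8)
\textbf{Proof proposal for Proposition~\ref{prop: no EPR}.}

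The plan is to leverage the fact that the no-EPR projector $\noEPR = \tilde{P}_{\varnothing}$ is, by Proposition~\ref{prop: proj diagrams}, the orthogonal projector associated to the empty pairing, hence a sum of PTPs; moreover, being orthogonal to every nontrivial bare projector $P_\alpha$, it is in fact a linear combination of permutation operators only (PTPs of size $\ell = 0$). Concretely, write $\noEPR = \sum_{\pi_L \otimes \pi_R} c_{\pi_L \otimes \pi_R}\,(\pi_L \otimes \pi_R)$ for unknown coefficients $c$, where the sum ranges over the $p!q!$ permutations in $S_p \times S_q$. The key identity connecting $\noEPR$ to the Weingarten data is that $\noEPR$ acts as the identity on the no-EPR subspace and as zero on its orthogonal complement; equivalently, $\Phi_H^{(p)} \otimes \Phi_H^{(q)}$ restricted to the no-EPR subspace equals $\Phi_H^{(p,q)}$ there (cf.~Eq.~(\ref{eq: exact mixed twirl subspace}) at $\ell = 0$), so $\noEPR$ can be recovered from the standard-twirl machinery.

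First I would compute the Gram matrix of the permutation operators: $\tr((\pi_L \otimes \pi_R)(\pi'_L \otimes \pi'_R)^{-1})$ is a monomial in $D$, and the inverse of this Gram matrix is, by the defining property of the Weingarten matrix, exactly $\hat{\Wg}|_{\text{perm}}$ — the restriction of the full walled-Brauer Weingarten matrix $\hat{\Wg}$ to the permutation block. (Here one uses that for $D \geq p+q$ the PTP representation is faithful, so $\hat{\Wg}$ is well-defined and its permutation block is invertible.) Next, I would characterize $\noEPR$ by the linear system it must satisfy: since $\noEPR$ is a projector whose image is the no-EPR subspace, and since that subspace is spanned by the symmetrized permutation images modulo the EPR-containing states, the vector of coefficients $c$ is determined by requiring $\noEPR \cdot P_\alpha = 0$ for all $\alpha \neq \varnothing$ together with the normalization $\tr(\noEPR \cdot \mathbbm{1}) = D^{p+q} - N^{(p,q)}_{\text{EPR}}$. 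Taking inner products of the ansatz with permutation operators turns this into a finite linear system whose solution is $\tr(\noEPR\,(\pi_L \otimes \pi_R)^{-1}) = [\hat{\Wg}|_{\text{perm}}]^{-1}_{\mathbbm{1},\,\pi_L \otimes \pi_R}$; inverting the Gram matrix once more recovers the coefficients $c_{\pi_L \otimes \pi_R} = \sum_{\pi'_L,\pi'_R}[\hat{\Wg}|_{\text{perm}}]^{-1}_{\mathbbm{1},\pi'_L \otimes \pi'_R}\,\Wg_{\pi'_L \otimes \pi'_R,\,\pi_L \otimes \pi_R}$, and expanding over all PTPs (the extra terms vanish on the no-EPR subspace, or one simply restates the coefficient as in Eq.~(\ref{eq: noEPR projector expansion})) gives the stated formula.

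Alternatively — and this may be the cleanest route — I would argue directly from the block structure of the walled-Brauer algebra: decompose $\mathbbm{1} = \sum_\ell \tilde{P}_\ell$ (Proposition~\ref{ref: proj ell commutes}), note that $\tilde{P}_\ell$ for $\ell \geq 1$ lies in the span of PTPs of size $\geq 1$, and that the Weingarten matrix $\hat{\Wg}$ is block-"triangular" with respect to the size filtration in a way that makes the $\ell = 0$ block self-contained. Then $\noEPR = \tilde{P}_0$ is the pseudo-inverse-type object associated to the permutation block, and its trace against $(\pi_L \otimes \pi_R)^{-1}$ is read off as the $(\mathbbm{1}, \pi_L \otimes \pi_R)$ entry of the inverse Gram matrix on that block, which is precisely $[\hat{\Wg}|_{\text{perm}}]^{-1}$.

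The main obstacle I anticipate is justifying that the permutation block of $\hat{\Wg}$ is genuinely self-contained — i.e.~that restricting the Gram matrix of all PTPs to permutations and then inverting gives the same answer as inverting the full Gram matrix and then restricting. This requires a careful argument that the no-EPR subspace is an invariant subspace on which the mixed twirl acts as the tensor product of two ordinary twirls (which is exactly the $\ell=0$ term of Eq.~(\ref{eq: exact mixed twirl subspace}), proven in the body), together with the observation that PTPs of size $\geq 1$ annihilate this subspace. Once that invariance is in hand, the rest is linear algebra: the overlap of a projector with a spanning set is the corresponding row of the inverse Gram matrix of that spanning set. I would also need to check the edge case $p=0$ or $q=0$, where $\noEPR = \mathbbm{1}$ and the formula degenerates correctly, and confirm that $p+q \leq D$ suffices for all invertibility claims (it does, since this is exactly the regime where both the permutation and PTP representations are faithful and the Weingarten matrix exists).
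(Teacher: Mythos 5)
There is a genuine gap, and it appears at the very first step. Your ansatz that $\noEPR$ is ``a linear combination of permutation operators only (PTPs of size $\ell=0$)'' is false: orthogonality to the EPR projectors does not prevent size-$\geq 1$ PTPs from appearing in the expansion. Already for $p=q=1$ one has $\noEPR = \mathbbm{1} - \frac{1}{D}\sum_{ij}\dyad{ii}{jj}$, which necessarily contains the size-one PTP; indeed the paper's own Eq.~(\ref{eq: noEPR projector expansion}) is a sum over \emph{all} PTPs $\sigma$, not just permutations. Relatedly, your identification ``the inverse of this Gram matrix is, by the defining property of the Weingarten matrix, exactly $\hat{\Wg}|_{\text{perm}}$'' confuses the inverse of a submatrix with the submatrix of an inverse: the inverse of the Gram matrix of the operators $\pi_L\otimes\pi_R$ is $\hat{\Wg}^{(p)}\otimes\hat{\Wg}^{(q)}$, not the restriction of the $(p+q)!$-dimensional Weingarten matrix. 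In the $p=q=1$ example the Gram inverse is $1/D^2$ while $\hat{\Wg}|_{\text{perm}} = 1/(D^2-1)$, and your route would predict $\tr(\noEPR) = D^2$ instead of the correct $D^2-1$. You flag essentially this issue yourself as ``the main obstacle,'' but you do not resolve it, and it is precisely where the content of the proposition lies.

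The missing mechanism, which the paper supplies, is the following. Since $\noEPR$ commutes with every mixed tensor unitary, it is invariant under the mixed Haar twirl, so it can be expanded via the Weingarten formula as $\noEPR = \sum_{\sigma,\tau}\widetilde{\Wg}_{\sigma,\tau}\,\tr(\noEPR\,\sigma^\dagger)\,\tau$. Two reductions then restrict both sums to permutations: $\tr(\noEPR\,\sigma)=0$ unless $\sigma=\pi_L\otimes\pi_R$, and multiplying on the right by $\noEPR$ (using $\noEPR^2=\noEPR$) kills every $\tau$ of size $\geq 1$. The crucial remaining step --- absent from your proposal --- is to prove that the operators $\{(\tilde{\pi}_L\otimes\tilde{\pi}_R)\,\noEPR\}$ are linearly independent (this uses faithfulness of the PTP representation for $p+q\leq D$ together with the fact that $\noEPR = \mathbbm{1}$ minus a sum of size-$\geq 1$ PTPs). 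Only then can one match coefficients to obtain $\sum_{\pi}\Wg_{\pi,\tilde{\pi}}\,\tr(\noEPR\,\pi^{-1}) = \delta_{\mathbbm{1},\tilde{\pi}}$, which says exactly that the trace vector is a row of $[\hat{\Wg}|_{\text{perm}}]^{-1}$; the second statement of the proposition then follows by substituting back into the full PTP expansion. Without the twirl-invariance identity and the linear-independence argument, neither your ``linear system'' nor your ``block self-containment'' sketch determines the stated inverse-submatrix formula.
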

\noindent The expression for the no-EPR projector is not particularly easy to work with, given that it involves the inverse of a sub-matrix of the Weingarten matrix. We provide further discussion and applications in Appendix~\ref{sec: additional}

\subsubsection{Partial isometries and reformulation of the mixed Haar twirl}

The mixed Haar twirl has a particularly simple action on the orthogonal subspaces constructed in the previous subsection.
To show this, let us first discuss the bare projectors, and then turn to the orthogonal projectors.
Each bare projector $P_\alpha$ naturally defines a \emph{partial isometry} from the $(p+q)$-copy Hilbert space to a smaller $(p+q-2|\alpha|)$-copy Hilbert space,
\begin{align}
\figbox{0.4}{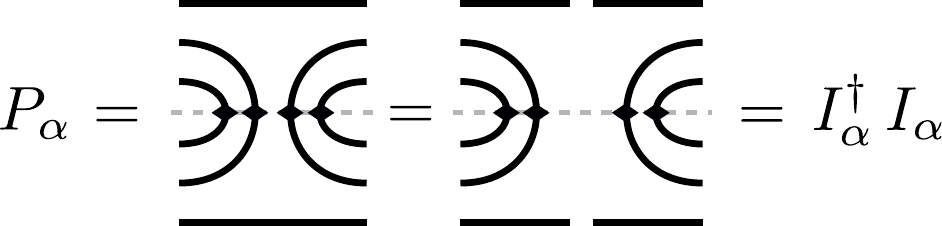}. \centering \label{eq: isometry def}
\end{align}
The $p+q-2|\alpha|$ copies correspond to the legs of the PTP that are not paired in $\alpha$.
The isometries are unitary-equivariant,
\begin{align}
\figbox{0.42}{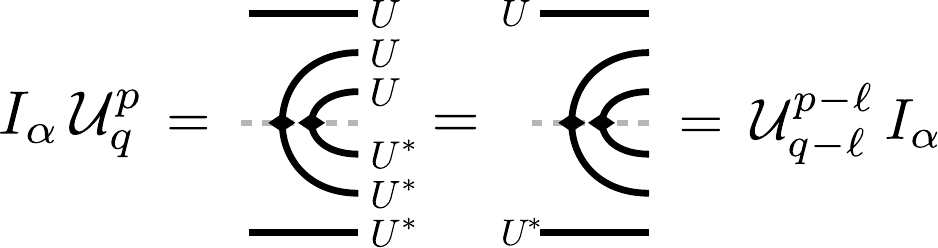}, \centering \label{eq: isometry comm}
\end{align}
meaning that the action of any MTU on the $(p+q)$-copy Hilbert space translates to a corresponding smaller MTU on the $(p+q-2|\alpha|)$-copy Hilbert space.

We will now show that an analogous set of isometries can be constructed for the orthogonal projectors $\tilde{P}_\alpha$.
We define these carefully in the following manner.
To begin, we expand each orthogonal projector as follows,
\begin{equation}
	\tilde{P}_\alpha = \tilde{P}_\alpha \tilde{P}_\alpha \tilde{P}_\alpha =  \tilde{P}_\alpha \left( \sum_{\pi_L,\pi_R} c_{\pi_L, \pi_R} \cdot \big( \alpha, \alpha, \pi_L, \pi_R \big) \right) \tilde{P}_\alpha,
\end{equation}
where, in the rightmost expression, we write the center $\tilde{P}_\alpha$ as a linear combination of PTPs and keep only the PTPs, $( \alpha, \alpha, \pi_L, \pi_R )$, that have both input and output pairs equal to $\alpha$ (since all other PTPs vanish upon conjugation by $\tilde{P}_\alpha$).
By definition, each PTP $( \alpha, \alpha, \pi_L, \pi_R )$ is proportional to the EPR projector, $\dyad{E_\alpha}$, on subsystem $\alpha$.
To proceed, we insert the following resolution of the identity on the complement of $\alpha$,
\begin{equation}
	\mathbbm{1}_{\bar \alpha} = \sum_{\gamma} \tilde{P}_\gamma^{(\bar\alpha)} = \noEPR_{\bar \alpha} + \sum_{\gamma \neq \varnothing} \tilde{P}_\gamma^{(\bar \alpha)},
\end{equation}
where $\gamma$ runs over sets of pairs in $\bar \alpha$, and the superscript denotes that the orthogonal projector is constructed from PTPs that act only on $\bar \alpha$.
From Proposition~\ref{prop: proj diagrams},
each $\tilde{P}_\gamma^{(\bar \alpha)}$ can be written as a sum of PTPs on $\bar \alpha$ with size at least $1$.
Thus, the tensor product, $\tilde{P}_\gamma^{(\bar \alpha)} \otimes \dyad{E_\alpha}$, can be written as a sum of PTPs with size at least $|\alpha|+1$.
This implies that $\tilde{P}_\gamma^{(\bar \alpha)} \otimes \dyad{E_\alpha}$ vanishes upon left or right multiplication with $\tilde{P}_\alpha$, which yields,
\begin{equation}
	\tilde{P}_\alpha 
	= \tilde{P}_\alpha \mathbbm{1}_{\bar \alpha} \left( \sum_{\pi_L,\pi_R} c_{\pi_L \pi_R} \cdot \big( \alpha, \alpha, \pi_L, \pi_R \big) \right) \mathbbm{1}_{\bar \alpha} \tilde{P}_\alpha 
	=  \tilde{P}_\alpha \noEPR_{\bar \alpha} \left( \sum_{\pi_L,\pi_R} c_{\pi_L \pi_R} \cdot \big( \alpha, \alpha, \pi_L, \pi_R \big) \right) \noEPR_{\bar \alpha}  \tilde{P}_\alpha.
\end{equation}
To proceed, let us take the square root of the middle operators,
\begin{equation} \label{eq: def M alpha}
	M_\alpha = \left(  \sum_{\pi_L,\pi_R} c_{\pi_L \pi_R} \cdot \noEPR_{\bar \alpha} \big( \alpha, \alpha, \pi_L, \pi_R \big) \noEPR_{\bar \alpha} \right)^{1/2} \!\!\!\! \equiv M'_{\bar \alpha} \otimes \dyad{E_\alpha},
\end{equation}
where $M'_{\bar \alpha}$ acts on $\bar \alpha$.
We prove that the square root is well-defined within the proof of Proposition~\ref{prop: isometry} (Appendix~\ref{app: partial}), by showing that the operator inside the parentheses is a positive operator.
From the right hand side, we can see that $M_\alpha = M_\alpha P_\alpha = P_\alpha M_\alpha$.
Thus, we can write
\begin{equation} \label{eq: tilde P decom tilde I}
	\tilde{P}_\alpha = \tilde{P}_\alpha \tilde{P}_\alpha \tilde{P}_\alpha =  \tilde{P}_\alpha M_\alpha M_\alpha \tilde{P}_\alpha =  \tilde{P}_\alpha M_\alpha P_\alpha M_\alpha \tilde{P}_\alpha =  ( \tilde{P}_\alpha M_\alpha I_\alpha^\dagger ) ( I_\alpha M_\alpha \tilde{P}_\alpha ).
\end{equation}
This decomposition immediately allows us to write $\tilde{P}_\alpha$ in terms of the ``orthogonal'' isometry,
\begin{equation} \label{eq: tilde I def}
	\tilde{I}_\alpha = I_\alpha M_\alpha \tilde{P}_\alpha,
\end{equation}
where we have $\tilde{P}_\alpha = \tilde{I}_\alpha^\dagger \tilde{I}_\alpha$ from Eq.~(\ref{eq: tilde P decom tilde I}).
This completes our definition of the partial isometries.

We establish a few key properties of the partial isometries (Appendix~\ref{app: partial}).
\begin{proposition}[Partial isometries] \label{prop: isometry}
	For any $D \geq p+q$, the map $\tilde{I}_\alpha$ is well-defined and is an isometry from the support of $\tilde{P}_\alpha$ to the no-EPR subspace of $\mathcal{H}^{\otimes p-|\alpha|} \otimes \mathcal{H}^{\otimes q-|\alpha|}$.
	The isometry is unitary-equivariant, $\tilde{I}_\alpha (\mathcal{U}^p_q)  = (\mathcal{U}^{p-\ell}_{q-\ell}) \, \tilde{I}_\alpha$.
\end{proposition}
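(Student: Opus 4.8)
The plan is to verify the three claims of Proposition~\ref{prop: isometry} in the order they appear: first that $M_\alpha$ (and hence $\tilde I_\alpha$) is well-defined, then that $\tilde I_\alpha$ is a partial isometry onto the stated subspace, and finally that it is unitary-equivariant. The central object is the operator $M'_{\bar\alpha}{}^2 \equiv \sum_{\pi_L,\pi_R} c_{\pi_L\pi_R} \noEPR_{\bar\alpha}\,(\mathbbm 1,\mathbbm 1,\pi_L,\pi_R)\,\noEPR_{\bar\alpha}$ living on $\mathcal H^{\otimes(p-|\alpha|)}\otimes\mathcal H^{\otimes(q-|\alpha|)}$, which arises by stripping the EPR block $\dyad{E_\alpha}$ off of Eq.~(\ref{eq: def M alpha}). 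For the square root to exist, I would show this operator is positive semidefinite. The argument: conjugating $\tilde P_\alpha$ in Eq.~(\ref{eq: tilde P decom tilde I}) shows that, restricted to the range of $P_\alpha$, the operator $\sum_{\pi_L,\pi_R} c_{\pi_L\pi_R}(\alpha,\alpha,\pi_L,\pi_R)$ satisfies $\tilde P_\alpha M_\alpha^2 \tilde P_\alpha = \tilde P_\alpha$ together with the fact (from the Gram--Schmidt construction, Eq.~(\ref{eq: def tilde P}), and Proposition~\ref{prop: proj diagrams}) that $\tilde P_\alpha$ is itself a sum of PTPs of size $\geq|\alpha|$ with the size-$|\alpha|$ part having input/output pairs $\leq\alpha$. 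Because $\tilde P_\alpha$ is a genuine orthogonal projector (so $\tilde P_\alpha \preceq P_\alpha$) and the PTP $(\alpha,\alpha,\pi_L,\pi_R)$ is block-diagonal with respect to the decomposition into $\tilde P_\gamma$'s for $\gamma\supseteq\alpha$, I can identify $M'_{\bar\alpha}{}^2$ with the inverse of the Gram matrix of the vectors $\{\noP_{\alpha'}:\ |\alpha'|=|\alpha|,\ \alpha'\geq\alpha\}$ restricted appropriately — and a Gram matrix is always positive semidefinite, with positive-definite inverse on its support. This is the step I expect to be the main obstacle: one must be careful about exactly which subspace the square root is taken on, and that $M_\alpha$ is invertible there (not merely PSD) so that $I_\alpha M_\alpha \tilde P_\alpha$ genuinely has the same rank as $\tilde P_\alpha$.

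Granting that $M_\alpha$ is well-defined, the partial-isometry claim follows from the algebra already set up in Eq.~(\ref{eq: tilde P decom tilde I}): one has $\tilde I_\alpha^\dagger \tilde I_\alpha = \tilde P_\alpha$ by construction, so $\tilde I_\alpha$ is a partial isometry with initial space $\operatorname{supp}(\tilde P_\alpha)$. For the final space, I would argue that $I_\alpha$ maps $\operatorname{range}(P_\alpha)$ isometrically onto all of $\mathcal H^{\otimes(p-|\alpha|)}\otimes\mathcal H^{\otimes(q-|\alpha|)}$ (this is immediate from the diagram in Eq.~(\ref{eq: isometry def}), since $I_\alpha I_\alpha^\dagger = \mathbbm 1$ on the smaller space and $I_\alpha^\dagger I_\alpha = D^{|\alpha|}P_\alpha$ suitably normalized), and then that $M_\alpha = M'_{\bar\alpha}\otimes\dyad{E_\alpha}$ followed by $I_\alpha$ sends $\operatorname{supp}(\tilde P_\alpha)$ precisely onto the range of $M'_{\bar\alpha}$, which by the identification above is exactly the no-EPR subspace $\big[\mathcal H^{\otimes(p-|\alpha|)}\otimes\mathcal H^{\otimes(q-|\alpha|)}\big]_{\text{nE}}$. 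A rank count using Proposition~\ref{prop: proj rank} — $\operatorname{rank}\tilde P_\alpha = D^{p+q-2|\alpha|} - N_{\text{EPR}}^{(p-|\alpha|,q-|\alpha|)} = \operatorname{rank}\noEPR_{(p-|\alpha|,q-|\alpha|)}$ — then pins down that the image is all of the no-EPR subspace and not a proper subspace.

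For unitary-equivariance, I would combine the equivariance of the bare isometry $I_\alpha$, Eq.~(\ref{eq: isometry comm}), with the fact that $M_\alpha$ and $\tilde P_\alpha$ both commute with every MTU $\mathcal U^p_q = U^{\otimes p}\otimes U^{*,\otimes q}$. The latter holds because $\tilde P_\alpha$ is a sum of PTPs (Proposition~\ref{prop: proj diagrams}) and every PTP commutes with every MTU (Eq.~(\ref{eq: Brauer comm})), and $M_\alpha$ is a function (square root) of such an operator conjugated by $\noEPR_{\bar\alpha}$, which is likewise a sum of PTPs. Hence
\begin{equation}
\tilde I_\alpha (\mathcal U^p_q) = I_\alpha M_\alpha \tilde P_\alpha (\mathcal U^p_q) = I_\alpha M_\alpha (\mathcal U^p_q) \tilde P_\alpha = I_\alpha (\mathcal U^p_q) M_\alpha \tilde P_\alpha = (\mathcal U^{p-|\alpha|}_{q-|\alpha|}) I_\alpha M_\alpha \tilde P_\alpha = (\mathcal U^{p-|\alpha|}_{q-|\alpha|}) \tilde I_\alpha,
\end{equation}
where the fourth equality is Eq.~(\ref{eq: isometry comm}) applied to $I_\alpha$. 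This last part is essentially bookkeeping; the real content of the proposition is the well-definedness and range of $M_\alpha$, so I would allocate most of the write-up to establishing that the operator being square-rooted is positive definite on the relevant subspace, likely by making the Gram-matrix interpretation fully explicit.
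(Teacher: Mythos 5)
Your outline gets the easy parts right: $\tilde I_\alpha^\dagger \tilde I_\alpha = \tilde P_\alpha$ follows from the construction in Eq.~(\ref{eq: tilde P decom tilde I}) (and an operator whose $V^\dagger V$ is a projector is automatically a partial isometry), the rank comparison via Proposition~\ref{prop: proj rank} is exactly how one upgrades ``image contained in the no-EPR subspace'' to ``image equal to it,'' and your equivariance chain is the same as the paper's, resting on the commutation of $\tilde P_\alpha$ and $M_\alpha$ with every MTU plus Eq.~(\ref{eq: isometry comm}).

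The genuine gap is at the step you yourself flag as the crux: positivity (and invertibility on the relevant subspace) of the operator being square-rooted in Eq.~(\ref{eq: def M alpha}). Your proposal to ``identify $M'^{\,2}_{\bar\alpha}$ with the inverse of the Gram matrix of the vectors $\{\noP_{\alpha'}\}$'' is not a proof and is structurally problematic: the Gram matrix of the finitely many projectors $\noP_{\alpha'}$ is a small matrix indexed by pairings, whereas $M'^{\,2}_{\bar\alpha}$ is an operator on the $D^{p+q-2|\alpha|}$-dimensional space $\mathcal H^{\otimes(p-|\alpha|)}\otimes\mathcal H^{\otimes(q-|\alpha|)}$; the $\noP_{\alpha'}$ are projectors onto high-dimensional subspaces, not vectors, so the Gram--Schmidt in Eq.~(\ref{eq: def tilde P}) does not literally produce an inverse Gram matrix, and the intermediate claim that $(\alpha,\alpha,\pi_L,\pi_R)$ is block-diagonal with respect to the individual $\tilde P_\gamma$ is unsupported (only the size-resolved $\tilde P_\ell$ are shown to commute with all PTPs, Proposition~\ref{ref: proj ell commutes}). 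Moreover, your route to the range statement (``the range of $M'_{\bar\alpha}$ is exactly the no-EPR subspace by the identification above'') inherits this gap. The paper closes it differently and more directly: diagonalize $\noEPR_{\bar\alpha}\big(\sum c_{\pi_L\pi_R}(\alpha,\alpha,\pi_L,\pi_R)\big)\noEPR_{\bar\alpha}$ as $\sum_\lambda \lambda\,\dyad{\lambda\otimes E_\alpha}$, conjugate by $\tilde P_\alpha$ to get $\tilde P_\alpha=\sum_\lambda \lambda\,\mathcal N_\lambda\dyad{\tilde\lambda}$ with $\mathcal N_\lambda=\bra{\lambda\otimes E_\alpha}\tilde P_\alpha\ket{\lambda\otimes E_\alpha}\in[0,1]$, and then use the rank equality of Proposition~\ref{prop: proj rank} to force the $\{\ket{\tilde\lambda}\}$ to be linearly independent and to conclude $\lambda\,\mathcal N_\lambda=1$, hence every eigenvalue satisfies $\lambda=1/\mathcal N_\lambda\geq 1>0$; containment of the range in the no-EPR subspace is then obtained by the direct computation $\tilde I_\alpha^\dagger P_\gamma \tilde I_\alpha = \tilde P_\alpha M_\alpha P_{\alpha\cup\gamma} M_\alpha \tilde P_\alpha = 0$ for nonempty $\gamma\subseteq\bar\alpha$, since a PTP of size $|\alpha|+|\gamma|$ is annihilated under conjugation by $\tilde P_\alpha$. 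Unless you can make the Gram-matrix identification precise (which would amount to redoing this argument), your write-up as it stands does not establish that the square root exists or that $\tilde I_\alpha$ has the stated range.
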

\noindent The range of the isometry is restricted to the no-EPR subspace of $\mathcal{H}^{\otimes p-|\alpha|} \otimes \mathcal{H}^{\otimes q-|\alpha|}$, in contrast the bare isometry defined earlier which maps to the entire space.
This reflects the smaller size of $\tilde{P}_\alpha$ compared to $P_\alpha$.

We can also define a final set of partial isometries, $\tilde{I}_\ell$, associated to the $\ell$-EPR projectors $\tilde{P}_\ell$.
These will play a particularly important role since, as aforementioned, the definition of $\tilde{P}_\ell$ is unique and independent of our ordering of the $\alpha$, in contrast to $\tilde{P}_\alpha$.
The support of $\tilde{P}_\ell$ is equal to the tensor sum of the support of each $\tilde{P}_\alpha$ with size $|\alpha| = \ell$.
Hence, the partial isometry $\tilde{I}_\ell$ will map from this domain to the ${p \choose \ell}{q \choose \ell} \ell!$-fold tensor sum of the range of each $\tilde{I}_\alpha$.
Each $\tilde{I}_\alpha$ has range equal to the no-EPR subspace of $\mathcal{H}^{\otimes p-|\alpha|} \otimes \mathcal{H}^{\otimes q-|\alpha|}$.
Hence, the range of $\tilde{I}_\ell$ will be equal to,
\begin{equation}
    \left[ \mathcal{H}^{\otimes (p-\ell)} \otimes \mathcal{H}^{\otimes (q-\ell)} \right]_{\text{nE}} \otimes \mathcal{A}_\ell,
\end{equation}
where $|\mathcal{A}_\ell| = {p \choose \ell}{q \choose \ell} \ell!$ counts the number of $\alpha$ with size $|\alpha| = \ell$, i.e.~the number of terms in the tensor sum.
The first term denotes the no-EPR subspace of $\mathcal{H}^{\otimes (p-\ell)} \otimes \mathcal{H}^{\otimes (q-\ell)}$.
With the range defined, we write the partial isometry as,
\begin{equation} \label{eq: tilde I ell def}
	\tilde{I}_\ell = \sum_{\alpha : |\alpha| = \ell} \tilde{I}_\alpha,
\end{equation}
with the convention that each $\tilde{I}_\alpha$ maps from the support of $\tilde{P}_\alpha$ to $\left[ \mathcal{H}^{\otimes (p-\ell)} \otimes \mathcal{H}^{\otimes (q-\ell)} \right]_{\text{nE}} \otimes \dyad{\alpha}$ for $\ket{\alpha} \in \mathcal{A}_\ell$.
From Proposition~\ref{prop: isometry}, the orthogonal isometries $\tilde{I}_\ell$ are unitary-equivariant, in the sense that
$\tilde{I}_\ell \,  (\mathcal{U}^p_q)   = \left( \mathcal{U}^{p-\ell}_{q-\ell} \otimes \mathbbm{1}_{\mathcal{A}_\ell} \right) \, \tilde{I}_\ell$ for any MTU.

The partial isometries allow us to reformulate the mixed Haar twirl as follows.
Let us begin by inserting the resolution of the identity, $\mathbbm{1} = \sum_\ell \tilde{I}_\ell^\dagger \tilde{I}_\ell$, twice in the definition of the mixed Haar twirl,
\begin{equation}\nonumber
\begin{split}
    \Phi^{(p,q)}_H(X) & \equiv \E_{U \sim H} \bigg[ \bigg( \sum_{\ell} \tilde{I}_\ell^\dagger \tilde{I}^{}_\ell \bigg) \, (\mathcal{U}^p_q) X \, (\mathcal{U}^p_q)^\dagger  \bigg( \sum_{\ell'} \tilde{I}_{\ell'}^\dagger \tilde{I}^{}_{\ell'} \bigg) \bigg] 
     = \sum_{\ell,\ell'} \E_{U \sim H} \bigg[ \tilde{I}_\ell^\dagger 
    ((\mathcal{U}^{p-\ell}_{q-\ell}))
    \tilde{I}^{}_\ell \, X \, \tilde{I}_{\ell'}^\dagger (\mathcal{U}_{p-\ell',q-\ell'}^\dagger) \tilde{I}^{}_{\ell'} \bigg],
\end{split}
\end{equation}
where on the right side, we move the action of the MTU to the inside of the partial isometry.
We suppress the $\mathcal{A}_\ell$, $\mathcal{A}_{\ell'}$ registers for brevity, since the MTU acts trivially on these registers.
We can now apply the formula for the mixed Haar twirl in terms of PTPs to each term $\ell,\ell'$ above.
This yields,
\begin{equation} \label{eq: exact mixed twirl subspace 2}
\begin{split}
    \Phi^{(p,q)}_H(\rho) & = \sum_{\ell} \tilde{I}_{\ell}^\dagger \left[ \sum_{\pi_L  \pi_R} \sum_{\tilde{\pi}_L  \tilde{\pi}_R}
    \tr( \tilde{I}^{}_\ell \, \rho \, \tilde{I}_{\ell}^\dagger (\pi_L \otimes \pi_R)^{-1} ) \cdot \Wg^{(p+q-2\ell)}_{\pi_L \otimes \pi_R, \tilde{\pi}_L \otimes \tilde{\pi}_L} \cdot  (\tilde{\pi}_L \otimes \tilde{\pi}_R) \right] \tilde{I}_{\ell},
\end{split}
\end{equation}
where $\Wg^{(p+q-2\ell)}_{\pi_L \otimes \pi_R, \tilde{\pi}_L \otimes \tilde{\pi}_L}$ denotes the Weingarten matrix on $p+q-2\ell$ copies.
To compute this expression, we note that any PTP with size greater than zero vanishes upon conjugation by $\tilde{I}_\ell$, since the range of $\tilde{I}_\ell$ is the no-EPR subspace (Proposition~\ref{prop: isometry}) and PTPs with size greater than zero involve at least one EPR projector.
This implies that the terms with $\ell \neq \ell'$ vanish, and that the remaining $\ell = \ell'$ terms involve only tensor products permutation operators between the left and right side, $\pi_L,\tilde{\pi}_L \in S_{p-\ell}$ and $\pi_R,\tilde{\pi}_R \in S_{q-\ell}$.

This completes our derivation of the reformulation of the mixed Haar twirl.
In the remaining subsections, we provide the detailed proofs of each proposition above.

\subsubsection{Proof of Proposition~\ref{prop: PnE orthogonality}: Nearly-orthogonal projectors} \label{app: nearly}

Let us first establish a simple fact regarding the PTPs.
Namely, the size of a PTP can only increase under multiplication.
\begin{fact} \label{fact: PPT mult}
	The product of a size $\ell$ PPT and a size $\ell'$ PPT has size at least $\text{\emph{max}}(\ell,\ell')$.
\end{fact}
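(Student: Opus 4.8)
```latex
\textbf{Proof plan.} The statement to prove is Fact~\ref{fact: PPT mult}: the product of a size-$\ell$ PTP and a size-$\ell'$ PTP has size at least $\max(\ell,\ell')$. The natural approach is diagrammatic, using the tensor-network picture of PTPs introduced in Eq.~(\ref{eq: Brauer def}) and the multiplication rule in Eq.~(\ref{eq: Brauer multiply}). Recall that the \emph{size} of a PTP $\sigma$ is the number $\ell_\sigma$ of input legs paired with one another under $\sigma$ (equivalently, the number of output legs paired with one another), with the remaining legs connected ``straight through'' by permutations $\sigma_L \in S_{p-\ell_\sigma}$ on the left block and $\sigma_R \in S_{q-\ell_\sigma}$ on the right block. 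When we stack $\sigma$ on top of $\tau$ and contract the middle legs, the input pairs of the composite $\sigma\tau$ are exactly the input legs of $\tau$ that get routed (through $\tau$, then through $\sigma$) to other input legs rather than to output legs of $\sigma$.

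First I would set up the bookkeeping precisely. Label the bottom (input) legs $1,\dots,p$ on the left and $1,\dots,q$ on the right; similarly for the middle legs (outputs of $\tau$ = inputs of $\sigma$) and the top legs (outputs of $\sigma$). The composite PTP $\sigma\tau$ connects a bottom leg $i$ to either another bottom leg or to a top leg, by following the wire. The size $\ell_{\sigma\tau}$ equals the number of bottom legs paired to bottom legs. The key combinatorial observation is: every bottom leg that is part of an input pair of $\tau$ remains paired to a bottom leg in $\sigma\tau$ — it gets connected (via $\tau$) directly to its partner bottom leg, and this connection is untouched by $\sigma$ since $\sigma$ acts only on the middle and top legs. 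Hence all $\ell_\tau$ input pairs of $\tau$ survive, giving $\ell_{\sigma\tau} \ge \ell_\tau = \ell'$. By the symmetric argument applied to the output pairs of $\sigma$ (which all survive as top-leg pairings, since $\tau$ cannot alter connections among $\sigma$'s outputs), and noting that for any PTP the number of input pairs equals the number of output pairs, we also get $\ell_{\sigma\tau} \ge \ell_\sigma = \ell$. Combining, $\ell_{\sigma\tau} \ge \max(\ell,\ell')$, as claimed. (One should also note that closed loops formed in the contraction contribute only scalar factors of $D$ and do not affect which legs are paired, so they are irrelevant to the size count — but the representation is faithful for $D \ge p+q$, so the product is genuinely a nonzero multiple of the PTP with this pairing structure rather than accidentally vanishing.)

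The main subtlety — and the step I expect to require the most care — is verifying that an input pair of $\tau$ genuinely cannot be ``broken'' by $\sigma$ into something connecting a bottom leg to a top leg. The point is purely topological: an input pair of $\tau$ is a wire with both endpoints at the bottom, and composing with $\sigma$ only attaches new structure at the middle interface, which this wire does not touch. I would phrase this by distinguishing, for each bottom leg, whether $\tau$ sends it to a middle leg (a ``straight'' leg of $\tau$, routed by $\tau_L$ or $\tau_R$) or to another bottom leg (an input-pair leg of $\tau$). Only straight legs of $\tau$ reach the middle interface and can be rerouted by $\sigma$; input-pair legs of $\tau$ are already closed off at the bottom. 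This gives the clean inequality $\ell_{\sigma\tau}\ge \ell_\tau$. The output side is handled identically (or by appealing to the transpose/adjoint symmetry of the walled Brauer algebra, since $(\sigma\tau)^\dagger = \tau^\dagger\sigma^\dagger$ and the adjoint of a PTP has the same size), completing the proof.
```
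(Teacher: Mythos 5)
Your proof is correct and follows essentially the same route as the paper's: the product is (a positive multiple of) a PTP whose bottom arcs contain the input pairs of the first-applied factor and whose top arcs contain the output pairs of the other, and since any PTP has equally many input and output pairs, the size is at least $\max(\ell,\ell')$. The paper simply asserts this containment ("one can easily see"), whereas you supply the topological justification that arcs not touching the middle interface cannot be rerouted — a harmless elaboration of the same argument.
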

\begin{proof}
	Call the two PPTs $\sigma$ and $\sigma'$, respectively, and let $\sigma_I$ denote the input pairs of $\sigma$, and $\sigma_O'$ denote the output pairs of $\sigma'$.
	We have $| \sigma_I | = \ell$ and $| \sigma_O' | = \ell'$.
	One can easily see that the product $\sigma \sigma'$ is a PPT $\beta$ with input pairs $\beta_I \supseteq \sigma_I$ and output pairs $\beta_O \supseteq \sigma_O'$.
	The size of $\sigma \sigma'$ is given by $| \beta_I | = | \beta_O | \geq \text{max}(| \sigma_I |, | \sigma_O' |) = \text{max}(\ell,\ell')$.
\end{proof}

We can now prove Proposition~\ref{prop: PnE orthogonality}.
Assume $|\beta|| > |\alpha|$ without loss of generality.
We know that the product of the bare projectors, $P_\alpha P_\beta$, is proportional to a PTP $\gamma$ with $\gamma_I \supseteq \alpha$.
From Fact~\ref{fact: PPT mult}, the PTP has size at least $|\gamma_I| = |\gamma_O|=|\beta|$.
Since $|\beta| > |\alpha|$, this implies $\gamma_I \supsetneq \alpha$.
This implies that $\gamma_I$ contains at least one pair on $\bar \alpha$.
Thus, we have $\noEPR_{\bar \alpha} (P_\alpha P_\beta) = 0$ and hence, $\noP_\alpha \noP_\beta = \noEPR_{\bar \alpha} P_\alpha P_\beta \noEPR_{\bar \beta} = 0$. \qed

\subsubsection{Proof of Proposition~\ref{prop: proj diagrams}: Orthogonal projectors} \label{app: orthogonal}

We first note that each $\tilde{P}_\alpha$ commutes with any MTU, $\mathcal{U}^p_q = U^{\otimes p} \otimes  U^{*, \otimes q}$.
	To see this, recall that each bare projector $P_{\alpha}$ commutes with $\mathcal{U}^p_q$.
	Hence, if $\ket{\psi}$ is in the span of $P_\alpha$, then $\mathcal{U}^p_q \ket{\psi}$ is also in the span of $P_\alpha$.
	Similarly, it follows that if $\ket{\psi}$ is in $\text{span} \left( \{ P_{\alpha'} : \alpha' \leq \alpha \} \right)$, then $\mathcal{U}^p_q \ket{\psi}$ is also in the same span.
	Observing the definition of $\tilde{P}_\alpha$ [Eq.~(\ref{eq: def tilde P})], we see that $\tilde{P}_\alpha \mathcal{U}^p_q \ket{\psi} = \mathcal{U}^p_q \ket{\psi}$ if and only if $\tilde{P}_\alpha \ket{\psi} =  \ket{\psi}$.
	Hence, $\tilde{P}_\alpha$ commutes with $\mathcal{U}^p_q$.
	
	To complete our proof, let $\{ \ket{ \phi } \}$ denote an orthonormal basis for the unit eigenspace of $\tilde{P}_\alpha$.
	From the definition of $\tilde{P}_\alpha$, each vector $\ket{ \phi }$ can be written as a sum, $\ket{\phi} = \sum_{\alpha' \leq \alpha} \ket{ E_{\alpha'} } \otimes \ket{ \phi_{\alpha'} }$, of vectors in $P_{\alpha'}$ for $\alpha' \leq \alpha$.
	Since $\tilde{P}_\alpha$ commutes with $\mathcal{U}^p_q$, we have $\tilde{P}_\alpha = \E_{U \sim H} [ \mathcal{U}^p_q \tilde{P}_\alpha (\mathcal{U}^p_q)^\dagger ]$.
	Expressed in terms of the orthonormal basis $\{ \ket{ \phi } \}$, this gives
	\begin{equation}
		\tilde{P}_\alpha = \sum_\phi \E_{U \sim H} \left[ (\mathcal{U}^p_q) \dyad{\phi} (\mathcal{U}^p_q)^\dagger \right] = \sum_\phi \sum_{\alpha', \alpha'' \leq \alpha} \E_{U \sim H} \left[ (\mathcal{U}^p_q) \dyad{\phi_{\alpha'}}{\phi_{\alpha''}} (\mathcal{U}^p_q)^\dagger \right].
	\end{equation}
	By definition, we can write $\ket{ \phi_\alpha } = \ket{ E_\alpha } \otimes \ket{ \phi^c_{\bar \alpha} }$, where $\ket{E_\alpha}$ is the EPR projector on $\alpha$, and $\ket{ \phi^c_{\bar \alpha}}$ is a state on the complement of $\alpha$.
	When we apply $\mathcal{U}^p_q$ to this state, the factors acting on $\alpha$ cancel, leaving
	\begin{equation}
		\mathcal{U}^p_q \ket{ \phi_\alpha } = \ket{ E_\alpha } \otimes \mathcal{U}^{p-|\alpha'|}_{q-|\alpha'|} \ket{ \phi^c_{\bar \alpha} }.
	\end{equation}
	Applying the formula for the mixed Haar twirl to $\mathcal{U}^{p-|\alpha'|}_{q-|\alpha'|}$, one finds that each state
    \begin{equation}
        \E_{U \sim H} \left[ \mathcal{U}^p_q \dyad{\phi_{\alpha'}}{\phi_{\alpha''}} (\mathcal{U}^p_q)^\dagger \right]
    \end{equation}
    can be expressed as a sum of PTPs $\sigma$ with $\sigma_I \supseteq \alpha'$ and $\sigma_O \supseteq \alpha''$.
	This implies that $\sigma_I \leq \alpha'$ and $\sigma_O \leq \alpha''$, since our ordering of pairs was strictly decreasing.
	Hence, we obtain an expression for $\tilde{P}_\alpha$ as a sum of PTPs with $\sigma_I, \sigma_O \leq \alpha$. \qed

    \subsubsection{Proof of Proposition~\ref{ref: proj ell commutes}: The $\ell$-EPR projector} \label{app: ell}

    One can generate any PTP from a product of permutations, $\pi_L \otimes \pi_R$, and a single EPR projector, $\Pi_{ij}$, onto a copy $i$ on the left side and a copy $j$ on the right side.
	Hence, to prove that $\tilde{P}_\ell$ commutes with any PTP, it suffices to prove that $\tilde{P}_\ell$ commutes with each permutation, as well as $\Pi_{ij}$.
	The former follows from the definition of $\tilde{P}_\ell$, since conjugation by a permutation simply amounts to a re-ordering of the $\alpha$ within each size, and the projector $\tilde{P}_\ell$ is manifestly independent of this ordering.
	Thus, it remains only to show that $\tilde{P}_\ell$ commutes with $\Pi_{ij}$.
	
	Since $\tilde{P}_{\ell}$ can be written as a sum of PTPs with size greater than or equal to $\ell$ (Proposition~\ref{prop: proj diagrams}),
	 the operator $\Pi_{ij} \tilde{P}_{\ell}$ can also be written as a sum of PTPs with such size (using Fact~\ref{fact: PPT mult}). 
	This implies that $\tilde{P}_{\ell'} \Pi_{ij} \tilde{P}_\ell = 0$ for any $\ell' > \ell$, since $\tilde{P}_{\ell'}$ is orthogonal to any PTP with size greater than $\ell$ by definition.
	Applying the same argument to the Hermitian conjugate, $(\tilde{P}_{\ell'} \Pi_{ij} \tilde{P}_\ell)^\dagger = \tilde{P}_{\ell} \Pi_{ij} \tilde{P}_{\ell'}$, we have that $\tilde{P}_{\ell'} \Pi_{ij} \tilde{P}_\ell = 0$ for any $\ell' < \ell$, and hence $\ell' \neq \ell$.
	The desired commutation follows immediately. We write
	\begin{equation}
		\tilde{P}_\ell \Pi_{ij} \tilde{P}_\ell = ( \mathbbm{1} - \sum_{\ell' \neq \ell} \tilde{P}_{\ell'} ) \Pi_{ij} \tilde{P}_\ell = \Pi_{ij} \tilde{P}_\ell,
	\end{equation}
	where in the first equality we use that the projectors form a complete basis, $\sum_{\ell'} \tilde{P}_{\ell'} = \mathbbm{1}$, and in the second equality we use that $\tilde{P}_{\ell'} \Pi_{ij} \tilde{P}_\ell = 0$ if $\ell' \neq \ell$.
	Taking the Hermitian conjugate of both sides above yields $\tilde{P}_\ell \Pi_{ij} \tilde{P}_\ell = \tilde{P}_\ell \Pi_{ij}$.
	Thus, we have $\Pi_{ij} \tilde{P}_\ell = \tilde{P}_\ell \Pi_{ij} \tilde{P}_\ell = \tilde{P}_\ell \Pi_{ij}$, i.e.~$\tilde{P}_\ell$ and $\Pi_{ij}$  commute.
	This completes the proof. \qed

\subsubsection{Proof of Proposition~\ref{prop: proj rank}: Rank of the orthogonal projectors} \label{app: rank}

Let  $\oneEPR_{\bar \alpha} = \mathbbm{1} - \noEPR_{\bar \alpha} = \sum_{\gamma\neq \varnothing} \tilde{P}^{(\bar\alpha)}_\gamma$ denote the projector onto states with at least one EPR pair on $\bar \alpha$.
We have 
\begin{equation}
	P_\alpha = \dyad{E_\alpha} \otimes (\noEPR_{\bar \alpha} + \oneEPR_{\bar \alpha}).
\end{equation}
Only the first term contains states that contribute to $\tilde{P}_\alpha$.
To see this, we apply Proposition~\ref{prop: proj diagrams} to subsystem $\bar \alpha$, which shows that $\oneEPR_{\bar \alpha}$ can be written as a sum of PTPs on $\bar \alpha$ with at least one EPR projector.
After taking the tensor product with $\dyad{E_\alpha}$, we obtain a sum of PTPs with at least $|\alpha| + 1$ EPR projectors.
Thus, we have $\text{span} ( \dyad{E_\alpha} \otimes \oneEPR_{\bar \alpha} ) \subseteq \text{span}(\{ P_{\alpha'} : \alpha' < \alpha \})$, which is orthogonal to $\text{span} ( \tilde{P}_\alpha )$ by definition.
Since $P_\alpha$ has rank $D^{p+q-2|\alpha|}$ and $\oneEPR_{\bar \alpha}$ has rank $N_{\text{EPR}}^{(p-|\alpha|,q-|\alpha|)}$, this implies that the rank of  $\tilde{P}_\alpha$  is upper bounded by $D^{p+q-2|\alpha|} -  N_{\text{EPR}}^{(p-|\alpha|,q-|\alpha|)}$.

To show that the rank is equal to this value, we must show that every non-zero vector in $\text{span}( \dyad{E_\alpha} \otimes \noEPR_{\bar \alpha} )$ is outside of $\text{span}( \{ \tilde{P}_{\alpha'} : \alpha' < \alpha \})$.
We provide a proof by contradiction.
Suppose that a non-zero vector $\ket{\psi}$ in $\noP_\alpha$ is inside of $\text{span}(\{ P_{\alpha'} : \alpha' \leq \alpha \})$.
This implies that we can write both $\ket{\psi} = \ket{E_\alpha} \otimes \ket{\psi_{\bar \alpha}}$ for some $\ket{\psi_{\bar \alpha}} \in \noEPR_{\bar \alpha}$, as well as $\ket{\psi} = \sum_{\alpha' < \alpha} \ket{E_\alpha'} \otimes \ket{\psi_{\bar \alpha'}}$ for some $\ket{\psi_{\bar \alpha'}}$.
Since both $P_\alpha$ and $\noP_\alpha$ commute with any mixed tensor unitary $\mathcal{U}^p_{q}$, we also have that $\mathcal{U}^p_{q} \ket{\psi}$ is in both $\text{span}(\noP_\alpha)$ and $\text{span}(\{ P_{\alpha'} : \alpha' \leq \alpha \})$.
Thus, we are free to average over $U$ to obtain
\begin{equation} \label{eq: psi twirl 1}
\begin{split}
	\E_{U \sim H} \left[ \mathcal{U}^p_{q} \dyad{\psi} (\mathcal{U}^p_{q})^\dagger \right] & = 
	\E_{U \sim H} \left[ \dyad{E_\alpha} \otimes \mathcal{U}^{p-\ell}_{q-\ell} \dyad{\psi_{\bar \alpha}} (\mathcal{U}^{p-\ell}_{q-\ell})^\dagger \right]\\
	& = \sum_{\pi_L,\pi_R}  c_{\pi_L \pi_R} \Pi^{\bar \alpha}_{\text{noEPR}}   \big( \alpha, \alpha, \pi_L, \pi_R \big) \noEPR_{\bar \alpha}
\end{split}
\end{equation}
using our first decomposition of $\ket{\psi}$.
Here, we have used that $\ket{\psi_{\bar \alpha}} \in \noEPR_{\bar \alpha}$, i.e.~$\ket{\psi_{\bar \alpha}} = \noEPR_{\bar \alpha} \ket{\psi_{\bar \alpha}}$, to insert the no-EPR projector on $\bar \alpha$.
The coefficients $c_{\pi_L \pi_R}$ are obtained from the Haar twirl over $\dyad{\psi_{\bar \alpha}}$.
In a similar manner, we can obtain
\begin{equation} \label{eq: psi twirl 2}
	\E_{U \sim H} \left[ (\mathcal{U}^p_q) \dyad{\psi} (\mathcal{U}^p_q)^\dagger \right] = 
	\sum_{ \substack{ \alpha_I < \alpha \\ \alpha_O < \alpha }} \sum_{\pi_L,\pi_R} d^{\alpha_I \alpha_O}_{\pi_L \pi_R} \cdot \big( \alpha_I, \alpha_O, \pi_L, \pi_R \big),
\end{equation}
using our second decomposition of $\ket{\psi}$.
The coefficients $d^{\alpha_I \alpha_O}_{\pi_L \pi_R}$ are obtained from the Haar twirl over the various $( \ket{E_{\alpha_I}} \otimes \ket{\psi_{\bar \alpha_I}} ) ( \bra{E_{\alpha_O}} \otimes \bra{\psi_{\bar \alpha_O}} )$.

Note that the latter expression, Eq.~(\ref{eq: psi twirl 2}), contains only PTPs with $\alpha_I,\alpha_O < \alpha$.
Since the representation of the walled Brauer algebra is faithful for $D \geq p+q$, this implies that the former expression, Eq.~(\ref{eq: psi twirl 1}), must also contain only such PTPs. 
This follows because the PTPs are linearly independent operators when the representation is faithful.
We will now show that this leads to a contradiction.
To begin, recall that we can write $\noEPR_{\bar \alpha} = \mathbbm{1} - \oneEPR_{\bar \alpha}$, and that $\dyad{E_\alpha} \otimes \Pi^{\bar \alpha}_{>1\text{EPR}}$ can be expressed as a sum of diagrams with $\alpha_I, \alpha_O < \alpha$.
Thus, Eq.~(\ref{eq: psi twirl 1}) can be written as
\begin{equation}
	\sum_{\pi_L,\pi_R}  c_{\pi_L \pi_R} \Pi^{\bar \alpha}_{\text{noEPR}}   \big( \alpha, \alpha, \pi_L, \pi_R \big) \noEPR_{\bar \alpha} = \sum_{\pi_L,\pi_R} c_{\pi_L \pi_R}   \big( \alpha, \alpha, \pi_L, \pi_R \big)  + \left( \text{PTPs with $\alpha_I < \alpha$ or $\alpha_O < \alpha$} \right).
\end{equation}
The first term contains solely PTPs with $\alpha_I = \alpha_O = \alpha$.
Thus, if Eq.~(\ref{eq: psi twirl 1}) can be written as a sum of PTPs with $\alpha_I,\alpha_O < \alpha$, we must have $c_{\pi_L \pi_R} = 0$ for all $\pi_L, \pi_R$.
This implies that $\langle \psi_{\bar\alpha} | \psi_{\bar\alpha} \rangle$ is zero, which implies that $\ket{\psi}$ is zero, which is a contradiction.
We conclude that every non-zero vector in $\text{span}(\noP_\alpha)$ is outside $\text{span}(\{P_{\alpha'} : \alpha' \leq \alpha \})$.
This implies that the rank of $\tilde{P}_\alpha$ is equal to the rank of $\noP_\alpha$, which proves the proposition. \qed

\subsubsection{Proof of Proposition~\ref{prop: no EPR}: Expression for the no-EPR projector} \label{app: expression}

Let us first prove Eq.~(\ref{eq: trace noEPR}).
    We know from Proposition~\ref{prop: proj diagrams} that the no-EPR projector can be written as a sum of PTPs.
    Hence, we can write,
    \begin{equation} \label{eq: noEPR sigma tau}
        \noEPR = \sum_{\sigma,\tau} \widetilde{\Wg}_{\sigma,\tau} \cdot \text{tr}( \noEPR \sigma^\dagger ) \cdot \tau,
    \end{equation}
    where $\sigma,\tau$ run over the $(p+q)!$ PTPs.
    We can simplify this expression in two ways.
    First, we use that $\tr( \noEPR \sigma ) = 0$ unless $\sigma$ is a permutation, $\sigma = \pi_L \otimes \pi_R$.
    Second, we can use that $(\noEPR)^2 = \noEPR$ to multiply the right hand side by $\noEPR$.
    This similarly restricts the sum over $\tau$, since $\tau \noEPR = 0$ unless  $\tau$ is a permutation, $\tau = \tilde{\pi}_L \otimes \tilde{\pi}_R$.
    Together, these give
    \begin{equation} \label{eq: noEPR pis}
        \noEPR 
        = \sum_{\pi_L \pi_R} \sum_{\tilde{\pi}_L \tilde{\pi}_R} \Wg_{\pi_L \otimes \pi_R,\tilde{\pi}_L\otimes \tilde{\pi}_R} \cdot \tr( \noEPR ( \pi_L \otimes \pi_R )^{-1}  ) \cdot (\tilde{\pi}_L \otimes \tilde{\pi}_R) \noEPR.
    \end{equation}

    To proceed, consider the operators, $ \{ (\tilde{\pi}_L \otimes \tilde{\pi}_R) \noEPR \}$, appearing on the right side of Eq.~(\ref{eq: noEPR pis}).
    We will prove that these operators are linearly independent.
    Suppose that there exists coefficients, $c_{\tilde{\pi}}$, for $\tilde{\pi} \equiv \tilde{\pi}_L \otimes \tilde{\pi}_R$, such that $\sum_{\tilde{\pi}} c_{\tilde{\pi}} \tilde{\pi} \noEPR = 0$.
    Now, the no-EPR projector can be written as $\noEPR = \mathbbm{1} - \sum_{\ell=1}^{\min(p,q)} \tilde{P}_\ell$, where each projector $\tilde{P}_\ell$ can be written as a sum of PTPs with size greater than or equal to $\ell \geq 1$.
    Hence, we can write $\sum_{\tilde{\pi}} c_{\tilde{\pi}} \tilde{\pi} \noEPR = \sum_{\tilde{\pi}} c_{\tilde{\pi}} \tilde{\pi} + \Delta$, where $\Delta$ is a sum of PTPs with size $\geq 1$.
    If the left side of this expression were to vanish, as supposed, than the first term on the right side must vanish as well, since the PTPs are linearly independent for $p+q \leq D$, and the first term has only PTPs of size zero and the second term has only PTPs of size $\geq 1$.
    However, this requires $c_{\tilde{\pi}} = 0$ for all $\tilde{\pi}$, since the permutation operators are linearly independent as well.
    This establishes that that the operators, $\{ (\tilde{\pi}_L \otimes \tilde{\pi}_R) \noEPR \}$, are linearly independent.

    We can use this linear independence to complete our proof.
    Observing Eq.~(\ref{eq: noEPR pis}), the only way in which the two sides of the equation can be equal is if,
    \begin{equation}
        \sum_{\pi_L \pi_R} \Wg_{\pi_L \otimes \pi_R,\tilde{\pi}_L\otimes \tilde{\pi}_R} \cdot \tr( \noEPR ( \pi_L \otimes \pi_R )^{-1}  ) 
        =
        \delta_{\mathbbm{1},\tilde{\pi}_L\otimes \tilde{\pi}_R}
    \end{equation}
    for every $\tilde{\pi}_L, \tilde{\pi}_R$.
    The second term on the left side of the equation are the matrix elements of $\hat{\Wg}|_{\text{perm}}$.
    Applying the matrix inverse of $\hat{\Wg}|_{\text{perm}}$ to the left and right side, we obtain Eq.~(\ref{eq: trace noEPR}), as desired.
    The second statement of the proposition, Eq.~(\ref{eq: noEPR projector expansion}), follows immediately from the first statement and Eq.~(\ref{eq: noEPR sigma tau}).
    As before, we note that $\tr( \noEPR \sigma) = 0$ unless $\sigma = \pi_L \otimes \pi_R$, and we substitute the first statement, Eq.~(\ref{eq: trace noEPR}), in for $\tr( \noEPR \sigma) = \tr( \noEPR (\pi_L \otimes \pi_R)^{-1})$. \qed

\subsubsection{Proof of Proposition~\ref{prop: isometry}: Partial isometries} \label{app: partial}

To show that the isometry is well-defined, we simply need to show that the quantity within the parentheses in Eq.~(\ref{eq: def M alpha}) is positive.
To do so, let us expand the operator as a sum over its eigenvectors, $\ket{\lambda} \otimes \ket{E_\alpha}$, and eigenvalues, $\lambda$,
\begin{equation}
	\noEPR_{\bar \alpha} \sum_{\pi_L,\pi_R} c_{\pi_L \pi_R} \cdot \big( \alpha, \alpha, \pi_L, \pi_R \big) \noEPR_{\bar \alpha} = \sum_\lambda \lambda \dyad{\lambda \otimes E_\alpha}.
\end{equation}
To show that the eigenvalues are positive, we first use the fact that when we conjugate the operator above by $\tilde{P}_\alpha$, we obtain $\tilde{P}_\alpha$.
The conjugation maps each rank-1 projector to a new rank-1 operator, $\mathcal{N}_\lambda  \dyad*{\tilde{\lambda}}$, in $\tilde{P}_\alpha$,
\begin{equation} \nonumber
	\tilde{P}_\alpha = \tilde{P}_\alpha \noEPR_{\bar \alpha} \sum_{\pi_L,\pi_R} c_{\pi_L \pi_R} \cdot \big( \alpha, \alpha, \pi_L, \pi_R \big) \noEPR_{\bar \alpha} \tilde{P}_\alpha  = \sum_\lambda \lambda \cdot \tilde{P}_\alpha  \dyad{\lambda \otimes E_\alpha} \tilde{P}_\alpha  = \sum_\lambda \lambda \cdot \mathcal{N}_\lambda  \dyad*{\tilde{\lambda}} 
\end{equation}
with normalization $\mathcal{N}_\lambda = \bra{\lambda \otimes E_\alpha} \tilde{P}_\alpha \ket{\lambda \otimes E_\alpha}$, $0 \leq \mathcal{N}_\lambda \leq 1$.
Now, we note that there are at most $D^{p+q-2|\alpha|} - N_{\text{EPR}}^{(p-|\alpha|,q-|\alpha|)}$ non-zero eigenvalues $\lambda$, corresponding to the dimension of the no-EPR subspace on $\bar \alpha$.
However, from Proposition~\ref{prop: proj rank}, this is also the rank of $\tilde{P}_\alpha$.
Hence, there must be precisely this number of non-zero eigenvalues, and the set of conjugated vectors, $\{ \ket*{\tilde{\lambda}} \}$, must be linearly independent.
To determine the eigenvalues $\lambda$, let us apply $\tilde{P}_\alpha$ to $\ket*{\tilde{\lambda}'}$ for any $\tilde{\lambda}'$.
We have $\ket*{\tilde{\lambda}'} = \tilde{P}_\alpha \ket*{\tilde{\lambda}'} = \sum_\lambda \lambda \cdot \mathcal{N}_\lambda \langle \tilde{\lambda} | \lambda' \rangle \ket*{\tilde{\lambda}}$.
Since the $\{ \ket*{\tilde{\lambda}} \}$ are linearly independent, we must have $\langle \tilde{\lambda} | \tilde{\lambda}' \rangle = \delta_{\tilde{\lambda} , \tilde{\lambda}'}$ and $\lambda \cdot \mathcal{N}_\lambda = 1$.
Hence, $\lambda = 1/\mathcal{N}_\lambda$, so every eigenvalue is positive and greater than one.

To show that $\tilde{I}_\alpha$ is an isometry as described, we must show that $\tilde{I}_\alpha^\dagger \tilde{I}_\alpha = \tilde{P}_\alpha$ and $\tilde{I}_\alpha \tilde{I}_\alpha^\dagger = \noEPR_{\bar \alpha}$.
The first equality follows by construction [see Eqs.~(\ref{eq: tilde P decom tilde I}) and~(\ref{eq: tilde I def})].
To establish the second equality, we compute
\begin{equation}
	( \tilde{I}_\alpha \tilde{I}_\alpha^\dagger )^2 = I_\alpha M_\alpha \tilde{P}_\alpha (\tilde{P}_\alpha M_\alpha I_\alpha^\dagger I_\alpha M_\alpha \tilde{P}_\alpha ) \tilde{P}_\alpha M_\alpha I_\alpha^\dagger = I_\alpha M_\alpha \tilde{P}_\alpha (\tilde{P}_\alpha ) \tilde{P}_\alpha M_\alpha I_\alpha^\dagger = \tilde{I}_\alpha \tilde{I}_\alpha^\dagger.
\end{equation}
This shows that $\tilde{I}_\alpha \tilde{I}_\alpha^\dagger$ is a projector, and hence $\tilde{I}_\alpha$ is a partial isometry.

To establish the range of $\tilde{I}_\alpha$ (i.e.~on what subspace of $\bar \alpha$ does $\tilde{I}_\alpha \tilde{I}_\alpha^\dagger$ project onto), we first note that the range is contained in the no-EPR subspace on $\bar \alpha$.
This follows because the insertion of any EPR projector inside the isometry yields zero,
\begin{equation}
	\tilde{I}_\alpha^\dagger P_\gamma \tilde{I}_\alpha = \tilde{P}_\alpha M_\alpha I_\alpha^\dagger P_\gamma I_\alpha M_\alpha \tilde{P}_\alpha = \tilde{P}_\alpha M_\alpha P_{\alpha \cup \gamma} M_\alpha \tilde{P}_\alpha = 0,
\end{equation}
for all non-empty $\gamma \subseteq \bar \alpha$. 
The final expression is zero because $P_{\alpha \cup \gamma}$ has size $|\alpha| + |\gamma|$, which implies that $M_\alpha P_{\alpha \cup \gamma} M_\alpha$ is a sum of diagrams with size $|\alpha| + |\gamma|$, all of which vanish after conjugation by $\tilde{P}_\alpha$.
Hence, the range of $\tilde{I}_\alpha$ is contained in the orthogonal complement of $\text{span} (\{ P_\gamma : \gamma  \subseteq \bar \alpha \})$ on $\bar \alpha$, which is the definition of the no-EPR subspace.
To show that the range is equal to the no-EPR subspace, we simply note that the ranks of $\noEPR_{\bar \alpha}$ (restricted to subspace $\bar \alpha$) and $\tilde{I}_\alpha \tilde{I}_\alpha^\dagger$ are equal via Proposition~\ref{prop: proj rank}.
Namely, we apply Proposition~\ref{prop: proj rank} for $p, q, \ell$ for $\tilde{P}_\alpha$, and for $p-\ell,q-\ell,0$ for $\noEPR_{\bar \alpha}$. 
The rank of $\tilde{I}_\alpha \tilde{I}_\alpha^\dagger$ is equal to the rank of $\tilde{P}_\alpha = \tilde{I}_\alpha^\dagger \tilde{I}_\alpha$ since $\tilde{I}_\alpha$ is an isometry.
Thus, we have $\tilde{I}_\alpha \tilde{I}_\alpha^\dagger = \noEPR_{\bar \alpha}$, as claimed.

Finally, the isometry is unitary-equivariant,
\begin{equation}
	\tilde{I}_\alpha \mathcal{U}^p_q  = I_\alpha M_\alpha \tilde{P}_\alpha \mathcal{U}^p_q = \mathcal{U}^{p-\ell}_{q-\ell} I_\alpha M_\alpha \tilde{P}_\alpha = \mathcal{U}^{p-\ell}_{q-\ell} \tilde{I}_\alpha,
\end{equation}
since $\mathcal{U}^p_q$ commutes with $\tilde{P}_\alpha$ and $M_\alpha$, and $I_\alpha$ is unitary-equivariant.
The fact that $\mathcal{U}^p_q$ commutes with $M_\alpha$ follows because the square of $M_\alpha$ can be written as a sum of diagrams, and if $A$ and $B$ commute, then $A$ and $\sqrt{B}$ also commute, for any $A$, $B$. \qed

\subsection{Additional results on the mixed Haar twirl} \label{sec: additional}

In this section, we present several additional results on the objects appearing in the mixed Haar twirl. These results are not used in any of our main results in either the main text or the appendices. 
They were derived during an early unsuccessful attempt to prove the strong gluing lemma using only properties of the partially transposed permutations.
This attempt was aborted and replaced with the current proof via the path-recording framework after the introduction of this framework by Ref.~\cite{ma2024construct}.
We include these results here in case they may be useful in future work on the mixed Haar twirl or partially transposed permutations.

\subsubsection{Approximate orthogonality of the EPR projectors} 

The main result of this section is a proof that the ``nearly-orthogonal'' projectors defined in Appendix~\ref{sec: details mixed} are indeed nearly orthogonal, whenever the Hilbert space dimension $D$ is large.
This implies that the nearly-orthogonal projectors are approximately equal to the orthogonal projectors, $\noP_\alpha \approx \tilde{P}_\alpha$.
This can enable much easier analyses owing to the simpler definition of each $\noP_\alpha$.


To quantify the orthogonality of the nearly-orthogonal projectors, for each $\ell$, we define the  ${p \choose \ell}{q \choose \ell} \ell! \times {p \choose \ell}{q \choose \ell} \ell!$ matrix with elements,
\begin{equation}
	G^{(\ell)}_{\alpha \beta} =
	\begin{cases}
		\lVert P^{\text{{nE}}}_\alpha P^{\text{{nE}}}_\beta \rVert_\infty, & \alpha \neq \beta, \\
		0, & \alpha = \beta. \\
	 \end{cases}
\end{equation}
The matrix $\hat G^{(\ell)}$ would be zero if the projectors were perfectly orthogonal. We will show that in the limit of large $D$, its spectral norm is very small.
For each $\ell' \leq \ell$, we also consider the ${p \choose \ell}{q \choose \ell} \ell! \times {p \choose \ell}{q \choose \ell} \ell!$ matrix with elements,
\begin{equation} \label{eq: def F}
	F^{(\ell,\ell')}_{\alpha \beta} = 
	\begin{cases}
	\frac{1}{{\ell \choose \ell'}}\sum_{\gamma : |\gamma| = \ell'} \lVert P^{\text{{nE}}}_\alpha P_\gamma P^{\text{{nE}}}_\beta \rVert_\infty, & \alpha \neq \beta, \\
	\frac{1}{{\ell \choose \ell'}}\sum_{\gamma : |\gamma| = \ell', \gamma \not\subseteq \alpha} \lVert P^{\text{{nE}}}_\alpha P_\gamma P^{\text{{nE}}}_\alpha \rVert_\infty, & \alpha = \beta. \\
	\end{cases}
\end{equation}
We will show that the spectral norm of this matrix is also small.

We can now formally state our result on the approximate orthogonality of the projectors.
\begin{theorem}[Approximate orthogonality of EPR projectors] \label{thm: approx orth}
	The matrices $\hat G^{(\ell)}$ and $\hat F^{(\ell,\ell')}$ have small spectral norm,
	\begin{equation} \label{eq: G F bound}
		\big\lVert \hat G^{(\ell)}  \big\rVert_\infty \leq e^{\frac{\ell(p+q)}{D}} - 1, 
		\,\,\,\,\,\,\,\,\,\,\,\, \text{ and } \,\,\,\,\,\,\,\,\,\,\,\,
		\big\lVert \hat F^{(\ell,\ell')}  \big\rVert_\infty \leq e^{\frac{(\ell+\ell')(p+q)}{D}} - 1.
	\end{equation} 
	for any $(p+q)^2 \leq D$.
\end{theorem}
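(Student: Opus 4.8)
The plan is to reduce everything to bounding the spectral norms of matrices of operator norms $\lVert \noP_\alpha P_\gamma \noP_\beta \rVert_\infty$, and to estimate each such operator norm by a direct diagrammatic computation using the multiplication rule for partially transposed permutations [Eq.~(\ref{eq: Brauer multiply})]. The starting observation is that $\noP_\alpha = \noEPR_{\bar\alpha} P_\alpha = P_\alpha \noEPR_{\bar\alpha}$, and that $\noEPR_{\bar\alpha} = \mathbbm{1} - \sum_{\gamma\neq\varnothing,\ \gamma\subseteq\bar\alpha} \tilde P^{(\bar\alpha)}_\gamma$. Expanding $\noEPR_{\bar\alpha}$ (and $\noEPR_{\bar\beta}$) this way turns $\lVert \noP_\alpha \noP_\beta\rVert_\infty$ into a sum of terms of the form $\lVert P_\alpha P_{\gamma_1} \cdots P_{\gamma_r} P_\beta \rVert_\infty$. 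Using Fact~\ref{fact: PPT mult} and the fact that $P_\alpha P_\beta$ is $D^{-|\alpha\cup\beta|}$ times a PTP of size $|\alpha\cup\beta|$, each such product is $D$ to a negative power times a single PTP (an EPR projector on some pair-set sandwiched by identities and a permutation that gets absorbed), so its operator norm is exactly a power of $D$. Closed loops in the diagram composition each contribute a factor of $D$, and the net exponent is always negative when $\alpha\neq\beta$ because then $|\alpha\cup\beta| > \max(|\alpha|,|\beta|)$, forcing at least one extra EPR contraction beyond what the normalizations account for.

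First I would set up the key single-term estimate: for $\alpha\neq\beta$ with $|\alpha|=|\beta|=\ell$, show $\lVert \noP_\alpha \noP_\beta\rVert_\infty \leq \sum_{j\geq 1} (\text{number of ways to close } j \text{ extra loops}) \cdot D^{-j}$, and bound the combinatorial count by something like $\binom{\ell}{j}(p+q)^j / j!$ or a cruder $\big(\ell(p+q)\big)^j/j!$. Summing the geometric-type series gives the $e^{\ell(p+q)/D}-1$ bound on each off-diagonal entry $G^{(\ell)}_{\alpha\beta}$. Then, to pass from entrywise bounds to a spectral-norm bound on $\hat G^{(\ell)}$, I would use that $\hat G^{(\ell)}$ has nonnegative entries and is invariant under the permutation action $\alpha\mapsto\pi\alpha$ on pair-sets; exactly as in the proof of Lemma~\ref{lemma: approx mixed Haar twirl} [Eq.~(\ref{eq: spectral norm bound phi tilde})], this forces the maximal eigenvector to be the constant vector, so $\lVert \hat G^{(\ell)}\rVert_\infty$ equals the maximal row sum. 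But one must be careful: the naive row sum has $\binom{p}{\ell}\binom{q}{\ell}\ell!$ terms, which could reintroduce a large factor. The resolution is to organize the row sum by $k = |\alpha\cup\beta|-\ell = $ the number of positions in which $\beta$'s pairing differs from $\alpha$'s; there are at most $\binom{\ell}{k}(p+q)^k$ such $\beta$, and each contributes $O(D^{-k})$, so the row sum is $\sum_k \binom{\ell}{k}(p+q)^k D^{-k}\cdot C_k$ with $C_k$ a bounded combinatorial factor, giving $\leq (1+(p+q)/D)^\ell - 1 \leq e^{\ell(p+q)/D}-1$. The $\hat F^{(\ell,\ell')}$ bound is identical but with one extra inserted bare projector $P_\gamma$ of size $\ell'$, which adds $\ell'$ more potential contractions, shifting the exponent to $(\ell+\ell')(p+q)/D$; the diagonal-entry definition in Eq.~(\ref{eq: def F}) restricting to $\gamma\not\subseteq\alpha$ is precisely what guarantees the exponent stays negative on the diagonal too.

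The main obstacle I anticipate is the careful bookkeeping in the diagrammatic loop-counting: making sure that every PTP product in the expansion really does come with a strictly negative power of $D$ (for $\alpha\neq\beta$), and getting the combinatorial prefactors tight enough that the row sum telescopes into $e^{\ell(p+q)/D}-1$ rather than something with a spurious $\binom{p}{\ell}\binom{q}{\ell}$ blowup. This requires a clean parametrization of which PTPs appear — essentially tracking, for each term, the "overlap pattern" of $\alpha$, $\beta$ and the intermediate $\gamma_i$'s — and verifying that distinct overlap patterns contribute disjointly with the claimed multiplicities. A secondary point needing care is the sub-multiplicativity step $\lVert \noP_\alpha \noP_\beta\rVert_\infty \leq \sum \lVert (\text{terms})\rVert_\infty$ when we expand $\noEPR_{\bar\alpha}$ and $\noEPR_{\bar\beta}$: since the $\tilde P^{(\bar\alpha)}_\gamma$ themselves are only defined via Gram–Schmidt, I would instead expand using Proposition~\ref{prop: proj diagrams}, which writes each $\tilde P^{(\bar\alpha)}_\gamma$ explicitly as a (bounded-coefficient) sum of PTPs of size $\geq|\gamma|$, and absorb those coefficients — which are controlled by inverse Weingarten-type quantities — into the constants; bounding those coefficients uniformly is the one place where the hypothesis $(p+q)^2 \leq D$ is genuinely used.
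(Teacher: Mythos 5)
Your overall skeleton matches the paper's proof: the reduction of $\lVert \hat G^{(\ell)}\rVert_\infty$ and $\lVert \hat F^{(\ell,\ell')}\rVert_\infty$ to a single row sum via positivity of the entries, permutation invariance and Perron--Frobenius is exactly the paper's first step, and your organization of the row sum by the number of ``mismatched'' pairs (your $k$, the paper's $\ell-L$ with $L$ the loop count), with the count $\binom{\ell}{k}(p+q)^k$ and weight $D^{-k}$, is the same combinatorial enumeration the paper performs, including the extension to $\hat F^{(\ell,\ell')}$ by inserting one bare projector $P_\gamma$ of size $\ell'$.

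The genuine gap is in your treatment of the single-entry estimate $\lVert \noP_\alpha \noP_\beta\rVert_\infty$. You propose to expand $\noEPR_{\bar\alpha}$ and $\noEPR_{\bar\beta}$ into PTPs (via the Gram--Schmidt construction and Proposition~\ref{prop: proj diagrams}) and control the resulting coefficients by ``inverse Weingarten-type quantities.'' Proposition~\ref{prop: proj diagrams} asserts only that such an expansion exists; it gives no bound on the coefficients, and the one coefficient bound available in the paper (Lemma~\ref{lemma: bound weingarten sum}) controls only the permutation sector of $\noEPR$, not the full PTP expansion of the projectors $\tilde P^{(\bar\alpha)}_\gamma$, so ``absorbing the coefficients into the constants'' is not justified and would in any case degrade the clean $e^{\ell(p+q)/D}-1$ bound. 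More importantly, this expansion forfeits the very mechanism that makes the estimate work. The paper never expands the no-EPR projectors: it keeps them intact and uses two facts only, namely that $\noEPR_{\bar\beta}(\cdot)\noEPR_{\bar\alpha}$ annihilates the product PTP of $P_\beta P_\alpha$ unless its input pairs equal $\beta$ and its output pairs equal $\alpha$, and that otherwise the sandwiched PTP has spectral norm at most one, giving the exact dichotomy $\lVert \noP_\beta \noP_\alpha\rVert_\infty \in \{0, D^{L(\beta,\alpha)-\ell}\}$. Your claim that ``the net exponent is always negative when $\alpha\neq\beta$'' is precisely what this annihilation property guarantees: products such as $P_\alpha P_{\gamma_1}\cdots P_{\gamma_r} P_\beta$ generically produce PTPs of size strictly larger than $\ell$, whose operator norms carry \emph{positive} powers of $D$ relative to the normalization, and after expanding the no-EPR projectors away these large terms must cancel through signed coefficients rather than being killed outright --- a cancellation your plan does not address. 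To repair the argument you should drop the expansion, prove the vanishing/norm dichotomy directly from $\noP_\alpha = \noEPR_{\bar\alpha}P_\alpha$ and Fact~\ref{fact: PPT mult}, and then your loop-counting and row-sum steps go through essentially as the paper's do (note also that $(p+q)^2\leq D$ is not where you place it; the enumeration itself does not need Weingarten bounds).
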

\noindent From Theorem~\ref{thm: approx orth}, we prove the following approximations for the orthogonal subspace projectors.
\begin{corollary}[Approximate expressions for the orthogonal projectors] \label{cor: approx projectors}
	The following approximations hold for any $(p+q)^2 \leq D$.
	First, 
	\begin{align}
        		\tilde{P}_\alpha & = P^{\text{\emph{nE}}}_\alpha + E_\alpha, 
		\,\,\,\,\,\,\,\, & \text{ with } 
		\left\lVert E_\alpha \right\rVert_\infty & \leq 2 \bigg( \frac{\ell(p+q)}{D} \bigg) + 10.78 \bigg( \frac{\ell(p+q)}{D} \bigg)^2, 
		\label{eq: approx 1} \\
        		\intertext{where $E_\alpha = \tilde{P}_{\ell} E_\alpha \tilde{P}_{\ell}$ and $\ell = |\alpha|$.
	Second,}
        		\tilde{P}_\ell & = P^{\text{\emph{nE}}}_\ell + E_\ell, 
		\,\,\,\,\,\,\,\, & \text{ with } 
		\left\lVert E_\ell \right\rVert_\infty & \leq  \bigg( \frac{\ell(p+q)}{D} \bigg) + 10.18 \bigg( \frac{\ell(p+q)}{D} \bigg)^2,
		 \label{eq: approx 2} \\
        		\intertext{where $E_\ell = \tilde{P}_{\ell} E_\ell \tilde{P}_{\ell}$.
	Third, for each $\ell'$,}
        		\sum_{\ell \geq \ell'} {\ell \choose \ell'}  \! \tilde{P}_{\ell} & = \! \sum_{\gamma : | \gamma | = \ell'} P_\gamma + \! \sum_{\ell \geq \ell'} {\ell \choose \ell'}   E^{(\ell')}_{\ell} , 
		\!\!\! & \text{ with } 
		\lVert E^{(\ell')}_{\ell} \rVert_\infty & \leq \bigg( \frac{(\ell+\ell')(p+q)}{D} \bigg) + 7.06 \bigg( \frac{(\ell+\ell')(p+q)}{D} \bigg)^2
		 \label{eq: approx 3} 
   	\end{align}
	where $E^{(\ell')}_{\ell} = \tilde{P}_{\ell} E^{(\ell')}_{\ell} \tilde{P}_{\ell}$ for each $\ell$.
\end{corollary}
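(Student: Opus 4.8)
The plan is to deduce all three estimates of Corollary~\ref{cor: approx projectors} from the Gram–matrix bounds of Theorem~\ref{thm: approx orth}, using as the only new tool an elementary ``non-orthogonal frame'' lemma. The lemma is: if $\{\Pi_i\}_{i\in I}$ is a finite family of orthogonal projectors, $S=\sum_i\Pi_i$, $V=\operatorname{span}\{\operatorname{range}(\Pi_i)\}=\operatorname{range}(S)$, $\Pi_V$ the orthogonal projector onto $V$, and $\hat M$ the $|I|\times|I|$ matrix with $M_{ij}=\|\Pi_i\Pi_j\|_\infty$ off the diagonal and $0$ on it, then $\|S-\Pi_V\|_\infty\le g\sqrt{(1+g)/(1-g)}$ whenever $g:=\|\hat M\|_\infty<1$. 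The proof fits in a few lines: for $v\in V$ pick a decomposition $v=\sum_i v_i$ with $v_i\in\operatorname{range}(\Pi_i)$ and set $\vec u=(\|v_i\|)_i$; expanding $\|\sum_i v_i\|^2$ and using $\langle v_i,v_j\rangle=\langle v_i,\Pi_i\Pi_j v_j\rangle$ gives $(1-g)\|\vec u\|_2^2\le\|v\|_2^2\le(1+g)\|\vec u\|_2^2$, while $Sv-v=\sum_i\Pi_i(\sum_{j\ne i}\Pi_j v_j)$ has squared norm at most $(1+g)\|\hat M\vec u\|_2^2\le(1+g)g^2\|\vec u\|_2^2$ by the same expansion. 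The point of this lemma --- and the reason Theorem~\ref{thm: approx orth} is phrased in terms of spectral norms of $\hat G^{(\ell)}$ and $\hat F^{(\ell,\ell')}$ rather than sums of their entries --- is that the combinatorial multiplicity $m_\ell=\binom p\ell\binom q\ell\,\ell!$ never appears in the output bound. Throughout we use Proposition~\ref{prop: PnE orthogonality} to work one size $\ell$ at a time, and Proposition~\ref{prop: proj rank} to identify $V_\ell:=\operatorname{span}\{\noP_\alpha:|\alpha|=\ell\}$ with $\operatorname{range}(\tilde{P}_\ell)$.

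For~\eqref{eq: approx 2}, apply the frame lemma to the family $\{\noP_\alpha:|\alpha|=\ell\}$: here $S=\noP_\ell=\sum_{|\alpha|=\ell}\noP_\alpha$, $\hat M=\hat G^{(\ell)}$, and $\Pi_V=\tilde{P}_\ell$, so $E_\ell=\tilde{P}_\ell-\noP_\ell$ is automatically of the form $\tilde{P}_\ell E_\ell\tilde{P}_\ell$ and $\|E_\ell\|_\infty\le g\sqrt{(1+g)/(1-g)}$ with $g\le e^{\ell(p+q)/D}-1$ by Theorem~\ref{thm: approx orth}; inserting $e^x-1\le x+x^2$ and $\sqrt{(1+g)/(1-g)}\le 1+2g$ (valid for $x=\ell(p+q)/D$ in the range $(p+q)^2\le D$) gives the stated bound with leading term $\ell(p+q)/D$. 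For~\eqref{eq: approx 1}, fix $\alpha$ with $|\alpha|=\ell$ and the size-$\ell$ ordering used in~\eqref{eq: def tilde P}. By construction $\tilde{P}_\alpha=\tilde{P}_{\le\alpha}-\tilde{P}_{<\alpha}$, the orthogonal projectors onto $\operatorname{span}\{\noP_{\alpha'}:\alpha'\le\alpha\}$ and onto $\operatorname{span}\{\noP_{\alpha'}:\alpha'<\alpha\}$, while $\noP_\alpha=S_{\le\alpha}-S_{<\alpha}$ with $S_{\le\alpha}=\sum_{\alpha'\le\alpha}\noP_{\alpha'}$. Applying the frame lemma to each of the two sub-families --- whose Gram matrices are principal submatrices of $\hat G^{(\ell)}$ and hence have spectral norm $\le g$ --- bounds $\tilde{P}_\alpha-\noP_\alpha=(\tilde{P}_{\le\alpha}-S_{\le\alpha})-(\tilde{P}_{<\alpha}-S_{<\alpha})$ by twice $g\sqrt{(1+g)/(1-g)}$, and $E_\alpha=\tilde{P}_\ell E_\alpha\tilde{P}_\ell$ since all operators involved have range inside $\operatorname{span}\{\noP_{\alpha'}:\alpha'\le\alpha\}\subseteq V_\ell$; this is the origin of the factor of two in~\eqref{eq: approx 1}.

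For~\eqref{eq: approx 3} I would first make the block-diagonality of $\sum_{|\gamma|=\ell'}P_\gamma$ with respect to $\{\tilde{P}_\ell\}$ \emph{exact}: each $P_\gamma$ is proportional to a partially transposed permutation, so by Proposition~\ref{ref: proj ell commutes} it commutes with every $\tilde{P}_\ell$, whence $\tilde{P}_{\ell''}(\sum_\gamma P_\gamma)\tilde{P}_\ell=0$ for $\ell''\ne\ell$ and $\sum_{|\gamma|=\ell'}P_\gamma=\sum_{\ell\ge\ell'}\tilde{P}_\ell(\sum_\gamma P_\gamma)\tilde{P}_\ell$; defining $E^{(\ell')}_\ell$ through $\binom\ell{\ell'}E^{(\ell')}_\ell=\binom\ell{\ell'}\tilde{P}_\ell-\tilde{P}_\ell(\sum_\gamma P_\gamma)\tilde{P}_\ell$ then automatically gives the claimed identity and the block form $E^{(\ell')}_\ell=\tilde{P}_\ell E^{(\ell')}_\ell\tilde{P}_\ell$. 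What remains is the estimate: for $v\in V_\ell$ with $v=\sum_{|\alpha|=\ell}v_\alpha$, $v_\alpha\in\operatorname{range}(\noP_\alpha)$, one has $P_\gamma v_\alpha=v_\alpha$ exactly when $\gamma\subseteq\alpha$ (a state with EPR pairs exactly at $\alpha$ is fixed by $P_\gamma$), so these terms contribute precisely $\binom\ell{\ell'}v$, while the terms with $\gamma\not\subseteq\alpha$ are small --- they are exactly the ``$P_\gamma$-merge'' transitions captured by the entries $\|\noP_\alpha P_\gamma\noP_\beta\|$, including the $\ell$-EPR-preserving ones --- and are controlled by a frame estimate of exactly the same shape as above but with $\hat F^{(\ell,\ell')}$ from~\eqref{eq: def F} in place of $\hat G^{(\ell)}$, together with the auxiliary replacement $\tilde{P}_\ell\leftrightarrow\noP_\ell$ of~\eqref{eq: approx 2} applied to the outer projections. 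Theorem~\ref{thm: approx orth}'s bound $\|\hat F^{(\ell,\ell')}\|_\infty\le e^{(\ell+\ell')(p+q)/D}-1$ then gives the leading term $(\ell+\ell')(p+q)/D$ of~\eqref{eq: approx 3}.

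The conceptual content is entirely in the frame lemma together with Theorem~\ref{thm: approx orth} (which I assume); the main obstacle is purely one of careful bookkeeping. In particular, one must keep every estimate in a form where only the spectral norms of $\hat G^{(\ell)},\hat F^{(\ell,\ell')}$ enter and the multiplicity $m_\ell$ is never unleashed --- this is delicate for~\eqref{eq: approx 3}, where a naive replacement of $\tilde{P}_\ell$ by $\noP_\ell$ around $\sum_\gamma P_\gamma$ would cost $\|E_\ell\|\cdot\|\sum_\gamma P_\gamma\|$ and reintroduce a combinatorial factor unless one first records that $\sum_\gamma P_\gamma$ acts on $V_\ell$ with norm $O(\binom\ell{\ell'})$ --- and one must track the nested perturbations (principal submatrices, the $\tilde{P}_\ell\leftrightarrow\noP_\ell$ swaps, and the $P_\gamma$-expansion) carefully enough to land on the explicit second-order constants $10.78$, $10.18$, $7.06$, which amounts to choosing generous but explicit replacements such as $e^x-1\le x+x^2$ and $\sqrt{(1+g)/(1-g)}\le 1+2g$ valid throughout the admissible range $(p+q)^2\le D$.
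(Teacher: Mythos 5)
Your strategy is sound and, at its core, coincides with the paper's: expand vectors of $\mathrm{range}(\tilde{P}_\ell)$ in the nearly-orthogonal components $\noP_\alpha$ and control all cross terms through the Gram matrices $\hat G^{(\ell)}$, $\hat F^{(\ell,\ell')}$ of Theorem~\ref{thm: approx orth}, so the multiplicity ${p\choose\ell}{q\choose\ell}\ell!$ never appears. Your packaging differs in two ways worth noting. First, the ``frame lemma'' isolates once and for all the estimate $\lVert S-\Pi_V\rVert_\infty\le g\sqrt{(1+g)/(1-g)}$, which the paper instead reproves inline in each case (obtaining $(1+g)g/(1-g)$ for Eq.~(\ref{eq: approx 2}) and $2g/(1-g)$ for Eq.~(\ref{eq: approx 1})); your lemma is correct and, since $\sqrt{(1+g)/(1-g)}\le 1/(1-g)$, your raw bounds dominate the paper's, so the stated constants are recoverable. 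Second, for Eq.~(\ref{eq: approx 1}) your route is genuinely different: you apply the lemma twice to the nested subfamilies $\{\noP_{\alpha'}:\alpha'\le\alpha\}$ and $\{\noP_{\alpha'}:\alpha'<\alpha\}$ and invoke interlacing for principal submatrices, whereas the paper bounds $E_\alpha\ket{\psi}$ directly via the maximal overlap of $\ket{\psi_\alpha}$ with $\mathrm{span}\{\noP_{\alpha'}:\alpha'<\alpha\}$; both produce the factor $2$, and your identifications $\tilde P_\alpha=\tilde P_{\le\alpha}-\tilde P_{<\alpha}$, $\noP_\alpha=S_{\le\alpha}-S_{<\alpha}$ and $E_\alpha=\tilde P_\ell E_\alpha\tilde P_\ell$ are all valid. (Minor bookkeeping: the particular replacements you name, $e^x-1\le x+x^2$ and $\sqrt{(1+g)/(1-g)}\le 1+2g$, give $2(x+x^2)(1+2x+2x^2)=3.75$ at $x=1/2$, which exceeds $2x+10.78x^2=3.695$; a marginally tighter final Taylor step is needed, but since your exact bound $2g\sqrt{(1+g)/(1-g)}\le 2g/(1-g)$ is below the paper's, the constants $10.78$, $10.18$, $7.06$ are attainable.)

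The one genuine gap is in Eq.~(\ref{eq: approx 3}). You claim that commutation of $\sum_{|\gamma|=\ell'}P_\gamma$ with every $\tilde P_\ell$ (Proposition~\ref{ref: proj ell commutes}) yields $\sum_{|\gamma|=\ell'}P_\gamma=\sum_{\ell\ge\ell'}\tilde P_\ell\big(\sum_\gamma P_\gamma\big)\tilde P_\ell$. Commutation only gives block-diagonality over all $\ell\ge 0$; the vanishing of the diagonal blocks with $0<\ell<\ell'$ is a separate statement, and it is not automatic: $\tilde P_\ell P_\gamma\tilde P_\ell=P_\gamma\tilde P_\ell$ is (by commutation and idempotence) the projector onto $\mathrm{range}(P_\gamma)\cap\mathrm{range}(\tilde P_\ell)$, so you must show this intersection is trivial. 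It is true, and a short argument closes it: show $\mathrm{range}(P_\gamma)\subseteq\bigoplus_{m\ge\ell'}\mathrm{range}(\tilde P_m)$ by writing a state in $\mathrm{range}(P_\gamma)$ as $\ket{E_\gamma}\otimes\ket{\chi}$, decomposing $\ket{\chi}$ on $\bar\gamma$ with $\mathbbm{1}_{\bar\gamma}=\sum_\delta\tilde P^{(\bar\gamma)}_\delta$, and noting that each component $\ket{E_\gamma}\otimes\tilde P^{(\bar\gamma)}_\delta\ket{\chi}$ lies in the span of the ranges of $\noP_{\gamma\cup\delta'}$ with $|\delta'|=|\delta|$, hence in $\mathrm{range}(\tilde P_{\ell'+|\delta|})$; orthogonality of the $\mathrm{range}(\tilde P_m)$ then kills the $\ell<\ell'$ blocks. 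Without this step the identity in Eq.~(\ref{eq: approx 3}), with both sums restricted to $\ell\ge\ell'$, is not established. For the quantitative part your plan works provided you do not literally swap $\tilde P_\ell\leftrightarrow\noP_\ell$ around $\sum_\gamma P_\gamma$ (as you yourself warn); the clean route is the paper's: expand $\psi,\phi\in\mathrm{range}(\tilde P_\ell)$ in $\noP_\alpha$-components, use that $P_\gamma\ket{\phi_\beta}=\ket{\phi_\beta}$ exactly when $\gamma\subseteq\beta$ to peel off $\binom{\ell}{\ell'}\langle\psi|\phi\rangle$, and bound the remainder by $\binom{\ell}{\ell'}\lVert\hat F^{(\ell,\ell')}\rVert_\infty/(1-\lVert\hat G^{(\ell)}\rVert_\infty)$.
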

\noindent We remark that the error bounds in Eqs.~(\ref{eq: approx 2}) and~(\ref{eq: approx 3}) are much tighter than would be obtained from applying Eq.~(\ref{eq: approx 1}) term by term.

The final result in Corollary~\ref{cor: approx projectors}, Eq.~(\ref{eq: approx 3}), is especially notable in the case $\ell =1$.
For this value of $\ell'$, the equation simplifies to
\begin{equation} \label{eq: approx 3 count}
		 N_E \equiv \sum_{\ell \geq \ell'} \ell  \, \tilde{P}_{\ell} = \sum_{\gamma : | \gamma | = 1} P_\gamma + \sum_{\ell \geq \ell'} \ell \,  E_{\ell} ,
\end{equation}
where we define  $N_E$ to equal the left hand side.
The operator $N_E$ simply counts the number of EPR pairs $\ell$ in a state.
From Corollary~\ref{cor: approx projectors}, we see $N_E$ can be approximated as a sum over all two-wise EPR projectors $P_\gamma$.

\subsubsection{Bound on the inverse Weingarten sub-matrix}

To establish our bounds in the previous section, we utilize the following bound on the inverse sub-matrix of the Weingarten matrix, whenever the Hilbert space dimension $D$ is large. This bound will be useful since the inverse sub-matrix appears in the expression for the no-EPR projector (Proposition~\ref{prop: no EPR}).

\begin{lemma}[Bound on the inverse Weingarten sub-matrix] \label{lemma: bound weingarten sum}
    For any $(p+q)^2 \leq D/2$.
    The sum of the absolute values of the matrix elements of the inverse of $\hat{\Wg}_{\text{\emph{perm}}}$ are bounded as
    \begin{equation}
        \frac{1}{p!q!} \sum_{\pi,\tilde{\pi}} \bigg| \delta_{\pi,\tilde{\pi}} -  \frac{1}{D^{p+q}}[\hat{\Wg}_{\text{\emph{perm}}}^{-1}]_{\pi,\tilde{\pi}} \bigg| \leq 2 \frac{(p+q)^2}{D},
    \end{equation}
    where we abbreviate $\pi = \pi_L \otimes \pi_R, \tilde{\pi} = \tilde{\pi}_L \otimes \tilde{\pi}_R$.
\end{lemma}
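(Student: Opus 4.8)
The plan is to control the inverse sub-matrix $\hat{\Wg}_{\text{perm}}^{-1}$ by writing $\hat{\Wg}_{\text{perm}}$ as a perturbation of the identity and bounding the resulting Neumann-type series entrywise. Recall that the Weingarten matrix on $k=p+q$ copies satisfies $D^k \Wg_{\sigma,\tau} = [\hat G^{-1}]_{\sigma,\tau}$ where $\hat G_{\sigma,\tau} = D^{c(\sigma^{-1}\tau)}$ is the Gram matrix of permutation operators, $c(\cdot)$ the number of cycles. Restricting to the block of permutations that are tensor products $\pi_L\otimes\pi_R$, I would write $D^k\hat{\Wg}_{\text{perm}} = (\hat G_{\text{perm}})^{-1} + \Delta$, where $\Delta$ accounts for the fact that inverting the full $\hat G$ and then restricting is not the same as inverting the restricted block. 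By the Schur complement formula, $[\hat{\Wg}_{\text{perm}}]^{-1} = D^k\big( \hat G_{\text{perm}} - \hat G_{\text{perm},\text{mix}}(\hat G_{\text{mix}})^{-1}\hat G_{\text{mix},\text{perm}} \big)$, where the "mix" index set runs over partially transposed permutations of size $\geq 1$. So the route is: (i) show $\hat G_{\text{perm}}$ is close to $D^k \cdot \mathbbm{1}$ entrywise after rescaling, and (ii) show the Schur-complement correction term is of order $1/D^2$ relative to the leading term, hence negligible at the stated precision.

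For step (i), normalize: let $A \equiv \frac{1}{D^k}\hat G_{\text{perm}}$, so $A_{\pi,\tilde\pi} = D^{c(\pi^{-1}\tilde\pi) - k}$. Since $\pi = \pi_L\otimes\pi_R$ with $\pi_L\in S_{p-0},\pi_R\in S_{q-0}$ — actually here both live on the full $p$ and $q$ copies since size is zero — the cycle count factorizes: $c(\pi^{-1}\tilde\pi) = c(\pi_L^{-1}\tilde\pi_L) + c(\pi_R^{-1}\tilde\pi_R)$, so $A = A_L \otimes A_R$ where $A_L$ is the analogous rescaled Gram matrix on $p$ copies and similarly $A_R$. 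The diagonal of $A_L$ is $1$, and off-diagonal entries are $D^{c(\rho)-p} \leq D^{-1}$ for any non-identity $\rho\in S_p$ (since a non-identity permutation has at most $p-1$ cycles). Counting: the number of $\rho\in S_p$ with exactly $p-j$ cycles is at most $\binom{p}{2}^j \le (p^2/2)^j$ times a constant, so $\sum_{\rho\neq e} D^{c(\rho)-p} \le \sum_{j\ge1}(p^2/2D)^j \le p^2/D$ for $p^2 \le D$. Thus $\|A_L - \mathbbm{1}\|$ in the max-row-sum sense is $\le p^2/D$, and by submultiplicativity of this $\ell_\infty$-operator norm under tensor products and the Neumann series $A^{-1} = \sum_{m\ge0}(\mathbbm{1}-A)^m$, I would get $\frac{1}{p!q!}\sum_{\pi,\tilde\pi}|\delta_{\pi,\tilde\pi} - [A^{-1}]_{\pi,\tilde\pi}| \le \frac{1}{1-p^2/D}\cdot\frac{p^2}{D} + (\text{same with }q) + (\text{cross terms})$, which after bookkeeping is $\le \tfrac{3}{2}\cdot\tfrac{(p+q)^2}{D}$ or so — comfortably within the target $2(p+q)^2/D$. (The normalized row-sum of the absolute deviation equals the average $\ell_1$ deviation of rows of $A^{-1}$ from a standard basis vector, which is what the Neumann series bounds.)

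For step (ii), I would argue the Schur-complement correction contributes only at order $(k^2/D)^2$. The blocks $\hat G_{\text{perm},\text{mix}}$ have entries $D^{c(\cdot)}$ where the exponent is strictly smaller than for the pure-permutation block because a size-$\ell$ partially transposed permutation has "fewer effective cycles" — each EPR pairing costs a factor relative to the $D^k$ scale. Concretely, after rescaling by $D^k$, the off-diagonal perm-to-mix entries are $O(1/D)$ and there are $O((k^2/D)^{\ell})$-many of size $\ell$, while $(\hat G_{\text{mix}})^{-1}$ rescaled is bounded (again by a Neumann argument on the walled-Brauer Gram matrix, using faithfulness for $k\le D$). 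Multiplying three such factors gives a correction of order $1/D^2 \cdot (\text{combinatorial})$, which is subleading. I would fold this into the constant, using the hypothesis $(p+q)^2\le D/2$ to make all geometric series converge with a factor-$2$ safety margin, yielding the clean bound $2(p+q)^2/D$.

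The main obstacle I anticipate is step (ii): making the Schur-complement estimate rigorous requires careful bookkeeping of the walled-Brauer Gram matrix structure — specifically, getting a clean bound on $\|(\hat G_{\text{mix}})^{-1}\|$ in the appropriate norm, since the mix block itself is not close to a multiple of the identity (it mixes different sizes $\ell$). A cleaner alternative I would try first is to avoid the Schur complement entirely: directly bound $\tr(\noEPR(\pi_L\otimes\pi_R)^{-1})$ using the known expansion $\noEPR = \sum_\ell (-1)^?\cdots$ or the recursive inclusion-exclusion over EPR pairings, since Proposition~\ref{prop: no EPR} already identifies $[\hat{\Wg}_{\text{perm}}^{-1}]_{\mathbbm{1},\pi} = \tr(\noEPR\,\pi^{-1})$, and then $[\hat{\Wg}_{\text{perm}}^{-1}]_{\pi,\tilde\pi}$ is fixed by permutation-invariance to depend only on $\pi^{-1}\tilde\pi$. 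So the whole sum reduces to $\sum_{\rho = \rho_L\otimes\rho_R} |\delta_{\mathbbm{1},\rho} - \tfrac{1}{D^k}\tr(\noEPR\,\rho^{-1})|$, and one only needs: $\tr(\noEPR)\approx D^k(1 - O(k^2/D))$ and $\tr(\noEPR\,\rho^{-1})$ small for $\rho\neq\mathbbm{1}$. The former follows from $\tr(\noEPR) = D^k - N_{\text{EPR}}^{(p,q)}$ and $N_{\text{EPR}}^{(p,q)} \le pq\,D^{k-2}$; the latter from $|\tr(\noEPR\,\rho^{-1})| \le |\tr(\rho^{-1})| + |\tr((\mathbbm{1}-\noEPR)\rho^{-1})| \le D^{c(\rho)} + (\text{EPR corrections})$, all $\le D^{k-1}$ for $\rho\neq\mathbbm{1}$. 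This is the route I would actually pursue, as it sidesteps the block-inversion difficulty.
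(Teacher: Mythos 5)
Your committed route (the ``Route B'' you say you would actually pursue) has a genuine gap in the treatment of the $\rho\neq\mathbbm{1}$ terms. The reduction itself is fine: by translation invariance of $\hat{\Wg}_{\text{perm}}$ (entries depend only on $\pi^{-1}\tilde{\pi}$, and this structure passes to the inverse) together with Proposition~\ref{prop: no EPR}, the normalized double sum does collapse to $\sum_{\rho=\rho_L\otimes\rho_R}\big|\delta_{\mathbbm{1},\rho}-\tfrac{1}{D^{p+q}}\tr(\noEPR\,\rho^{-1})\big|$, and the $\rho=\mathbbm{1}$ term is controlled by $N^{(p,q)}_{\text{EPR}}/D^{p+q}\le pq/D^2$ as you say. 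But for $\rho\neq\mathbbm{1}$ you only offer a \emph{uniform} per-term bound (``all $\le D^{k-1}$''), and the EPR-correction piece $\tfrac{1}{D^{p+q}}|\tr((\mathbbm{1}-\noEPR)\rho^{-1})|$ is likewise only bounded uniformly, by $\tr(\oneEPR)/D^{p+q}\le pq/D^2$, with no decay in $\rho$. Summing a bound that does not decay over the $p!\,q!-1$ nontrivial $\rho$ gives $\sim p!\,q!\cdot pq/D^2$ (or $p!\,q!/D$ from your cruder bound), which is far larger than $2(p+q)^2/D$ in the regime the lemma actually covers, namely $(p+q)^2\le D/2$ with $p+q$ as large as $\sqrt{D/2}$. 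Only the $|\tr(\rho^{-1})|=D^{c(\rho)}$ piece sums correctly (to roughly $(p^2+q^2)/2D$); to close the argument you would need a bound on $\tr(\oneEPR\,\rho^{-1})$ that itself decays like $D^{c(\rho)-p-q}$ up to polynomial factors, and nothing you write supplies this (note also that $\oneEPR\preceq\sum_{|\gamma|=1}P_\gamma$ is false in general, so you cannot simply dominate it by two-copy EPR projectors). Your ``Route A'' is a different decomposition (Schur complement of the Gram matrix), but as written it is muddled --- the Schur complement requires no inversion of $\hat G_{\text{perm}}$, so your Neumann series for $A^{-1}$ is aimed at the wrong object, the normalizations are inconsistent, and you yourself leave step (ii) open --- so it does not constitute a proof either, even though a careful $\ell_\infty$-row-sum version of it could plausibly be pushed through.

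For comparison, the paper's proof avoids per-term entrywise estimates entirely. It uses eigenvalue interlacing (a principal submatrix of $\hat\Wg$ inherits the bounds $1\pm(p+q)^2/D$ on the spectrum from~\cite{harrow2023approximate}), expands $\hat{\Wg}_{\text{perm}}^{-1}=\sum_{m\ge0}(\mathbbm{1}-\hat{\Wg}_{\text{perm}})^m$, and then exploits the alternating sign structure of Weingarten entries: conjugating by the diagonal sign matrix makes each term $(\mathbbm{1}-\hat{\Wg}_{\text{perm}})^m$ entrywise nonnegative, so by Perron--Frobenius and permutation invariance the constant vector is extremal and the normalized absolute entry sum \emph{equals} a spectral norm, which is then bounded by $\big((p+q)^2/D\big)^m$ and summed geometrically. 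The key ingredients you are missing are precisely this sign/positivity trick and the spectral-norm input; without something playing their role, the entrywise sums in your proposal cannot be controlled uniformly in $p,q$.
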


    \begin{proof}
    Since $\hat{\Wg}|_{\text{perm}}$ is a sub-matrix of $\hat{\Wg}$, its maximum eigenvalue is upper bounded by the maximum eigenvalue of $\hat{\Wg}$, and its minimum eigenvalue is lower bounded by the minimum eigenvalue of $\hat{\Wg}$.
    This follows since $v^T \hat{\Wg} v = v|_{\text{perm}}^T \hat{\Wg}|_{\text{perm}} v|_{\text{perm}}$ for any vector $v$ with support only on the permutation operators.
    From Ref.~\cite{harrow2023approximate}, the maximum eigenvalue of $\hat{\Wg}$ is less than $1+(p+q)^2/D$ and the minimum eigenvalue of $\hat{\Wg}$ is greater than $1-(p+q)^2/D$.
    Hence, the eigenvalues of $\hat{\Wg}|_{\text{perm}}$ are bounded by these values.

    Since the eigenvalues are bounded away from zero, the matrix inverse of $\hat{\Wg}|_{\text{perm}}$ can be Taylor expanded,
    \begin{equation}
        \hat{\Wg}|_{\text{perm}}^{-1} = \sum_{m=0}^\infty ( \mathbbm{1} - \hat{\Wg}|_{\text{perm}})^m.
    \end{equation}
    Subtracting this expression from the identity matrix, the $m=0$ term is canceled, and so we have
    \begin{equation}
        \mathbbm{1} - \hat{\Wg}|_{\text{perm}}^{-1} = - \sum_{m=1}^\infty ( \mathbbm{1} - \hat{\Wg}|_{\text{perm}})^m.
    \end{equation}
    We can bound our quantity of interest, the sum of absolute value matrix elements on the left, by a series of similar sums on the right,
    \begin{equation}
        \frac{1}{p!q!} \sum_{\pi,\tilde{\pi}} \bigg| \delta_{\pi,\tilde{\pi}} -  \frac{1}{D^{p+q}}[\hat{\Wg}_{\text{{perm}}}^{-1}]_{\pi,\tilde{\pi}} \bigg|
        \leq 
        \sum_{m=1}^\infty \left( \frac{1}{p!q!} \sum_{\pi,\tilde{\pi}} \bigg| \big[ ( \mathbbm{1} - \hat{\Wg}|_{\text{perm}})^m \big]_{\pi,\tilde{\pi}} \bigg| \right),
    \end{equation}
    which follows from the triangle inequality.

    To evaluate the terms within parentheses on the right hand side, we recall that the matrix elements of $\hat{\Wg}|_{\text{perm}}$ have an alternating pattern of signs, $(-1)^{|\pi_L| + |\pi_R|}$.
    If we define the diagonal matrix $\hat P$ element-wise via $P_{\pi,\pi} = (-1)^{|\pi_L| + |\pi_R|}$, this implies that $\hat P \hat{\Wg}|_{\text{perm}} \hat P$ is has all positive entries~\cite{aharonov2021quantum}.
    Since the diagonal elements of $\hat{\Wg}|_{\text{perm}}$ are greater than one, we further have that $\hat P (\hat{\Wg}|_{\text{perm}} - \mathbbm{1} ) \hat P$ has all positive entries~\cite{collins2017weingarten,schuster2024random}.
    Taking the $m$-th power, we find that $( \hat P (\hat{\Wg}|_{\text{perm}} - \mathbbm{1} ) \hat P )^m = \hat P (\hat{\Wg}|_{\text{perm}} - \mathbbm{1} )^m \hat P$ has all positive entries as well.
    The elements of these matrices have the same absolute values as the elements of the terms $(\mathbbm{1}-\hat{\Wg})^m$.
    Hence, the sum over the absolute value of the matrix elements of the latter is equal to the same sum for the former.

    The sum over matrix elements of $\hat P (\hat{\Wg}|_{\text{perm}} - \mathbbm{1} )^m \hat P$ is easy to evaluate, since the matrix has all positive elements.
    By the Perron-Frobenius theorem and the fact that the matrix is invariant under permutations, the maximum eigenvector of $\hat P (\hat{\Wg}|_{\text{perm}} - \mathbbm{1} )^m \hat P$ is the constant vector, $v_\pi = 1/\sqrt{p!q!}$~\cite{schuster2024random}.
    Hence, the maximum eigenvalue is equal to 
    \begin{equation}
        v^T \left( \hat P (\hat{\Wg}|_{\text{perm}} - \mathbbm{1} )^m \hat P \right) v 
        = 
        \frac{1}{p!q!} \sum_{\pi,\tilde{\pi}} \bigg| \big[ ( \mathbbm{1} - \hat{\Wg}|_{\text{perm}})^m \big]_{\pi,\tilde{\pi}} \bigg|,
    \end{equation}
    which is precisely the sum we would like to bound.
    From the above expression, we immediately have 
    \begin{equation}\nonumber
        \frac{1}{p!q!} \sum_{\pi,\tilde{\pi}} \bigg| \big[ ( \mathbbm{1} - \hat{\Wg}|_{\text{perm}})^m \big]_{\pi,\tilde{\pi}} \bigg|
        \leq 
        \left\lVert \hat P (\hat{\Wg}|_{\text{perm}} - \mathbbm{1} )^m \hat P \right\rVert_\infty
        =
        \left\lVert ( \hat{\Wg}|_{\text{perm}} - \mathbbm{1} )^m  \right\rVert_\infty
        \leq
        \left\lVert  \hat{\Wg}|_{\text{perm}} - \mathbbm{1}  \right\rVert_\infty^m.
    \end{equation}
    We have $\lVert  \hat{\Wg}|_{\text{perm}} - \mathbbm{1} \rVert_\infty \leq (p+q)^2/D$ from our discussion of the eigenvalues of $\hat{\Wg}|_{\text{perm}}$.
    Hence, the right side is less than $((p+q)^{2}/D)^m$.

    We can complete our proof by inserting this bound into the Taylor series and performing the sum over $m$.
    This yields,
    \begin{equation}
        \frac{1}{p!q!} \sum_{\pi,\tilde{\pi}} \bigg| \delta_{\pi,\tilde{\pi}} -  \frac{1}{D^{p+q}}[\hat{\Wg}_{\text{{perm}}}^{-1}]_{\pi,\tilde{\pi}} \bigg|
        \leq 
        \sum_{m=1}^\infty \left( \frac{(p+q)^2}{D} \right)^m
        = \frac{(p+q)^2/D}{1-(p+q)^2/D}
        \leq
        2(p+q)^2/D,
    \end{equation}
    where the final inequality holds if $(p+q)^2 \leq D/2$.
    This completes our proof. 
    \end{proof}

\subsubsection{Proof of Theorem~\ref{thm: approx orth}: Approximate orthogonality of EPR projectors}


Both $ \hat G^{(\ell)}$ and $ \hat F^{(\ell,\ell')}$ have entirely positive matrix elements.
The Perron-Frobenius theorem then states that the maximum eigenvalue of each matrix is achieved by an eigenvector with entirely positive elements.
Moreover, both $ \hat G^{(\ell)}$ and $ \hat F^{(\ell,\ell')}$ possess a ``permutation symmetry'',
\begin{equation}
	 G^{(\ell)}_{\alpha \beta} = \lVert \noP_{\alpha} \noP_{\beta} \rVert_\infty 
	= \lVert \pi \noP_{\alpha} \pi^{-1} \pi \noP_{\beta} \pi^{-1} \rVert_\infty 
	= \lVert \noP_{\pi(\alpha)} \noP_{\pi(\beta)} \rVert_\infty
	=  G^{(\ell)}_{\pi(\alpha) \pi(\beta)},
\end{equation}
and similar for $\hat F^{(\ell,\ell')}$.
Here, $\pi = \pi_L \otimes \pi_R$ is any tensor product of permutations on the left and right side.
Without loss of generality, we can assume that the maximum eigenvector, $v_\alpha$, is invariant under the permutation symmetry, $v_\alpha = v_{\pi(\alpha)}$ for any $\pi$.
(If not, we simply average the maximum eigenvector over all of its possible permutations, which produces a symmetric vector with the same eigenvalue.)
Since for every $\alpha, \beta$, there exists a permutation $\pi$ such that $\pi(\alpha) = \beta$, we have $v_\alpha = v_\beta$ for all $\alpha, \beta$.
Hence, the maximum eigenvalues of $ \hat G^{(\ell)}$ and $ \hat F^{(\ell,\ell')}$ are achieved by the constant vector, $v_\alpha = 1 / \left( \sum_\alpha 1 \right)^{1/2}$.

\vspace{3mm}
\noindent \textbf{Proof of the first statement, Eq.~(\ref{eq: G F bound}) left.} Let us begin with $\hat G$.
From the above, the spectral norm is
\begin{equation} \label{eq: G norm}
	\lVert  \hat G \rVert_\infty = \frac{ \sum_{\alpha} \sum_{\beta \neq \alpha} v_\beta G_{\beta \alpha} v_\alpha }{\sum_\alpha v_\alpha v_\alpha} = \frac{ \sum_\alpha \sum_{\beta \neq \alpha} \lVert \noP_\alpha \noP_\beta \rVert_\infty }{ \sum_\alpha 1 } = \sum_{\beta \neq \alpha} \lVert \noP_\alpha \noP_\beta \rVert_\infty.
\end{equation}
In the final expression, $\alpha$ is fixed to an arbitrary value and $\beta$ is summed over.

To proceed, we compute each term in the sum.
Let $L(\beta, \alpha)$ denote the number of loops when the PTPs associated with  $P_\beta$ and $P_\alpha$ are multiplied, and let $\gamma_I \supseteq \beta$ and $\gamma_O \supseteq \alpha$ denote the input and output pairs of the PTP obtained from the multiplication.
%
%
Then we have,
\begin{equation} \label{eq: spectral norm alpha beta}
	\left\lVert \noP_\beta \noP_\alpha \right\rVert_\infty = 
	\begin{cases}
		D^{L(\beta, \alpha)-\ell}, & \text{if } \gamma_I = \beta \text{ and } \gamma_O = \alpha \\
		0, & \text{else}.\\
	\end{cases}
\end{equation}
We can illustrate this formula with three examples:
\begin{align}\nonumber
\figbox{0.4}{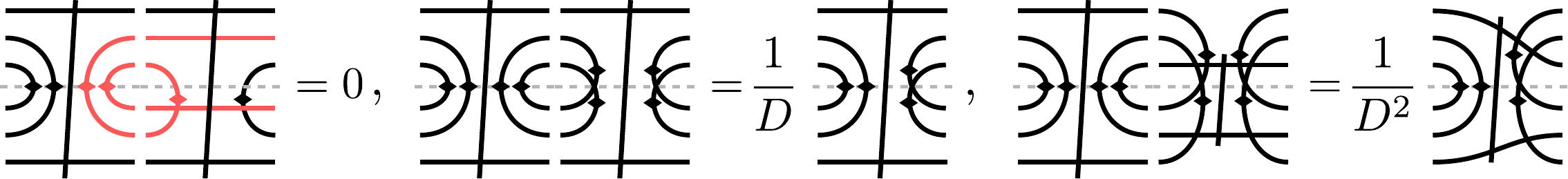} \centering \label{eq: PaPb}
\end{align}
In the first example, the projectors multiply to zero because the red leg forms an EPR projector on $\bar \beta$.
To derive Eq.~(\ref{eq: spectral norm alpha beta}), we write
\begin{equation}
\begin{split}
	 \noP_\beta \noP_\alpha 
	 = \noEPR_{\bar \beta} P_\beta P_\alpha \noEPR_{\bar \alpha} 
	 & = D^{-2\ell} \noEPR_{\bar \beta} \big( \beta, \beta, \mathbbm{1},\mathbbm{1} \big) \big( \alpha, \alpha, \mathbbm{1},\mathbbm{1} \big) \noEPR_{\bar \alpha} \\
     & = D^{L(\beta,\alpha)-2\ell}  \noEPR_{\bar \beta} \big( \gamma_I, \gamma_O, \pi_L, \pi_R \big) \noEPR_{\bar \alpha},
\end{split}
\end{equation}
where $\big( \gamma_I, \gamma_O, \pi_L, \pi_R \big)$ is the PTP obtained by multiplying $\big( \beta, \beta, \mathbbm{1},\mathbbm{1} \big)$ and $\big( \alpha, \alpha, \mathbbm{1},\mathbbm{1} \big)$.
The second clause in Eq.~(\ref{eq: spectral norm alpha beta}) follows because $\big( \gamma_I, \gamma_O, \pi_L, \pi_R \big)$ is annihilated by $\noEPR_{\bar \beta}$ if $\gamma_I$ contains a pair in $\bar \beta$, and similar for $\noEPR_{\bar \alpha}$ and $\gamma_O$.
The first clause follows because the spectral norm of $\noEPR_{\bar \beta} \big( \gamma_I, \gamma_O, \pi_L, \pi_R \big) \noEPR_{\bar \alpha}$ is one if $\gamma_I = \beta, \gamma_O = \alpha$.
We note that this condition implies $|\alpha| = |\beta|$, since $| \gamma_I | = | \gamma_O |$.

To bound the sum in Eq.~(\ref{eq: G norm}), we count the number of sets of pairs $\alpha$ that have spectral norm $\lVert \noP_\beta \noP_\alpha \rVert_\infty = D^{L-\ell}$ with a fixed set of pairs $\beta$, for each value of $L$.
Let us denote this number as $N(L,\ell)$.
Recall that $\alpha$ is a sequence of $\ell$ pairs of indices.
To determine $N(L,\ell)$, we enumerate the possible $\alpha$ pair-by-pair, as depicted below.
\vspace{1mm}
\begin{align}
\figbox{0.4}{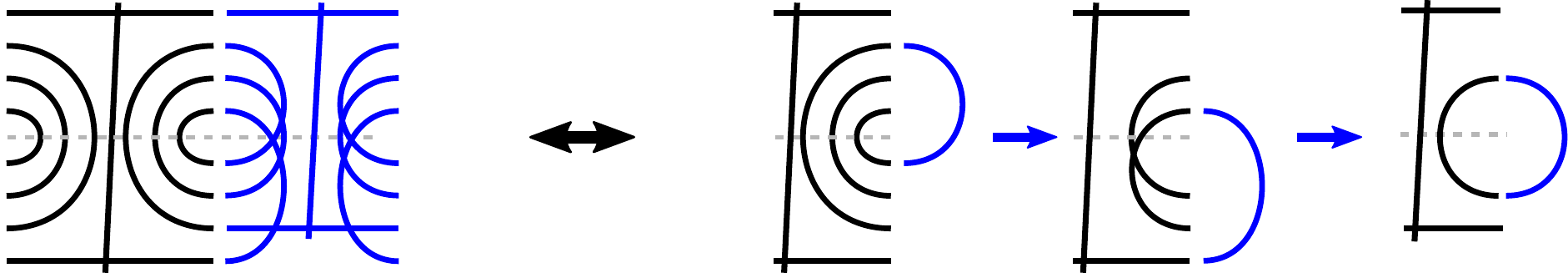} \centering \label{eq: PaPb sequence}
\end{align}
\vspace{2mm}
In more detail, let $\beta^{(0)} = \beta$ consider the ``right half'' of $\noP_\alpha$, the isometry $\noI_\alpha = \noEPR_{\bar \alpha} \otimes \bra{E_\alpha}$.
The first pair, $\alpha_1$, can be placed on any left and any right index, as long as at least one of the indices is contained in $\beta^{(0)}$.
(If neither index is contained in $\beta^{(0)}$, then the pair is annihilated by the no-EPR projector $\noEPR_{\bar \beta}$.)
There are at most $\ell (p+q)$ possible choices of the first pair, since: the index in $\beta^{(0)}$ can come from either the left or right side; on that side, the index corresponds to one of $\ell$ pairs; and on the other side, the index can be any of either $q$ or $p$ values.
Among these possible choices, there are $\ell$ possible choices of the first pair that produce a loop, since $\beta^{(0)}$ has $\ell$ pairs.

After the first pair is chosen, we can take the product of $\noP_\beta$ and $\ket{E_{\alpha_1}}$ to obtain a new projector $\noP_{\beta^{(1)}}$.
The new set $\noP_{\beta^{(1)}}$ acts on $p-1,q-1$ indices, and $\beta^{(1)}$ always contains one fewer pair than $\beta^{(0)}$. To verify the latter statement, note that there are three possible classes of pairs $\alpha_1$ that can be added. The first class connects one index of a pair in $\beta$ to one index in $\bar \beta$. This action annihilates the pair, and thus reduces the total number of pairs by one. The second class connects one index of a pair in $\beta$ to another index of another pair in $\beta$. This action joins the two pairs, and thus also reduces the number of pairs by one. The final class connects two indices of the same pair in $\beta$, producing a loop. Again, this reduces the total number of pairs by one.

We can iterate this process $\ell$ times to enumerate all possible sequences, $\alpha$, of $\ell$ pairs, such that $\noP_\beta \noP_\alpha$ is non-zero.
At the $j$-th step, for $j = 1,\ldots, \ell$, there are $(\ell-j+1)(p+q)$ total possible choices for the $j$-th pair, and, among these, $(\ell-j+1)$ possible choices that produce a loop.
At the end of the process, each set of pairs $\alpha$ is over-counted a total of $\ell!$ times, corresponding to the possible orderings of the pairs in $\alpha$.
If one wishes to consider only $\alpha$ that produce exactly $L$ loops, there are ${\ell \choose L} = {\ell \choose \ell-L}$ possible choices of $L$ steps at which to produce a loop.
Putting these three facts together, we have that there are at most,
\begin{equation}
	N(L,\ell) \leq \ell! \cdot (p+q)^{\ell-L} \cdot \frac{1}{\ell!} \cdot {\ell \choose \ell-L} \leq \frac{\ell^{\ell-L} (p+q)^{\ell-L}}{(\ell-L)!},
\end{equation}
possible $\alpha$ that produce $L$ loops.

This counting immediately enables us to bound our desired sum,
\begin{equation}\nonumber
	\lVert  \hat G \rVert_\infty \leq \sum_{\alpha \neq \beta} \lVert \noP_\beta \noP_\alpha \rVert_\infty = \sum_{L=1}^{\ell-1} N(L,\ell) D^{-(\ell-L)} \leq \sum_{L=0}^{\ell-1} \frac{1}{(\ell-L)!} \left( \frac{\ell(p+q)}{D} \right)^{\ell-L} \leq e^{\frac{\ell(p+q)}{D}} - 1.
\end{equation}
The upper bound of the second sum is $\ell-1$ and not $\ell$ because $\alpha = \beta$ is the sole choice of $\alpha$ that yields $\ell$ pairs, and this choice is excluded in the first sum.

\vspace{3mm}
\noindent \textbf{Proof of the second statement, Eq.~(\ref{eq: G F bound}) right.} We now turn to $ \hat F$.
The spectral norm is
\begin{equation} \label{eq: F norm}
	\lVert  \hat F \rVert_\infty = \frac{ \sum_{\alpha \beta \gamma} v_\beta F_{\beta \alpha} v_\alpha }{\sum_\alpha v_\alpha v_\alpha} = \sum_\gamma \sum_{\beta \neq \alpha} \lVert \noP_\alpha P_\gamma \noP_\beta \rVert_\infty + \sum_{\gamma \not\subseteq \alpha} \lVert \noP_\alpha P_\gamma \noP_\alpha \rVert_\infty .
\end{equation}
To bound the right hand side, we proceed similarly to our analysis for $ \hat G$, and enumerate all possible $\gamma, \beta$ that give a non-zero value of $\lVert \noP_\alpha P_\gamma \noP_\beta \rVert_\infty$.
We begin with $\gamma$ and proceed pair-by-pair as before.
Let $\alpha^{(0)} = \alpha$ and $\noI_{\alpha^{(0)}} = \noEPR_{\bar \alpha^{(0)}} \otimes \bra*{E^{}_{\alpha^{(0)}}}$.
We place the first pair, $\gamma_1$, on any two of the $p+q$ indices, so long as at least one of the indices is in $\alpha^{(0)}$.
(If it is not, then both indices of the pair are in $\bar{\alpha}^{(0)}$, and so are annihilated by the no-EPR projector in $\noI_{\alpha^{(0)}}$.)
This produces a new isometry, $\noI_{\alpha^{(1)}} = \noI_{\alpha^{(0)}} \ket{E_{\gamma_1}}$, acting on $p+q-2$ copies with $\ell-1$ EPR projectors, as depicted in Eq.~(\ref{eq: PaPb sequence}).
We iterate this procedure $j = 1,\ldots, \ell'$ times to generate all valid $\gamma$ with $\ell'$ pairs.
At the $j$-th step, there are $(\ell-j+1)(p+q-2j)$ possible locations at which to place the $j$-th pair, and, among these, $(\ell-j+1)$ possible locations that produce a loop.
In total, each $\gamma$ is over-counted a total of $\ell'!$ times, corresponding to the possible orderings of the $\ell'$ pairs in $\gamma$.

At the end of the process above, we obtain an isometry $\noI_{\alpha^{(\ell')}} = \noI_\alpha \ket{E_\gamma}$ acting on $p+q-2\ell'$ copies, with $\ell-\ell'$ EPR projectors.
The isometry is given by multiplying the ``right half'' of $\noP_\alpha$ with the ``left half'' of $P_\gamma$.
To proceed to enumerate the valid $\beta$, let us multiply this isometry by the ``right half'' of $P_\gamma$, to obtain $\noI_{\delta^{(0)}} \equiv \noI_\alpha \dyad{E_\gamma} = \noI_\alpha P_\gamma$.
The isometry $\noI_{\delta^{(0)}}$ acts on $p+q$ copies with $\ell$ EPR projectors.
We can now enumerate the possible $\beta$ pair-by-pair, exactly as we did in our analysis of $ \hat G$.
At the $j$-th step, for $j = 1,\ldots, \ell$, there are $(\ell-j+1)(p+q-2j)$ possible locations at which to place the $j$-th pair, and, among these, $(\ell-j+1)$ possible locations that produce a loop.
Each $\beta$ is over-counted a total of $\ell!$ times, corresponding to the possible orderings of the $\ell$ pairs in $\beta$.

We can now count the total number of $\gamma, \beta$ that produce $L$ loops in the multiplication $\noP_\alpha P_\gamma \noP_\beta$.
Since the enumeration of $\gamma, \beta$ contained a total of $\ell' + \ell$ steps, there are ${\ell' + \ell \choose L} = {\ell' + \ell \choose \ell' + \ell-L}$ possible choices of $L$ steps at which to produce a loop.
Thus, we have that there are at most
\begin{equation} \nonumber
	N(L,\ell,\ell') \leq \frac{\ell!}{(\ell-\ell')!} \cdot \ell! \cdot (p+q)^{\ell'+\ell-L} \cdot \frac{1}{\ell'!} \cdot \frac{1}{\ell!} \cdot {\ell' + \ell \choose \ell'+\ell-L}
	\leq {\ell \choose \ell'} \frac{ (\ell'+\ell)^{\ell'+\ell-L} (p+q)^{\ell'+\ell-L} }{(\ell'+\ell-L)!},
\end{equation}
possible choices of $\gamma, \beta$ that produce $L$ loops.

Turning to the sum in Eq.~(\ref{eq: F norm}), we have
\begin{equation}
\begin{split}
	\lVert  \hat F \rVert_\infty & \leq \frac{1}{{\ell \choose \ell'}} \bigg( \sum_\gamma \sum_{\beta \neq \alpha} \lVert \noP_\alpha P_\gamma \noP_\beta \rVert_\infty + \sum_{\gamma \not\subseteq \alpha} \lVert \noP_\alpha P_\gamma \noP_\alpha \rVert_\infty \bigg) \\
	 & = \frac{1}{{\ell \choose \ell'}} \sum_{L=1}^{\ell'+\ell-1} N(L,\ell,\ell') D^{-(\ell+\ell'-L)} \\
	 & \leq \sum_{L=0}^{\ell'+\ell-1} \frac{1}{(\ell'+\ell-L)!} \left( \frac{(\ell'+\ell)(p+q)}{D} \right)^{\ell'+\ell-L} \\
	 & \leq  \left( e^{\frac{(\ell'+\ell)(p+q)}{D}} - 1 \right).
\end{split}
\end{equation}
The upper bound of the second sum is $\ell'+\ell-1$ and not $\ell'+\ell$ because $\beta = \alpha$, $\gamma \subseteq \alpha$ is the sole choice of $\gamma, \beta$ that yields $\ell'+\ell$ pairs, and this choice is excluded in the first sum. \qed

\subsubsection{Proof of Corollary~\ref{cor: approx projectors}: Approximate expressions for subspace projectors}

As a starting point,
we consider a normalized vector $\ket{\psi} \in \tilde{P}_\ell$.
By construction, we can write $\ket{\psi}$ as a sum of vectors from each partly-orthogonal subspace $\alpha$ of size $\ell$,
\begin{equation} \label{eq: psi ell decomp}
	\ket{\psi} = \sum_\alpha c_\alpha \ket{\psi_{\alpha}} = \sum_\alpha c_\alpha \ket{E_\alpha} \otimes \ket{\phi_{\bar \alpha}},
\end{equation}
where each normalized state $\ket{\psi_{\alpha}} \equiv \ket{E_\alpha} \otimes \ket{\phi_{\bar \alpha}}$ is orthogonal to all EPR projectors on $\bar \alpha$.

Let us first understand how the normalization of $\ket{\psi}$ is related to the coefficients $c_\alpha$.
We have
\begin{equation} \label{eq: norm psi}
	1 = \langle \psi | \psi \rangle = \sum_\alpha | c_\alpha |^2 + \sum_{\alpha \neq \beta} c^*_\beta c_\alpha \langle \psi_\beta | \psi_\alpha \rangle.
\end{equation}
We can use Theorem~\ref{thm: approx orth} to show that the second sum is small,
\begin{equation} \label{eq: norm psi bound}
	\bigg| \sum_{\alpha \neq \beta} c^*_\beta c_\alpha \langle \psi_\beta | \psi_\alpha \rangle \bigg| 
	\leq 
	\sum_{\alpha \neq \beta} | c_\beta | | c_\alpha | \left\lVert \noP_\beta \noP_\alpha \right\rVert_\infty 
	\leq 
	\bigg( \sum_\alpha |c_\alpha|^2 \bigg) \cdot \lVert \hat G \rVert_\infty,
\end{equation}
where the first step follows from the triangle inequality as well as the expression, $| \langle \psi_\beta | \psi_\alpha \rangle | = | \langle \psi_\beta | \noP_\beta \noP_\alpha | \psi_\alpha \rangle | \leq \lVert \noP_\beta \noP_\alpha \rVert_\infty$.
Combining Eq.~(\ref{eq: norm psi}) and Eq.~(\ref{eq: norm psi bound}) gives
\begin{equation} \label{eq: coeffs bound}
	e^{-\ell(p+q)/D}\leq \frac{1}{1+\lVert  \hat G \rVert_\infty} \leq \sum_\alpha |c_\alpha|^2 \leq \frac{1}{1-\lVert  \hat G \rVert_\infty} \leq  \frac{1}{2-e^{\ell(p+q)/D} },
\end{equation}
where the outer inequalities follow from Theorem~\ref{thm: approx orth}.
We see that the coefficients are approximately normalized to one, as would be the case if the projectors were perfectly orthogonal.

\vspace{3mm}
\noindent \textbf{Proof of the first statement, Eq.~(\ref{eq: approx 1}).} By definition, both $\tilde{P}_\alpha$ and $\noP_\alpha$ are orthogonal to all subspaces $\tilde{P}_\beta$ with $| \beta | \neq \ell$, or with $|\beta| = \ell$ but $\beta > \alpha$ with respect to the ordering of the orthogonal projectors.
Hence, $E_\alpha = \tilde{P}_\ell E_\alpha \tilde{P}_\ell$, and we can restrict our attention to the action of $E_\alpha$ on states in $\text{span} \{ \tilde{P}_\beta : |\beta| = \ell, \beta \leq \alpha \}$.
Building upon the decomposition in Eq.~(\ref{eq: psi ell decomp}), we write
\begin{equation}
	\ket{\psi} = c_\alpha \ket{\psi_\alpha} + \sum_{\alpha' \neq \alpha} c_{\alpha'} \ket{\psi_{\alpha'}}  = c_\alpha 
	\left( b_\alpha \ket*{\tilde{\psi}_\alpha} + \sum_{\alpha' < \alpha} b_{\alpha'} \ket*{\tilde{\psi}_{\alpha'}} \right) + \sum_{\alpha' < \alpha} c_{\alpha'} \ket{\psi_{\alpha'}},
\end{equation}
where on the right hand side, we decompose $\ket{\psi_\alpha}$, which lies in $\noP_\alpha$, as a sum of vectors, $\ket*{\tilde{\psi}_\alpha}, \ket*{\tilde{\psi}_{\alpha'}}$, which lie in $\tilde{P}_\alpha, \tilde{P}_{\alpha'}$ for $\alpha' < \alpha$.
We have
\begin{equation}
	\noP_\alpha \ket{\psi} = c_\alpha \ket{\psi_\alpha} + \sum_{\alpha' < \alpha} c_{\alpha'} \noP_\alpha \ket{\psi_{\alpha'}},
\end{equation}
and 
\begin{equation}
	\tilde{P}_\alpha \ket{\psi} = c_\alpha b_\alpha \ket*{\tilde{\psi}_\alpha}.
\end{equation}
Taking the difference, we have
\begin{equation}
	E_\alpha \ket{\psi} = c_\alpha \sum_{\alpha' < \alpha} b_{\alpha'} \ket*{\tilde{\psi}_{\alpha'}} + \sum_{\alpha' < \alpha} c_{\alpha'} \noP_\alpha \ket{\psi_{\alpha'}}.
\end{equation}
We will now show that $E_\alpha \ket{\psi}$ has small norm.

The second term has norm at most,
\begin{equation}
	\bigg\lVert \sum_{\alpha' < \alpha} c_{\alpha'} \noP_\alpha \ket{\psi_{\alpha'}} \bigg\rVert \leq \sum_{\alpha' < \alpha} | c_{\alpha'} | \cdot \lVert \noP_\alpha \noP_{\alpha'} \rVert_\infty \leq \bigg( \sum_{\alpha' < \alpha} | c_{\alpha'} |^2 \bigg)^{1/2}  \lVert \hat G^{(\ell)} \rVert_\infty \leq \frac{\lVert \hat G^{(\ell)} \rVert_\infty}{1-\lVert \hat G^{(\ell)} \rVert_\infty},
\end{equation}
where $\lVert \ket{\psi} \rVert = \sqrt{ \langle \psi | \psi \rangle }$ denotes the vector norm, and the final inequality follows from Eq.~(\ref{eq: coeffs bound}).
To bound the first term, we note that 
\begin{equation}
	\bigg\lVert \sum_{\alpha' < \alpha} b_{\alpha'} \ket*{\tilde{\psi}_{\alpha'}} \bigg\rVert = \max_{\ket{\phi_{< \alpha}}} | \langle \phi_{< \alpha} | \psi_\alpha \rangle |,
\end{equation}
where the maximization is over all states $\ket{\phi_{< \alpha}}$ in the subspace $\text{span} \{ P_{\alpha'} : \alpha' < \alpha \}$.
The equation follows since the vector on the left hand side is the projection of $\ket{\psi_\alpha}$ onto the subspace.
Performing an analogous decomposition for $\ket{\phi_\alpha}$ as in Eq.~(\ref{eq: psi ell decomp}), with coefficients $d_{\alpha'}$, we have
\begin{equation}\nonumber
	| \langle \phi_{< \alpha} | \psi_\alpha \rangle | = \bigg| \sum_{\alpha' < \alpha} d_{\alpha'} \langle \phi_{\alpha'} | \psi_\alpha \rangle \bigg| \leq \sum_{\alpha' < \alpha} | d_{\alpha'} | \cdot \lVert \noP_{\alpha'} \noP_\alpha \rVert_\infty \leq \bigg( \sum_{\alpha' < \alpha} | d_{\alpha'} |^2 \bigg)^{1/2}  \lVert \hat G^{(\ell)} \rVert_\infty \leq \frac{\lVert \hat G^{(\ell)} \rVert_\infty}{1-\lVert \hat G^{(\ell)} \rVert_\infty},
\end{equation}
where the final inequality follows from Eq.~(\ref{eq: coeffs bound}).
Applying Theorem~\ref{thm: approx orth} yields Eq.~(\ref{eq: approx 1}),
\begin{equation}
	\lVert E_\alpha \rVert_\infty \leq \frac{2 \lVert \hat G^{(\ell)} \rVert_\infty}{1-\lVert \hat G^{(\ell)} \rVert_\infty} \leq \frac{e^{\ell (p+q)/D}-1}{1-\frac{1}{2}e^{\ell (p+q)/D}} \leq 2 \bigg( \frac{\ell(p+q)}{D} \bigg) + 10.78 \bigg( \frac{\ell(p+q)}{D} \bigg)^2 ,
\end{equation}
where in the final inequality we use $(e^x - 1)/(1-e^x/2) \leq 2x + 10.78 x^2$  for $0 \leq x \leq 1/2$, from Taylor's remainder theorem.

\vspace{3mm}
\noindent \textbf{Proof of the second statement, Eq.~(\ref{eq: approx 2}).} Similar to before, both $\tilde{P}_\ell$ and all $\noP_\alpha$ with $|\alpha|=\ell$ are orthogonal to all subspaces $\tilde{P}_\beta$ with $| \beta | \neq \ell$.
Hence, $E_\ell = \tilde{P}_\ell E_\ell \tilde{P}_\ell$, and we can restrict our attention to the action of $E_\ell$ on states in $\tilde{P}_\ell$.
To bound the magnitude of $E_\ell$, we recall the definition of the spectral norm, 
\begin{equation}
	\lVert E_\ell \rVert_\infty \equiv \max_{\ket{\psi},\ket{\phi}} \bra{\psi} E_\ell \ket{\phi},
\end{equation}
where we can assume  $\ket{\psi}, \ket{\phi} \in \tilde{P}_\ell$.
Now, we expand $\ket{\psi}$ and $\ket{\phi}$ as in Eq.~(\ref{eq: psi ell decomp}), $\ket{\psi} = \sum_{ \alpha : | \alpha | = \ell} c_\alpha \ket{\psi_{\alpha}}$ and $\ket{\phi} = \sum_{ \alpha : | \alpha | = \ell} d_\alpha \ket{\phi_{\alpha}}$.
We have
\begin{equation}
	\bra{\psi} \tilde{P}_\ell \ket{\phi} = \langle \psi | \phi \rangle = \sum_{\gamma,\alpha} c^*_\alpha d_\beta \langle \psi_\alpha | \phi_\beta \rangle,
\end{equation}
since $\tilde{P}_\ell \ket{\phi} = \ket{\phi}$ by assumption.
Meanwhile, we have
\begin{equation}
	\bra{\psi} \bigg( \sum_{\gamma : |\gamma| = \ell} \noP_\gamma \bigg) \ket{\phi} = \sum_{\alpha \beta \gamma} c^*_\alpha d_\beta \bra{\psi_\alpha} \noP_\gamma  \ket{\phi_\beta} = \langle \psi | \phi \rangle + \sum_{\alpha, \beta, \gamma \neq \beta} c^*_\alpha d_\beta \bra{\psi_\alpha} \noP_\gamma  \ket{\phi_\beta},
\end{equation}
where in the third expression we use that $\noP_\gamma \ket{\psi_\beta} = \ket{\psi_\beta}$ for $\gamma = \beta$.
Taking the difference of the two expressions, and applying the triangle inequality, we have
\begin{equation}
	| \bra{\psi} E_\ell \ket{\phi} | 
	=
	\sum_{\alpha, \beta, \gamma \neq \beta} c^*_\alpha d_\beta \bra{\psi_\alpha} \noP_\gamma  \ket{\phi_\beta}
	\leq 
	\sum_{\alpha, \beta, \gamma \neq \beta} | c_\alpha |  \cdot \lVert \noP_\alpha \noP_\gamma \rVert_\infty \cdot \lVert \noP_\gamma \noP_\beta \rVert_\infty \cdot | d_\beta |.
\end{equation}
We can view the second norm, $\lVert \noP_\gamma \noP_\beta \rVert_\infty$, as the elements of the matrix $\hat G^{(\ell)}$, since the diagonal elements, $\gamma = \beta$, are omitted.
We can view the first norm, $\lVert \noP_\alpha \noP_\gamma \rVert_\infty$, as the elements of the matrix, $\hat{\mathbbm{1}} + \hat G^{(\ell)}$, since it contains its diagonal elements, $\lVert \noP_\alpha \noP_\alpha \rVert_\infty = 1$.
Hence, we have
\begin{equation}
	| \bra{\psi} E_\ell \ket{\phi} | 
	\leq 
	\bigg( \sum_\alpha | c_\alpha |^2 \bigg)^{1/2}
	\bigg( \sum_\beta | d_\beta |^2 \bigg)^{1/2}
	\big( 1 + \lVert \hat G \rVert_\infty \big) 
	\lVert \hat G \rVert_\infty,
\end{equation}
Applying Eq.~(\ref{eq: coeffs bound}) and Theorem~\ref{thm: approx orth} yields Eq.~(\ref{eq: approx 2}),
\begin{equation} \nonumber
	\lVert E_\ell \rVert_\infty \leq \frac{(1+ \lVert \hat G^{(\ell)} \rVert_\infty) \lVert \hat G^{(\ell)} \rVert_\infty}{1-\lVert \hat G^{(\ell)} \rVert_\infty} \leq \frac{e^{\ell(p+q)/D}(e^{\ell(p+q)/D}-1)}{2-e^{\ell(p+q)/D}} \leq \bigg( \frac{\ell(p+q)}{D} \bigg) + 10.18 \bigg( \frac{\ell(p+q)}{D} \bigg)^2 ,
\end{equation}
where in the final inequality we use $e^x(e^x - 1)/(2-e^x) \leq x + 10.18 x^2$ for $0 \leq x \leq 1/2$, from Taylor's remainder theorem.

\vspace{3mm}
\noindent \textbf{Proof of the third statement, Eq.~(\ref{eq: approx 3}).} Our proof follows in a similar manner to the second statement.
Note that both the left hand side of Eq.~(\ref{eq: approx 3}), and the first term on the right hand side, commute $\tilde{P}_{\ell}$ for all $\ell$ (the latter follows from Proposition~\ref{ref: proj ell commutes}).
Thus, the difference of the two terms can be written as a sum of error terms, ${\ell \choose \ell'} E^{(\ell')}_\ell$, within each subspace, $\tilde{P}_\ell$.
To quantify each error, let us suppose $\ket{\psi},\ket{\phi} \in \tilde{P}_\ell$, and write,
\begin{equation}
	\bra{\psi} \bigg( \sum_{\ell'' \geq \ell'} {\ell'' \choose \ell'} \tilde{P}_{\ell''} \bigg) \ket{\phi} = {\ell \choose \ell'} \langle \psi | \phi \rangle.
\end{equation}
Meanwhile, expanding $\ket{\psi}, \ket{\phi}$ as in Eq.~(\ref{eq: psi ell decomp}), we have
\begin{equation}
	\bra{\psi} \bigg( \sum_{\gamma : |\gamma| = \ell'} P_\gamma \bigg) \ket{\phi} = \sum_{\alpha \beta \gamma} c^*_\alpha d_\beta \bra{\psi_\alpha} P_\gamma  \ket{\phi_\beta} = {\ell \choose \ell'} \langle \psi | \phi \rangle + \sum_{\alpha, \beta, \gamma \not\subseteq \beta} c^*_\alpha d_\beta \bra{\psi_\alpha} P_\gamma  \ket{\phi_\beta}.
\end{equation}
Taking the difference and dividing by ${\ell \choose \ell'}$, we have
\begin{equation}
	\lVert E^{(\ell')}_\ell \rVert_\infty = \bigg| \frac{1}{{\ell \choose \ell'}} \sum_{\alpha, \beta, \gamma \not\subseteq \beta} c^*_\alpha d_\beta \bra{\psi_\alpha} P_\gamma  \ket{\phi_\beta} \bigg| 
	\leq \frac{1}{{\ell \choose \ell'}} \sum_{\alpha, \beta, \gamma \not\subseteq \beta} | c_\alpha | \cdot | d_\beta | \cdot \lVert \noP_\alpha P_\gamma \noP_\beta \rVert_\infty.
\end{equation}
We are free to add terms to the sum, in order for the indices that are summed over to match those in the matrix $\hat F^{(\ell,\ell')}$ [Eq.~(\ref{eq: def F})].
Adding in terms where $\gamma \subseteq \beta$ for each $\beta \neq \alpha$, we find
\begin{equation}\nonumber
	\lVert E^{(\ell')}_\ell \rVert_\infty  \leq \frac{1}{{\ell \choose \ell'}} \bigg( \!\! \sum_{\alpha, \beta \neq \alpha, \gamma} \!+ \! \sum_{\alpha, \gamma \not\subseteq \alpha} \bigg) \, | c_\alpha |  | d_\beta |  \lVert \noP_\alpha P_\gamma \noP_\beta \rVert_\infty 
	\leq  \bigg( \sum_\alpha |c_\alpha|^2 \bigg)^{1/2} \!\! \bigg( \sum_\alpha |d_\alpha|^2 \bigg)^{1/2} \lVert \hat F^{(\ell,\ell')} \rVert_\infty.
\end{equation}
Applying Eq.~(\ref{eq: coeffs bound}) and Theorem~\ref{thm: approx orth} yields Eq.~(\ref{eq: approx 3}),
\begin{equation}\nonumber
	\lVert E^{(\ell')}_\ell \rVert_\infty  \leq \frac{ \lVert \hat F^{(\ell,\ell')} \rVert_\infty }{1-\lVert \hat G^{(\ell)} \rVert_\infty} \leq \frac{e^{(\ell+\ell')(p+q)/D}-1}{2-e^{\ell(p+q)/D}} \leq \bigg( \frac{(\ell+\ell')(p+q)}{D} \bigg) + 7.06 \bigg( \frac{(\ell+\ell')(p+q)}{D} \bigg)^2,
\end{equation}
where in the final inequality we use $(e^y-1)/(2-e^x) \leq (y+ 0.718 y^2)(1+ 3.693 x) \leq y + 7.06 y^2$ for $0 \leq y \leq 1$, $ 0 \leq x \leq \min(y,1/2)$, from Taylor's remainder theorem. \qed

\section{Fast scrambling} \label{app: scrambling}

In this Appendix, we provide full details on the applications of strong random unitaries to quantum information scrambling.
As mentioned in the main text, each of our results follows fairly immediately from the definition of strong unitary $k$-designs and strong PRUs.

\subsection{Out-of-time-order correlation functions}

Let $U$ be a random unitary and $\ket{\psi}$ a fixed quantum state.
A time-ordered $2k$-point correlation function takes the form,
\begin{equation}
    C_{\text{TO}}(P_1,\ldots,P_{2k}) = \bra{\psi} P_{2k} U^\dagger P_{2k-1} U^\dagger P_{2k-2} U^\dagger  \ldots P_{k+1} U^\dagger P_k U \ldots U P_2 U P_1 U \ket{\psi},
\end{equation}
where we assume that $P_i$ are Pauli operators for simplicity.
Any time-ordered correlation function can be measured in an experiment that applies the unitary $U$ $k$ times in sequence.
An out-of-time-order $2k$-point correlation function takes the form,
\begin{equation}
    C_{\text{OTO}}(P_1,\ldots,P_{2k}) = \bra{\psi} P_{2k}  U^\dagger P_{2k-1} U  P_{2k-2} \ U^\dagger P_{2k-3} U \ldots P_4  U^\dagger P_3 U  P_2  U^\dagger P_1 U \ket{\psi},
\end{equation}
where we again assume that $P_i$ are Pauli operators for simplicity.
Any out-of-time-order correlation function can be measured in an experiment that applies $U$ and $U^\dagger$ one after the other $k/2$ times in sequence.
Here, we assume $k$ is even.
The particular out-of-time-order correlation function shown in Fig.~\ref{fig:scrambling} of the main text sets all $P_i$ for even $i$ equal to one another and all $P_i$ for odd $i$ equal as well.

As discussed in the main text, the formation of strong unitary $k$-designs immediately implies the decay of all local $k$-point time-ordered and out-of-time-order correlation functions to zero.
\begin{proposition}
    For any $k = \mathcal{O}(1)$ and $\varepsilon = \Omega(1/2^n)$. Let $U$ be drawn from a strong $\frac{\varepsilon^2 \delta}{\poly n}$-approximate unitary $2k$-design and $\ket{\psi}$ be any quantum  state. Then with high probability $1-\delta$, every local $2k$-point time-ordered and out-of-time-order correlation function decays to within $\varepsilon$ of zero under $U$.
\end{proposition}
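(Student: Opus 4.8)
The plan is to reduce the statement to two facts about a Haar-random unitary: first, that the expectation value of any local $2k$-point correlator (time-ordered or out-of-time-order) is exponentially small in $n$, and second, that the variance of each such correlator is also exponentially small. Both facts follow from the explicit formula for the Haar twirl on $2k$ copies [Eq.~(\ref{eq: Haar twirl})], since the correlators are linear (for the mean) and bilinear (for the second moment) functions of $U^{\otimes 2k}$ and its conjugate. Concretely, writing $C_\bullet(P_1,\dots,P_{2k}) = \mathrm{tr}\big( \mathcal{C} \cdot (U^{\otimes k} \otimes U^{*,\otimes k}) \big)$ for an appropriate contraction tensor $\mathcal{C}$ built from the $P_i$ and $\ket{\psi}$, the mean over Haar is obtained by inserting the mixed Haar twirl, and the nonidentity Weingarten contributions produce factors of $2^{-n}$ per ``cancellation'' that does not pair a $U$ with the $P$ it is sandwiched between. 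The key combinatorial observation is that because the operators $P_i$ are nontrivial Pauli operators on a constant number of qubits, no permutation $\pi \in S_{2k}$ can make all the $U$'s cancel while leaving the Paulis aligned, so every surviving term picks up at least one factor of $2^{-n}$ (up to $O(k^2/2^n)$ corrections). This gives $|\mathbb{E}_{U\sim H}[C_\bullet]| = O(\mathrm{poly}(k)/2^n)$ and, applying the same reasoning on $4k$ copies, $\mathrm{Var}_{U\sim H}[C_\bullet] = O(\mathrm{poly}(k)/2^n)$.

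Next I would transfer these bounds from the Haar ensemble to the strong $2k$-design $\mathcal{E}$. The key point is that both $C_\bullet$ and $|C_\bullet|^2$ are expectation values of fixed operators in the output state of a quantum experiment making exactly $2k$ queries to $U$ and $U^\dagger$ (the out-of-time-order arrangement interleaves $U$ and $U^\dagger$; the time-ordered one uses $k$ copies of each but still totals $2k$ queries). For the mean, this is a single experiment with $2k$ queries, so the strong measurable-error guarantee gives $|\mathbb{E}_{U\sim\mathcal{E}}[C_\bullet] - \mathbb{E}_{U\sim H}[C_\bullet]| \le \varepsilon'$ where $\varepsilon'$ is the design error. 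For the variance, $\mathbb{E}[|C_\bullet|^2]$ is the expectation value of a positive operator $\chi = \mathcal{C} \otimes \mathcal{C}^*$ (up to conjugation) in a $4k$-query experiment, so it is controlled by the strong design error on $4k$ copies; this is why the statement asks for a $2k$-design when estimating $2k$-point functions but the proposition's hypothesis $\frac{\varepsilon^2\delta}{\mathrm{poly}\, n}$-approximate design should be read at the appropriate moment order — I would verify that a $2k$-design suffices by using the measurable-error bound directly on the real-valued quantity $\mathrm{Re}(C_\bullet)$ and $\mathrm{Im}(C_\bullet)$ and on $|C_\bullet|^2$ realized as an expectation value on the doubled system. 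Combining, $\mathbb{E}_{U\sim\mathcal{E}}[|C_\bullet - \mathbb{E}_H[C_\bullet]|^2] = O(\mathrm{poly}(k)/2^n) + O(\varepsilon^2\delta/\mathrm{poly}\, n)$, which is $O(\varepsilon^2 \delta)$ for the stated parameter regime.

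Finally I would apply Chebyshev's inequality: since $\mathbb{E}_{U\sim\mathcal{E}}[C_\bullet]$ is within $O(\mathrm{poly}(k)/2^n) + \varepsilon'$ of zero and the variance is $O(\varepsilon^2\delta)$, the probability that $|C_\bullet| > \varepsilon$ is at most $O(\varepsilon^2\delta)/\varepsilon^2 + (\text{small}) = O(\delta)$, after absorbing the mean shift into the deviation bound (valid because $\varepsilon = \Omega(1/2^n)$ ensures the mean shift is much smaller than $\varepsilon$). A union bound over the $O(1)$-size support choices and the constantly-many Pauli labels (all $P_i$ ranging over the $4^{O(1)}$ local Paulis, with $k = O(1)$) then yields that \emph{all} local $2k$-point correlators are within $\varepsilon$ of zero with probability $1-\delta$, after rescaling $\delta$ by the constant number of correlators — which is where the $\mathrm{poly}\, n$ in the hypothesis gets used to crush both the union-bound factor and the Haar-value contributions.

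I expect the main obstacle to be the combinatorial bookkeeping in the first step: showing cleanly that \emph{every} nonidentity Weingarten/permutation contribution to both the first and second moments of a generic $2k$-point OTOC is suppressed by at least $2^{-n}$, and doing so uniformly over all local operator choices and all time orderings. The subtlety is that certain ``partial cancellations'' between $U$'s and $U^\dagger$'s (the $\ell > 0$ sectors in the mixed Haar twirl of Lemma~\ref{lemma: approx mixed Haar twirl}) could in principle leave a Pauli operator traced against the identity in a way that is \emph{not} suppressed; one must check that the nontriviality of the $P_i$ forces such traces to vanish outright rather than merely be small. This is a finite check for each fixed pairing pattern, but enumerating the patterns and confirming suppression in each is the part that requires care. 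Everything after that — the design-error transfer and Chebyshev — is routine.
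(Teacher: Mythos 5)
Your proposal follows essentially the same route as the paper: the paper also runs a second-moment argument, but packages it more economically by defining the single nonnegative quantity $C(U)=\sum_{P_1,\dots,P_{2k}}\big(C_{\text{TO}}^2+C_{\text{OTO}}^2\big)$, bounding $\E_{U\sim\mathcal{E}}C(U)$ term-by-term via the design error plus the exponentially small Haar values (which it simply cites rather than re-deriving via Weingarten combinatorics, so the ``main obstacle'' you anticipate is not actually needed), and applying one Markov inequality at threshold $\varepsilon^2$ instead of your per-correlator Chebyshev plus union bound. The two organizations are equivalent, so I would not call yours a genuinely different proof. Two corrections, though. First, your counting of the union bound is off: the non-identity $r$-local Paulis on $n$ qubits number $3^r\binom{n}{r}=\Theta(n^r)$, not $4^{\mathcal{O}(1)}$, so there are $n^{\Theta(k)}$ local correlators --- polynomially many, not constantly many. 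This is precisely why the hypothesis carries the $\frac{\varepsilon^2\delta}{\poly n}$ error (the paper's proof sets $\varepsilon'=\varepsilon^2\delta/n^{\mathcal{O}(k)}$); your own closing remark that the $\poly n$ ``crushes the union-bound factor'' is the right instinct, but it contradicts the ``constantly-many'' claim and should be stated with the correct polynomial count. Second, your worry about whether the second moment $\E_{\mathcal{E}}\big[|C|^2\big]$ is controlled by a $2k$-design is legitimate --- realized as an observable on a doubled experiment it is a $4k$-query quantity --- but the paper's own proof bounds the expectation of the squared correlators in exactly the same way, so on this point you are faithfully reproducing (and at least flagging) a looseness already present in the paper rather than introducing a new gap.
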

\noindent For example, if we one sets $\varepsilon$ and $\delta$ to be super-polynomially small in $n$, i.e.~$1/\varepsilon, 1/\delta = \omega(\poly n)$, then the proposition is satisfied whenever the error of the strong unitary, $\varepsilon' = \varepsilon^2 \delta/\poly n$, is also super-polynomially small in $n$.
From Theorem~\ref{thm:strong-design-depth}, this is achieved in $\mathcal{O}(\log n)$ circuit depth for structured quantum circuits and $\mathcal{O}(\log^3 n)$ circuit for  all-to-all connected random circuits.
\begin{proof}
    We consider the sum of squares of all local time-ordered and out-of-time-order correlation functions,
    \begin{equation}
        C(U) =  \sum_{P_1,\ldots,P_{2k}} C_{\text{TO}}(P_1,\ldots,P_{2k})^2 + C_{\text{OTO}}(P_1,\ldots,P_{2k})^2,
    \end{equation}
    where each $P_i$ in the sum is non-identity.
    If each $P_i$ is $r$-local with $r = \mathcal{O}(1)$, then there are at most $2(3n)^{2rk} = n^{\mathcal{O}(k)}$ correlation functions in the sum.
    This follows because each Pauli operator can take $3^r {n \choose r} \leq (3n)^r$ different values and there are $2k$ Pauli operators to choose.
    Here, we add in the label $U$ on the left side for specificity; the time-ordered and out-of-time-order correlation functions all implicitly depend on $U$ as well.

    The expected value, $\E_{U \sim \mathcal{E}} C(U)$, can be estimated to within $n^{\mathcal{O}(k)} \varepsilon'$ of its Haar-random value, since each individual term can be estimated to within $\mathcal{O}(\varepsilon')$.
    A straightforward calculation shows that each Haar-random correlation function is exponentially small (see e.g.~\cite{cotler2017chaos}), and hence $\E_{U \sim H} C(U) = \mathcal{O}(n^{\mathcal{O}(k)}/2^n)$.
    Thus, $\E_{U \sim \mathcal{E}} C(U) = \mathcal{O}(n^{\mathcal{O}(k)}(\varepsilon'+1/2^n)) = \mathcal{O}(n^{\mathcal{O}(k)} \varepsilon')$. 
    From Markov's inequality, we have 
    \begin{equation}
        \text{Pr}( C(U) \geq \varepsilon^2) \leq  \frac{n^{\mathcal{O}(k)} \varepsilon'}{\varepsilon^2}.
    \end{equation}
    The probability $\delta$ that any individual correlation function has absolute value greater than $\varepsilon$ is upper bounded by the probability above.
    Setting $\varepsilon' = \varepsilon^2 \delta/n^{\mathcal{O}(k)}$ completes the proof.
\end{proof}

\subsection{Operator size distributions} 

The size distribution of an operator $O$ evolved under a unitary $U$ is given by 
\begin{equation}
    P_U(w) = \frac{1}{2^n} \tr( O(t) \mathcal{P}_w [ O(t) ]),
\end{equation}
where $\mathcal{P}_w$ is a superoperator that projects onto Pauli strings of weight $w$.
We assume without loss of generality that $\frac{1}{2^n}\tr( O^\dagger O ) = 1$, which implies that the size distribution is normalized, $\sum_w P_U(w) = 1$.
If we consider the quantum state $(O \otimes \mathbbm{1})\ket{\Psi_{\text{EPR}}}$ on two copies of $n$ quits, then the size distribution corresponds to the expectation value,
\begin{equation}
    P_U(w) = \bra{\Psi_{\text{EPR}}}(O^\dagger \otimes \mathbbm{1})(U^\dagger \otimes U^T) \mathcal{P}_w (U \otimes U^*)(O \otimes \mathbbm{1})\ket{\Psi_{\text{EPR}}},
\end{equation}
where $\mathcal{P}_w$ is now an operator on the two-copy system that projects onto the span of states $(Q \otimes \mathbbm{1})\ket{\Psi_{\text{EPR}}}$ where $Q$ is any Pauli operator with weight $w$.

We can use strong approximate unitary 4-designs to bound the closeness of operator size distributions to their Haar-random values.
\begin{proposition}
    The expected total variation distance between the operator size distribution of a strong $\varepsilon$-approximate unitary 4-design and the Haar-random operator size distribution is less than $3n^2\varepsilon$.
\end{proposition}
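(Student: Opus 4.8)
The plan is to pass through the second moment of the size distribution --- which a strong $4$-design controls --- and then convert it into a total-variation bound by Cauchy--Schwarz over the $n+1$ possible weights. Fix $O$ (after subtracting its trace part, which is invariant under conjugation and sits entirely at weight $0$, so affects no difference $P_U(w)-P_H(w)$), and abbreviate $a_w(U)\equiv P_U(w)-P_H(w)$ with $P_H(w)\equiv\E_{V\sim H}[P_V(w)]$. The key observation is that both $\sum_w P_U(w)^2$ and $\sum_w P_H(w)\,P_U(w)$ are expectation values of positive operators of operator norm at most $1$ in the output of a short quantum experiment that queries only $U$ and $U^*$: writing $\ket{\psi_U}=(U\otimes U^*)(O\otimes\mathbbm{1})\ket{\Psi_{\text{EPR}}}$, one has $\sum_w P_U(w)^2=\bra{\psi_U}^{\otimes 2}\big(\sum_w\mathcal P_w\otimes\mathcal P_w\big)\ket{\psi_U}^{\otimes 2}$ (a $4$-query experiment: two copies of $U$ and two of $U^*$) and $\sum_w P_H(w)P_U(w)=\bra{\psi_U}\big(\sum_w P_H(w)\mathcal P_w\big)\ket{\psi_U}$ (a $2$-query experiment). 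Applying the measurable-error guarantee of a strong $\varepsilon$-approximate unitary $4$-design to both experiments, and expanding $\E_{U\sim\mathcal E}[\sum_w a_w(U)^2]=\E_{U\sim\mathcal E}[\sum_w P_U(w)^2]-2\,\E_{U\sim\mathcal E}[\sum_w P_H(w)P_U(w)]+\sum_w P_H(w)^2$, gives
\begin{equation}
\E_{U\sim\mathcal E}\Big[\sum_w a_w(U)^2\Big]\;\leq\;\sum_w\mathrm{Var}_{V\sim H}\!\big[P_V(w)\big]\;+\;3\varepsilon .
\end{equation}

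Next I would dispose of the Haar-fluctuation term $\sum_w\mathrm{Var}_{V\sim H}[P_V(w)]$. Writing $VOV^\dagger=\sum_P c_P(V)P$ with $\sum_P|c_P|^2=1$, it is a linear combination of Haar fourth moments $\E_{V\sim H}[c_P(V)^2c_Q(V)^2]$, which the explicit (mixed) Haar twirl formula from the preliminaries evaluates: the off-diagonal ($P\neq Q$) moments agree with the product $\E_H[c_P^2]\E_H[c_Q^2]$ up to $O(2^{-n})$ relative corrections, so $\sum_w\mathrm{Var}_{V\sim H}[P_V(w)]=2^{-n}\poly(n)$, negligible against $\varepsilon$ in the relevant regime $\varepsilon=\Omega(2^{-n}\poly(n))$ (and otherwise simply carried). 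Then, for each fixed $U$, Cauchy--Schwarz over the $n+1$ weights gives $\mathrm{TVD}(P_U,P_H)^2=\tfrac14\big(\sum_w|a_w(U)|\big)^2\le\tfrac{n+1}{4}\sum_w a_w(U)^2$, and taking $\E_{U\sim\mathcal E}$ yields $\E_{U\sim\mathcal E}\big[\mathrm{TVD}(P_U,P_H)^2\big]\le\tfrac{n+1}{4}\big(3\varepsilon+2^{-n}\poly(n)\big)\le 3n^2\varepsilon$; by Jensen, $\E_{U\sim\mathcal E}\big[\mathrm{TVD}(P_U,P_H)\big]\le n\sqrt{3\varepsilon}$.

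I expect the main obstacle to be the reconciliation between this chain and the stated bound. The honestly "new" ingredient --- a strong $4$-design close in measurable error --- controls a \emph{second} moment, so what drops out cleanly is $\E_{U\sim\mathcal E}[\mathrm{TVD}^2]\le 3n^2\varepsilon$, whereas a bound literally on $\E_{U\sim\mathcal E}[\mathrm{TVD}]$ comes with a square root, $O(n\sqrt\varepsilon)$; the two are consistent whenever $\varepsilon=\Omega(1/n^2)$, and in the regime the application targets ($\varepsilon$ super-polynomially small) the square-root bound is itself super-polynomially small, which is all that is needed downstream. The write-up should therefore be explicit about whether "$3n^2\varepsilon$" refers to the expected squared total-variation distance or to $\E[\mathrm{TVD}]$ under a mild lower bound on $\varepsilon$. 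The only other nontrivial step is the Haar fourth-moment estimate of $\sum_w\mathrm{Var}_{V\sim H}[P_V(w)]$, routine given the explicit mixed Haar twirl but requiring some care with the partially-transposed-permutation bookkeeping (and with the trace part of $O$, handled at the outset). Everything else --- realizing $\sum_w P_U(w)^2$ and $\sum_w P_H(w)P_U(w)$ as norm-$\le 1$ observables in $\le 4$-query experiments involving only $U$ and $U^*$ --- is immediate from the definitions, which is precisely why a strong $4$-design (rather than merely a $2$-design, which would already give $\mathrm{TVD}(\E_{U\sim\mathcal E}[P_U],P_H)\le\varepsilon$) is the correct hypothesis.
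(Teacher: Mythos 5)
Your proposal is correct and follows essentially the same route as the paper's proof: realize $\sum_w P_U(w)^2$ and $\sum_w P_H(w)P_U(w)$ as expectation values of norm-$\leq 1$ positive observables in experiments making at most four queries to $U$ and $U^*$, invoke the strong measurable-error guarantee of the $4$-design, and finish with Cauchy--Schwarz over the $n+1$ weights. The two points where you are more careful are in fact warranted: the paper's proof silently identifies $\E_{V\sim H}\!\left[P_V(w)^2\right]$ with $P_H(w)^2$ (ignoring the exponentially small Haar variance you isolate) and uses the step $\sum_w\lvert P_U(w)-P_H(w)\rvert\leq n\sum_w\lvert P_U(w)-P_H(w)\rvert^2$, which is not what Cauchy--Schwarz gives, so what its chain actually establishes is $\E_{U\sim\mathcal{E}}\!\left[\mathrm{TVD}^2\right]\leq 3n^2\varepsilon$ and hence $\E_{U\sim\mathcal{E}}\!\left[\mathrm{TVD}\right]\leq n\sqrt{3\varepsilon}$ --- precisely the reading you propose, which is all that is needed for the super-polynomially small $\varepsilon$ regime invoked downstream.
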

\noindent From Theorem~\ref{thm:strong-design-depth}, strong $\varepsilon$-approximate unitary 4-designs with $\varepsilon = 1/\poly n$ can form in circuit depth $\mathcal{O}(\log n)$ in structured unitary ensembles and circuit depth $\mathcal{O}(\log^2 n)$ in random circuits.
This confirms empirical observations that operator size distributions can equilibrate to their Haar-random values in logarithmic depth~\cite{}.

\begin{proof}
We use the Cauchy-Schwarz inequality and the fact that the operator size distribution is the expectation value of a bounded operator $\mathcal{P}_w$ on two copies. The latter allows us to bound $| \E_U P_U(w) - P_H(w) | \leq \varepsilon$ and $| \E_U P_U(w)^2 - P_H(w)^2 | \leq \varepsilon$. This yields,
\begin{equation}
\begin{split}
    \E_{U \sim \mathcal{E}} \text{TVD}(P_U, P_H) & \equiv \E_{U \sim \mathcal{E}} \sum_{w=1}^n \left| P_U(w) - P_H(w) \right| \\
    & \leq \E_{U \sim \mathcal{E}} n \sum_{w=1}^n \left| P_U(w) - P_H(w) \right|^2 \\
    & = \E_{U \sim \mathcal{E}} n \sum_{w=1}^n \left( P_U(w)^2 - 2 P_U(w) P_H(w) + P_H(w)^2 \right) \\
    & = n \sum_{w=1}^n \left( \varepsilon + 2\varepsilon \right) \\
    & = 3 n^2 \varepsilon. \\
\end{split}
\end{equation}
This completes the proof.
\end{proof}

\subsection{Entanglement and operator entanglement entropy.}

Consider a state $\ket{\psi(t)} \equiv U \ket{\psi}$ and an operator $O(t) \equiv U O U^\dagger$.
Let $A$ denote a subsystem of $n$ qubits and $B$ its complement.
To define the entanglement entropy and operator entanglement entropy of $\ket{\psi(t)}$ and $O(t)$, respectively, we can first write the Schmidt decomposition of each object between $A$ and $B$,
\begin{equation}
    \ket{\psi(t)} = \sum_i \sqrt{\lambda^\psi_i} \cdot \ket{\psi_A^i} \otimes \ket{\psi_B^i},
\end{equation}
where $\langle \psi_A^i | \psi_A^j \rangle = \langle \psi_B^i | \psi_B^j \rangle = \delta_{ij}$, and 
\begin{equation}
    O(t) = \sum_i \sqrt{\lambda^O_i} \cdot O_A^i \otimes O_B^i,
\end{equation}
where $\frac{1}{2^n}\text{tr}( (O_A^i)^\dagger O_A^j) = \frac{1}{2^n}\text{tr}( (O_B^i)^\dagger O_B^j) = \delta_{ij}$.
We have $\sum_i \lambda_i^\psi = \langle \psi | \psi \rangle = 1$ and $\sum_i \lambda_i^O = \frac{1}{2^n} \text{tr}(O^\dagger O) = 1$ (assuming we normalize $O$ to one).
The von Neumann entanglement entropy of $\ket{\psi(t)}$ is equal to $\sum_i \lambda^\psi_i \ln \lambda^\psi_i$ and the von Neumann operator entanglement entropy of $O(t)$ is equal to $\sum_i \lambda^O_i \ln \lambda^O_i$.

The von Neumann entanglement entropy is difficult to analyze using unitary $k$-designs due to the logarithmic factor.
To this end, we consider a Renyi version of the entanglement and operator entanglement entropies.
The Renyi-2 entanglement entropy of $\ket{\psi(t)}$ between $A$ and $B$ is given by
\begin{equation}
    S^{(2)}_A(\ket{\psi(t)}) = -\ln\Big( \sum_i (\lambda^\psi_i)^2\Big) = -\ln\Big(\! \tr_A(\tr_B( \dyad{\psi(t)} )^2)\Big),
\end{equation}
while the Renyi-2 operator entanglement entropy of $O(t)$ is given by
\begin{equation}
    S^{(2)}_A(O(t)) = -\ln\Big( \sum_i (\lambda^O_i)^2\Big).
\end{equation}
The entanglement and operator entanglement entropies are difficult to tightly bound using standard unitary designs.
Fundamentally, this is because each quantity requires an exponential overhead to experimentally measure.
Here, we show that strong unitary designs with small \emph{relative error} can nonetheless be used to tightly bound both quantities near their Haar-random values.

We prove that the entanglement entropy and for any initial state $\ket{\psi}$ and the operator entanglement entropy for any initial operator $O$ saturate to their Haar-random values $U$ is drawn from a strong unitary design with relative error.
\begin{proposition}\label{prop: ent}
    Consider the state $\ket{\psi(t)} \equiv U \ket{\psi}$ formed by applying a strong unitary 2-design with relative error $\varepsilon$ to any state $\ket{\psi}$.
    The Renyi-2 entanglement entropy of any subsystem of $\ket{\psi(t)}$ is equal to its Haar value to within  error $\varepsilon$.
\end{proposition}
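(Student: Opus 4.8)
The plan is to rewrite the Renyi-2 purity $e^{-S^{(2)}_A(\ket{\psi(t)})} = \tr(\rho_A^2)$, where $\rho \equiv U\dyad{\psi}U^\dagger$ and $\rho_A \equiv \tr_B(\rho)$, as the expectation value of a \emph{positive} operator in a single quantum experiment that applies $U$ once and $U^*$ once to a fictitious $2n$-qubit system, and then to invoke the relative-error guarantee of strong unitary $2$-designs, which controls exactly such expectation values to multiplicative precision. The entire content of the argument is the reformulation; once it is in place, the relative error does the rest.

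First I would set up the reformulation. Expand $\ket{\psi} = \sum_{i,k} c_{ik}\ket{i}_A\ket{k}_B$ in a product basis of $\mathcal{H}_A\otimes\mathcal{H}_B$, let $\ket{\psi^*} \equiv \sum_{i,k}\overline{c_{ik}}\ket{i}_{A'}\ket{k}_{B'}$ be a conjugate copy on fresh registers $A'B'$, and put $\ket{\eta} \equiv \ket{\psi}_{AB}\otimes\ket{\psi^*}_{A'B'}$, a normalized state on $2n$ qubits. Let $\ket{\Phi_{BB'}} \equiv \sum_k \ket{k}_B\ket{k}_{B'}$ be the unnormalized maximally entangled state on $BB'$, and define the positive operator $\chi \equiv \mathbbm{1}_{AA'}\otimes\dyad{\Phi_{BB'}}$. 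Using $U^*\ket{\psi^*} = (U\ket{\psi})^*$, so that $(U\otimes U^*)\ket{\eta} = \ket{\psi(t)}_{AB}\otimes\ket{\psi(t)^*}_{A'B'}$, a one-line index computation gives
\begin{equation}
\bra{\eta}(U^\dagger\otimes U^T)\,\chi\,(U\otimes U^*)\ket{\eta} = \big\lVert (\mathbbm{1}_{AA'}\otimes\bra{\Phi_{BB'}})(U\otimes U^*)\ket{\eta} \big\rVert_2^2 = \sum_{i,j}\Big|\sum_k c_{ik}(t)\,\overline{c_{jk}(t)}\Big|^2 = \tr(\rho_A^2),
\end{equation}
where $c_{ik}(t)$ are the amplitudes of $\ket{\psi(t)} = U\ket{\psi}$ and we used that $\rho_A$ is Hermitian. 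Thus $e^{-S^{(2)}_A(\ket{\psi(t)})}$ is the expectation value of the positive operator $\chi$ in the output of a $(p,q)=(1,1)$-query experiment on $\ket{\eta}$.

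Next I would apply the relative-error property. Let $\Phi^{(1,1)}_{\mathcal{E}}$ and $\Phi^{(1,1)}_H$ denote the mixed twirls over $\mathcal{E}$ and the Haar measure. Since $\mathcal{E}$ is a strong $\varepsilon$-approximate unitary $2$-design, $(1-\varepsilon)\Phi^{(1,1)}_H \preceq \Phi^{(1,1)}_{\mathcal{E}}\preceq (1+\varepsilon)\Phi^{(1,1)}_H$, and pairing this superoperator inequality against $\chi \succeq 0$ and $\dyad{\eta}\succeq 0$ gives
\begin{equation}
(1-\varepsilon)\,\E_{U\sim H}\big[e^{-S^{(2)}_A(\ket{\psi(t)})}\big] \;\leq\; \E_{U\sim\mathcal{E}}\big[e^{-S^{(2)}_A(\ket{\psi(t)})}\big] \;\leq\; (1+\varepsilon)\,\E_{U\sim H}\big[e^{-S^{(2)}_A(\ket{\psi(t)})}\big].
\end{equation}
Taking $-\ln$ of this chain bounds the difference between the annealed Renyi-2 entanglement entropy $-\ln\E_{U\sim\mathcal{E}}[e^{-S^{(2)}_A}]$ and its Haar value $-\ln\E_{U\sim H}[e^{-S^{(2)}_A}]$ by $-\ln(1-\varepsilon)\leq \varepsilon/(1-\varepsilon)$, which is $\varepsilon$ to leading order (and exactly $\varepsilon$ after a harmless rescaling of the error parameter). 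This is the claimed statement.

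The main obstacle is precisely the positivity in the reformulation step. The textbook identity $\tr(\rho_A^2) = \tr\big((\SWAP_A\otimes\mathbbm{1}_B)\rho^{\otimes 2}\big)$ involves a non-positive operator; decomposing $\SWAP_A = P^+_A - P^-_A$ into symmetric and antisymmetric projectors and applying the relative error to each term separately only yields an \emph{additive} $\varepsilon$ error on $\tr(\rho_A^2)$, which is worthless because the Haar purity is exponentially small in $\min(n_A, n-n_A)$. Passing to the conjugate copy $A'B'$ and post-selecting $B$ and $B'$ onto a maximally entangled state is exactly what turns the purity into the expectation of a genuinely positive operator under $U\otimes U^*$, so that the multiplicative relative-error bound applies directly; everything else is bookkeeping. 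The operator-entanglement entropy of $O(t)=UOU^\dagger$ is handled in the same way after vectorizing $O$ into the doubled Hilbert space (so $\ket{O(t)}\!\rangle = (U\otimes U^*)\ket{O}\!\rangle$ and a second conjugate pair of copies is needed), at the level of a strong $4$-design rather than a strong $2$-design.
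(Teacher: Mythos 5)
Your proposal is correct and takes essentially the same route as the paper: both rewrite the purity as the expectation value of a positive operator (an EPR/maximally-entangled projector between the original and a conjugate copy) in a $(p,q)=(1,1)$ experiment on $\rho\otimes\rho^*=(U\otimes U^*)(\dyad{\psi}\otimes\dyad{\psi^*})(U^\dagger\otimes U^T)$, then invoke the relative-error guarantee of the strong $2$-design to get a multiplicative bound on the averaged purity and hence an $\mathcal{O}(\varepsilon)$ bound on the (annealed) Renyi-2 entropy. The only cosmetic difference is that you project onto the entangled state on the $B$ copies while the paper uses the EPR projector on the $A$ copies, which coincide for pure $\rho$; your explicit remark on why the swap-operator formula fails (non-positivity, hence only additive error) is exactly the point the paper makes in passing.
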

\begin{proposition}\label{prop: op ent}
    Consider the operator $O(t) \equiv U O U^\dagger$ formed by applying a strong unitary 4-design with relative error $\varepsilon$ to any operator $O$.
    The Renyi-2 operator entanglement entropy of any subsystem of $O(t)$ is equal to its Haar value to within  error $\varepsilon$.
\end{proposition}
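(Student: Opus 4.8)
The plan is to reformulate the Renyi-2 operator entanglement entropy as the expectation value of a positive-valued operator in a quantum experiment that queries $U$ and $U^*$ a total of four times, and then invoke the relative-error guarantee of strong unitary 4-designs. First I would recall that $S^{(2)}_A(O(t)) = -\ln\big( \sum_i (\lambda^O_i)^2 \big)$, so it suffices to show that the purity-like quantity $\mathcal{P} \equiv \sum_i (\lambda^O_i)^2$ is equal to its Haar value to within multiplicative precision $(1\pm\varepsilon)$; since $-\ln$ is $1$-Lipschitz on $[\,\mathcal{P}_H(1-\varepsilon), \mathcal{P}_H(1+\varepsilon)\,]$ up to the usual $\ln(1\pm\varepsilon) \approx \pm\varepsilon$ bound, multiplicative closeness of $\mathcal{P}$ translates to additive closeness of the entropy. (I will need $\mathcal{P}_H \le 1$, which holds since $\mathcal{P}$ is a purity; this makes the $-\ln$ Lipschitz constant at most $1/\mathcal{P}_H \cdot \mathcal{P}_H = 1$ on the relevant interval after accounting for the multiplicative window, so the final error is $\varepsilon$ as claimed.)

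Next I would write $\mathcal{P}$ as a trace of a positive operator acting on four copies of the $n$-qubit Hilbert space, pushing all dependence on $U$ into four tensor factors with the appropriate pattern of $U$, $U^\dagger$, $U^T$, $U^*$. Concretely, write $O(t) = UOU^\dagger$ and expand $O(t) = \sum_i \sqrt{\lambda^O_i}\, O_A^i \otimes O_B^i$; then $\mathcal{P} = \sum_i (\lambda^O_i)^2$ can be expressed via a swap trick as $\tfrac{1}{4^n}\tr\big[(\SWAP_A \otimes \Id_B)\, (O(t)\otimes O(t)^\dagger)\, (\SWAP_A \otimes \Id_B)\, (O(t)^\dagger \otimes O(t))\big]$ or an equivalent contraction — the key point, following the reformulation in Fig.~\ref{fig:reformulation}, is that this equals $\bra{\chi}\,\Phi^{(2,2)}_{\mathcal E}(\dots)\,\ket{\chi}$ for a fixed positive operator $\chi$ built from $O$, $\SWAP_A$, and EPR projectors, where the twirl involves $p=2$ copies carrying $U$ (and $U^T$) and $q=2$ copies carrying $U^*$ (and $U^\dagger$), because each $O(t)^\dagger = U O^\dagger U^\dagger$ and transposition/conjugation is freely introduced by sliding operators across EPR states $(\Id\otimes O)\ket{\Psi_{\epr}} = (O^T\otimes\Id)\ket{\Psi_{\epr}}$. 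Since $\chi$ is a positive-semidefinite operator and the experiment makes exactly four queries to the unitary (each query being one of $U, U^\dagger, U^*, U^T$), the definition of a strong unitary $(2,2)$-design with relative error $\varepsilon$ gives directly
\begin{equation}
    (1-\varepsilon)\,\tr(\chi\rho_H) \le \tr(\chi\rho_{\mathcal E}) \le (1+\varepsilon)\,\tr(\chi\rho_H),
\end{equation}
i.e. $\mathcal{P}$ is within a factor $(1\pm\varepsilon)$ of its Haar value.

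Finally I would combine these two steps: $|S^{(2)}_A(O(t)) - S^{(2)}_{A,H}| = |\ln(\mathcal{P}_H/\mathcal{P})| \le \max(\ln\tfrac{1}{1-\varepsilon}, \ln(1+\varepsilon)) \le \varepsilon/(1-\varepsilon)$, which I would simply record as ``$\varepsilon$'' up to the usual first-order bound (or state the clean $\varepsilon$ bound under $\varepsilon \le 1/2$, matching the phrasing of Proposition~\ref{prop: ent}). I expect the main obstacle to be the bookkeeping in the second step: getting the exact form of the positive operator $\chi$ and verifying that the replica contraction for $\mathcal{P}$ genuinely corresponds to a \emph{four}-query strong-design experiment with the correct $(p,q)=(2,2)$ split — in particular checking that all the transposes and conjugates that appear when moving operators past EPR states are exactly the ones covered by the strong $(p,q)$-design definition, and that $\chi \succeq 0$ after these manipulations (it should be, since it is manifestly a Gram-type operator, but this needs to be made explicit). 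The analogous but simpler $(1,1)$ computation underlies Proposition~\ref{prop: ent}, so I would first carry that out as a warm-up and then promote it to the operator case by doubling the replica structure.
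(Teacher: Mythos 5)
Your proposal is correct and, at its core, is the same argument as the paper's: express the operator purity $\sum_i(\lambda_i^O)^2$ as the expectation value of a fixed positive operator in an experiment whose unitary dependence is exactly a $(p,q)=(2,2)$ mixed twirl, then invoke the relative-error guarantee of the strong 4-design (the $\ln$ step you spell out is glossed in the paper, and your handling of it is fine since the relative error bounds $|\ln(\mathcal{P}/\mathcal{P}_H)|$ directly, independent of the size of $\mathcal{P}_H$). The only real difference is packaging: rather than redoing the four-replica contraction by hand, the paper reduces the operator case to Proposition~\ref{prop: ent} in one line via the Choi state, using $(O(t)\otimes\mathbbm{1})\ket{\Psi_{\text{EPR}}}=(U\otimes U^*)(O\otimes\mathbbm{1})\ket{\Psi_{\text{EPR}}}$, so the inner $U^\dagger$'s of $O(t)=UOU^\dagger$ are absorbed by the EPR slide and the count of two $U$'s and two $U^*$'s (hence a strong 4-design) is immediate; this is precisely the ``bookkeeping'' you flag as the main obstacle, and the Choi route dispatches it automatically. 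One small slip in your schematic formula: as written, the swap trick pairs $O(t)\otimes O(t)^\dagger$ against $O(t)^\dagger\otimes O(t)$, which does not collapse to $\sum_i(\lambda_i^O)^2$ because it produces traces of the form $\tr(O_A^iO_A^l)$ without daggers; the correct contraction is $\frac{1}{4^n}\tr\big[(O(t)\otimes O(t))\,(\SWAP_A\otimes\mathbbm{1})\,(O(t)^\dagger\otimes O(t)^\dagger)\,(\SWAP_A\otimes\mathbbm{1})\big]$, equivalently the purity of two copies of the Choi state and its conjugate. Since you hedged with ``or an equivalent contraction,'' this is a fixable detail rather than a gap, but it is exactly the kind of replica-pairing error the Choi-state reduction avoids.
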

\noindent We recall from Theorem~\ref{thm:strong-design-depth} that strong unitary 4-designs with relative error $\varepsilon$ can be formed in circuit depth $\mathcal{O}(\log n + \log \log 1/\varepsilon)$ in structured circuits.
Hence, the entanglement and operator entanglement entropies saturate to within $\varepsilon = 1/\exp n$ of their Haar-random values at $\mathcal{O}(\log n)$ depth.

\begin{proof}[Proof of Proposition~\ref{prop: ent}]
    The proposition follows by reformulating the purity as the expectation value of a positive operator on a larger system involving $U$ and $U^*$.
    Let us abbreviate $\rho \equiv \dyad{\psi(t)}$. We have
    \begin{equation}
        \tr_A( \tr_B( \rho )^2 ) = 2^{|A|} \tr( \dyad*{\Psi^A_{\text{EPR}}} \cdot (\rho \otimes \rho^*)),
    \end{equation}
    where $|A|$ denotes the number of qubits in subsystem $A$, and $\ket*{\Psi^A_{\text{EPR}}}$ denotes the EPR state between two copies of subsystem $A$.
    Note that this formula differs from the standard reformulation of the purity in terms of a swap operator, $\tr_A( \tr_B( \rho )^2 ) = \tr( \mathcal{S}_A \cdot (\rho \otimes \rho))$.
    The expression in terms of the EPR state can be obtained from the expression in terms of the swap operator by taking a partial transpose on the second copy of both terms inside the trace.

    The state $\rho \otimes \rho^* = (U \otimes U^*) (\dyad{\psi} \otimes \dyad{\psi^*}) (U^\dagger \otimes U^T)$ is obtained by evolving the state $\ket{\psi} \otimes \ket{\psi^*}$ under one application of $U$ and one application of $U^*$.
    Since $\dyad*{\Psi^A_{\text{EPR}}}$ is positive, the expectation value above is captured within multiplicative error $\varepsilon$ by any strong unitary 2-design with relative error~$\varepsilon$.
\end{proof}

\begin{proof}[Proof of Proposition~\ref{prop: op ent}]
    The proposition follows immediately from Proposition~\ref{prop: ent} by noting that the purity of the operator $O(t)$ is equal to the purity of the state $(O(t) \otimes \mathbbm{1})\ket{\Psi_{\text{EPR}}}$ formed by applying $O(t)$ to one side of the EPR state on a two-copy system.
    The latter state can be written as
    \begin{equation}
        (O(t) \otimes \mathbbm{1})\ket{\Psi_{\text{EPR}}} = (U \otimes U^*)(O \otimes \mathbbm{1}) \ket{\Psi_{\text{EPR}}}.
    \end{equation}
    To estimate the purity, from Proposition~\ref{prop: ent}, we use one copy of the state above and one copy of its conjugate.
    This requires 2 applications of $U$ and 2 applications of $U^*$.
    Hence, the expectation value of the purity is captured to within multiplicative error $\varepsilon$ by any strong unitary 4-design with relative error $\varepsilon$.
\end{proof}

\clearpage
\bibliography{refs}

\begin{thebibliography}{100}

\bibitem{emerson2005scalable}
Joseph Emerson, Robert Alicki, and Karol {\.Z}yczkowski.
\newblock Scalable noise estimation with random unitary operators.
\newblock {\em Journal of Optics B: Quantum and Semiclassical Optics}, 7(10):S347, 2005.

\bibitem{ambainis2007quantum}
Andris Ambainis and Joseph Emerson.
\newblock Quantum t-designs: t-wise independence in the quantum world.
\newblock In {\em Twenty-Second Annual IEEE Conference on Computational Complexity (CCC'07)}, pages 129--140. IEEE, 2007.

\bibitem{knill2008randomized}
Emanuel Knill, Dietrich Leibfried, Rolf Reichle, Joe Britton, R~Brad Blakestad, John~D Jost, Chris Langer, Roee Ozeri, Signe Seidelin, and David~J Wineland.
\newblock Randomized benchmarking of quantum gates.
\newblock {\em Physical Review A}, 77(1):012307, 2008.

\bibitem{elben2023randomized}
Andreas Elben, Steven~T Flammia, Hsin-Yuan Huang, Richard Kueng, John Preskill, Beno{\^\i}t Vermersch, and Peter Zoller.
\newblock The randomized measurement toolbox.
\newblock {\em Nature Reviews Physics}, 5(1):9--24, 2023.

\bibitem{guta2020fast}
M~Gu{\c{t}}{\u{a}}, Jonas Kahn, Richard Kueng, and Joel~A Tropp.
\newblock Fast state tomography with optimal error bounds.
\newblock {\em Journal of Physics A: Mathematical and Theoretical}, 53(20):204001, 2020.

\bibitem{huang2020predicting}
Hsin-Yuan Huang, Richard Kueng, and John Preskill.
\newblock Predicting many properties of a quantum system from very few measurements.
\newblock {\em Nature Physics}, 16(10):1050--1057, 2020.

\bibitem{zhao2021fermionic}
Andrew Zhao, Nicholas~C Rubin, and Akimasa Miyake.
\newblock Fermionic partial tomography via classical shadows.
\newblock {\em Physical Review Letters}, 127(11):110504, 2021.

\bibitem{arute2019quantum}
Frank Arute, Kunal Arya, Ryan Babbush, Dave Bacon, Joseph~C Bardin, Rami Barends, Rupak Biswas, Sergio Boixo, Fernando~GSL Brandao, David~A Buell, et~al.
\newblock Quantum supremacy using a programmable superconducting processor.
\newblock {\em Nature}, 574(7779):505--510, 2019.

\bibitem{morvan2023phase}
Alexis Morvan, B~Villalonga, X~Mi, S~Mandra, A~Bengtsson, PV~Klimov, Z~Chen, S~Hong, C~Erickson, IK~Drozdov, et~al.
\newblock Phase transition in random circuit sampling.
\newblock {\em arXiv preprint arXiv:2304.11119}, 2023.

\bibitem{abanin2025constructive}
Dmitry~A Abanin, Rajeev Acharya, Laleh Aghababaie-Beni, Georg Aigeldinger, Ashok Ajoy, Ross Alcaraz, Igor Aleiner, Trond~I Andersen, Markus Ansmann, Frank Arute, et~al.
\newblock Constructive interference at the edge of quantum ergodic dynamics.
\newblock {\em arXiv preprint arXiv:2506.10191}, 2025.

\bibitem{ji2018pseudorandom}
Zhengfeng Ji, Yi-Kai Liu, and Fang Song.
\newblock Pseudorandom quantum states.
\newblock In {\em Advances in Cryptology--CRYPTO 2018: 38th Annual International Cryptology Conference, Santa Barbara, CA, USA, August 19--23, 2018, Proceedings, Part III 38}, pages 126--152. Springer, 2018.

\bibitem{ananth2022cryptography}
Prabhanjan Ananth, Luowen Qian, and Henry Yuen.
\newblock Cryptography from pseudorandom quantum states.
\newblock In {\em Annual International Cryptology Conference}, pages 208--236. Springer, 2022.

\bibitem{kretschmer2023quantum}
William Kretschmer, Luowen Qian, Makrand Sinha, and Avishay Tal.
\newblock Quantum cryptography in algorithmica.
\newblock In {\em Proceedings of the 55th Annual ACM Symposium on Theory of Computing}, pages 1589--1602, 2023.

\bibitem{fisher2023random}
Matthew~PA Fisher, Vedika Khemani, Adam Nahum, and Sagar Vijay.
\newblock Random quantum circuits.
\newblock {\em Annual Review of Condensed Matter Physics}, 14:335--379, 2023.

\bibitem{nahum2017entgrowth}
Adam Nahum, Jonathan Ruhman, Sagar Vijay, and Jeongwan Haah.
\newblock {Quantum Entanglement Growth Under Random Unitary Dynamics}.
\newblock {\em Phys. Rev. X}, 7:031016, 2017.

\bibitem{cotler2022fluctuations}
Jordan Cotler, Nicholas Hunter-Jones, and Daniel Ranard.
\newblock Fluctuations of subsystem entropies at late times.
\newblock {\em Physical Review A}, 105(2):022416, 2022.

\bibitem{deutsch1991quantum}
J.~M. Deutsch.
\newblock Quantum statistical mechanics in a closed system.
\newblock {\em Phys. Rev. A}, 43:2046, 1991.

\bibitem{srednicki1994chaos}
Mark Srednicki.
\newblock Chaos and quantum thermalization.
\newblock {\em Phys. Rev. E}, 50:888, 1994.

\bibitem{rigol2008thermalization}
Marcos Rigol, Vanja Dunjko, and Maxim Olshanii.
\newblock Thermalization and its mechanism for generic isolated quantum systems.
\newblock {\em Nature}, 452(7189):854--858, 2008.

\bibitem{sekino2008fast}
Yasuhiro Sekino and Leonard Susskind.
\newblock Fast scramblers.
\newblock {\em Journal of High Energy Physics}, 2008(10):065, 2008.

\bibitem{hayden2007black}
Patrick Hayden and John Preskill.
\newblock Black holes as mirrors: quantum information in random subsystems.
\newblock {\em JHEP}, 2007(09):120, 2007.

\bibitem{brown2023quantum}
Adam~R Brown, Hrant Gharibyan, Stefan Leichenauer, Henry~W Lin, Sepehr Nezami, Grant Salton, Leonard Susskind, Brian Swingle, and Michael Walter.
\newblock Quantum gravity in the lab. i. teleportation by size and traversable wormholes.
\newblock {\em PRX quantum}, 4(1):010320, 2023.

\bibitem{nezami2023quantum}
Sepehr Nezami, Henry~W Lin, Adam~R Brown, Hrant Gharibyan, Stefan Leichenauer, Grant Salton, Leonard Susskind, Brian Swingle, and Michael Walter.
\newblock Quantum gravity in the lab. ii. teleportation by size and traversable wormholes.
\newblock {\em PRX quantum}, 4(1):010321, 2023.

\bibitem{schuster2022many}
Thomas Schuster, Bryce Kobrin, Ping Gao, Iris Cong, Emil~T Khabiboulline, Norbert~M Linke, Mikhail~D Lukin, Christopher Monroe, Beni Yoshida, and Norman~Y Yao.
\newblock Many-body quantum teleportation via operator spreading in the traversable wormhole protocol.
\newblock {\em Physical Review X}, 12(3):031013, 2022.

\bibitem{xu2024scrambling}
Shenglong Xu and Brian Swingle.
\newblock Scrambling dynamics and out-of-time-ordered correlators in quantum many-body systems.
\newblock {\em PRX quantum}, 5(1):010201, 2024.

\bibitem{roberts2018operator}
Daniel~A Roberts, Douglas Stanford, and Alexandre Streicher.
\newblock Operator growth in the syk model.
\newblock {\em Journal of High Energy Physics}, 2018(6):1--20, 2018.

\bibitem{emerson2003pseudo}
Joseph Emerson, Yaakov~S Weinstein, Marcos Saraceno, Seth Lloyd, and David~G Cory.
\newblock Pseudo-random unitary operators for quantum information processing.
\newblock {\em science}, 302(5653):2098--2100, 2003.

\bibitem{emerson2004random}
Joseph Emerson.
\newblock Random quantum circuits and pseudo-random operators: theory and applications.
\newblock {\em arXiv preprint quant-ph/0410087}, 2004.

\bibitem{gross2007evenly}
David Gross, Koenraad Audenaert, and Jens Eisert.
\newblock Evenly distributed unitaries: On the structure of unitary designs.
\newblock {\em Journal of mathematical physics}, 48(5), 2007.

\bibitem{dankert2005efficient}
Christoph Dankert.
\newblock Efficient simulation of random quantum states and operators.
\newblock {\em arXiv preprint quant-ph/0512217}, 2005.

\bibitem{dankert2009exact}
Christoph Dankert, Richard Cleve, Joseph Emerson, and Etera Livine.
\newblock Exact and approximate unitary 2-designs and their application to fidelity estimation.
\newblock {\em Physical Review A}, 80(1):012304, 2009.

\bibitem{brandao2016local}
Fernando~GSL Brandao, Aram~W Harrow, and Micha{\l} Horodecki.
\newblock Local random quantum circuits are approximate polynomial-designs.
\newblock {\em Communications in Mathematical Physics}, 346:397--434, 2016.

\bibitem{haah2024efficient}
Jeongwan Haah, Yunchao Liu, and Xinyu Tan.
\newblock Efficient approximate unitary designs from random pauli rotations.
\newblock {\em arXiv preprint arXiv:2402.05239}, 2024.

\bibitem{chen2024incompressibility}
Chi-Fang Chen, Jeongwan Haah, Jonas Haferkamp, Yunchao Liu, Tony Metger, and Xinyu Tan.
\newblock Incompressibility and spectral gaps of random circuits.
\newblock {\em arXiv preprint arXiv:2406.07478}, 2024.

\bibitem{laracuente2024approximate}
Nicholas LaRacuente and Felix Leditzky.
\newblock Approximate unitary $ k $-designs from shallow, low-communication circuits.
\newblock {\em arXiv preprint arXiv:2407.07876}, 2024.

\bibitem{schuster2024random}
Thomas Schuster, Jonas Haferkamp, and Hsin-Yuan Huang.
\newblock Random unitaries in extremely low depth.
\newblock {\em Science}, 389(6755):92--96, 2025.

\bibitem{west2025no}
Maxwell West, Diego Garc{\'\i}a-Mart{\'\i}n, NL~Diaz, M~Cerezo, and Martin Larocca.
\newblock No-go theorems for sublinear-depth group designs.
\newblock {\em arXiv preprint arXiv:2506.16005}, 2025.

\bibitem{grevink2025will}
Lorenzo Grevink, Jonas Haferkamp, Markus Heinrich, Jonas Helsen, Marcel Hinsche, Thomas Schuster, and Zolt{\'a}n Zimbor{\'a}s.
\newblock Will it glue? on short-depth designs beyond the unitary group.
\newblock {\em arXiv preprint arXiv:2506.23925}, 2025.

\bibitem{cui2025unitary}
Laura Cui, Thomas Schuster, Fernando Brandao, and Hsin-Yuan Huang.
\newblock Unitary designs in nearly optimal depth.
\newblock {\em arXiv preprint arXiv:2507.06216}, 2025.

\bibitem{metger2024simple}
Tony Metger, Alexander Poremba, Makrand Sinha, and Henry Yuen.
\newblock Simple constructions of linear-depth t-designs and pseudorandom unitaries.
\newblock {\em arXiv preprint arXiv:2404.12647}, 2024.

\bibitem{chen2024efficient}
Chi-Fang Chen, Adam Bouland, Fernando~GSL Brand{\~a}o, Jordan Docter, Patrick Hayden, and Michelle Xu.
\newblock Efficient unitary designs and pseudorandom unitaries from permutations.
\newblock {\em arXiv preprint arXiv:2404.16751}, 2024.

\bibitem{ma2024construct}
Fermi Ma and Hsin-Yuan Huang.
\newblock How to construct random unitaries.
\newblock {\em arXiv preprint arXiv:2410.10116}, 2024.

\bibitem{foxman2025random}
Ben Foxman, Natalie Parham, Francisca Vasconcelos, and Henry Yuen.
\newblock Random unitaries in constant (quantum) time.
\newblock {\em arXiv preprint arXiv:2508.11487}, 2025.

\bibitem{cotler2023information}
Jordan Cotler, Thomas Schuster, and Masoud Mohseni.
\newblock Information-theoretic hardness of out-of-time-order correlators.
\newblock {\em Physical Review A}, 108(6):062608, 2023.

\bibitem{yoshida2017efficient}
Beni Yoshida and Alexei Kitaev.
\newblock Efficient decoding for the hayden-preskill protocol.
\newblock {\em arXiv preprint arXiv:1710.03363}, 2017.

\bibitem{zhandry2024model}
Mark Zhandry.
\newblock How to model unitary oracles.
\newblock Cryptology {ePrint} Archive, Paper 2025/1072, 2025.

\bibitem{schuster2024polynomial}
Thomas Schuster, Chao Yin, Xun Gao, and Norman~Y Yao.
\newblock A polynomial-time classical algorithm for noisy quantum circuits.
\newblock {\em arXiv preprint arXiv:2407.12768}, 2024.

\bibitem{regev2009lattices}
Oded Regev.
\newblock On lattices, learning with errors, random linear codes, and cryptography.
\newblock {\em Journal of the ACM (JACM)}, 56(6):1--40, 2009.

\bibitem{sheridan2009approximating}
Lana Sheridan, Dmitri Maslov, and Michele Mosca.
\newblock Approximating fractional time quantum evolution.
\newblock {\em Journal of Physics A: Mathematical and Theoretical}, 42(18):185302, 2009.

\bibitem{tang2025controlled}
Ewin Tang and John Wright.
\newblock Are controlled unitaries helpful?
\newblock {\em arXiv:2508.00055}, 2025.

\bibitem{shenker2014black}
Stephen~H Shenker and Douglas Stanford.
\newblock Black holes and the butterfly effect.
\newblock {\em Journal of High Energy Physics}, 2014(3):67, 2014.

\bibitem{roberts2015localized}
Daniel~A Roberts, Douglas Stanford, and Leonard Susskind.
\newblock Localized shocks.
\newblock {\em Journal of High Energy Physics}, 2015(3):1--27, 2015.

\bibitem{hosur2016chaos}
Pavan Hosur, Xiao-Liang Qi, Daniel~A Roberts, and Beni Yoshida.
\newblock Chaos in quantum channels.
\newblock {\em Journal of High Energy Physics}, 2016(2):1--49, 2016.

\bibitem{swingle2016measuring}
Brian Swingle, Gregory Bentsen, Monika Schleier-Smith, and Patrick Hayden.
\newblock Measuring the scrambling of quantum information.
\newblock {\em Physical Review A}, 94(4):040302, 2016.

\bibitem{nahum2018operator}
Adam Nahum, Sagar Vijay, and Jeongwan Haah.
\newblock Operator spreading in random unitary circuits.
\newblock {\em Physical Review X}, 8(2):021014, 2018.

\bibitem{schuster2023operator}
Thomas Schuster and Norman~Y Yao.
\newblock Operator growth in open quantum systems.
\newblock {\em Physical Review Letters}, 131(16):160402, 2023.

\bibitem{garttner2017measuring}
Martin G{\"a}rttner, Justin~G Bohnet, Arghavan Safavi-Naini, Michael~L Wall, John~J Bollinger, and Ana~Maria Rey.
\newblock Measuring out-of-time-order correlations and multiple quantum spectra in a trapped-ion quantum magnet.
\newblock {\em Nature Physics}, 13(8):781--786, 2017.

\bibitem{landsman2019verified}
Kevin~A Landsman, Caroline Figgatt, Thomas Schuster, Norbert~M Linke, Beni Yoshida, Norm~Y Yao, and Christopher Monroe.
\newblock Verified quantum information scrambling.
\newblock {\em Nature}, 567(7746):61--65, 2019.

\bibitem{blok2020quantum}
MS~Blok, VV~Ramasesh, T~Schuster, K~O'Brien, JM~Kreikebaum, D~Dahlen, A~Morvan, Beni Yoshida, NY~Yao, and I~Siddiqi.
\newblock Quantum information scrambling in a superconducting qutrit processor.
\newblock {\em arXiv:2003.03307}, 2020.

\bibitem{sanchez2021emergent}
Claudia~M S{\'a}nchez, Ana~Karina Chattah, and Horacio~M Pastawski.
\newblock {Emergent decoherence induced by quantum chaos in a many-body system: A Loschmidt echo observation through NMR}.
\newblock {\em arXiv:2112.00607}, 2021.

\bibitem{mi2021information}
Xiao Mi, Pedram Roushan, Chris Quintana, Salvatore Mandra, Jeffrey Marshall, Charles Neill, Frank Arute, Kunal Arya, Juan Atalaya, Ryan Babbush, et~al.
\newblock Information scrambling in quantum circuits.
\newblock {\em Science}, 374(6574):1479--1483, 2021.

\bibitem{maldacena2016remarks}
Juan Maldacena and Douglas Stanford.
\newblock {Remarks on the Sachdev-Ye-Kitaev model}.
\newblock {\em Physical Review D}, 94(10):106002, 2016.

\bibitem{kitaev2015simple}
Alexei Kitaev.
\newblock A simple model of quantum holography, 2015.

\bibitem{brown2012scrambling}
Winton Brown and Omar Fawzi.
\newblock Scrambling speed of random quantum circuits.
\newblock {\em arXiv preprint arXiv:1210.6644}, 2012.

\bibitem{brown2013short}
Winton Brown and Omar Fawzi.
\newblock Short random circuits define good quantum error correcting codes.
\newblock In {\em 2013 IEEE International Symposium on Information Theory}, pages 346--350. IEEE, 2013.

\bibitem{brown2015decoupling}
Winton Brown and Omar Fawzi.
\newblock Decoupling with random quantum circuits.
\newblock {\em Communications in mathematical physics}, 340(3):867--900, 2015.

\bibitem{lashkari2013towards}
Nima Lashkari, Douglas Stanford, Matthew Hastings, Tobias Osborne, and Patrick Hayden.
\newblock Towards the fast scrambling conjecture.
\newblock {\em Journal of High Energy Physics}, 2013(4):1--33, 2013.

\bibitem{cleve2015near}
Richard Cleve, Debbie Leung, Li~Liu, and Chunhao Wang.
\newblock Near-linear constructions of exact unitary 2-designs.
\newblock {\em arXiv preprint arXiv:1501.04592}, 2015.

\bibitem{belyansky2020minimal}
Ron Belyansky, Przemyslaw Bienias, Yaroslav~A Kharkov, Alexey~V Gorshkov, and Brian Swingle.
\newblock Minimal model for fast scrambling.
\newblock {\em Physical review letters}, 125(13):130601, 2020.

\bibitem{bentsen2019fast}
Gregory Bentsen, Yingfei Gu, and Andrew Lucas.
\newblock Fast scrambling on sparse graphs.
\newblock {\em Proceedings of the National Academy of Sciences}, 116(14):6689--6694, 2019.

\bibitem{vikram2024exact}
Amit Vikram and Victor Galitski.
\newblock Exact universal bounds on quantum dynamics and fast scrambling.
\newblock {\em Physical Review Letters}, 132(4):040402, 2024.

\bibitem{haferkamp2022random}
Jonas Haferkamp.
\newblock Random quantum circuits are approximate unitary $t$-designs in depth ${O}(t^{5+o(1)})$.
\newblock {\em Quantum}, 6:795, September 2022.

\bibitem{zhandry2016note}
Mark Zhandry.
\newblock A note on quantum-secure prps.
\newblock {\em arXiv preprint arXiv:1611.05564}, 2016.

\bibitem{zhandry2021PRF}
Mark Zhandry.
\newblock How to construct quantum random functions.
\newblock {\em J. ACM}, 68(5), aug 2021.

\bibitem{luby1988construct}
Michael Luby and Charles Rackoff.
\newblock How to construct pseudorandom permutations from pseudorandom functions.
\newblock {\em SIAM Journal on Computing}, 17(2):373--386, 1988.

\bibitem{wegman1981new}
Mark~N Wegman and J~Lawrence Carter.
\newblock New hash functions and their use in authentication and set equality.
\newblock {\em Journal of computer and system sciences}, 22(3):265--279, 1981.

\bibitem{goldreich1986construct}
Oded Goldreich, Shafi Goldwasser, and Silvio Micali.
\newblock How to construct random functions.
\newblock {\em Journal of the ACM (JACM)}, 33(4):792--807, 1986.

\bibitem{banerjee2012pseudorandom}
Abhishek Banerjee, Chris Peikert, and Alon Rosen.
\newblock Pseudorandom functions and lattices.
\newblock In {\em Annual International Conference on the Theory and Applications of Cryptographic Techniques}, pages 719--737. Springer, 2012.

\bibitem{TQC:BFMSSST25-catalytic}
Harry Buhrman, Marten Folkertsma, Ian Mertz, Florian Speelman, Sergii Strelchuk, Sathyawageeswar Subramanian, and Quinten Tupker.
\newblock Quantum catalytic space.
\newblock {\em arXiv preprint arXiv:2506.16324}, 2025.

\bibitem{fefferman2024anti}
Bill Fefferman, Soumik Ghosh, and Wei Zhan.
\newblock Anti-concentration for the unitary haar measure and applications to random quantum circuits.
\newblock {\em arXiv preprint arXiv:2407.19561}, 2024.

\bibitem{pappalardi2022eigenstate}
Silvia Pappalardi, Laura Foini, and Jorge Kurchan.
\newblock Eigenstate thermalization hypothesis and free probability.
\newblock {\em Physical Review Letters}, 129(17):170603, 2022.

\bibitem{fava2025designs}
Michele Fava, Jorge Kurchan, and Silvia Pappalardi.
\newblock Designs via free probability.
\newblock {\em Physical Review X}, 15(1):011031, 2025.

\bibitem{dowling2025free}
Neil Dowling, Jacopo De~Nardis, Markus Heinrich, Xhek Turkeshi, and Silvia Pappalardi.
\newblock Free independence and unitary design from random matrix product unitaries.
\newblock {\em arXiv preprint arXiv:2508.00051}, 2025.

\bibitem{vardhan2025free}
Shreya Vardhan and Jinzhao Wang.
\newblock Free mutual information and higher-point otocs.
\newblock {\em arXiv preprint arXiv:2509.13406}, 2025.

\bibitem{hawking1976breakdown}
Stephen~W Hawking.
\newblock Breakdown of predictability in gravitational collapse.
\newblock {\em Physical Review D}, 14(10):2460, 1976.

\bibitem{kobrin2024universal}
Bryce Kobrin, Thomas Schuster, Maxwell Block, Weijie Wu, Bradley Mitchell, Emily Davis, and Norman~Y Yao.
\newblock A universal protocol for quantum-enhanced sensing via information scrambling.
\newblock {\em arXiv preprint arXiv:2411.12794}, 2024.

\bibitem{grinko2024linear}
Dmitry Grinko and Maris Ozols.
\newblock Linear programming with unitary-equivariant constraints.
\newblock {\em Communications in Mathematical Physics}, 405(12):278, 2024.

\bibitem{grinko2023gelfand}
Dmitry Grinko, Adam Burchardt, and Maris Ozols.
\newblock Gelfand-tsetlin basis for partially transposed permutations, with applications to quantum information.
\newblock {\em arXiv preprint arXiv:2310.02252}, 2023.

\bibitem{nguyen2023mixed}
Quynh~T Nguyen.
\newblock The mixed schur transform: efficient quantum circuit and applications.
\newblock {\em arXiv preprint arXiv:2310.01613}, 2023.

\bibitem{fulton2013representation}
William Fulton and Joe Harris.
\newblock {\em Representation theory: a first course}, volume 129.
\newblock Springer Science \& Business Media, 2013.

\bibitem{goodman2009symmetry}
Roe Goodman, Nolan~R Wallach, et~al.
\newblock {\em Symmetry, representations, and invariants}, volume 255.
\newblock Springer, 2009.

\bibitem{mele2024introduction}
Antonio~Anna Mele.
\newblock Introduction to haar measure tools in quantum information: A beginner's tutorial.
\newblock {\em Quantum}, 8:1340, 2024.

\bibitem{schuster2023learning}
Thomas Schuster, Murphy Niu, Jordan Cotler, Thomas O'Brien, Jarrod~R McClean, and Masoud Mohseni.
\newblock Learning quantum systems via out-of-time-order correlators.
\newblock {\em Physical Review Research}, 5(4):043284, 2023.

\bibitem{cotler2023emergent}
Jordan~S Cotler, Daniel~K Mark, Hsin-Yuan Huang, Felipe Hernandez, Joonhee Choi, Adam~L Shaw, Manuel Endres, and Soonwon Choi.
\newblock Emergent quantum state designs from individual many-body wave functions.
\newblock {\em PRX quantum}, 4(1):010311, 2023.

\bibitem{harrow2013church}
Aram~W. Harrow.
\newblock The church of the symmetric subspace.
\newblock {\em arXiv:1308.6595}, 2013.

\bibitem{harrow2023approximate}
Aram~W Harrow and Saeed Mehraban.
\newblock Approximate unitary t-designs by short random quantum circuits using nearest-neighbor and long-range gates.
\newblock {\em Communications in Mathematical Physics}, pages 1--96, 2023.

\bibitem{hunter2019unitary}
Nicholas Hunter-Jones.
\newblock Unitary designs from statistical mechanics in random quantum circuits.
\newblock {\em arXiv preprint arXiv:1905.12053}, 2019.

\bibitem{haferkamp2021improved}
Jonas Haferkamp and Nicholas Hunter-Jones.
\newblock Improved spectral gaps for random quantum circuits: Large local dimensions and all-to-all interactions.
\newblock {\em Physical Review A}, 104(2):022417, 2021.

\bibitem{ma2025construct}
Fermi Ma and Hsin-Yuan Huang.
\newblock How to construct random unitaries.
\newblock In {\em Proceedings of the 57th Annual ACM Symposium on Theory of Computing}, pages 806--809, 2025.

\bibitem{webb2015clifford}
Zak Webb.
\newblock The clifford group forms a unitary 3-design.
\newblock {\em arXiv preprint arXiv:1510.02769}, 2015.

\bibitem{STOC:BCKLS14-catalytic}
Harry Buhrman, Richard Cleve, Michal Kouck{\`y}, Bruno Loff, and Florian Speelman.
\newblock Computing with a full memory: catalytic space.
\newblock In {\em Proceedings of the forty-sixth annual ACM symposium on Theory of computing}, pages 857--866, 2014.

\bibitem{PRA:DCEL09-designs}
Christoph Dankert, Richard Cleve, Joseph Emerson, and Etera Livine.
\newblock Exact and approximate unitary 2-designs and their application to fidelity estimation.
\newblock {\em Physical Review A—Atomic, Molecular, and Optical Physics}, 80(1):012304, 2009.

\bibitem{aharonov2021quantum}
Dorit Aharonov, Jordan Cotler, and Xiao-Liang Qi.
\newblock Quantum algorithmic measurement.
\newblock {\em Nature communications}, 13(1):1--9, 2022.

\bibitem{collins2017weingarten}
Beno{\^\i}t Collins and Sho Matsumoto.
\newblock Weingarten calculus via orthogonality relations: new applications.
\newblock {\em arXiv:1701.04493}, 2017.

\bibitem{cotler2017chaos}
Jordan Cotler, Nicholas Hunter-Jones, Junyu Liu, and Beni Yoshida.
\newblock Chaos, complexity, and random matrices.
\newblock {\em Journal of High Energy Physics}, 2017(11):1--60, 2017.

\end{thebibliography}
\bibliographystyle{unsrt}

\end{document}